\DeclareMathOperator{\tr}{Tr}
\DeclareMathOperator{\sgn}{sgn}
\newtheorem{theorem}{Theorem}[section]
\newtheorem{proposition}[theorem]{Proposition}
\newtheorem{lemma}[theorem]{Lemma}
\newtheorem{remark}[theorem]{Remark}
\newtheorem{notation}[theorem]{Notation}
\setlist[enumerate]{leftmargin=*} 
\begin{document}
\title[Positive Temperature Huang-Yang Formula]{The second order Huang-Yang approximation to the Fermi thermodynamic pressure}
\date{\today}

\author[X. Chen]{Xuwen Chen}
\address{Department of Mathematics, University of Rochester, Rochester, NY 14627, USA}

\email{xuwenmath@gmail.com}

\author[J. Wu]{Jiahao Wu}
\address{School of Mathematical Sciences, Peking University, Beijing, 100871, China}
\email{wjh00043@gmail.com}

\author[Z. Zhang]{Zhifei Zhang}
\address{School of Mathematical Sciences, Peking University, Beijing, 100871, China}

\email{zfzhang@math.pku.edu.cn}

\begin{abstract}
  We consider a dilute Fermi gas in the thermodynamic limit with interaction potential scattering length $\mathfrak{a}_0$ at temperature $T>0$. We prove the 2nd order Huang-Yang approximation for the Fermi pressure of the system, in which there is a 2nd order term carrying the positive temperature efffect.Our formula is valid up to the temperature $T<\rho^{\frac{2}{3}+\frac{1}{6}}$,  which is, by scaling, also necessary for the Huang-Yang formula to hold. Here, $T_F\sim\rho^{\frac{2}{3}}$ is the Fermi temperature. We also establish during the course of the proof, a conjecture regarding the second order approximation of density $\rho$ by R. Seiringer \cite{FermithermoTpositive}. Our proof uses frequency localization techniques from the analysis of nonlinear PDEs and does not involve spatial localization or Bosonization. In particular, our method covers the classical Huang-Yang formula at zero temperature.
\end{abstract}

\maketitle
\tableofcontents

\section{Introduction}
\par In 1957 \cite{huangyang,LHY106,TDLEECNYANG112}, K.Huang, T.D.Lee and C.N.Yang wrote down two second order formulas regarding the precise energy density of interacting quantum many-body Boson and Fermion systems in the thermodynamics limits at zero temperature. At the time, the formulations of such quantum-mechanical system and their physical interpretations were famous problems among physicists. Not only the now so called Bosonic Lee-Huang-Yang and Fermionic Huang-Yang formulas provide accurate (up to 2nd order) predictions to the ground state energy of such complicated interacting systems, they state that, up to the 2nd order, the ground state energy of such complicated interacting systems is fully determined by the scattering length $a_0$ of the microscopic interparticle interaction.
\par For a $N$ spin-$1/2$ Fermions system inside a box with edge length $L$, we define the energy density $e(\rho)$ as the $N$-particle ground state energy $E_N$ over volume $L^3$ and the thermodynamics limit as the limit where firstly, $N\to\infty$ with the particle density $\rho=N/L^3$ fixed, and then $\rho\to0$. If half particles spin up and the other half spin down and the two-body interaction potential scattering length is $a_0$, the Huang-Yang formula \cite{huangyang} reads that,
  \begin{equation}\label{no1}
    e(\rho)=\frac{3}{5}(3\pi^2)^{\frac{2}{3}}\rho^{\frac{5}{3}}+2\pi a_0\rho^2
    +\frac{12}{35}(11-2\ln2)3^{\frac{1}{3}}\pi^{\frac{2}{3}}a_0^2\rho^{\frac{7}{3}}
    +o(\rho^{\frac{7}{3}}).
  \end{equation}
in the thermodynamic limit.
On the other hand, for Boson systems, with particle density given by $\rho$ and scattering length given by $a_0$, the Lee-Huang-Yang formula \cite{huangyang,LHY106,TDLEECNYANG112} is,
 \begin{equation}\label{no2}
     e(\rho)
     =4\pi\rho^2 a_0\left(1+\frac{128}{15\sqrt{\pi}}(\rho a_0^3)^{1/2}+
     o((\rho a_0^3)^{1/2})\right).
\end{equation}
in the thermodynamic limit.

\par Proving (\ref {no1}) and (\ref {no2}) mathematically have been open problems in mathematical physics for decades. For example, the 1st order term in (\ref {no2}) as an upper bound was proved by Dyson in 1957 \cite{Dyson1957}, but was only proved as a lower bound in 1998, almost 40 years later, by Lieb and Yngvason \cite{LiebYng1998}. Many more important and interesting rigorous studies of Bosons were collected and produced by Lieb, Seiringer, Solovej and Yngvason in \cite{lieb2005mathematics}. For Fermions, \cite{2005fermiphy} gave a rigorous proof to the first order of Huang-Yang formula. Moreover, one can see \cite{workshopfermi} for many significant and pioneering references. This paper is devoted to prove (\ref{no1}) in full and beyond.

\par For (\ref{no2}), the second order Lee-Huang-Yang formula in the thermodynamics limit had been proved in \cite{bose2ndthermo1,bose2ndthermo2} by S.Fournais and J.P.Solovej, with delicate spatial localization and treatment of the boundary condition. Previous to the proof of Lee-Huang-Yang formula in the thermodynamic limit, the second order Lee-Huang-Yang formula for Boson systems in the Gross-Pitaevskii regime has been mathematically first proved in the work \cite{2018Bogoliubov} by C.Boccato, C.Brennecke, S.Cenatiempo, and B.Schlein, after many preparations like in \cite{boccatoBrenCena2020optimal,BEC2018,BEC2002}. For the low temperature $T>0$ case, \cite{haberberger2023,haberberger2024} gave the Lee-Huang-Yang formula, with localization of the volume. For (\ref{no1}), the Fermion case, for zero temperature energy approximation to Fermion systems in the thermodynamics limit, \cite{dilutefermiBog,fermiupper,giacomelli2024} gave the result up to the first order via Bogoliubov theory, then recently, an upper bound towards the second order Huang-Yang formula has been obtained in \cite{2024huangyangformulalowdensityfermi}. Particularly, the ground state energy expansion for spin-polarized Fermi gas in the thermodynamic regime can be seen in \cite{polarizedupper}. For the low temperature $T>0$ case, R. Seiringer gave the first order approximation to the Fermi pressure for the order of temperature not greater than the Fermi temperature $T_F\sim\rho^{\frac{2}{3}}$ in \cite{FermithermoTpositive}. A thorough analysis to the spin-polarized ($\mathbf{q}=1$) dilute Fermi gas was carried out in \cite{spin1lower,spin1upper}. In \cite{WJH}, we have proved (\ref{no1}) in the Gross-Pitaevskii regime as a catalyst of the thermodynamics limit and a test ground of our new techniques more based on frequency localizations from nonlinear PDEs without using Bosonization. In this paper, not only we advance our frequency localization techniques above the Gross-Pitaevskii limit to the thermodynamics limit and hence proving (\ref {no1}) in full without spatial localization, we go beyond zero temperature and have established the Huang-Yang formula up to the scaling-critical temperature, which might be a phase transition temperature, and we resolve a conjecture by R. Seiringer \cite[p.732]{FermithermoTpositive}\footnote{3 lines below \cite[Corollary 1]{FermithermoTpositive}} during the course.

\par  Another important approximation problem is the mean-field limit. For Boson systems, \cite{meanfield2011seiR} first gave a proof of Bogoliubov theory in the mean field regime. Later on, \cite{spectrum} predicts the lowest part of the spectrum of the many-body Hamiltonian of Bosons in the mean-ﬁeld regime, covering a large class of interacting Bose gases. For Fermion systems, \cite{meanfieldfermi2019,meanfieldfermi2021} extracted the correlation energy of weakly interacting Fermi gases in high density regime, via the method of Bosonization\footnote{We are not using Bosonization in this paper.} developed recently. More recent works include, for example \cite{Seiringerfermi,Phanfermimeanfield,christiansen2024correlationenergyelectrongas}.

\subsection{The Model \& Temperature Threshold for the Huang-Yang Formula}\label{tem thre}
\par In this paper, we mainly consider fermionc system of indefinite particle number, which may be referred to \textit{grand canonical ensemble} in physics. This model is the choice of physicists when dealing with fermions at positive temperature, as it is considered to be more convenient to handle with and more accurate (see for example \cite[8.6]{huang2008statistical}). We define the grand canonical ensemble of the interacting dilute spin $(\mathbf{q}-1)/2$ Fermi gas below. The system is set at the positive temperature $T>0$, and we set $\beta=1/T$. Let the spin $(\mathbf{q}-1)/2$ be given by $\mathbf{q}\in\mathbb{N}_{>0}$. Denote the set of $\mathbf{q}$ spin states by $\mathcal{S}_\mathbf{q}=\{\varsigma_i,i=1,\dots,\mathbf{q}\}$. The space-spin variable for single particle has the form $z=(x,\sigma)$. Let $N\in\mathbb{N}_{\geq0}$ be a non-negative integer. For $N$ Fermions of spin $(\mathbf{q}-1)/2$ , confined to a large 3D box $\Lambda_L=[-L/2,L/2]^3$ equipped with periodic boundary condition, the wave function that describes these $N$ Fermions should be in the Hilbert space $L_a^2\big((\Lambda_L\times\mathcal{S}_\mathbf{q})^N\big)\eqqcolon \mathcal{H}^{\wedge N}$ consisting of functions that are anti-symmetric with respect to permutations of the $N$ particles.

\par The Hilbert space for the grand canonical ensemble of Fermions is given by the Fock space $\mathcal{F}=\bigoplus_{N=0}^\infty\mathcal{H}^{\wedge N}$, where we denote $\mathcal{H}^{\wedge 0}=\mathbb{C}$. The Hamiltonian is correspondingly $H=\bigoplus_{N=0}^\infty H_N$, where $H_0=0$, $H_1=-\Delta_{x}$ and
\begin{equation}\label{Hamilton n}
 H_{N}=\sum_{j=1}^{N}-\Delta_{{x}_j}+\sum_{1\leq i<j\leq N}v({x}_i-{x}_j)
\end{equation}
for $N\geq2$. We require the interaction potential $v$ to be non-negative, radially-symmetric and compactly supported in some 3D ball. Moreover, we assume $v$ has scattering length $\mathfrak{a}_0\geq0$ (For a precise discussion of the scattering length one can see, for example, \cite[Appendix C]{lieb2005mathematics}). For further usage, we also denote the non-interacting Hamiltonian by $\mathcal{K}=\bigoplus_{N=0}^\infty\mathcal{K}_N$, where $\mathcal{K}_0=0$ and
\begin{equation}\label{Hamilton K_n non-inter}
 \mathcal{K}_{N}=\sum_{j=1}^{N}-\Delta_{{x}_j}
\end{equation}
for $N\geq1$. The interaction operator is given by $\mathcal{V}=\bigoplus_{N=0}^\infty\mathcal{V}_N$, where $\mathcal{V}_0=\mathcal{V}_1=0$ and
\begin{equation}\label{Interaction op V_n}
  \mathcal{V}_N=\sum_{1\leq i<j\leq N}v({x}_i-{x}_j)
\end{equation}
for $N\geq2$. Moreover, the particle number operator on the Fock space is defined by $\mathcal{N}=\bigoplus_{N=0}^\infty N$.

\par An operator $\Gamma$ on $\mathcal{F}$ is a density matrix (also referred to a state) if it is a non-negative trace class operator with $\tr_{\mathcal{F}}\Gamma=1$. We say $\Gamma$ preserves particle number if it commutes with the particle number operator $\mathcal{N}$. It is straight-forward to verify that, for a density matrix $\Gamma$ which preserves particle number, it can be written as $\Gamma=\bigoplus_{N=0}^\infty\Gamma_N$, where $\Gamma_N$ is a non-negative trace class operator on $\mathcal{H}^{\wedge N}$.

\par Since $\Gamma_N$ is a trace class operator, it can be characterized by an integral kernel, which we also denote it by $\Gamma_N$. Therefore, for a density matrix $\Gamma$ that preserves particle number, the normalized one-particle density matrix $\gamma$, which is an operator on $\mathcal{H}$ defined through the quadratic form\footnote{We define in Section \ref{Fock space} the creation and annihilation operators $a^*(f)$ and $a(g)$.}, for $f,g\in\mathcal{H}$:
\begin{equation}\label{1pdm define}
  \langle\gamma f,g\rangle=\tr_\mathcal{F}\big(a^*(f)a(g)\Gamma\big).
\end{equation}
 By definition, its integral kernel is given by
\begin{equation}\label{gamma 1pdm}
  \gamma(z,z^\prime)=\sum_{N=1}^\infty
  N\int_{(\Lambda_L\times\mathcal{S}_{\mathbf{q}})^N}\Gamma_N(z,Z_{N-1};z^\prime,Z_{N-1})
  dZ_{N-1}.
\end{equation}
We can easily verify by definition that
\begin{equation}\label{tr gamma=tr NGamma}
  \tr_{\mathcal{H}}\gamma=\tr_{\mathcal{F}}(\mathcal{N}\Gamma).
\end{equation}
We will only consider $\Gamma$ such that $\tr (\mathcal{N}\Gamma)$ is finite, i.e. $\gamma$ is a positive trace class operator on $\mathcal{H}$.

\par Let $Z_N=(z_1,\dots,z_N)\in (\Lambda_L\times\mathcal{S}_{\mathbf{q}})^N$ and $h\in\mathbb{R}^3$, we apply the short-hand notation
\begin{equation}\label{short hand notation Z_N+h}
  Z_N+h\coloneqq(x_1+h,\sigma_1,\dots,x_N+h,\sigma_N).
\end{equation}
For $\Psi=(\psi_0,\psi_1,\dots,\psi_N,\dots)\in\mathcal{F}$, we define the translation operator $\mathcal{T}_h$ on $\mathcal{F}$ by
\begin{equation}\label{translation op}
  \mathcal{T}_h\Psi=
  (\psi_0,\psi_1(Z_1+h),\dots,\psi_N(Z_N+h),\dots).
\end{equation}
Obviously $\mathcal{T}_h^{-1}=\mathcal{T}_h^*=\mathcal{T}_{-h}$. An operator $\Gamma$ on $\mathcal{F}$ is said to be translation-invariant, if $\mathcal{T}_{h}^*\Gamma\mathcal{T}_h=\Gamma$. For a density matrix $\Gamma$ which preserves particle number and possesses translation-invariance, its one-particle density matrix can be written in the form
\begin{equation}\label{1-pdm translation invariance}
  \gamma(z,z^\prime)=\gamma(x-x^\prime,\sigma,\sigma^\prime).
\end{equation}
In particular, we denote $\gamma(x,\sigma,\sigma)=\gamma(x,\sigma)$ for short. We then use (\ref{1-pdm translation invariance}) to define for $k\in(2\pi/L)\mathbb{Z}^3$ and $\nu\in\mathcal{S}_{\mathbf{q}}$
\begin{equation}\label{gamma_k,sigma}
\begin{aligned}
  \hat{\gamma}(k,\nu)&\coloneqq \frac{1}{L^3}
  \int_{(\Lambda_L\times\mathcal{S}_{\mathbf{q}})^2}
  \gamma(z,z^\prime)e^{-ikx/L}e^{ikx^\prime/L}\chi_{\sigma=\sigma^\prime=\nu}dzdz^\prime\\
  &=\int_{\Lambda_L}\gamma(x,\nu)e^{-ikx/L}dx,
\end{aligned}
\end{equation}
which satisfies $0\leq\hat{\gamma}(k,\nu)\leq1$ since $0\leq\gamma\leq1$ by (\ref{1pdm define}) and (\ref{anticommutator}).

\par For a density matrix $\Gamma$, where we demand $\Gamma$ is a non-negative trace class operator on $\mathcal{F}$ with $\tr_\mathcal{F}\Gamma=1$. The thermodynamic pressure functional for Hamiltonian $H$ is defined by
\begin{equation}\label{pressure functional interacting}
  -L^3P^L[\Gamma]\coloneqq \tr\big((H-\mu\mathcal{N})\Gamma\big)-\frac{1}{\beta}S[\Gamma],
\end{equation}
where $S[\Gamma]\coloneqq -\tr(\Gamma\ln\Gamma)$ is the von Neumann entropy, and $\mu>0$ denotes the chemical potential. We can always consider $\Gamma$ such that all the eigenfunctions of $\Gamma$ lies in the quadratic form domain of both $\Gamma$ and $\mathcal{N}$, otherwise we can simply take $\tr\big((H-\mu\mathcal{N})\Gamma\big)=+\infty$.

\par Let $P^L(\beta,\mu)$ denote the supremum of the thermodynamic pressure functional $P^L[\Gamma]$ over all density matrices. $P^L(\beta,\mu)$ is uniquely attained by the Gibbs density matrix $G$ (see, for example, \cite[Theorem 2.8]{GenH_F})
\begin{equation}\label{Gibbs state interacting}
  G=Z^{-1}e^{-\beta(H-\mu\mathcal{N})},\quad Z=\tr e^{-\beta(H-\mu\mathcal{N})}.
\end{equation}
Since $H$ and $\mathcal{N}$ both preserve particle number and possess translation-invariance, so is the Gibbs state $G$. In fact, we can calculate directly
\begin{equation}\label{P^L[G]}
  P^L(\beta,\mu)=P^L[G]=\frac{1}{L^3\beta}\ln Z.
\end{equation}
Our main concern is the limit pressure of the grand canonical system
\begin{equation}\label{pressure lim}
  P(\beta,\mu)\coloneqq\lim_{L\to\infty}P^L(\beta,\mu)=\lim_{L\to\infty}P^L[G].
\end{equation}
One can, on the other hand, consider the free energy for system with definite particle number. In physics terminology, the free energy and the thermodynamic pressure can inter-convert via Legendre transformation. However, as mentioned above, the thermodynamic pressure is more convenient for direct analysis, according to \cite[8.6]{huang2008statistical}.

\par The above discussion can be applied correspondingly to the non-interacting Hamiltonian $\mathcal{K}$. The Gibbs state $G_0$ to the non-interacting pressure functional $P^L_0[\Gamma]$ is given by
\begin{equation}\label{Gibbs state non-inter}
  G_0=Z_0^{-1}e^{-\beta(\mathcal{K}-\mu\mathcal{N})},\quad Z_0=\tr e^{-\beta(\mathcal{K}-\mu\mathcal{N})}.
\end{equation}
$G_0$ is translation-invariant and preserves particle number. Similar to (\ref{P^L[G]}), we have
\begin{equation}\label{P^L_0[G_0]}
  P^L_0(\beta,\mu)=P^L_0[G_0]=\frac{1}{L^3\beta}\ln Z_0.
\end{equation}
Let $\gamma_0$ be the one-particle density matrix to $G_0$. From \cite[(8.65)]{huang2008statistical}, we can calculate the diagonal Fourier coefficients (\ref{gamma_k,sigma}) to $\gamma_0$ by
\begin{equation}\label{gamma_0}
  \hat{\gamma}_0(k,\sigma)=\frac{1}{1+\mathfrak{z}^{-1}e^{\beta\vert k\vert^2}},
\end{equation}
where $\mathfrak{z}=e^{\beta\mu}$ is the fugacity. From \cite[(8.63)]{huang2008statistical}, we can write the limit pressure $P_0(\beta,\mu)$ by
\begin{equation}\label{pressure non_interacting}
  P_0(\beta,\mu)=\frac{\mathbf{q}}{(2\pi)^3\beta}\int_{\mathbb{R}^3}
  \ln\big(1+\mathfrak{z}e^{-\beta\vert x\vert^2}\big)dx.
\end{equation}
We can calculate directly the corresponding density $\rho_0=\partial P_0/\partial\mu$
\begin{equation}\label{density non interacting}
  \rho_0(\beta,\mu)=\frac{\mathbf{q}}{(2\pi)^3}\int_{\mathbb{R}^3}
  \frac{dx}{1+\mathfrak{z}^{-1}e^{\beta\vert x\vert^2}},
\end{equation}
or we can write it by (see \cite[25.12]{NIST})
\begin{equation}\label{density non intercating Fermi Dirac dis}
  \rho_{0}(\beta,\mu)=-\frac{\mathbf{q}}{(4\pi\beta)^{\frac{3}{2}}}L_{\frac{3}{2}}(-\mathfrak{z})
  =\frac{\mathbf{q}}{(4\pi\beta)^{\frac{3}{2}}}F_{\frac{1}{2}}(\beta\mu),
\end{equation}
where $L_s$ is the polylogarithm, and $F_s$ is the Fermi-Dirac integral. When $\beta\mu\gg1$ (see for example \cite{dingle1973asymptotic}), we have
\begin{equation}\label{asymptotic rho_0}
  \rho_{0}(\beta,\mu)=\frac{\mathbf{q}}{6\pi^2\beta^{\frac{3}{2}}}
  \Big((\beta\mu)^{\frac{3}{2}}+O\big((\beta\mu)^{-\frac{1}{2}}\big)\Big).
\end{equation}

\par We are interested in the low temperature limit $\beta\mu\gg1$ and low density limit, which means the density $\rho$ is defined by $\rho=\partial P/\partial\mu$ should be small enough. In fact, $\rho$ may not exist point-wisely. However, since $P(\beta,\mu)$ is convex in $\mu$, at least the right and left derivatives of $P$ with respect of $\mu$, i.e. $\rho_\pm$ exist. It has been proven in \cite[Corollary 1]{FermithermoTpositive} that, when $\rho_0$ is small enough, $\rho_0\sim\rho_\pm$. Therefore, we might assume $\rho_0$ small enough instead.

\par For Huang-Yang formula to hold in the positive temperature, we need to further assume
\begin{equation}\label{restriction on betamu}
  \beta\mu\gtrsim\rho_0^{-\alpha_1}
\end{equation}
for some $\alpha_1\geq0$. We demand $\alpha_1>\frac{1}{6}$. This condition is critical for Huang-Yang formula for $T>0$. In fact, we can check this fact by combining (\ref{asymptotic rho_0}) and (\ref{restriction on betamu}), we have $\rho_0\sim \mu^{\frac{3}{2}}$, and moreover, we expect
\begin{equation}\label{critical power 1/6}
  \rho_0^{\frac{7}{3}}\gg\rho_0^{2+2\alpha_1}\gtrsim
  \rho_0\big(\rho_0-\mathbf{q}(6\pi^2)^{-1}\mu^{\frac{3}{2}}\big).
\end{equation}
Therefore, we obtain the temperature threshold for Huang-Yang formula: $\alpha_1>\frac{1}{6}$, or in other words, $T\lesssim\rho_0^{\frac{2}{3}+\frac{1}{6}+}$ since $\rho_0^{\frac{2}{3}}\sim\mu$.

\par For $R>0$, we denote the number of lattice points in $\{k\in(2\pi/L)\mathbb{Z}^3\,\vert\,\vert k\vert\leq R\}$ by $N(R)$:
\begin{equation}\label{N(R)}
  N(R)=\#\{k\in(2\pi/L)\mathbb{Z}^3\,\vert\,\vert k\vert\leq R\}.
\end{equation}
When $RL$ is large enough, the asymptotic behavior of $N(R)$ is given by (see for example \cite{HeathBrown+1999+883+892})
\begin{equation}\label{asy N(R)}
  N(R)=\frac{R^3L^3}{6\pi^2}+O\big((RL)^{\frac{21}{16}+\varepsilon}\big)
\end{equation}
for any $\varepsilon>0$ small enough when $RL\to\infty$.

\par Let $\kappa>0$ be a universal constant, independent of $L$, $\beta$ and $\mu$. The exact value of $\kappa$ will be determined later in (\ref{choose kappa}). We choose $k_F=\tilde{\mu}^{\frac{1}{2}}>0$, such that
\begin{equation}\label{mu tilde}
  \tilde{\mu}=\mu-\kappa\tilde{\mu}^{\frac{3}{2}}.
\end{equation}
Obviously there is a unique $\tilde{\mu}>0$ that solves (\ref{mu tilde}) and additionally satisfies $\mu\sim\tilde{\mu}$ and $0<\mu-\tilde{\mu}<\kappa\mu^{\frac{3}{2}}$. We remark that this is in fact a retrospective definition of $\tilde{\mu}$, we only know this is the right number after the calculation in Section \ref{cub}.

\par We also let the Fermi ball $B_F=\{k\in(2\pi/L)\mathbb{Z}^3\,\vert\,\vert k\vert\leq k_F\}$ and by (\ref{asy N(R)}), we let
\begin{equation}\label{average particle number}
  \bar{N}_0=N(k_F)=\# B_F=\frac{\tilde{\mu}^{\frac{3}{2}}L^3}{6\pi^2}
  +O\Big((\tilde{\mu}^\frac{1}{2}L)^{\frac{21}{16}+\varepsilon}\Big)
\end{equation}
for any $\varepsilon>0$ small enough when $k_FL\to\infty$.

\par In the following context, we apply $\rho_0=\rho_0(\beta,\mu)$ and $\tilde{\rho}_0=
\rho_0(\beta,\tilde{\mu})$ for short. Similarly, the tilde version of (\ref{Gibbs state non-inter})-(\ref{pressure non_interacting}) can be defined by replacing $\mu$ by $\tilde{\mu}$. Also, we let $\tilde{\mathfrak{z}}=e^{\beta\tilde{\mu}}$.


\subsection{Main Theorem}

\begin{theorem}[Huang-Yang formula for $T>0$]
\label{core}
  Let $v$ be non-negative, radially symmetric and compactly supported.
  Assume further the potential $v$ is smooth enough. In the low density limit $\rho_0\to0$, and the low temperature condition given by (\ref{restriction on betamu}). Let $\alpha_1>\frac{1}{6}$, that is $T\lesssim\rho_0^{\frac{2}{3}+\frac{1}{6}+}$, which is the necessary threshold. We set $d=\min\{\alpha_1-\frac{1}{6},\frac{1}{9}\}$, then
  \begin{equation}\label{core pressure}
  \begin{aligned}
    P(\beta,\mu)=&P_0(\beta,\mu)
    -4\pi\mathfrak{a}_0\Big(1-\frac{1}{\mathbf{q}}\Big)\rho_0^2\\
    &-\frac{12}{35}(11-2\ln 2)3^{\frac{1}{3}}\pi^{\frac{2}{3}}\mathfrak{a}_0^2
    2^{\frac{4}{3}}\mathbf{q}^{-\frac{1}{3}}\Big(1-\frac{1}{\mathbf{q}}\Big)
    \rho_0^\frac{7}{3}\\
    &+\frac{1}{2}(8\pi\mathfrak{a}_0)^2\Big(1-\frac{1}{\mathbf{q}}\Big)^2\rho_0^2\frac{\partial \rho_0}
    {\partial\mu}+O\big(\rho_0^{\frac{7}{3}+\frac{d}{400}}\big),
    \end{aligned}
  \end{equation}
  and
  \begin{equation}\label{core density}
    \rho_{\pm}(\beta,\mu)=\rho_0(\beta,\mu)-
    8\pi\mathfrak{a}_0\Big(1-\frac{1}{\mathbf{q}}\Big)\rho_0\frac{\partial \rho_0}
    {\partial\mu}+O\big(\rho_0^{\frac{4}{3}+\frac{d}{800}}\big).
  \end{equation}
\end{theorem}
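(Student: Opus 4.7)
The plan is to establish matching upper and lower bounds on the finite-volume pressure $P^L(\beta,\mu)$ that agree to $O(\rho_0^{7/3+d/400})$ uniformly in a $\mu$-neighborhood, then pass to $L\to\infty$ to deduce (\ref{core pressure}). The density formula (\ref{core density}) follows by a standard convexity argument: since $P$ is convex in $\mu$, the one-sided derivatives $\rho_{\pm}$ exist and a difference-quotient/second-derivative comparison gives $|\rho_{\pm} - \partial_\mu P_{\rm approx}| \lesssim (\rho_0^{7/3+d/400}\cdot\rho_0^{-1/3})^{1/2} = \rho_0^{4/3+d/800}$; differentiating the mean-field term $-4\pi\mathfrak{a}_0(1-\mathbf{q}^{-1})\rho_0^2$ in $\mu$ produces exactly $-8\pi\mathfrak{a}_0(1-\mathbf{q}^{-1})\rho_0\,\partial_\mu\rho_0$, which is the content of Seiringer's conjecture.

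\textbf{Upper bound.} I would use a trial state $\Gamma_{\rm tr} = e^{B}\tilde{G}_0 e^{-B}$, with $\tilde{G}_0$ the free Gibbs state at the renormalized chemical potential $\tilde{\mu}$ from (\ref{mu tilde}) and $B$ an anti-self-adjoint quadratic operator built from the zero-energy scattering solution $-\Delta\omega + \tfrac12 v(1-\omega)=0$ truncated at a suitable intermediate momentum scale. Being a unitary conjugate, $\Gamma_{\rm tr}$ has entropy $S[\tilde{G}_0]$, so the only work is on the energy and particle-number traces. The self-consistency $\tilde{\mu} = \mu - \kappa\tilde{\mu}^{3/2}$ is chosen so that $\mu - \tilde\mu \sim 8\pi\mathfrak{a}_0(1-\mathbf{q}^{-1})\rho_0$ cancels the mean-field shift; Taylor-expanding both $P_0^L(\beta,\tilde\mu)$ and the Hartree-Fock interaction $4\pi\mathfrak{a}_0(1-\mathbf{q}^{-1})\tilde\rho_0^2$ back to $\mu$, the two second-order pieces combine to produce precisely $+\tfrac12(8\pi\mathfrak{a}_0)^2(1-\mathbf{q}^{-1})^2\rho_0^2\partial_\mu\rho_0$. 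Evaluation of $P^L[\Gamma_{\rm tr}]$ then splits into (i) the free pressure $P_0^L(\beta,\tilde\mu)$, (ii) direct-minus-exchange HF terms contracted against $v(1-\omega)$ via the scattering equation to give $-4\pi\mathfrak{a}_0(1-\mathbf{q}^{-1})\rho_0^2$, and (iii) a residual cubic correlation whose momentum-space integral against the Fermi-Dirac distribution yields the Huang-Yang coefficient $\tfrac{12}{35}(11-2\ln 2)3^{1/3}\pi^{2/3}\mathfrak{a}_0^2\cdot 2^{4/3}\mathbf{q}^{-1/3}(1-\mathbf{q}^{-1})$ at order $\rho_0^{7/3}$.

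\textbf{Lower bound.} By the Gibbs variational principle it suffices to bound $\tr((H-\mu\mathcal{N})\Gamma) - \beta^{-1}S[\Gamma]$ from below uniformly in $\Gamma$. I would (a) apply the scattering-equation operator inequality to extract $L^{-3}\sum_{i<j}v(x_i-x_j) \geq 8\pi\mathfrak{a}_0\rho_0^2$ at the mean-field level plus a positive correlated remainder, (b) Littlewood-Paley decompose momentum space into dyadic shells adapted to $k_F$, separating soft fluctuations near the Fermi surface from bulk and high-frequency modes, and (c) dominate the remaining form by an explicitly diagonalizable Bogoliubov-type quadratic Hamiltonian whose free energy matches the target to the needed precision. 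The entropy is controlled against the quasi-free reference $\tilde{G}_0$ through the Klein / Peierls-Bogoliubov inequality, which converts $S[\tilde{G}_0]-S[\Gamma]$ into a quadratic trace functional of the fluctuation one-particle density matrix. This conversion is what permits the entire argument to proceed in Fourier space, dispensing with any spatial localization.

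\textbf{Main obstacle.} The hardest step is the critical-temperature regime $\beta\mu\sim\rho_0^{-\alpha_1}$ with $\alpha_1$ near $1/6$. There the thermal Fermi-surface width and the scattering-induced momentum spread are comparable, so every integral over a near-Fermi shell threatens to pick up logarithmic factors $(\beta\mu)^{O(1)}$ that must be absorbed into the tiny gain $\rho_0^{d/400}$. Choosing the Littlewood-Paley cutoffs so that both the cubic-to-quadratic reduction on the upper bound side and the Bogoliubov diagonalization on the lower bound side lose only $O(\rho_0^{7/3+d/400})$, rather than the naive $O(\rho_0^{7/3}(\beta\mu)^c)$, is the heart of the argument; the two branches in $d = \min\{\alpha_1-\tfrac16,\tfrac19\}$ reflect, respectively, proximity to the temperature threshold and the intrinsic cubic-estimate loss already present at $T=0$. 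It is precisely this fine scale-matching, enabled by PDE-style paraproduct and commutator estimates rather than Bosonization, that allows the proof to reach the scaling-critical temperature.
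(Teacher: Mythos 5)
Your high-level skeleton (unitary conjugation of the Hamiltonian, upper bound via a transformed free Gibbs state at the shifted chemical potential $\tilde\mu$ solving (\ref{mu tilde}), lower bound via the Gibbs variational principle with entropy controlled relative to $\tilde G_0$, Taylor expansion in $\mu-\tilde\mu$ producing the term $\tfrac12(8\pi\mathfrak a_0)^2(1-\mathbf q^{-1})^2\rho_0^2\partial_\mu\rho_0$, and the convexity/difference-quotient argument with $\delta\sim\rho_0^{1+d/800}$ for (\ref{core density})) does match the paper. But there is a genuine gap at the heart of the second-order term. Your trial state is a \emph{single} quadratic conjugation $e^B\tilde G_0e^{-B}$, and you assert that the ``residual cubic correlation'' evaluated on it yields the Huang--Yang coefficient. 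That step is precisely what a quadratic renormalization alone cannot deliver: in the paper the unitary is $\mathcal U=e^Be^{B'}e^{\tilde B}$, where $e^{B'}$ is a purely fermionic \emph{cubic} renormalization (one hole inside $B_F$, one particle pinned in the thin shell $A_{F,\delta_4}$) that extracts the low-frequency part of $\mathcal V_3$, and $e^{\tilde B}$ is a Bogoliubov transformation whose coefficients $\xi^{\nu,\sigma}_{k,q,p}$ carry both the scattering residue $W_k$ and the kinetic residue $\eta_k k(q-p)$, with the regularizing gap $\epsilon_0=\tilde\rho_0^2$. The $\rho_0^{7/3}$ coefficient emerges only from the explicit constant $C_{\mathcal Z_N}$ after all three conjugations (Lemma \ref{lemma cal C_Z_N}), where the lattice sums converge to the Huang--Yang integral; nothing in your proposal explains how this constant, with the correct coefficient, would be produced and the error kept at $O(\rho_0^{7/3+d/400})$ after only one quadratic transformation.

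The lower-bound route you sketch is also not viable as stated: for fermions the quartic interaction cannot be ``dominated by an explicitly diagonalizable Bogoliubov-type quadratic Hamiltonian'' without Bosonization, which the paper deliberately avoids, and it is unclear how such a comparison model would reproduce the exchange-corrected second-order constant rather than merely an order-of-magnitude bound. The paper instead conjugates $H-\mu\mathcal N$ by the \emph{same} $\mathcal U$ for both bounds, proving that the residue $\mathcal E$ is absorbed by a small multiple of $\beta^{-1}S(\Gamma,\tilde G_0)+\operatorname{Tr}\,e^{-\tilde B}\mathcal V_{4,4h}e^{\tilde B}\Gamma$ plus $C\tilde\rho_0^{7/3+}L^3$ (this uses the a-priori bounds (\ref{est trNreGamma})--(\ref{est trK_h[delta]Gamma}) derived from the relative-entropy inequality (\ref{claim ABL and RS}), not Klein/Peierls--Bogoliubov per se), and then takes $\Gamma=G$ for the lower bound and $\Gamma^{\mathcal U}=\tilde G_0$ for the upper bound, with $\operatorname{Tr}\,e^{-\tilde B}\mathcal V_{4,4h}e^{\tilde B}\tilde G_0$ shown to be negligible (Lemma \ref{lemma bog control V_4,4h}). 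Your frequency-localization intuition and the role of the two branches of $d$ are in the right spirit, but without the cubic renormalization, the specific Bogoliubov coefficients, and the identification of $E_0$, the proposal does not reach the second-order formula.
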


\begin{remark}
  \par (\ref{core density}) resolves the conjecture raised by R.Seiringer \cite[p.732]{FermithermoTpositive}. We prove (\ref{core density}) as a corollary of (\ref{core pressure}). In fact, if one does know (\ref{mu tilde}) together with (\ref{core density}), one can indeed sort of reach (\ref{core pressure}) by plugging (\ref{mu tilde}) and (\ref{core density}) into the zero temperature Huang-Yang formula (\ref{core ground state energy}) plus the free particle pressure $P_0(\beta,\mu)$. However, as we have mentioned before, (\ref{mu tilde}) is actually retrospectively defined after the calculations in Section \ref{cub}, at which point we have already known (\ref{core pressure}).
\end{remark}

\begin{remark}
  \par The fourth term on the right -hand side of (\ref{core density}) is in fact of the order $\rho_0^{\frac{7}{3}}$, since $\partial\rho_0/\partial\mu\sim\rho_0^{\frac{1}{3}}$. It acts as a positive temperature correction effect to the Huang-Yang formula, arises directly from the modification of $\mu$ in (\ref{mu tilde}). It carries the information of the deviation of the interacting Gibbs state versus the non-interacting Fermi-Dirac statistic due to the positive temperature. When we consider the ground state energy density for zero temperature (see (\ref{core ground state energy}) below), this term vanishes.
\end{remark}

\begin{remark}
  \par If the threshold $\alpha_1>\frac{1}{6}$ is removed in Theorem \ref{core}, the exact coefficients for each term of (\ref{core pressure}) might no longer correspond with the original Huang-Yang formula for zero temperature (see (\ref{core ground state energy}) below), and it can be predicted that the coefficients may become much more complicated, and even depend on the fugacity $\mathfrak{z}=e^{\beta\mu}$ and thus the temperature $T$, as presented in \cite[Remark 1.6]{spin1lower} for the spin-polarized case.
\end{remark}

  \par The method in this paper can also be applied to the zero temperature case, i.e. the proof of the Huang-Yang formula (\ref{no1}), with an modification to the Fermi radius $k_F$. Then we need to replace the operator $\mathcal{N}_{re}$ defined in (\ref{N_relaxation}) by $\mathcal{N}_{ex}$, the excitation particle number operator, and the effect of relative entropy $\beta^{-1}S(\Gamma,\tilde{G}_0)$ defined in (\ref{relative entropy}) is taken place by $\mathcal{K}_s$, the kinetic energy operator of excitations. Notice that there will be no correction (\ref{mu tilde}) of $\mu$ (when dealing with the ground state energy, we define $\mu^{\frac{1}{2}}=k_F$) since the particle number should by fixed, and we need not to consider the threshold $\alpha_1$ since the temperature is zero. We can deduce the following theorem parallel to Theorem \ref{core}.
  \begin{theorem}[Huang-Yang formula for $T=0$\footnote{Notice that, in \cite{giacomelli2025huangyangconjecturelowdensityfermi} which was posted within 24 hours with previous version of this paper, a proof using method analogous to the cubic renormalization from \cite{WJH} has also been provided by E.Giacomelli, C.Hainzl, P.T.Nam and R.Seiringer.}]
  \label{core T=0}
  Let $v$ be non-negative, radially symmetric and compactly supported.
  Assume further the potential $v$ is smooth enough. In the low density limit $\rho_0\to0$,
    \begin{equation}\label{core ground state energy}
  \begin{aligned}
    e(\rho_0)=&\frac{3}{5}\Big(\frac{6\pi^2}{\mathbf{q}}\Big)^{\frac{2}{3}}\rho_0^{\frac{5}{3}}
    +4\pi\mathfrak{a}_0\Big(1-\frac{1}{\mathbf{q}}\Big)\rho_0^2\\
    &+\frac{12}{35}(11-2\ln 2)3^{\frac{1}{3}}\pi^{\frac{2}{3}}\mathfrak{a}_0^2
    2^{\frac{4}{3}}\mathbf{q}^{-\frac{1}{3}}\Big(1-\frac{1}{\mathbf{q}}\Big)
    \rho_0^\frac{7}{3}+o\big(\rho_0^{\frac{7}{3}}\big).
    \end{aligned}
  \end{equation}
  \end{theorem}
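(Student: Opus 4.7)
The plan is to specialize the proof of Theorem \ref{core} to $T=0$, invoking the three modifications flagged in the paragraph preceding the statement: the fixed particle number removes the need for the retrospective shift (\ref{mu tilde}), the excitation number $\mathcal{N}_{ex}$ replaces the relaxation counter $\mathcal{N}_{re}$, and the excitation kinetic energy $\mathcal{K}_s$ replaces the relative entropy $\beta^{-1}S(\Gamma,\tilde G_0)$. Fixing $N$ and taking $k_F=\mu^{1/2}$ so that $\#B_F=N$, the non-interacting Fermi sea $|\mathrm{FS}\rangle=\prod_{k\in B_F,\,\sigma}a^{*}(k,\sigma)|0\rangle$ contributes the Thomas-Fermi term $\tfrac{3}{5}(6\pi^{2}/\mathbf{q})^{2/3}\rho_{0}^{5/3}$; the temperature-dependent third term of (\ref{core pressure}), which required the shift $\tilde\mu$, has no analogue here because $\partial\rho_0/\partial\mu$ plays no role at $T=0$.

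The argument then reruns the three stages of Theorem \ref{core}. Stage one is the particle-hole transformation about $|\mathrm{FS}\rangle$, which recasts $H_{N}$ minus its Slater expectation as a quadratic form in excitation operators plus the quartic interaction; evaluating on $|\mathrm{FS}\rangle$ produces the first-order Huang-Yang coefficient $4\pi\mathfrak{a}_{0}(1-1/\mathbf{q})\rho_{0}^{2}$ as an upper bound and, through a Gronwall-style bootstrap against $\mathcal{K}_s$, yields a priori bounds on $\langle\mathcal{N}_{ex}\rangle$ and $\langle\mathcal{K}_{s}\rangle$ for any near-minimizer. Stage two is the cubic renormalization developed in \cite{WJH}: conjugate by a unitary $e^{B}$ with $B$ cubic in creation/annihilation operators and frequency-localized to a shell around $\partial B_{F}$, engineered so that the dressed two-body interaction kernel is effectively $8\pi\mathfrak{a}_{0}$ on the scattering geometry rather than the bare $\hat v$. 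Positivity of the residual two-body form plus the bootstrapped excitation bounds then delivers the matching lower bound for the first-order term.

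The genuine obstacle is extracting the exact second-order constant $\tfrac{12}{35}(11-2\ln 2)3^{1/3}\pi^{2/3}\mathfrak{a}_{0}^{2}2^{4/3}\mathbf{q}^{-1/3}(1-1/\mathbf{q})$. After the cubic renormalization the leading correction reduces to a quadratic form whose integral kernel is, up to negligible pieces, $(8\pi\mathfrak{a}_{0})^{2}/|k|^{2}$ restricted to the scattering configuration $k\in B_{F}^{c}$, $p,q\in B_{F}$, $k+p,\,k+q\in B_{F}^{c}$. Taking its expectation against the quasi-free Fermi sea and evaluating the resulting three-momentum integral
\begin{equation}
\iiint \frac{\chi_{B_{F}^{c}}(k)\,\chi_{B_{F}}(p)\,\chi_{B_{F}}(q)\,\chi_{B_{F}^{c}}(k+p)\,\chi_{B_{F}^{c}}(k+q)}{|k|^{2}}\,dk\,dp\,dq
\end{equation}
reproduces the Huang-Yang constant after rescaling to density $\rho_{0}$, with the logarithmic piece $11-2\ln 2$ emerging from the angular integration at the Fermi surface. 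The hard step is to show that every non-Bogoliubov contribution --- the cubic remainder from $e^{B}$, the spin-exchange terms, and the non-forward pieces of the dressed two-body interaction --- is $o(\rho_{0}^{7/3})$ using only frequency localization together with the bootstrapped $\mathcal{N}_{ex}$ and $\mathcal{K}_{s}$ bounds, with no entropy reservoir available to absorb slack. Since at $T=0$ the Fermi-Dirac weight $\hat\gamma_{0}(k,\sigma)$ in (\ref{gamma_0}) degenerates into the sharp indicator $\chi_{B_{F}}(k)$, every estimate from Theorem \ref{core} specializes directly, the parameter $d$ is no longer needed, and the remainder $O(\rho_{0}^{7/3+d/400})$ collapses to $o(\rho_{0}^{7/3})$, yielding (\ref{core ground state energy}).
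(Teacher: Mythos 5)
Your opening paragraph matches the paper's intent exactly: Theorem \ref{core T=0} is obtained by rerunning the machinery of Theorem \ref{core} with $k_F=\mu^{1/2}$ fixed by the particle number, $\mathcal{N}_{ex}$ in place of $\mathcal{N}_{re}$, $\mathcal{K}_s$ in place of $\beta^{-1}S(\Gamma,\tilde G_0)$, and no shift (\ref{mu tilde}); the Fermi sea supplies the $\tfrac35(6\pi^2/\mathbf{q})^{2/3}\rho_0^{5/3}$ term and the positive-temperature term drops out. Up to there you are aligned with the paper.

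The genuine gap is in your extraction of the second-order constant. The three-momentum integral you propose, $\iiint |k|^{-2}\chi_{B_F^c}(k)\chi_{B_F}(p)\chi_{B_F}(q)\chi_{B_F^c}(k+p)\chi_{B_F^c}(k+q)\,dk\,dp\,dq$, is ultraviolet divergent: for $|k|$ large all indicators are identically one, the $p,q$ integration contributes a fixed factor of order $k_F^6$, and $\int_{|k|>c}|k|^{-2}dk$ diverges linearly, so this expression cannot reproduce $\tfrac{12}{35}(11-2\ln2)3^{1/3}\pi^{2/3}\mathfrak{a}_0^2 2^{4/3}\mathbf{q}^{-1/3}(1-1/\mathbf{q})\rho_0^{7/3}$, and the $11-2\ln2$ does not come from an angular integration of $1/|k|^2$ alone. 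In the paper the constant arises only from the convergent combination
\begin{equation*}
\frac{1}{\vert k\vert^2}-\frac{2\,\chi_{p-k,q+k\notin B_F}}{\vert p-k\vert^2+\vert q+k\vert^2-\vert p\vert^2-\vert q\vert^2},
\end{equation*}
see (\ref{cal C_Z_N penult})--(\ref{int lim L int}), and this subtraction is produced structurally: the Bogoliubov transformation $e^{\tilde B}$, whose coefficients $\xi^{\nu,\sigma}_{k,q,p}$ carry the collisional energy denominators and the scattering residue $W_k$ plus the kinetic residue $\eta_k k(q-p)$, generates $C_3$, which is then combined with $C_2$ coming from the quadratic-renormalization commutators via the scattering equation (\ref{Wdiscrete asymptotic energy pde on the torus}). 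Your proposal omits this third conjugation entirely and replaces it by ``taking the expectation of the dressed kernel against the Fermi sea,'' which is precisely where the cancellation that makes the integral finite is lost; a plain expectation on the Fermi sea is a first-order computation and cannot see the energy denominators. Relatedly, you conflate the paper's quadratic and cubic renormalizations: the $\hat v\to 8\pi\mathfrak{a}_0$ dressing is done by the pair-type operator $B$ built from the Neumann scattering solution $\eta_k\phi^+(k)$ with two holes in $B_F$, while the purely fermionic cubic operator $B'$ (one index in the shell $A_{F,\delta_4}$) has the separate task of extracting the low-frequency part of $\mathcal{V}_3$ and fixing the $\mathfrak{a}_0\mu^{3/2}(\mathbf{q}-1)(N-\mathbf{q}\bar N_0)$ term; all three conjugations $e^B e^{B'}e^{\tilde B}$ (each quartic, not cubic, in $a^*,a$) are needed before the second-order constant appears, so a single shell-localized ``cubic'' transformation cannot carry the argument.
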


\subsection{Outline of the Proof \& Accuracy of Littlewood-Paley}
\
\par Our proof of Theorem \ref{core} largely follows the steps in our work
proving Theorem \ref{core T=0} in the Gross-Pitaevskii regime \cite{WJH}
which develops the renormalization method from \cite{2018Bogoliubov,hainzlSchleinTriay2022bogoliubov}. That
is because we have found an original and structurally crucial 
Fermionic cubic renormalization\footnote{%
This is one of the reasons we wrote in \cite{WJH} that the proof of the zero
temperature case, Theorem \ref{core T=0}, is not too far away.} (inspired by
\cite{me}) in \cite{WJH}. The two challenges we tackle in this paper are the
positive temperature effects and the $L\to \infty $ large box limit.

\par Formula (\ref{core pressure}) has not been computed nor
predicted before this paper. Though (\ref{core density}) was anticipated, it actually comes out as a corollary of (\ref{core pressure}). Despite the clean form $%
\frac{1}{2}\left( 8\pi \mathfrak{a}_{0}\right) ^{2}\Big(1-\frac{1}{\mathbf{q}%
}\Big)^{2}\rho _{0}^{2}\frac{\partial \rho _{0}}{\partial \mu }$ we have
found, we did not know the format nor the whereabouts of this positive
temperature effects term at all beforehand. Needless to say, proving the
formula up to its scaling-threshold temperature is also very difficult and
complex. The only path forward is much more advanced and precise computation
and estimates.

\par As claimed in \cite{WJH}, since we can derive a-priori excitations estimates
to start in the thermodynamic limit, it is possible to calculate
the second order approximations without using localization of the volume,
but with the frequency localization as in \cite{WJH}.\footnote{%
Some other versions and ideas of frequency localization were also used in
\cite{dilutefermiBog,2024huangyangformulalowdensityfermi,
giacomelli2025huangyangconjecturelowdensityfermi}.} In the thermodynamic limit, to manage the length scale $L\to\infty$, we need much more delicate frequency cut-off in the choice of renormalization operators. We tackle these difficulties via a more nonlinear-PDE-like version of the
simple Littlewood-Paley frequency localization analysis, which was
originally devised to deal with octave band/out-of-band noise in Fourier analysis. In
the end, not only we achieve our goal to prove the Huang-Yang formula up to
the scaling-threshold temperature, we precisely identify all the freqency
relations and nonlinear interactions in all the renormalization components
and renormalization operators, and thus provide accurate and robust
information for future work.

\par To deal with the non-linearity of the entropy $S[\Gamma ]$\footnote{%
Such an entropy were also dealt with, by introducing the relative entropy in
the work \cite{spin1lower,spin1upper} on the thermodynamic pressure of
spin-polarized fermions.}, we are going to choose a suitable unitary operator $%
\mathcal{U}=e^{B}e^{B^{\prime }}e^{\tilde{B}}$ on the Fock space $\mathcal{F}
$, such that after the renormalizations, we have the precise formula up to the
scaling-threshold temperature that
\begin{equation*}
  \mathcal{U}^*(H-\mu\mathcal{N})\mathcal{U}=E_0+(\mathcal{K}-\tilde{\mu}\mathcal{N})+
  e^{-\tilde{B}}\mathcal{V}_{4,4h}e^{\tilde{B}}+\mathcal{E},
\end{equation*}
where $\mathcal{V}_{4,4h}\geq 0$ is the suitable all-high frequency (extreme
far away from the Fermi ball $B_{F}$) component of the interaction operator $%
\mathcal{V}$, and $\mathcal{E}$ is the residue, bounded by
\begin{equation*}
  \vert\tr\mathcal{E}\Gamma\vert\leq C\tilde{\rho}_0^{\frac{7}{3}+}L^3+c\big(
  \beta^{-1}S(\Gamma,\tilde{G}_0)+
  \tr e^{-\tilde{B}}\mathcal{V}_{4,4h}e^{\tilde{B}}\Gamma\big)
\end{equation*}
for any translation-invariant state $\Gamma$, and some constant $0<c<1$. Here, we observe $E_{0}$ and hence the clean form of the positive
temperature effects term after the conjugation of $e^{B}e^{B^{\prime }}$. Then
\begin{equation*}
  0\leq\frac{1}{\beta}S(\Gamma,\tilde{G}_0)\coloneqq
  L^3\big(\tilde{P}^L_0[\tilde{G}_0]-\tilde{P}^L_0[\Gamma]\big)
\end{equation*}
is the relative entropy with respect to the free Gibbs state $\tilde{G}_0$ defined in Section \ref{tem thre}. Therefore, by letting $\Gamma^\mathcal{U}=\mathcal{U}^*\Gamma\mathcal{U}$, we have (since $\tilde{P}_0[\tilde{G}_0]=\tilde{P}_0^L(\beta,\tilde{\mu})$)
\begin{equation*}
  -L^3P^L[\Gamma]=E_0-\tilde{P}_0^L(\beta,\tilde{\mu})+\beta^{-1}S(\Gamma^\mathcal{U},\tilde{G}_0)+
  \tr e^{-\tilde{B}}\mathcal{V}_{4,4h}e^{\tilde{B}}\Gamma^\mathcal{U}
  +\tr\mathcal{E}\Gamma^\mathcal{U}.
\end{equation*}
That is, by choosing $\Gamma =G$, we can deduce the lower bound for $%
-L^{3}P^{L}(\beta ,\mu )$, while we can choose $\Gamma ^{\mathcal{U}}=\tilde{%
G}_{0}$ to reach the upper bound. Thus the proof of Theorem \ref{core} is
concluded. The leftover is the choice and handling of the suitable unitary
operator $\mathcal{U}=e^{B}e^{B^{\prime }}e^{\tilde{B}}.$
\par In the thermodynamic limit, we make a careful analysis in the choice of the renormalization operators with fine frequency localizations to manage the length scale $L\to\infty$. These frequency localizations provide detailed
information to precisely identify the particle interactions that contains main
energy contribution. Based on our findings, it is not always the low frequency nor always the high frequency being the main information carrier.

\par For the first one, the quadratic renormalization, we
set
\begin{align*}
B=\frac{1}{2}\sum_{k,p,q,\sigma,\nu}
  \eta_k\phi^+(k)a^*_{p-k,\sigma}a^*_{q+k,\nu}a_{q,\nu}a_{p,\sigma}\chi_{p-k,q+k\notin
  B_F}\chi_{p,q\in B_F}-h.c.
\end{align*}
where $\{\eta_k\}$ are defined through the 3D s-wave scattering equation with Neumann boundary condition in Section \ref{scattering eqn sec}:
\begin{equation*}
  \left\{\begin{aligned}
  &(-\Delta_{x}+\frac{1}{2}v)f_\ell=\lambda_\ell f_\ell,\quad \vert
  x\vert\leq \ell L,\\
  &\left.\frac{\partial f_\ell}{\partial \mathbf{n}}\right\vert_{\vert
   x\vert=\ell L}=0,\quad \left.f_\ell\right\vert_{\vert
   x\vert=\ell L}=1.
  \end{aligned}\right.
\end{equation*}
with
\begin{equation}\label{lL1}
  \ell L=\tilde{\rho}_0^{-\frac{1}{3}+\alpha_3}.
\end{equation}
for some $0<\alpha_3<\frac{1}{3}$. $\phi^+$ is the smooth cut-off that discard frequencies below the frequency level $\tilde{\rho}_0^{\frac{1}{3}-\alpha_2}$ for some $\alpha_2>0$. The renormalization $e^{B}$
extracts the information of $\mathcal{V}_{21}$ as defined in (\ref{split V detailed}).
The $B$ operator is bosonic (particles grouped in pairs) and its similar
version has been used in many aforementioned work.

\par For the second one, the cubic renormalization, which is crucial to reaching the second order approximation, we let
\begin{equation*}
  B^\prime=\sum_{k,p,q,\sigma,\nu}
  \eta_k\phi^+(k)\zeta^-(k)a^*_{p-k,\sigma}a^*_{q+k,\nu}a_{q,\nu}a_{p,\sigma}\chi_{p-k,q+k\notin B_F}\chi_{q\in A_{F,\delta_4}}
  \chi_{p\in B_F}-h.c.
\end{equation*}
where
\begin{equation*}
  A_{F,\delta_4}=\{k\in(2\pi/L)\mathbb{Z}^3,\,k_F<\vert k\vert\leq k_F+\tilde{\mu}^{\frac{1}{2}}\tilde{\rho}_0^{\delta_4}\}
\end{equation*}
for $\frac{1}{3}>\delta_4>0$. Notice here $\zeta^-$ is also a smooth cut-off that discard frequencies higher than the frequency level $\tilde{\rho}_0^{\frac{1}{3}-\beta_1}$ for some $\beta_1>\frac{1}{3}$. In our
definition, $\phi ^{+}$ removes the \textquotedblleft
infrared\textquotedblright $\,$ frequency, and $\zeta ^{-}$ remove the
\textquotedblleft ultra-violet\textquotedblright $\,$ frequency, to avoid
both infrared and ultra-violet catastrophes cause by the large scale $L$. The
mathematical effect of $e^{B^{\prime }}$ is to exact the energy from the
relative low frequency part of $\mathcal{V}_{3}$ as defined in (\ref{split V detailed}%
).

\par Here, notice, in the definition of $B^{\prime }$ that, one of the four
particles is located inside the Fermi ball, and three are outside, while
one of the three outside ones is very close to the surface of the Fermi ball.
On the one hand, this operator is purely fermionic and cannot be bosonized.
On the other hand, it describes a very special type of energy or
correlation. We 1st produced this renormalization in \cite{WJH}.



\par Finally, the third one $e^{\tilde{B}}$ also known as a form of the
Bogoliubov transformation is defined by
\begin{equation*}
  \tilde{B}=\frac{1}{2}\sum_{k,p,q,\sigma,\nu}
  \xi_{k,q,p}^{\nu,\sigma}a^*_{p-k,\sigma}a^*_{q+k,\nu}a_{q,\nu}a_{p,\sigma}\chi_{p-k,q+k\notin B_F}\chi_{p,q\in B_F}-h.c.
\end{equation*}
with
\begin{equation*}
  \xi_{k,q,p}^{\nu,\sigma}=\frac{-\big(L^{-3}W_k\tilde{\zeta}^-(k)
  +\eta_kk(q-p)\phi^+(k)\tilde{\zeta}^-(k)\big)}
  {\frac{1}{2}\big(\vert q+k\vert^2+\vert p-k\vert^2-\vert q\vert^2-\vert p\vert^2\big)
  +\epsilon_0}
  \chi_{p-k,q+k\notin B_F}\chi_{p,q\in B_F}.
\end{equation*}
Here $W_k$ stands for the residue of the scattering cancellation, and $\eta_kk(q-p)$ is the kinetic residue excluded for fermions, that helps neutralize the second order to the correct form. We further more use the cut-off $\tilde{\zeta}^-$ to leave out frequencies higher than the frequency level $\tilde{\rho}_0^{\frac{1}{3}-\alpha_6}$ for some $\alpha_6>0$. $\epsilon_0$ is a small but positive (when $L$ tends to infinity) gap to avoid logarithmic growth in $L$, and we let $\epsilon_0=\tilde{\rho}_0^2$.\footnote{%
Similar choice of $\epsilon _{0}$ can also be seen in \cite%
{2024huangyangformulalowdensityfermi}.} The operator $e^{\tilde{B}}$ deals
with the low frequency part of $\mathcal{V}_{21}^{\prime }$ which is a residue of $%
e^{B}$ like usual, and also takes care of an energy at a suitable intermediate
frequency, defined in (\ref{define V_21' and Omega tilde}).

\par We layout this proof as follows. In Section \ref{coeff}, we define precisely the coefficients used in the choice of renormalizations, and provide useful estimates concerning these coefficients. In Section \ref{Fock space}, we give a brief introduction to the Fock space and collect useful tools and inequalities used in our proof. In Section \ref{a-prior}, we give some a-priori estimates concerning the relative entropy, and have an aforehand analysis to the Hamiltonian. In Section \ref{renormal}, we collect successively the result of the three renormalizations, and we will prove them in Sections \ref{qua}-\ref{bog}. In Section \ref{main}, we prove our main theorem, Theorem \ref{core}.

\section{Coefficients of Renormalizations}\label{coeff}
\subsection{Scattering Equation}\label{scattering eqn sec}
\
\par We choose
\begin{equation}\label{lL2}
  \ell L=\tilde{\rho}_0^{-\frac{1}{3}+\alpha_3}
\end{equation}
for some $0<\alpha_3<\frac{1}{3}$. Since we are considering the dilute limit, we can always assume $\ell L>C$ for some universal constant $C$. Consider the following ground state energy equation with Neumann boundary
condition for some parameter $\ell\in(0,\frac{1}{2})$
\begin{equation}\label{asymptotic energy pde on the ball}
  \left\{\begin{aligned}
  &(-\Delta_{x}+\frac{1}{2}v)f_\ell=\lambda_\ell f_\ell,\quad \vert
  x\vert\leq \ell L,\\
  &\left.\frac{\partial f_\ell}{\partial \mathbf{n}}\right\vert_{\vert
   x\vert=\ell L}=0,\quad \left.f_\ell\right\vert_{\vert
   x\vert=\ell L}=1.
  \end{aligned}\right.
\end{equation}
Equation (\ref{asymptotic energy pde on the ball}) has been thoroughly
analyzed by many people in many works, and one can consult
\cite{dy2006,2018Bogoliubov,boccatoBrenCena2020optimal} for details. We define
$w_{\ell}=1-f_\ell$, and make constant extensions to both of $f_\ell$
and $w_\ell$ outside of the 3D closed ball $\overline{B}_{\ell L}$ such
that $f_\ell\in H^2_{loc}(\mathbb{R}^3)$ and $w_\ell\in H^2(\mathbb{R}^3)$.

Regarding ${w}_\ell$ as a periodic function on the torus $\Lambda_L$, we
observe that it satisfies the equation
\begin{equation}\label{asymptotic energy pde on the torus}
  \Big(-\Delta_{x}+\frac{1}{2}v(x)\Big)
 {w}_\ell(x)
  =\frac{1}{2}v(x)
-\lambda_\ell\big(1-{w}_\ell(x)\big)
  \chi_{\ell L}(x),\quad x\in\Lambda_L.
\end{equation}
Here $\chi_{\ell L}$ is the characteristic function of the closed 3D ball
$\overline{B}_{\ell L}$, and we choose suitable $\ell\in(0,\frac{1}{2})$ so that
$\overline{B}_{\ell L}\subset\Lambda_{L}$. Standard elliptic equation theory grants the
uniqueness of solution to equation (\ref{asymptotic energy pde on the torus}). By Fourier
transform, (\ref{asymptotic energy pde on the torus}) is equivalent to its discrete version
 \begin{equation}\label{discrete asymptotic energy pde on the torus}
\begin{aligned}
\left\vert p\right\vert^2{w}_{\ell,p}+\frac{1}{2L^3}
  \sum_{q\in(2\pi/L)\mathbb{Z}^3}\hat{v}_{p-q}{w}_{\ell,q}
=\frac{1}{2L^{\frac{3}{2}}}
  \hat{v}_p+{\lambda_\ell}
  {w}_{\ell,p}-\frac{\lambda_\ell}{L^{\frac{3}{2}}}\widehat{\chi}_{\ell L}(p),
\end{aligned}
\end{equation}
where $p$ is an arbitrary 3D vector in $(2\pi/L)\mathbb{Z}^3$, $\hat{v}_p$ is defined by
\begin{equation}\label{define v_k scatt}
  \hat{v}_p=\int_{\mathbb{R}^3}v(x)e^{-ipx}dx,
\end{equation}
 and
\begin{equation*}
  {w}_{\ell,p}=\frac{1}{L^{\frac{3}{2}}}\int_{\Lambda_L}{w}_{\ell}(x)
  e^{-ipx}dx,\quad
\widehat{\chi}_{\ell L}
  (p)
=\int_{\Lambda_L}\chi_{\ell L}(x)
   e^{-ipx}dx.
\end{equation*}
The required properties of $f_\ell$ and $w_\ell$ are collected in the next lemma.

\begin{lemma}[\cite{me}, Lemma 3.1]
\label{fundamental est of v,w,lambda}
Let $v$ be a smooth, radially-symmetric, compactly supported and non-negative
function with scattering length $\mathfrak{a}_0$. Let $f_\ell$, $\lambda_\ell$ and $w_\ell$ be defined as above. Then for parameter $\ell\in
(0,\frac{1}{2})$ satisfying $\ell L>C$ for a large universal constant C, there exist some universal constants, also denoted as C, independent of $L$ and $\ell$, such that following estimates hold true for $\ell L$ large enough.
\begin{enumerate}[label=(\arabic*)]
  \item The asymptotic estimate of ground state energy $\lambda_\ell$ is
  \begin{equation}\label{est of lambda_l}
          \left\vert\lambda_\ell-\frac{3\mathfrak{a}_0}{(\ell L)^3}
          \left(1+\frac{9}{5}\frac{\mathfrak{a}_0}{\ell L}\right)\right\vert\leq
          \frac{C\mathfrak{a}_0^3}{(\ell L)^5}.
  \end{equation}
  \item $f_\ell$ is radially symmetric and smooth away from the boundary of
      $B_{{\ell L}}$ and there is a certain constant $0<c<1$ independent of
      $a$ and $\ell$ such that
  \begin{equation}\label{est of f_l}
    0<c\leq f_\ell(x)\leq1.
  \end{equation}
  Moreover, for any integer $0\leq k\leq3$
  \begin{equation}\label{est of w_l and grad w_l}
        \vert D_{x}^kw_\ell(x)\vert\leq\frac{C}{1+\vert
         x\vert^{k+1}}.
  \end{equation}
  \item We have
  \begin{equation}\label{est of int vf_l}
          \left\vert\int_{\mathbb{R}^3}v(x)f_\ell(x)dx
          -8\pi\mathfrak{a}_0\left(1+\frac{3}{2}
          \frac{\mathfrak{a}_0}{\ell L}\right)\right\vert\leq
          \frac{C\mathfrak{a}_0^3}{(\ell L)^2},
  \end{equation}
  and
        \begin{equation}\label{est of int w_l}
          \left\vert\frac{1}{(\ell L)^2}\int_{\mathbb{R}^3}
          w_\ell(x)dx-\frac{2}{5}\pi\mathfrak{a}_0\right\vert\leq
          \frac{C\mathfrak{a}_0^2}{\ell L}.
        \end{equation}
  \item For all $p\in(2\pi/L)\mathbb{Z}^3\backslash\{0\}$
  \begin{equation}\label{est of w_l,p}
          \vert{w}_{\ell,p}\vert\leq\frac{C}{L^{\frac{3}{2}}\vert p\vert^2}
  \end{equation}
\end{enumerate}
\end{lemma}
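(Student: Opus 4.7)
The plan is to treat (\ref{asymptotic energy pde on the ball}) as a compact perturbation of the whole-space $s$-wave scattering problem. First I would establish existence, positivity, radial symmetry, and uniqueness of the ground state pair $(f_\ell,\lambda_\ell)$ through the variational principle applied to the self-adjoint operator $-\Delta+\tfrac{1}{2}v$ on $B_{\ell L}$ with Neumann boundary condition, combined with a Perron–Frobenius argument that forces the ground state to be of one sign, so that after normalization $f_\ell|_{\partial B_{\ell L}}=1$ we have $f_\ell>0$. The natural comparison object is the whole-space scattering solution $f_\infty$ solving $(-\Delta+\tfrac{1}{2}v)f_\infty=0$ with $f_\infty(x)=1-\mathfrak{a}_0/|x|$ outside $\supp v$; this is the definition of the scattering length $\mathfrak{a}_0$.

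For item (1), I would start from the identity $\tfrac{1}{2}\int_{B_{\ell L}} v f_\ell =\lambda_\ell\int_{B_{\ell L}} f_\ell$, obtained by integrating (\ref{asymptotic energy pde on the ball}) and killing the Laplacian through the Neumann condition. The right-hand integral is close to $\tfrac{4}{3}\pi(\ell L)^3$ since $f_\ell\to 1$ away from $\supp v$, while the left-hand integral is close to $\int v f_\infty=8\pi\mathfrak{a}_0$; iterating this identity by feeding in the pointwise bounds of (2) yields the leading term $3\mathfrak{a}_0/(\ell L)^3$ together with the $\tfrac{9}{5}\mathfrak{a}_0/(\ell L)$ correction and the $\mathfrak{a}_0^3/(\ell L)^5$ remainder. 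For (2), outside $\supp v$ the radial ODE satisfied by $w_\ell$ is an inhomogeneous Helmholtz-type equation which I would solve explicitly on the annulus and match against the asymptotic $\mathfrak{a}_0/|x|+O(|x|^{-2})$ of $f_\infty$, producing $|w_\ell(x)|\lesssim 1/(1+|x|)$; interior elliptic regularity then upgrades this to derivative bounds up to $k=3$, and the lower bound $f_\ell\geq c$ follows because $w_\ell\leq C/(\ell L)<1-c$ for $\ell L$ sufficiently large, with the neighborhood of the origin handled by the smoothness and boundedness of the equation.

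The integral estimates in (3) then fall out immediately: the first estimate is a direct consequence of the identity $\tfrac{1}{2}\int v f_\ell=\lambda_\ell\int f_\ell$ combined with (\ref{est of lambda_l}) and the volume expansion, while the second is obtained by integrating $\mathfrak{a}_0/|x|$ against the radial weight on $B_{\ell L}$ and using the pointwise tail bound from (2) on the error. For (4), I would rewrite (\ref{discrete asymptotic energy pde on the torus}) as
\begin{equation*}
|p|^2 w_{\ell,p}=\tfrac{1}{2}L^{-3/2}\hat{v}_p+\lambda_\ell w_{\ell,p}-\lambda_\ell L^{-3/2}\widehat{\chi}_{\ell L}(p)-\tfrac{1}{2L^3}\sum_q\hat{v}_{p-q}w_{\ell,q},
\end{equation*}
and control each term: the first three are $O(L^{-3/2})$ using $\|v\|_{L^1}\lesssim 1$, $\lambda_\ell(\ell L)^3\sim\mathfrak{a}_0$, and $|\widehat{\chi}_{\ell L}(p)|\lesssim(\ell L)^3/(1+|p|\ell L)^2$; the convolution term is absorbed via Cauchy–Schwarz using the $L^2$ bound on $w_\ell$ coming from (\ref{est of w_l and grad w_l}).

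The hardest part is obtaining the sharp numerical constants $\tfrac{9}{5}$ and $\tfrac{3}{2}$ in (\ref{est of lambda_l}) and (\ref{est of int vf_l}), which are not extracted from any soft argument but rather from the second term in the multipole expansion of the scattering solution at the boundary $|x|=\ell L$. A careful radial ODE analysis on the annulus between $\supp v$ and $\partial B_{\ell L}$ is required to identify the precise constants of integration matching the Neumann data; once these are fixed, the remaining higher-order error bookkeeping is standard perturbation theory.
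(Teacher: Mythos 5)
The paper itself contains no proof of this lemma: it is imported verbatim from \cite{me} (Lemma 3.1 there), with the underlying analysis of the Neumann problem (\ref{asymptotic energy pde on the ball}) attributed to the earlier scattering-equation literature cited just above it. So there is no in-paper argument to compare against; what can be said is that your sketch follows the same standard route that those references use — variational/Perron--Frobenius construction of the Neumann ground state, the integrated identity $\tfrac12\int vf_\ell=\lambda_\ell\int f_\ell$ (Neumann condition killing the Laplacian), explicit radial analysis of the equation in the annulus between $\supp v$ and $\partial B_{\ell L}$ matched to the boundary data to extract the constants $\tfrac95$, $\tfrac32$, $\tfrac25\pi$, and the Fourier-side rearrangement of (\ref{discrete asymptotic energy pde on the torus}) to get (\ref{est of w_l,p}). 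Your consistency check is also coherent: given the expansions in (\ref{est of int vf_l})--(\ref{est of int w_l}), the identity reproduces exactly the $\tfrac95$ in (\ref{est of lambda_l}).

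Three points, however, need repair. First, the lower bound $f_\ell\ge c$ in (\ref{est of f_l}) cannot be deduced from ``$w_\ell\le C/(\ell L)$'': that smallness holds only away from $\supp v$, while near the origin $w_\ell$ is of order one; the correct argument compares $f_\ell$ with the whole-space zero-energy solution $f_\infty$ (or uses convexity of $r\mapsto rf_\ell(r)$), noting $f_\infty>0$ and $\vert f_\ell-f_\infty\vert=O(\mathfrak{a}_0/\ell L)$ uniformly. Second, the decay rates in (\ref{est of w_l and grad w_l}) for $k\ge1$ do not come from interior elliptic regularity (which only gives boundedness on compacts); they come from differentiating the explicit exterior radial solution, with elliptic regularity needed only on $\supp v$. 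Third, in item (4) the convolution term should not be bounded by Cauchy--Schwarz against the global $L^2$ norm of $w_\ell$, since $\Vert\hat v\Vert_{\ell^2}\sim L^{3/2}$ and $\Vert w_\ell\Vert_{L^2}\sim(\ell L)^{1/2}$ only give $O(L^{-3/2}(\ell L)^{1/2})$; instead observe that $\tfrac{1}{2L^3}\sum_q\hat v_{p-q}w_{\ell,q}=\tfrac12 L^{-3/2}\int v\,w_\ell\,e^{-ipx}dx$ and use $\Vert vw_\ell\Vert_{L^1}\le\Vert v\Vert_{L^1}$ (since $0\le w_\ell\le1$ and $v$ is compactly supported), which yields the required uniform $O(L^{-3/2})$. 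Finally, be aware that as literally written your derivations of (1) and of the first half of (3) lean on each other; the circularity disappears once everything is anchored in the explicit annulus solution, as your last paragraph indicates, but that matching computation — which is where $\tfrac95$, $\tfrac32$ and $\tfrac25\pi$ actually originate — is the substantive content of the lemma and is deferred rather than carried out in your proposal.
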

\begin{remark}\label{remark 3d scat eqn}
The construction of $w_\ell$ can not ensure smoothness on the boundary of
$B_{{\ell}{L}}$, but we still use the notation $D_{x}^kw_\ell$ to
represent the $k$-th derivative of $w_\ell$ away from the boundary of
$B_{{\ell}{L}}$. Moreover, since $w_\ell$ is supported on
$B_{{\ell}{L}}$, the integrals concerning $D^k_{x}w_\ell$ always
mean integrating inside of $B_{{\ell}{L}}$ unless otherwise specified.
\end{remark}

\par With Lemma \ref{fundamental est of v,w,lambda}, we thereafter define for all
$p\in(2\pi/L^3)\mathbb{Z}^3$
\begin{equation}\label{eta_p}
  \eta_p= -\frac{{w}_{\ell,p}}{L^{\frac{3}{2}}}.
\end{equation}
(\ref{discrete asymptotic energy pde on the torus}) then reads
\begin{equation}\label{eqn of eta_p}
\begin{aligned}
  \left\vert p\right\vert^2\eta_p+\frac{1}{2L^3}
  \sum_{q\in(2\pi/L)\mathbb{Z}^3}\hat{v}_{p-q}\eta_q
=-\frac{1}{2L^3}
  \hat{v}_p+{\lambda_\ell}
 \eta_p+\frac{\lambda_\ell}{L^3}\widehat{\chi}_{\ell L}(p).
\end{aligned}
\end{equation}
Since $w_\ell$ is real-valued and radially symmetric, we have
$\eta_p=\eta_{-p}=\overline{\eta_p}$. Moreover, we let $\eta\in L^2(\Lambda_L)$ be the periodic function with Fourier coefficients $\eta_p$, that is
\begin{equation}\label{eta(x)}
  \eta(x)=\frac{1}{L^{\frac{3}{2}}}\sum_{p\in (2\pi/L)\mathbb{Z}^3}\eta_p
  e^{ip\cdot x}=-\frac{1}{L^{\frac{3}{2}}}w_{\ell}(x).
\end{equation}
Then with (\ref{est of w_l and grad w_l}), we deduce
\begin{equation}\label{est of eta and eta_perp}
\begin{aligned}
  \Vert\eta\Vert_2^2=\frac{1}{L^3}
  \int_{\vert x\vert\leq{\ell L}}\vert
  w_{\ell}(x)\vert^2dx
\leq
  \frac{1}{L^3}\int_{\vert x\vert\leq{\ell L}}\frac{C}{\vert
  x\vert^2} dx
  =C\ell L^{-2}.
\end{aligned}
\end{equation}
We can bound $\eta_p$ for all $p\in(2\pi/L)\mathbb{Z}^3$ in the same way
\begin{equation}\label{est of eta_0}
  \vert\eta_p\vert\leq\frac{1}{L^{3}}\int_{\vert x\vert\leq
  \ell L}{w}_{\ell}(x)dx
  \leq\frac{1}{L^{3}}\int_{\vert x\vert\leq
  \ell L}\frac{C}{\vert
  x\vert} dx
  \leq C\ell^2L^{-1},
\end{equation}
and similarly,
\begin{equation}\label{est of eta L1}
  \Vert\eta\Vert_1\leq C\ell^2L^{\frac{1}{2}}.
\end{equation}
Moreover,
\begin{equation}\label{est of Deta D^2eta}
  \Vert\nabla\eta\Vert_1\leq C\ell L^{-\frac{1}{2}},\quad
  \Vert\nabla\eta\Vert_2,\Vert\Delta\eta\Vert_2,\Vert\Delta\eta\Vert_1\leq CL^{-\frac{3}{2}}.
\end{equation}
Noticing that $f_{\ell}=1-w_\ell$, we deduce, via Plancherel's equality,
\begin{equation}\label{discrete int vf}
  \int_{\mathbb{R}^3}v(x)f_{\ell}(x)dx=\hat{v}_0+\sum_{p\in(2\pi/L)\mathbb{Z}^3}\hat{v}_p\eta_p.
\end{equation}

\par We also introduce the notation for $x\in\Lambda_L$
\begin{equation}\label{define W}
  W(x)=\lambda_\ell\chi_{\ell L}(x)\big(1-w_\ell(x)\big),
\end{equation}
and for $p\in(2\pi/L)\mathbb{Z}^3$
\begin{equation}\label{define W_p}
  W_p=\lambda_\ell\big(\hat{\chi}_{\ell L}(p)+L^3\eta_p\big),
\end{equation}
such that
\begin{equation}\label{relation W and W_p}
  W_p=\int_{\mathbb{R}^3}W(x)e^{-ipx}dx.
\end{equation}
Combining (\ref{define W_p}) with (\ref{eqn of eta_p}), we find that
\begin{equation}\label{Wdiscrete asymptotic energy pde on the torus}
\begin{aligned}
  \left\vert p\right\vert^2\eta_p+\frac{1}{2L^3}
  \sum_{q\in(2\pi/L)\mathbb{Z}^3}\hat{v}_{p-q}\eta_q
=-\frac{1}{2L^3}
  \hat{v}_p+\frac{1}{L^3}W_p.
\end{aligned}
\end{equation}
Since $\hat{\chi}_{\ell L}(p)\leq C\ell L\vert p\vert^{-2}$, then combining this estimate with (\ref{est of lambda_l}), (\ref{est of w_l,p}) and (\ref{eta_p}), we have
\begin{equation}\label{est of W_p p^2}
  \vert W_p\vert\leq\frac{C}{(\ell L)^2\vert p\vert^2}.
\end{equation}
It is easy to check that
\begin{equation}\label{L1 L2 est W}
  \vert W_p\vert,\Vert W\Vert_1\leq C,\quad \Vert W\Vert_2\leq C(\ell L)^{-\frac{3}{2}}.
\end{equation}

\par We introduce a  radially-symmetric cut-off function $\vartheta\in C^\infty_c(\mathbb{R}^3;\mathbb{R})$, such that $0\leq\vartheta\leq1$, and
\begin{equation}\label{cutoff theta}
  \vartheta(x)=\left\{
  \begin{aligned}
  &1,\quad\vert x\vert\leq \frac{1}{2}\tilde{\rho}_0^{-\frac{1}{3}}\\
  &0,\quad\vert x\vert\geq \tilde{\rho}_0^{-\frac{1}{3}}
  \end{aligned}\right.
\end{equation}
Since $W$ is supported inside the closed ball $B_{\ell L}$, with
$\ell L=\tilde{\rho}_0^{-\frac{1}{3}+\alpha_3}$ for $\alpha_3>0$, we know that
$W(x)=W(x)\vartheta(x)$. If we set
\begin{equation}\label{vartheta_k}
  \vartheta_k=\frac{1}{L^{\frac{3}{2}}}\int_{\Lambda_L}\vartheta(x)e^{-ikx}dx.
\end{equation}
Then for any $k\in(2\pi/L)\mathbb{Z}^3$
\begin{equation}\label{Wtheta}
  W_k=\frac{1}{L^{\frac{3}{2}}}\sum_{m\in(2\pi/L)\mathbb{Z}^3}W_{k-m}\vartheta_m.
\end{equation}
We can always require
\begin{equation}\label{theta norm}
  \begin{aligned}
  &\Vert\vartheta\Vert_\infty\leq 1,\quad\Vert\vartheta\Vert_2\leq C\tilde{\rho}_0^
  {-\frac{1}{2}},\quad\Vert\vartheta\Vert_1\leq C\tilde{\rho}_0^{-1}\\
  &\Vert\nabla\vartheta\Vert_\infty\leq C\tilde{\rho}_0^{\frac{1}{3}},\quad
  \Vert\nabla\vartheta\Vert_2\leq C\tilde{\rho}_0^{-\frac{1}{6}}
  \end{aligned}
\end{equation}
We define for $x\in\Lambda_L$
\begin{equation}\label{define U}
  U(x)=\frac{1}{L^3}\sum_{k\neq0}\frac{W_kk}{\vert k\vert^2}e^{ikx}.
\end{equation}
Then using (\ref{est of W_p p^2}) and (\ref{L1 L2 est W}), we deduce
\begin{equation}\label{U L2}
  \Vert U\Vert_2\leq\Big( \frac{1}{L^3}\sum_{k\neq0}\frac{\vert W_k\vert^2}{\vert k\vert^2}
  \Big)^{\frac{1}{2}}
  \leq C\tilde{\rho}_0^{\frac{1}{6}-\frac{\alpha_3}{2}}.
\end{equation}

\subsection{Smooth Cut-off Functions}\label{smooth cutoff sec}
\
\par We introduce the radially-symmetric cut-off function $\phi^-\in C^\infty_c(\mathbb{R}^3;\mathbb{R})$, such that $0\leq\phi^-\leq1$, and for some constant $\alpha_2>0$ to be determined,
\begin{equation}\label{cutoff}
  \phi^-(p)=\left\{
  \begin{aligned}
  &1,\quad \vert p\vert<\frac{3}{2}\tilde{\mu}^{\frac{1}{2}}\tilde{\rho}_0^{-\alpha_2}\\
  &0,\quad \vert p\vert>2\tilde{\mu}^{\frac{1}{2}}\tilde{\rho}_0^{-\alpha_2}
  \end{aligned}
  \right.
\end{equation}
We can always assume for integer $k\geq0$,
\begin{equation}\label{derivatives bound cutoff phi^-}
  \big\vert\partial^{k}_r\phi^-(r)\big\vert\leq C
  \tilde{\rho}_0^{-k(\frac{1}{3}-\alpha_2)}
\end{equation}
for some universal constant $C$. We also let $\phi^+=1-\phi^-$.

\par For $x\in\mathbb{R}^3$, we let
\begin{equation}\label{define f}
  f(x)=\int_{\mathbb{R}^3}\phi^-(2\pi y)e^{2\pi ixy}dy=\frac{1}{(2\pi)^3}
  \big(\mathfrak{F}^{-1}\phi^-\big)\Big(\frac{x}{2\pi}\Big),
\end{equation}
where the Fourier transform $\mathfrak{F}$ is given by
\begin{equation}\label{Fourier}
  \mathfrak{F}f(\xi)=\int_{\mathbb{R}^3}f(x)e^{-2\pi x\xi}dx.
\end{equation}
It is easy to check that for $p\in (2\pi/L)\mathbb{Z}^3$, $\mathfrak{F}f\big(p/(2\pi)\big)
=\phi^-(p)$. Moreover, since $\phi^-$ is compactly supported, we have for any positive integer $n$,
\begin{equation}\label{bound f}
  \vert f(x)\vert\leq\frac{C(n,\alpha_2)\tilde{\rho}_0^d}{1+\vert x\vert^n}
\end{equation}
for some constant $C(n,\alpha_2)$ depends only on $n$ and $\alpha_2$ and $d\in\mathbb{R}$ is some power of $\tilde{\rho}_0$. In particular, since
\begin{equation}\label{fourier radius}
  \mathfrak{F}\phi^-(\xi)=\frac{2\pi}{\vert\xi\vert}\int_{0}^{\infty}\phi^-(r)\sin\big(
  2\pi r\vert\xi\vert\big)rdr,
\end{equation}
by (\ref{derivatives bound cutoff phi^-}), (\ref{define f}) and integration by parts, it can be verified that for $x\neq0$,
\begin{equation}\label{bound f detailed}
  \vert f(x)\vert\leq\frac{C\tilde{\rho}_0^{\frac{1}{3}-\alpha_2}}{\vert x\vert^2},
  \quad \vert f(x)\vert\leq\frac{C\tilde{\rho}_0^{\alpha_2-\frac{1}{3}}}{\vert x\vert^4}
\end{equation}
for some universal constant $C$. Therefore, it is east to find
\begin{equation}\label{f L1}
  \Vert f\Vert_{L^1(\mathbb{R}^3)}\leq C.
\end{equation}
We also let
\begin{equation}\label{defind F}
  F(x)=\sum_{k\in L\mathbb{Z}^3}f(x+k).
\end{equation}
Obviously $F$ is a smooth periodic function on $\Lambda_L$. It is known that
\begin{equation}\label{cal F}
  F(x)=\frac{1}{L^3}\sum_{p\in (2\pi/L)\mathbb{Z}^3}\phi^-(p)e^{ipx},
\end{equation}
and we can using (\ref{f L1}) to bound 
begin{equation}\label{L1 F}
\begin{equation}
\begin{aligned}
  \Vert F\Vert_{L^1(\Lambda_L)}\leq 
  \Vert f\Vert_{L^1(\mathbb{R}^3)}\leq C.
\end{aligned}
\end{equation}

\par We choose for $x\in\Lambda_L$,
\begin{equation}\label{etatilde}
  {\eta}_{\phi^+}(x) =\frac{1}{L^{\frac{3}{2}}}\sum_{p\in (2\pi/L)\mathbb{Z}^3}\eta_p\phi^+(p)
  e^{ipx}.
\end{equation}
Therefore,
\begin{equation}\label{eta-etatilde}
 {\eta}^{\phi^-}\coloneqq \eta(x)-{\eta}_{\phi^+}(x)=(\eta\ast F) (x).
\end{equation}
By Young's inequality, we know that estimates (\ref{est of eta and eta_perp}), (\ref{est of eta L1}) and (\ref{est of Deta D^2eta}) also hold when replacing $\eta$ by ${\eta}_{\phi^+}$ or ${\eta}^{\phi^-}$.
Moreover, using (\ref{est of eta_0}), we have the following more subtle bound
\begin{equation}\label{bound eta-etatilde Linfty}
  \Vert\eta^{\phi^-}\Vert_{\infty}\leq
  L^{-\frac{3}{2}}
  \sum_{\vert p\vert\leq \tilde{\rho}_0^{1/3-\alpha_2}}\vert\eta_p\vert\leq CL^{-\frac{3}{2}}
  \tilde{\rho}_0^{\frac{1}{3}-3\alpha_2+2\alpha_3},
\end{equation}
provided $\ell L=\tilde{\rho}_{0}^{-\frac{1}{3}+\alpha_3}$ for some $0<\alpha_3<\frac{1}{3}$.

\par The above analysis can be extended to the $\alpha_2\leq0$ case, and thus can be applied to a more general case:
\begin{lemma}\label{cutoff lemma}
  Let $\phi_1^-,\phi_2^-\in C^{\infty}_c(\mathbb{R}^3;[0,1])$ be two smooth radially-symmetric cut-off functions, such that for some universal constant $C_j>c_j>0$, $j=1,2$ and $+\infty\geq\kappa_1>\kappa_2\geq-\infty$, they satisfy
  \begin{equation}\label{cutofflemmaphi1and2}
  \phi^-_1(p)=\left\{
  \begin{aligned}
  &1,\quad \vert p\vert<c_1\tilde{\rho}_0^{\kappa_1}\\
  &0,\quad \vert p\vert>C_1\tilde{\rho}_0^{\kappa_1}
  \end{aligned}
  \right.\quad
  \phi^-_2(p)=\left\{
  \begin{aligned}
  &1,\quad \vert p\vert<c_2\tilde{\rho}_0^{\kappa_2}\\
  &0,\quad \vert p\vert>C_2\tilde{\rho}_0^{\kappa_2}
  \end{aligned}
  \right.
\end{equation}
We also let $\phi_j^+=1-\phi_j^-$ for $j=1,2$. When $\kappa_1=+\infty$ we let $\phi^-_1\equiv0$, and when $\kappa_2=-\infty$ we let $\phi_2^-\equiv 1$. if $+\infty\geq\kappa_1>\kappa_2>-\infty$, we let
\begin{equation}\label{Flemma}
  F_{\phi_1^+}^{\phi_2^-}(x)=\frac{1}{L^3}\sum_{p\in(2\pi/L)\mathbb{Z}}
  \phi^+_1(p)\phi^-_2(p)e^{ipx}.
\end{equation}
Then when $\tilde{\rho}_0$ is small enough (independent of $L$), and $L$ is large enough (depending on $\tilde{\rho}_0$),
\begin{equation}\label{L1norm Flemma}
  \Vert F_{\phi_1^+}^{\phi_2^-}\Vert_1\leq C.
\end{equation}
Moreover, for any $\psi\in L^2(\Lambda_L)$, such that
\begin{equation*}
  \psi_p=\frac{1}{L^{\frac{3}{2}}}\int_{\Lambda_L}\psi(x)e^{-ipx}dx,
\end{equation*}
 and $+\infty\geq\kappa_1>\kappa_2\geq-\infty$, we take
\begin{equation}\label{psilemma}
  \psi_{\phi_1^+}^{\phi_2^-}=\frac{1}{L^{\frac{3}{2}}}\sum_p \psi_p\phi^+_1(p)\phi^-_2(p)e^{ipx},
\end{equation}
then we have for $1\leq m\leq\infty$,
\begin{equation}\label{psi lemma L1 and L2 norm}
  \Vert \psi_{\phi_1^+}^{\phi_2^-}\Vert_{L^m}\leq C\Vert \psi\Vert_{L^m}.
\end{equation}
In particular, we let for $t\in(0,\infty)$:
\begin{equation}\label{E^tlemma}
  E^t(x)=\frac{1}{L^3}\sum_{p\in(2\pi/L)\mathbb{Z}}
  \phi^+_1(p)\phi^-_2(p)e^{-p^2t}e^{ipx}.
\end{equation}
we have
\begin{equation}\label{E^tbound}
  \sup_{t\in(0,\infty)}\Vert E^t\Vert_{L^1(\Lambda_L)}\leq C t^{-\frac{3}{2}}\Vert 
  \exp(-x^2/ct)\Vert_{L^1_x(\mathbb{R}^3)}\leq C.
\end{equation}
\end{lemma}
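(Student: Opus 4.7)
The plan is to deduce all three inequalities from one underlying estimate: a uniform-in-$\tilde{\rho}_0$ bound on the $L^1(\mathbb{R}^3)$ norm of the inverse Fourier transform of the symbol $m(p) := \phi_1^+(p)\phi_2^-(p)$. Once that is in place, Poisson summation on $\Lambda_L$ lifts everything to the torus exactly as was already done for $F$ in (\ref{define f})-(\ref{L1 F}).

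The first and only nontrivial step is the uniform $L^1$-multiplier bound. I would exploit the scale-invariance of the inverse Fourier transform: for a Schwartz bump $\Phi$ and any scale $R>0$, the rescaling $\Phi_R(p):=\Phi(p/R)$ satisfies $\mathcal{F}^{-1}\Phi_R(x) = R^3(\mathcal{F}^{-1}\Phi)(Rx)$, and hence
\begin{equation*}
\|\mathcal{F}^{-1}\Phi_R\|_{L^1(\mathbb{R}^3)} = \|\mathcal{F}^{-1}\Phi\|_{L^1(\mathbb{R}^3)}
\end{equation*}
by the change of variables $y=Rx$. Since each $\phi_j^-$ is a standard smooth bump at scale $\tilde{\rho}_0^{\kappa_j}$, with derivative bounds of the form $|\partial^\alpha\phi_j^-|\leq C_\alpha \tilde{\rho}_0^{-|\alpha|\kappa_j}$ (entirely parallel to (\ref{derivatives bound cutoff phi^-})), this gives $\|\mathcal{F}^{-1}\phi_j^-\|_{L^1(\mathbb{R}^3)}\leq C$ uniformly in $\tilde{\rho}_0$. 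Because $L^1(\mathbb{R}^3)$ is a convolution algebra, the decomposition $m=\phi_2^- - \phi_1^-\phi_2^-$ then yields $\|\mathcal{F}^{-1}m\|_{L^1(\mathbb{R}^3)}\leq C$, again uniformly in $\tilde{\rho}_0$.

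Granting that, Poisson summation gives the torus identity
\begin{equation*}
F_{\phi_1^+}^{\phi_2^-}(x) = \frac{1}{L^3}\sum_{p\in(2\pi/L)\mathbb{Z}^3} m(p)\,e^{ipx} = \sum_{k\in L\mathbb{Z}^3}(\mathcal{F}^{-1}m)(x+k),
\end{equation*}
and integrating over the fundamental domain $\Lambda_L$ with the $\ell^1$ triangle inequality produces $\|F_{\phi_1^+}^{\phi_2^-}\|_{L^1(\Lambda_L)}\leq \|\mathcal{F}^{-1}m\|_{L^1(\mathbb{R}^3)}\leq C$, which is (\ref{L1norm Flemma}). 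Inequality (\ref{psi lemma L1 and L2 norm}) then follows at once, since $\psi_{\phi_1^+}^{\phi_2^-} = \psi * F_{\phi_1^+}^{\phi_2^-}$ on the torus and Young's convolution inequality gives $\|\psi_{\phi_1^+}^{\phi_2^-}\|_{L^m}\leq \|F_{\phi_1^+}^{\phi_2^-}\|_{L^1}\|\psi\|_{L^m}\leq C\|\psi\|_{L^m}$. The endpoint cases $\kappa_1=+\infty$ and $\kappa_2=-\infty$ are reduced to the generic case by rewriting $\phi_1^+=1-\phi_1^-$ and absorbing the identity piece into $\|\psi\|_{L^m}$ directly.

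For (\ref{E^tbound}), I would simply absorb the heat factor into the multiplier. Writing $\mathcal{F}^{-1}(m(p)e^{-p^2 t}) = K_t * \mathcal{F}^{-1}m$ with $K_t(x)=(4\pi t)^{-3/2}e^{-|x|^2/4t}$, the same Poisson summation identifies $E^t$ as the periodization of $K_t*\mathcal{F}^{-1}m$, and the convolution-algebra estimate $\|K_t*\mathcal{F}^{-1}m\|_{L^1(\mathbb{R}^3)}\leq \|K_t\|_{L^1}\|\mathcal{F}^{-1}m\|_{L^1}\leq C$ yields $\sup_t\|E^t\|_{L^1(\Lambda_L)}\leq C$ uniformly in $t$ and $L$. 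The intermediate Gaussian expression in the stated inequality reflects the pointwise $Ct^{-3/2}e^{-|x|^2/ct}$ decay of $K_t$ combined with the Schwartz decay of $\mathcal{F}^{-1}m$. The chief technical point -- and really the only obstacle -- is verifying that the $L^1$-multiplier norm of $m$ does not blow up as the annular band $[c_1\tilde{\rho}_0^{\kappa_1}, C_2\tilde{\rho}_0^{\kappa_2}]$ widens in $\tilde{\rho}_0$, and this is exactly what the scale-invariance argument secures.
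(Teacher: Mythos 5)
Your argument is correct, and its skeleton---prove a uniform $L^1(\mathbb{R}^3)$ bound for the Euclidean kernel of the symbol, periodize over $L\mathbb{Z}^3$, then conclude on $\Lambda_L$ by Young's inequality---is exactly the template the paper sets up for the single cut-off $\phi^-$ in (\ref{define f})--(\ref{cal F}) and then declares extendable to the general lemma. Where you genuinely differ is in how the key uniform multiplier bound is obtained: the paper's route estimates the kernel of the (band-pass) symbol directly, via the radial integration-by-parts formula (\ref{fourier radius}) and the two pointwise decay bounds (\ref{bound f detailed}), yielding the analogue of (\ref{f L1}); you never touch the band-pass kernel at all, instead writing $\phi_1^+\phi_2^-=\phi_2^--\phi_1^-\phi_2^-$ and combining the convolution-algebra property of $L^1$ kernels with the dilation invariance of $\Vert\mathcal{F}^{-1}\phi_j^-\Vert_{L^1}$. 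That is cleaner and manifestly uniform in the width of the band $[c_1\tilde{\rho}_0^{\kappa_1},C_2\tilde{\rho}_0^{\kappa_2}]$; its only extra hypothesis is that the cut-offs carry scale-consistent derivative bounds (exact dilates of a fixed bump, or bounds as in (\ref{derivatives bound cutoff phi^-})), which the bare statement of the lemma does not literally impose but which the paper assumes anyway when constructing its cut-offs, so you should state it explicitly. For (\ref{E^tbound}) your step $\Vert K_t*\mathcal{F}^{-1}(\phi_1^+\phi_2^-)\Vert_{L^1}\leq\Vert K_t\Vert_{L^1}\Vert\mathcal{F}^{-1}(\phi_1^+\phi_2^-)\Vert_{L^1}$ gives $\sup_t\Vert E^t\Vert_{L^1(\Lambda_L)}\leq C$ directly, which is the bound actually used later in the paper; the intermediate Gaussian expression in the statement reflects a pointwise-domination route but is itself a $t$-independent constant, so nothing is lost. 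One small loose end worth recording: at the endpoint $\kappa_2=-\infty$ the kernel of $\phi_1^+\phi_2^-$ contains a Dirac mass, so (\ref{psi lemma L1 and L2 norm}) needs the splitting-off of the identity that you mention, and in $E^t$ the delta is harmless because it is absorbed by the heat kernel, $\Vert K_t\Vert_{L^1}=1$.
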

\begin{notation}
  For notations like (\ref{psilemma}), they will be repeatedly used throughout our analysis without further specification.
\end{notation}

\par Similar to (\ref{cutoff}), we also introduce another two radially-symmetric cut-off functions $\zeta^-,\tilde{\zeta}^-\in C^\infty_c(\mathbb{R}^3;\mathbb{R})$, such that $0\leq\zeta^-,\tilde{\zeta}^-\leq1$, and for some constant $\beta_1=\frac{1}{3}+\alpha_5>\frac{1}{3}$ and $\alpha_6>2\alpha_3$ to be determined later,
\begin{equation}\label{cutoffzeta}
  \zeta^-(p)=\left\{
  \begin{aligned}
  &1,\quad \vert p\vert<\frac{3}{2}\tilde{\mu}^{\frac{1}{2}}\tilde{\rho}_0^{-\beta_1}\\
  &0,\quad \vert p\vert>2\tilde{\mu}^{\frac{1}{2}}\tilde{\rho}_0^{-\beta_1}
  \end{aligned}
  \right.\quad\quad
  \tilde{\zeta}^-(p)=\left\{
  \begin{aligned}
  &1,\quad \vert p\vert<\frac{3}{2}\tilde{\mu}^{\frac{1}{2}}\tilde{\rho}_0^{-\alpha_6}\\
  &0,\quad \vert p\vert>2\tilde{\mu}^{\frac{1}{2}}\tilde{\rho}_0^{-\alpha_6}
  \end{aligned}
  \right.
\end{equation}
We also let $\zeta^+=1-\zeta^-$ and $\tilde{\zeta}^+=1-\tilde{\zeta}^-$. In particular, using (\ref{est of w_l,p}) and (\ref{eta_p}), we can bound
\begin{equation}\label{special eta L2 norm}
  \Vert\eta_{\zeta^+}\Vert^2_2\leq C\sum_{k}\frac{\vert\zeta^+(k)\vert^2}{L^6\vert k\vert^4}
  \leq CL^{-3}\tilde{\rho}_0^{\alpha_5},
\end{equation}
given that $\beta_1=\frac{1}{3}+\alpha_5>\frac{1}{3}$. Similarly, by (\ref{est of W_p p^2}), we have
\begin{equation}\label{special W L2 norm}
  \Vert W_{\zeta^+}\Vert^2_2\leq C(\ell L)^{-4}\tilde{\rho}_0^{\alpha_5}.
\end{equation}
Also, we have
\begin{equation}\label{special zeta tilde eta W L2 norm}
  \Vert\eta_{\tilde{\zeta}^+}\Vert^2\leq CL^{-3}\tilde{\rho}_0^{\alpha_6-\frac{1}{3}},
  \quad
  \Vert W_{\tilde{\zeta}^+}\Vert^2_2\leq C(\ell L)^{-4}\tilde{\rho}_0^{\alpha_6-\frac{1}{3}}
\end{equation}
Besides, we have the following useful estimates:
\begin{lemma}\label{eta sum lemma}
  Let $\eta_k$ be defined in (\ref{eta_p}), we have
  \begin{equation}\label{eta sum}
    \sum_{k}\vert\eta_k\vert\leq C.
  \end{equation}
  For $v$ smooth enough, we also have
  \begin{equation}\label{eta cutoff sum}
    \sum_{k}\vert\eta_k\zeta^+(k)\vert\leq C\tilde{\rho}_0^{\frac{2}{3}-2\alpha_3+\alpha_5}.
  \end{equation}
  Therefore
  \begin{equation}\label{special grad eta L2}
    \Vert\nabla\eta_{\zeta^+}\Vert_2\leq CL^{-\frac{3}{2}}\tilde{\rho}_0^{\frac{1}{3}-\alpha_3+\frac{\alpha_5}{2}},
  \end{equation}
  and
  \begin{equation}\label{special eta Linf}
    \Vert\eta_{\zeta^+}\Vert_{\infty}\leq CL^{-\frac{3}{2}}\tilde{\rho}_0^
    {\frac{2}{3}-2\alpha_3+\alpha_5}.
  \end{equation}
\end{lemma}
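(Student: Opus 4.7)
My plan is to reduce all four estimates to lattice computations via a single closed-form expression for $\eta_{p}$ obtained from the scattering equation (\ref{Wdiscrete asymptotic energy pde on the torus}). The crucial preliminary observation is that the convolution in that equation is itself the Fourier coefficient of a nice function: using $\eta_{q}=-L^{-3}\int w_{\ell}(x)e^{-iqx}dx$ together with the discrete Fourier convolution identity one finds $\sum_{q}\hat{v}_{p-q}\eta_{q}=-\widehat{vw_{\ell}}(p)$, so (\ref{Wdiscrete asymptotic energy pde on the torus}) rearranges into
\begin{equation*}
\eta_{p}=\frac{1}{L^{3}|p|^{2}}\left(W_{p}-\tfrac{1}{2}\widehat{vf_{\ell}}(p)\right),\qquad p\neq 0.
\end{equation*}
Since $\mathrm{supp}\,v$ is a fixed ball strictly contained in $B_{\ell L}$ once $\ell L$ is large, and $f_{\ell}$ is $C^{\infty}$ on the interior of $B_{\ell L}$ (in particular on $\mathrm{supp}\,v$), the product $vf_{\ell}$ belongs to $C_{c}^{\infty}(\mathbb{R}^{3})$ as soon as $v$ is smooth. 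Consequently $\widehat{vf_{\ell}}(p)$ decays faster than any polynomial in $|p|$, and the analysis reduces to controlling lattice sums of $|W_{p}|/|p|^{2}$ and rapidly decaying tails.

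For (\ref{eta sum}), I would split the sum at the transitional scale $|p|\sim 1/(\ell L)$ where the two available bounds on $W_{p}$ cross. On $|p|\lesssim 1/(\ell L)$ use $|W_{p}|\leq C$ from (\ref{L1 L2 est W}); a lattice-vs-integral comparison gives $\sum_{p\neq 0,|p|\leq 1/(\ell L)}(L^{3}|p|^{2})^{-1}\sim 1/(\ell L)\leq C$. On $|p|\gtrsim 1/(\ell L)$ use the sharper $|W_{p}|\leq C/((\ell L)^{2}|p|^{2})$ from (\ref{est of W_p p^2}), giving $\sum_{|p|\geq 1/(\ell L)}(L^{3}(\ell L)^{2}|p|^{4})^{-1}\sim 1/(\ell L)\leq C$. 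Adding the $p=0$ term from (\ref{est of eta_0}) and the Schwartz contribution of $\widehat{vf_{\ell}}$ (bounded uniformly in $L$ by one Cauchy--Schwarz against $\|vf_{\ell}\|_{H^{1}}$) yields (\ref{eta sum}). For (\ref{eta cutoff sum}) only the high-frequency estimate matters: with $R:=c\tilde{\mu}^{1/2}\tilde{\rho}_{0}^{-\beta_{1}}\sim\tilde{\rho}_{0}^{-\alpha_{5}}$ the lower edge of $\mathrm{supp}\,\zeta^{+}$,
\begin{equation*}
\sum_{|p|\geq R}\frac{|W_{p}|}{L^{3}|p|^{2}}\leq\frac{C}{L^{3}(\ell L)^{2}}\sum_{|p|\geq R}\frac{1}{|p|^{4}}\leq\frac{C}{(\ell L)^{2}R}=C\tilde{\rho}_{0}^{\frac{2}{3}-2\alpha_{3}+\alpha_{5}},
\end{equation*}
while the $\widehat{vf_{\ell}}$-piece is an arbitrarily high power of $\tilde{\rho}_{0}^{\alpha_{5}}$ after integration by parts sufficiently many times --- this is exactly what ``$v$ smooth enough'' supplies --- and is thus dominated by the target.

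For (\ref{special grad eta L2}), using the closed form I bound $|p|^{2}|\eta_{p}|^{2}=|\eta_{p}|\cdot|p|^{2}|\eta_{p}|\leq L^{-3}|\eta_{p}|(|W_{p}|+\tfrac{1}{2}|\widehat{vf_{\ell}}(p)|)$, insert $\zeta^{+}(p)^{2}$ and sum:
\begin{equation*}
\|\nabla\eta_{\zeta^{+}}\|_{2}^{2}\leq\frac{1}{L^{3}}\sum_{p}|\eta_{p}|\zeta^{+}(p)^{2}\big(|W_{p}|+\tfrac{1}{2}|\widehat{vf_{\ell}}(p)|\big).
\end{equation*}
A single $\ell^{2}$ Cauchy--Schwarz, using the $L^{2}$-estimates (\ref{special eta L2 norm}) and (\ref{special W L2 norm}), bounds the dominant $W_{p}$-term by $\|\eta_{\zeta^{+}}\|_{2}\cdot L^{3/2}\|W_{\zeta^{+}}\|_{2}\leq C(\ell L)^{-2}\tilde{\rho}_{0}^{\alpha_{5}}=C\tilde{\rho}_{0}^{\frac{2}{3}-2\alpha_{3}+\alpha_{5}}$; the $\widehat{vf_{\ell}}$-term is smaller. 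Dividing by $L^{3}$ and taking square roots gives (\ref{special grad eta L2}). Finally, (\ref{special eta Linf}) is immediate from the triangle inequality $\|\eta_{\zeta^{+}}\|_{\infty}\leq L^{-3/2}\sum_{p}|\eta_{p}|\zeta^{+}(p)$ and (\ref{eta cutoff sum}).

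The main obstacle is (\ref{eta sum}) itself: a direct application of Lemma \ref{fundamental est of v,w,lambda}(4) alone supplies only $|\eta_{p}|\leq CL^{-3}|p|^{-2}$, and the three-dimensional lattice sum $\sum_{p\neq 0}|p|^{-2}$ diverges. The resolution is to exploit that $W_{p}$ has \emph{two} complementary bounds --- the $L^{\infty}$ bound usable at low frequency and the quadratic decay usable at high frequency --- and that the transitional scale $1/(\ell L)$ is exactly the scale at which both resulting tails collapse to the common harmless size $1/(\ell L)$. Once (\ref{eta sum}) is set up through this splitting, (\ref{eta cutoff sum})--(\ref{special eta Linf}) follow from the same closed-form expression with only book-keeping of the $\zeta^{+}$-support.
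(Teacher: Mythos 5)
Your overall route coincides with the paper's: you solve the discrete scattering equation (\ref{Wdiscrete asymptotic energy pde on the torus}) for $\eta_p$, identify $\hat v_p+\sum_q\hat v_{p-q}\eta_q=\widehat{vf_\ell}(p)$ (this is exactly the paper's (\ref{Z sum function})), control the $W$-contribution $\sum_{p\neq0}L^{-3}|p|^{-2}|W_p|$ by playing the two bounds (\ref{L1 L2 est W}) and (\ref{est of W_p p^2}) against each other across the scale $1/(\ell L)$, use the rapid decay of $\widehat{vf_\ell}$ coming from smoothness of $v$ (the paper's (\ref{dusijfnv}); note this tacitly requires derivative bounds on $f_\ell$ on $\mathrm{supp}\,v$ uniform in $\ell L$, which the paper also invokes without detail) for the $\zeta^+$-localized sum (\ref{eta cutoff sum}), and obtain (\ref{special eta Linf}) directly from (\ref{eta cutoff sum}). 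Your derivation of (\ref{special grad eta L2}) by Cauchy--Schwarz against (\ref{special eta L2 norm}) and (\ref{special W L2 norm}) is a harmless variant of the paper's argument, which instead combines $|k|^2|\eta_k|\le CL^{-3}$ with (\ref{eta cutoff sum}); both give the stated power.

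There is, however, one step that fails as written: the claim that the $\widehat{vf_\ell}$-contribution to (\ref{eta sum}), i.e. $\sum_{p\neq0}L^{-3}|p|^{-2}|\widehat{vf_\ell}(p)|$, is bounded uniformly in $L$ ``by one Cauchy--Schwarz against $\|vf_\ell\|_{H^1}$''. On the lattice $(2\pi/L)\mathbb{Z}^3$ the companion factor $\sum_{p\neq0}L^{-3}|p|^{-4}$ is not uniformly bounded: the single mode $|p|=2\pi/L$ already contributes of order $L$, so any one-shot Cauchy--Schwarz of this type yields a bound growing at least like $L^{1/2}$, which is useless in the thermodynamic limit (extra $H^1$-type weights only worsen the infrared factor). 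The repair is precisely the low/high split you already use for the $W$-part, and what the paper does in (\ref{uidsjfhd}): on $0<|p|\le1$ use $|\widehat{vf_\ell}(p)|\le\|vf_\ell\|_{L^1}\le C$ together with $L^{-3}\sum_{0<|p|\le1}|p|^{-2}\le C$, and apply Cauchy--Schwarz with the Plancherel bound $L^{-3}\sum_p|\widehat{vf_\ell}(p)|^2=\|vf_\ell\|_2^2\le C$ only on $|p|>1$, where $L^{-3}\sum_{|p|>1}|p|^{-4}\le C$. With this one-line correction your argument is complete and essentially identical to the paper's proof.
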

\begin{proof}
  \par We use (\ref{Wdiscrete asymptotic energy pde on the torus}) to rewrite for $k\in(2\pi/L)\mathbb{Z}^3\backslash\{0\}$:
  \begin{equation}\label{rewrite eta_k}
    \eta_k=\frac{1}{L^3\vert k\vert^2}W_k-\frac{1}{2L^3\vert k\vert^2}\Big(
    \hat{v}_k+\sum_{l}\hat{v}_{k-l}\eta_l\Big).
  \end{equation}
  Using (\ref{est of W_p p^2}) and (\ref{L1 L2 est W}), we have
  \begin{equation}\label{W_k/k^2 sum}
    \sum_{k\neq0}\frac{W_k}{L^3\vert k\vert^2}\leq C\tilde{\rho}_0^{\frac{1}{3}-\alpha_3},
  \end{equation}
  for $0<\alpha_3<\frac{1}{3}$. On the other hand, since
  \begin{equation}\label{Z sum function}
    \frac{1}{L^3}\sum_{k}\Big(
    \hat{v}_k+\sum_{l}\hat{v}_{k-l}\eta_l\Big)e^{ikx}=v(x)f_\ell(x),
  \end{equation}
  we have
  \begin{equation}\label{L1 L2 Z}
    \Big\vert\hat{v}_k+\sum_{l}\hat{v}_{k-l}\eta_l\Big\vert\leq C,\quad
    \frac{1}{L^3}\sum_k\Big\vert\hat{v}_k+\sum_{l}\hat{v}_{k-l}\eta_l\Big\vert^2\leq C.
  \end{equation}
  Therefore, for some $d\in\mathbb{R}$,
  \begin{equation}\label{uidsjfhd}
    \sum_{0<\vert k\vert\leq\tilde{\rho}_0^{d}} \frac{1}{L^3\vert k\vert^2}\Big\vert\hat{v}_k+\sum_{l}\hat{v}_{k-l}\eta_l\Big\vert\leq C\tilde{\rho}_0^{d},
    \quad
    \sum_{\vert k\vert>\tilde{\rho}_0^{d}} \frac{1}{L^3\vert k\vert^2}\Big\vert\hat{v}_k+\sum_{l}\hat{v}_{k-l}\eta_l\Big\vert\leq C\tilde{\rho}_0^{-\frac{d}{2}}.
  \end{equation}
  By choosing $d=0$, we get to
  \begin{equation}\label{Z sum}
    \sum_{k\neq0} \frac{1}{L^3\vert k\vert^2}\Big\vert\hat{v}_k+\sum_{l}\hat{v}_{k-l}\eta_l\Big\vert\leq C.
  \end{equation}
  Combining (\ref{est of eta_0}), (\ref{rewrite eta_k}), (\ref{W_k/k^2 sum}) and (\ref{Z sum}) we reach (\ref{eta sum}).
  
  \par For (\ref{eta cutoff sum}), we notice that using (\ref{est of W_p p^2}), 
  \begin{equation}\label{W cutoff sum}
    \sum_{k}\frac{W_k\zeta^+(k)}{L^3\vert k\vert^2}\leq C\tilde{\rho}_0^{\frac{2}{3}-2\alpha_3+\alpha_5}.
  \end{equation}
  On the other hand, if $v$ is smooth enough, so is $vf_\ell$ inside of the support of $v$. Also, assume $v$ is supported in the closed ball $B_{R_0}$, then similar to the proof of (\ref{est of w_l and grad w_l}), we have for some large positive integer $n$, 
  \begin{equation}\label{est of D^n  f_l}
    \Vert D^{(n+1)}f_\ell\Vert_{L^\infty(B_{2R_0})}\leq C.
  \end{equation}
  Thus, by (\ref{Z sum function}), we have for some large positive integer $n$,
  \begin{equation}\label{dusijfnv}
    \vert k\vert^{(n+1)}\Big\vert\hat{v}_k+\sum_{l}\hat{v}_{k-l}\eta_l\Big\vert\leq C,
  \end{equation}
  which leads to
  \begin{equation}\label{Z cutoff sum}
    \sum_{k} \frac{\zeta^+(k)}{L^3\vert k\vert^2}\Big\vert\hat{v}_k+\sum_{l}\hat{v}_{k-l}\eta_l\Big\vert\leq C\tilde{\rho}_0^{n\alpha_5},
  \end{equation}
  where $\beta_1=\frac{1}{3}+\alpha_5>\frac{1}{3}$. Combining (\ref{rewrite eta_k}), (\ref{W cutoff sum}) and (\ref{Z cutoff sum}) we have (\ref{eta cutoff sum}).
  
  \par (\ref{special eta Linf}) is a direct consequence of (\ref{eta cutoff sum}), and (\ref{special grad eta L2}) is reached by combining (\ref{est of w_l,p}), (\ref{eta_p}) and (\ref{eta cutoff sum}).
\end{proof}

\par Finally, we choose another radially-symmetric cut-off function $\gamma^-\in C^\infty_c(\mathbb{R}^3;\mathbb{R})$, such that $0\leq\gamma^-\leq1$, and for some constant $\delta_3>\alpha_3/2$,
\begin{equation}\label{cutoffgamma}
  \gamma^-(p)=\left\{
  \begin{aligned}
  &1,\quad \vert p\vert<\frac{3}{2}\tilde{\mu}^{\frac{1}{2}}\tilde{\rho}_0^{-\delta_3}\\
  &0,\quad \vert p\vert>2\tilde{\mu}^{\frac{1}{2}}\tilde{\rho}_0^{-\delta_3}
  \end{aligned}
  \right.
\end{equation}
Similarly we let $\gamma^+=1-\gamma^-$.

\par In summary, $\phi^+$ is used in the quadratic and cubic renormalizations, explicitly defined in Sections \ref{qua sec} and \ref{cub sec}, $\zeta^-$ is used in the cubic renormalization given in Section \ref{cub sec}, and $\tilde{\zeta}^-$ is used in the definition of Bogoliubov transformation in Section \ref{bog sec}. $\gamma^-$ is used in the cut-off of $\mathcal{V}_3$ in Lemma \ref{lemma qua conj V_3}.

\subsection{Coefficients of the Bogoliubov Transformation}\label{Bog coeff sec}
\
\par Recall that we choose $\ell L=\tilde{\rho}_0^{-\frac{1}{3}+\alpha_3}$ for some $0<\alpha_3<\frac{1}{3}$. Using the discussion in Section \ref{scattering eqn sec}, the coefficients of the Bogoliubov transformation is defined by
\begin{equation}\label{define xi_k,q,p,nu,sigma0}
  \xi_{k,q,p}^{\nu,\sigma}=\frac{-\big(L^{-3}W_k\tilde{\zeta}^-(k)
  +\eta_kk(q-p)\phi^+(k)\tilde{\zeta}^-(k)\big)}
  {\frac{1}{2}\big(\vert q+k\vert^2+\vert p-k\vert^2-\vert q\vert^2-\vert p\vert^2\big)
  +\epsilon_0}
  \chi_{p-k,q+k\notin B_F}\chi_{p,q\in B_F}.
\end{equation}
For simplicity, we let $\xi_{k,q,p}^{\nu,\sigma}=\xi_{k,q,p}^{\nu,\sigma,1}+\xi_{k,q,p}^{\nu,\sigma,2}$, where
\begin{equation}\label{xi1and 20}
  \begin{aligned}
  &\xi_{k,q,p}^{\nu,\sigma,1}=\frac{-L^{-3}W_k\tilde{\zeta}^-(k)}
  {\frac{1}{2}\big(\vert q+k\vert^2+\vert p-k\vert^2-\vert q\vert^2-\vert p\vert^2\big)
  +\epsilon_0}
  \chi_{p-k,q+k\notin B_F}\chi_{p,q\in B_F}\\
  &\xi_{k,q,p}^{\nu,\sigma,2}=\frac{-\eta_kk(q-p)\phi^+(k)\tilde{\zeta}^-(k)}
  {\frac{1}{2}\big(\vert q+k\vert^2+\vert p-k\vert^2-\vert q\vert^2-\vert p\vert^2\big)
  +\epsilon_0}
  \chi_{p-k,q+k\notin B_F}\chi_{p,q\in B_F}.
  \end{aligned}
\end{equation}
Here $\epsilon_0>0$ is a small but positive (when $L$ tends to infinity) gap to avoid logarithmic growth in $L$\footnote{This idea can also be seen in the Bogoliubov transformation dealing with the fermionic ground state energy in \cite{2024huangyangformulalowdensityfermi}.}. We let $\epsilon_0=\tilde{\rho}_0^2$. The collisional energy certainly satisfies the usual hyper plane and parabola parametrization:
\begin{equation}\label{useful formula}
  \frac{1}{2}\big(\vert q+k\vert^2+\vert p-k\vert^2-\vert q\vert^2-\vert p\vert^2\big)
  =\vert k\vert^2+k\cdot(q-p)
  =\Big\vert k+\frac{q-p}{2}\Big\vert^2-\Big\vert\frac{q-p}{2}\Big\vert^2.
\end{equation}
We have the following useful estimates:
\begin{lemma}\label{est xi/ p^2 lemma}
  For $\epsilon_0=\tilde{\rho}_0^{\frac{2}{3}+\iota}$ for some $\iota>0$, and $p\notin B_F$ and $q\in B_F$, we have
  \begin{align}
  \frac{1}{L^3}\sum_{p\notin B_F}\frac{\vert W_{p-q}\vert}{p^2-k_F^2+\epsilon_0}
  \leq &C\tilde{\rho}_0^{\frac{1}{3}-\alpha_3},\label{est Wk/p^2}\\
  \frac{\tilde{\rho}_0^{\frac{1}{3}}}{L^3}\sum_{p\notin B_F}\frac{\vert \eta_{p-q}\vert\vert p-q\vert L^3}{p^2-k_F^2+\epsilon_0}
  \leq &C\tilde{\rho}_0^{\frac{1}{3}}\vert\ln\tilde{\rho}_0\vert.\label{est etakk/p^2}
  \end{align}
\end{lemma}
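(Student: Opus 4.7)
The plan is to convert both discrete sums to Riemann integrals on $\mathbb{R}^3$ (justified in the thermodynamic limit because the Fermi-surface regularization scale $\sqrt{\epsilon_0}/k_F=\tilde{\rho}_0^{\iota/2}$ dominates the lattice spacing $2\pi/L$), change variables to $k=p-q$, and compute the angular integral exactly. The worst case is $|q|=k_F$ (for $|q|$ strictly inside $B_F$ the denominator enjoys an extra gap $k_F^2-q^2$), so I place $q$ along the $z$-axis and integrate over $u=\cos\theta$ subject to the restriction $|q+k|>k_F$. With $r=|k|$ this yields the identity
\begin{equation*}
  2\pi\int\frac{\chi(r^2+2k_F r u>0)\,du}{r^2+2k_F r u+\epsilon_0}=\frac{\pi}{k_F r}\ln\frac{r^2+2k_F r+\epsilon_0}{\max\big(\epsilon_0,\,r^2-2k_F r+\epsilon_0\big)},
\end{equation*}
and every logarithmic factor in the final bounds traces back to this logarithm.

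For (\ref{est Wk/p^2}) I will split the radial integral at the natural threshold $r=1/(\ell L)$. For $r\leq 1/(\ell L)$ I use $|W_k|\leq C$ from (\ref{L1 L2 est W}): the sub-range $r\leq 2k_F$ contributes $\lesssim k_F\ln(k_F^2/\epsilon_0)\sim \tilde{\rho}_0^{1/3}|\ln\tilde{\rho}_0|$ (the $\epsilon_0$-regulated log saturates at $\ln(k_F^2/\epsilon_0)$ on the Fermi surface), while $r\in(2k_F,1/(\ell L))$ contributes $\lesssim 1/(\ell L)$ after exploiting the $\sim 4k_F/r$ decay of the log. For $r>1/(\ell L)$ I invoke $|W_k|\leq C/((\ell L)^2 r^2)$ from (\ref{est of W_p p^2}), obtaining a further $\lesssim 1/(\ell L)$. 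Since $\tilde{\rho}_0^{1/3}|\ln\tilde{\rho}_0|\leq C\tilde{\rho}_0^{1/3-\alpha_3}$ for $\tilde{\rho}_0$ small and $\alpha_3>0$, summing yields the stated bound.

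For (\ref{est etakk/p^2}) I start from $|\eta_k||k|L^3\leq C/|k|$, which is immediate from (\ref{est of w_l,p}) and (\ref{eta_p}). For $v$ smooth enough I iterate (\ref{rewrite eta_k}) as in the proof of Lemma \ref{eta sum lemma} to upgrade this to $|\eta_k||k|L^3\leq C/((\ell L)^2|k|^3)$ on the tail $|k|>1/(\ell L)$; this is precisely what renders the integral convergent at high momentum. After the $r$ from the radial Jacobian cancels the $1/r$ coming from $|\eta_k||k|$, the contribution from $r\leq 1/(\ell L)$ reduces to $(C/k_F)\int_0^{1/(\ell L)}\ln\big[(r^2+2k_F r+\epsilon_0)/\max(\epsilon_0,r^2-2k_F r+\epsilon_0)\big]\,dr$, which is $O(|\ln\tilde{\rho}_0|)$: the sub-range $r\lesssim 2k_F$ contributes $\iota|\ln\tilde{\rho}_0|$ from the saturated Fermi-surface log, and $r\in(2k_F,1/(\ell L))$ contributes $\alpha_3|\ln\tilde{\rho}_0|$ after rescaling $s=r/k_F$ and integrating $\sim 4/s$ on $(2,\tilde{\rho}_0^{-\alpha_3})$. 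The tail $r>1/(\ell L)$ contributes $O(1)$ by the improved decay. Multiplying by the $\tilde{\rho}_0^{1/3}$ prefactor produces the target. The main technical delicacy is the careful bookkeeping of the two competing logarithmic scales---$\ln(k_F^2/\epsilon_0)$ from the Fermi-surface regularization and $\ln(1/(k_F\ell L))$ from the intermediate momentum window---together with verifying that the sum-to-integral passage preserves all these constants uniformly in $L$.
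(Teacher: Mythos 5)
Your overall strategy (pass to an integral, shift to $k=p-q$, do the angular integral exactly) is workable and would reproduce the stated bounds, but one step as written is false. You claim that iterating (\ref{rewrite eta_k}) upgrades $|\eta_k|\,|k|\,L^3\leq C/|k|$ to $|\eta_k|\,|k|\,L^3\leq C/((\ell L)^2|k|^3)$ for all $|k|>1/(\ell L)$, and you use this to conclude that the tail $r>1/(\ell L)$ contributes $O(1)$. From (\ref{rewrite eta_k}) one has $L^3|k|^2\eta_k = W_k-\tfrac12\big(\hat v_k+\sum_l\hat v_{k-l}\eta_l\big)$, and the second term is the Fourier transform of $vf_\ell$, which is of order $8\pi\mathfrak a_0$ for all $|k|$ up to the fixed inverse range of $v$ (order one) and decays rapidly only beyond that scale; only the $W_k$ piece improves at the scale $1/(\ell L)$. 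So in the window $1/(\ell L)<|k|\lesssim 1$ the best available bound is still $|\eta_k|\,|k|\,L^3\leq C/|k|$ (consistent with $w_\ell(x)\approx\mathfrak a_0/|x|$), and your tail is not $O(1)$ but $O(|\ln\tilde\rho_0|)$, coming from $\int_{1/(\ell L)}^{1}dr/r$. The final estimate (\ref{est etakk/p^2}) survives, since it tolerates a logarithm, but the bookkeeping must split at the fixed scale $\sim 1$ rather than at $1/(\ell L)$; this is exactly what the paper does by taking $d=0$ in its third region and controlling the genuine far tail via $\|\nabla\eta\|_2\leq CL^{-3/2}$ (equivalently the rapid decay of $\widehat{vf_\ell}$ for smooth $v$). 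Your threshold $1/(\ell L)$ is the correct crossover for $W_k$, and your treatment of (\ref{est Wk/p^2}) is fine, but you have transplanted that scale to $\eta_k$, where it does not apply.

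A secondary point: the reduction to $|q|=k_F$ is asserted with an incorrect reason. The denominator $p^2-k_F^2+\epsilon_0$ does not depend on $q$ at all; after your change of variables, taking $|q|<k_F$ does not add a gap $k_F^2-q^2$ to the denominator (it subtracts it), what saves you is that the constraint $|q+k|>k_F$ shrinks the angular domain. Either carry general $s=|q|$ through the angular formula, obtaining $\frac{\pi}{sr}\ln\frac{r^2+2sr-(k_F^2-s^2)+\epsilon_0}{\max(\epsilon_0,\,r^2-2sr-(k_F^2-s^2)+\epsilon_0)}$ and checking the $1/s$ prefactor is compensated by the smallness of the logarithm, or avoid the shift entirely. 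The paper's route is simpler on both counts: since the denominator is radial in $p$, it never does an angular integral; it bounds $|W_{p-q}|$ and $|\eta_{p-q}||p-q|$ by radial functions of $|p|$ (using $|p-q|\geq|p|/2$ for $|p|\geq 2k_F$), splits the radial sum at $2k_F$ and at $\tilde\rho_0^{1/3-\alpha_3}$ (for $W$) respectively at $2k_F$ and $1$ (for $\eta$), and uses Cauchy--Schwarz with $\|W\|_2$, $\|\nabla\eta\|_2$ for the far tails, which makes uniformity in $q\in B_F$ automatic.
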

\begin{proof}
  \par For (\ref{est Wk/p^2}), we divide it into three parts. The first part, we use (\ref{L1 L2 est W}) to get to
  \begin{equation}\label{1st W}
    \frac{1}{L^3}\sum_{k_F<\vert p\vert<2k_F}\frac{\vert W_{p-q}\vert}{p^2-k_F^2+\epsilon_0}
  \leq C\int_{k_F}^{2k_F}\frac{r^2dr}{r^2-k_F^2+\epsilon_0}\leq C\tilde{\rho}_0^{\frac{1}{3}}
  \vert\ln\tilde{\rho}_0\vert.
  \end{equation}
  For the second part, we take some $d<\frac{1}{3}$ to reach
  \begin{equation}\label{2nd W}
    \frac{1}{L^3}\sum_{2k_F\leq\vert p\vert<\tilde{\rho}_0^d}\frac{\vert W_{p-q}\vert}{p^2-k_F^2+\epsilon_0}
  \leq C\int_{2k_F}^{\tilde{\rho}_0^d}\frac{r^2dr}{r^2-k_F^2+\epsilon_0}\leq C\tilde{\rho}_0^{d}.
  \end{equation}
  For the third part, we also use (\ref{L1 L2 est W}) to deduce
  \begin{equation}\label{3rd W}
    \frac{1}{L^3}\sum_{\tilde{\rho}_0^d\leq\vert p\vert}\frac{\vert W_{p-q}\vert}{p^2-k_F^2+\epsilon_0}
  \leq \Vert W\Vert_2\Big(\int_{\tilde{\rho}^d}^{\infty}\frac{dr}{r^2}\Big)^{\frac{1}{2}} \leq C\tilde{\rho}_0^{\frac{1}{2}-\frac{3}{2}\alpha_3-\frac{d}{2}}.
  \end{equation}
  By choosing $d=\frac{1}{3}-\alpha_3$, we reach (\ref{est Wk/p^2}).
  
  \par For (\ref{est etakk/p^2}), we notice that from (\ref{est of w_l,p}),(\ref{eta_p}) and (\ref{est of eta_0}), we have
  \begin{equation}\label{useful bound eta}
    L^3\vert p\vert^2\vert\eta_p\vert\leq C,\quad L^3\vert p\vert \vert\eta_p\vert
    \leq\tilde{\rho}_0^{-\frac{1}{3}+\alpha_3}.
  \end{equation}
  Therefore, we have for the first part
  \begin{equation}\label{1st eta}
    \tilde{\rho}_0^{\frac{1}{3}}\sum_{k_F<\vert p\vert<2k_F}\frac{\vert \eta_{p-q}\vert\vert p-q\vert}{p^2-k_F^2+\epsilon_0}
  \leq C\tilde{\rho}_0^{\alpha_3}\int_{k_F}^{2k_F}\frac{r^2dr}{r^2-k_F^2+\epsilon_0}\leq C\tilde{\rho}_0^{\frac{1}{3}+\alpha_3}
  \vert\ln\tilde{\rho}_0\vert.
  \end{equation}
  For the second part, we take some $d<\frac{1}{3}$ to reach
  \begin{equation}\label{2nd eta}
   \tilde{\rho}_0^{\frac{1}{3}}\sum_{2k_F\leq\vert p\vert<\tilde{\rho}_0^d}\frac{\vert \eta_{p-q}\vert\vert p-q\vert}{p^2-k_F^2+\epsilon_0}
  \leq C\tilde{\rho}_0^{\frac{1}{3}}\int_{2k_F}^{\tilde{\rho}_0^d}
  \frac{dr}{r}\leq C\tilde{\rho}_0^{\frac{1}{3}}
  \vert\ln\tilde{\rho}_0\vert.
  \end{equation}
  For the third part, we use (\ref{est of Deta D^2eta}) to deduce 
  \begin{equation}\label{3rd eta}
   \tilde{\rho}_0^{\frac{1}{3}}\sum_{\tilde{\rho}_0^d\leq\vert p\vert}\frac{\vert \eta_{p-q}\vert\vert p-q\vert}{p^2-k_F^2+\epsilon_0}
  \leq \tilde{\rho}_0^{\frac{1}{3}}L^{\frac{3}{2}}\Vert \nabla\eta
  \Vert_2\Big(\int_{\tilde{\rho}^d}^{\infty}\frac{dr}{r^2}\Big)^{\frac{1}{2}} \leq C\tilde{\rho}_0^{\frac{1}{3}-\frac{d}{2}}.
  \end{equation}
  By choosing $d=0$, we reach (\ref{est etakk/p^2}).
\end{proof}

\section{Fock Space and Quadratic Operators}\label{Fock space}
\
\par The fermionic Fock space is given by
\begin{equation}\label{Fock space definition}
  \mathcal{F}=\bigoplus_{N=0}^\infty \mathcal{H}^{\wedge N},
\end{equation}
where $\mathcal{H}^{\wedge 0}=\mathbb{C}$,
\begin{equation}\label{H define}
  \mathcal{H}=L^2(\Lambda_L\times\mathcal{S}_\mathbf{q})\cong L^2(\Lambda_L;\mathbb{C}^q),
\end{equation}
and $\mathcal{H}^{\wedge N}=L^2_a\big((\Lambda_L\times\mathcal{S}_{\mathbf{q}})^N\big)$ is the space of $N$-particle fermionic wave functions that are anti-symmetric with respect to permutations of particle coordinates.

\par For $k\in(2\pi/L)\mathbb{Z}^3$ and $\sigma\in\mathcal{S}_\mathbf{q}$, we choose the orthonormal basis of $\mathcal{H})$ by
\begin{equation}\label{orthonormal basis H}
  f_{k,\sigma}(x,\nu)=f_k(x)\chi_{\nu=\sigma}(\nu)=\frac{\chi_{\nu=\sigma}(\nu)}{L^{\frac{3}{2}}}
  e^{ik\cdot x},\quad z=(x,\nu)=\Lambda_L\times\mathcal{S}_\mathbf{q}.
\end{equation}
For non-negative integers $n,m$, and $\psi\in L^2\big((\Lambda_L\times\mathcal{S}_\mathbf{q})^n\big)$, $\varphi\in L^2\big((\Lambda_L\times\mathcal{S}_\mathbf{q})^m\big)$, we define the anti-symmetric wedge product between $\psi$ and $\varphi$
\begin{equation*}\label{wedge product}
\begin{aligned}
  \psi\wedge\varphi(z_1,\dots,z_{n+m})=& \frac{1}{\sqrt{n!m!(n+m)!}}\\
&\times\sum_{\varrho\in S_{n+m}}\sgn{\varrho}\,\psi(z_{\varrho(1)},\dots,z_{\varrho(n)})
  \varphi(z_{\varrho(n+1)},\dots,z_{\varrho(n+m)}).
\end{aligned}
\end{equation*}
Under these definitions, it is straight-forward to verify $\psi\wedge\varphi\in L^2_a\big((\Lambda_L\times\mathcal{S}_\mathbf{q})^{n+m}\big)$, and the wedge product is associative and anti-commutative, i.e.
\begin{equation}\label{anti-commutative}
  \psi\wedge\varphi=(-1)^{nm}\varphi\wedge\psi.
\end{equation}
Under the definition of the wedge product, the notation $L_a^2\big((\Lambda_L\times\mathcal{S}_\mathbf{q})^N\big)= \mathcal{H}^{\wedge N}$ holds rigorously and we find an orthonormal basis of $\mathcal{H}^{\wedge N}$ given by
\begin{equation}\label{orthonormal basis H wedge N}
  \bigwedge_{i=1}^{N} f_{k_i,\sigma_i}=f_{k_1,\sigma_1}\wedge\cdots\wedge f_{k_N,\sigma_N}.
\end{equation}

\par For any $N\in\mathbb{N}_{\geq0}$, We can define the creation and annihilation operators on $\mathcal{H}^{\wedge N}$ by
\begin{equation}\label{define a*(f) & a(g)}
\left.\begin{aligned}
  (a^*(f)\psi)({z}_1,\dots,{z}_{N+1}) &= \frac{1}{\sqrt{N+1}}\sum_{i=1}^{N+1}(-1)^{i+1}f(z_i)
\psi(z_1,\dots,\hat{z_i},\dots,z_{N+1}) \\
  (a(g)\psi)(z_1,\dots,z_{N-1}) &= \sqrt{N}\int_{\Lambda_{L}\times\mathcal{S}_\mathbf{q}}\overline{g}(z)
\psi(z,z_1,\dots,z_{N-1})dz, \quad N\geq 1
\end{aligned}\right.
\end{equation}
We additionally define $a(g)=0$ for $N=0$. It is then also reasonable to treat $a^*$ and $a$ as operators defined on the fermionic Fock space $\mathcal{F}$. We can check the following anti-commutator formula:
\begin{equation}\label{anticommutator}
\begin{aligned}
  &\{a(g),a^*(f)\}=a(g)a^*(f)+a^*(f)a(g)=\langle f,g\rangle,\\
  &\{a(g),a(f)\}=\{a^*(g),a^*(f)\}=0.
\end{aligned}
\end{equation}
Using (\ref{anticommutator}), we can bound the norm of creation and annihilation operators by
\begin{equation}\label{norm bound  of creation and annihilation}
  \Vert a(g)\Vert\leq\Vert g\Vert_2,\quad\Vert a^*(f)\Vert\leq \Vert f\Vert_2.
\end{equation}
Using (\ref{orthonormal basis H}), we denote
\begin{equation}\label{define a_k,sigma a*_k,sigma}
  a_{k,\sigma}=a(f_{k,\sigma}),\quad a^*_{k,\sigma}=a^*(f_{k,\sigma}).
\end{equation}

\subsection{Particle Number Operators}\label{particle}
\
\par For fixed $N\in\mathbb{N}_{\geq0}$, it is easy to check that
\begin{equation}\label{N}
  N=\sum_{\sigma,k\in(2\pi/L)\mathbb{Z}^3}a_{k,\sigma}^*a_{k,\sigma}.
\end{equation}
Recall the Fermi ball defined around (\ref{average particle number}), we define the relaxation particle number operators $\mathcal{N}_{re}$ and $\tilde{\mathcal{N}}_{re}$ (Here we borrow the terminology \textquotedblleft relaxation\textquotedblright from dynamic system, that implies the perturbed system evolves into an thermodynamic equilibrium state):
\begin{equation}\label{N_relaxation}
  \mathcal{N}_{re}\coloneqq\sum_{\sigma,k\notin B_F}a_{k,\sigma}^*a_{k,\sigma},
  \quad
  \tilde{\mathcal{N}}_{re}\coloneqq\sum_{\sigma,k\in B_F}a_{k,\sigma}a_{k,\sigma}^*
\end{equation}
By (\ref{average particle number}) and (\ref{N}), we can also check that
\begin{equation}\label{N_re and N_retilde}
  \mathcal{N}_{re}-\tilde{\mathcal{N}}_{re}=N-\mathbf{q}\bar{N}_0.
\end{equation}
For $x\in\Lambda_L$ and $\sigma\in\mathcal{S}_{\mathbf{q}}$, we define for
$z\in\Lambda_L\times\mathcal{S}_{\mathbf{q}}$:
\begin{equation}\label{define g,h}
  g_{x,\sigma}(z)=\sum_{k\in B_F}\frac{e^{ikx}}{L^{\frac{3}{2}}}f_{k,\sigma}(z),
  \quad
  h_{x,\sigma}(z)=\sum_{k\notin B_F}\frac{e^{ikx}}{L^{\frac{3}{2}}}f_{k,\sigma}(z)
\end{equation}
and therefore
\begin{equation}\label{cal N_re and N_re tilde}
  \int_{\Lambda_L}\sum_{\sigma}a^*(h_{x,\sigma})a(h_{x,\sigma})dx=\mathcal{N}_{re},
  \quad
  \int_{\Lambda_L}\sum_{\sigma}a(g_{x,\sigma})a^*(g_{x,\sigma})dx=\tilde{\mathcal{N}}_{re}
\end{equation}

\par Due to technical reasons, for $\delta>0$ and $d\in\mathbb{R}$, we introduce the following notations (recall that $k_F=\tilde{\mu}$):
\begin{equation}\label{partition 3D space}
  \begin{aligned}
  P_{F,d}&=\{k\in(2\pi/L)\mathbb{Z}^3,\,\vert k\vert>k_F+\tilde{\mu}
  ^{\frac{1}{2}}\tilde{\rho}_0^d\}\\
  A_{F,d}&=\{k\in(2\pi/L)\mathbb{Z}^3,\,k_F<\vert k\vert\leq
  k_F+\tilde{\mu}^{\frac{1}{2}}\tilde{\rho}_0^d\}\\
  \underline{A}_{F,\delta}&=\{k\in(2\pi/L)\mathbb{Z}^3,\,r<\vert k\vert\leq
  k_F-\tilde{\mu}^{\frac{1}{2}}\tilde{\rho}_0^\delta\}\\
  \underline{B}_{F,\delta}&=\{k\in(2\pi/L)\mathbb{Z}^3,\,\vert k\vert\leq
  k_F-\tilde{\mu}^{\frac{1}{2}}\tilde{\rho}_0^\delta\}
  \end{aligned}
\end{equation}
With the notations in (\ref{partition 3D space}), we define the sharp momentum projections
\begin{equation}\label{tools1}
  \begin{aligned}
  &H_{x,\sigma}[d](z)=\sum_{k\in P_{F,d}}\frac{e^{ikx}}{L^{\frac{3}{2}}}f_{k,\sigma}(z),\quad
  L_{x,\sigma}[d](z)=\sum_{k\in A_{F,d}}\frac{e^{ikx}}{L^{\frac{3}{2}}}f_{k,\sigma}(z)\\
  &S_{x,\sigma}[\delta](z)=\sum_{k\in \underline{A}_{F,\delta}}\frac{e^{ikx}}{L^{\frac{3}{2}}}
  f_{k,\sigma}(z),\quad
  I_{x,\sigma}[\delta](z)=\sum_{k\in \underline{B}_{F,\delta}}\frac{e^{ikx}}{L^{\frac{3}{2}}}
  f_{k,\sigma}(z)
  \end{aligned}
\end{equation}
and the corresponding particle number operators
\begin{equation}\label{tools2}
  \begin{aligned}
  \mathcal{N}_{high}[d]&\coloneqq \sum_{\sigma,k\in P_{F,d}}a^*_{k,\sigma}a_{k,\sigma}
  =\int_{\Lambda_L}\sum_{\sigma}a^*(H_{x,\sigma}[d])a(H_{x,\sigma}[d])dx\\
  \mathcal{N}_{low}[d]&\coloneqq \sum_{\sigma,k\in A_{F,d}}a^*_{k,\sigma}a_{k,\sigma}
  =\int_{\Lambda_L}\sum_{\sigma}a^*(L_{x,\sigma}[d])a(L_{x,\sigma}[d])dx\\
   \mathcal{N}_{surface}[\delta]&\coloneqq \sum_{\sigma,k\in \underline{A}_{F,\delta}}a_{k,\sigma}a^*_{k,\sigma}
  =\int_{\Lambda_L}\sum_{\sigma}a(S_{x,\sigma}[\delta])a^*(S_{x,\sigma}[\delta])dx\\
  \mathcal{N}_{inner}[\delta]&\coloneqq \sum_{\sigma,k\in \underline{B}_{F,\delta}}a_{k,\sigma}a^*_{k,\sigma}
  =\int_{\Lambda_L}\sum_{\sigma}a(I_{x,\sigma}[\delta])a^*(I_{x,\sigma}[\delta])dx
  \end{aligned}
\end{equation}
For the sake of succinctness, we may write $\mathcal{N}_h=\mathcal{N}_{high}[d]$ and so on. For $\sharp=h,l,s,i$, the $\mathcal{N}_{\sharp,\sigma}$ version of (\ref{tools2}) can be defined similarly. 
We can easily check that
\begin{equation}\label{tools3}
  \begin{aligned}
    &h_{x,\sigma}=H_{x,\sigma}+L_{x,\sigma},\quad g_{x,\sigma}=S_{x,\sigma}+I_{x,\sigma}\\
    &\mathcal{N}_{re}=\mathcal{N}_h+\mathcal{N}_l,\quad\quad\,\,\,\,
    \tilde{\mathcal{N}}_{re}=\mathcal{N}_{s}+\mathcal{N}_i.
  \end{aligned}
\end{equation}

\par We also point out that, using the simple facts that
\begin{equation}\label{simple facts}
  \tilde{\mathcal{N}}_{re}\leq\mathbf{q}\bar{N}_0,\quad\mathcal{N}_{s}[\delta]\leq C\tilde{\rho}_0^{1+\delta}L^3,
\end{equation}
we can deduce
\begin{equation}\label{est of Ntildere^2}
  \tilde{\mathcal{N}}_{re}^2\leq C\tilde{\rho}_0L^3\mathcal{N}_i[\delta]+
  C\tilde{\rho}_0^{1+\delta}L^3\tilde{\mathcal{N}}_{re}.
\end{equation}

\subsection{Other Useful Quadratic Operators}\label{other}
\
\par As in \cite{dilutefermiBog}, we define for $\varphi\in L^2(\Lambda_L)$
\begin{equation}\label{define b}
  b^*_{y,\sigma}(\varphi)=\int_{\Lambda_L}\varphi(x-y)
  a^*(h_{x,\sigma})a(g_{x,\sigma})dx.
\end{equation}
From \cite[Lemma 5.3]{dilutefermiBog}, we have the following Lemma \ref{b^* bound by b}:
\begin{lemma}\label{b^* bound by b}
For any $N\in\mathbb{N}_{\geq0}$ and $\psi\in \mathcal{H}^{\wedge N}$,
\begin{equation}\label{ineq b 1}
  \Vert b^*_{y,\sigma}(\varphi)\psi\Vert\leq \Vert b_{y,\sigma}(\varphi)\psi\Vert
  +C\tilde{\rho}_0^{\frac{1}{2}}\Vert\varphi\Vert_2\Vert\psi\Vert_2.
\end{equation}
Using the notation (\ref{psilemma}), we have
\begin{equation}\label{ineq b 2}
  \Vert b^*_{y,\sigma}({\eta}_{\phi^+})\Vert,\Vert b^*_{y,\sigma}({\eta}_{\phi^+}
  ^{\zeta^-})\Vert\leq
  CL^{-\frac{3}{2}}\tilde{\rho}_0^{\frac{1}{3}+\frac{\alpha_2}{2}}.
\end{equation}
\end{lemma}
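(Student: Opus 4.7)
My plan is to follow the strategy of \cite[Lemma 5.3]{dilutefermiBog}, exploiting the canonical anticommutation relations together with the density estimate $\bar{N}_0/L^3\sim\tilde\rho_0$ of (\ref{average particle number}). The key step is a direct computation of the commutator $[b_{y,\sigma}(\varphi),b^*_{y,\sigma}(\varphi)]$: using (\ref{anticommutator}) to normal-order both $bb^*$ and $b^*b$, the quartic fermionic terms coincide (after two anticommutation swaps on each side), leaving only the quadratic remainder
\begin{equation*}
  [b_{y,\sigma}(\varphi),b^*_{y,\sigma}(\varphi)]=\int_{\Lambda_L^2}\bar\varphi(x-y)\varphi(x'-y)\bigl[K_h(x-x')a^*(g_{x'})a(g_x)-K_g(x'-x)a^*(h_x)a(h_{x'})\bigr]dx\,dx',
\end{equation*}
where $K_h(x-x')=L^{-3}\sum_{k\notin B_F}e^{ik(x-x')}$ and $K_g(x-x')=L^{-3}\sum_{k\in B_F}e^{ik(x-x')}$ are the sharp-momentum projection kernels.

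I then bound the commutator as an operator. Passing to Fourier coordinates, the first summand rewrites as $\sum_{q\notin B_F}a^*(G_q)a(G_q)$ with $G_q=L^{-3/2}\sum_{k\in B_F}\hat\varphi_{k-q}f_k$, a manifestly nonnegative operator. The elementary fermionic bound $a^*(f)a(f)\leq\|f\|_{\mathcal{H}}^2$ (coming from (\ref{anticommutator})) yields
\begin{equation*}
  \sum_{q\notin B_F}a^*(G_q)a(G_q)\leq\Bigl(\sum_{q\notin B_F}\|G_q\|^2\Bigr)\mathrm{Id}\leq\frac{\bar{N}_0}{L^3}\|\varphi\|_2^2\cdot\mathrm{Id}\leq C\tilde\rho_0\|\varphi\|_2^2\cdot\mathrm{Id},
\end{equation*}
by Parseval and (\ref{average particle number}). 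An identical estimate controls the second summand. Combining with the identity $\|b^*_{y,\sigma}(\varphi)\psi\|^2-\|b_{y,\sigma}(\varphi)\psi\|^2=\langle\psi,[b_{y,\sigma}(\varphi),b^*_{y,\sigma}(\varphi)]\psi\rangle$ and $\sqrt{a+b}\leq\sqrt{a}+\sqrt{b}$ yields (\ref{ineq b 1}).

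For (\ref{ineq b 2}) I first estimate the relevant $L^2$-norm. From $|\eta_p|\leq CL^{-3}|p|^{-2}$ (see (\ref{est of w_l,p})--(\ref{eta_p})) together with the support condition $\phi^+(p)\neq 0\Rightarrow|p|\gtrsim\tilde\rho_0^{1/3-\alpha_2}$,
\begin{equation*}
  \|\eta_{\phi^+}\|_2^2=\sum_p|\eta_p|^2\phi^+(p)^2\leq C\sum_{|p|\gtrsim\tilde\rho_0^{1/3-\alpha_2}}\frac{1}{L^6|p|^4}\leq CL^{-3}\tilde\rho_0^{-1/3+\alpha_2},
\end{equation*}
and the same bound holds with the additional cut-off $\zeta^-$. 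Therefore $\tilde\rho_0^{1/2}\|\eta_{\phi^+}\|_2\leq CL^{-3/2}\tilde\rho_0^{1/3+\alpha_2/2}$. It remains to bound $\|b_{y,\sigma}(\eta_{\phi^+})\|$ by the same order; this follows from the Hilbert--Schmidt estimate on the coefficient matrix $C_{k,k'}=L^{-3/2}e^{i(k-k')y}\overline{\hat\varphi_{k'-k}}$ (with $k\in B_F$, $k'\notin B_F$): since $\|C\|_{\mathrm{HS}}^2\leq\bar{N}_0\|\varphi\|_2^2/L^3\leq C\tilde\rho_0\|\varphi\|_2^2$, the standard fermionic pair-hopping estimate (with disjoint index sets) gives $\|b_{y,\sigma}(\eta_{\phi^+})\|\leq\|C\|_{\mathrm{HS}}\leq C\sqrt{\tilde\rho_0}\|\eta_{\phi^+}\|_2\leq CL^{-3/2}\tilde\rho_0^{1/3+\alpha_2/2}$. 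Combining with (\ref{ineq b 1}) completes (\ref{ineq b 2}).

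The main obstacle is the operator bound on the commutator: extracting the density factor $\tilde\rho_0=\bar{N}_0/L^3$ cleanly without losing powers of $L$. A naive Cauchy--Schwarz on the full kernel loses a factor of $\bar{N}_0$, which is fatal in the thermodynamic limit $L\to\infty$; the sharp bound requires pairing the fermionic operator inequality $a^*(f)a(f)\leq\|f\|_{\mathcal{H}}^2$ with Parseval's identity on $\Lambda_L$ so that the Fermi-ball volume fraction enters exactly once.
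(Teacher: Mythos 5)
Your proof is correct and follows essentially the same route as the paper, which itself does not reproduce an argument but defers to \cite[Lemma 5.3]{dilutefermiBog}, \cite[Lemma 4.8]{fermiupper} and \cite[Lemma 3.1]{2024huangyangformulalowdensityfermi}: the commutator identity with the quartic terms cancelling under the CAR, the bound $a^*(f)a(f)\leq\Vert f\Vert^2$ paired with Parseval to extract the factor $\bar{N}_0/L^3\sim\tilde{\rho}_0$, and for (\ref{ineq b 2}) the Hilbert--Schmidt bound $\Vert\sum c_{k,k'}a^*_{k}a_{k'}\Vert\leq\Vert c\Vert_{\mathrm{HS}}$ valid for disjoint index sets, combined with $\Vert\eta_{\phi^+}\Vert_2^2\leq CL^{-3}\tilde{\rho}_0^{-\frac{1}{3}+\alpha_2}$ from (\ref{est of w_l,p}) and the support of $\phi^+$. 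The only cosmetic remark is that the second summand of the commutator enters with a favorable sign (it is minus a nonnegative operator), so it may simply be dropped rather than estimated.
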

The proof of (\ref{ineq b 2}) can be found in \cite[Lemma 4.8]{fermiupper} and \cite[Lemma 3.1]{2024huangyangformulalowdensityfermi}.

\par In fact, we can define a more general operator $b^*$ (despite there might be some ambiguity in notations):
\begin{equation}\label{define btilde}
  {b}^*_{y,\sigma}(\varphi)=\int_{\Lambda_L}\varphi(x-y)
  a^*(\mathbf{h}_{x,\sigma})a(\partial^{\mathbf{n}} \mathbf{g}_{x,\sigma})dx,
\end{equation}
where for $n\in\mathbb{N}_{\geq0}$, $\partial^{\mathbf{n}}=\partial_{x_{j_1}}\cdots\partial
_{x_{j_n}}$, with $j_m=1,2,3$ for $1\leq m\leq n$ (when $n=0$ it is simply the identity). Besides, for some subset $M_g\subset (2\pi/L)\mathbb{Z}^3$ and $M_h\subset (2\pi/L)\mathbb{Z}^3\backslash B_F$, such that for some closed ball $B_r\subset (2\pi/L)\mathbb{Z}^3$ with $r>0$, $M_g\subset B_r$, and
\begin{equation}\label{bold f g}
  \mathbf{g}_{x,\sigma}(z)=\sum_{k\in M_g}\frac{e^{ikx}}{L^{\frac{3}{2}}}f_{k,\sigma}(z),
  \quad
  \mathbf{h}_{x,\sigma}(z)=\sum_{k\in M_h}\frac{e^{ikx}}{L^{\frac{3}{2}}}f_{k,\sigma}(z)
\end{equation}
Similar to Lemma \ref{b^* bound by b}, we have the following Lemma \ref{b^* bound by b general}
\begin{lemma}\label{b^* bound by b general}
  Let $b^*$ be defined in (\ref{define btilde}), then for any $N\in\mathbb{N}_{\geq0}$ and $\psi\in \mathcal{H}^{\wedge N}$,
  \begin{equation}\label{ineq b1 general}
    \Vert b^*_{y,\sigma}(\varphi)\psi\Vert\leq \Vert b_{y,\sigma}(\varphi)\psi\Vert
  +Cr^n\mathfrak{c}^{\frac{1}{2}}\Vert\varphi\Vert_2\Vert\psi\Vert_2,
  \end{equation}
  where
  \begin{equation}\label{define frak c}
    \mathfrak{c}=\min\{{\# M_g},\# M_h\}L^{-3}< \infty.
  \end{equation}
  Here, $\#$ represents the cardinal of a countable set. Moreover, for $r\ll\tilde{\rho}_0^{\frac{1}{3}-\alpha_2}$, we also have
  \begin{equation}\label{ineq b 2 general}
  \Vert b^*_{y,\sigma}({\eta}_{\phi^+})\Vert,\Vert b^*_{y,\sigma}({\eta}_{\phi^+}
  ^{\zeta^-})\Vert\leq
  CL^{-\frac{3}{2}}r^n\tilde{\rho}_0^{\frac{\alpha_2}{2}-\frac{1}{6}}\Big(
  \frac{\# M_g}{L^3}\Big)^{\frac{1}{2}}.
\end{equation}
\end{lemma}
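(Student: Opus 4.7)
The plan is to follow the anticommutator-based strategy behind Lemma~\ref{b^* bound by b} (as in \cite{dilutefermiBog,fermiupper,2024huangyangformulalowdensityfermi}), carefully tracking the extra factor $r^n$ coming from the derivative $\partial^{\mathbf{n}}$ on $\mathbf{g}_{x,\sigma}$ and the dependence on the general index sets $M_g,M_h$. For inequality~(\ref{ineq b1 general}), I would start from the identity
\begin{equation*}
\Vert b^*_{y,\sigma}(\varphi)\psi\Vert^{2}=\Vert b_{y,\sigma}(\varphi)\psi\Vert^{2}+\langle\psi,[b_{y,\sigma}(\varphi),b^*_{y,\sigma}(\varphi)]\psi\rangle,
\end{equation*}
and expand the bracket using~(\ref{anticommutator}), which produces $[b,b^{*}]=T_{1}-T_{2}$ with
\begin{equation*}
T_{1}=\int\!\!\int\bar\varphi(x-y)\varphi(x'-y)\langle\mathbf{h}_{x',\sigma},\mathbf{h}_{x,\sigma}\rangle a^{*}(\partial^{\mathbf{n}}\mathbf{g}_{x,\sigma})a(\partial^{\mathbf{n}}\mathbf{g}_{x',\sigma})\,dxdx'\geq 0
\end{equation*}
and $T_{2}\geq 0$ the dual expression with the roles of $\mathbf{h}$ and $\partial^{\mathbf{n}}\mathbf{g}$ interchanged.

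To estimate $\langle\psi,T_{1}\psi\rangle$ I would recognise $\langle\mathbf{h}_{x',\sigma},\mathbf{h}_{x,\sigma}\rangle=L^{-3}\sum_{k\in M_h}e^{ik(x-x')}$ as the kernel of the Fourier projection onto $M_h$, of operator norm one on $L^{2}(\Lambda_{L})$. Combined with the pointwise bound $\Vert\partial^{\mathbf{n}}\mathbf{g}_{x,\sigma}\Vert_{2}\le r^{n}(\#M_g/L^{3})^{1/2}$, which comes from $M_g\subset B_r$ and $|l^{\mathbf{n}}|\le r^{n}$ for $l\in M_g$, this gives
\begin{equation*}
\langle\psi,T_{1}\psi\rangle\le Cr^{2n}(\#M_g/L^{3})\Vert\varphi\Vert_{2}^{2}\Vert\psi\Vert_{2}^{2},
\end{equation*}
and the symmetric computation applied to $T_{2}$ yields the analogous bound with $\#M_h/L^{3}$; taking whichever of these is smaller gives the claimed factor $\mathfrak{c}^{1/2}$.

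For inequality~(\ref{ineq b 2 general}), I would combine~(\ref{ineq b1 general}) with the $L^{2}$-estimate
\begin{equation*}
\Vert\eta_{\phi^+}\Vert_{2},\,\Vert\eta_{\phi^+}^{\zeta^-}\Vert_{2}\le CL^{-3/2}\tilde{\rho}_{0}^{\alpha_{2}/2-1/6},
\end{equation*}
which follows from~(\ref{est of w_l,p}), (\ref{eta_p}) and the support of $\phi^+$, so that the commutator contribution already matches the target right-hand side. It then remains to bound $\Vert b_{y,\sigma}(\eta_{\phi^+})\psi\Vert$ by the same expression, which I would do by expanding $b_{y,\sigma}(\eta_{\phi^+})$ in Fourier modes as $\sum_{k\in M_h,\,l\in M_g}c_{k,l}\,a^{*}_{l,\sigma}a_{k,\sigma}$ and controlling the matrix $C=(c_{k,l})$ via $\Vert C\Vert_{\mathrm{HS}}^{2}\le r^{2n}(\#M_g/L^{3})\Vert\eta_{\phi^+}\Vert_{2}^{2}$, where the Hilbert--Schmidt estimate follows from Parseval $\sum_{k}|\hat\eta^{+}_{k-l}|^{2}\le\Vert\eta_{\phi^+}\Vert_{2}^{2}$ and the derivative bound $|l^{\mathbf{n}}|\le r^{n}$.

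The main obstacle is obtaining $\min\{\#M_g,\#M_h\}$ in~(\ref{ineq b1 general}) rather than a product: the two natural Fourier factorizations of $[b,b^{*}]$ each produce only one of $\#M_g$ or $\#M_h$, and one must prove both bounds and select the smaller. For~(\ref{ineq b 2 general}) the subtlety is sharper when $\#M_h=\infty$, as happens in applications with $M_h\subset(2\pi/L)\mathbb{Z}^3\setminus B_F$: only the $M_g$-side estimate survives, and the hypothesis $r\ll\tilde\rho_{0}^{1/3-\alpha_{2}}$ is essential to ensure that the $\phi^{+}$-cutoff on $\hat\eta^{+}_{k-l}$ remains effective for all $l\in M_g$.
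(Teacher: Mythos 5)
Your overall strategy is the intended one: the paper omits this proof, deferring to the anticommutator argument of \cite[Lemma 5.3]{dilutefermiBog}, and your treatment of $T_1$ (projection kernel of $M_h$ has operator norm one, $\sup_x\Vert\partial^{\mathbf{n}}\mathbf{g}_{x,\sigma}\Vert_2\le r^n(\#M_g/L^3)^{1/2}$) reproduces exactly that computation, while your momentum-space Hilbert--Schmidt argument is the right tool for (\ref{ineq b 2 general}). However, there is a genuine flaw in the step you propose for obtaining the minimum in (\ref{ineq b1 general}). Since $[b,b^*]=T_1-T_2$ with $T_1,T_2\ge 0$, the only role of $T_2$ is to be discarded: $\langle\psi,[b,b^*]\psi\rangle\le\langle\psi,T_1\psi\rangle$. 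An upper bound on $\langle\psi,T_2\psi\rangle$ only yields a \emph{lower} bound on the commutator (equivalently, it controls $\Vert b\psi\Vert$ by $\Vert b^*\psi\Vert$, the reverse inequality), so ``the symmetric computation applied to $T_2$'' cannot supply the $\#M_h/L^3$ half of $\mathfrak{c}$, and the plan as written only proves (\ref{ineq b1 general}) with $\#M_g/L^3$ in place of $\mathfrak{c}$.

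The repair is to bound the \emph{same} term $T_1$ a second way, and you already have the needed tool in your treatment of (\ref{ineq b 2 general}): expand $b^*_{y,\sigma}(\varphi)=\sum_{k\in M_h,\,l\in M_g}c_{k,l}\,a^*_{k,\sigma}a_{l,\sigma}$ with $\vert c_{k,l}\vert\le CL^{-3/2}r^n\vert\hat\varphi_{l-k}\vert$, note that $\langle\psi,T_1\psi\rangle=\sum_{l,l'\in M_g}(C^*C)_{l'l}\langle a_{l,\sigma}\psi,a_{l',\sigma}\psi\rangle\le \Vert C\Vert_{\mathrm{HS}}^2\Vert\psi\Vert^2$, using that the Gram matrix $\big(\langle a_{l,\sigma}\psi,a_{l',\sigma}\psi\rangle\big)_{l,l'}$ has operator norm at most $\Vert\psi\Vert^2$ (a consequence of $\Vert a(f)\Vert\le\Vert f\Vert_2$, i.e.\ (\ref{norm bound  of creation and annihilation})), and then perform the sum in $\Vert C\Vert_{\mathrm{HS}}^2=\sum_{k,l}\vert c_{k,l}\vert^2$ in either order: summing over $k\in M_h$ first gives $\#M_g$, summing over $l\in M_g$ first gives $\#M_h$, whence the factor $\min\{\#M_g,\#M_h\}L^{-3}=\mathfrak{c}$. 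With this substitution both (\ref{ineq b1 general}) and (\ref{ineq b 2 general}) follow (for the latter, the same Hilbert--Schmidt bound also controls $\Vert b_{y,\sigma}(\eta_{\phi^+})\psi\Vert$ uniformly in $\psi$, together with $\Vert\eta_{\phi^+}\Vert_2\le CL^{-3/2}\tilde\rho_0^{\alpha_2/2-1/6}$ as you state); but as written, your mechanism for the minimum would fail.
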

The proof of Lemma \ref{b^* bound by b general} does not differ much from the proof of \cite[Lemma 5.3]{dilutefermiBog}, we thus omit further details.


\par We also let (also with some abuse of notations)
\begin{equation}\label{define A_k}
  A_{\sigma}(k)=\int_{\Lambda_L}\frac{e^{-ikx}}{L^{\frac{3}{2}}}
  a^*(\mathbf{h}_{x,\sigma})a(\mathbf{g}_{x,\sigma})dx,
\end{equation}
with $\mathbf{h}$ and $\mathbf{g}$ given in (\ref{bold f g}). Then we have
\begin{lemma}\label{lemma A_k B_k}
  Let $A^{(1)}_{\sigma}(k)$ and $A^{(2)}_{\nu}(k)$ both have the form (\ref{define A_k}), and let $d_k$ be a family of constant , such that $\sup_k\vert d_k\vert=\mathfrak{d}<+\infty$. Then for any $N\in\mathbb{N}_{\geq0}$ and $\psi\in\mathcal{H}^{\wedge N}$,
  \begin{equation}\label{A_k B_k ineq}
  \begin{aligned}
    \Big\vert\sum_{k,\sigma,\nu}d_k
    \langle A^{(1)}_\sigma(-k)A^{(2)}_\nu(k)\psi,\psi \rangle\Big\vert
    \leq& C\mathfrak{d}\Big(\sum_{\sigma,\nu}\int_{\Lambda_L}
    \Vert a^*(\mathbf{h}^{(2)}_{y,\nu})a(\mathbf{g}^{(2)}_{y,\nu})
    \psi\Vert^2dy\Big)^{\frac{1}{2}}\\
    &\quad\quad\times
    \Big(\sum_{\sigma,\nu}\int_{\Lambda_L}
    \Vert a^*(\mathbf{g}^{(1)}_{x,\sigma})a(\mathbf{h}^{(1)}_{x,\sigma})
    \psi\Vert^2dx\Big)^{\frac{1}{2}}.
  \end{aligned}
  \end{equation}
\end{lemma}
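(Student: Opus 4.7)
The plan is to combine an adjoint move, one Cauchy--Schwarz in $(k,\sigma,\nu)$, and Plancherel's identity for Fock-space-valued functions on $\Lambda_L$, reducing the expression to the stated position-space bilinear form.

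First I would shift $A^{(1)}_\sigma(-k)$ across the inner product via its adjoint. Using $(a^*(f)a(g))^*=a^*(g)a(f)$ together with $\overline{e^{ikx}}=e^{-ikx}$ gives
\[
  (A^{(1)}_\sigma(-k))^* = \int_{\Lambda_L}\frac{e^{-ikx}}{L^{3/2}}\,a^*(\mathbf{g}^{(1)}_{x,\sigma})a(\mathbf{h}^{(1)}_{x,\sigma})\,dx,
\]
so that the sum to be bounded equals $\sum_{k,\sigma,\nu}d_k\langle A^{(2)}_\nu(k)\psi,(A^{(1)}_\sigma(-k))^*\psi\rangle$.

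Next, using $|d_k|\leq\mathfrak{d}$ together with Cauchy--Schwarz applied jointly over $(k,\sigma,\nu)$, this is controlled by
\[
  \mathfrak{d}\Bigl(\sum_{k,\sigma,\nu}\|A^{(2)}_\nu(k)\psi\|^2\Bigr)^{1/2}\Bigl(\sum_{k,\sigma,\nu}\|(A^{(1)}_\sigma(-k))^*\psi\|^2\Bigr)^{1/2}.
\]
Since $\|A^{(2)}_\nu(k)\psi\|$ does not depend on $\sigma$, the outer $\sigma$-sum contributes only a factor $\mathbf{q}$, matching the redundant $\sigma$-sum that appears on the right-hand side of the target inequality (and symmetrically for the $\nu$-sum in the other factor).

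To conclude, I would view $y\mapsto a^*(\mathbf{h}^{(2)}_{y,\nu})a(\mathbf{g}^{(2)}_{y,\nu})\psi$ as an $\mathcal{F}$-valued $L^2$ function on $\Lambda_L$; under the Fourier convention fixed in (\ref{vartheta_k}), the vector $A^{(2)}_\nu(k)\psi$ is precisely its $k$-th Fourier coefficient, so Parseval gives
\[
  \sum_k\|A^{(2)}_\nu(k)\psi\|^2 = \int_{\Lambda_L}\|a^*(\mathbf{h}^{(2)}_{y,\nu})a(\mathbf{g}^{(2)}_{y,\nu})\psi\|^2\,dy,
\]
together with the symmetric identity for $(A^{(1)}_\sigma(-k))^*\psi$. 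Substituting both back into the Cauchy--Schwarz estimate yields (\ref{A_k B_k ineq}) with an absolute constant $C$. The only point requiring real care is the adjoint and phase bookkeeping in the first step; Plancherel for $\mathcal{F}$-valued functions is routine, justified by expanding $\psi$ in the orthonormal Fock basis (\ref{orthonormal basis H wedge N}) and applying ordinary scalar Parseval component-wise, and so no genuine obstacle remains.
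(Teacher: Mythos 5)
Your proposal is correct and follows essentially the same route as the paper: a per-$k$ Cauchy--Schwarz on the inner product (the paper phrases this via the Fourier coefficients $B^{(j)}_\nu(m)$ of the quadratic operators, which coincide with $A^{(j)}_\nu(m)$), then Parseval to convert $\sum_k\Vert A^{(j)}(k)\psi\Vert^2$ into the position-space integral, then Cauchy--Schwarz over the indices. Your adjoint and phase bookkeeping, and the observation that the redundant spin sums on the right-hand side account for the factors of $\mathbf{q}$, are both accurate.
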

\begin{proof}
  \par We let
  \begin{equation}\label{ffff}
    f(x,y)=\langle a^*(\mathbf{h}^{(2)}_{y,\nu})a(\mathbf{g}^{(2)}_{y,\nu})
    \psi,a^*(\mathbf{g}^{(1)}_{x,\sigma})a(\mathbf{h}^{(1)}_{x,\sigma})
    \psi\rangle.
  \end{equation}
  Then
  \begin{equation}\label{re A^1 A^2}
    \langle A^{(1)}_\sigma(-k)A^{(2)}_\nu(k)\psi,\psi \rangle
    =\int_{\Lambda_L^2}L^{-3}e^{ik(x-y)}f(x,y)dxdy.
  \end{equation}
  Obviously, for $j=1,2$:
  \begin{equation}\label{re ffff 1}
    a^*(\mathbf{h}^{(j)}_{y,\nu})a(\mathbf{g}^{(j)}_{y,\nu})
    =\sum_{m}\frac{e^{imy}}{L^{\frac{3}{2}}}B^{(j)}_\nu(m),
  \end{equation}
  for some operators $B^{(j)}_\nu(m)$ independent of $y$. Therefore,
  \begin{equation}\label{re A^1 A^2 new}
    \big\vert \langle A^{(1)}_\sigma(-k)A^{(2)}_\nu(k)\psi,\psi \big\rangle
    \leq \Vert B^{(2)}_\nu(k)\psi\Vert \cdot\Vert B^{(1)*}_\sigma(-k)\psi\Vert.
  \end{equation}
  Notice that by (\ref{re ffff 1}), we have
  \begin{equation}\label{re ffff2}
    \begin{aligned}
    \int_{\Lambda_L}
    \Vert a^*(\mathbf{h}^{(j)}_{y,\nu})a(\mathbf{g}^{(j)}_{y,\nu})
    \psi\Vert^2dy
    =&\int_{\Lambda_L}\sum_{m,m^\prime}\frac{e^{i(m-m^\prime)y}}{L^3}
    \langle B^{(j)}_\nu(m)\psi,B^{(j)}_\nu(m^\prime)\psi\rangle dx\\
    =&\sum_m\Vert B^{(j)}_\nu(m)\psi\Vert^2.
    \end{aligned}
  \end{equation}
  Combining (\ref{re A^1 A^2 new}) and (\ref{re ffff2}), then (\ref{A_k B_k ineq}) can be obtained by using Cauchy-Schwartz inequality.
\end{proof}

\par By the fact for $p-k,q+k\notin B_F$ and $p,q\in B_F$:
\begin{equation}\label{int e}
  \frac{1}{\vert q+k\vert^2+\vert p-k\vert^2-\vert q\vert^2-\vert p\vert^2+2\epsilon_0}=\int_{0}^{\infty}e^{-\big(\vert q+k\vert^2+\vert p-k\vert^2-\vert q\vert^2-\vert p\vert^2+2\epsilon_0\big)t}dt,
\end{equation}
to deal with the Bogoliubov coefficients defined in Section \ref{Bog coeff sec} in the thermodynamic limit, we need additionally define the $t-$version of (\ref{define g,h}) (like in \cite{2024huangyangformulalowdensityfermi}):
\begin{equation}\label{htgt0}
    h_{x,\sigma}^t(z)=\sum_{p\notin
    B_F}\frac{e^{ipx}}{L^{\frac{3}{2}}}e^{-p^2t}f_{p,\sigma}(z),
    \quad
    g_{x,\sigma}^t(z)=\sum_{q\in B_F}\frac{e^{iqx}}{L^{\frac{3}{2}}}e^{q^2t}f_{q,\sigma}(z)
  \end{equation}
  for $(x,\sigma)\in\Lambda_L\times\mathcal{S}_{\mathbf{q}}$, and $t\in(0,\infty)$. The $t-$version of (\ref{tools1}) can be defined similarly. It is then easy to verify some basic properties of (\ref{htgt0}) collected in the lemma below
  \begin{lemma}\label{property htgt}
    For any $N\in\mathbb{N}_{\geq0}$, we have
    \begin{equation}\label{property gt norm}
      \Vert a(g_{x,\sigma}^t)\Vert\leq \frac{1}{L^3}\sum_{q\in B_F}e^{2q^2t},
    \end{equation}
    and
    \begin{equation}\label{property htgt N}
      \begin{aligned}
      &\sum_{\sigma}\int_{\Lambda_L}a^*(h_{x,\sigma}^t)a(h_{x,\sigma}^t)dx\leq e^{-2k_F^2t}
      \mathcal{N}_{re},\\
      &\sum_{\sigma}\int_{\Lambda_L}a(g_{x,\sigma}^t)a^*(g_{x,\sigma}^t)dx\leq e^{2k_F^2t}
      \tilde{\mathcal{N}}_{re}.
      \end{aligned}
    \end{equation}
    Moreover, for $\delta>0$ and $d\in\mathbb{R}$,
    \begin{equation}\label{property htgt N_h and i}
      \begin{aligned}
      &\sum_{\sigma}\int_{\Lambda_L}a^*(H_{x,\sigma}^t[d])a(H_{x,\sigma}^t[d])dx\leq e^{-2(k_F+\tilde{\mu}^{\frac{1}{2}}\tilde{\rho}_0^d)^2t}
      \mathcal{N}_{h}[d],\\
      &\sum_{\sigma}\int_{\Lambda_L}a(I_{x,\sigma}^t[\delta])a^*(I_{x,\sigma}^t[\delta])dx\leq e^{2(k_F-\tilde{\mu}^{\frac{1}{2}}\tilde{\rho}_0^\delta)^2t}
      {\mathcal{N}}_{i}[\delta],
      \end{aligned}
    \end{equation}
    and
    \begin{equation}\label{property L norm}
      \Vert a(L_{x,\sigma}^t[d])\Vert\leq \frac{1}{L^3}\sum_{p\in A_{F,d}}e^{-2p^2t}.
    \end{equation}
  \end{lemma}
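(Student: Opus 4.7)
The plan is to reduce every bound to a diagonal second-quantized sum via Plancherel and the orthogonality of the plane waves on $\Lambda_L$, and then extract the worst exponential factor from the momentum constraint defining the relevant Fermi region. The whole proof is a direct Fourier computation, so the main challenge is just careful bookkeeping, not any analytic difficulty.

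First I would establish (\ref{property gt norm}) and (\ref{property L norm}). Using the basic bound $\Vert a(f)\Vert\leq\Vert f\Vert_2$ from (\ref{norm bound  of creation and annihilation}) together with the orthonormal basis $\{f_{k,\sigma}\}$ in (\ref{orthonormal basis H}), a direct Plancherel computation gives $\Vert g_{x,\sigma}^t\Vert_2^2 = L^{-3}\sum_{q\in B_F}e^{2q^2 t}$ and $\Vert L_{x,\sigma}^t[d]\Vert_2^2 = L^{-3}\sum_{p\in A_{F,d}}e^{-2p^2 t}$, which yields the two stated operator-norm bounds (up to taking square roots on the right-hand side, as appropriate).

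For the integrated operator inequalities (\ref{property htgt N}) and (\ref{property htgt N_h and i}), I would expand each product $a^*(h_{x,\sigma}^t)a(h_{x,\sigma}^t)$, and the $g^t$, $H^t$, $I^t$ analogues, into a double Fourier sum over momenta $p,p'$, and then integrate in $x$ on $\Lambda_L$. The identity $\int_{\Lambda_L}e^{i(p-p')x}\,dx=L^3\delta_{p,p'}$ collapses each double sum to a diagonal one, giving, for instance,
\begin{equation*}
\sum_\sigma\int_{\Lambda_L}a^*(h_{x,\sigma}^t)a(h_{x,\sigma}^t)\,dx=\sum_{\sigma,\,p\notin B_F}e^{-2p^2 t}\,a^*_{p,\sigma}a_{p,\sigma},
\end{equation*}
and analogous expressions for $g^t$ (with $a_{q,\sigma}a^*_{q,\sigma}$), $H^t$ and $I^t$.

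To finish, I would pull out the worst exponential factor dictated by the sharp momentum cut-off in each case: for $p\notin B_F$, $e^{-2p^2 t}\leq e^{-2k_F^2 t}$; for $q\in B_F$, $e^{2q^2 t}\leq e^{2k_F^2 t}$; for $p\in P_{F,d}$, $e^{-2p^2 t}\leq e^{-2(k_F+\tilde{\mu}^{1/2}\tilde{\rho}_0^d)^2 t}$; and for $q\in\underline{B}_{F,\delta}$, $e^{2q^2 t}\leq e^{2(k_F-\tilde{\mu}^{1/2}\tilde{\rho}_0^\delta)^2 t}$. Factoring these scalar bounds out and recognizing the remaining diagonal second-quantized sums as $\mathcal{N}_{re}$, $\tilde{\mathcal{N}}_{re}$, $\mathcal{N}_h[d]$ and $\mathcal{N}_i[\delta]$ via (\ref{N_relaxation}) and (\ref{tools2}) delivers the four claimed inequalities. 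The only minor subtlety is keeping the $a^*a$ versus $aa^*$ ordering straight in the $g^t$ and $I^t$ cases, but since no reordering (and hence no use of the anticommutator (\ref{anticommutator})) is required here, this causes no loss.
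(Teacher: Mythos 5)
Your proof is correct and is exactly the routine verification the paper leaves implicit (the paper offers no argument beyond ``easy to verify''): bound $\Vert a(f)\Vert\leq\Vert f\Vert_2$, collapse the double Fourier sums via $\int_{\Lambda_L}e^{i(p-p')x}dx=L^3\delta_{p,p'}$, and pull out the extremal exponential using positivity of each diagonal term $a^*_{p,\sigma}a_{p,\sigma}$ or $a_{q,\sigma}a^*_{q,\sigma}$. You also rightly note that Plancherel gives the right-hand sides of (\ref{property gt norm}) and (\ref{property L norm}) as bounds on the \emph{squared} operator norms, which is how these estimates are actually used later in the paper, so the missing square there is a typo rather than a gap in your argument.
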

  It is also useful to have the following estimates. Their proofs however, are rather simple using the integral to bound the Riemann sum, as in the thermodynamic limit we have $L\to\infty$. Thus we omit further details.
  \begin{lemma}\label{est sum htgt}
    For $\epsilon_0=\tilde{\rho}_0^{\frac{2}{3}+\iota}$ for $\iota>0$, we have
    \begin{equation}\label{est sum}
      \frac{1}{L^3}\sum_{q\in B_F}\frac{1}{k_F^2-q^2+\epsilon_0}\leq C\tilde{\rho}_0^{\frac{1}{3}}
      \vert\ln\tilde{\rho}_0\vert\leq C\tilde{\rho}_0^{\frac{1}{3}-\varepsilon},
    \end{equation}
    for some $\varepsilon>0$ fixed but arbitrarily small. On the other hand, for $d\in\mathbb{R}$, we have
    \begin{equation}\label{est sum d}
      \frac{1}{L^3}\sum_{q\in B_F}\frac{1}{(k_F+\tilde{\mu}^{\frac{1}{2}}\tilde{\rho}_0^d)^2-q^2+\epsilon_0}\leq
      \left\{\begin{aligned}
      &C\tilde{\rho}_0^{\frac{1}{3}}
      \vert\ln\tilde{\rho}_0\vert\leq C\tilde{\rho}_0^{\frac{1}{3}-\varepsilon},
      \quad d>0\\
      &C\tilde{\rho}_0^{\frac{1}{3}-2d},
      \quad d\leq0
      \end{aligned}\right.
    \end{equation}
    and for $d>0$,
    \begin{equation}\label{est sum A_F d}
     \frac{1}{L^3} \sum_{p\in A_{F,d}}\frac{1}{k_F^2-q^2+\epsilon_0}\leq C\tilde{\rho}_0^{\frac{1}{3}}
      \vert\ln\tilde{\rho}_0\vert\leq C\tilde{\rho}_0^{\frac{1}{3}-\varepsilon},
    \end{equation}
    and for $d\leq0$
    \begin{equation}\label{est sum A_F d <=0}
      \frac{1}{L^3}\sum_{p\in A_{F,d}}\frac{1}{k_F^2-q^2+\epsilon_0}\leq  C\big(\tilde{\rho}_0^{\frac{1}{3}+d}+\tilde{\rho}_0^{\frac{1}{3}-\varepsilon}\big),
    \end{equation}
  \end{lemma}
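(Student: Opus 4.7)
\textbf{Proof proposal for Lemma \ref{est sum htgt}.} My plan is to replace each discrete sum by its Riemann integral counterpart (valid with uniform constants as $L \to \infty$), reduce to a one-dimensional radial integral via spherical coordinates, and estimate the resulting integrals by elementary calculus. The key quantitative input is $k_F \sim \tilde{\rho}_0^{1/3}$, so $\epsilon_0 = \tilde{\rho}_0^{\frac{2}{3}+\iota}$ satisfies $\epsilon_0/k_F^2 \sim \tilde{\rho}_0^{\iota}$, which guarantees that every logarithm in $\epsilon_0$ is $O(\vert\ln\tilde{\rho}_0\vert)$, hence $\leq C\tilde{\rho}_0^{-\varepsilon}$ for any fixed $\varepsilon>0$.

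For (\ref{est sum}), since the summand is continuous and positive on $B_F$ the Riemann sum is bounded, up to a constant independent of $L$, by $C\int_0^{k_F}r^{2}(k_F^2-r^2+\epsilon_0)^{-1}\,dr$. Using $k_F^2-r^2 \geq k_F(k_F-r)$ for $r\in(0,k_F)$ and substituting $s=k_F-r$, I obtain
\[
  \frac{1}{L^3}\sum_{q\in B_F}\frac{1}{k_F^2-q^2+\epsilon_0}
  \leq C k_F\int_0^{k_F}\frac{ds}{k_F s+\epsilon_0}
  \leq C\tilde{\rho}_0^{\frac{1}{3}}\ln\bigl(1+k_F^2/\epsilon_0\bigr)
  \leq C\tilde{\rho}_0^{\frac{1}{3}}\vert\ln\tilde{\rho}_0\vert.
\]
For (\ref{est sum d}) with $d>0$, the denominator is larger so the same bound applies. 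For $d\leq 0$ the denominator is bounded below uniformly on $B_F$ by $c\tilde{\mu}\tilde{\rho}_0^{2d}$ (the positive square dominates the subtracted $q^2\leq k_F^2$), and hence the sum is at most $|B_F|L^{-3}(\tilde{\mu}\tilde{\rho}_0^{2d})^{-1}\lesssim\tilde{\rho}_0^{\frac{1}{3}-2d}$.

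For (\ref{est sum A_F d}) and (\ref{est sum A_F d <=0}), the relevant denominator (with $p\notin B_F$) is $|p|^2-k_F^2+\epsilon_0$, and the radial integral becomes $\int_{k_F}^{k_F+\tilde{\mu}^{1/2}\tilde{\rho}_0^d} r^2(r^2-k_F^2+\epsilon_0)^{-1}\,dr$. I split at $r=2k_F$. On $k_F<r\leq 2k_F$ I use $r^2-k_F^2\geq k_F(r-k_F)$ exactly as above, which yields the $O(\tilde{\rho}_0^{1/3}\vert\ln\tilde{\rho}_0\vert)$ contribution; this handles (\ref{est sum A_F d}) entirely. When $d\leq 0$, the outer region $2k_F\leq r\leq k_F+\tilde{\mu}^{1/2}\tilde{\rho}_0^d$ is also present, but there $r^2-k_F^2 \geq \frac{3}{4}r^2$, so the integrand is $O(1)$ and the contribution is simply $\int_{2k_F}^{\tilde{\mu}^{1/2}\tilde{\rho}_0^d+k_F}dr \lesssim \tilde{\mu}^{1/2}\tilde{\rho}_0^d \sim \tilde{\rho}_0^{\frac{1}{3}+d}$, producing the extra term in (\ref{est sum A_F d <=0}).

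The argument is essentially routine and no substantive obstacle appears. The only place requiring a small amount of care, though not a deep one, is justifying uniform Riemann-sum-to-integral comparison despite the integrable singularity at $|q|=k_F$: one first excises a thin boundary layer of width $O(1/L)$ near the Fermi surface, where the summand is controlled individually by $\epsilon_0^{-1}$, so its total contribution is $O(L^{-2}\epsilon_0^{-1})\to 0$; outside that layer the summand is Lipschitz at scale much larger than $1/L$, so the standard comparison of midpoint sums to integrals applies with an $O(L^{-1})$ error that is absorbed in the constant. Beyond this minor bookkeeping the proof is purely calculus, which is why the lemma is stated with the proof omitted in the main text.
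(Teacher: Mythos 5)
Your proposal is correct and follows exactly the route the paper indicates (and omits as routine): bound the lattice sum by the corresponding radial integral, use $k_F^2-r^2\geq k_F(k_F-r)$ (resp. $r^2-k_F^2\geq k_F(r-k_F)$, splitting at $2k_F$ when $d\leq0$), and note $\ln(k_F^2/\epsilon_0)\sim\vert\ln\tilde{\rho}_0\vert\leq C\tilde{\rho}_0^{-\varepsilon}$, with the Riemann-sum comparison justified by excising an $O(1/L)$ boundary layer whose contribution vanishes as $L\to\infty$. Only two harmless slips: the intermediate prefactor in your first display should be $k_F^2$ rather than $k_F$ (the final bound $Ck_F\ln(1+k_F^2/\epsilon_0)\sim\tilde{\rho}_0^{1/3}\vert\ln\tilde{\rho}_0\vert$ is then exactly as claimed), and the boundary-layer contribution is $O(k_F^2L^{-1}\epsilon_0^{-1})$ rather than $O(L^{-2}\epsilon_0^{-1})$, which still tends to zero for fixed $\tilde{\rho}_0$.
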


  \par For $\varphi\in L^2(\Lambda_L)$, we let
  \begin{equation}\label{ct0}
    c^{t*}_{x,\sigma}(\varphi)=\int_{\Lambda_L}\varphi(x-y)a^*(h^t_{y,\sigma})a(g^t_{y,\sigma})
    dy.
  \end{equation}
 Similar to the proof of Lemma \ref{b^* bound by b}, we have
 \begin{lemma}\label{lemma bog htgt}
  For any $N\in\mathbb{N}_{\geq0}$ and $(x,\sigma)\in\Lambda_L\times\mathcal{S}_{\mathbf{q}}$, $\varphi\in  L^2(\Lambda_L)$, and $\psi\in \mathcal{H}^{\wedge N}$, we have
  \begin{equation}\label{ineq ct}
    \Vert c^{t*}_{x,\sigma}(\varphi)\Vert^2\leq \Vert c^{t}_{x,\sigma}(\varphi)\Vert^2
    +\Big(\frac{1}{L^3}\sum_{\substack{p\notin B_F}\\
    q\in B_F}\vert f_{p-q}\vert^2e^{-2(p^2-q^2)t}\Big)\Vert \psi\Vert^2.
  \end{equation}
\end{lemma}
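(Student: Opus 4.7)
The plan is to follow the template set by Lemma \ref{b^* bound by b} (which in turn mirrors \cite[Lemma 5.3]{dilutefermiBog}), treating the decaying weights $e^{-p^2t}$ and $e^{q^2t}$ built into $h^t,g^t$ in (\ref{htgt0}) as inert decorations that simply ride along with the algebra. First I would reduce the desired inequality to a commutator bound: since $c^{t*}_{x,\sigma}(\varphi)$ and $c^{t}_{x,\sigma}(\varphi)$ are formal adjoints on the Fock space,
\begin{equation*}
\Vert c^{t*}_{x,\sigma}(\varphi)\psi\Vert^2 - \Vert c^{t}_{x,\sigma}(\varphi)\psi\Vert^2
= \big\langle \psi,\,[c^{t}_{x,\sigma}(\varphi),\,c^{t*}_{x,\sigma}(\varphi)]\,\psi\big\rangle,
\end{equation*}
so it suffices to bound this expectation by the advertised scalar times $\Vert\psi\Vert^2$.

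Then I would expand the commutator using the standard CAR identity $[AB,CD]=A\{B,C\}D-\{A,C\}BD+CA\{B,D\}-C\{A,D\}B$. Only the two scalar anti-commutators $\{a(h^t_{y,\sigma}),a^*(h^t_{y',\sigma})\}=\langle h^t_{y',\sigma},h^t_{y,\sigma}\rangle$ and $\{a^*(g^t_{y,\sigma}),a(g^t_{y',\sigma})\}=\langle g^t_{y,\sigma},g^t_{y',\sigma}\rangle$ survive, producing two operator-valued double integrals against $\overline{\varphi(x-y)}\varphi(x-y')$. The crucial move is a normal-ordering step: replace $a^*(g^t_{y,\sigma})a(g^t_{y',\sigma})$ by $\langle g^t_{y,\sigma},g^t_{y',\sigma}\rangle - a(g^t_{y',\sigma})a^*(g^t_{y,\sigma})$. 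Inserting the Fourier expansions of $h^t,g^t$ and applying Plancherel in $y,y'$ collapses the resulting purely scalar contribution to exactly $L^{-3}\sum_{p\notin B_F,\,q\in B_F}|f_{p-q}|^2 e^{-2(p^2-q^2)t}$, the desired constant on the right-hand side.

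For the two remaining operator-valued double integrals, I would use the rank-one decompositions $\langle h^t_{y',\sigma},h^t_{y,\sigma}\rangle=\sum_{p\notin B_F}\phi_p(y)\overline{\phi_p(y')}$ and $\langle g^t_{y,\sigma},g^t_{y',\sigma}\rangle=\sum_{q\in B_F}\psi_q(y)\overline{\psi_q(y')}$, with $\phi_p(y)=L^{-3/2}e^{ipy-p^2t}$ and $\psi_q(y)=L^{-3/2}e^{-iqy+q^2t}$. Both pieces then factor as $\sum_p W_p^*W_p$ and $\sum_q U_q^*U_q$ for operators $W_p,U_q$ of the form $\int \phi_p(y)\overline{\varphi(x-y)}a^*(g^t_{y,\sigma})\,dy$ (and analog); these Hermitian squares have non-negative expectation on $\psi$, so with the overall minus signs generated by the commutator they contribute non-positively and may be discarded. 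The hard part is not analysis but bookkeeping: pairing the two conjugate factors of $\varphi$ with the correct Fourier exponents so that the scalar remainder lands exactly on the claimed closed form, and verifying that after normal-ordering the residual operator parts really assemble into Hermitian squares rather than some mixed ordering. Once those pairings are checked, the inequality follows immediately, which is why the paper defers to the analogous argument in \cite{dilutefermiBog}.
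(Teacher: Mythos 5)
Your proposal is correct and is essentially the argument the paper has in mind: it declares the proof "the same as Lemma \ref{b^* bound by b}" (i.e.\ the standard commutator/normal-ordering computation of \cite[Lemma 5.3]{dilutefermiBog}), which is exactly what you carry out — write $\Vert c^{t*}\psi\Vert^2-\Vert c^{t}\psi\Vert^2$ as the expectation of $[c^{t},c^{t*}]$, note the two surviving anticommutators, discard the two manifestly non-positive Hermitian-square remainders, and identify the scalar piece via Plancherel with $L^{-3}\sum_{p\notin B_F,\,q\in B_F}\vert f_{p-q}\vert^2e^{-2(p^2-q^2)t}$ (the only cosmetic slip is that one of the two minus signs comes from the normal-ordering anticommutation rather than from the commutator itself, which does not affect the argument).
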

The proof of Lemma \ref{lemma bog htgt} is the same as Lemma \ref{b^* bound by b}, we thus omit further details. Besides, similar to the definition (\ref{define btilde}), we can define a general version of (\ref{ct0}), and similarly reach a result corresponding to Lemma \ref{b^* bound by b general}. The proof, however, is also trivial, and we omit the lengthy explanation.

\section{A-Priori Estimates}\label{a-prior}

\subsection{Relative Entropy}
\
\par For a translation-invariant density matrix $\Gamma$ with $\gamma$ denoting its one-particle density matrix, we define the relative entropy\footnote{Similar definition of relative entropy can also be found in \cite{spin1upper}.}
\begin{equation}\label{relative entropy}
  \frac{1}{\beta}S(\Gamma,\tilde{G}_0)\coloneqq
  L^3\big(\tilde{P}^L_0[\tilde{G}_0]-\tilde{P}^L_0[\Gamma]\big).
\end{equation}
For $k\in (2\pi/L)\mathbb{Z}^3$ and $\sigma\in\mathcal{S}_{\mathbf{q}}$, the 1-pdm entropy is given by
\begin{equation}\label{entropy 1pdm}
  S(\gamma)\coloneqq-\sum_{k,\sigma}\big\{\hat{\gamma}(k,\sigma)\ln\hat{\gamma}(k,\sigma)
  +\big(1-\hat{\gamma}(k,\sigma)\big)\ln\big(1-\hat{\gamma}(k,\sigma)\big)\big\}.
\end{equation}
We have $S(\Gamma)\leq S(\gamma)$ (see \cite{4pages}\footnote{This reasonable inequality has been used in the field more than once without source. We find it useful to have the origin}). Since $\tilde{G}_0$ is quasi-free (see \cite{wickformula}) and preserves particle number, $S(\tilde{\gamma}_0)=S(\tilde{G}_0)$. The relative 1-pdm entropy is given by,
\begin{equation}\label{relative entropy 1pdm}
  S(\gamma,\tilde{\gamma}_0)\coloneqq
  \sum_{k,\sigma}\Big(\hat{\gamma}(k,\sigma)\ln\frac{\hat{\gamma}(k,\sigma)}
  {\hat{\tilde{\gamma}}_0(k,\sigma)}+\big(1-\hat{\gamma}(k,\sigma)\big)
  \ln\frac{1-\hat{\gamma}(k,\sigma)}{1-\hat{\tilde{\gamma}}_0(k,\sigma)}\Big).
\end{equation}
 Since $0\leq\hat{\gamma}(k,\sigma),\hat{\tilde{\gamma}}_0(k,\sigma)\leq1$, we have $S(\gamma,\tilde{\gamma}_0)\geq0$. By definition, we have
 \begin{equation}\label{trace diff}
   \tr\big((\mathcal{K}-\tilde{\mu}\mathcal{N})(\Gamma-\tilde{G}_0)\big)=\sum_{k,\sigma}
   \big(\hat{\gamma}(k,\sigma)-\hat{\tilde{\gamma}}_0(k,\sigma)\big)
   \big(\vert k\vert^2-\tilde{\mu}\big),
 \end{equation}
 we can use the $\tilde{\mu}$ version of (\ref{gamma_0}), (\ref{relative entropy}) and (\ref{trace diff}) to rewrite (\ref{relative entropy 1pdm}) into
 \begin{equation}\label{relative entropy 1pdm rewrt}
   \frac{1}{\beta}S(\gamma,\tilde{\gamma}_0)=
   \tr\big((\mathcal{K}-\tilde{\mu}\mathcal{N})(\Gamma-\tilde{G}_0)\big)
   -\frac{1}{\beta}S(\gamma)+\frac{1}{\beta}S(\tilde{\gamma}_0)\leq\frac{1}{\beta}
   S(\Gamma,\tilde{G}_0).
 \end{equation}

\par As in \cite[Lemma 3.3]{spin1upper}, we define for $0\leq t\leq 1$ and $0<t^\prime<1$
\begin{equation}\label{s function}
  s(t)=-t\ln t-(1-t)\ln(1-t),\quad s(t,t^\prime)=
  t\ln\frac{t}{t^\prime}+(1-t)\ln\frac{1-t}{1-t^\prime}.
\end{equation}
We have for $h>0$, $t_0=(1+e^h)^{-1}$ and $0\leq t\leq1$ (details see \cite[Lemma 3.3]{spin1upper}),
\begin{equation}\label{claim ABL and RS}
  ht\leq 2s(t,t_0)+\frac{h}{1+e^{h/2}}.
\end{equation}
We then apply (\ref{claim ABL and RS}) for $h=\beta(\vert k\vert^2-\tilde{\mu})$ and $t=\hat{\gamma}(k,\sigma)$, such that $t_0=\hat{\tilde{\gamma}}_0(k,\sigma)$, we deduce by summing up in $\sigma\in \mathcal{S}_{\mathbf{q}}$ and $\vert k\vert>k_F+\tilde{\mu}^{\frac{1}{2}}\tilde{\rho}_0^\delta$:
\begin{equation}\label{apply claim ABL and RS}
  \sum_{\sigma,k\in P_{F,\delta}}\beta(\vert k\vert^2-\tilde{\mu})\hat{\gamma}
(k,\sigma)\leq 2S(\gamma,\tilde{\gamma}_0)+\sum_{\sigma,k\in P_{F,\delta}}
\frac{\beta(\vert k\vert^2-\tilde{\mu})}{1+e^{\beta(\vert k\vert^2-\tilde{\mu})/2}}.
\end{equation}
From \cite[Appendix A]{spin1upper}, we know that
\begin{equation}\label{Riemann Sum 1}
  \sum_{\sigma,k\in P_{F,\delta}}
\frac{\beta(\vert k\vert^2-\tilde{\mu})}{1+e^{\beta(\vert k\vert^2-\tilde{\mu})/2}}\leq
 CL^3\beta^{-1}\tilde{\mu}^{\frac{1}{2}}e^{-\beta\tilde{\mu}\tilde{\rho}_0^{\delta}}+CL^{2}.
\end{equation}
Therefore, using (\ref{restriction on betamu}), we have
\begin{equation}\label{aproribound number of particle outside}
\begin{aligned}
  \tr\mathcal{N}_{h}[\delta]\Gamma=&\sum_{\sigma,k\in P_{F,\delta}}\hat{\gamma}(k,\sigma)
  \leq(\beta\tilde{\mu})^{-1}\tilde{\rho}_0^{-\delta}
  \sum_{\sigma,k\in P_{F,\delta}}\beta(\vert k\vert^2-\tilde{\mu})\hat{\gamma}
(k,\sigma)\\
\leq&C\tilde{\rho}_0^{-\frac{2}{3}-\delta}\beta^{-1}S(\gamma,\tilde{\gamma}_0)+
CL^3\tilde{\rho}_0^{1+2\alpha_1-\delta}
e^{-\tilde{\rho}_0^{-\alpha_1+\delta}}+CL^{2}\tilde{\rho}_0^{\alpha_1-\delta}.
\end{aligned}
\end{equation}
 We also apply (\ref{claim ABL and RS}) for $h=\beta(\tilde{\mu}-\vert k\vert^2)$ and $t=1-\hat{\gamma}
(k,\sigma)$, notice that this time $t_0=1-\hat{\tilde{\gamma}}_0(k,\sigma)$. Also notice that $s(1-t,1-t_0)=s(t,t_0)$. We deduce by summing up in $\sigma\in \mathcal{S}_{\mathbf{q}}$ and $\vert k\vert\leq k_F-\tilde{\mu}^{\frac{1}{2}}\tilde{\rho}_0^\delta$:
\begin{equation}\label{apply claim ABL and RS2}
  \sum_{\sigma,k\in \underline{B}_{F,\delta}}\beta(\tilde{\mu}-\vert k\vert^2)\big(1-\hat{\gamma}
(k,\sigma)\big)\leq 2S(\gamma,\tilde{\gamma}_0)+\sum_{\sigma,k\in \underline{B}_{F,\delta}}
\frac{\beta(\tilde{\mu}-\vert k\vert^2)}{1+e^{\beta(\tilde{\mu}-\vert k\vert^2)/2}}.
\end{equation}
Also, similar to the calculation in \cite[Appendix A]{spin1upper}(using the error function $erf(z)$ rather than the complementary error function $erfc(z)$), we have 
\begin{equation}\label{Riemann Sum 2}
  \sum_{\sigma,k\in \underline{B}_{F,\delta}}
\frac{\beta(\tilde{\mu}-\vert k\vert^2)}{1+e^{\beta(\tilde{\mu}-\vert k\vert^2)/2}}\leq
 CL^3\beta^{-1}\tilde{\mu}^{\frac{1}{2}}e^{-\beta\tilde{\mu}\tilde{\rho}_0^{\delta}}+CL^{2}.
\end{equation}
Similar to (\ref{aproribound number of particle outside}), we have
\begin{equation}\label{aproribound number of particle inside2}
\begin{aligned}
  \tr\mathcal{N}_{i}[\delta]\Gamma=&\sum_{\sigma,k\in \underline{B}_{F,\delta}}\big(1-\hat{\gamma}(k,\sigma)\big)
  \leq(\beta\tilde{\mu})^{-1}\tilde{\rho}_0^{-\delta}
  \sum_{\sigma,k\in \underline{B}_{F,\delta}}
  \beta(\tilde{\mu}-\vert k\vert^2)\big(1-\hat{\gamma}
(k,\sigma)\big)\\
\leq&C\tilde{\rho}_0^{-\frac{2}{3}-\delta}\beta^{-1}S(\gamma,\tilde{\gamma}_0)+
CL^3\tilde{\rho}_0^{1+2\alpha_1-\delta}
e^{-\tilde{\rho}_0^{-\alpha_1+\delta}}+CL^{2}\tilde{\rho}_0^{\alpha_1-\delta}.
\end{aligned}
\end{equation}
On the other hand, for a density matrix $\Gamma$ and $\delta>0$
\begin{equation}\label{est N_s and N_l}
  \tr\mathcal{N}_s[\delta]\Gamma,\tr\mathcal{N}_{l}[\delta]\Gamma
  \leq C\tilde{\rho}_0^{1+\delta}L^3.
\end{equation}
Combining (\ref{est N_s and N_l}) and (\ref{aproribound number of particle outside}), we have for an approximate Gibbs state $\Gamma$,
\begin{equation}\label{est trNreGamma}
  \tr\mathcal{N}_{re}\Gamma\leq C\tilde{\rho}_0^{-\frac{2}{3}-\delta}\beta^{-1}S(\gamma,\tilde{\gamma}_0)+
CL^3\big(\tilde{\rho}_0^{1+\delta}+\tilde{\rho}_0^{1+2\alpha_1-\delta}
e^{-\tilde{\rho}_0^{-\alpha_1+\delta}}\big)
+CL^{2}\tilde{\rho}_0^{\alpha_1-\delta}
\end{equation}
for some $\delta>0$. Similarly, using (\ref{est N_s and N_l}) and (\ref{aproribound number of particle inside2}) we deduce
\begin{equation}\label{est trNtildereGamma}
  \tr\tilde{\mathcal{N}}_{re}\Gamma\leq C\tilde{\rho}_0^{-\frac{2}{3}-\delta}\beta^{-1}S(\gamma,\tilde{\gamma}_0)+
CL^3\big(\tilde{\rho}_0^{1+\delta}
+\tilde{\rho}_0^{1+2\alpha_1-\delta}e^{-\tilde{\rho}_0^{-\alpha_1+\delta}}\big)
+CL^{2}\tilde{\rho}_0^{\alpha_1-\delta}.
\end{equation}

\par We also define
\begin{equation}\label{define K_s}
  \mathcal{K}_s=\sum_{\sigma,k\notin B_F}\big(\vert k\vert^2-k_F^2\big)a_{k,\sigma}^*a_{k,\sigma}+
  \sum_{\sigma,k\in B_F}\big(k_F^2-\vert k\vert^2\big)a_{k,\sigma}a_{k,\sigma}^*.
\end{equation}
It is easy to check
\begin{equation}\label{K_s and K}
  \mathcal{K}_s=\mathcal{K}-\tilde{\mu}N+\mathbf{q}\tilde{\mu}\bar{N}_0
  -\sum_{\sigma,k\in B_F}\vert k\vert^2
\end{equation}
Using (\ref{apply claim ABL and RS}), (\ref{Riemann Sum 1}), (\ref{apply claim ABL and RS2}) and (\ref{Riemann Sum 2}), we have
\begin{equation}\label{est trK_sGamma}
  \begin{aligned}
  \tr\mathcal{K}_s\Gamma&=\sum_{\sigma,k\notin B_F}
  \big(\vert k\vert^2-k_F^2\big)\hat{\gamma}(k,\sigma)
  +\sum_{\sigma,k\in B_F}\big(k_F^2-\vert k\vert^2\big)\big(1-\hat{\gamma}(k,\sigma)\big)\\
  &\leq C\beta^{-1}S(\gamma,\gamma_0)
  +CL^3\tilde{\rho}_0^{\frac{5}{3}+2\alpha_1}+CL^2\beta^{-1}.
  \end{aligned}
\end{equation}
For $\delta<0$, We also define
\begin{equation}\label{K_h[delta]}
  \mathcal{K}_h[\delta]=\sum_{\sigma,k\in P_{F,\delta}}\vert k\vert^2a_{k,\sigma}^*a_{k,\sigma}
\end{equation}
Similarly, we have
\begin{equation}\label{est trK_h[delta]Gamma}
  \tr\mathcal{K}_{h}[\delta]\Gamma\leq C\beta^{-1}S(\gamma,\gamma_0)
  +CL^3\tilde{\rho}_0^{\frac{5}{3}+2\alpha_1}e^{-\tilde{\rho}_0^
  {-\alpha_1+\delta}}+CL^2\beta^{-1}.
\end{equation}

\subsection{A-Priori Estimates on the Hamiltonian}
\
\par For any $N\in\mathbb{N}_{\geq0}$, $k\in(2\pi/L)\mathbb{Z}^3$ and $\sigma\in\mathcal{S}_{\mathbf{q}}$, we can rewrite $H_N$ in the form of creation and annihilation operators:
\begin{equation}\label{second quantization H_N}
  H_N=\mathcal{K}_N+\mathcal{V}_N=\sum_{k,\sigma}\vert
  k\vert^2a_{k,\sigma}^*a_{k,\sigma}
  +\frac{1}{2L^3}\sum_{k,p,q,\sigma,\nu}\hat{v}_k
  a^*_{p-k,\sigma}a^*_{q+k,\nu}a_{q,\nu}a_{p,\sigma},
\end{equation}
where
\begin{equation}\label{define v_k}
  \hat{v}_k
  =\int_{\mathbb{R}^3}v(x)e^{-ik\cdot x}dx.
\end{equation}
We decompose the potential part $\mathcal{V}_N$  as
\begin{equation}\label{split V}
  \mathcal{V}_N=\mathcal{V}_0+\mathcal{V}_1+\mathcal{V}_{21}+\mathcal{V}_{22}+\mathcal{V}_{23}
  +\mathcal{V}_3+\mathcal{V}_4,
\end{equation}
with
\begin{equation}\label{split V detailed}
  \begin{aligned}
  &\mathcal{V}_0=\frac{1}{2L^3}\sum_{k,p,q,\sigma,\nu}
 \hat{v}_ka^*_{p-k,\sigma}a^*_{q+k,\nu}a_{q,\nu}a_{p,\sigma}\chi_{p,p-k\in B_F}\chi_{q,q+k\in B_F}\\
  &\mathcal{V}_1=\frac{1}{L^3}\sum_{k,p,q,\sigma,\nu}
  \hat{v}_k(a^*_{p-k,\sigma}a^*_{q+k,\nu}a_{q,\nu}a_{p,\sigma}+h.c.)\chi_{p-k\notin B_F}\chi_{p\in B_F}\chi_{q,q+k\in B_F}\\
  &\mathcal{V}_{21}=\frac{1}{2L^3}\sum_{k,p,q,\sigma,\nu}
  \hat{v}_k(a^*_{p-k,\sigma}a^*_{q+k,\nu}a_{q,\nu}a_{p,\sigma}+h.c.)\chi_{p-k\notin B_F}\chi_{q+k\notin B_F}\chi_{p\in B_F}\chi_{q\in B_F}\\
  &\mathcal{V}_{22}=\frac{1}{2L^3}\sum_{k,p,q,\sigma,\nu}
  \hat{v}_ka^*_{p-k,\sigma}a^*_{q+k,\nu}a_{q,\nu}a_{p,\sigma}\\
  &\quad\quad\quad\quad\quad\quad\quad\quad
  \times(\chi_{p,p-k\notin B_F}\chi_{q,q+k\in B_F}+\chi_{p,p-k\in B_F}\chi_{q,q+k\notin B_F})\\
  &\mathcal{V}_{23}=\frac{1}{2L^3}\sum_{k,p,q,\sigma,\nu}
  \hat{v}_k(a^*_{p-k,\sigma}a^*_{q+k,\nu}a_{q,\nu}a_{p,\sigma}+h.c.)\chi_{p-k\notin B_F}\chi_{q+k\in B_F}\chi_{p\in B_F}\chi_{q\notin B_F}\\
  &\mathcal{V}_3=\frac{1}{L^3}\sum_{k,p,q,\sigma,\nu}
  \hat{v}_k(a^*_{p-k,\sigma}a^*_{q+k,\nu}a_{q,\nu}a_{p,\sigma}+h.c.)\chi_{p-k\notin B_F}\chi_{p\in B_F}\chi_{q,q+k\notin B_F}\\
  &\mathcal{V}_4=\frac{1}{2L^3}\sum_{k,p,q,\sigma,\nu}
  \hat{v}_ka^*_{p-k,\sigma}a^*_{q+k,\nu}a_{q,\nu}a_{p,\sigma}\chi_{p,p-k\notin B_F}\chi_{q,q+k\notin B_F}
  \end{aligned}
\end{equation}

\par  Before we get down to the evaluations of three renormalizations, we establish
 some useful estimates about the terms in (\ref{split V detailed}). The proofs are highly related with lemmas in \cite[Section 3]{WJH}, thus we only give brief explanations to Lemmas \ref{lemma V_0}-\ref{lemma V_22,23}.
\begin{lemma}\label{lemma V_0}
 For any $N\in\mathbb{N}_{\geq0}$, we can write $\mathcal{V}_0$ defined in (\ref{split V detailed}) by
\begin{equation}\label{est V_0}
  \mathcal{V}_0=\frac{\hat{v}_0}{2L^3}\sum_{\sigma\neq\nu}
  (\bar{N}_0-\tilde{\mathcal{N}}_{re,\sigma})
(\bar{N}_0-\tilde{\mathcal{N}}_{re,\nu})+\mathcal{E}_{\mathcal{V}_0},
\end{equation}
where
\begin{equation}\label{est E_V_0}
  \pm\mathcal{E}_{\mathcal{V}_0}\leq C\tilde{\rho}_0^{\frac{8}{3}}L^{3}+C\tilde{\rho}_0\mathcal{N}_i[\delta_1]+
  C\tilde{\rho}_0^{1+\delta_1}\tilde{\mathcal{N}}_{re},
\end{equation}
for $\delta_1>0$.
\end{lemma}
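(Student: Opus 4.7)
The plan is to decompose $\mathcal{V}_0$ according to the momentum transfer $k$ and the spin pair $(\sigma,\nu)$ into three pieces: (A) the diagonal part $k=0,\sigma\neq\nu$; (B) the diagonal part $k=0,\sigma=\nu$; and (C) the off-diagonal part $k\neq 0$. Using $\sum_{p\in B_F}a^*_{p,\sigma}a_{p,\sigma}=\bar N_0-\tilde{\mathcal{N}}_{re,\sigma}$ from (\ref{N_relaxation}), and noting that for $\sigma\neq\nu$ the operators $a^*_{p,\sigma}a^*_{q,\nu}a_{q,\nu}a_{p,\sigma}$ anticommute without any Pauli obstruction, piece (A) equals
\begin{equation*}
  \frac{\hat{v}_0}{2L^3}\sum_{\sigma\neq\nu}(\bar N_0-\tilde{\mathcal{N}}_{re,\sigma})(\bar N_0-\tilde{\mathcal{N}}_{re,\nu}),
\end{equation*}
which is precisely the advertised main term. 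Piece (B), by the same anticommutation together with the Pauli constraint $(a^*_{p,\sigma})^2=0$, simplifies to $\tfrac{\hat v_0}{2L^3}\sum_\sigma\big[(\bar N_0-\tilde{\mathcal{N}}_{re,\sigma})^2-(\bar N_0-\tilde{\mathcal{N}}_{re,\sigma})\big]$.

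The remainder $\mathcal{E}_{\mathcal{V}_0}$ thus consists of (B) plus (C). The expansion of (B) contains a large constant $\tfrac{\mathbf{q}\hat v_0\bar N_0^2}{2L^3}\sim\tilde\rho_0^2 L^3$, of the same order as the main term, so this piece cannot be bounded on its own. The crucial step is to show that this constant is exactly cancelled by the exchange contribution inside (C), namely the operator-level identity picking out the fully-contracted part of the quartic operator after particle-hole normal ordering, corresponding to the pairing $a^*_{p-k,\sigma}\leftrightarrow a_{q,\nu}$ and $a^*_{q+k,\nu}\leftrightarrow a_{p,\sigma}$ (which forces $\sigma=\nu$ and $k=p-q$). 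After this cancellation the surviving constant reads
\begin{equation*}
  \frac{\mathbf{q}}{2L^3}\Big(\hat v_0\bar N_0^2-\sum_{p,q\in B_F}\hat v_{p-q}\Big),
\end{equation*}
which Taylor-expanding the smooth even function $\hat v$ around $0$ (using $|p-q|\leq 2k_F\sim\tilde\rho_0^{1/3}$ and $\hat v(k)=\hat v_0+O(|k|^2)$ since $v$ is assumed smooth enough) bounds by $Ck_F^2\bar N_0^2/L^3\sim C\tilde\rho_0^{8/3}L^3$. This yields the first term of (\ref{est E_V_0}).

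The remaining operator pieces are bilinear (and higher) expressions in hole creation/annihilation operators, arising from the $\tilde{\mathcal{N}}_{re,\sigma}$-linear and $\tilde{\mathcal{N}}_{re,\sigma}^2$ parts of (B) and from the non-fully-contracted normal-ordered contributions of (C). Using the a-priori inequality (\ref{est of Ntildere^2}), the simple facts (\ref{simple facts}), and the bound $|\hat v_k|\leq\hat v_0\leq C$, a Cauchy--Schwarz argument at the operator level controls these by $C\tilde\rho_0\mathcal{N}_i[\delta_1]+C\tilde\rho_0^{1+\delta_1}\tilde{\mathcal{N}}_{re}$, matching the remaining terms in (\ref{est E_V_0}). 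The main obstacle is the operator-level identification of the exchange cancellation in the previous paragraph: verifying it merely in expectation on the free Fermi sea is insufficient; one must realise it as an operator identity modulo quadratic hole-operator residues, which requires careful anticommutator bookkeeping. This is the same structural strategy as in \cite[Lemma 3.1]{WJH}, which the proof essentially adapts.
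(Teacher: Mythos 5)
Your decomposition is structurally the same as the paper's (the paper splits $\mathcal{V}_0=\mathcal{V}_{01}+\mathcal{V}_{02}+\mathcal{V}_{03}$ with $\mathcal{V}_{01}$ the $k=0$ part, $\mathcal{V}_{02}$ the exchange part $k=p-q\neq0$, $\sigma=\nu$, and $\mathcal{V}_{03}$ the rest, which is then normal-ordered into hole operators and controlled via the $S/I$ splitting and (\ref{est of Ntildere^2})). Your bookkeeping of the surviving constant $\frac{\mathbf{q}}{2L^3}\big(\hat v_0\bar N_0^2-\sum_{p,q\in B_F}\hat v_{p-q}\big)$ and its Taylor bound by $C\tilde\rho_0^{8/3}L^3$ is also correct.

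However, there is a genuine gap in your treatment of piece (B). You cancel only the \emph{constant} $\tfrac{\mathbf{q}\hat v_0\bar N_0^2}{2L^3}$ against the exchange term and propose to bound the $\tilde{\mathcal{N}}_{re,\sigma}$-linear part of (B) directly by Cauchy--Schwarz. That linear part is $-\tfrac{\hat v_0(2\bar N_0-1)}{2L^3}\sum_\sigma\tilde{\mathcal{N}}_{re,\sigma}\sim\tilde\rho_0\,\tilde{\mathcal{N}}_{re}$, and this \emph{cannot} be absorbed into the stated error: writing $\tilde{\mathcal{N}}_{re}=\mathcal{N}_s[\delta_1]+\mathcal{N}_i[\delta_1]$ and using $\mathcal{N}_s[\delta_1]\leq C\tilde\rho_0^{1+\delta_1}L^3$ from (\ref{simple facts}), the surface contribution yields a constant $C\tilde\rho_0^{2+\delta_1}L^3$, which for $\delta_1<\tfrac13$ is much larger than $\tilde\rho_0^{8/3}L^3$ (indeed larger than the second-order term $\tilde\rho_0^{7/3}L^3$ the whole analysis is after), and $\tilde\rho_0\tilde{\mathcal{N}}_{re}$ is obviously not dominated by $\tilde\rho_0^{1+\delta_1}\tilde{\mathcal{N}}_{re}$. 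The cancellation between (B) and the exchange term must therefore be carried out for the \emph{entire operator}: the exchange piece equals $-\frac{1}{2L^3}\sum_{p\neq q,\sigma}\hat v_{p-q}\,a^*_{p,\sigma}a_{p,\sigma}a^*_{q,\sigma}a_{q,\sigma}$, and after replacing $\hat v_{p-q}$ by $\hat v_0$ (at the cost of an operator error $\leq C\tilde\rho_0^{8/3}L^3$, exactly your Taylor estimate but applied inside the quartic form) it equals $-\tfrac{\hat v_0}{2L^3}\sum_\sigma(\bar N_0-\tilde{\mathcal{N}}_{re,\sigma})(\bar N_0-1-\tilde{\mathcal{N}}_{re,\sigma})$, which cancels (B) identically — constant, linear, and quadratic parts all at once. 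This is precisely (\ref{cal V_02}) in the paper; with that correction the rest of your argument goes through.
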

\begin{proof}
  \par We rewrite
\begin{equation*}
  \mathcal{V}_0=\mathcal{V}_{01}+\mathcal{V}_{02}+\mathcal{V}_{03},
\end{equation*}
where, for
\begin{equation*}
  \begin{aligned}
  &\mathcal{V}_{01}=\frac{1}{2L^3}\sum_{k,p,q,\sigma,\nu}
  \hat{v}_ka^*_{p-k,\sigma}a^*_{q+k,\nu}a_{q,\nu}a_{p,\sigma}\chi_{p,p-k\in B_F}\chi_{q,q+k\in B_F}\chi_{k=0}\\
  &\mathcal{V}_{02}=\frac{1}{2L^3}\sum_{k,p,q,\sigma,\nu}
  \hat{v}_ka^*_{p-k,\sigma}a^*_{q+k,\nu}a_{q,\nu}a_{p,\sigma}\chi_{p,p-k\in B_F}\chi_{q,q+k\in B_F}\chi_{k\neq0}\chi_{\sigma=\nu}\chi_{p=q+k}\\
  &\mathcal{V}_{03}=\frac{1}{2L^3}\sum_{k,p,q,\sigma,\nu}
  \hat{v}_ka^*_{p-k,\sigma}a^*_{q+k,\nu}a_{q,\nu}a_{p,\sigma}\chi_{p,p-k\in B_F}\chi_{q,q+k\in B_F}\chi_{k\neq0}\\
  &\quad\quad\quad\quad\quad\quad\quad\quad
  \times(1-\chi_{\sigma=\nu}\chi_{p=q+k})
  \end{aligned}
\end{equation*}
\par For $\mathcal{V}_{01}$, we have
\begin{equation}\label{cal V_01}
  \begin{aligned}
  \mathcal{V}_{01}&=\frac{1}{2L^3}\sum_{p,q,\sigma,\nu}\hat{v}_0a^*_{p,\sigma}a^*_{q,\nu}
  a_{q,\nu}a_{p,\sigma}\chi_{p\in B_F}\chi_{q\in B_F}\\
  &=\frac{\hat{v}_0}{2L^3}\sum_{\sigma,\nu}(\bar{N}_{0}-\tilde{\mathcal{N}}_{re,\sigma})
  (\bar{N}_{0}-\delta_{\sigma,\nu}-\tilde{\mathcal{N}}_{re,\nu})
  \end{aligned}
\end{equation}

\par For $\mathcal{V}_{02}$, we have
\begin{equation}\label{cal V_02 1}
\begin{aligned}
  \mathcal{V}_{02}
  =\frac{1}{2L^3}\sum_{p,q,\sigma}
  \hat{v}_{p-q}a^*_{q,\sigma}a^*_{p,\sigma}a_{q,\sigma}a_{p,\sigma}\chi_{p,q\in B_F}\chi_{p-q\neq0}.
\end{aligned}
\end{equation}
Since for $\theta\in (0,1)$,
\begin{equation}\label{v_k-v_0}
  \begin{aligned}
  \big\vert\hat{v}_k-\hat{v}_0\big\vert
  &=\Big\vert \int_{\mathbb{R}^3}v(x)\big(e^{-ikx}-1\big)dx\Big\vert\\
  &=\Big\vert \int_{\mathbb{R}^3}v(x)\big(-ik\cdot x-
  \theta^2e^{-ik\cdot\theta x} k\otimes k(x,x) \big)dx\Big\vert\\
  &\leq C\vert k\vert^2\int_{\mathbb{R}^3}\vert x\vert^2v(x)dx\leq C\vert k\vert^2.
  \end{aligned}
\end{equation}
Then for $\psi\in\mathcal{H}^{\wedge N}$,
\begin{equation}\label{cal V_02 2}
\begin{aligned}
  &\Big\vert\sum_{p,q,\sigma}
  \frac{1}{L^3}\big(\hat{v}_{p-q}-\hat{v}_0\big)\chi_{p,q\in B_F}\chi_{p-q\neq0}
  \langle a^*_{q,\sigma}a^*_{p,\sigma}a_{q,\sigma}a_{p,\sigma}\psi,\psi\rangle\Big\vert\\
  &\leq  \frac{C}{L^3}\sum_{p,q,\sigma}
  \vert p-q\vert^2\chi_{p,q\in B_F}\chi_{p-q\neq0}
  \Vert a_{q,\sigma}a_{p,\sigma}\psi\Vert^2
  \leq \tilde{\rho}_0^{\frac{8}{3}}L^3\Vert\psi\Vert^2.
\end{aligned}
\end{equation}
Since $a_{p,\sigma}a_{q,\sigma}=0$ when $p=q$, we have,
\begin{equation}\label{cal V_02}
  \mathcal{V}_{02}=-\frac{\hat{v}_0}{2L^3}\sum_{\sigma}(\bar{N}_0-\tilde{\mathcal{N}}_{re,\sigma})
  (\bar{N}_0-1-\tilde{\mathcal{N}}_{re,\sigma})+\mathcal{E}_{\mathcal{V}_{02}},
\end{equation}
where
\begin{equation}\label{est E_V_02}
  \pm\mathcal{E}_{\mathcal{V}_{02}}\leq C\tilde{\rho}_0^{\frac{8}{3}}L^3.
\end{equation}

\par For $\mathcal{V}_{03}$,
\begin{equation}\label{cal V_03 5}
  \mathcal{V}_{03}=\mathcal{V}_{03,r}+\mathcal{E}_{\mathcal{V}_{03}}.
\end{equation}
where
\begin{equation}\label{cal V_03 5.5}
  \mathcal{V}_{03,r}=\frac{1}{2L^3}\sum_{k,p,q,\sigma,\nu}
  \hat{v}_ka_{q,\nu}a_{p,\sigma}a^*_{p-k,\sigma}a^*_{q+k,\nu}
  \chi_{p,p-k\in B_F}\chi_{q,q+k\in B_F},
\end{equation}
and, by (\ref{est of Ntildere^2}),
\begin{equation}\label{cal V_03 6}
\begin{aligned}
  \pm\mathcal{E}_{\mathcal{V}_{03}}&\leq  CL^{-3}\mathcal{N}_{re}^2
  +\tilde{\rho}_0^{\frac{8}{3}}L^3\\
  &\leq C\tilde{\rho}_0\mathcal{N}_i[\delta_1]+
  C\tilde{\rho}_0^{1+\delta_1}\tilde{\mathcal{N}}_{re}+\tilde{\rho}_0^{\frac{8}{3}}L^3,
\end{aligned}
\end{equation}
for $\delta_1>0$.
\begin{equation}\label{cal V_03 7}
  \mathcal{V}_{03,r}=\frac{1}{2}\sum_{\sigma,\nu}
  \int_{\Lambda_L^2}v(x-y)
  a(g_{y,\nu})a(g_{x,\sigma})a^*(g_{x,\sigma})a^*(g_{y,\nu})dxdy.
\end{equation}
Using
\begin{equation*}
  \begin{aligned}
  a(g_{x,\sigma})a^*(g_{x,\sigma})&=
  a(S_{x,\sigma}[\delta_1])a^*(S_{x,\sigma}[\delta_1])
  +a(I_{x,\sigma}[\delta_1])a^*(S_{x,\sigma}[\delta_1])\\
  &+a(S_{x,\sigma}[\delta_1])a^*(I_{x,\sigma}[\delta_1])
  +a(I_{x,\sigma}[\delta_1])a^*(I_{x,\sigma}[\delta_1]),
  \end{aligned}
\end{equation*}
We can rewrite $\mathcal{V}_{03,r}=\sum_{j=1}^{4}\mathcal{V}_{03,rj}$. It is straight-forward to bound them by
\begin{align}
\pm\mathcal{V}_{03,r1}\leq& C\tilde{\rho}_0^{1+\delta_1}\tilde{\mathcal{N}}_{re}
\label{V_03,r1}\\
\pm\mathcal{V}_{03,r2},\mathcal{V}_{03,r3}\leq& C\tilde{\rho}_0^{1+\delta_1}\tilde{\mathcal{N}}_{re}
+C\tilde{\rho}_0\mathcal{N}_i[\delta_1]\label{V_03,r2,r3}\\
\pm\mathcal{V}_{03,r4}\leq& C\tilde{\rho}_0\mathcal{N}_i[\delta_1]\label{V_03,r4}
\end{align}
Combining all the results above we finish the proof of Lemma \ref{lemma V_0}.

\end{proof}

\begin{lemma}\label{lemma V_1}
For any $N\in\mathbb{N}_{\geq0}$, $\mathcal{V}_1$ defined in (\ref{split V detailed}) can be bounded by
\begin{equation}\label{est V_1}
  \pm\mathcal{V}_1\leq C\tilde{\rho}_0\mathcal{N}_i[\delta_1]
  +C\tilde{\rho}_0^{1-\delta_2}\mathcal{N}_h[\delta_1]+C\tilde{\rho}_0^{1+\delta_2}\tilde{\mathcal{N}}_{re},
\end{equation}
for $\delta_1\geq\delta_2>0$.
\end{lemma}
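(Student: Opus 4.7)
My plan is to adapt the scheme used in the proof of Lemma \ref{lemma V_0} (and the analogous estimate in \cite[Section 3]{WJH}) to the off-diagonal operator $\mathcal{V}_1$. First, by Fourier inversion I rewrite $\mathcal{V}_1$ in position-space as
\[
\mathcal{V}_1=\sum_{\sigma,\nu}\iint_{\Lambda_L^2} v(x-y)\,a^*(h_{x,\sigma})a^*(g_{y,\nu})a(g_{y,\nu})a(g_{x,\sigma})\,dx\,dy+\mathrm{h.c.},
\]
which makes explicit the ``one creation outside $B_F$, three ops inside $B_F$'' structure imposed by the characteristic functions in \eqref{split V detailed}. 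The middle factor $a^*(g_{y,\nu})a(g_{y,\nu})$ is a density-like operator satisfying the pointwise bound $a^*(g_{y,\nu})a(g_{y,\nu})\leq \Vert g_{y,\nu}\Vert^2\leq C\tilde{\rho}_0\cdot I$, so it will function essentially as a scalar weight of size $\tilde{\rho}_0$ when estimated.

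Next I refine using the $\delta_1$-cutoff of Section \ref{particle}, writing $h_{x,\sigma}=H_{x,\sigma}[\delta_1]+L_{x,\sigma}[\delta_1]$ on the outside-$B_F$ field and $g_{x,\sigma}=I_{x,\sigma}[\delta_1]+S_{x,\sigma}[\delta_1]$ on the rightmost inside-$B_F$ field. The key quantitative inputs are the thin-shell bounds
$\Vert L_{x,\sigma}\Vert^2,\,\Vert S_{x,\sigma}\Vert^2\leq C\tilde{\rho}_0^{1+\delta_1}$ together with the identities
$\int\sum_\sigma a^*(H_{x,\sigma})a(H_{x,\sigma})\,dx=\mathcal{N}_h[\delta_1]$ and $\int\sum_\sigma a(I_{x,\sigma})a^*(I_{x,\sigma})\,dx=\mathcal{N}_i[\delta_1]$ from \eqref{tools2}, plus the anticommutator identity $\int\sum_\sigma a^*(g_{x,\sigma})a(g_{x,\sigma})\,dx=\mathbf{q}\bar{N}_0-\tilde{\mathcal{N}}_{re}$.

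On each of the resulting pieces I apply Cauchy--Schwarz on the bilinear form, pairing $a^*(h_{x,\sigma})$ against $a(g_{x,\sigma})$ with a free parameter $\alpha$. Pulling out $\tilde{\rho}_0$ from the density factor and absorbing a factor $\Vert v\Vert_1\leq C$ from the $y$-integration reduces matters to controlling $\int\Vert a(H_{x,\sigma})\psi\Vert^2 dx$, $\int\Vert a(L_{x,\sigma})\psi\Vert^2 dx$, and the corresponding quantities for $I,S$ on the conjugate side. The $H$--$I$ pairing yields $\alpha\tilde{\rho}_0\mathcal{N}_h[\delta_1]+\alpha^{-1}\tilde{\rho}_0\mathcal{N}_i[\delta_1]$; choosing $\alpha=\tilde{\rho}_0^{-\delta_2}$ gives exactly the first two terms of \eqref{est V_1}. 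Pairings containing at least one of $L$ or $S$ gain an extra factor $\tilde{\rho}_0^{\delta_1}$ from the thin-shell estimates, and, after trading against the opposite side, can be absorbed into $\tilde{\rho}_0^{1+\delta_2}\tilde{\mathcal{N}}_{re}$ provided $\delta_1\geq\delta_2$.

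The main obstacle is preventing stray constant-times-$L^3$ contributions, which cannot be dominated by the purely operator-valued right-hand side of \eqref{est V_1} (since $\tilde{\mathcal{N}}_{re}$ can vanish on some states). This forces me to keep the pairing $a^*(h)$-versus-$a(g)$ tight: every application of Cauchy--Schwarz must end with excitation number operators rather than traces. The trick is to use the identity $\int a^*(g)a(g)dx=\mathbf{q}\bar{N}_0-\tilde{\mathcal{N}}_{re}$ only in combination with the pointwise operator bound $a^*(g_y)a(g_y)\leq C\tilde{\rho}_0$, so that the $\bar{N}_0$-contribution is absorbed inside the density weight and what survives is proportional to $\tilde{\mathcal{N}}_{re}$. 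Finally, the $\sigma=\nu$ diagonal terms produce Kronecker-$\delta$ corrections of the same kind seen in the proof of Lemma \ref{lemma V_0} (cf.~\eqref{cal V_02}--\eqref{est E_V_02}); these are controlled using the elementary inequality $\tilde{\mathcal{N}}_{re}^2\leq C\tilde{\rho}_0 L^3\mathcal{N}_i[\delta_1]+C\tilde{\rho}_0^{1+\delta_1}L^3\tilde{\mathcal{N}}_{re}$ from \eqref{est of Ntildere^2}, and are absorbed into the first and third terms on the right-hand side of \eqref{est V_1}.
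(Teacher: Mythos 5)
Your position-space rewriting of $\mathcal{V}_1$ is correct, but the central Cauchy--Schwarz step rests on an ordering error that the rest of the plan cannot repair. You spend both $y$-fields on the density bound $a^*(g_{y,\nu})a(g_{y,\nu})\leq C\tilde{\rho}_0$ and then try to extract the remaining smallness from the pair $a^*(h_{x,\sigma})$, $a(g_{x,\sigma})$, decomposing the \emph{rightmost annihilation} field $g_{x,\sigma}=I_{x,\sigma}[\delta_1]+S_{x,\sigma}[\delta_1]$. But $a(I_{x,\sigma})$ annihilates modes deep inside the Fermi ball, so what your pairing actually produces is $\int\sum_\sigma\Vert a(I_{x,\sigma})\psi\Vert^2dx=\langle\sum_{\sigma,k\in\underline{B}_{F,\delta_1}}a^*_{k,\sigma}a_{k,\sigma}\psi,\psi\rangle$, which on states near the Fermi sea is of order $\bar{N}_0\Vert\psi\Vert^2\sim\tilde{\rho}_0L^3\Vert\psi\Vert^2$; it is \emph{not} $\langle\mathcal{N}_i[\delta_1]\psi,\psi\rangle$, because $\mathcal{N}_i$ in (\ref{tools2}) is defined with the reversed, hole-counting ordering $a(I)a^*(I)$. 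Hence the claim that the ``$H$--$I$ pairing yields $\alpha\tilde{\rho}_0\mathcal{N}_h[\delta_1]+\alpha^{-1}\tilde{\rho}_0\mathcal{N}_i[\delta_1]$'' is unjustified: what comes out is of the form $\alpha\tilde{\rho}_0\mathcal{N}_h[\delta_1]+\alpha^{-1}\tilde{\rho}_0^2L^3$, and the lemma's right-hand side forbids any $L^3$ term. Your proposed remedy, the identity $\int\sum_\sigma a^*(g_{x,\sigma})a(g_{x,\sigma})dx=\mathbf{q}\bar{N}_0-\tilde{\mathcal{N}}_{re}$, does not help: its large part is precisely the $\mathbf{q}\bar{N}_0\sim\tilde{\rho}_0L^3$ you need to avoid, and the $-\tilde{\mathcal{N}}_{re}$ piece has the wrong sign to be of use. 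The same objection hits your $L$--$I$ piece; also, the closing appeal to (\ref{est of Ntildere^2}) for Kronecker-delta corrections is moot here, since your normal-ordered expression is already the exact transcription of the momentum sum.

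The structural point you are missing, and which the paper's proof uses, is that the two factors to be converted into number operators are the two \emph{creation} fields: the outside creation $a^*(h_{x,\sigma})$ (paired through the bra, giving $\mathcal{N}_h[\delta_1]$ for the $H$-part, or bounded via $\Vert L\Vert_2\leq C\tilde{\rho}_0^{(1+\delta_1)/2}$ for the $L$-part), and the \emph{inside creation} $a^*(g_{y,\nu})$, which must itself be split as $a^*(I_{y,\nu}[\delta_1])+a^*(S_{y,\nu}[\delta_1])$: one has $\int\sum_\nu\Vert a^*(I_{y,\nu})\psi\Vert^2dy=\langle\mathcal{N}_i[\delta_1]\psi,\psi\rangle$, $\Vert S\Vert_2\leq C\tilde{\rho}_0^{(1+\delta_1)/2}$, and $\int\sum_\nu\Vert a^*(g_{y,\nu})\psi\Vert^2dy=\langle\tilde{\mathcal{N}}_{re}\psi,\psi\rangle$ available for the shell pieces. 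The two inside annihilation fields are then used only through their operator norms $\Vert a(g)\Vert\leq\Vert g\Vert_2\leq C\tilde{\rho}_0^{1/2}$, supplying the overall factor $\tilde{\rho}_0$. This is exactly the paper's fourfold splitting of $a^*(h_{x,\sigma})a^*(g_{y,\nu})$ into $H,L$ times $I,S$ pieces (with a further $I/S$ split in the last piece), which yields (\ref{est V_1}) with no $L^3$ remainder; with your decomposition the $\mathcal{N}_i[\delta_1]$ term cannot be produced at all, so the argument as written does not close.
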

\begin{proof}
We can write
  \begin{equation}\label{cal V_1 1}
    \mathcal{V}_{1}=\sum_{\sigma,\nu}
  \int_{\Lambda_L^2}v(x-y)
  a^*(h_{x,\sigma})a(g_{y,\nu})a(g_{x,\sigma})a^*(g_{y,\nu})dxdy+h.c.
  \end{equation}
  Using
  \begin{equation*}
    \begin{aligned}
    a^*(h_{x,\sigma})a^*(g_{y,\nu})&=
    a^*(H_{x,\sigma}[\delta_1])a^*(I_{y,\nu}[\delta_1])
    +a^*(H_{x,\sigma}[\delta_1])a^*(S_{y,\nu}[\delta_1])\\
    &+a^*(L_{x,\sigma}[\delta_1])a^*(I_{y,\nu}[\delta_1])
    +a^*(L_{x,\sigma}[\delta_1])a^*(S_{y,\nu}[\delta_1]),
    \end{aligned}
  \end{equation*}
  we rewrite $\mathcal{V}_{1}=\sum_{j=1}^{4}\mathcal{V}_{1,j}$. It is straight-forward to bound them by
  \begin{align}
  \pm\mathcal{V}_{1,1}\leq&
  C\tilde{\rho}_0\mathcal{N}_h[\delta_1]+C\tilde{\rho}_0\mathcal{N}_i[\delta_1]
  \label{V_1,1}\\
  \pm\mathcal{V}_{1,2}\leq&C\tilde{\rho}_0^{1+\delta_2}\tilde{\mathcal{N}}_{re}
  +C\tilde{\rho}_0^{1-\delta_2}\mathcal{N}_h[\delta_1]\label{V_1,2}\\
  \pm\mathcal{V}_{1,3}\leq&C\tilde{\rho}_0^{1+\delta_1}\tilde{\mathcal{N}}_{re}+
  C\tilde{\rho}_0\mathcal{N}_i[\delta_1]\label{V_1,3}
  \end{align}
  For $\mathcal{V}_{1,4}$, we furthermore use
  \begin{equation*}
    a(g_{y,\nu})=a(I_{y,\nu}[\delta_1])+a(S_{y,\nu}[\delta_1])
  \end{equation*}
  to rewrite $\mathcal{V}_{1,4}=\mathcal{V}_{1,41}+\mathcal{V}_{1,42}$. Then we have
  \begin{align}
  \pm\mathcal{V}_{1,41}\leq&
  C\tilde{\rho}_0^{1+\delta_1}\tilde{\mathcal{N}}_{re}+
  C\tilde{\rho}_0\mathcal{N}_i[\delta_1]\label{V_1,41}\\
  \pm\mathcal{V}_{1,42}\leq&C\tilde{\rho}_0^{1+\delta_1}\tilde{\mathcal{N}}_{re}
  \label{V_1,42}
  \end{align}
  Combining all the results above we finish the proof of Lemma \ref{lemma V_1}.
\end{proof}

\begin{lemma}\label{lemma V_22,23}
For any $N\in\mathbb{N}_{\geq0}$, $\mathcal{V}_{22}$ and $\mathcal{V}_{23}$ defined in (\ref{split V detailed}) can be rewrote by
\begin{equation}\label{cal V_22}
  \mathcal{V}_{22}=\mathbf{q}\bar{N}_0L^{-3}\hat{v}_0\mathcal{N}_{re}+
  \mathcal{E}_{\mathcal{V}_{22}},
\end{equation}
where
\begin{equation}\label{est E_V_22}
  \pm\mathcal{E}_{\mathcal{V}_{22}}\leq C\tilde{\rho}_0\mathcal{N}_h[\delta_1]+C\tilde{\rho}_0^{1+\delta_1}
  \tilde{\mathcal{N}}_{re},
\end{equation}
for some $\delta_1>0$. On the other hand,
\begin{equation}\label{cal V_23}
  \mathcal{V}_{23}=-\bar{N}_0L^{-3}\hat{v}_0\mathcal{N}_{re}+
  \mathcal{E}_{\mathcal{V}_{23}},
\end{equation}
where
\begin{equation}\label{est E_V_23}
  \pm\mathcal{E}_{\mathcal{V}_{23}}\leq C\tilde{\rho}_0\mathcal{N}_h[\delta_1]
  +C\tilde{\rho}_0^{1+\delta_1}\tilde{\mathcal{N}}_{re}+C\tilde{\rho}_0^{\frac{5}{3}}
  \mathcal{N}_{re}+C\tilde{\rho}_0\mathcal{K}_s
\end{equation}

\end{lemma}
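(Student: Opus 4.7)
The plan is to reduce both $\mathcal{V}_{22}$ and $\mathcal{V}_{23}$ to an explicit leading $\mathcal{N}_{re}$-piece plus a two-body residue, and then control the residue via Cauchy--Schwarz combined with the splittings $h=H[\delta_{1}]+L[\delta_{1}]$ (and when useful $g=S[\delta_{1}]+I[\delta_{1}]$) of Section~\ref{particle}. Throughout, the fermionic anti-commutators $\{a(g_{y,\nu}),a^{*}(h_{x,\sigma})\}=0$ from disjoint momentum supports will be used freely to bring the operators into the standard normal-form.

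For $\mathcal{V}_{22}$, the two indicator blocks are related by the relabelling $(p,\sigma)\leftrightarrow(q,\nu),\ k\leftrightarrow-k$; after the two pairs of anti-commutations, the two pieces coincide, so
\[
\mathcal{V}_{22}=\sum_{\sigma,\nu}\int_{\Lambda_L^{2}} v(x-y)\,a^{*}(h_{x,\sigma})a^{*}(g_{y,\nu})a(g_{y,\nu})a(h_{x,\sigma})\,dx\,dy.
\]
Using $a^{*}(g_{y,\nu})a(g_{y,\nu})=\|g_{y,\nu}\|_{2}^{2}-a(g_{y,\nu})a^{*}(g_{y,\nu})$ with $\|g_{y,\nu}\|_{2}^{2}=\bar N_{0}/L^{3}$ and $\int_{\Lambda_L}v(x-y)\,dy=\hat v_{0}$ (valid once $\supp v$ fits strictly inside $\Lambda_{L}$) produces the announced $\mathbf{q}\bar N_{0}L^{-3}\hat v_{0}\mathcal{N}_{re}$. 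The resulting non-positive residue satisfies
\[
-\langle\mathcal{E}_{\mathcal{V}_{22}}\psi,\psi\rangle=\sum_{\sigma,\nu}\int v(x-y)\|a^{*}(g_{y,\nu})a(h_{x,\sigma})\psi\|^{2}\,dx\,dy.
\]
I then split $a(h_{x,\sigma})=a(H_{x,\sigma}[\delta_{1}])+a(L_{x,\sigma}[\delta_{1}])$: the $H$-part is estimated by $\|a^{*}(g_{y,\nu})\|_{op}^{2}\le\bar N_{0}/L^{3}\le C\tilde\rho_{0}$ and then $\int\|a(H_{x,\sigma}[\delta_{1}])\psi\|^{2}dx=\langle\mathcal{N}_{h}[\delta_{1}]\psi,\psi\rangle$, giving $C\tilde\rho_{0}\mathcal{N}_{h}[\delta_{1}]$; the $L$-part, using disjoint momentum supports to move $a(L_{x,\sigma})$ past $a^{*}(g_{y,\nu})$, is bounded by $\|L_{x,\sigma}[\delta_{1}]\|_{2}^{2}\le C\tilde\rho_{0}^{1+\delta_{1}}$ times $\int\|a^{*}(g_{y,\nu})\psi\|^{2}dy=\langle\tilde{\mathcal{N}}_{re,\nu}\psi,\psi\rangle$, producing $C\tilde\rho_{0}^{1+\delta_{1}}\tilde{\mathcal{N}}_{re}$.

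For $\mathcal{V}_{23}$ I work directly in momentum space. Moving $a_{p,\sigma}$ to the left past $a_{q,\nu}$ and then past $a^{*}_{q+k,\nu}$ yields
\[
a^{*}_{p-k,\sigma}a^{*}_{q+k,\nu}a_{q,\nu}a_{p,\sigma}=a^{*}_{p-k,\sigma}a_{p,\sigma}\cdot a^{*}_{q+k,\nu}a_{q,\nu}-\delta_{\sigma\nu}\delta_{p,q+k}\,a^{*}_{p-k,\sigma}a_{q,\nu},
\]
splitting the first half of $\mathcal{V}_{23}$ into a factorised piece and a coincidence piece. The coincidence piece (with $p=q+k$, hence $p-k=q$ and $\sigma=\nu$) collapses to
\[
-\frac{1}{2L^{3}}\sum_{\sigma,\,q\notin B_{F}}a^{*}_{q,\sigma}a_{q,\sigma}\sum_{p\in B_{F}}\hat v_{p-q};
\]
replacing $\hat v_{p-q}$ by $\hat v_{0}$ gives $-\bar N_{0}\hat v_{0}/(2L^{3})\,\mathcal{N}_{re}$, which together with its (self-adjoint) h.c.\ yields the announced leading $-\bar N_{0}L^{-3}\hat v_{0}\mathcal{N}_{re}$. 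The residue $\hat v_{p-q}-\hat v_{0}$ is controlled by the smoothness estimate $|\hat v_{k}-\hat v_{0}|\le C|k|^{2}$ combined with $|p-q|^{2}\le 2|p|^{2}+2k_{F}^{2}$ and the operator inequality $|q|^{2}a^{*}_{q,\sigma}a_{q,\sigma}\le(|q|^{2}-k_{F}^{2})a^{*}_{q,\sigma}a_{q,\sigma}+k_{F}^{2}a^{*}_{q,\sigma}a_{q,\sigma}$ for $q\notin B_{F}$, producing the $C\tilde\rho_{0}^{5/3}\mathcal{N}_{re}+C\tilde\rho_{0}\mathcal{K}_{s}$ terms. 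The factorised piece has the form $\sum_{k}\hat v_{k}A^{(1)}_{\sigma}(-k)A^{(2)}_{\nu}(k)$ with operators of the type \eqref{define A_k}; Lemma~\ref{lemma A_k B_k} with $d_{k}=\hat v_{k}$ reduces it to $b^{*}$-type quadratic bounds, and the $h=H+L$ splitting then yields $C\tilde\rho_{0}\mathcal{N}_{h}[\delta_{1}]+C\tilde\rho_{0}^{1+\delta_{1}}\tilde{\mathcal{N}}_{re}$ exactly as for $\mathcal{V}_{22}$.

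The main obstacle is the factorised piece of $\mathcal{V}_{23}$: a naive Cauchy--Schwarz would bound each $\sum_{p}a^{*}_{p-k,\sigma}a_{p,\sigma}$ factor by $\mathcal{N}_{re}$, producing an $\mathcal{N}_{re}^{2}$ term that is far too large. One must keep the kernel $\hat v_{k}$ inside a single integral via Lemma~\ref{lemma A_k B_k}, after which the $h=H+L$ cutoff separates the $\mathcal{N}_{h}[\delta_{1}]$ contribution (small operator norm) from the $\tilde{\mathcal{N}}_{re}$ contribution (small coefficient $\tilde\rho_{0}^{1+\delta_{1}}$). The coincidence residue additionally requires the smoothness of $v$ so that the $|k|^{2}$ gain from $\hat v_{k}-\hat v_{0}$ is absorbable by $\mathcal{K}_{s}$, a manoeuvre that fails precisely when $v$ is not smooth enough---which is why the hypothesis of the lemma can be exchanged with the regularity assumption already present in Theorem~\ref{core}.
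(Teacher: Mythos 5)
Your proposal is correct and follows essentially the same route as the paper: the position-space rewriting of $\mathcal{V}_{22}$ with $a(g)a^{*}(g)+a^{*}(g)a(g)=\bar N_{0}/L^{3}$, the normal-ordering of $\mathcal{V}_{23}$ into a coincidence piece (giving the leading $-\bar N_{0}L^{-3}\hat v_{0}\mathcal{N}_{re}$ plus the $|\hat v_{p-q}-\hat v_{0}|\le C|p-q|^{2}$ residue absorbed into $\tilde\rho_{0}\mathcal{K}_{s}+\tilde\rho_{0}^{5/3}\mathcal{N}_{re}$) and a factorized piece bounded via the $h=H[\delta_{1}]+L[\delta_{1}]$ splitting exactly as for $\mathcal{E}_{\mathcal{V}_{22}}$, with your use of Lemma \ref{lemma A_k B_k} being just a repackaging of the paper's direct Cauchy--Schwarz with the kernel $v(x-y)$. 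Only a cosmetic slip: the bound should read $|p-q|^{2}\le 2|q|^{2}+2k_{F}^{2}$ (it is $q\notin B_{F}$ that is unbounded), which your subsequent splitting $|q|^{2}=(|q|^{2}-k_{F}^{2})+k_{F}^{2}$ already presupposes.
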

\begin{proof}
For $\mathcal{V}_{22}$, we write
  \begin{equation}\label{cal V_22 1}
    \mathcal{V}_{22}=\sum_{\sigma,\nu}
  \int_{\Lambda_L^2}v(x-y)
  a^*(h_{x,\sigma})a^*(g_{y,\nu})a(g_{y,\nu})a(h_{x,\sigma})dxdy.
  \end{equation}
  Using
  \begin{equation*}
  a(g_{y,\nu})a^*(g_{y,\nu})+a^*(g_{y,\nu})a(g_{y,\nu})=\Vert g_{y,\nu}\Vert^2=
  \frac{\bar{N}_0}{L^3},
\end{equation*}
we reach (\ref{cal V_22}) with
\begin{equation}\label{cal V_22 2}
  \mathcal{E}_{\mathcal{V}_{22}}
  =-\sum_{\sigma,\nu}
  \int_{\Lambda_L^2}v(x-y)
  a^*(h_{x,\sigma})a(g_{y,\nu})a^*(g_{y,\nu})a(h_{x,\sigma})dxdy.
\end{equation}
For $\delta_1>0$, we use
\begin{equation*}
  \begin{aligned}
   a^*(h_{x,\sigma})a(h_{x,\sigma})&=
   a^*(H_{x,\sigma}[\delta_1])a(h_{H,\sigma}[\delta_1])
   +a^*(L_{x,\sigma}[\delta_1])a(h_{H,\sigma}[\delta_1])\\
   &+a^*(H_{x,\sigma}[\delta_1])a(h_{L,\sigma}[\delta_1])
   +a^*(L_{x,\sigma}[\delta_1])a(h_{L,\sigma}[\delta_1]),
  \end{aligned}
\end{equation*}
we rewrite $\mathcal{E}_{\mathcal{V}_{22}}=\sum_{j=1}^{4}\mathcal{E}_{\mathcal{V}_{22},j}$. It is straight-forward to bound them by
\begin{align}
  \pm\mathcal{E}_{\mathcal{V}_{22},1}\leq&C\tilde{\rho}_0\mathcal{N}_h[\delta_1]
  \label{V_22,1}\\
  \pm\mathcal{E}_{\mathcal{V}_{22},2},
  \mathcal{E}_{\mathcal{V}_{22},3}\leq&
  C\tilde{\rho}_0^{1+\delta_1}\tilde{\mathcal{N}}_{re}+
  C\tilde{\rho}_0\mathcal{N}_h[\delta_1]
  \label{V_22,2,3}\\
  \pm\mathcal{E}_{\mathcal{V}_{22},4}\leq&
  C\tilde{\rho}_0^{1+\delta_1}\tilde{\mathcal{N}}_{re}
  \label{V_22,4}
\end{align}

\par For $\mathcal{V}_{23}$, we rewrite $\mathcal{V}_{23}=
\mathcal{V}_{23,1}+\mathcal{V}_{23,2}$ with
\begin{align*}
  \mathcal{V}_{23,1}=&-\frac{1}{2L^3}\sum_{k,p,q,\sigma}\delta_{p,q+k}
  \hat{v}_k(a^*_{p-k,\sigma}a_{q,\sigma}+h.c.)\chi_{p-k\notin B_F}\chi_{q+k\in B_F}\chi_{p\in B_F}\chi_{q\notin B_F}\\
  \mathcal{V}_{23,2}=&\frac{1}{2L^3}\sum_{k,p,q,\sigma,\nu}
  \hat{v}_k(a^*_{p-k,\sigma}a_{p,\sigma}a^*_{q+k,\nu}a_{q,\nu}+h.c.)\chi_{p-k\notin B_F}\chi_{q+k\in B_F}\chi_{p\in B_F}\chi_{q\notin B_F}
\end{align*}
For $\mathcal{V}_{23,2}$, we have
\begin{equation}\label{cal V_23 1}
  \mathcal{V}_{23,2}
  =\sum_{\sigma,\nu}
  \int_{\Lambda_L^2}v(x-y)
  a^*(h_{x,\sigma})a(g_{x,\sigma})a^*(g_{y,\nu})a(h_{y,\nu})dxdy.
\end{equation}
Similar to the bound of $\mathcal{E}_{\mathcal{V}_{22}}$ (\ref{cal V_22 2}) above, we have
\begin{equation}\label{cal V 23 2}
  \pm\mathcal{V}_{23,2}\leq  C\tilde{\rho}_0^{1+\delta_1}\tilde{\mathcal{N}}_{re}+
  C\tilde{\rho}_0\mathcal{N}_h[\delta_1].
\end{equation}
For $\mathcal{V}_{23,1}$, we have
\begin{equation}\label{cal V_23 3}
  \begin{aligned}
  \mathcal{V}_{23,1}=&-\frac{1}{L^3}\sum_{p,q,\sigma}\hat{v}_{p-q}a_{q,\sigma}^*a_{q,\sigma}
  \chi_{p\in B_F}\chi_{q\notin B_F}\\
  =&-\bar{N}_0L^{-3}\hat{v}_0\mathcal{N}_{re}+
  -\frac{1}{L^3}\sum_{p,q,\sigma}(\hat{v}_{p-q}-\hat{v}_0)a_{q,\sigma}^*a_{q,\sigma}
  \chi_{p\in B_F}\chi_{q\notin B_F}.
  \end{aligned}
\end{equation}
  By (\ref{v_k-v_0}), we know that
  \begin{equation}\label{cal V_23 4}
    \vert \hat{v}_{p-q}-\hat{v}_0\vert\leq C\vert p-q\vert^2
    \leq C\big(\vert q\vert^2-k_F^2+k_F^2+\vert p\vert^2\big).
  \end{equation}
  Thus we have
  \begin{equation}\label{cal V_23 5}
    \pm\frac{1}{L^3}\sum_{p,q,\sigma}(\hat{v}_{p-q}-\hat{v}_0)a_{q,\sigma}^*a_{q,\sigma}
  \chi_{p\in B_F}\chi_{q\notin B_F}\leq
  \tilde{\rho}_0\mathcal{K}_s+\tilde{\rho}_0^{\frac{5}{3}}\mathcal{N}_{re}.
  \end{equation}
  Combining all the results above we finish the proof of Lemma \ref{lemma V_22,23}.
\end{proof}

\section{Renormalized Hamiltonian and Main Propositions}\label{renormal}
\subsection{Quadratic Renormalization}\label{qua sec}
\
\par We define
\begin{equation}\label{define B0}
  B=\frac{1}{2}(A-A^*)
\end{equation}
where
\begin{equation}\label{define A0}
  A=\sum_{k,p,q,\sigma,\nu}
  \eta_k\phi^+(k)a^*_{p-k,\sigma}a^*_{q+k,\nu}a_{q,\nu}a_{p,\sigma}\chi_{p-k,q+k\notin
  B_F}\chi_{p,q\in B_F}.
\end{equation}
Recall that $\eta_k$ has been defined in (\ref{eta_p}), and $\phi^+$ is the smooth high frequency cut-off function defined below (\ref{cutoff}). We let
\begin{equation}\label{G_N}
  \mathcal{G}_N=e^{-B}H_Ne^B.
\end{equation}
Notice that $e^B$ can be regarded as an unitary operator on the fermionic Fock space $\mathcal{F}$, we thus also denote $\mathcal{G}=e^{-B}He^B=\bigoplus_{N=0}^\infty\mathcal{G}_N$. In the up-coming Sections \ref{cub sec} and \ref{bog sec}, we also apply similar notations without further specifications.

\par Then we record the effect of this quadratic renormalization in Proposition \ref{qua prop}.
\begin{proposition}\label{qua prop}
For any $N\in\mathbb{N}_{\geq0}$, and  $\frac{1}{24}>\frac{1}{2}\delta_3>\frac{1}{4}\alpha_3>\alpha_2>2\alpha_4>0$, $\frac{1}{3}>\delta_1\geq\delta_2>0$, $\beta_1=\frac{1}{3}+\alpha_5$ and $2\alpha_3>\alpha_5>2\alpha_4$, we have
\begin{equation}\label{G_N qua prop}
  \mathcal{G}_N=\mathcal{K}+\mathcal{V}_4+\mathcal{V}_{21}^\prime+\Omega
  +\mathcal{V}_{3,l}+C_{\mathcal{G}_N}+Q_{\mathcal{G}_N}+\mathcal{E}_{\mathcal{G}_N},
\end{equation}
where
\begin{equation}\label{qua prop V_21' and Omega and V_3,l}
  \begin{aligned}
  \mathcal{V}_{21}^\prime&=\frac{1}{L^3}\sum_{k,p,q,\sigma,\nu}
  W_k\zeta^{-}(k)(a^*_{p-k,\sigma}a^*_{q+k,\nu}a_{q,\nu}a_{p,\sigma}+h.c.)
  \chi_{p-k,q+k\notin
  B_F}\chi_{p,q\in B_F}\\
  \Omega&=\sum_{k,p,q,\sigma,\nu}
  \eta_k\phi^+(k)\zeta^{-}(k)k(q-p)
  \\
  &\quad\quad\quad\quad
  \times(a^*_{p-k,\sigma}a^*_{q+k,\nu}a_{q,\nu}a_{p,\sigma}+h.c.)\chi_{p-k,q+k\notin B_F}
  \chi_{p,q\in B_F},\\
  \mathcal{V}_{3,l}&=\frac{1}{L^3}\sum_{k,p,q,\sigma,\nu}\hat{v}_k\gamma^-(q)
(a_{p-k,\sigma}^*a_{q+k,\nu}^*a_{q,\nu}a_{p,\sigma}+h.c.)
\chi_{p-k,q+k\notin B_F}\chi_{q\notin B_F}\chi_{p\in B_F},
  \end{aligned}
\end{equation}
and
\begin{equation}\label{qua prop C and Q}
  \begin{aligned}
  C_{\mathcal{G}_N}&=\frac{1}{2L^3}\Big(\hat{v}_0+\sum_{k\in(2\pi/L)\mathbb{Z}^3}
  \hat{v}_k\eta_k\Big)\mathbf{q}(\mathbf{q}-1)\bar{N}_0^2
  +\frac{\hat{v}_0}{L^3}(\mathbf{q}-1)\bar{N}_0(N-\mathbf{q}\bar{N}_0)\\
  &+\frac{1}{L^3}\sum_{k,p,q,\sigma,\nu}W_k\eta_k\phi^+(k)\zeta^-(k)
  \chi_{p-k,q+k\notin B_F}\chi_{p,q\in B_F}\\
  &-\frac{1}{L^3}\sum_{k,p,q,\sigma}W_k\eta_{k+q-p}\zeta^-(k)
  \phi^+(k+q-p)\chi_{p-k,q+k\notin B_F}\chi_{p,q\in B_F}\chi_{p\neq q}\\
  &-\sum_{k,p,q,\sigma}\eta_k\eta_{k+q-p}\zeta^-(k)\phi^+(k)
  \phi^+(k+q-p)k(q-p)
  \chi_{p-k,q+k\notin B_F}\chi_{p,q\in B_F}\chi_{p\neq q}\\
  Q_{\mathcal{G}_N}&=-\frac{1}{L^3}\sum_{k\in(2\pi/L)\mathbb{Z}^3}
  \hat{v}_k\eta_k(\mathbf{q}-1)\bar{N}_0\tilde{\mathcal{N}}_{re}
  \end{aligned}
\end{equation}
and the error is bounded by
\begin{equation}\label{qua prop E}
  \begin{aligned}
  \pm\mathcal{E}_{\mathcal{G}_N}\leq&
  C\big(\tilde{\rho}_0^{1+\delta_2}+\tilde{\rho}_0^{\frac{4}{3}+\alpha_2-3\delta_3-\alpha_4}
  +\tilde{\rho}_0^{\frac{3}{2}+\frac{\alpha_2}{2}+4\alpha_3-\frac{3}{2}\alpha_5}\big)
  \mathcal{N}_{re}+C\big(\tilde{\rho}_0^{1+\delta_2}
  +\tilde{\rho}_0^{\frac{4}{3}-\alpha_3}\big)\tilde{\mathcal{N}}_{re}\\
  &+C\tilde{\rho}_0\mathcal{N}_i[\delta_1]+C\tilde{\rho}_0^{1-\delta_2}\mathcal{N}_h[\delta_1]
  +C\tilde{\rho}_0^{\frac{2}{3}-\alpha_4}\mathcal{N}_h[-\delta_3]\\
  &+C\tilde{\rho}_0^{\frac{5}{6}+\frac{\alpha_2}{2}-\frac{3}{2}\alpha_5}
  \mathcal{N}_h[-\alpha_2]
  +C\tilde{\rho}_0^{1+\alpha_5
  -4\alpha_3-\alpha_4}\mathcal{N}_h[-\beta_1]\\
  &+C\big(\tilde{\rho}_0^{\frac{1}{3}+2\alpha_3}+\tilde{\rho}_0^{\frac{1}{3}+
  \alpha_5-\alpha_4}\big)\mathcal{K}_s+C\tilde{\rho}_0^{\frac{1}{3}+
  \alpha_4}\mathcal{V}_4\\
  &+C\big(\tilde{\rho}_0^{\frac{7}{3}+\alpha_4}+\tilde{\rho}_0^{2+4\alpha_3-7\alpha_2}
  +\tilde{\rho}_0^{\frac{5}{2}+\frac{\alpha_2}{2}+\alpha_3-\frac{3}{2}\alpha_5}
  +\tilde{\rho}_0^{\frac{8}{3}+\alpha_5-3\alpha_3-\alpha_4}\big)L^3.
  \end{aligned}
\end{equation}

\end{proposition}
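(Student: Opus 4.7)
The plan is to apply the Duhamel-type identity
\[
  \mathcal{G}_N = H_N + \int_0^1 e^{-sB}[H_N,B]\,e^{sB}\,ds
\]
and systematically compute $[H_N,B]$ using the decomposition of $\mathcal{V}_N$ in (\ref{split V detailed}). The central mechanism producing the main terms $\mathcal{V}_{21}^\prime$ and $\Omega$ is a three-way cancellation among $\mathcal{V}_{21}$, $[\mathcal{K}_N,B]$, and the contraction contribution of $[\mathcal{V},B]$, driven by the scattering equation (\ref{Wdiscrete asymptotic energy pde on the torus}), which replaces the singular potential coefficient $\hat{v}_k/L^3$ by the much smaller effective coefficient $W_k/L^3$.

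Concretely, $[\mathcal{K}_N, B]$ is a quartic operator with the same monomial structure as $B$ but with coefficient $\eta_k\phi^+(k)\bigl(|p-k|^2+|q+k|^2-|p|^2-|q|^2\bigr)$. Using (\ref{useful formula}) to decompose this as $2|k|^2\eta_k\phi^+(k) + 2k\cdot(q-p)\eta_k\phi^+(k)$, one combines $2|k|^2\eta_k\phi^+(k)$ with the $\hat{v}_k/(2L^3)$ coefficient of $\mathcal{V}_{21}$ and with the convolution term $L^{-3}\sum_q\hat{v}_{k-q}\eta_q\phi^+(k)$ coming from the fermionic contractions in $[\mathcal{V}_{21},B]$; the scattering equation identifies this combined coefficient with $2L^{-3}W_k\phi^+(k)$, which after the low-frequency cut-off $\zeta^-(k)$ reproduces $\mathcal{V}_{21}^\prime$. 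Similarly, the $2k\cdot(q-p)\eta_k\phi^+(k)$ piece, restricted by $\zeta^-(k)$, becomes $\Omega$. The ultraviolet complements $W_k\zeta^+(k)$, $\eta_k k(q-p)\zeta^+(k)$, the infrared portion of $\mathcal{V}_{21}$ where $\phi^-(k)\ne 0$ that is not cancelled directly by $B$, and the six-operator (non-contraction) remainder of $[\mathcal{V}_{21},B]$ all feed into $\mathcal{E}_{\mathcal{G}_N}$ via the tail bounds (\ref{special eta L2 norm})--(\ref{special W L2 norm}), the $b^{*}$-style estimates of Lemmas \ref{b^* bound by b}--\ref{b^* bound by b general}, and the number-operator control furnished by the a-priori estimates section.

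The remaining commutators $[\mathcal{V}_j,B]$ for $j\in\{0,1,22,23,3,4\}$ produce no main quartic terms but do supply the constants and the quadratic piece in (\ref{qua prop C and Q}) via contractions. Lemmas \ref{lemma V_0} and \ref{lemma V_22,23} rewrite $\mathcal{V}_0,\mathcal{V}_{22},\mathcal{V}_{23}$ as explicit functions of $\bar{N}_0$, $\mathcal{N}_{re}$, $\tilde{\mathcal{N}}_{re}$ plus controlled errors; the $\bar{N}_0$ prefactors account for the scalar $\tfrac{1}{2L^3}(\hat{v}_0+\sum_k \hat{v}_k\eta_k)\mathbf{q}(\mathbf{q}-1)\bar{N}_0^2$ and the quadratic $-L^{-3}(\mathbf{q}-1)\bar{N}_0\tilde{\mathcal{N}}_{re}\sum_k\hat{v}_k\eta_k$. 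The remaining scalar contributions in (\ref{qua prop C and Q}) arise when both anticommutators in the three-way cancellation above collapse simultaneously, producing the $W_k\eta_k$ and $\eta_k\eta_{k+q-p}k(q-p)$ scalar sums. The commutator $[\mathcal{V}_3,B]$ retains its low-frequency portion $\mathcal{V}_{3,l}$ (cut off by $\gamma^-$) as a main term, while the complementary high-frequency portion is absorbed into $\mathcal{E}_{\mathcal{G}_N}$ via $\mathcal{N}_h[-\delta_3]$. Finally $[\mathcal{V}_1,B]$ and $[\mathcal{V}_4,B]$ contribute only to $\mathcal{E}_{\mathcal{G}_N}$, estimated via Lemma \ref{lemma A_k B_k} and the $\eta$-summation bounds (\ref{eta sum})--(\ref{special eta Linf}).

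The main obstacle is the combinatorial bookkeeping: on the order of twenty distinct sub-terms arise from expanding $[H_N,B]$ and performing all contractions, and each must be either identified with a main term or absorbed into $\mathcal{E}_{\mathcal{G}_N}$ with the correct power of $\tilde{\rho}_0$ multiplying one of $\mathcal{N}_{re}$, $\tilde{\mathcal{N}}_{re}$, $\mathcal{N}_h[\cdot]$, $\mathcal{N}_i[\delta_1]$, $\mathcal{K}_s$, $\mathcal{V}_4$, or $L^3$. The constraints $\alpha_2>2\alpha_4$, $\delta_3>\alpha_3/2$, and $2\alpha_3>\alpha_5>2\alpha_4$ in the hypothesis are dictated precisely by this Littlewood--Paley matching: each frequency-localized residue must be placed in a band where either a sharp number-operator bound or the $\mathcal{V}_4$-bound supplies the claimed exponent. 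To promote the first-order commutator identity to the full conjugation $e^{-B}H_N e^B$, one applies a Gr\"onwall argument along the flow $s\mapsto e^{-sB}(\cdot)e^{sB}$: since $B$ is quadratic with coefficients whose norms are controlled by $\|\eta_{\phi^+}\|_\infty$ and $\|\eta_{\phi^+}\|_2$, the a-priori quantities on the right-hand side of (\ref{qua prop E}) are preserved up to constants along the flow, and all higher-order Duhamel contributions inherit the same bounds as the leading commutator analyzed above.
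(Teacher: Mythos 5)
Your overall strategy coincides with the paper's: a Duhamel/Newton--Leibniz expansion of $e^{-B}H_Ne^{B}$, the scattering-equation cancellation that converts the coefficient $\hat v_k$ into $W_k$, frequency localization of the residues into the bands controlled by $\mathcal{N}_h[\cdot]$, $\mathcal{N}_i[\delta_1]$, $\mathcal{K}_s$ and $\mathcal{V}_4$, and a Gr\"onwall argument (Lemmas \ref{lemma qua control N_re}--\ref{lemma qua control V_4}) to propagate these bounds along the flow. However, three points in your sketch are misassigned and would derail the computation if followed literally. First, the convolution coefficient $L^{-3}\sum_l\hat v_{k-l}\eta_l$ that enters the three-way cancellation is produced by the double contraction in $[\mathcal{V}_4,B]$ (the term $\Theta_m$ of (\ref{[V_4,B]})), not by $[\mathcal{V}_{21},B]$: it is $\mathcal{V}_4$'s two annihilation operators outside $B_F$ that pair against $A$'s two creation operators outside $B_F$, leaving the quartic monomial of $\mathcal{V}_{21}$-type. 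Second, the double contraction in $[\mathcal{V}_{21},B]$ instead yields the scalar $\tfrac{1}{2L^3}\sum_k\hat v_k\eta_k\,\mathbf{q}(\mathbf{q}-1)\bar N_0^2$ and the quadratic piece $Q_{\mathcal{G}_N}$ (Lemma \ref{lemma qua V_21}); these do not come from the normal-ordering of $\mathcal{V}_0,\mathcal{V}_{22},\mathcal{V}_{23}$, which only supply the $\hat v_0$ contributions in (\ref{qua prop C and Q}). Third, a single first-order Duhamel identity leaves the main quartic terms inside the conjugation $e^{-sB}(\cdot)e^{sB}$ and does not by itself fix the combinatorial weights of the second-order scalars; the paper's decomposition (\ref{define G_N}) uses a nested expansion with the weights $\int_0^1\int_0^t$ for $[\mathcal{V}_{21}'+\Omega,B]$ and $\int_0^1\int_t^1$ for $[\mathcal{V}_{21},B]$ precisely so that $\mathcal{V}_{21}'$, $\Omega$ appear unconjugated and the $W_k\eta_k$ and $\eta_k\eta_{k+q-p}k(q-p)$ sums in $C_{\mathcal{G}_N}$ carry the correct prefactors. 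With these corrections your outline reproduces the paper's proof.
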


\par We leave the proof of Proposition \ref{qua prop} to Section \ref{qua}.

\subsection{Cubic Renormalization}\label{cub sec}
\
\par We define
\begin{equation}\label{define B'0}
  B^\prime=A^\prime-{A^\prime}^*
\end{equation}
where
\begin{equation}\label{define A'0}
  A^\prime=\sum_{k,p,q,\sigma,\nu}
  \eta_k\phi^+(k)\zeta^-(k)a^*_{p-k,\sigma}a^*_{q+k,\nu}a_{q,\nu}a_{p,\sigma}\chi_{p-k,q+k\notin B_F}\chi_{q\in A_{F,\delta_4}}
  \chi_{p\in B_F},
\end{equation}
and
\begin{equation}\label{A^nu_F,kappa0}
  A_{F,\delta_4}=\{k\in(2\pi/L)\mathbb{Z}^3,\,k_F<\vert k\vert\leq k_F+\tilde{\mu}^{\frac{1}{2}}\tilde{\rho}_0^{\delta_4}\}
\end{equation}
for $\frac{1}{3}>\delta_4\geq\delta_1$. The cutoff function $\zeta^-$ has been defined in (\ref{cutoffzeta}). Notice that (\ref{define A'0}) is pure fermionic, and first shown up in \cite{WJH}.

 \par We define
\begin{equation}\label{J_N}
  \mathcal{J}_N=e^{-B^\prime}\mathcal{G}_Ne^{B^\prime}.
\end{equation}
Then we collect the result of this cubic renormalization in Proposition \ref{cub prop} below.

\begin{proposition}\label{cub prop}
For any $N\in\mathbb{N}_{\geq0}$, and $\frac{1}{24}>\frac{1}{2}\delta_3>\frac{1}{4}\alpha_3>\alpha_2>2\alpha_4>0$, $\frac{1}{3}>\delta_4\geq\delta_1\geq\delta_2>\frac{1}{12}$, $\delta_4>2\alpha_3+2\alpha_4$, $\beta_1=\frac{1}{3}+\alpha_5$ and $2\alpha_3>\alpha_5>2\alpha_4$, we have
\begin{equation}\label{J_N cub prop}
  \mathcal{J}_N=\mathcal{K}+\mathcal{V}_{4,4h}+\mathcal{V}_{21}^\prime+\Omega
  +C_{\mathcal{J}_N}+\mathcal{E}_{\mathcal{J}_N},
\end{equation}
where
\begin{equation}\label{cub C_J_N}
  \begin{aligned}
  C_{\mathcal{J}_N}&=\frac{1}{2L^3}\Big(\hat{v}_0+\sum_{k\in(2\pi/L)\mathbb{Z}^3}
  \hat{v}_k\eta_k\Big)\mathbf{q}(\mathbf{q}-1)\bar{N}_0^2
  +\frac{4}{3\pi}
  \mathfrak{a}_0\tilde{\mu}^{\frac{3}{2}}(\mathbf{q}-1)(N-\mathbf{q}\bar{N}_0)\\
  &+\frac{1}{L^3}\sum_{k,p,q,\sigma,\nu}W_k\eta_k\phi^+(k)\zeta^-(k)
  \chi_{p-k,q+k\notin B_F}\chi_{p,q\in B_F}\\
  &-\frac{1}{L^3}\sum_{k,p,q,\sigma}W_k\eta_{k+q-p}\zeta^-(k)\phi^+(k)
  \phi^+(k+q-p)\chi_{p-k,q+k\notin B_F}\chi_{p,q\in B_F}\chi_{p\neq q}\\
  &-\sum_{k,p,q,\sigma}\eta_k\eta_{k+q-p}\zeta^-(k)
  \phi^+(k+q-p)k(q-p)
  \chi_{p-k,q+k\notin B_F}\chi_{p,q\in B_F}\chi_{p\neq q}
  \end{aligned}
\end{equation}
and
\begin{equation}\label{cub V_4,4h}
  \mathcal{V}_{4,4h}=\frac{1}{2L^3}\sum_{k,p,q,\sigma,\nu}
  \hat{v}_ka^*_{p-k,\sigma}a^*_{q+k,\nu}a_{q,\nu}a_{p,\sigma}\chi_{p,p-k\in P_{F,\delta_4}}\chi_{q,q+k\in P_{F,\delta_4}}.
\end{equation}
Moreover, the error is bounded by
\begin{equation}\label{cub Error}
   \begin{aligned}
  &\pm \mathcal{E}_{\mathcal{J}_N}\\
  \leq&
  C\big(\tilde{\rho}_0^{1+\delta_2}+\tilde{\rho}_0^{\frac{4}{3}+\alpha_2-3\delta_3-\alpha_4}\big)
  \mathcal{N}_{re}+C\big(\tilde{\rho}_0^{1+\delta_2}
  +\tilde{\rho}_0^{\frac{4}{3}-\alpha_3}\big)\tilde{\mathcal{N}}_{re}\\
  &+C\tilde{\rho}_0^{1-3\alpha_2}\mathcal{N}_i[\delta_4]
  +C\tilde{\rho}_0^{1-\delta_2-3\alpha_2}\mathcal{N}_h[\delta_4]
  +C\tilde{\rho}_0^{\frac{2}{3}-\alpha_4}\mathcal{N}_h[-\delta_3]\\
  &+C\big(\tilde{\rho}_0^{\frac{2}{3}+\delta_4-\alpha_2-\alpha_3}
  +\tilde{\rho}_0^{\frac{4}{3}+\alpha_2-\alpha_4-3\alpha_3}+
  \tilde{\rho}_0^{\frac{5}{6}+\frac{\alpha_2}{2}-\frac{3}{2}\alpha_5}\big)
  \mathcal{N}_h[-\alpha_2]\\
  &+C\tilde{\rho}_0^{1+\alpha_5
  -4\alpha_3-\alpha_4}\mathcal{N}_h[-\beta_1]
  +C\big(\tilde{\rho}_0^{\frac{1}{3}+2\alpha_3}+\tilde{\rho}_0^{\frac{1}{3}+
  \alpha_5-\alpha_4}\big)\mathcal{K}_s
  +C\tilde{\rho}_0^{\delta_4-\alpha_3-\alpha_4}\mathcal{K}_h[-\alpha_2]\\
  &+C\big(\tilde{\rho}_0^{\frac{7}{3}+\alpha_4}+\tilde{\rho}_0^{2+4\alpha_3-7\alpha_2}
  +\tilde{\rho}_0^{\frac{5}{2}+\frac{\alpha_2}{2}+\alpha_3-\frac{3}{2}\alpha_5}
  +\tilde{\rho}_0^{\frac{8}{3}+\alpha_5-3\alpha_3-\alpha_4}
  +\tilde{\rho}_0^{2+2\delta_4-\alpha_4}\big)L^3\\
  &+C\tilde{\rho}_0^{\alpha_4}\mathcal{V}_{4,4h}.
  \end{aligned}
\end{equation}

\end{proposition}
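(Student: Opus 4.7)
The plan is to Taylor-expand $\mathcal{J}_N=e^{-B'}\mathcal{G}_N e^{B'}$ via Duhamel's formula,
\[
\mathcal{J}_N = \mathcal{G}_N + [\mathcal{G}_N, B'] + \int_0^1 (1-t)\, e^{-tB'} [[\mathcal{G}_N, B'], B']\, e^{tB'}\, dt,
\]
substitute the decomposition of $\mathcal{G}_N$ from Proposition \ref{qua prop}, and organize the first-order commutators around the scattering equation (\ref{Wdiscrete asymptotic energy pde on the torus}). The asymmetric indicator $\chi_{p\in B_F}\chi_{q\in A_{F,\delta_4}}$ inside $B'$ is specifically tailored so that $[\mathcal{K}, B']$ cancels the $q\in A_{F,\delta_4}$ portion of $\mathcal{V}_{3,l}$ while simultaneously converting the operator-valued remainder $Q_{\mathcal{G}_N}$ into a $c$-number contribution of $C_{\mathcal{J}_N}$.

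The central computation is the kinetic commutator. Expanding $[\mathcal{K}, B']$ produces the dispersion $|p-k|^2+|q+k|^2-|p|^2-|q|^2 = 2|k|^2+2k(q-p)$, and rewriting $2|k|^2\eta_k$ through (\ref{Wdiscrete asymptotic energy pde on the torus}) as $-L^{-3}\hat{v}_k - L^{-3}\sum_j \hat{v}_{k-j}\eta_j + 2L^{-3}W_k$ gives three contributions: the $\hat{v}_k$ piece cancels the restricted part of $\mathcal{V}_{3,l}$ with $q\in A_{F,\delta_4}$; the $W_k$ piece together with the residual $\eta_k k(q-p)$ assembles the last three lines of $C_{\mathcal{J}_N}$ after a reindexing on $k+q-p$; and the convolution term, after Wick-reordering of the 4-operator products, splits into a $c$-number plus a linear-in-$\mathcal{N}_{re}$ correction of the form $\tfrac{(\mathbf{q}-1)\bar N_0}{L^3}\sum_k \hat{v}_k\eta_k\,\mathcal{N}_{re}$. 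Rewriting the second term of $C_{\mathcal{G}_N}$ via (\ref{N_re and N_retilde}) as $\tfrac{\hat{v}_0}{L^3}(\mathbf{q}-1)\bar N_0(\mathcal{N}_{re}-\tilde{\mathcal{N}}_{re})$ and summing it with $Q_{\mathcal{G}_N}$ and this new correction, the coefficient of $\mathcal{N}_{re}$ and the coefficient of $\tilde{\mathcal{N}}_{re}$ both become $\tfrac{(\mathbf{q}-1)\bar N_0}{L^3}(\hat{v}_0 + \sum_k \hat{v}_k\eta_k) = \tfrac{(\mathbf{q}-1)\bar N_0}{L^3}\int vf_\ell$, by Plancherel (\ref{discrete int vf}). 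Using (\ref{est of int vf_l}) to replace $\int vf_\ell$ by $8\pi\mathfrak{a}_0$ and (\ref{average particle number}) to replace $\bar N_0/L^3$ by $\tilde{\mu}^{3/2}/(6\pi^2)$, the sum collapses exactly to $\tfrac{4}{3\pi}\mathfrak{a}_0\tilde{\mu}^{3/2}(\mathbf{q}-1)(N-\mathbf{q}\bar N_0)$. The portion of $\mathcal{V}_{3,l}$ with $q\notin A_{F,\delta_4}$, which is not targeted by $B'$, is absorbed into the $\mathcal{N}_h[\delta_4]$-error via Lemma \ref{lemma A_k B_k} applied to the two $A$-operators arising from the $q$-decomposition.

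The remaining work is the truncation $\mathcal{V}_4\to \mathcal{V}_{4,4h}$ and the control of all higher-order commutators. The split $\mathcal{V}_4 = \mathcal{V}_{4,4h} + (\mathcal{V}_4 - \mathcal{V}_{4,4h})$ isolates a remainder carrying at least one momentum in $A_{F,\delta_4}$; Cauchy--Schwarz through Lemma \ref{lemma A_k B_k} bounds it by $\tilde{\rho}_0^{\alpha_4}\mathcal{V}_{4,4h} + \tilde{\rho}_0^{2+2\delta_4-\alpha_4}L^3$, matching the corresponding entries in (\ref{cub Error}). The double-commutator integrand $[[\,\cdot\,,B'],B']$ is controlled using Lemma \ref{b^* bound by b general} for the $b^*$-norm of $B'$ and its gradient analogues, together with the relative-entropy-based a-priori estimates (\ref{est trNreGamma})--(\ref{est trK_sGamma}) from Section \ref{a-prior}. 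The hardest part is not any single commutator but the global error accounting: every operator-valued remainder must land inside (\ref{cub Error}), which is exactly what dictates the nested parameter constraints $\delta_4>2\alpha_3+2\alpha_4$, $2\alpha_3>\alpha_5>2\alpha_4$, $\delta_4\ge\delta_1\ge\delta_2>\tfrac{1}{12}$, and $\delta_3>\alpha_3/2$. Among these, the most delicate single step is the clean extraction of the Huang--Yang coefficient $\tfrac{4}{3\pi}\mathfrak{a}_0\tilde{\mu}^{3/2}$, since it hinges on resolving $\int vf_\ell$, the Weyl asymptotic of $\bar N_0$, and the identity $\mathcal{N}_{re}-\tilde{\mathcal{N}}_{re}=N-\mathbf{q}\bar N_0$ simultaneously to the precision required by the $o(\tilde{\rho}_0^{7/3})$ budget of the main theorem.
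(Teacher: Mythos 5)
Your global strategy (conjugate by $e^{B'}$, use the scattering equation to cancel the $q\in A_{F,\delta_4}$ part of $\mathcal{V}_{3,l}$ against the kinetic and quartic commutators, and combine the resulting $\sum_k\hat v_k\eta_k$ term with $Q_{\mathcal{G}_N}$ and the $\hat v_0$ line of $C_{\mathcal{G}_N}$ via $\mathcal{N}_{re}-\tilde{\mathcal{N}}_{re}=N-\mathbf{q}\bar N_0$ to produce $\tfrac{4}{3\pi}\mathfrak{a}_0\tilde\mu^{3/2}(\mathbf{q}-1)(N-\mathbf{q}\bar N_0)$) is the right one, and your final coefficient arithmetic is correct. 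But the operator-theoretic mechanism you give for the crucial new term $\tfrac{(\mathbf{q}-1)\bar N_0}{L^3}\sum_k\hat v_k\eta_k\,\mathcal{N}_{re}$ does not work. You attribute it to ``Wick-reordering'' the convolution piece of the \emph{first-order} commutator $[\mathcal{K},B']$. That convolution piece is a fixed coefficient $L^{-3}\sum_j\hat v_{k-j}\eta_j$ multiplying a single quartic monomial $a^*_{p-k,\sigma}a^*_{q+k,\nu}a_{q,\nu}a_{p,\sigma}$ with $p\in B_F$, $q\in A_{F,\delta_4}$, $p-k,q+k\notin B_F$; under these indicator constraints reordering produces at most a genuinely quadratic off-diagonal term (from $\delta_{p-k,q}$), never a c-number nor anything proportional to $\mathcal{N}_{re}$. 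The term you need is inherently a \emph{second-order} effect in which the $\hat v_k$ vertex of $\mathcal{V}_{3,L}$ contracts against the $\eta_l$ vertex of $B'$ on the diagonal $k=l$: in the paper this is exactly $\int_0^1\int_t^1 e^{-sB'}[\mathcal{V}_{3,L},B']e^{sB'}\,ds\,dt$ (Lemma \ref{lemma cub core}, terms $\Upsilon_{1,1}$ and $\Upsilon_2$). In your Taylor-expansion packaging it would arise from the interplay of the first-order $[\mathcal{V}_{3,L},B']$ with the second-order $\tfrac12[[\mathcal{K},B'],B']$ (which, after the scattering substitution, subtracts half of it); as written, your first-order accounting misses it entirely.

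A second, related misattribution: you claim the $W_k$ residue and the $\eta_k k(q-p)$ remainder from the scattering substitution ``assemble the last three lines of $C_{\mathcal{J}_N}$.'' Those three lines are carried over unchanged (up to cutoff factors) from $C_{\mathcal{G}_N}$ of Proposition \ref{qua prop}; they were generated by the \emph{quadratic} renormalization, where $B$ pairs two hole indices with two particle indices so that full contraction to a constant is possible. The $W_k$ residues arising in the cubic step are the operators $\mathcal{V}^\prime_{3,l},\mathcal{V}^\prime_{3,m},\mathcal{V}^\prime_{3,h}$ of Lemma \ref{lemma cub Gamma'}: they retain one index in $A_{F,\delta_4}$ and one in $B_F$, admit no full contraction, contribute no constants, and are estimated purely as errors. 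Following your accounting would double-count those constants. (A minor further point: the proposition is an operator statement for every $N$, so the relative-entropy a-priori bounds of Section \ref{a-prior} have no role in its proof; the double commutators are controlled by the operator bounds of Lemmas \ref{lemma cub control N_re}--\ref{lemma cub control V_4}.)
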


\par We leave the proof of Proposition \ref{cub prop} to Section \ref{cub}

\subsection{Bogoliubov Transformation}\label{bog sec}
\
\par The Bogoliubov transformation is defined through
\begin{equation}\label{define B tilde1}
  \tilde{B}=\frac{1}{2}(\tilde{A}-\tilde{A}^*)
\end{equation}
with
\begin{equation}\label{define A tilde1}
  \tilde{A}=\sum_{k,p,q,\sigma,\nu}
  \xi_{k,q,p}^{\nu,\sigma}a^*_{p-k,\sigma}a^*_{q+k,\nu}a_{q,\nu}a_{p,\sigma}\chi_{p-k,q+k\notin B_F}\chi_{p,q\in B_F}.
\end{equation}
Recall in Section \ref{Bog coeff sec}, we choose
\begin{equation}\label{define xi_k,q,p,nu,sigma1}
  \xi_{k,q,p}^{\nu,\sigma}=\frac{-\big(L^{-3}W_k\tilde{\zeta}^-(k)
  +\eta_kk(q-p)\phi^+(k)\tilde{\zeta}^-(k)\big)}
  {\frac{1}{2}\big(\vert q+k\vert^2+\vert p-k\vert^2-\vert q\vert^2-\vert p\vert^2\big)
  +\epsilon_0}
  \chi_{p-k,q+k\notin B_F}\chi_{p,q\in B_F},
\end{equation}
with $\epsilon_0=\tilde{\rho}_0^2$. Notice that in this Bogoliubov transformation, we renormalizes an extra term  $\Omega$ (compared with the bosonic case), together with the residue of the scattering equation cancellation $\mathcal{V}_{21}^\prime$, defined in (\ref{qua prop V_21' and Omega and V_3,l}). This observation also first came from \cite{WJH}.

\par We define
\begin{equation}\label{Z_N}
  \mathcal{Z}_N= e^{-\tilde{B}}\mathcal{J}_Ne^{\tilde{B}}
\end{equation}
We record the effect of the Bogoliubov transformation .
\begin{proposition}\label{bog prop}
  For any $N\in\mathbb{N}_{\geq0}$, and $\frac{1}{24}>\frac{1}{2}\delta_3>\frac{1}{4}\alpha_3>\alpha_2>\alpha_5>2\alpha_4>2\varepsilon >0$, $\frac{1}{3}>\delta_4\geq\delta_1\geq\delta_2>\frac{1}{12}$, $\delta_4>\alpha_6+\varepsilon>2\alpha_3+2\alpha_4+3\varepsilon$, $\beta_1=\frac{1}{3}+\alpha_5$, we have
\begin{equation}\label{Z_N bog prop}
  \mathcal{Z}_N=C_{\mathcal{Z}_N}+\mathcal{K}+e^{-\tilde{B}}\mathcal{V}_{4,4h}e^{\tilde{B}}
  +\mathcal{E}_{\mathcal{Z}_N},
\end{equation}
where
\begin{equation}\label{C_Z_N bog prop}
  \begin{aligned}
  &{C}_{\mathcal{Z}_N}=\\
  &{C}_{\mathcal{J}_N}+
  \sum_{k,p,q,\sigma,\nu}\big(\frac{W_k}{L^3}\tilde{\zeta}^-(k)
  +\eta_kk(q-p)\phi^+(k)\tilde{\zeta}^-(k)\big)\xi_{k,q,p}^{\nu,\sigma}
  \chi_{p-k,q+k\notin B_F}
  \chi_{p,q\in B_F}\\
  &-\sum_{k,p,q,\sigma}\big(\frac{W_k}{L^3}\tilde{\zeta}^-(k)
  +\eta_kk(q-p)\phi^+(k)\tilde{\zeta}^-(k)\big)\xi_{(k+q-p),p,q}^{\sigma,\sigma}
  \chi_{p-k,q+k\notin B_F}\chi_{p,q\in B_F}\chi_{p\neq q}
  \end{aligned}
\end{equation}
and
\begin{align}
   \pm&\mathcal{E}_{\mathcal{Z}_N}\nonumber\\
   &\leq
   C\big(\tilde{\rho}_0^{1+\delta_2}+\tilde{\rho}_0^{\frac{4}{3}+\alpha_2-3\delta_3-\alpha_4}
   +\tilde{\rho}_0^{\frac{5}{3}+\alpha_5-4\alpha_3-\alpha_4-2\varepsilon}
   +\tilde{\rho}_0^{\frac{4}{3}-4\alpha_3+\alpha_6-\alpha_4-2\varepsilon}\big)
   \mathcal{N}_{re}\nonumber\\
  & +C\tilde{\rho}_0^{\frac{7}{3}-4\alpha_6-\delta_2-3\varepsilon}\mathcal{N}_{re}
  +C\big(\tilde{\rho}_0^{1+\delta_2}
  +\tilde{\rho}_0^{\frac{4}{3}-\alpha_3}\big)\tilde{\mathcal{N}}_{re}
  +C\tilde{\rho}_0^{1-3\alpha_2}\mathcal{N}_i[\delta_4]\nonumber\\
  &+C\big(\tilde{\rho}_0^{1-\delta_2-3\alpha_2}+
  \tilde{\rho}_0^{\frac{5}{3}-4\alpha_6-\delta_2-\varepsilon}
  \big)\mathcal{N}_h[\delta_4]
  +C\tilde{\rho}_0^{\frac{2}{3}-\alpha_4}\mathcal{N}_h[-\delta_3]\nonumber\\
  & +C\big(\tilde{\rho}_0^{\frac{2}{3}+\delta_4-\alpha_2-\alpha_3}
  +\tilde{\rho}_0^{\frac{4}{3}+\alpha_2-\alpha_4-3\alpha_3}+
  \tilde{\rho}_0^{\frac{5}{6}+\frac{\alpha_2}{2}-\frac{3}{2}\alpha_5}\big)
  \mathcal{N}_h[-\alpha_2]\nonumber\\
  &+C\tilde{\rho}_0^{\frac{2}{3}-4\alpha_3+\alpha_6-\alpha_4}\mathcal{N}_h
    [-\alpha_6]+C\tilde{\rho}_0^{1+\alpha_5
  -4\alpha_3-\alpha_4}\mathcal{N}_h[-\beta_1]\nonumber\\
  &+C\big(\tilde{\rho}_0^{\frac{1}{3}+2\alpha_3}+\tilde{\rho}_0^{\frac{1}{3}+
  \alpha_5-\alpha_4}\big)\mathcal{K}_s
  +C\big(\tilde{\rho}_0^{\delta_4-\alpha_3-\alpha_4}+\tilde{\rho}_0^
  {\alpha_6-\alpha_4}\big)\mathcal{K}_h[-\alpha_2]
  \nonumber\\
  &+C\big(\tilde{\rho}_0^{\frac{7}{3}+\alpha_4}+\tilde{\rho}_0^{2+4\alpha_3-7\alpha_2}
  +\tilde{\rho}_0^{\frac{5}{2}+\frac{\alpha_2}{2}+\alpha_3-\frac{3}{2}\alpha_5}
  +\tilde{\rho}_0^{\frac{8}{3}+\alpha_5-3\alpha_3-\alpha_4}\big)L^3\nonumber\\
  &+C\big(\tilde{\rho}_0^{\frac{10}{3}-4\alpha_6-\alpha_3-\delta_2-\varepsilon}
  +\tilde{\rho}_0^{2+2\delta_4-\alpha_4}+
  \tilde{\rho}_0^{\frac{8}{3}-\alpha_3-\delta_2-3\alpha_2-\varepsilon}
  +\tilde{\rho}_0^{\frac{7}{3}+2\delta_3-\alpha_3-\alpha_4}\big)L^3\nonumber\\
  &+Ce^{-\tilde{B}}\tilde{\rho}_0^{\alpha_4}\mathcal{V}_{4,4h}e^{\tilde{B}}.\label{prop bog control error}
   \end{align}
\end{proposition}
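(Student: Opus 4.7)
The plan is to compute $\mathcal{Z}_N = e^{-\tilde{B}}\mathcal{J}_N e^{\tilde{B}}$ term-by-term through the Duhamel expansion
\begin{equation*}
e^{-\tilde{B}}Xe^{\tilde{B}} = X + \int_0^1 e^{-s\tilde{B}}[X,\tilde{B}]e^{s\tilde{B}}\,ds,
\end{equation*}
iterated to the order dictated by (\ref{prop bog control error}), applied to each of the five components of $\mathcal{J}_N$ given by Proposition \ref{cub prop}. The choice of $\xi_{k,q,p}^{\nu,\sigma}$ in (\ref{define xi_k,q,p,nu,sigma1}) is engineered so that its numerator equals the coefficient of $-\bigl(\mathcal{V}_{21}'|_{\tilde{\zeta}^-} + \Omega|_{\tilde{\zeta}^-}\bigr)$ while its denominator is the collisional kinetic energy $+\,\epsilon_0$. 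Hence $[\mathcal{K},\tilde{B}]$ annihilates the low-frequency portion of $\mathcal{V}_{21}' + \Omega$ up to an $O(\epsilon_0) = O(\tilde{\rho}_0^2)$ remainder, which is the mechanism by which the quadratic non-cancellations left over from Proposition \ref{cub prop} are renormalized out.

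Concretely, I first split $\mathcal{V}_{21}' + \Omega$ through $\tilde{\zeta}^- + \tilde{\zeta}^+ = 1$. The $\tilde{\zeta}^-$ piece is exactly cancelled by $[\mathcal{K},\tilde{B}]$, while the high-frequency $\tilde{\zeta}^+$ remainder is bounded using (\ref{special zeta tilde eta W L2 norm}) combined with Lemma \ref{lemma A_k B_k}, producing the error contributions carrying the $\alpha_6$ parameter in (\ref{prop bog control error}). The next Duhamel iterate $\frac{1}{2}[[\mathcal{K},\tilde{B}],\tilde{B}] + [\mathcal{V}_{21}' + \Omega,\tilde{B}]$ is the heart of the computation: after Wick-ordering via (\ref{anticommutator}), the fully contracted contributions yield exactly the two extra sums in $C_{\mathcal{Z}_N}$ (the second and third lines of (\ref{C_Z_N bog prop})), the second one arising from the exchange $p \leftrightarrow q+k$ that turns $\xi_{k,q,p}^{\nu,\sigma}$ into $\xi_{(k+q-p),p,q}^{\sigma,\sigma}$. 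The un-contracted four-body residues are treated via the integral representation (\ref{int e}) paired with the $t$-versions $h^t_{y,\sigma}, g^t_{y,\sigma}$ from (\ref{htgt0}), and bounded with Lemmas \ref{property htgt}, \ref{est sum htgt}, \ref{lemma bog htgt}.

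For the remaining pieces of $\mathcal{J}_N$: since $\delta_4 > \alpha_6 + \varepsilon > 2\alpha_3 + 2\alpha_4 + 3\varepsilon$, the frequency supports of $\mathcal{V}_{4,4h}$ and $\tilde{B}$ intersect only on a thin shell, so Duhamel iteration gives $e^{-\tilde{B}}\mathcal{V}_{4,4h}e^{\tilde{B}} = \mathcal{V}_{4,4h} + O(\tilde{\rho}_0^{\alpha_4})\,e^{-\tilde{B}}\mathcal{V}_{4,4h}e^{\tilde{B}}$ plus absorbable errors, which accounts for the final term of (\ref{prop bog control error}). The constant $C_{\mathcal{J}_N}$ is scalar apart from the particle-number piece $\frac{4}{3\pi}\mathfrak{a}_0\tilde{\mu}^{3/2}(\mathbf{q}-1)(N-\mathbf{q}\bar{N}_0)$, whose commutator with $\tilde{B}$ is small and again four-body. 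For the error $\mathcal{E}_{\mathcal{J}_N}$ I derive the transformation rules
\begin{equation*}
e^{-\tilde{B}}\mathcal{N}_{re}\,e^{\tilde{B}} \leq C\mathcal{N}_{re} + C\Vert\xi\Vert^2\tilde{\mathcal{N}}_{re},\qquad e^{-\tilde{B}}\mathcal{K}_s\,e^{\tilde{B}} \leq C\mathcal{K}_s + \text{l.o.t.},
\end{equation*}
via Gronwall applied to $s \mapsto e^{-s\tilde{B}}(\cdot)e^{s\tilde{B}}$, and combine them with the a priori bounds of Section \ref{a-prior}, producing the extra $\tilde{\rho}_0^{7/3 - 4\alpha_6 - \delta_2 - 3\varepsilon}\mathcal{N}_{re}$ and $\tilde{\rho}_0^{5/3 - 4\alpha_6 - \delta_2 - \varepsilon}\mathcal{N}_h[\delta_4]$ contributions in (\ref{prop bog control error}).

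The main obstacle lies in controlling the contraction sums from the double-commutator step \emph{uniformly in the large box $L$}: naive sums of the form $L^{-3}\sum_{p\notin B_F,\,q\in B_F}(|p|^2 - k_F^2 + \epsilon_0)^{-2}$ would diverge logarithmically as $L \to \infty$ without the gap $\epsilon_0 = \tilde{\rho}_0^2$ in the denominator. Lemma \ref{est xi/ p^2 lemma}, together with the heat-kernel decay built into (\ref{htgt0}), keeps these sums at $O(\tilde{\rho}_0^{1/3}|\ln\tilde{\rho}_0|)$ uniformly in $L$, and simultaneously fixes the compatibility window $\delta_4 > \alpha_6 + \varepsilon > 2\alpha_3 + 2\alpha_4 + 3\varepsilon$: large $\alpha_6$ is needed to suppress the infrared leakage from $\tilde{\zeta}^+$, while small $\alpha_6$ is needed to control the ultraviolet growth of $\tilde{B}$ itself. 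The balance between these two requirements is precisely what the parameter constraints of the proposition encode.
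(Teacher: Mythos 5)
Your overall route is the paper's: the coefficient (\ref{define xi_k,q,p,nu,sigma1}) is chosen so that $[\mathcal{K},\tilde{B}]=-\tilde{\mathcal{V}}_{21}^\prime-\tilde{\Omega}-\epsilon_0(\tilde{A}+\tilde{A}^*)$, which kills the $\tilde{\zeta}^-$-localized part of $\mathcal{V}_{21}^\prime+\Omega$ up to the $\epsilon_0=\tilde{\rho}_0^2$ remainder; the $\tilde{\zeta}^+$ remainder $R$ is bounded separately; the second-order commutator $[\tilde{\mathcal{V}}_{21}^\prime+\tilde{\Omega},\tilde{B}]$ produces, after contraction, exactly the two extra sums of (\ref{C_Z_N bog prop}) including the exchange term $\xi_{(k+q-p),p,q}^{\sigma,\sigma}$; and the $L$-uniformity is obtained through the representation (\ref{int e}) with the heat-kernel operators (\ref{htgt0}) together with Lemmas \ref{est xi/ p^2 lemma} and \ref{est sum htgt}, followed by Gronwall-type conjugation bounds for $\mathcal{N}_{re}$, $\mathcal{K}_s$, $\mathcal{K}_h[-\alpha_2]$. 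This is precisely the structure of (\ref{define Z_N})--(\ref{define Gamme tilde}) and of Lemmas \ref{lemma bog control N_re}--\ref{cal com bog lemma}.

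There is, however, one step in your plan that is wrong as stated: the treatment of $\mathcal{V}_{4,4h}$. The frequency supports of $\mathcal{V}_{4,4h}$ and $\tilde{B}$ do \emph{not} intersect only on a thin shell: $\mathcal{V}_{4,4h}$ involves all four momenta in $P_{F,\delta_4}$, while $\tilde{B}$ creates pairs with momenta anywhere outside $B_F$ up to the scale $\tilde{\mu}^{1/2}\tilde{\rho}_0^{-\alpha_6}$ allowed by $\tilde{\zeta}^-$, so the overlap covers the entire range from $k_F+\tilde{\mu}^{1/2}\tilde{\rho}_0^{\delta_4}$ up to $\sim\tilde{\mu}^{1/2}\tilde{\rho}_0^{-\alpha_6}$; the claimed expansion $e^{-\tilde{B}}\mathcal{V}_{4,4h}e^{\tilde{B}}=\mathcal{V}_{4,4h}+O(\tilde{\rho}_0^{\alpha_4})\,e^{-\tilde{B}}\mathcal{V}_{4,4h}e^{\tilde{B}}$ has no justification and would fail. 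Fortunately it is also unnecessary: (\ref{Z_N bog prop}) carries $e^{-\tilde{B}}\mathcal{V}_{4,4h}e^{\tilde{B}}$ verbatim, and the last term $Ce^{-\tilde{B}}\tilde{\rho}_0^{\alpha_4}\mathcal{V}_{4,4h}e^{\tilde{B}}$ of (\ref{prop bog control error}) is simply the conjugation of the error $C\tilde{\rho}_0^{\alpha_4}\mathcal{V}_{4,4h}$ already present in $\mathcal{E}_{\mathcal{J}_N}$ from (\ref{cub Error}); the hard bound on $e^{-\tilde{B}}\mathcal{V}_{4,4h}e^{\tilde{B}}$ (Lemma \ref{lemma bog control V_4,4h}, via a $>4k_F$ splitting, not support separation) is only needed later for the trial-state upper bound, not for this proposition. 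Two smaller corrections: since $\tilde{B}$ preserves particle number, $[\mathcal{N},\tilde{B}]=0$ exactly, so the term $\frac{4}{3\pi}\mathfrak{a}_0\tilde{\mu}^{3/2}(\mathbf{q}-1)(N-\mathbf{q}\bar{N}_0)$ in $C_{\mathcal{J}_N}$ passes through the conjugation unchanged and requires no estimate; and the error contributions $\tilde{\rho}_0^{7/3-4\alpha_6-\delta_2-3\varepsilon}\mathcal{N}_{re}$, $\tilde{\rho}_0^{5/3-4\alpha_6-\delta_2-\varepsilon}\mathcal{N}_h[\delta_4]$ originate from the second-order commutator computation (the error (\ref{bound E com bog}) of Lemma \ref{cal com bog lemma}, further conjugated via Lemma \ref{lemma bog control N_re}), not from conjugating $\mathcal{E}_{\mathcal{J}_N}$. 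With these points corrected, your outline reproduces the paper's proof.
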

\par We leave the proof of Proposition \ref{bog prop} to Section \ref{bog}.

\section{Proof of the Main Theorem}\label{main}
\subsection{More about (\ref{C_Z_N bog prop})}
\begin{lemma}\label{lemma cal C_Z_N}
  For any $N\in\mathbb{N}_{\geq0}$, and $\frac{1}{24}>\frac{1}{2}\delta_3>\frac{1}{4}\alpha_3>\alpha_2>\alpha_5>2\alpha_4>2\varepsilon >0$, $\frac{1}{3}>\delta_4\geq\delta_1\geq\delta_2>\frac{1}{12}$, $\delta_4>\alpha_6+\varepsilon>2\alpha_3+2\alpha_4+3\varepsilon$, $\beta_1=\frac{1}{3}+\alpha_5$, we have
  \begin{equation}\label{cal C_Z_N}
    \begin{aligned}
    {C_{\mathcal{Z}_N}}&=\Big(4\pi\mathfrak{a}_0\Big(1-\frac{1}{\mathbf{q}}\Big)
    \tilde{\rho}_0^2
    +\frac{12}{35}(11-2\ln 2)3^{\frac{1}{2}}\pi^{\frac{2}{3}}\mathfrak{a}_0^2
    2^{\frac{4}{3}}\mathbf{q}^{-\frac{1}{3}}\Big(1-\frac{1}{\mathbf{q}}\Big)
    \tilde{\rho}_0^\frac{7}{3}\Big)L^3\\
    &+\frac{4}{3\pi}
  \mathfrak{a}_0\tilde{\mu}^{\frac{3}{2}}(\mathbf{q}-1)(N-\mathbf{q}\bar{N}_0)
    +O\big(\tilde{\rho}_0^{2+2\alpha_1}L^3+\tilde{\rho}_0^{\frac{8}{3}-2\alpha_3}L^3
    +\tilde{\rho}_0^{\frac{7}{3}+2\alpha_4}L^3\big).
    \end{aligned}
  \end{equation}
  For brevity, we might place the error in (\ref{cal C_Z_N}) to $\mathcal{E}_{\mathcal{Z}_N}$,  and denote the constant independent of $N$ by
  \begin{equation}\label{df E_0}
    \begin{aligned}
    E_0&=\Big(4\pi\mathfrak{a}_0\Big(1-\frac{1}{\mathbf{q}}\Big)
    \tilde{\rho}_0^2
    +\frac{12}{35}(11-2\ln 2)3^{\frac{1}{2}}\pi^{\frac{2}{3}}\mathfrak{a}_0^2
    2^{\frac{4}{3}}\mathbf{q}^{-\frac{1}{3}}\Big(1-\frac{1}{\mathbf{q}}\Big)
    \tilde{\rho}_0^\frac{7}{3}\Big)L^3\\
    &-\frac{4}{3\pi}
  \mathfrak{a}_0\tilde{\mu}^{\frac{3}{2}}(\mathbf{q}-1)\mathbf{q}\bar{N}_0.
    \end{aligned}
  \end{equation}
\end{lemma}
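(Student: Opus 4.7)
The plan is to combine the explicit constants making up $C_{\mathcal{J}_N}$ in (\ref{cub C_J_N}) with the two additional sums appearing in (\ref{C_Z_N bog prop}) from the Bogoliubov step, then pass to the thermodynamic limit $L\to\infty$ to recognize the leading $\tilde{\rho}_0^2 L^3$ term and the second-order Huang-Yang $\tilde{\rho}_0^{7/3}L^3$ term. The leading order comes from the first term of $C_{\mathcal{J}_N}$: by (\ref{discrete int vf}) we have $\hat{v}_0+\sum_k\hat{v}_k\eta_k=\int_{\mathbb{R}^3}v f_\ell$, and combining with the scattering identity (\ref{est of int vf_l}) together with the asymptotics $\bar{N}_0=\tilde{\mu}^{3/2}L^3/(6\pi^2)+O((\tilde{\mu}^{1/2}L)^{21/16+\varepsilon})$ from (\ref{average particle number}) and (\ref{asymptotic rho_0}) gives
$$\frac{\mathbf{q}(\mathbf{q}-1)\bar{N}_0^2}{2L^3}\bigl(\hat{v}_0+\sum_k\hat{v}_k\eta_k\bigr)=4\pi\mathfrak{a}_0(1-1/\mathbf{q})\tilde{\rho}_0^2 L^3+(\text{subleading}).$$
The $\mathfrak{a}_0^2/(\ell L)$ correction in (\ref{est of int vf_l}) produces an $\tilde{\rho}_0^{7/3-\alpha_3}L^3$ piece that is larger than the target error, so it must cancel against the third and fourth sums of $C_{\mathcal{J}_N}$. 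Substituting the identity (\ref{rewrite eta_k}) to trade $\eta_{k+q-p}$ for $W$-type pieces, these contributions do pair off to leading order, leaving a residue bounded by $\tilde{\rho}_0^{8/3-2\alpha_3}L^3$. The chemical-potential term $\frac{4}{3\pi}\mathfrak{a}_0\tilde{\mu}^{3/2}(\mathbf{q}-1)(N-\mathbf{q}\bar{N}_0)$ is kept explicit, with its $N$-independent piece $-\frac{4}{3\pi}\mathfrak{a}_0\tilde{\mu}^{3/2}(\mathbf{q}-1)\mathbf{q}\bar{N}_0$ absorbed into $E_0$ as in (\ref{df E_0}).

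Next, the second-order Huang-Yang coefficient is extracted from the fifth sum of $C_{\mathcal{J}_N}$ together with the two additional sums in $C_{\mathcal{Z}_N}$. Using (\ref{define xi_k,q,p,nu,sigma1})--(\ref{xi1and 20}) for $\xi_{k,q,p}^{\nu,\sigma}$, the additional Bogoliubov contribution folds into the single "squared" expression
$$-\sum_{k,p,q,\sigma,\nu}\frac{\bigl(L^{-3}W_k\tilde{\zeta}^-(k)+\eta_k k(q-p)\phi^+(k)\tilde{\zeta}^-(k)\bigr)^2}{\tfrac{1}{2}(|q+k|^2+|p-k|^2-|q|^2-|p|^2)+\epsilon_0}\chi_{p-k,q+k\notin B_F}\chi_{p,q\in B_F}$$
together with the analogous exchange sum (with $k\mapsto k+q-p$ and forced $\sigma=\nu$), the two combining to produce the spin factor $\mathbf{q}(\mathbf{q}-1)\propto(1-1/\mathbf{q})$. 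Sending $L\to\infty$ converts the constrained lattice sum into a continuous integral over $\{p,q\in\mathbb{R}^3:|p|,|q|\le k_F,\ |p-k|,|q+k|>k_F\}$; rescaling by $p=k_F p'$ etc., and using $k_F^3\sim 6\pi^2\tilde{\rho}_0/\mathbf{q}$ with the parametrization (\ref{useful formula}) of the energy denominator, reduces the integral to the classical Huang-Yang integral whose closed-form evaluation yields the stated constant.

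The main obstacle is controlling the errors from the cutoffs $\phi^+,\zeta^-,\tilde{\zeta}^-$, the gap $\epsilon_0=\tilde{\rho}_0^2$, and the Riemann-to-integral passage at finite $L$. Because the Huang-Yang integrand concentrates at momentum scale $k_F\sim\tilde{\rho}_0^{1/3}$, each cutoff removes only a narrow band of the relevant momentum region; the resulting errors are estimated via (\ref{special eta L2 norm})--(\ref{special zeta tilde eta W L2 norm}) together with Lemma \ref{eta sum lemma}, and under the hierarchical parameter constraints in the hypothesis they are bounded by $\tilde{\rho}_0^{7/3+2\alpha_4}L^3$. The $\epsilon_0$-regularization introduces at worst a $|\ln\tilde{\rho}_0|$ factor near the Fermi surface (cf.\ Lemma \ref{est sum htgt}); removing it against the true Huang-Yang integrand costs $\epsilon_0\cdot\mathfrak{a}_0^2\tilde{\rho}_0^{5/3}L^3\lesssim\tilde{\rho}_0^{11/3}L^3$, and the residual logarithm is absorbed by the margin $\tilde{\rho}_0^{2\alpha_1}$ coming from the temperature assumption $\alpha_1>1/6$, producing the $\tilde{\rho}_0^{2+2\alpha_1}L^3$ term. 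Finally, the lattice-counting remainder in (\ref{asy N(R)}) contributes $O(L^{21/16+\varepsilon}/L^3)$, which is subleading. Assembling these pieces yields (\ref{cal C_Z_N}) with the stated error, and the $N$-independent constants are collected into $E_0$ as in (\ref{df E_0}).
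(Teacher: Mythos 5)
Your plan follows essentially the same route as the paper: extract the leading order from the first sum via $\hat{v}_0+\sum_k\hat{v}_k\eta_k=\int vf_\ell$, use the scattering equation to reorganize the $W\eta$ and $\eta\eta$ sums so that the $\mathfrak{a}_0^2(\ell L)^{-1}$ correction cancels (the paper does this through the auxiliary quantity $Z_k$ in (\ref{Z_k})--(\ref{correction})), combine with the Bogoliubov "squared" sums, and pass to the classical Huang--Yang integral in the $L\to\infty$ limit. One small misattribution: the $\tilde{\rho}_0^{2+2\alpha_1}L^3$ error does not come from absorbing the $\epsilon_0$-logarithm, but from replacing the lattice count $\mathbf{q}\bar{N}_0/L^3\approx\mathbf{q}\tilde{\mu}^{3/2}/(6\pi^2)$ by the finite-temperature density $\tilde{\rho}_0$ via (\ref{asymptotic rho_0}), whose discrepancy is $O(\tilde{\rho}_0^{1+2\alpha_1})$; the $\epsilon_0$-removal instead costs only $O(\epsilon_0\tilde{\rho}_0^{5/3})$ as in (\ref{int lim L int}).
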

\begin{proof}
  \par We follow exactly as the proof of \cite[Lemma 6.1]{WJH} to get to (recall that we choose $\epsilon_0=\tilde{\rho}_0^2$):
  \begin{equation}\label{cal C_Z_N penult}
     \begin{aligned}
    {C_{\mathcal{Z}_N}}&=4\pi\mathfrak{a}_0\Big(1-\frac{1}{\mathbf{q}}\Big)
    \tilde{\rho}_0^2L^3
    +\frac{4}{3\pi}
  \mathfrak{a}_0\tilde{\mu}^{\frac{3}{2}}(\mathbf{q}-1)(N-\mathbf{q}\bar{N}_0)\\
    &+\sum_{\substack{\sigma\neq\nu\\k,p,q}}\frac{(4\pi\mathfrak{a}_0)^2}{L^6}
  \Big(\frac{1}{\vert k\vert^2}-\frac{\chi_{p-k,q+k\notin B_F}}{\vert k\vert^2+k(q-p)+\epsilon_0}\Big)\chi_{p,q\in B_F}\chi_{\vert k\vert>4k_F}\\
  &+\sum_{\substack{\sigma\neq\nu\\k,p,q}}\frac{(4\pi\mathfrak{a}_0)^2}{2L^6}
  \Big(\frac{2}{\vert k\vert^2}-\frac{1}{\vert k\vert^2+k(q-p)+\epsilon_0}\\
  &\quad\quad\quad\quad\quad\quad
  -\frac{1}{\vert k\vert^2-k(q-p)+\epsilon_0}\Big)\chi_{p,q\in B_F}\chi_{0<\vert k\vert\leq4k_F}\\
    &+O\big(\tilde{\rho}_0^{2+2\alpha_1}L^3+\tilde{\rho}_0^{\frac{8}{3}-2\alpha_3}L^3
    +\tilde{\rho}_0^{\frac{7}{3}+2\alpha_4}L^3\big).
    \end{aligned}
  \end{equation}
  The summations on the second and the third line of the right-hand side of (\ref{cal C_Z_N penult}) converge absolutely. Thus by dominated convergence theorem, in the large $L$ limit, they become
  \begin{equation}\label{int lim L}
    \begin{aligned}
    &\frac{(4\pi\mathfrak{a}_0)^2}{(2\pi)^9}\mathbf{q}(\mathbf{q}-1)IL^3
    +o_{L\to\infty}(1)L^3,
    \end{aligned}
  \end{equation}
  where $\lim_{L\to\infty}o_{L\to\infty}(1)=0$, and
  \begin{equation}\label{int lim L int}
   \begin{aligned}
    I&=\int_
    {\vert q\vert<k_F}\int_{\vert p\vert<k_F}\int_{k\in\mathbb{R}^3}
    \Big(\frac{1}{\vert k\vert^2}-\frac{2\chi_{p-k,q+k\notin B_F}}{
    \vert p-k\vert^2+\vert q+k\vert^2-\vert p\vert^2-\vert q\vert^2+2\epsilon_0}\Big)\\
    &=\int_
    {\vert q\vert<k_F}\int_{\vert p\vert<k_F}\int_{k\in\mathbb{R}^3}
    \Big(\frac{1}{\vert k\vert^2}-\frac{2\chi_{p-k,q+k\notin B_F}}{
    \vert p-k\vert^2+\vert q+k\vert^2-\vert p\vert^2-\vert q\vert^2}\Big)+O(\epsilon_0
    \tilde{\rho}_0^{\frac{5}{3}}).
   \end{aligned}
  \end{equation}
  The second equality comes from \cite[Lemma 6.1]{2024huangyangformulalowdensityfermi}. From, for example, \cite[Appendix A]{WJH} and \cite[Appendix]{soviet}, together with the expansion of $\tilde{\rho}_0$ in (\ref{asymptotic rho_0})(we have $k_F=\tilde{\mu}^{\frac{1}{2}}$), we can calculate the exact value of the integral on the second line of (\ref{int lim L int}), and thus combining this result with (\ref{cal C_Z_N penult}) we reach (\ref{cal C_Z_N}).

  \par We sketch the idea in proving (\ref{cal C_Z_N penult}). We write
  \begin{align*}
    &C_1=\frac{1}{2L^3}\Big(\hat{v}_0+\sum_{k\in(2\pi/L)\mathbb{Z}^3}
  \hat{v}_k\eta_k\Big)\mathbf{q}(\mathbf{q}-1)\bar{N}_0^2\\
  &C_2=
  \frac{1}{L^3}\sum_{k,p,q,\sigma,\nu}W_k\eta_k\phi^+(k)\zeta^-(k)
  \chi_{p-k,q+k\notin B_F}\chi_{p,q\in B_F}\\
  &-\frac{1}{L^3}\sum_{k,p,q,\sigma}W_k\eta_{k+q-p}\zeta^-(k)\phi^+(k)
  \phi^+(k+q-p)\chi_{p-k,q+k\notin B_F}\chi_{p,q\in B_F}\chi_{p\neq q}\\
  &-\sum_{k,p,q,\sigma}\eta_k\eta_{k+q-p}\zeta^-(k)
  \phi^+(k+q-p)k(q-p)
  \chi_{p-k,q+k\notin B_F}\chi_{p,q\in B_F}\chi_{p\neq q}\\
  &C_3=\sum_{k,p,q,\sigma,\nu}\big(\frac{W_k}{L^3}\tilde{\zeta}^-(k)
  +\eta_kk(q-p)\phi^+(k)\tilde{\zeta}^-(k)\big)\xi_{k,q,p}^{\nu,\sigma}
  \chi_{p-k,q+k\notin B_F}
  \chi_{p,q\in B_F}\\
  &-\sum_{k,p,q,\sigma}\big(\frac{W_k}{L^3}\tilde{\zeta}^-(k)
  +\eta_kk(q-p)\phi^+(k)\tilde{\zeta}^-(k)\big)\xi_{(k+q-p),p,q}^{\sigma,\sigma}
  \chi_{p-k,q+k\notin B_F}\chi_{p,q\in B_F}\chi_{p\neq q}
  \end{align*}

  \par For $C_1$, using (\ref{asymptotic rho_0}), (\ref{average particle number}), (\ref{est of int vf_l}) and (\ref{discrete int vf}), we deduce
  \begin{equation}\label{C_1}
    \begin{aligned}
    C_1&=\big(4\pi\mathfrak{a}_0+6\pi\mathfrak{a}_0^2(\ell L)^{-1}\big)
    \Big(1-\frac{1}{\mathbf{q}}\Big)\tilde{\rho}_0^2L^3+
    O\big(\tilde{\rho}_0^{\frac{8}{3}-2\alpha_3}L^3+\tilde{\rho}_0^{2+2\alpha_1}L^3\big).
    \end{aligned}
  \end{equation}
  \par For $C_3$, to discard the cutoff $\tilde{\zeta}^-$, we use the bounds
  \begin{equation}\label{bound C_3 zeta^- 1}
    \begin{aligned}
    \frac{\vert W_k\vert}{L^3}\leq \frac{C}{L^3},
    \quad
    \vert\eta_k\vert\vert k\vert^2\leq \frac{C}{L^3},
    \quad\sum_{k}\frac{\tilde{\zeta}^+}{L^3\vert k\vert^2}\leq \tilde{\rho}_0^{\alpha_6-\frac{1}{3}}
    \end{aligned}
  \end{equation}
  and estimates (\ref{special zeta tilde eta W L2 norm}) to control the residues by
  \begin{equation}\label{bound C_3 zeta^- 2}
    \tilde{\rho}_0^{\frac{7}{3}+\alpha_6-2\alpha_3}L^3.
  \end{equation}
  To discard the cutoff ${\phi}^+$, we use the bounds
  \begin{equation}\label{bound C_3 phi^+ 1}
    \begin{aligned}
    \frac{\vert W_k\vert}{L^3}\leq \frac{C}{L^3},
    \quad
    \vert\eta_k\vert\vert k\vert\leq \frac{C\tilde{\rho}_0^{-\frac{1}{3}+\alpha_3}}{L^3}
    \end{aligned}
  \end{equation}
  and the estimate in the form of (\ref{est sum A_F d <=0}) to control the residues by
  \begin{equation}\label{bound C_3 phi^+ 2}
    \tilde{\rho}_0^{\frac{7}{3}+\alpha_3-\alpha_2}L^3.
  \end{equation}
  Thus
  \begin{equation}\label{C_3}
    \begin{aligned}
    C_3&=-\sum_{\substack{\sigma,\nu\\k,p,q}}\frac{\big(
    L^{-3}W_k+\eta_kk(q-p)\big)^2}{\vert k\vert^2+k(q-p)+\epsilon_0}
    \chi_{p-k,q+k\notin B_F}\chi_{p,q\in B_F}\\
    &+\sum_{\substack{\sigma,\nu\\k,p,q}}\frac{\big(
    L^{-3}W_k+\eta_kk(q-p)\big)\big(L^{-3}W_{k+q-p}
    \eta_{k+q-p}(k+q-p)(p-q)\big)}{\vert k\vert^2+k(q-p)+\epsilon_0}\\
    &\quad\quad\quad\quad\quad\quad\quad\times
    \chi_{p-k,q+k\notin B_F}\chi_{p,q\in B_F}
    +O(\tilde{\rho}_0^{\frac{7}{3}+2\alpha_4}L^3)
    \end{aligned}
  \end{equation}

  \par For $C_2$, we first notice that from (\ref{cutoff effect})
  \begin{equation}\label{cutoff effect0}
  \begin{aligned}
    \sum_{k,p,q,\sigma,\nu}\eta^2_kk(q-p)\phi^+(k)\zeta^-(k)\chi_{p-k,q+k\notin B_F}
    \chi_{p,q\in B_F}=0.
  \end{aligned}
  \end{equation}
  To discard the cutoff $\phi^+$ and $\zeta^-$, we have the following estimates, for $p,q\in B_F$:
  \begin{equation}\label{C_2 discard1}
    \begin{aligned}
    \frac{1}{L^3}\sum_{k,p,q}\vert W_k\eta_k\phi^-(k)\vert
    +\vert W_k\eta_{k+q-p}\phi^-(k+q-p)\vert&\leq C\tilde{\rho}_0^{\frac{7}{3}
    +2\alpha_3-3\alpha_2}L^3\\
    \frac{1}{L^3}\sum_{k,p,q}\vert W_k\eta_k\zeta^+(k)\vert
    +\vert W_k\eta_{k+q-p}\zeta^+(k)\vert&\leq C\tilde{\rho}_0^{\frac{8}{3}
    -2\alpha_3+\alpha_5}L^3
    \end{aligned}
  \end{equation}
  and
  \begin{equation}\label{C_2 discard 2}
  \begin{aligned}
    \frac{1}{L^3}\sum_{k,p,q}\Big(\vert \eta_k^2k(q-p)\phi^-(k)\vert
    +\vert \eta_k\eta_{k+q-p}k(q-p)&\phi^-(k+q-p)\vert\\
    +\vert \eta_k\eta_{k+q-p}k(q-p)\phi^-(k)\vert\Big)&\leq C\tilde{\rho}_0^{\frac{7}{3}
    +3\alpha_3-3\alpha_2}L^3\\
    \frac{1}{L^3}\sum_{k,p,q}\vert \eta_k^2k(q-p)\zeta^+(k)\vert
    +\vert \eta_k\eta_{k+q-p}k(q-p)\zeta^+(k)\vert&\leq C\tilde{\rho}_0^{\frac{8}{3}
    -\alpha_3+\alpha_5}L^3
  \end{aligned}
  \end{equation}
  Moreover, we have for $p,q\in B_F$ (comparing with \cite[6.8]{WJH}):
  \begin{equation}\label{C_2 pneq q}
    \frac{1}{L^3}\sum_{k,p,q}\vert W_k\eta_{k+q-p}\vert\chi_{p=q}\leq C(\tilde{\rho}_0)\ll L^3.
  \end{equation}
  We write (corresponds to \cite[(6.9)]{WJH})
  \begin{equation}\label{Z_k}
    \frac{1}{L^3}Z_k=\frac{1}{2L^3}\sum_{m}\hat{v}_{k-m}\eta_m+\frac{1}{2L^3}\hat{v}_k.
  \end{equation}
  Using (\ref{eqn of eta_p}), we have (corresponds to \cite[(6.11)]{WJH})
  \begin{equation}\label{Z_k eq}
   \begin{aligned}
   \big(\vert k\vert^2+k(q-p)+\epsilon_0\big)\eta_k
   &=\big(\frac{W_k}{L^3}+k(q-p)\eta_k+\epsilon_0\eta_k\big)-\frac{Z_k}{L^3}\\
   \big(\vert k\vert^2+k(q-p)+\epsilon_0\big)\eta_{k+q-p}
   &=\big(\frac{W_{k+q-p}}{L^3}+(k+q-p)(p-q)\eta_{k+q-p}\\
   &\quad\quad+
   \epsilon_0\eta_{k+q-p}\big)-\frac{Z_{k+q-p}}{L^3}
   \end{aligned}
 \end{equation}
 Using estimates (corresponds to \cite[(6.18)]{WJH})
 \begin{equation}\label{Z_k-Z_0}
   \frac{1}{L^3}\vert Z_{k+q-p}-Z_k\vert\leq \frac{C}{L^3}\tilde{\rho}^{\frac{1}{3},}
 \end{equation}
 and (corresponds to \cite[(6.26)]{WJH})
 \begin{equation}\label{Z W}
   \begin{aligned}
   {\vert Z_k\vert}\leq{C},&\quad
   {\vert W_k\vert}\leq{C}\\
   \vert Z_k-Z_0\vert\leq C\vert k\vert^2,&\quad
   \vert W_k-W_0\vert\leq C\vert k\vert^2\tilde{\rho}_0^{-\frac{2}{3}+2\alpha_3}\\
   \vert Z_0-4\pi\mathfrak{a}_0\vert\leq \tilde{\rho}_0^{\frac{1}{3}-\alpha_3},&\quad
   \vert W_0-4\pi\mathfrak{a}_0\vert\leq \tilde{\rho}_0^{\frac{1}{3}-\alpha_3}
   \end{aligned}
 \end{equation}
 Then following the course in \cite[Lemma 6.1]{WJH}, we can deduce
 \begin{equation}\label{C_2}
   \begin{aligned}
   C_2&=\sum_{\substack{\sigma\neq\nu\\k,p,q}}\frac{(4\pi\mathfrak{a}_0)^2}{L^6}
  \Big(\frac{1}{\vert k\vert^2}-\frac{\chi_{p-k,q+k\notin B_F}}{\vert k\vert^2+k(q-p)+\epsilon_0}\Big)\chi_{p,q\in B_F}\chi_{\vert k\vert>4k_F}\\
  &+\sum_{\substack{\sigma\neq\nu\\k,p,q}}\frac{(4\pi\mathfrak{a}_0)^2}{2L^6}
  \Big(\frac{2}{\vert k\vert^2}-\frac{1}{\vert k\vert^2+k(q-p)+\epsilon_0}\\
  &\quad\quad\quad\quad\quad\quad
  -\frac{1}{\vert k\vert^2-k(q-p)+\epsilon_0}\Big)\chi_{p,q\in B_F}\chi_{0<\vert k\vert\leq4k_F}\\
  &-\sum_{\substack{\sigma\neq\nu\\k\neq0,p,q}}\frac{W_kZ_k}{L^6\vert k\vert^2}
  \chi_{p,q\in B_F}
  -C_3+O(\tilde{\rho}_0^{\frac{7}{3}+2\alpha_4}L^3).
   \end{aligned}
 \end{equation}

 \par Notice that
  \begin{equation}\label{correction}
 6\pi\mathfrak{a}_0^2(\ell L)^{-1}
    \Big(1-\frac{1}{\mathbf{q}}\Big)\tilde{\rho}_0^2L^3
    -\sum_{\substack{\sigma\neq\nu\\k\neq0,p,q}}\frac{W_kZ_k}{L^6\vert k\vert^2}
  \chi_{p,q\in B_F}=O(\tilde{\rho}_0^{\frac{8}{3}-2\alpha_3}L^3).
  \end{equation}
  For detailed proof of (\ref{correction}), one can check \cite[Lemma 5.4]{2018Bogoliubov} or \cite[Lemma 5.1]{me}

 \par Then combining (\ref{C_Z_N bog prop}), (\ref{C_1}), (\ref{C_2}), (\ref{C_3}) and (\ref{correction}), we reach (\ref{cal C_Z_N penult}).
\end{proof}

\subsection{Proof of (\ref{core pressure})}
\
\par Since $\alpha_1>\frac{1}{6}$, we set $d=\min\{\alpha_1-\frac{1}{6},\frac{1}{9}\}$, and we choose
\begin{equation}\label{choose delta}
  \delta_4=\delta_1=\frac{1}{6}+\frac{d}{10}<\min\{\alpha_1,\frac{1}{3}\},
  \quad\delta_2=\frac{1}{6}-\frac{3}{4}d>\frac{1}{12}.
\end{equation}
We also set
\begin{equation}\label{choose alpha}
\begin{aligned}
  \delta_3=\frac{1}{24}+\frac{3}{160}d<\frac{1}{12},
  \quad\alpha_3=\frac{1}{12}+\frac{d}{40}<2\delta_3,
  \quad\alpha_6=\frac{1}{6}+\frac{3}{50}d,
\end{aligned}
\end{equation}
and
\begin{equation}\label{choose alpha small}
  \alpha_2=\frac{d}{160},\quad\alpha_5=\frac{d}{180},\quad
  \alpha_4=\frac{d}{400},\quad\varepsilon=\frac{d}{800}.
\end{equation}
We choose
\begin{equation}\label{choose kappa}
  \kappa=\frac{4}{3\pi}\mathfrak{a}_0(\mathbf{q}-1).
\end{equation}
Therefore, from (\ref{mu tilde}), we have
\begin{equation}\label{choose mu tilde}
  \tilde{\mu}=\mu-\kappa\tilde{\mu}^{\frac{3}{2}}.
\end{equation}
We let, for any $N\in\mathbb{N}_{\geq0}$,
\begin{equation}\label{define U unitary}
  \mathcal{U}=e^{B}e^{B^\prime}e^{\tilde{B}}.
\end{equation}
Notice that $\mathcal{U}$ can also be regarded as an unitary operator on $\mathcal{F}$. From Proposition \ref{bog prop} and Lemma \ref{lemma cal C_Z_N}, we have
\begin{equation}\label{UHU}
  \mathcal{U}^*(H-\mu\mathcal{N})\mathcal{U}=\mathcal{Z}=E_0+(\mathcal{K}-\tilde{\mu}\mathcal{N})
  +e^{-\tilde{B}}\mathcal{V}_{4,4h}e^{\tilde{B}}+\mathcal{E}_{\mathcal{Z}}.
\end{equation}
We also let $\Gamma^{\mathcal{U}}=\mathcal{U}^*\Gamma\mathcal{U}$ for any state $\Gamma$, and

\par Now we are going to bound $-L^3P^L(\beta,\mu)$, using Proposition \ref{bog prop} and Lemma \ref{lemma cal C_Z_N}, and a-priori estimates in Section \ref{a-prior}.
\begin{proof}[Lower bound]
for some constant $0<c<1$:
  \begin{align}
  &-L^3P^L(\beta,\mu)= -L^3P^L[G]=\tr(H-\mu\mathcal{N})G-\frac{1}{\beta}S[G]\nonumber\\
  &=\tr(\mathcal{Z}-\mu\mathcal{N})G^{\mathcal{U}}-\frac{1}{\beta}S[G^{\mathcal{U}}]\nonumber\\
  &=E_0+\frac{1}{\beta}S(G^{\mathcal{U}},\tilde{G}_0)-L^3\tilde{P}^L_0[\tilde{G}_0]
  +\tr e^{-\tilde{B}}\mathcal{V}_{4,4h}e^{\tilde{B}}G^\mathcal{U}
  +\tr\mathcal{E}_{\mathcal{Z}} G^\mathcal{U}\nonumber\\
  &\geq E_0-\tilde{P}_0^L(\beta,\tilde{\mu})+(1-c)\big(
  \beta^{-1}S(G^{\mathcal{U}},\tilde{G}_0)+
  \tr e^{-\tilde{B}}\mathcal{V}_{4,4h}e^{\tilde{B}}G^\mathcal{U}\big)
  +C\tilde{\rho}_0^{\frac{7}{3}+\frac{d}{400}}L^3\nonumber\\
  &\geq E_0-\tilde{P}_0^L(\beta,\tilde{\mu})
  +C\tilde{\rho}_0^{\frac{7}{3}+\frac{d}{400}}L^3
  \label{lower bound}
  \end{align}
\end{proof}

\begin{proof}[Upper bound]
We choose the test state $\Gamma_0$ such that $\Gamma_0^{\mathcal{U}}=\tilde{G}_0$.
  \begin{align}
  &-L^3P^L(\beta,\mu)\leq -L^3P^L[\Gamma_0]
  =\tr(H-\mu\mathcal{N})\Gamma_0-\frac{1}{\beta}S[\Gamma_0]\nonumber\\
  &=\tr(\mathcal{Z}-\mu\mathcal{N})
  \Gamma_0^{\mathcal{U}}-\frac{1}{\beta}S[\Gamma_0^{\mathcal{U}}]\nonumber\\
  &=E_0-\tilde{P}_0^L(\beta,\tilde{\mu})+C\tilde{\rho}_0^{\frac{7}{3}+\frac{d}{400}}L^3
  +C\tr e^{-\tilde{B}}\mathcal{V}_{4,4h}e^{\tilde{B}}\tilde{G}_0\nonumber\\
  &\leq
  E_0-\tilde{P}_0^L(\beta,\tilde{\mu})+C\tilde{\rho}_0^{\frac{7}{3}+\frac{d}{400}}L^3
  \label{upper bound}
  \end{align}
\end{proof}

\par Recall that from (\ref{density non intercating Fermi Dirac dis}), (\ref{asymptotic rho_0}) and \cite{dingle1973asymptotic}, we have
\begin{equation}\label{sim}
   \rho_0^{\frac{2}{3}}\sim\mu\sim\tilde{\mu}, \quad \frac{\partial\rho_0}{\partial\mu}\sim\mu^{\frac{1}{2}}\sim\rho_0^{\frac{1}{3}},\quad
   \frac{\partial^2\rho_0}{\partial\mu^2}\sim\mu^{-\frac{1}{2}}\sim\rho_0^{-\frac{1}{3}}.
\end{equation}
We then use (\ref{choose mu tilde}) and Taylor expansion to obtain
\begin{equation}\label{mu expand}
  \tilde{\mu}^{\frac{3}{2}}=\mu^{\frac{3}{2}}-\frac{3}{2}\kappa\mu^2+O(\rho_0^{\frac{5}{3}}).
\end{equation}
Also using (\ref{choose mu tilde}) and Taylor expansion, we can deduce
\begin{equation}\label{1st expand}
  4\pi\mathfrak{a}_0\Big(1-\frac{1}{\mathbf{q}}\Big)
    \tilde{\rho}_0^2=4\pi\mathfrak{a}_0\Big(1-\frac{1}{\mathbf{q}}\Big)
    {\rho}_0^2-(8\pi\mathfrak{a}_0)^2\Big(1-\frac{1}{\mathbf{q}}\Big)^2
    {\rho}_0^2\frac{\partial\rho_0}{\partial\mu}
    +O(\rho_0^{\frac{8}{3}}),
\end{equation}
and
\begin{equation}\label{3rd expand}
  \begin{aligned}
  \lim_{L\to\infty}-\tilde{P}_0^L(\beta,\tilde{\mu})=&
  -P_0(\beta,\tilde{\mu})\\
  =&-P_0(\beta,\mu)+8\pi\mathfrak{a}_0\Big(1-\frac{1}{\mathbf{q}}\Big){\rho}_0^2
  -\frac{3}{2}\kappa^2\mu^2\rho_0\\
  &-\frac{1}{2}(8\pi\mathfrak{a}_0)^2\Big(1-\frac{1}{\mathbf{q}}\Big)^2
    {\rho}_0^2\frac{\partial\rho_0}{\partial\mu}
    +O(\rho_0^{\frac{8}{3}}+\rho_0^{2+2\alpha_1}).
  \end{aligned}
\end{equation}
On the other hand, (\ref{asymptotic rho_0}), (\ref{average particle number}), (\ref{mu expand}) and (\ref{1st expand}) together yield
\begin{equation}\label{2nd expand}
  \begin{aligned}
  &\lim_{L\to\infty}-\frac{4}{3\pi}
  \mathfrak{a}_0\tilde{\mu}^{\frac{3}{2}}(\mathbf{q}-1)\frac{\mathbf{q}\bar{N}_0}{L^3}\\
  =&-8\pi\mathfrak{a}_0\Big(1-\frac{1}{\mathbf{q}}\Big){\rho}_0^2
  +\frac{3}{2}\kappa^2\mu^2\rho_0
  +(8\pi\mathfrak{a}_0)^2\Big(1-\frac{1}{\mathbf{q}}\Big)^2
    {\rho}_0^2\frac{\partial\rho_0}{\partial\mu}
    +O(\rho_0^{\frac{8}{3}}+\rho_0^{2+2\alpha_1}).
  \end{aligned}
\end{equation}
Therefore,
\begin{equation}\label{final}
  \begin{aligned}
  P(\beta,\mu)=\lim_{L\to\infty}P^L(\beta,\mu)=
  &P_0(\beta,\mu)
    -4\pi\mathfrak{a}_0\Big(1-\frac{1}{\mathbf{q}}\Big)\rho_0^2\\
    &-\frac{12}{35}(11-2\ln 2)3^{\frac{1}{2}}\pi^{\frac{2}{3}}\mathfrak{a}_0^2
    2^{\frac{4}{3}}\mathbf{q}^{-\frac{1}{3}}\Big(1-\frac{1}{\mathbf{q}}\Big)
    \rho_0^\frac{7}{3}\\
    &+\frac{1}{2}(8\pi\mathfrak{a}_0)^2\Big(1-\frac{1}{\mathbf{q}}\Big)^2\rho_0^2\frac{\partial \rho_0}
    {\partial\mu}+O\big(\rho_0^{\frac{7}{3}+\frac{d}{400}}\big)
  \end{aligned}
\end{equation}
which is exactly (\ref{core pressure}).

\subsection{Proof of (\ref{core density})}
\
\par (\ref{core density}) is a direct corollary of (\ref{core pressure}), due to the convexity of $P(\beta,\mu)$ with respect to $\mu$. See \cite[Section 5]{FermithermoTpositive}. More precisely, we use the convexity of $P(\beta,\mu)$ to obtain
\begin{equation}\label{convexity}
  \frac{P(\beta,\mu)-P(\beta,\mu-\delta)}{\delta}\leq\rho_-(\beta,\mu)\leq
  \rho_+(\beta,\mu)\leq\frac{P(\beta,\mu+\delta)-P(\beta,\mu)}{\delta}
\end{equation}
with $\delta=\rho_0(\beta,\mu)^{1+\frac{d}{800}}$. With this choice of $\delta$, it is easy to find
\begin{equation*}
  \rho_0(\beta,\mu\pm\delta)\sim\rho_0(\beta,\mu)\Rightarrow
  \beta\mu\gtrsim\rho_0(\beta,\mu\pm\delta)^{-\alpha_1},
\end{equation*}
and therefore $P(\beta,\mu\pm\delta)$ can be expanded using (\ref{core pressure}). The rest is using estimates similar to (\ref{sim}), (\ref{1st expand}) and (\ref{3rd expand}) to reach (\ref{core density}).

\section{Quadratic Renormalization}\label{qua}
\
\par In this section, we analyze the excitation Hamiltonian $\mathcal{G}_N$ defined in (\ref{G_N}), and prove Proposition \ref{qua prop}. Recall that we have chosen
\begin{equation}\label{choose ell qua}
  \ell L=\tilde{\rho}_0^{-\frac{1}{3}+\alpha_3}
\end{equation}
for some $0<\alpha_3<\frac{1}{3}$ to be determined. Also, we define
\begin{equation}\label{define B}
  B=\frac{1}{2}(A-A^*)
\end{equation}
and
\begin{equation}\label{define A}
  A=\sum_{k,p,q,\sigma,\nu}
  \eta_k\phi^+(k)a^*_{p-k,\sigma}a^*_{q+k,\nu}a_{q,\nu}a_{p,\sigma}\chi_{p-k,q+k\notin
  B_F}\chi_{p,q\in B_F}.
\end{equation}
Recall that $\phi^+$ is the smooth high frequency cut-off function defined below (\ref{derivatives bound cutoff phi^-}).

\par Using (\ref{second quantization H_N}), (\ref{split V}) and Newton-Leibniz formula,  we rewrite $\mathcal{G}_N$ by
\begin{equation}\label{define G_N}
  \begin{aligned}
  \mathcal{G}_N\coloneqq e^{-B}H_Ne^B=&\mathcal{K}+\mathcal{V}_{4}+\mathcal{V}_{21}^\prime
  +\Omega+e^{-B}\big(\mathcal{V}_{0}
  +\mathcal{V}_{1}+\mathcal{V}_{22}+\mathcal{V}_{23}+\mathcal{V}_{3}\big)e^B\\
  &+\int_{0}^{1}e^{-tB}\Gamma e^{tB}dt
  +\int_{0}^{1}\int_{0}^{t}e^{-sB}[\mathcal{V}_{21}^\prime+\Omega,B]e^{sB}dsdt\\
  &+\int_{0}^{1}\int_{t}^{1}e^{-sB}[\mathcal{V}_{21},B]e^{sB}dsdt
  \end{aligned}
\end{equation}
with
\begin{equation}\label{define V_21' and Omega}
  \begin{aligned}
  \mathcal{V}_{21}^\prime&=\frac{1}{L^3}\sum_{k,p,q,\sigma,\nu}
  W_k\zeta^{-}(k)(a^*_{p-k,\sigma}a^*_{q+k,\nu}a_{q,\nu}a_{p,\sigma}+h.c.)
  \chi_{p-k,q+k\notin
  B_F}\chi_{p,q\in B_F}\\
  \Omega&=\sum_{k,p,q,\sigma,\nu}
  \eta_k\phi^+(k)\zeta^{-}(k)k(q-p)
  \\
  &\quad\quad\quad\quad
  \times(a^*_{p-k,\sigma}a^*_{q+k,\nu}a_{q,\nu}a_{p,\sigma}+h.c.)\chi_{p-k,q+k\notin B_F}
  \chi_{p,q\in B_F}
  \end{aligned}
\end{equation}
and
\begin{equation}\label{define Gamma}
  \Gamma=[\mathcal{K}+\mathcal{V}_4,B]+\mathcal{V}_{21}-\mathcal{V}_{21}^\prime-\Omega.
\end{equation}

\par To prove Proposition \ref{qua prop}, we are going to analyze each term on the right-hand side of (\ref{define G_N}), in Lemmas \ref{lemma qua conj V_0 1 22 23}-\ref{lemma qua Gamma}. Before we rigorously calculate each term of $\mathcal{G}_N$, we need to bound the action of quadratic renormalization on some special operators, such as $\mathcal{N}_{re}, \mathcal{K}_s$ and $\mathcal{V}_4$. These results are collected respectively in Lemmas \ref{lemma qua control N_re}-\ref{lemma qua control V_4}.

\par Lemma \ref{lemma qua control N_re} controls the action of $e^B$ on several particle number operators defined in Section \ref{particle}.

\begin{lemma}\label{lemma qua control N_re}
For any $N\in\mathbb{N}_{\geq0}$ and $\vert t\vert\leq 1$,
\begin{align}
 e^{-tB}\mathcal{N}_{re}e^{tB}&\leq C\mathcal{N}_{re}+C\tilde{\rho}_0^{\frac{5}{3}+\alpha_3}L^3,
 \label{qua control N_re}\\
 e^{-tB}\tilde{\mathcal{N}}_{re}e^{tB}&\leq C\tilde{\mathcal{N}}_{re}+ C\mathcal{N}_{re}+C\tilde{\rho}_0^{\frac{5}{3}+\alpha_3}L^3,
 \label{qua control N_retilde}\\
 \pm\big(e^{-tB}\mathcal{N}_{re}e^{tB}-\mathcal{N}_{re}\big)&\leq
 C\tilde{\rho}_0^{\frac{1}{3}}\mathcal{N}_{re}+C\tilde{\rho}_0^{\frac{4}{3}+\alpha_3}L^3.
 \label{qua control diff N_re}\\
  \pm\big(e^{-tB}\tilde{\mathcal{N}}_{re}e^{tB}-\tilde{\mathcal{N}}_{re}\big)&\leq
 C\tilde{\rho}_0^{\frac{1}{3}}\mathcal{N}_{re}+C\tilde{\rho}_0^{\frac{4}{3}+\alpha_3}L^3.
 \label{qua control diff N_retilde}
\end{align}
Furthermore, for any $\delta>0$, we have
\begin{align}
  e^{-tB}\mathcal{N}_{h}[\delta]e^{tB}&\leq C\mathcal{N}_{h}[\delta]
  +C\tilde{\rho}_0^{\frac{2}{3}+4\alpha_3}\mathcal{N}_{re}
  +C\tilde{\rho}_0^{\frac{5}{3}+\alpha_3}L^3,
 \label{qua control N_h}\\
 e^{-tB}{\mathcal{N}}_{i}[\delta]e^{tB}&\leq C{\mathcal{N}}_{i}[\delta]
 +C\mathcal{N}_h[\delta]+
  C\tilde{\rho}_0^{\frac{2}{3}+4\alpha_3}\mathcal{N}_{re}
  +C\tilde{\rho}_0^{\frac{5}{3}+\alpha_3}L^3,
 \label{qua control N_i}
\end{align}
(\ref{qua control N_h}) also holds when $\delta\leq0$.
\end{lemma}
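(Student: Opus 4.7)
The plan is a Gronwall argument: for each of the four operators $\mathcal{N}_\# \in \{\mathcal{N}_{re}, \tilde{\mathcal{N}}_{re}, \mathcal{N}_h[\delta], \mathcal{N}_i[\delta]\}$ I first compute the commutator $[\mathcal{N}_\#, B]$ and bound it in terms of a controllable positive operator plus a constant times $L^3$, then integrate the resulting differential identity
\[
  \tfrac{d}{dt}\langle e^{-tB}\mathcal{N}_\# e^{tB}\psi, \psi\rangle = \langle e^{-tB}[\mathcal{N}_\#, B] e^{tB}\psi,\psi\rangle.
\]
The commutators are straightforward book-keeping: since $A$ creates two excitations outside $B_F$ and destroys two occupied states inside, one has $[\mathcal{N}_{re},B]=[\tilde{\mathcal{N}}_{re},B]=A+A^*$, while $[\mathcal{N}_h[\delta],B]$ is $A+A^*$ with the extra indicators $\chi_{p-k\in P_{F,\delta}}+\chi_{q+k\in P_{F,\delta}}$, and $[\mathcal{N}_i[\delta],B]$ is $A+A^*$ with the indicators $\chi_{p\in\underline{B}_{F,\delta}}+\chi_{q\in\underline{B}_{F,\delta}}$.

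The main analytic input is therefore a bound on $\pm(A+A^*)$. After reordering by anti-commutation, using that $\{a^*_{q+k,\nu},a_{p,\sigma}\}$ vanishes because of the incompatible constraints $q+k\notin B_F,\,p\in B_F$, the operator $A$ can be written in the product form
\[
  A = L^{3/2}\sum_{\sigma,\nu}\int\!\!\int dx\, dy\;\eta_{\phi^+}(x-y)\, a^*(h_{x,\sigma})a(g_{x,\sigma})\, a^*(h_{y,\nu})a(g_{y,\nu}),
\]
so the inner $x$-integration produces a $b^*_{y,\sigma}(\eta_{\phi^+})$ operator as in (\ref{define b}). A Cauchy-Schwarz split together with Lemma \ref{b^* bound by b} (specifically the $L^\infty$ operator bound (\ref{ineq b 2})), the $L^2$ estimate (\ref{est of eta and eta_perp}) for $\eta_{\phi^+}$, and the identities (\ref{cal N_re and N_re tilde}) controlling $\int\|a(g_{x,\sigma})\psi\|^2\,dx$ and $\int\|a(h_{x,\sigma})\psi\|^2\,dx$ by $\tilde{\mathcal{N}}_{re}$ and $\mathcal{N}_{re}$, respectively, then yield
\[
  \pm(A+A^*)\leq C\tilde{\rho}_0^{1/3}(\mathcal{N}_{re}+\tilde{\mathcal{N}}_{re})+C\tilde{\rho}_0^{4/3+\alpha_3}L^3.
\]
This bound together with (\ref{N_re and N_retilde}) and the elementary inequality $\tilde{\mathcal{N}}_{re}\leq\mathbf{q}\bar{N}_0\sim\tilde{\rho}_0 L^3$ delivers (\ref{qua control diff N_re})–(\ref{qua control diff N_retilde}) directly, while Gronwall integration over $t\in[0,1]$ promotes the constant part to the $\tilde{\rho}_0^{5/3+\alpha_3}L^3$ tail in (\ref{qua control N_re})–(\ref{qua control N_retilde}).

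For the high-momentum bound (\ref{qua control N_h}), the extra indicator $\chi_{p-k\in P_{F,\delta}}$ together with $p\in B_F$ forces $|k|\gtrsim\tilde{\mu}^{1/2}$, so only the high-$k$ part of $\eta_{\phi^+}$ contributes; using the pointwise bound $|\eta_k|\lesssim L^{-3/2}|k|^{-2}$ from (\ref{est of w_l,p}) together with $\ell L=\tilde{\rho}_0^{-1/3+\alpha_3}$ produces the gain factor $\tilde{\rho}_0^{2/3+4\alpha_3}$ in front of $\mathcal{N}_{re}$. The inner-momentum bound (\ref{qua control N_i}) is obtained analogously, with the particle at $p\in\underline{B}_{F,\delta}$ again forcing $|k|$ large; the appearance of $\mathcal{N}_h[\delta]$ on the right-hand side is the Cauchy-Schwarz pairing of one factor inside the Fermi sea against a factor of $b^*$-type whose excitation momentum lies in the high region. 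The main obstacle I anticipate is keeping the parameter book-keeping consistent: the various gains must balance the $L^{3/2}$ factor from the representation of $A$ and match the exponents $\alpha_2,\alpha_3,\delta$ exactly as stated, while simultaneously being compatible with the further constraints imposed in Proposition \ref{qua prop}.
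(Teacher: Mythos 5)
Your strategy is the same as the paper's: compute $[\mathcal{N}_\#,B]$, represent $A$ in position space so that a $b^*_{y,\sigma}(\eta_{\phi^+})$ appears, estimate via Lemma \ref{b^* bound by b}, and close with Gronwall. The commutator identities you list are correct (the vanishing anticommutator $\{a^*(h_{y,\nu}),a(g_{x,\sigma})\}=0$ does justify your reordering). However, two quantitative steps as you describe them do not reproduce the stated exponents.

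First, your intermediate bound $\pm(A+A^*)\leq C\tilde{\rho}_0^{1/3}(\mathcal{N}_{re}+\tilde{\mathcal{N}}_{re})+C\tilde{\rho}_0^{4/3+\alpha_3}L^3$ carries a spurious $\tilde{\mathcal{N}}_{re}$. You obtain it by controlling $\int\|a(g_{x,\sigma})\psi\|^2dx$ through (\ref{cal N_re and N_re tilde}) — but that identity controls $\int\|a^*(g_{x,\sigma})\psi\|^2dx$ by $\tilde{\mathcal{N}}_{re}$; the quantity $\int\|a(g_{x,\sigma})\psi\|^2dx$ equals $\mathbf{q}\bar N_0\|\psi\|^2-\langle\tilde{\mathcal{N}}_{re}\psi,\psi\rangle$ and is only bounded by $\mathbf{q}\bar N_0\sim\tilde{\rho}_0L^3$. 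Either way, absorbing $\tilde{\rho}_0^{1/3}\tilde{\mathcal{N}}_{re}$ (or $\tilde{\rho}_0^{1/3}\cdot\tilde{\rho}_0 L^3$) into the constant term yields $\tilde{\rho}_0^{4/3}L^3$, which is \emph{larger} than the claimed $\tilde{\rho}_0^{4/3+\alpha_3}L^3$ in (\ref{qua control diff N_re})--(\ref{qua control diff N_retilde}). The correct route never produces $\tilde{\mathcal{N}}_{re}$ at all: each $a(g)$ or $a^*(g)$ is bounded in operator norm by $\|g_{x,\sigma}\|=(\bar N_0/L^3)^{1/2}\sim\tilde{\rho}_0^{1/2}$, so one gets $\pm A\leq C\theta\mathcal{N}_{re}+C\theta^{-1}\tilde{\rho}_0^{5/3+\alpha_3}L^3$ for every $\theta>0$, and $\theta=\tilde{\rho}_0^{1/3}$ gives exactly (\ref{qua control diff N_re}). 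Second, your explanation of the coefficient $\tilde{\rho}_0^{2/3+4\alpha_3}$ in (\ref{qua control N_h}) — restricting $\eta_{\phi^+}$ to high $|k|$ and invoking $|\eta_k|\lesssim|k|^{-2}$ — is not where this exponent comes from and would not produce it. The coefficient is $(\tilde{\rho}_0L^{3/2}\|\eta_{\phi^+}\|_1)^2=(\tilde{\rho}_0^{1/3+2\alpha_3})^2$: the $\|\eta\|_1$-part of the commutator bound is $\tilde{\rho}_0^{1/3+2\alpha_3}\,\mathcal{N}_{re}^{1/2}\mathcal{N}_h[\delta]^{1/2}$ (one creation operator is forced into $P_{F,\delta}$, so one Cauchy--Schwarz factor is $\mathcal{N}_h[\delta]^{1/2}$), and weighting the Cauchy--Schwarz to keep an $O(1)$ multiple of $\mathcal{N}_h[\delta]$ puts the squared coefficient on $\mathcal{N}_{re}$. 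The $\alpha_3$ enters through $\|\eta\|_1\leq C\ell^2L^{1/2}$ with $\ell L=\tilde{\rho}_0^{-1/3+\alpha_3}$, not through a frequency cutoff beyond the $\phi^+$ already built into $B$. For (\ref{qua control N_i}) you also need the explicit split $a^*(h_{x,\sigma})=a^*(H_{x,\sigma}[\delta])+a^*(L_{x,\sigma}[\delta])$ together with $\|a(L_{x,\sigma}[\delta])\|\leq C\tilde{\rho}_0^{(1+\delta)/2}$ to produce the $\mathcal{N}_i[\delta]$ term with an $O(1)$ constant; your description gestures at this but does not carry it out.
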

\begin{proof}
  \par We have $[\mathcal{N}_{re},A]=2A$ and therefore $[\mathcal{N}_{re},B]=A+A^*$. For
  $\psi\in\mathcal{H}^{\wedge N}$,
  \begin{equation}\label{bound A}
    \begin{aligned}
    \vert\langle A\psi,\psi\rangle\vert
    &=L^{\frac{3}{2}}\Big\vert\sum_{\sigma,\nu}\int_{\Lambda_L^2}\eta_{\phi^+}(x-y)\langle a^*(h_{x,\sigma})a^*(h_{y,\nu})
    a(g_{y,\nu})a(g_{x,\sigma})\psi,\psi\rangle dxdy\Big\vert\\
    &=L^{\frac{3}{2}}\Big\vert\sum_{\sigma,\nu}\int_{\Lambda_L}\langle
    b^*_{y,\sigma}(\eta_{\phi^+})\psi,a^*(g_{y,\nu})a(h_{y,\nu})\psi\rangle dy\Big\vert\\
    &\leq C\tilde{\rho}_0L^{\frac{3}{2}}\Vert\eta_{\phi^+}\Vert_1
    \langle\mathcal{N}_{re}\psi,\psi\rangle
    +C\tilde{\rho}_0L^{\frac{3}{2}}\Vert\eta_{\phi^+}\Vert_2
    \langle\mathcal{N}_{re}\psi,\psi\rangle^{\frac{1}{2}}
    \langle\psi,\psi\rangle^{\frac{1}{2}},
    \end{aligned}
  \end{equation}
  where we have used Lemma \ref{b^* bound by b} in the last inequality. Therefore, for $\theta>0$.
  \begin{equation}\label{bound A 2}
    \pm A\leq C\theta\mathcal{N}_{re}+C
    \theta^{-1}\tilde{\rho}_0^{\frac{5}{3}+\alpha_3}L^3,
  \end{equation}
   Then (\ref{qua control N_re}) is obtained by choosing $\theta=1$ and Gronwall's inequality.

  \par For (\ref{qua control N_retilde}), we have $[\tilde{\mathcal{N}}_{re},A]=2A$, and
  \begin{equation}\label{qua Nretilde 1}
    e^{-tB}\tilde{\mathcal{N}}_{re}e^{tB}=
    \tilde{\mathcal{N}}_{re}+\int_{0}^{t}e^{-sB}[\tilde{\mathcal{N}}_{re},B]e^{sB}ds.
  \end{equation}
  We can then reach (\ref{qua control N_retilde}) using (\ref{qua control N_re}), (\ref{qua Nretilde 1}) and (\ref{bound A 2}) with $\theta=1$. On the other hand, we can reach (\ref{qua control diff N_retilde}) by choosing $\theta=\tilde{\rho}_0^{\frac{1}{3}}$. (\ref{qua control diff N_re}) can be deduced similarly.

  \par For (\ref{qua control N_h}), we can calculate directly for $\delta\in\mathbb{R}$,
  \begin{equation}\label{com N_h B}
    \begin{aligned}
    [\mathcal{N}_h[\delta],B]&=\sum_{\substack{k,p,q\\\sigma,\nu}}
  \eta_k\phi^+(k)(a^*_{p-k,\sigma}a^*_{q+k,\nu}a_{q,\nu}a_{p,\sigma}+h.c.)
  \chi_{p-k\in A_{F,\delta}}\chi_{q+k\notin B_F}\chi_{p,q\in B_F}\\
  &= L^{\frac{3}{2}}\sum_{\sigma,\nu}\int_{\Lambda_L^2}\eta_{\phi^+}(x-y) a^*(H_{x,\sigma}[\delta])a^*(h_{y,\nu})
    a(g_{y,\nu})a(g_{x,\sigma}) dxdy+h.c.\\
    &= L^{\frac{3}{2}}\sum_{\sigma,\nu}\int_{\Lambda_L} a^*(H_{x,\sigma}[\delta])a(g_{x,\sigma})b^*_{x,\nu}(\eta_{\phi^+}) dx+h.c.
    \end{aligned}
  \end{equation}
  Similar to (\ref{bound A}), we have, for $\psi\in\mathcal{H}^{\wedge N}$, that
  \begin{equation}\label{bound com N_h B 1st}
    \begin{aligned}
    \big\vert\langle [\mathcal{N}_h[\delta],B]\psi,\psi\rangle \big\vert
    \leq &C\tilde{\rho}_0L^{\frac{3}{2}}\Vert\eta_{\phi^+}\Vert_1
    \langle\mathcal{N}_{re}\psi,\psi\rangle^\frac{1}{2}
    \langle\mathcal{N}_{h}[\delta]\psi,\psi\rangle^\frac{1}{2}\\
    &+C\tilde{\rho}_0L^{\frac{3}{2}}\Vert\eta_{\phi^+}\Vert_2
    \langle\mathcal{N}_{h}[\delta]\psi,\psi\rangle^{\frac{1}{2}}
    \langle\psi,\psi\rangle^{\frac{1}{2}}.
    \end{aligned}
  \end{equation}
  Therefore,
  \begin{equation}\label{bound com N_h B}
    \pm[\mathcal{N}_h[\delta],B]\leq C\mathcal{N}_h[\delta]
    +C\tilde{\rho}_0^{\frac{2}{3}+4\alpha_3}\mathcal{N}_{re}
    +C\tilde{\rho}_0^{\frac{5}{3}+\alpha_3}L^3.
  \end{equation}
  Then we deduce (\ref{qua control N_h}) via Gronwall's inequality.

  \par For (\ref{qua control N_i}), we calculate directly for $\delta>0$, that
  \begin{equation}\label{com N_i B}
    \begin{aligned}
    [\mathcal{N}_i[\delta],B]&=\sum_{\substack{k,p,q\\\sigma,\nu}}
  \eta_k\phi^+(k)(a^*_{p-k,\sigma}a^*_{q+k,\nu}a_{q,\nu}a_{p,\sigma}+h.c.)
  \chi_{p-k,q+k\notin B_F}\chi_{q\in B_F}\chi_{p\in\underline{B}_{F,\delta}}\\
  &= L^{\frac{3}{2}}\sum_{\sigma,\nu}\int_{\Lambda_L^2}\eta_{\phi^+}(x-y) a^*(h_{x,\sigma})a^*(h_{y,\nu})
    a(g_{y,\nu})a(I_{x,\sigma}[\delta]) dxdy+h.c.
    \end{aligned}
  \end{equation}
  Using $a^*(h_{x,\sigma})=a^*(H_{x,\sigma}[\delta])+a^*(L_{x,\sigma}[\delta])$, we rewrite
  $[\mathcal{N}_i[\delta],B]=\mathrm{I}+\mathrm{II}$. Similar to (\ref{com N_h B}),
  \begin{equation}\label{com N_i B I}
    \pm\mathrm{I}\leq C\mathcal{N}_h[\delta]
    +C\tilde{\rho}_0^{\frac{2}{3}+4\alpha_3}\mathcal{N}_{re}
    +C\tilde{\rho}_0^{\frac{5}{3}+\alpha_3}L^3.
  \end{equation}
  On the other hand, for $\psi\in\mathcal{H}^{\wedge N}$
   \begin{equation}\label{bound com N_i delta II}
    \begin{aligned}
    \vert\langle \mathrm{II}\psi,\psi\rangle\vert
    &=L^{\frac{3}{2}}\Big\vert\sum_{\sigma,\nu}\int_{\Lambda_L}\langle
    b^*_{x,\nu}(\eta_{\phi^+})\psi,a(L_{x,\sigma}[\delta])
    a^*(I_{x,\sigma}[\delta])\psi\rangle dy\Big\vert\\
    &\leq C\tilde{\rho}_0^{1+\frac{\delta}{2}}L^{\frac{3}{2}}\Vert\eta_{\phi^+}\Vert_1
    \langle\mathcal{N}_{re}\psi,\psi\rangle^\frac{1}{2}
    \langle\mathcal{N}_{i}[\delta]\psi,\psi\rangle^\frac{1}{2}\\
    &+C\tilde{\rho}_0^{1+\frac{\delta}{2}}L^{\frac{3}{2}}\Vert\eta_{\phi^+}\Vert_2
    \langle\mathcal{N}_{i}[\delta]\psi,\psi\rangle^{\frac{1}{2}}
    \langle\psi,\psi\rangle^{\frac{1}{2}}.
    \end{aligned}
  \end{equation}
  Therefore,
  \begin{equation}\label{com N_i B II}
    \pm\mathrm{II}\leq C\mathcal{N}_i[\delta]
    +C\tilde{\rho}_0^{\frac{2}{3}+4\alpha_3+\delta}\mathcal{N}_{re}
    +C\tilde{\rho}_0^{\frac{5}{3}+\alpha_3+\delta}L^3.
  \end{equation}
  Combing (\ref{com N_i B I}) and (\ref{com N_i B II}), we deduce (\ref{qua control N_i}) via Gronwall's inequality.

\end{proof}

\par Lemma \ref{lemma qua cal [K,B]} collects estimates about the commutator of $\mathcal{K}$ and $B$, and controls the action of $e^B$ on $\mathcal{K}_s$.

\begin{lemma}\label{lemma qua cal [K,B]}
  For any $N\in\mathbb{N}_{\geq0}$ and $0<\delta<\frac{1}{3}$,
  \begin{align}
    \pm\Omega\leq& C\tilde{\rho}_0^{\frac{1}{3}}\mathcal{K}_s+C\tilde{\rho}_0\mathcal{N}_{re}+C\tilde{\rho}_0^{2+\alpha_3}L^3,
    \label{est qua Omega}\\
    \pm[\mathcal{K},B]\leq& C\mathcal{K}_s+C\tilde{\rho}_0^
    {\frac{2}{3}}\mathcal{N}_h[-\alpha_2]
    +C\tilde{\rho}_0\mathcal{N}_{re}
    +C\tilde{\rho}_0^2L^3.\label{est qua [K,B]}
  \end{align}
  Therefore, for $\vert t\vert\leq 1$,
  \begin{equation}\label{qua control K}
    e^{-tB}\mathcal{K}_se^{tB}\leq C\mathcal{K}_s+C\tilde{\rho}_0^{\frac{2}{3}}\mathcal{N}_h[-\alpha_2]
    +C\tilde{\rho}_0
    \mathcal{N}_{re}
    +C\tilde{\rho}^2L^3.
  \end{equation}
  Furthermore, we have
  \begin{equation}\label{[K,B] detailed}
    [\mathcal{K},B]=\Omega_m-\Omega_{m,l}+\Omega+\Omega_h
  \end{equation}
  where
  \begin{equation}\label{Omega_m}
    \begin{aligned}
    \Omega_m=\sum_{k,p,q,\sigma,\nu}
  \eta_k\vert k\vert^2
  (a^*_{p-k,\sigma}a^*_{q+k,\nu}a_{q,\nu}a_{p,\sigma}+h.c.)
  \times\chi_{p-k,q+k\notin B_F}\chi_{p,q\in B_F}
    \end{aligned}
  \end{equation}
  and
  \begin{equation}\label{Omega m,l and Omega h bound}
    \begin{aligned}
    \pm\Omega_{m,l}\leq&C\tilde{\rho}_0^{1+\delta_1}\mathcal{N}_{re}
  +C\tilde{\rho}_0\mathcal{N}_h[\delta_1]+C\tilde{\rho}_0^{2+4\alpha_3-7\alpha_2}L^3\\
    \pm\Omega_{h}\leq &C(\tilde{\rho}_0^{1+2\alpha_3}+\tilde{\rho}_0^{1+\alpha_5-\alpha_4})
  \mathcal{N}_h[-\beta_1]
  +C(\tilde{\rho}_0^{\frac{1}{3}+2\alpha_3}+\tilde{\rho}_0^{\frac{1}{3}+\alpha_5-\alpha_4})
  \mathcal{K}_s\\
  &+C\tilde{\rho}_0^{\frac{7}{3}+\alpha_4}L^3
    \end{aligned}
  \end{equation}
  for $\frac{1}{3}>\delta_1>0$, $\beta_1=\frac{1}{3}+\alpha_5$ and $2\alpha_3>\alpha_5>2\alpha_4$.
\end{lemma}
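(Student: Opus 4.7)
The plan is to compute $[\mathcal{K},B]$ algebraically, establish the decomposition (\ref{[K,B] detailed}), bound each of the four resulting pieces, combine them into (\ref{est qua Omega}) and (\ref{est qua [K,B]}), and derive (\ref{qua control K}) by Gronwall.

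Using the canonical anticommutation relations and telescoping, one verifies
\begin{equation*}
[\mathcal{K},A]=2\sum_{k,p,q,\sigma,\nu}\eta_k\phi^+(k)\bigl(|k|^2+k\cdot(q-p)\bigr)a^*_{p-k,\sigma}a^*_{q+k,\nu}a_{q,\nu}a_{p,\sigma}\chi_{p-k,q+k\notin B_F}\chi_{p,q\in B_F},
\end{equation*}
so that $[\mathcal{K},B]$ is its Hermitian symmetrization. Writing $\phi^+=1-\phi^-$ in the $|k|^2$-piece produces $\Omega_m-\Omega_{m,l}$, and inserting $1=\zeta^-(k)+\zeta^+(k)$ in the $k\cdot(q-p)$-piece produces $\Omega+\Omega_h$, giving (\ref{[K,B] detailed}).

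For the $\Omega$ bound I would exploit the identity
\begin{equation*}
k\cdot(q-p)=\tfrac12\bigl(|p-k|^2-|p|^2\bigr)+\tfrac12\bigl(|q+k|^2-|q|^2\bigr)-|k|^2,
\end{equation*}
so that on the index set $\{p-k,q+k\notin B_F,\,p,q\in B_F\}$ the linear momentum factor becomes a sum of non-negative square-differences that are absorbed into $\mathcal{K}_s$ acting on the creation operators, plus a $|k|^2$-residue of $\Omega_{m,l}$-type. Combining this with Cauchy--Schwarz, Lemma~\ref{b^* bound by b}, and the $\eta$-bounds (\ref{est of eta_0})--(\ref{est of Deta D^2eta}), (\ref{eta cutoff sum}) yields (\ref{est qua Omega}). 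The remaining pieces are handled analogously: for $\Omega_{m,l}$ the cutoff $\phi^-$ forces $|k|\lesssim\tilde\rho_0^{1/3-\alpha_2}$ and Lemma~\ref{b^* bound by b general} delivers the first line of (\ref{Omega m,l and Omega h bound}); for $\Omega_h$ the cutoff $\zeta^+$ forces $|k|\gtrsim\tilde\mu^{1/2}\tilde\rho_0^{-\beta_1}$ so $p-k\in P_{F,-\beta_1}$ is detected by $\mathcal{N}_h[-\beta_1]$, while the refined smallness (\ref{special eta L2 norm})--(\ref{special grad eta L2}) of $\eta^{\zeta^+}$ trades the large $|k|$-factor for $\mathcal{K}_s$; for the dominant piece $\Omega_m$ the scattering equation (\ref{Wdiscrete asymptotic energy pde on the torus}) expresses $|k|^2\eta_k$ in terms of $\hat v_k$ plus smaller residues, so $\Omega_m$ is estimated essentially like $\mathcal{V}_{21}$. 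Summing the four bounds produces (\ref{est qua [K,B]}).

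Finally, since $B$ preserves the particle number ($[\mathcal{N},B]=0$), (\ref{K_s and K}) reduces (\ref{qua control K}) to
\begin{equation*}
e^{-tB}\mathcal{K}_se^{tB}-\mathcal{K}_s=\int_0^t e^{-sB}[\mathcal{K},B]e^{sB}\,ds.
\end{equation*}
Substituting (\ref{est qua [K,B]}) and propagating the error operators $\mathcal{N}_{re}$ and $\mathcal{N}_h[-\alpha_2]$ through $e^{sB}$ via Lemma~\ref{lemma qua control N_re}, Gronwall's inequality closes the argument. The main obstacle is the sharp $\tilde\rho_0^{1/3}\mathcal{K}_s$-coefficient in (\ref{est qua Omega}): the first-order momentum factor $k\cdot(q-p)$ provides only half the derivative order needed to be absorbed by the second-order $\mathcal{K}_s$, so the cancellations encoded by the $\zeta^-$-localization of $\eta^{\phi^+}$ together with the sum estimates (\ref{eta sum}), (\ref{eta cutoff sum}) must be exploited carefully without losing any power of $\tilde\rho_0$.
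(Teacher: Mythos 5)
Your algebraic computation of $[\mathcal{K},B]$ and the identification of the decomposition (\ref{[K,B] detailed}) via $\phi^+=1-\phi^-$ and $1=\zeta^-+\zeta^+$ match the paper, and your sketches for $\Omega_{m,l}$ and $\Omega_h$ are in the right spirit. However, there are two genuine gaps. First, for (\ref{est qua Omega}) you propose to replace $k\cdot(q-p)$ by the kinetic differences $\tfrac12(|p-k|^2-|p|^2)+\tfrac12(|q+k|^2-|q|^2)-|k|^2$ and absorb the square-differences into $\mathcal{K}_s$ by Cauchy--Schwarz. A symmetric Cauchy--Schwarz of that type only yields $C\mathcal{K}_s$ with a constant of order one, not $C\tilde{\rho}_0^{1/3}\mathcal{K}_s$, and you explicitly concede you do not see how to recover the extra $\tilde{\rho}_0^{1/3}$. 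The paper gets it by \emph{not} using the kinetic-difference identity: it keeps $k\cdot(q-p)$ factored as $k$ times $(q-p)$, writes $\Omega$ in position space with kernel $\nabla_x\eta_{\phi^+}^{\zeta^-}(x-y)$ acting on $a^*(h_{x,\sigma})a^*(h_{y,\nu})a(g_{y,\nu})a(\nabla_xg_{x,\sigma})$, integrates by parts once to move $\nabla_x$ off $\eta$ onto $h_{x,\sigma}$ (or onto $\nabla_xg$, producing $\Delta_xg$), and then uses $\int a^*(\nabla_xh_{x,\sigma})a(\nabla_xh_{x,\sigma})dx\le\mathcal{K}_s+\tilde{\mu}\mathcal{N}_{re}$ together with $\Vert a(\nabla_xg_{x,\sigma})\Vert\le\big(\tfrac{1}{L^3}\sum_{k\in B_F}|k|^2\big)^{1/2}\le Ck_F\tilde{\rho}_0^{1/2}$. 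The prefactor $\tilde{\rho}_0^{1/3}$ is exactly the factor $k_F$ coming from $|q-p|\le 2k_F$; the kinetic-difference identity mixes this with $|k|^2$ and destroys the factorization, and neither the $\zeta^-$-localization nor the sum estimates (\ref{eta sum}), (\ref{eta cutoff sum}) will restore it.

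Second, your plan for (\ref{est qua [K,B]}) is to bound all four pieces of (\ref{[K,B] detailed}) separately, treating $\Omega_m$ via the scattering equation ``like $\mathcal{V}_{21}$''. That cannot produce the stated bound: converting $|k|^2\eta_k$ into $-\tfrac{1}{2L^3}\big(\hat{v}_k+\sum_l\hat{v}_{k-l}\eta_l\big)+L^{-3}W_k$ turns $\Omega_m$ into $v$-weighted pairing terms whose natural bound is of the form $\theta\mathcal{V}_4+C\theta^{-1}\tilde{\rho}_0^2L^3$, so a $\mathcal{V}_4$ contribution would appear in (\ref{est qua [K,B]}) and hence, after Gronwall, in (\ref{qua control K}); neither contains one. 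The paper never estimates $\Omega_m$ alone in this lemma: it groups $\Omega_{m,h}=\Omega_m-\Omega_{m,l}$, uses that $\phi^+(k)\ne0$ forces $p-k,q+k\in P_{F,-\alpha_2}$, and integrates the $\Delta\eta_{\phi^+}$ kernel by parts so that one gradient lands on $H_{x,\sigma}[-\alpha_2]$ (absorbed by $\mathcal{K}_s$) and the other on $H[-\alpha_2]$ or on $g$ (giving $\tilde{\rho}_0^{2/3}\mathcal{N}_h[-\alpha_2]$). $\Omega_m$ itself is deliberately left unestimated; it is cancelled later against $\Theta_m$ and $\mathcal{V}_{21}$ through the scattering equation in Lemma \ref{lemma qua Gamma}. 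You would need to adopt this grouping, or an equivalent one, to prove the lemma as stated.
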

\begin{proof}
\par For $\Omega$ defined in (\ref{define V_21' and Omega}), let $C$ denote any universal constant, we have
\begin{equation}\label{Omega1st}
  \begin{aligned}
  &\Omega=CL^{\frac{3}{2}}\sum_{\sigma,\nu}\int_{\Lambda_L^2}\nabla_x\eta_{\phi^+}^
  {\zeta^-}(x-y)a^*(h_{x,\sigma})a^*(h_{y,\nu})a(g_{y,\nu})a(\nabla_xg_{x,\sigma})dxdy+h.c.\\
  =&-CL^{\frac{3}{2}}\sum_{\sigma,\nu}\int_{\Lambda_L^2}\eta_{\phi^+}^
  {\zeta^-}(x-y)a^*(\nabla_xh_{x,\sigma})a^*(h_{y,\nu})a(g_{y,\nu})a(\nabla_xg_{x,\sigma})dxdy
  +h.c.\\
  &-CL^{\frac{3}{2}}\sum_{\sigma,\nu}\int_{\Lambda_L^2}\eta_{\phi^+}^
  {\zeta^-}(x-y)a^*(h_{x,\sigma})a^*(h_{y,\nu})a(g_{y,\nu})a(\Delta_xg_{x,\sigma})dxdy+h.c.\\
  \eqqcolon&\Omega_1+\Omega_2.
  \end{aligned}
\end{equation}
We have for $\psi\in\mathcal{H}^{\wedge N}$,
\begin{equation}\label{Omega_1 bound 1}
  \begin{aligned}
  \big\vert\langle\Omega_1\psi,\psi\rangle\big\vert=&
  CL^{\frac{3}{2}}\Big\vert\sum_{\sigma,\nu}\int_{\Lambda_L}
  \langle b^*_{x,\nu}(\eta_{\phi^+}^{\zeta^-})\psi,
  a^*(\nabla_xg_{x,\sigma})a(\nabla_xh_{x,\sigma})\psi\rangle dx\Big\vert\\
  \leq& C\tilde{\rho}_0^{\frac{4}{3}}L^{\frac{3}{2}}\Vert \eta_{\phi^+}^{\zeta^-}\Vert_1
  \langle\mathcal{N}_{re}\psi,\psi\rangle^{\frac{1}{2}}
  \langle (\mathcal{K}_s+\tilde{\mu}\mathcal{N}_{re})\psi,\psi\rangle^{\frac{1}{2}}\\
  +&C\tilde{\rho}_0^{\frac{4}{3}}L^{\frac{3}{2}}\Vert \eta_{\phi^+}^{\zeta^-}\Vert_2
  \langle\psi,\psi\rangle^{\frac{1}{2}}
  \langle (\mathcal{K}_s+\tilde{\mu}\mathcal{N}_{re})\psi,\psi\rangle^{\frac{1}{2}},
  \end{aligned}
\end{equation}
where we have used
\begin{equation}\label{K_s bound}
  0\leq\sum_{\sigma}\int_{\Lambda_L} a^*(\nabla_x h_{x,\sigma})a(\nabla_x h_{x,\sigma})dx
  \leq\mathcal{K}_s+\tilde{\mu}\mathcal{N}_{re}.
\end{equation}
Since $\tilde{\mu}\sim\tilde{\rho}_0^{\frac{2}{3}}$, we have
\begin{equation}\label{Omega 1}
  \pm\Omega_1\leq C\tilde{\rho}_0^{\frac{1}{3}}\mathcal{K}_s+C\tilde{\rho}_0\mathcal{N}_{re}+
  C\tilde{\rho}_0^{2+\alpha_3}L^3.
\end{equation}
Similarly,
\begin{equation}\label{Omega 2}
  \pm\Omega_2\leq C\tilde{\rho}_0\mathcal{N}_{re}+
  C\tilde{\rho}_0^{2+\alpha_3}L^3.
\end{equation}
Combining (\ref{Omega 1}) and (\ref{Omega 2}) we reach (\ref{est qua Omega}).

\par For (\ref{est qua [K,B]}), we have
\begin{equation}\label{[K,B]}
  \begin{aligned}
   [\mathcal{K},B]&=\sum_{k,p,q,\sigma,\nu}
  \eta_k\big(\vert k\vert^2+k(q-p)\big)
  \phi^+(k)
  (a^*_{p-k,\sigma}a^*_{q+k,\nu}a_{q,\nu}a_{p,\sigma}+h.c.)\\
  &\quad\quad\quad\quad\quad\quad\quad\quad\quad
  \times\chi_{p-k\notin B_F}\chi_{q+k\notin B_F}\chi_{p\in B_F}\chi_{q\in B_F}\\
  &\eqqcolon\Omega_{m,h}+\Omega_{a}.
 \end{aligned}
\end{equation}
Similar to the bound of $\Omega$ in (\ref{est qua Omega}), we have
\begin{equation}\label{Omega_a bound}
  \pm\Omega_a\leq C\tilde{\rho}_0^{\frac{1}{3}}
  \mathcal{K}_s+C\tilde{\rho}_0\mathcal{N}_{re}+
  C\tilde{\rho}_0^{2+\alpha_3}L^3.
\end{equation}
For $\Omega_{m,h}$, we notice that $\phi^+(k)>0$ additionally sets $\vert k\vert\geq\frac{3}{2}
\tilde{\mu}^{\frac{1}{2}}\tilde{\rho}_0^{-\alpha_2}$. Since in $\Omega_{m,h}$, $\vert p\vert,\vert q\vert\leq k_F=\tilde{\mu}^{\frac{1}{2}}$, we have $p-k,q+k\in P_{F,-\alpha_2}$, when $\tilde{\rho}_0$ and therefore $\rho_0$ are small enough (independent of $L$). Therefore, we can write
\begin{equation}\label{Omegam}
   \begin{aligned}
  \Omega_{m,h}
  =&CL^{\frac{3}{2}}\sum_{\sigma,\nu}\int_{\Lambda_L^2}\Delta_x\eta_{\phi^+}
  (x-y)a^*(H_{x,\sigma}[-\alpha_2])a^*(H_{y,\nu}[-\alpha_2])\\
  &\quad\quad\quad\quad\quad\quad\times
  a(g_{y,\nu})a(g_{x,\sigma})dxdy+h.c.
  \end{aligned}
\end{equation}
Integrating by parts yields
\begin{align}
-\Omega_{m,h}
  =&CL^{\frac{3}{2}}\sum_{\sigma,\nu}\int_{\Lambda_L^2}\nabla_x\eta_{\phi^+}
  (x-y)a^*(\nabla_xH_{x,\sigma}[-\alpha_2])a^*(H_{y,\nu}[-\alpha_2])\nonumber\\
  &\quad \quad \quad \quad\quad \quad  \times
  a(g_{y,\nu})a(g_{x,\sigma})dxdy+h.c.\nonumber\\
  &+CL^{\frac{3}{2}}\sum_{\sigma,\nu}\int_{\Lambda_L^2}\nabla_x\eta_{\phi^+}
  (x-y)a^*(H_{x,\sigma}[-\alpha_2])a^*(H_{y,\nu}[-\alpha_2])\nonumber\\
  &\quad \quad \quad \quad\quad \quad  \times
  a(g_{y,\nu})a(\nabla_xg_{x,\sigma})dxdy+h.c.\label{Omega m h}
\end{align}
Similar to the bound of $\Omega$, we have
\begin{equation}\label{Omega bound}
  \pm\Omega_{m,h}\leq C\mathcal{K}_s+C\tilde{\rho}_0^{\frac{2}{3}}\mathcal{N}_{h}[-\alpha_2]+
  C\tilde{\rho}_0^{2}L^3.
\end{equation}
Combining (\ref{Omega_a bound}) and (\ref{Omega bound}), we reach (\ref{est qua [K,B]}).

\par For (\ref{qua control K}), by (\ref{K_s and K}), we have $[\mathcal{K}_s,B]
=[\mathcal{K},B]$. Therefore, (\ref{qua control K}) is deduced via (\ref{est qua [K,B]}), Gronwall's inequality and Lemma \ref{lemma qua control N_re}.

\par Let $\Omega_{m,l}=\Omega_m-\Omega_{m,h}$, and $\Omega_h=\Omega_a-\Omega$, then for (\ref{[K,B] detailed}), it can be easily verified using (\ref{Omega_m}) and (\ref{[K,B]}).

\par  For $\Omega_{m,l}$:
\begin{equation}\label{Omega_m,l}
  \Omega_{m,l}=\sum_{k,p,q,\sigma,\nu}
  \eta_k\vert k\vert^2\phi^-(k)
  (a^*_{p-k,\sigma}a^*_{q+k,\nu}a_{q,\nu}a_{p,\sigma}+h.c.)
  \times\chi_{p-k,q+k\notin B_F}\chi_{p,q\in B_F}
\end{equation}
As $\phi^-(k)\neq0$ implies $\vert k\vert\leq 2\tilde{\mu}^{\frac{1}{2}}\tilde{\rho}_0
^{-\alpha_2}$, therefore $\vert p-k\vert,\vert q+k\vert\leq
3\tilde{\mu}^{\frac{1}{2}}\tilde{\rho}_0^{-\alpha_2}$. Hence, we rewrite it by
\begin{equation}\label{Omega_m,l 2}
  \begin{aligned}
  \Omega_{m,l}=&\sum_{k,\sigma,\nu}L^3\eta_k\vert k\vert^2\phi^-(k)
  \Big(\int_{\Lambda_L}\frac{e^{ikx}}{L^{\frac{3}{2}}}
  a^*(\mathbf{h}_{x,\sigma})a(g_{x,\sigma})dx\Big)\\
  &\quad\quad\quad\quad\quad\quad\times
  \Big(\int_{\Lambda_L}\frac{e^{-iky}}{L^{\frac{3}{2}}}
  a^*(\mathbf{h}_{y,\nu})a(g_{y,\nu})dx\Big)+h.c.
  \end{aligned}
\end{equation}
In (\ref{Omega_m,l 2}), we have temporarily taken
\begin{equation*}
  \mathbf{h}_{x,\sigma}(z)=\sum_{k_F<\vert k\vert\leq 3\tilde{\mu}^{\frac{1}{2}}\tilde{\rho}
  _0^{-\alpha_2}}\frac{e^{ikx}}{L^{\frac{3}{2}}}f_{k,\sigma}(z).
\end{equation*}
Using $\mathbf{h}_{x,\sigma}=L_{x,\sigma}[\delta_1]+(\mathbf{h}_{x,\sigma}-L_{x,\sigma}[\delta_1])$ for $\frac{1}{3}>\delta_1>0$, we rewrite $\Omega_{m,l}=\Omega_{m,l1}+\Omega_{m,l2}$. Since we have
\begin{equation*}
  \big\vert\eta_k\vert k\vert^2\phi^-(k)L^3\big\vert\leq C\tilde{\rho}_0^{2\alpha_3-2\alpha_2},
\end{equation*}
then by Lemma \ref{lemma A_k B_k}, we can bound
\begin{equation}\label{Omega_m,l1 and 2 bound}
  \begin{aligned}
  \pm\Omega_{m,l1}\leq& C\tilde{\rho}_0\mathcal{N}_h[\delta_1]+C\tilde{\rho}_0^{2+4\alpha_3-7\alpha_2}L^3\\
  \pm\Omega_{m,l2}\leq& C\tilde{\rho}_0^{1+\delta_1}\mathcal{N}_{re}
  +C\tilde{\rho}_0^{2+4\alpha_3-7\alpha_2}L^3
  \end{aligned}
\end{equation}
Therefore
\begin{equation}\label{Omega_m,l bound}
  \pm\Omega_{m,l}\leq C\tilde{\rho}_0^{1+\delta_1}\mathcal{N}_{re}
  +C\tilde{\rho}_0\mathcal{N}_h[\delta_1]+C\tilde{\rho}_0^{2+4\alpha_3-7\alpha_2}L^3.
\end{equation}

\par For $\Omega_h$:
\begin{equation}\label{Omega_h}
  \Omega_h=\sum_{k,p,q,\sigma,\nu}
  \eta_kk(q-p)
  \zeta^+(k)
  (a^*_{p-k,\sigma}a^*_{q+k,\nu}a_{q,\nu}a_{p,\sigma}+h.c.)
  \chi_{p-k,q+k\notin B_F}\chi_{p,q\in B_F}
\end{equation}
Notice that $\zeta^+(k)>0$ requires $p-k,q+k\in P_{F,-\beta_1}$, then similar to the bound of $\Omega$, with furthermore the bound (\ref{special eta L2 norm}), we have
\begin{equation}\label{Omega_h bound}
  \pm\Omega_h\leq C(\tilde{\rho}_0^{1+2\alpha_3}+\tilde{\rho}_0^{1+\alpha_5-\alpha_4})
  \mathcal{N}_h[-\beta_1]
  +C(\tilde{\rho}_0^{\frac{1}{3}+2\alpha_3}+\tilde{\rho}_0^{\frac{1}{3}+\alpha_5-\alpha_4})
  \mathcal{K}_s+C\tilde{\rho}_0^{\frac{7}{3}+\alpha_4}L^3,
\end{equation}
with $\beta_1=\frac{1}{3}+\alpha_5$ and $2\alpha_3>\alpha_5>2\alpha_4$.
\end{proof}

\par Lemma \ref{lemma qua V_21'} bounds the operator $\mathcal{V}_{21}^\prime$, which is useful in Section \ref{bog}.

\begin{lemma}\label{lemma qua V_21'}
  For any $N\in\mathbb{N}_{\geq0}$, we have
  \begin{equation}\label{qua V_21'}
    \pm\mathcal{V}_{21}^\prime\leq C\mathcal{K}_s+C\tilde{\rho}_0^{1-\alpha_3}\mathcal{N}_{re}
    +C\tilde{\rho}_0^2L^3,
  \end{equation}
  If we further define
  \begin{equation}\label{define V_21'_h}
    \mathcal{V}_{21,h}^\prime=\frac{1}{L^3}\sum_{k,p,q,\sigma,\nu}
  W_k\zeta^{+}(k)(a^*_{p-k,\sigma}a^*_{q+k,\nu}a_{q,\nu}a_{p,\sigma}+h.c.)
  \chi_{p-k,q+k\notin
  B_F}\chi_{p,q\in B_F}
  \end{equation}
  we have
  \begin{equation}\label{qua V_21,h'}
    \pm\mathcal{V}_{21,h}^\prime\leq C\big(\tilde{\rho}_0
    +\tilde{\rho}_0^{1+\alpha_5-4\alpha_3-\alpha_4}\big)\mathcal{N}_h[-\beta_1]
    +C\tilde{\rho}_0^{\frac{7}{3}+\alpha_4}L^3,
  \end{equation}
    for $\frac{1}{24}>\frac{1}{4}\alpha_3>\alpha_2>2\alpha_4>0$, with $\beta_1=\frac{1}{3}+\alpha_5$ and $2\alpha_3>\alpha_5>2\alpha_4$.
\end{lemma}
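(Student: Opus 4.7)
The plan is to rewrite $\mathcal{V}_{21}^\prime$ in position space via Fourier inversion, following the pattern of (\ref{cal V_03 7}) and of the bound (\ref{bound A}) for the operator $A$:
\begin{equation*}
\mathcal{V}_{21}^\prime=\sum_{\sigma,\nu}\int_{\Lambda_L^2}W_{\zeta^-}(x-y)\,a^*(h_{x,\sigma})a^*(h_{y,\nu})a(g_{y,\nu})a(g_{x,\sigma})\,dxdy+h.c.,
\end{equation*}
where $W_{\zeta^-}(x)=L^{-3}\sum_{k}W_k\zeta^-(k)e^{ikx}$. Since $W_{\zeta^-}=W\ast F^{\zeta^-}$ with $F^{\zeta^-}(x)=L^{-3}\sum_k\zeta^-(k)e^{ikx}$, Young's inequality together with $\|W\|_1\leq C$ from (\ref{L1 L2 est W}) and $\|F^{\zeta^-}\|_1\leq C$ from Lemma \ref{cutoff lemma} (taking $\phi_1^-\equiv 0$ and $\phi_2^-=\zeta^-$) gives $\|W_{\zeta^-}\|_1\leq C$.

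For the first estimate (\ref{qua V_21'}), I split $\zeta^-=\phi^-\zeta^-+\phi^+\zeta^-$ using the infrared cutoff (\ref{cutoff}), producing a decomposition $\mathcal{V}_{21}^\prime=\mathcal{V}_{21,\ell}^\prime+\mathcal{V}_{21,m}^\prime$. On the low piece $\mathcal{V}_{21,\ell}^\prime$, a Cauchy--Schwarz/AM--GM argument in the position-space representation controls the $a(g)a(g)$-side by the small factor $\|g_{x,\sigma}\|^2=\bar{N}_0/L^3\sim\tilde{\rho}_0$, producing the $C\tilde{\rho}_0^2L^3$ scalar after integration against $\|W_{\phi^-\zeta^-}\|_1\leq C$, while the $a(h)a(h)$-side is routed to $\mathcal{N}_{re}$ with a prefactor carrying the $(\ell L)$-scaling $\tilde{\rho}_0^{-\alpha_3}$. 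On the intermediate piece $\mathcal{V}_{21,m}^\prime$ the momentum transfer $|k|\gtrsim\tilde{\mu}^{1/2}\tilde{\rho}_0^{-\alpha_2}$ forces $p-k,q+k\in P_{F,-\alpha_2}$, so an integration by parts against $W_{\phi^+\zeta^-}$ (mimicking the $\Omega_{m,h}$ argument in (\ref{Omega m h})) converts one factor of $|k|$ into a gradient on $H_{x,\sigma}[-\alpha_2]$, and the bound (\ref{K_s bound}) supplies the $\mathcal{K}_s$ contribution.

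For the refined bound (\ref{qua V_21,h'}), the ultraviolet cutoff $\zeta^+(k)\neq 0$ requires $|k|\geq\frac{3}{2}\tilde{\mu}^{1/2}\tilde{\rho}_0^{-\beta_1}$, so $p-k,q+k\in P_{F,-\beta_1}$ is automatic and $a^*(h_{\cdot,\sigma})$ in the analogous position-space expression may be replaced by $a^*(H_{\cdot,\sigma}[-\beta_1])$. The quantitative inputs are the decay estimate $|W_k|\lesssim|k|^{-2}(\ell L)^{-2}$ from (\ref{est of W_p p^2}) and the sharper $L^2$ bound $\|W_{\zeta^+}\|_2^2\leq C(\ell L)^{-4}\tilde{\rho}_0^{\alpha_5}$ from (\ref{special W L2 norm}). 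Applying the Cauchy--Schwarz pairing of Lemma \ref{lemma A_k B_k} with $\mathbf{h}_{x,\sigma}=H_{x,\sigma}[-\beta_1]$ and $\mathbf{g}_{x,\sigma}=g_{x,\sigma}$ (so that every $a^*$-factor couples to $\mathcal{N}_h[-\beta_1]$), and redistributing an AM--GM weight $\tilde{\rho}_0^{\alpha_4}$ to separate $\mathcal{N}_h[-\beta_1]$ from the scalar $\tilde{\rho}_0^{7/3+\alpha_4}L^3$, yields the stated bound; the $\tilde{\rho}_0^{1+\alpha_5-4\alpha_3-\alpha_4}\mathcal{N}_h[-\beta_1]$ piece arises from the subterms where one further $|k|$-factor is traded against kinetic energy via the gradient trick of (\ref{Omega m h}).

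The main obstacle is matching the exponents $\tilde{\rho}_0^{1-\alpha_3}$ and $\tilde{\rho}_0^{1+\alpha_5-4\alpha_3-\alpha_4}$ precisely. The $h=H+L$ and $g=I+S$ decompositions produce roughly 16 subterms, each of which must be routed to exactly one of $\mathcal{K}_s$, $\mathcal{N}_{re}$ (or $\mathcal{N}_h[-\beta_1]$), or the scalar error, without losing extra factors of $L^3$ or generating parasitic $\mathcal{N}_{re}^2$ contributions. The critical technical point is the systematic use of the gradient/integration-by-parts technique from (\ref{Omega m h}) together with the frequency-localized $b^*$ bounds of Lemmas \ref{b^* bound by b} and \ref{b^* bound by b general}, which together allow one to avoid the $L^\infty$-divergent $\|h_{x,\sigma}\|^2$ that would otherwise spoil the estimate.
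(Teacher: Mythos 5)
Your treatment of the second estimate (\ref{qua V_21,h'}) is essentially the paper's: rewrite with $H_{\cdot,\sigma}[-\beta_1]$, invoke (\ref{special W L2 norm}) and Lemma \ref{b^* bound by b general}, and tune the AM--GM weight so the scalar lands at $\tilde{\rho}_0^{\frac{7}{3}+\alpha_4}L^3$. (One small correction: the term $\tilde{\rho}_0^{1+\alpha_5-4\alpha_3-\alpha_4}\mathcal{N}_h[-\beta_1]$ comes from weighting the $\Vert W_{\zeta^+}\Vert_2 L^{3/2}$ piece of the $b^*$ bound, not from any gradient trick.)

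For the first estimate (\ref{qua V_21'}) there is a genuine gap. Your low-frequency step pairs the $a(h)a(h)$ side against the $a(g)a(g)$ side, but $\Vert h_{x,\sigma}\Vert_2=\infty$ (the sum runs over all of $B_F^c$), so that pairing either diverges or, done globally, produces an $\mathcal{N}_{re}^2$ contribution of order $L^6$ -- exactly the obstruction the paper flags in the proof of Lemma \ref{lemma qua V_21' Omega}. If you instead repair it with the $b^*$ machinery of Lemma \ref{b^* bound by b}, the scalar term is governed by $\Vert W^{\zeta^-}\Vert_2\leq\Vert W\Vert_2\leq C(\ell L)^{-3/2}$ (the $\phi^-$ restriction gains nothing here), and you land at $C\tilde{\rho}_0\mathcal{N}_{re}+C\tilde{\rho}_0^{2-3\alpha_3}L^3$; the exponent $2-3\alpha_3$ is not acceptable, since downstream (via (\ref{bound com bog K_s}) and the $\mathcal{K}_s$-coefficient $\tilde{\rho}_0^{\frac{1}{3}+\alpha_5-\alpha_4}$ in the final error budget) it would contribute $\tilde{\rho}_0^{\frac{7}{3}+\alpha_5-\alpha_4-3\alpha_3}L^3\gg\tilde{\rho}_0^{\frac{7}{3}}L^3$. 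The missing idea is the decomposition (\ref{split W}): since $W=W\vartheta$, one writes $W_k=L^{-3/2}\sum_m W_{k-m}\vartheta_m$ and extracts the full momentum via $W_{k-m}=\frac{W_{k-m}(k-m)}{\vert k-m\vert^2}\cdot\big(p-(p-k)-m\big)$. This converts the kernel into $(U\vartheta)^{\zeta^-}$ -- with $\Vert U\Vert_2\leq C\tilde{\rho}_0^{\frac{1}{6}-\frac{\alpha_3}{2}}$ from (\ref{U L2}) and the multiplication by $\vartheta$ keeping the kernel in $L^1(\Lambda_L)$ uniformly in $L$, which $U$ alone is not -- acting respectively on $\nabla_xg_{x,\sigma}$ (small, giving the $\tilde{\rho}_0^{1-\alpha_3}\mathcal{N}_{re}$ and $\tilde{\rho}_0^2L^3$ terms), on $\nabla_xh_{x,\sigma}$ (giving $\mathcal{K}_s$ via (\ref{K_s bound})), and on $\nabla\vartheta$; a fourth term carries $W_0\vartheta_k$ and is $O(L^{-3})$. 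Your $\phi^\pm$ split plus integration by parts on the $\phi^+$ piece captures a fragment of this, but does not address the low-frequency part of $W$, where the problematic scalar contribution actually sits.
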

\begin{proof}
  \par For $\mathcal{V}_{21,h}^\prime$, since $\zeta^+(k)>0$ implies $p-k,q+k\in P_{F,-\beta_1}$, then
  \begin{equation}\label{V_21,h'}
    \begin{aligned}
    \mathcal{V}_{21,h}^\prime=\sum_{\sigma,\nu}
    \int_{\Lambda_L^2}W_{\zeta^+}(x-y)a^*(H_{x,\sigma}[-\beta_1])a^*(H_{y,\nu}[-\beta_1])
    a(g_{y,\nu})a(g_{x,\sigma})dxdy+h.c.
    \end{aligned}
  \end{equation}
  Using estimate (\ref{special W L2 norm}) and Lemma \ref{b^* bound by b general}, we have
  \begin{equation}\label{V_21,h' bound}
    \pm\mathcal{V}_{21,h}^\prime\leq C\big(\tilde{\rho}_0
    +\tilde{\rho}_0^{1+\alpha_5-4\alpha_3-\alpha_4}\big)\mathcal{N}_h[-\beta_1]
    +C\tilde{\rho}_0^{\frac{7}{3}+\alpha_4}L^3,
  \end{equation}

  \par For $\mathcal{V}_{21}^\prime$, we first use (\ref{Wtheta}) to rewrite
  \begin{equation}\label{split W}
  \begin{aligned}
    W_k=\Big(\frac{1}{L^{\frac{3}{2}}}
    \sum_{m\neq k}\frac{W_{k-m}(k-m)}{\vert k-m\vert^2}\vartheta_m\Big)
    \big(p-(p-k)-m\big)+\frac{1}{L^{\frac{3}{2}}}W_0\vartheta_k,
  \end{aligned}
  \end{equation}
  and then to rewrite it by $\mathcal{V}_{21}^\prime=\sum_{j=1}^{4}\mathcal{V}_{21,j}^\prime$.

  \par For $\mathcal{V}_{21,1}^\prime$, recall the definition of $U$ in (\ref{define U}),
  \begin{equation}\label{V_21,1'origin}
    \mathcal{V}_{21,1}^\prime=C\sum_{\sigma,\nu}
    \int_{\Lambda_L^2}(U\vartheta)^{\zeta^-}(x-y)
    a^*(h_{x,\sigma})a^*(h_{y,\nu})
    a(g_{y,\nu})a(\nabla_xg_{x,\sigma})dxdy+h.c.
  \end{equation}
  Using (\ref{theta norm}), (\ref{U L2}) and Lemma \ref{b^* bound by b general}, we can bound
  \begin{equation}\label{V_21,1'}
    \pm\mathcal{V}_{21,1}^\prime\leq C\tilde{\rho}_0^{1-\alpha_3}\mathcal{N}_{re}+C
    \tilde{\rho}_0^2L^3.
  \end{equation}
  Similarly, for $\mathcal{V}_{21,2}^\prime$ and $\mathcal{V}_{21,3}^\prime$, we bound them by
  \begin{equation}\label{V_21,2'}
    \pm\mathcal{V}_{21,2}^\prime\leq C\mathcal{K}_s+
    C\tilde{\rho}_0^{1-\alpha_3}\mathcal{N}_{re}+C
    \tilde{\rho}_0^2L^3,
  \end{equation}
  and
  \begin{equation}\label{V_21,3'}
    \pm\mathcal{V}_{21,3}^\prime\leq
    C\tilde{\rho}_0^{1-\alpha_3}\mathcal{N}_{re}+C
    \tilde{\rho}_0^2L^3.
  \end{equation}
  For $\mathcal{V}_{21,4}^\prime$, we just need to additionally notice that $\vert W_0\vert\leq C$, then
  \begin{equation}\label{V_21,4'}
    \pm\mathcal{V}_{21,4}^\prime\leq
    C(\tilde{\rho}_0)L^{-3}\big(\mathcal{N}_{re}+L^3\big).
  \end{equation}
  Thus we reach (\ref{qua V_21'}) using (\ref{split W})-(\ref{V_21,4'}).
\end{proof}

\par Lemma \ref{lemma qua control V_4} estimates the action of $e^B$ on $\mathcal{V}_4$.
\begin{lemma}\label{lemma qua control V_4}
  For any $N\in\mathbb{N}_{\geq0}$ and $\vert t\vert\leq 1$, assume further $2\alpha_3-3\alpha_2>0$, we have
  \begin{equation}\label{qua control V_4}
    e^{-tB}\mathcal{V}_4e^{tB}\leq C\mathcal{V}_4+C\tilde{\rho}_0^{\frac{5}{3}+\alpha_2}
    \mathcal{N}_{re}+C\tilde{\rho}_0^2L^3.
  \end{equation}
\end{lemma}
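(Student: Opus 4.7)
The plan is to apply Gronwall's inequality to the differential identity
\begin{equation*}
  \frac{d}{dt} e^{-tB}\mathcal{V}_4 e^{tB} = e^{-tB}[\mathcal{V}_4,B]e^{tB},
\end{equation*}
so the task reduces to establishing a bound of the form
\begin{equation*}
  \pm[\mathcal{V}_4,B] \leq C\mathcal{V}_4 + C\tilde{\rho}_0^{\frac{5}{3}+\alpha_2}\mathcal{N}_{re} + C\tilde{\rho}_0^2 L^3,
\end{equation*}
combined with the already-established control $e^{-tB}\mathcal{N}_{re}e^{tB}\leq C\mathcal{N}_{re}+C\tilde{\rho}_0^{\frac{5}{3}+\alpha_3}L^3$ from Lemma \ref{lemma qua control N_re}. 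Since the $L^3$ contribution in the last term of the desired bound must survive the Gronwall integration, it suffices to produce the indicated estimate on $[\mathcal{V}_4,B]$.

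First I would expand $[\mathcal{V}_4,B]=\tfrac{1}{2}[\mathcal{V}_4,A]-\tfrac{1}{2}[\mathcal{V}_4,A^*]$ using the CAR relations \eqref{anticommutator}. The structural point is that $\mathcal{V}_4$ is supported entirely outside $B_F$ in all four momentum indices while $A$ moves two momenta from inside to outside $B_F$. Consequently the nonzero contractions in $[\mathcal{V}_4,A]$ can only occur between the two annihilation operators $a_p,a_q$ in $A$ (with $p,q\in B_F$) and the creation operators $a^*_{p-k},a^*_{q+k}$ of $\mathcal{V}_4$, since the momenta in $B_F$ appearing in $A$ must meet the appropriate characteristic functions in $\mathcal{V}_4$; but $\mathcal{V}_4$'s creation indices lie outside $B_F$, which forces the contractions to be absent. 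Thus all potentially dangerous non-trivial contractions vanish and one is left with the ``normal-ordered'' commutator, which can be written in position space using the notation of Section \ref{other} as a six-operator expression of the schematic form
\begin{equation*}
  \sum_{\sigma,\nu,\tau}\int_{\Lambda_L^3} v(x-y)\eta_{\phi^+}(x'-y')\,a^*(h)a^*(h)a^*(h)a(g)a(h)a(h)\,dxdydy' + h.c.,
\end{equation*}
plus analogous six-operator pieces where the wedge of $v$ and the shift $\eta_{\phi^+}$ interact through a common coordinate.

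Second I would estimate each such sextic term by pairing a $b^*_{y',\sigma}(\eta_{\phi^+})$-operator (controlled via Lemma \ref{b^* bound by b} and the bound \eqref{ineq b 2}: $\|b^*_{y,\sigma}(\eta_{\phi^+})\|\leq C L^{-3/2}\tilde{\rho}_0^{1/3+\alpha_2/2}$) with the remaining four-operator structure, which has the exact shape of $\mathcal{V}_4$ evaluated on the state after absorbing one $a^*(g)a(h)$ excitation pair. Applying Cauchy--Schwarz in the form
\begin{equation*}
  \bigl|\langle \mathcal{T}\psi,\psi\rangle\bigr| \leq \Big(\sum_\sigma\int \|b^*_{y',\sigma}(\eta_{\phi^+})\psi\|^2 dy'\Big)^{\!1/2}\!\Big(\sum_{\sigma,\nu}\!\int v(x-y)^2 \|a(h)a(h)\psi\|^2 dxdy\Big)^{\!1/2}
\end{equation*}
yields, after using $\|b^*_{y,\sigma}(\eta_{\phi^+})\|^2\lesssim L^{-3}\tilde{\rho}_0^{2/3+\alpha_2}$ and bounding the second factor by $C\langle\mathcal{V}_4\psi,\psi\rangle^{1/2}$, a bound of order $\tilde{\rho}_0^{1/3+\alpha_2/2}\langle\mathcal{N}_{re}\psi,\psi\rangle^{1/2}\langle\mathcal{V}_4\psi,\psi\rangle^{1/2}$. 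A further Cauchy--Schwarz $2ab\leq \tilde{\rho}_0^{-4/3-3\alpha_2/2}a^2\cdot\tilde{\rho}_0^{4/3+3\alpha_2/2}+b^2$ type splitting produces precisely the claimed $C\mathcal{V}_4+C\tilde{\rho}_0^{5/3+\alpha_2}\mathcal{N}_{re}$, while the residual tadpole-like terms (where one $\eta_{\phi^+}$-line closes into itself) are bounded crudely using $\|\eta_{\phi^+}\|_1$ and $\|v\|_1$ to give $C\tilde{\rho}_0^2 L^3$.

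The main obstacle, and the reason we need the scaling hypothesis $2\alpha_3-3\alpha_2>0$, is in the control of the ``off-diagonal'' contributions where $\eta_{\phi^+}$-cutoffs and the $\hat{v}$-factor share a momentum index: estimating these requires trading a factor $\|\eta_{\phi^+}\|_\infty \lesssim L^{-3/2}\tilde{\rho}_0^{1/3-3\alpha_2+2\alpha_3}$ from \eqref{bound eta-etatilde Linfty} against the $L^3$ prefactor, and the condition $2\alpha_3>3\alpha_2$ is exactly what ensures the resulting power is at least $\tilde{\rho}_0^{5/3+\alpha_2}$ on the $\mathcal{N}_{re}$-side and $\tilde{\rho}_0^2$ on the constant side. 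Once $[\mathcal{V}_4,B]$ is controlled, Gronwall's inequality together with \eqref{qua control N_re} closes the argument, absorbing the lower-order $L^3$ growth into the stated constant remainder.
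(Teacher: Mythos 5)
There is a genuine gap: your contraction analysis of $[\mathcal{V}_4,B]$ is exactly backwards, and as a result you miss the main term of the commutator. The contractions between the annihilation operators $a_{q,\nu}a_{p,\sigma}$ of $A$ (indices \emph{inside} $B_F$) and the creation operators of $\mathcal{V}_4$ (indices \emph{outside} $B_F$) are indeed the ones that vanish — but those are not ``the only possible'' contractions. The surviving contractions are the other way around: the annihilation operators of $\mathcal{V}_4$ (all outside $B_F$) pair with the creation operators $a^*_{p-k,\sigma}a^*_{q+k,\nu}$ of $A$ (also outside $B_F$), and these do not vanish. Note also that your conclusion is internally inconsistent: if \emph{all} cross-contractions vanished, the commutator of the two normal-ordered quartics would be zero, not a sextic; the sextic you write down is precisely the single-contraction term.

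Concretely, the paper decomposes $[\mathcal{V}_4,B]=\Theta_m+\Theta_{d,1}+\Theta_{d,2}+\Theta_r$. The double contraction produces the quartic main term $\Theta_m$ with coefficient $\frac{1}{2L^3}\sum_l\hat v_{k-l}\eta_l$, i.e.\ $\frac{L^{3/2}}{2}\sum\int v(x-y)\eta(x-y)\,a^*(h_{x,\sigma})a^*(h_{y,\nu})a(g_{y,\nu})a(g_{x,\sigma})+h.c.$, and it is this term whose Cauchy--Schwarz estimate produces both the $C\mathcal{V}_4$ and the $C\tilde\rho_0^2L^3$ in (\ref{qua control V_4}); your proposal has no source for these contributions. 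The cutoff corrections $\Theta_{d,1}$ (containing $\eta^{\phi^-}$, controlled by (\ref{bound eta-etatilde Linfty}) — this is where $2\alpha_3-3\alpha_2>0$ enters, so your last paragraph points at the right mechanism but attaches it to the wrong object) and $\Theta_{d,2}$ (which vanishes because the momentum constraints force $\phi^+(l)=0$) are also absent. Your treatment of the sextic piece $\Theta_r$ — rewriting with $b^*_{x,\varpi}(\eta_{\phi^+})$, applying (\ref{ineq b 2}) and Cauchy--Schwarz to get $C\theta\mathcal{V}_4+C\theta^{-1}\tilde\rho_0^{5/3+\alpha_2}\mathcal{N}_{re}$ — does match the paper, as does the Gronwall closing step, but the argument cannot be repaired without restoring the double-contraction term and its cutoff corrections.
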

\begin{proof}
\par We have
\begin{equation}\label{[V_4,B]}
  [\mathcal{V}_4,B]=\Theta_m+\Theta_{d,1}+\Theta_{d,2}+\Theta_r,
\end{equation}
with
    \begin{align*}
    \Theta_m&=\frac{1}{2L^3}\sum_{k,p,q,\sigma,\nu}\Big(\sum_{l}\hat{v}_{k-l}\eta_l\Big)
    (a^*_{p-k,\sigma}a^*_{q+k,\nu}a_{q,\nu}a_{p,\sigma}+h.c.)\\
    &\quad\quad\times
    \chi_{p-k\notin B_F}\chi_{q+k\notin B_F}
    \chi_{p\in B_F}\chi_{q\in B_F}\\
    \Theta_{d,1}&=-\frac{1}{2L^3}\sum_{l,k,p,q,\sigma,\nu}\hat{v}_{k-l}\eta_l
    \phi^-(l)
    (a^*_{p-k,\sigma}a^*_{q+k,\nu}a_{q,\nu}a_{p,\sigma}+h.c.)\\
    &\quad\quad\times
    \chi_{p-k\notin B_F}\chi_{q+k\notin B_F}
    \chi_{p\in B_F}\chi_{q\in B_F}\\
     \Theta_{d,2}&=\frac{1}{2L^3}\sum_{l,k,p,q,\sigma,\nu}\hat{v}_{k-l}\eta_l
     \phi^+(l)
    (a^*_{p-k,\sigma}a^*_{q+k,\nu}a_{q,\nu}a_{p,\sigma}+h.c.)\\
    &\quad\quad\times
    \chi_{p-k\notin B_F}\chi_{q+k\notin B_F}
    \chi_{p\in B_F}\chi_{q\in B_F}
    (\chi_{p-l\notin B_F}\chi_{q+l\notin B_F}-1)\\
    \Theta_r&=-\frac{1}{L^3}\sum_{k,p,q,l,r,s,\sigma,\nu,\varpi}
    \hat{v}_{k}\eta_l
    (a^*_{p-k,\sigma}a^*_{q+k,\nu}a^*_{s+l,\varpi}a_{q,\nu}
    a_{s,\varpi}a_{r,\sigma}+h.c.)\\
    &\quad\quad\times
    \chi_{p-k,p\notin B_F}\chi_{q+k,q\notin B_F}
    \chi_{r\in B_F}
    \chi_{s+l\notin B_F}\chi_{s\in B_F}\delta_{p,r-l}\phi^+(l)
    \end{align*}
\par For $\Theta_m$, we have
\begin{equation}\label{Theta_m}
  \Theta_m=\frac{L^{\frac{3}{2}}}{2}\sum_{\sigma,\nu}\int_{\Lambda_L^2}
    v(x-y)\eta(x-y)a^*(h_{x,\sigma})a^*(h_{y,\nu})
    a(g_{y,\nu})a(g_{x,\sigma})dxdy+h.c.
\end{equation}
We can bound it directly using Cauchy-Schwartz inequality
\begin{equation}\label{Theta_m 2}
  \pm\Theta_m\leq C\mathcal{V}_4+C\tilde{\rho}_0^2L^3.
\end{equation}

\par For $\Theta_{d,1}$,
\begin{equation}\label{Theta_d,1}
  \begin{aligned}
  \Theta_{d,1}=&-\frac{L^{\frac{3}{2}}}{2}\sum_{\sigma,\nu}\int_{\Lambda_L^2}
    v(x-y)\eta^{\phi^-}(x-y)\\
    &\quad\quad\quad\quad\times a^*(h_{x,\sigma})a^*(h_{y,\nu})
    a(g_{y,\nu})a(g_{x,\sigma})dxdy+h.c.
  \end{aligned}
\end{equation}
We just need to additionally use the bound (\ref{bound eta-etatilde Linfty}) to reach
\begin{equation}\label{Theta d,1 bound}
  \pm\Theta_{d,1}\leq C\theta\mathcal{V}_4
  +C\theta^{-1}\tilde{\rho}_0^{\frac{8}{3}+4\alpha_3-6\alpha_2}L^3,
\end{equation}
for any $\theta>0$.

\par For $\Theta_{d,2}$, we notice that $(\chi_{p-l\notin B_F}\chi_{q+l\notin B_F}-1)\neq0$ means $\vert l\vert\leq C\tilde{\rho}_0^{\frac{1}{3}}$, which means $\phi^+(l)=0$, and therefore $\Theta_{d,2}=0$.

\par For $\Theta_{r}$, since $p=r-l$, if we allow $r-l=p\in B_F$, this implies
\begin{equation*}
  \vert l\vert\leq \vert r-l\vert+\vert r\vert\leq 2\tilde{\mu}^{\frac{1}{2}}
  \leq \tilde{\mu}^{\frac{1}{2}}\tilde{\rho}_0^{-\alpha_2},
\end{equation*}
which makes $\phi^+(l)=0$. Hence we have
\begin{equation}\label{Theta_r}
  \begin{aligned}
  \Theta_r=&-L^{\frac{3}{2}}\sum_{\sigma,\nu,\varpi}\int_{\Lambda_L^3}
  v(x_1-x_3)\eta_{\phi^+}(x_1-x_4)
  a^*(h_{x_1,\sigma})a^*(h_{x_3,\nu})a^*(h_{x_4,\varpi})\\
  &\quad\quad\quad\quad
  \times a(h_{x_3,\nu})a(g_{x_4,\varpi})a(g_{x_1,\sigma})dx_1dx_3dx_4+h.c.\\
  =&-L^{\frac{3}{2}}\sum_{\sigma,\nu,\varpi}\int_{\Lambda_L^2}
  v(x_1-x_3)
  a^*(h_{x_1,\sigma})a^*(h_{x_3,\nu})b^*_{x_1,\varpi}(\eta_{\phi^+})\\
  &\quad\quad\quad\quad
  \times a(g_{x_1,\sigma})a(h_{x_3,\nu})dx_1dx_3+h.c.
  \end{aligned}
\end{equation}
Thus using (\ref{ineq b 2}), we can bound
\begin{equation}\label{Theta_r bound}
  \pm\Theta_r\leq C\theta\mathcal{V}_4+C\theta^{-1}\tilde{\rho}_0^{\frac{5}{3}+\alpha_2}
  \mathcal{N}_{re},
\end{equation}
for some $\theta>0$.
\par Combining (\ref{Theta_m})-(\ref{Theta_r bound}) we reach (\ref{qua control V_4}) using Gronwall's inequality and Lemma \ref{lemma qua control N_re}.
\end{proof}

\par Lemma \ref{lemma qua conj V_0 1 22 23} collects the results of renormalization of operators under their a-priori estimates given in Section \ref{a-prior}.
\begin{lemma}\label{lemma qua conj V_0 1 22 23}
For any $N\in\mathbb{N}_{\geq0}$, and $\frac{1}{3}>\delta_1\geq\delta_2>0$, we have
\begin{equation}\label{qua conj V_0 1 22 23}
  \begin{aligned}
  e^{-B}\big(\mathcal{V}_0+\mathcal{V}_1+\mathcal{V}_{22}+\mathcal{V}_{23}\big)e^{B}
  =&\frac{\hat{v}_0}{2L^3}\mathbf{q}(\mathbf{q}-1)\bar{N}_0^2\\
  &+\frac{\hat{v}_0}{L^3}(\mathbf{q}-1)\bar{N}_0(N-\mathbf{q}\bar{N}_0)
  +\mathcal{E}_0,
  \end{aligned}
\end{equation}
where
\begin{equation}\label{qua bound E_0}
  \begin{aligned}
  \pm\mathcal{E}_0\leq&C\tilde{\rho}_0\mathcal{K}_s+C\tilde{\rho}_0^{1+\delta_2}\mathcal{N}_{re}
  +C\tilde{\rho}_0^{1+\delta_2}\tilde{\mathcal{N}}_{re}+C\tilde{\rho}_0^{1-\delta_2}
  \mathcal{N}_h[\delta_1]+C\tilde{\rho}_0\mathcal{N}_i[\delta_1]\\
  &+C\big(\tilde{\rho}_0^{\frac{8}{3}}+\tilde{\rho}_0^{\frac{8}{3}+\alpha_3-\delta_2}\big)L^3.
  \end{aligned}
\end{equation}
\end{lemma}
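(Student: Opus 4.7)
The plan is to first apply the a-priori decompositions from Lemmas \ref{lemma V_0}, \ref{lemma V_1} and \ref{lemma V_22,23} to express $\mathcal{V}_0 + \mathcal{V}_1 + \mathcal{V}_{22} + \mathcal{V}_{23}$ as an explicit main part (a scalar plus contributions linear in $\mathcal{N}_{re}$ and $\tilde{\mathcal{N}}_{re}$), modulo a residual operator whose dominant parts are already bounded in terms of $\mathcal{N}_i[\delta_1], \mathcal{N}_h[\delta_1], \tilde{\mathcal{N}}_{re}, \mathcal{N}_{re}, \mathcal{K}_s$ and $L^3$ at the orders appearing in (\ref{qua bound E_0}). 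Only afterwards will I conjugate by $e^B$, exploiting the key observation that $B$, being built from number-preserving $a^*a^*aa$ monomials, commutes with $\mathcal{N}$.

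The arithmetic simplification is the pleasant part. Expanding the double sum in Lemma \ref{lemma V_0} gives
\begin{equation*}
\frac{\hat{v}_0}{2L^3}\sum_{\sigma\neq\nu}(\bar{N}_0 - \tilde{\mathcal{N}}_{re,\sigma})(\bar{N}_0 - \tilde{\mathcal{N}}_{re,\nu}) = \frac{\hat{v}_0}{2L^3}\mathbf{q}(\mathbf{q}-1)\bar{N}_0^2 - \frac{(\mathbf{q}-1)\bar{N}_0\hat{v}_0}{L^3}\tilde{\mathcal{N}}_{re} + \mathcal{Q},
\end{equation*}
where $\mathcal{Q}\coloneqq \frac{\hat{v}_0}{2L^3}\sum_{\sigma\neq\nu}\tilde{\mathcal{N}}_{re,\sigma}\tilde{\mathcal{N}}_{re,\nu}$ is controlled by $L^{-3}\tilde{\mathcal{N}}_{re}^2$ and hence, via (\ref{est of Ntildere^2}), dominated by $C\tilde{\rho}_0\mathcal{N}_i[\delta_1] + C\tilde{\rho}_0^{1+\delta_1}\tilde{\mathcal{N}}_{re}$. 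Adding the explicit linear contribution $\frac{(\mathbf{q}-1)\bar{N}_0\hat{v}_0}{L^3}\mathcal{N}_{re}$ extracted from $\mathcal{V}_{22}$ and $\mathcal{V}_{23}$ via Lemma \ref{lemma V_22,23}, and applying the particle-number identity $\mathcal{N}_{re} - \tilde{\mathcal{N}}_{re} = N - \mathbf{q}\bar{N}_0$ from (\ref{N_re and N_retilde}), combines the two linear contributions into the single term $\frac{(\mathbf{q}-1)\bar{N}_0\hat{v}_0}{L^3}(N - \mathbf{q}\bar{N}_0)$ appearing in (\ref{qua conj V_0 1 22 23}).

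The conjugation step is then almost free for the main part: since $[\mathcal{N}, B] = 0$ one has $e^{-B}(\mathcal{N}_{re} - \tilde{\mathcal{N}}_{re})e^B = \mathcal{N}_{re} - \tilde{\mathcal{N}}_{re} = N - \mathbf{q}\bar{N}_0$ identically, while scalars pass through untouched. The remaining task is to conjugate the residual operator $\mathcal{Q} + \mathcal{V}_1 + \mathcal{E}_{\mathcal{V}_0} + \mathcal{E}_{\mathcal{V}_{22}} + \mathcal{E}_{\mathcal{V}_{23}}$, every summand of which is already dominated, by (\ref{est E_V_0}), (\ref{est V_1}), (\ref{est E_V_22}) and (\ref{est E_V_23}), by positive combinations of $\mathcal{N}_i[\delta_1], \mathcal{N}_h[\delta_1], \tilde{\mathcal{N}}_{re}, \mathcal{N}_{re}, \mathcal{K}_s$ and $L^3$. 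Applying Lemma \ref{lemma qua control N_re} (for the number operators) and Lemma \ref{lemma qua cal [K,B]} (for $\mathcal{K}_s$) term by term transports each such bound to the conjugate, at the price of additive corrections of order $\tilde{\rho}_0^{1/3}$ times the same control operators, which remain comfortably inside the tolerance of (\ref{qua bound E_0}).

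The main obstacle is the bookkeeping required to verify that every error produced this way is absorbed by exactly one of the six quantities appearing on the right-hand side of (\ref{qua bound E_0}). In particular, the $C\tilde{\rho}_0 \mathcal{K}_s$ contribution originates only from the $\mathcal{V}_{23}$ decomposition of Lemma \ref{lemma V_22,23}, the $C\tilde{\rho}_0^{1-\delta_2}\mathcal{N}_h[\delta_1]$ and $C\tilde{\rho}_0^{1+\delta_2}\tilde{\mathcal{N}}_{re}$ pieces are forced by Lemma \ref{lemma V_1}, and the $C\tilde{\rho}_0\mathcal{N}_i[\delta_1]$ contribution is produced simultaneously by $\mathcal{Q}$ and by the a-priori errors in $\mathcal{V}_0, \mathcal{V}_1, \mathcal{V}_{22}, \mathcal{V}_{23}$. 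Since the conjugation corrections are a factor $\tilde{\rho}_0^{1/3}$ smaller than the pre-conjugation bounds, no term exceeds its advertised size, and (\ref{qua bound E_0}) follows.
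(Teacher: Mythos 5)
Your proposal is correct and follows essentially the same route as the paper, whose proof of this lemma is precisely the observation that it is a direct corollary of Lemmas \ref{lemma V_0}--\ref{lemma V_22,23} (for the pre-conjugation decomposition and the identity $\mathcal{N}_{re}-\tilde{\mathcal{N}}_{re}=N-\mathbf{q}\bar{N}_0$) combined with Lemmas \ref{lemma qua control N_re}--\ref{lemma qua cal [K,B]} (to transport the error bounds through $e^{B}$). Your bookkeeping of which term in (\ref{qua bound E_0}) absorbs which source is consistent with the stated estimates, so no further comment is needed.
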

\begin{proof}
  \par Lemma \ref{lemma qua conj V_0 1 22 23} is a direct corollary of Lemmas \ref{lemma V_0}-\ref{lemma V_22,23}, and Lemmas \ref{lemma qua control N_re}-\ref{lemma qua cal [K,B]}.
\end{proof}

\par Lemma \ref{lemma qua conj V_3} calculates the renormalization of $\mathcal{V}_3$.
\begin{lemma}\label{lemma qua conj V_3}
For any $N\in\mathbb{N}_{\geq0}$,
\begin{equation}\label{qua conj V_3}
  e^{-B}\mathcal{V}_3e^B=\mathcal{V}_{3,l}+\mathcal{E}_{\mathcal{V}_3},
\end{equation}
where $\gamma^-$ has been defined in (\ref{cutoffgamma}),
\begin{equation}\label{qua V_3,l}
\mathcal{V}_{3,l}=\frac{1}{L^3}\sum_{k,p,q,\sigma,\nu}\hat{v}_k\gamma^-(q)
(a_{p-k,\sigma}^*a_{q+k,\nu}^*a_{q,\nu}a_{p,\sigma}+h.c.)
\chi_{p-k,q+k\notin B_F}\chi_{q\notin B_F}\chi_{p\in B_F},
\end{equation}
and
\begin{equation}\label{qua E_V_3}
  \begin{aligned}
  \pm\mathcal{E}_{\mathcal{V}_3}\leq&C\big(\tilde{\rho}_0^{\frac{4}{3}+\alpha_2
  -3\delta_3-\alpha_4}+\tilde{\rho}_0^{\frac{4}{3}+\frac{\alpha_2}{2}}\big)
  \mathcal{N}_{re}+C\tilde{\rho}_0^{1+\delta_2}\tilde{\mathcal{N}}_{re}
  +C\tilde{\rho}_0\mathcal{N}_i[\delta_1]\\
  &+C\tilde{\rho}_0^{1-\delta_2}\mathcal{N}_h[\delta_1]
  +C\tilde{\rho}_0^{\frac{2}{3}-\alpha_4}\mathcal{N}_h[-\delta_3]
  +C\tilde{\rho}_0^{\frac{1}{3}+\alpha_4}
  \mathcal{V}_4+C\tilde{\rho}_0^{\frac{7}{3}+\alpha_4}L^3,
  \end{aligned}
\end{equation}
for $\frac{1}{24}>\frac{1}{2}\delta_3>\frac{1}{4}\alpha_3>\alpha_2>2\alpha_4>0$
and $\frac{1}{3}>\delta_1\geq\delta_2>0$.
\end{lemma}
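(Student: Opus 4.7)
The plan is to combine a momentum decomposition of $\mathcal{V}_3$ with a Duhamel expansion of the conjugation $e^{-B}(\cdot)e^{B}$. I would first write $\mathcal{V}_3 = \mathcal{V}_{3,l} + \mathcal{V}_{3,h}$, where $\mathcal{V}_{3,h}$ is defined exactly as in (\ref{qua V_3,l}) but with $\gamma^-(q)$ replaced by $\gamma^+(q)=1-\gamma^-(q)$; on the support of $\gamma^+(q)$ the momentum $q$ lies in $P_{F,-\delta_3}$. The Duhamel identity
\[
 e^{-B}\mathcal{V}_3 e^{B} = \mathcal{V}_{3,l} + \mathcal{V}_{3,h} + \int_0^1 e^{-tB}[\mathcal{V}_3,B]e^{tB}\,dt
\]
then reduces the claim to bounding $\mathcal{V}_{3,h}$ and the Duhamel remainder within the declared error (\ref{qua E_V_3}).

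For $\mathcal{V}_{3,h}$ I would rewrite it (after reordering creation/annihilation operators, absorbing the lower-order commutator into the error) as an $L^2(\Lambda_L^2)$ pairing of $v(x-y)$ against products of the form $a^*(H_{y,\nu}[-\delta_3])a^*(h_{x,\sigma})\,a(g_{y,\nu})a(g_{x,\sigma})$. Splitting the resulting bilinear form by Cauchy--Schwartz so that one factor is controlled by $\mathcal{V}_4^{1/2}$ while the other carries a high-frequency projection yields $\pm\mathcal{V}_{3,h}\le C\tilde\rho_0^{2/3-\alpha_4}\mathcal{N}_h[-\delta_3]+C\tilde\rho_0^{1/3+\alpha_4}\mathcal{V}_4$, which accounts for the $\mathcal{N}_h[-\delta_3]$ and $\mathcal{V}_4$ terms of (\ref{qua E_V_3}).

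Next I would compute $[\mathcal{V}_3,B]$: since both operators are quartic, the commutator decomposes into a fully-contracted quartic component (two of the four operators of $\mathcal{V}_3$ contract with two of $B$) and a sextic remainder. The quartic component carries the coefficient $L^{-3}\sum_{l}\hat{v}_{k-l}\eta_l\phi^+(l)$; writing it as $L^{-3}\sum_l \hat v_{k-l}\eta_l$ minus its $\phi^-$ piece, the first part is handled through the scattering identity (\ref{Wdiscrete asymptotic energy pde on the torus}) (extracting a $W_k$-type term that combines with the existing $\mathcal{V}_{21}^\prime$-structure of $\mathcal{G}_N$), while the $\phi^-$ remainder is bounded via (\ref{bound eta-etatilde Linfty}) and Lemma \ref{lemma A_k B_k} by $C\tilde\rho_0^{4/3+\alpha_2-3\delta_3-\alpha_4}\mathcal{N}_{re}$. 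The sextic remainder I would organize as an integral of $b^*$-type operators against the kernel $v(x-z)\eta_{\phi^+}(y-z)$; estimating one $b^*$-factor via Lemma \ref{b^* bound by b} to gain $L^{-3/2}\tilde\rho_0^{1/3+\alpha_2/2}$ and pairing the remaining four operators by Cauchy--Schwartz against $\mathcal{V}_4^{1/2}$, $\tilde{\mathcal{N}}_{re}^{1/2}$, or $\mathcal{N}_h[\delta_1]^{1/2}+\mathcal{N}_i[\delta_1]^{1/2}$ produces the remaining terms of (\ref{qua E_V_3}). Finally the conjugation $e^{-tB}(\cdot)e^{tB}$ applied to these bounds is absorbed by Lemmas \ref{lemma qua control N_re} and \ref{lemma qua control V_4}, which preserve the orders of $\tilde\rho_0$ and $L^3$.

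The main obstacle is the bookkeeping of the fully-contracted quartic piece of $[\mathcal{V}_3,B]$. The presence of the cutoff $\phi^+(l)$ prevents a direct application of the scattering equation, and one has to carefully separate the low-momentum regime $|l|\lesssim\tilde\rho_0^{1/3-\alpha_2}$ from the high-momentum regime, using (\ref{bound eta-etatilde Linfty}) and the smoothness of $v$ for the former and the scattering identity (\ref{Wdiscrete asymptotic energy pde on the torus}) for the latter. It is only through this delicate split that the parameter constraint $\frac{1}{24}>\frac{1}{2}\delta_3>\frac{1}{4}\alpha_3>\alpha_2>2\alpha_4>0$ becomes visible, and the precise powers of $\tilde\rho_0$ appearing in (\ref{qua E_V_3}) are produced.
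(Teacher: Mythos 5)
Your treatment of the high part matches the paper: splitting off $\mathcal{V}_{3,h}$ (the $\gamma^+(q)$ piece), bounding it by $C\tilde{\rho}_0^{\frac13+\alpha_4}\mathcal{V}_4+C\tilde{\rho}_0^{\frac23-\alpha_4}\mathcal{N}_h[-\delta_3]$, and absorbing its conjugation via Lemmas \ref{lemma qua control N_re} and \ref{lemma qua control V_4} is exactly what is done. The genuine gap is in your handling of the commutator. You propose that the quartic (doubly contracted) part of $[\mathcal{V}_3,B]$, carrying the coefficient $L^{-3}\sum_l\hat{v}_{k-l}\eta_l\phi^+(l)$, must be renormalized through the scattering identity (\ref{Wdiscrete asymptotic energy pde on the torus}), ``extracting a $W_k$-type term that combines with the existing $\mathcal{V}_{21}^\prime$-structure of $\mathcal{G}_N$''. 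This step is structurally impossible and, were it necessary, would contradict the statement you are proving. Track the Fermi-ball labels: the monomial of $\mathcal{V}_3$ has three indices outside $B_F$ and one inside ($a^*_{p-k},a^*_{q+k},a_q$ outside, $a_p$ inside). Contracting with $A$ allows at most one pairing (only $a_q$ can meet an outside creation of $A$), so $[\mathcal{V}_3,A]$ is purely sextic; contracting twice with $A^*$ leaves behind either two inside creations together with $a_qa_p$ (a three-hole/one-particle, $\mathcal{V}_1$-type quartic — this is the term with the convolution coefficient you identified) or a mixed one-in/one-out pair. In no case does a two-outside-creation/two-inside-annihilation ($\mathcal{V}_{21}$-type) quartic appear, so there is nothing that can be shipped into $\mathcal{V}_{21}^\prime$, and no scattering cancellation is needed: the coefficient $\sum_m\hat{v}_{k-m}\eta_m\phi^+(m)$ is uniformly bounded, and the smallness of this term comes from its operator structure. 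This is precisely how the paper treats it (it is the term $\mathcal{E}_{\mathcal{V}_{3,l},211}$), bounding it as in Lemma \ref{lemma V_1} by $C\tilde{\rho}_0\mathcal{N}_i[\delta_1]+C\tilde{\rho}_0^{1-\delta_2}\mathcal{N}_h[\delta_1]+C\tilde{\rho}_0^{1+\delta_2}\tilde{\mathcal{N}}_{re}+\dots$, which is where several terms of (\ref{qua E_V_3}) actually come from. If you instead extracted a $W_k$ main term and modified $\mathcal{V}_{21}^\prime$, you would be proving a different decomposition of $\mathcal{G}_N$, not the identity (\ref{qua conj V_3}) with the purely error-type bound (\ref{qua E_V_3}).

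Two smaller points. Leaving $\mathcal{V}_{3,h}$ unexpanded in your Duhamel identity means the remainder also contains $[\mathcal{V}_{3,h},B]$, which you never estimate; the paper avoids this by conjugating $\mathcal{V}_{3,h}$ wholesale and expanding only $\mathcal{V}_{3,l}$, and even there it does not take a single Duhamel of $[\mathcal{V}_{3,l},B]$ but expands the creation pair $a^*_{p-k}a^*_{q+k}$ and the annihilation pair $a_qa_p$ separately. That organization is what makes several contraction terms vanish identically (the $\phi^+$ cutoff is incompatible with the contracted momenta lying in or near $B_F$) and what produces the exponents $\frac43+\alpha_2-3\delta_3-\alpha_4$ and $\frac43+\frac{\alpha_2}{2}$ through $b^*$-estimates (Lemma \ref{b^* bound by b}) and the cutoff kernels in $\mathcal{E}_{\mathcal{V}_{3,l},12}$, $\mathcal{E}_{\mathcal{V}_{3,l},22}$, $\mathcal{E}_{\mathcal{V}_{3,l},213}$ — not through a $\phi^-$/$\phi^+$ split of the convolution coefficient, which is also where the parameter constraints actually become visible.
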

\begin{proof}
  \par Let $\mathcal{V}_{3,h}=\mathcal{V}_3-\mathcal{V}_{3,l}$, then
  \begin{equation}\label{V_3,h}
    \mathcal{V}_{3,h}=\sum_{\sigma,\nu}
    \int_{\Lambda_L^2}v(x-y)a^*(h_{x,\sigma})a^*(h_{y,\nu})
    a(\mathfrak{h}_{y,\nu}^{\gamma^+})a(g_{x,\sigma})dxdy+h.c.
  \end{equation}
  where we define for $z\in\Lambda_L\times\mathcal{S}_{\mathbf{q}}$,
  \begin{equation}\label{h^gamma^+}
    \mathfrak{h}_{y,\nu}^{\gamma^+}(z)=
    \sum_{k\notin B_F}\frac{e^{ikx}}{L^{\frac{3}{2}}}\gamma^+(k)f_{k,\nu}(z).
  \end{equation}
  By the definition of $\gamma^+$, we can easily bound
  \begin{equation}\label{V_3,h bound}
    \pm\mathcal{V}_{3,h}\leq C\tilde{\rho}_0^{\frac{1}{3}+\alpha_4}
    \mathcal{V}_4+C\tilde{\rho}_0^{\frac{2}{3}-\alpha_4}\mathcal{N}_{h}[-\delta_3].
  \end{equation}
  Therefore, using Lemma \ref{lemma qua control N_re} and Lemma \ref{lemma qua control V_4},
  \begin{equation}\label{V_3,h bound conj}
    \begin{aligned}
    \pm e^{-B}\mathcal{V}_{3,h}e^{B}\leq&
    C\big(\tilde{\rho}_0^{2+\alpha_2-\alpha_4}+\tilde{\rho}_0^{\frac{4}{3}
    +4\alpha_3-\alpha_4}\big)\mathcal{N}_{re}
    +C\tilde{\rho}_0^{\frac{2}{3}-\alpha_4}\mathcal{N}_h[-\delta_3]\\
    &+C\tilde{\rho}_0^{\frac{1}{3}+\alpha_4}\mathcal{V}_4
    +C\big(\tilde{\rho}_0^{\frac{7}{3}+\alpha_4}+
    \tilde{\rho}_0^{\frac{7}{3}+\alpha_3-\alpha_4}\big)L^3.
    \end{aligned}
  \end{equation}

\par For $\mathcal{V}_{3,l}$, we use
\begin{equation*}
  \begin{aligned}
    e^{-B}a^*_{p-k,\sigma}a^*_{q+k,\nu}e^B
    &=a^*_{p-k,\sigma}a^*_{q+k,\nu}+\int_{0}^{1}
    e^{-tB}[a^*_{p-k,\sigma}a^*_{q+k,\nu},B]e^{tB}dt\\
    e^{-B}a_{q,\nu}a_{p,\sigma}e^B
    &=a_{q,\nu}a_{p,\sigma}+\int_{0}^{1}
    e^{-tB}[a_{q,\nu}a_{p,\sigma},B]e^{tB}dt
  \end{aligned}
  \end{equation*}
to rewrite $e^{-B}\mathcal{V}_{3,l}e^B=V_{3,l}
+\mathcal{E}_{\mathcal{V}_{3,l},1}+\mathcal{E}_{\mathcal{V}_{3,l},2}$, where
\begin{equation}\label{split E_V_3}
  \begin{aligned}
    \mathcal{E}_{\mathcal{V}_{3,l},1}&=\frac{1}{L^3}\sum_{k,p,q,\sigma,\nu}
  \hat{v}_k\gamma^-(q) a^*_{p-k,\sigma}a^*_{q+k,\nu}\int_{0}^{1}e^{-tB}
  [a_{q,\nu}a_{p,\sigma},B]e^{tB}dt\\
  &\quad\quad\quad\quad\times
  \chi_{p-k,q+k\notin B_F}\chi_{p\in B^{\sigma}_F}\chi_{q\notin B_F}
  +h.c.\\
    \mathcal{E}_{\mathcal{V}_{3,l},2}&=\frac{1}{L^3}\sum_{k,p,q,\sigma,\nu}
  \hat{v}_k\gamma^-(q)\int_{0}^{1}e^{-tB}
  [a^*_{p-k,\sigma}a^*_{q+k,\nu},B]e^{(t-1)B}
  a_{q,\nu}a_{p,\sigma}e^Bdt\\
  &\quad\quad\quad\quad\times
  \chi_{p-k,q+k\notin B_F}\chi_{p\in B^{\sigma}_F}\chi_{q\notin B_F}
  +h.c.
  \end{aligned}
  \end{equation}

\par For $\mathcal{E}_{\mathcal{V}_{3,l},1}$, we let $\mathcal{E}_{\mathcal{V}_{3,l},1}
=\mathcal{E}_{\mathcal{V}_{3,l},11}+\mathcal{E}_{\mathcal{V}_{3,l},12}$ with
    \begin{align}
    \mathcal{E}_{\mathcal{V}_{3,l},11}&=\sum_{\substack{\sigma,\nu\\k,p,q}}
    \sum_{\substack{\tau,\varpi\\l,r,s}}\frac{\hat{v}_k}{L^3}
    \eta_l\delta_{q,r-l}\delta_{\nu,\tau}
    a^*_{p-k,\sigma}a^*_{q+k,\nu}\int_{0}^{1}e^{-tB}a^*_{s+l,\varpi}a_{s,\varpi}a_{r,\tau}
    a_{p,\sigma}e^{tB}dt\nonumber\\
    &\quad\quad\times\phi^+(l)\gamma^-(q)
    \chi_{p-k,q+k,q\notin B_F}\chi_{p\in B_F}
    \chi_{r-l,s+l\notin B_F}\chi_{r,s\in B_F}+h.c.\nonumber\\
    \mathcal{E}_{\mathcal{V}_{3,l},12}&=\sum_{\substack{\sigma,\nu\\k,p,q}}
    \sum_{\substack{\tau,\varpi\\l,r,s}}\frac{\hat{v}_k}{L^3}
    \eta_l\delta_{p,r}\delta_{\sigma,\tau}
    a^*_{p-k,\sigma}a^*_{q+k,\nu}\int_{0}^{1}e^{-tB}a^*_{s,\varpi}a_{q,\nu}a_{s+l,\varpi}
    a_{r-l,\tau}e^{tB}dt\nonumber\\
    &\quad\quad\times\phi^+(l)\gamma^-(q)
    \chi_{p-k,q+k,q\notin B_F}\chi_{p\in B_F}
    \chi_{r-l,s+l\notin B_F}\chi_{r,s\in B_F}+h.c.\label{E_V_3_1 split}
    \end{align}

  \par For $\mathcal{E}_{\mathcal{V}_{3,l},11}$, since $q=r-l$ while $r\in B_F$, $\phi^+(l)\neq0$ implies $\vert q\vert>\tilde{\mu}^{\frac{1}{2}}
  \tilde{\rho}_0^{-\alpha_2}$, which means $\phi^+(2q)\equiv1$. We let
  \begin{equation*}
    F_{\mathcal{V}_3}(x)=\frac{1}{L^3}\sum_{q\in(2\pi/L)\mathbb{Z}^3}\phi^+(2q)
    \gamma^-(q)e^{iqx},
  \end{equation*}
  then we have
  \begin{equation}\label{E_V_3,l,11}
    \begin{aligned}
    \mathcal{E}_{\mathcal{V}_{3,l},11}=&\sum_{\sigma,\nu,\varpi}
    \int_{\Lambda_L^4}\int_{0}^{1}v(x_1-x_3)L^{\frac{3}{2}}\eta_{\phi^+}(x_2-x_4)
    F_{\mathcal{V}_{3}}(x_2-x_3)
    a^*(h_{x_1,\sigma})a^*(h_{x_3,\nu})\\
    &\times e^{-tB}a^*(h_{x_4,\varpi})a(g_{x_4,\varpi})a(g_{x_2,\nu})a(g_{x_1,\sigma})
    e^{tB}dtdx_1dx_2dx_3dx_4+h.c.\\
    =&\sum_{\sigma,\nu,\varpi}
    \int_{\Lambda_L^3}\int_{0}^{1}v(x_1-x_3)L^{\frac{3}{2}}
    F_{\mathcal{V}_{3}}(x_2-x_3)
    a^*(h_{x_1,\sigma})a^*(h_{x_3,\nu})\\
    &\times e^{-tB}b^*_{x_2,\varpi}(\eta_{\phi^+})a(g_{x_2,\nu})a(g_{x_1,\sigma})
    e^{tB}dtdx_1dx_2dx_3+h.c
    \end{aligned}
  \end{equation}
By Lemma \ref{cutoff lemma} and Lemma \ref{b^* bound by b}, we have
\begin{equation}\label{E_V_3,j,11 bound}
  \pm\mathcal{E}_{\mathcal{V}_{3,l},11}\leq C\tilde{\rho}_0^{\frac{1}{3}+\alpha_4}
  \mathcal{V}_4+C\tilde{\rho}_0^{\frac{7}{3}+\alpha_2-\alpha_4}L^3.
\end{equation}

\par For $\mathcal{E}_{\mathcal{V}_{3,l},12}$, it can be written as
 \begin{equation}\label{typical example E_V_3,l,12}
    \begin{aligned}
    \mathcal{E}_{\mathcal{V}_{3,l},12}
    =&-\sum_{\sigma,\nu,\varpi}
    \int_{\Lambda_L^3}\int_{0}^{1}v(x_1-x_3)L^{\frac{3}{2}}
    \Big(\frac{1}{L^3}\sum_{q_1\in B_F}e^{iq_1(x_2-x_1)}\Big)
    a^*(h_{x_1,\sigma})a^*(h_{x_3,\nu})\\
    &\times e^{-tB}a(\mathfrak{h}^{\gamma^-}_{x_3,\nu})
    b^*_{x_2,\varpi}(\eta_{\phi^+})a(h_{x_2,\sigma})
    e^{tB}dtdx_1dx_2dx_3+h.c\\
    =&-\frac{1}{L^3}\int_{0}^{1}\sum_{\sigma,\nu,\varpi}\sum_{q_1\in B_F}
    \mathcal{A}_{q_1,\sigma,\nu}(t)\mathcal{B}_{q_1,\sigma,\varpi}(t)dt+h.c.
    \end{aligned}
  \end{equation}
  where
  \begin{equation}\label{E_V_3,l 12 AB}
    \begin{aligned}
    \mathcal{A}_{q_1,\sigma,\nu}(t)= &\int_{\Lambda_L^2}e^{-iq_1x_1}v(x_1-x_3)
    a^*(h_{x_1,\sigma})a^*(h_{x_3,\nu})e^{-tB}a(\mathfrak{h}^{\gamma^-}_{x_3,\nu})
    dx_1dx_3\\
    \mathcal{B}_{q_1,\sigma,\varpi}(t)=& \int_{\Lambda_L} e^{iq_1x_2}
    L^{\frac{3}{2}}b^*_{x_2,\varpi}(\eta_{\phi^+})a(h_{x_2,\sigma})
    e^{tB}dx_2
    \end{aligned}
  \end{equation}
 Using Cauchy-Schwartz inequality, we have for $\psi\in\mathcal{H}^{\wedge N}$,
 \begin{equation}\label{E_V_3,l 12 bound 1st step}
   \begin{aligned}
   \big\vert\langle\mathcal{E}_{\mathcal{V}_{3,l},12}\psi,\psi\rangle\big\vert
   \leq & C\int_{0}^{1}\Big(\sum_{\sigma,\nu,\varpi}\sum_{q_1}\frac{1}{L^3}
   \Vert\mathcal{A}_{q_1,\sigma,\nu}(t)^*\psi\Vert^2 \Big)^{\frac{1}{2}}\\
   &\quad\quad\quad\quad\times
   \Big(\sum_{\sigma,\nu,\varpi}\sum_{q_1}\frac{1}{L^3}
   \Vert\mathcal{B}_{q_1,\sigma,\varpi}(t)\psi\Vert^2 \Big)^{\frac{1}{2}}dt
   \end{aligned}
 \end{equation}
 By Lemma \ref{lemma qua control N_re}, we can bound
  \begin{equation}\label{E_V_3,l 12 bound}
    \pm \mathcal{E}_{\mathcal{V}_{3,l},12}\leq C\tilde{\rho}_0^{\frac{1}{3}+\alpha_4}
  \mathcal{V}_4
  +C\tilde{\rho}_0^{\frac{4}{3}+\alpha_2-\alpha_4-3\delta_3}\mathcal{N}_{re}
  +C\tilde{\rho}_0^{3+\alpha_3+\alpha_2-\alpha_4-3\delta_3}L^3.
  \end{equation}

  \par For $\mathcal{E}_{\mathcal{V}_{3,l},2}$, we have $\mathcal{E}_{\mathcal{V}_{3,l},2}
  =\sum_{j=1}^{3}\mathcal{E}_{\mathcal{V}_{3,l},1j}$, where
    \begin{align}
     \mathcal{E}_{\mathcal{V}_{3,l},21}&=\sum_{\substack{\sigma,\nu\\k,p,q}}
    \sum_{\substack{\tau,\varpi\\l,r,s}}\frac{\hat{v}_k}{L^3}\eta_l
    \delta_{p-k,r-l}\delta_{\sigma,\tau}\delta_{q+k,s+l}\delta_{\nu,\varpi}
    \int_{0}^{1}e^{-tB}a^*_{r,\tau}a^*_{s,\varpi}e^{(t-1)B}
    \nonumber\\
    &\quad\quad\times a_{q,\nu}a_{p,\sigma}e^{B}dt\phi^+(l)\gamma^-(q)
    \chi_{p-k,q+k,q\notin B_F}
    \chi_{r-l,s+l\notin B_F}\chi_{r,s\in B_F}+h.c. \nonumber\\
    \mathcal{E}_{\mathcal{V}_{3,l},22}&=\sum_{\substack{\sigma,\nu\\k,p,q}}
    \sum_{\substack{\tau,\varpi\\l,r,s}}\frac{\hat{v}_k}{L^3}\eta_l
    \delta_{p-k,r-l}\delta_{\sigma,\tau}
    \int_{0}^{1}e^{-tB}a^*_{r,\tau}a^*_{s,\varpi}a^*_{q+k,\nu}a_{s+l,\varpi}e^{(t-1)B}
     \nonumber\\
    &\quad\quad\times a_{q,\nu}a_{p,\sigma}e^{B}dt(-\phi^+(l)\gamma^-(q)
    \chi_{p-k,q+k,q\notin B_F}
    \chi_{r-l,s+l\notin B_F}\chi_{r,s\in B_F})+h.c. \nonumber\\
    \mathcal{E}_{\mathcal{V}_{3,l},23}&=\sum_{\substack{\sigma,\nu\\k,p,q}}
    \sum_{\substack{\tau,\varpi\\l,r,s}}\frac{\hat{v}_k}{L^3}\eta_l
    \delta_{q+k,s+l}\delta_{\nu,\varpi}
    \int_{0}^{1}e^{-tB}a^*_{r,\tau}a^*_{s,\varpi}a^*_{p-k,\sigma}a_{r-l,\tau}e^{(t-1)B}
     \nonumber\\
    &\quad\quad\times a_{q,\nu}a_{p,\sigma}e^{B}dt(-\phi^+(l)\gamma^-(q)
    \chi_{p-k,q+k,q\notin B_F}
    \chi_{r-l,s+l\notin B_F}\chi_{r,s\in B_F})+h.c.\label{E_V_3,2 split}
    \end{align}

  \par For $\mathcal{E}_{\mathcal{V}_{3,l},22}$, since $p-k=r-l$ while $r\in B_F$, if we allow $r-l\in B_F$, we infer $\phi^+(l)=0$. That is,
  \begin{equation}\label{E_V_3,l 22}
    \begin{aligned}
    \mathcal{E}_{\mathcal{V}_{3,l},22}=&-\sum_{\sigma,\nu,\varpi}\int_{0}^{1}
    \int_{\Lambda_L^2}L^{\frac{3}{2}}v(x_1-x_3)e^{-tB}a^*(h_{x_3,\nu})
    a^*(g_{x_1,\sigma})b_{x_1,\varpi}(\eta_{\phi^+})\\
    &\quad\quad\times e^{(t-1)B}a(\mathfrak{h}_{x_3,\nu}^{\gamma^-})a(g_{x_1,\sigma})e^B
    dx_1dx_3dt+h.c.
    \end{aligned}
  \end{equation}
  and it can be bounded by
  \begin{equation}\label{E_V_3,l 22 bound}
    \pm\mathcal{E}_{\mathcal{V}_{3,l},22}\leq C\tilde{\rho}_0^{\frac{4}{3}+\frac{\alpha_2}{2}}
    \mathcal{N}_{re}+C\tilde{\rho}_0^{3+\alpha_3+\frac{\alpha_2}{2}}L^3.
  \end{equation}
 Similarly for $\mathcal{E}_{\mathcal{V}_{3,l},23}$,
  \begin{equation}\label{E_V_3,l 23}
    \begin{aligned}
    \mathcal{E}_{\mathcal{V}_{3,l},23}=&\sum_{\sigma,\nu,\tau}\int_{0}^{1}
    \int_{\Lambda_L^2}L^{\frac{3}{2}}v(x_1-x_3)e^{-tB}a^*(h_{x_1,\sigma})
    a^*(g_{x_3,\nu})b_{x_3,\tau}(\eta_{\phi^+})\\
    &\quad\quad\times e^{(t-1)B}a(\mathfrak{h}_{x_3,\nu}^{\gamma^-})a(g_{x_1,\sigma})e^B
    dx_1dx_3dt+h.c.
    \end{aligned}
  \end{equation}
  and we have
  \begin{equation}\label{E_V_3,l 23 bound}
    \pm\mathcal{E}_{\mathcal{V}_{3,l},23}\leq C\tilde{\rho}_0^{\frac{4}{3}+\frac{\alpha_2}{2}}
    \mathcal{N}_{re}+C\tilde{\rho}_0^{3+\alpha_3+\frac{\alpha_2}{2}}L^3.
  \end{equation}

  \par For $\mathcal{E}_{\mathcal{V}_{3,l},21}$, since $p-k=r-l$ while $r\in B_F$, if $r-l\in B_F$, we deduce $\phi^+(l)=0$. Similarly, if $s-l\in B_F$, we also have $\phi^+(l)=0$. Thus
  \begin{equation}\label{E_V_3,l 21 re}
     \begin{aligned}
    \mathcal{E}_{\mathcal{V}_{3,l},21}&=\int_{0}^{1}
    \sum_{k,p,q,\sigma,\nu}\Big(\sum_{m}\frac{\hat{v}_{k-m}}{L^3}\eta_m\phi^+(m)\Big)
    e^{-tB}a^*_{p-k,\sigma}a^*_{q+k,\nu}e^{(t-1)B}a_{q,\nu}a_{p,\sigma}e^Bdt\\
    &\quad\quad\quad\quad\times \gamma^-(q)
    \chi_{p-k,q+k,p\in B_F}\chi_{q\notin B_F}+h.c.
    \end{aligned}
  \end{equation}
   We once again use the fact
  \begin{equation*}
    e^{(t-1)B}a_{q,\nu}a_{p,\sigma}e^{(1-t)B}
    =a_{q,\nu}a_{p,\sigma}+\int_{0}^{1-t}
    e^{-sB}[a_{q,\nu}a_{p,\sigma},B]e^{sB}ds
  \end{equation*}
  to rewrite $\mathcal{E}_{\mathcal{V}_{3,l},21}=\sum_{j=1}^{3}\mathcal{E}_{\mathcal{V}_{3,l},21j}$ with
  \begin{align}
    &\mathcal{E}_{\mathcal{V}_{3,l},211}=\int_{0}^{1}
    \sum_{k,p,q,\sigma,\nu}\Big(\sum_{m}\frac{\hat{v}_{k-m}}{L^3}\eta_m
    \phi^+(m)\Big)
    e^{-tB}a^*_{p-k,\sigma}a^*_{q+k,\nu}a_{q,\nu}a_{p,\sigma}e^{tB}dt\nonumber\\
    &\quad\quad\quad\quad\times\gamma^-(q)
    \chi_{p-k,q+k,p\in B_F}\chi_{q\notin B_F}+h.c.\nonumber\\
   &\mathcal{E}_{\mathcal{V}_{3,l},212}=\int_{0}^{1}\int_{0}^{1-t}
   \sum_{\substack{\sigma,\nu\\k,p,q}}
    \sum_{\substack{\tau,\varpi\\l,r,s}}\Big(\sum_{m}\frac{\hat{v}_{k-m}}{L^3}\eta_m
    \phi^+(m)\Big)
    \eta_l\delta_{q,r-l}\delta_{\nu,\tau}\nonumber\\
    &\quad\quad\times
    e^{-tB}a^*_{p-k,\sigma}a^*_{q+k,\nu}
    e^{-sB}a^*_{s+l,\varpi}a_{s,\varpi}a_{r,\tau}
    a_{p,\sigma}e^{(s+t)B}dsdt\nonumber\\
    &\quad\quad\times\gamma^-(q)\phi^+(l)
    \chi_{p-k,q+k,p\in B_F}\chi_{q\notin B_F}
    \chi_{r-l,s+l\notin B_F}\chi_{r,s\in B_F}+h.c.\nonumber\\
   & \mathcal{E}_{\mathcal{V}_{3,l},213}=\int_{0}^{1}\int_{0}^{1-t}
   \sum_{\substack{\sigma,\nu\\k,p,q}}
    \sum_{\substack{\tau,\varpi\\l,r,s}}\Big(\sum_{m}\frac{\hat{v}_{k-m}}{L^3}\eta_m
    \phi^+(m)\Big)
    \eta_l\delta_{p,r}\delta_{\sigma,\tau}\nonumber\\
    &\quad\quad\times
    e^{-tB}a^*_{p-k,\sigma}a^*_{q+k,\nu}e^{-sB}a^*_{s,\varpi}a_{q,\nu}a_{s+l,\varpi}
    a_{r-l,\tau}e^{(s+t)B}dsdt\nonumber\\
    &\quad\quad\times\gamma^-(q)\phi^+(l)
     \chi_{p-k,q+k,p\in B_F}\chi_{q\notin B_F}
    \chi_{r-l,s+l\notin B_F}\chi_{r,s\in B_F}+h.c.\label{E_V_3,211 split}
  \end{align}

  \par For $\mathcal{E}_{\mathcal{V}_{3,l},211}$, we have
\begin{equation}\label{E_V_3,l,211}
  \begin{aligned}
  \mathcal{E}_{\mathcal{V}_{3,l},211}=&
  \sum_{\sigma,\nu}\int_{0}^{1}\int_{\Lambda_L^2}v(x-y)L^{\frac{3}{2}}\eta_{\phi^+}(x-y)\\
  &\quad\quad\times e^{-tB}a(g_{x,\sigma})a^*(g_{y,\nu})a^*(g_{x,\sigma})a(\mathfrak{h}
  ^{\gamma^-}_{y,\nu})e^{tB}dxdydt+h.c.
  \end{aligned}
\end{equation}
Similar to the proof of Lemma \ref{lemma V_1}, we have
\begin{equation}\label{E_V_3,l,211 bound}
  \pm \mathcal{E}_{\mathcal{V}_{3,l},211}
  \leq C\tilde{\rho}_0\mathcal{N}_i[\delta_1]
  +C\tilde{\rho}_0^{1-\delta_2}\mathcal{N}_h[\delta_1]
  +C\tilde{\rho}_0^{1+\delta_2}\tilde{\mathcal{N}}_{re}
  +C\tilde{\rho}_0^{1+\delta_2}{\mathcal{N}}_{re}
  +C\tilde{\rho}_0^{\frac{8}{3}+\alpha_3-\delta_2}L^3,
\end{equation}
for $\frac{1}{3}>\delta_1\geq\delta_2>0$.

\par For $\mathcal{E}_{\mathcal{V}_{3,l},212}$, we have
\begin{equation}\label{E_V_3,l 212}
  \begin{aligned}
  \mathcal{E}_{\mathcal{V}_{3,l},212}=&\sum_{\sigma,\nu,\varpi}
  \int_{0}^{1}\int_{0}^{1-t}\int_{\Lambda_L^3}
  v(x_1-x_3)L^{\frac{3}{2}}\eta_{\phi^+}(x_1-x_3)L^{\frac{3}{2}}
  F_{\mathcal{V}_{3}}(x_2-x_3)\\
  &\quad\quad\times
  e^{-tB}a^*(g_{x_1,\sigma})a^*(g_{x_3,\nu})e^{-sB}b^*_{x_2,\varpi}(\eta_{\phi^+})
  \\
  &\quad\quad\times a(g_{x_2,\nu})a(g_{x_1,\sigma})dx_1dx_2dx_3dsdt+h.c.
  \end{aligned}
\end{equation}
We bound it directly by
\begin{equation}\label{E_V_3,l 212 bound}
  \pm\mathcal{E}_{\mathcal{V}_{3,l},212}\leq C\tilde{\rho}_0^{\frac{7}{3}+\frac{\alpha_2}
  {2}}L^3.
\end{equation}

\par For $\mathcal{E}_{\mathcal{V}_{3,l},213}$, we rewrite it by
\begin{equation}\label{E_V_3,l 213}
  \begin{aligned}
  \mathcal{E}_{\mathcal{V}_{3,l},213}=&-\sum_{\sigma,\nu,\varpi}
  \int_{0}^{1}\int_{0}^{1-t}\int_{\Lambda_L^3}
  v(x_1-x_3)L^{\frac{3}{2}}\eta_{\phi^+}(x_1-x_3)L^{\frac{3}{2}}
  \\
  &\quad\quad\times\Big(\frac{1}{L^3}\sum_{q_3\in B_F}e^{iq_3(x_2-x_1)}\Big)
  e^{-tB}a^*(g_{x_1,\sigma})a^*(g_{x_3,\nu})e^{-sB}a(\mathfrak{h}_{x_3,\nu}^{\gamma^-})\\
  &\quad\quad\times
  b_{x_2,\varpi}(\eta_{\phi^+})a(h_{x_2,\sigma})e^{(s-t)B}dx_1dx_2dx_3dsdt+h.c.\\
  =&-\frac{1}{L^3}\sum_{\sigma,\nu,\varpi}\sum_{q_3\in B_F}
  \int_{0}^{1}\int_{0}^{1-t}\mathcal{A}_{q_3,\sigma,\nu}(t,s)
  \mathcal{B}_{q_3,\sigma,\varpi}(t,s)dsdt+h.c.
  \end{aligned}
\end{equation}
where
\begin{equation}\label{E_V_3,l 213 AB}
  \begin{aligned}
  \mathcal{A}_{q_3,\sigma,\nu}(t,s)&=
  \int_{\Lambda_L^2}e^{-iq_3x_1}
  v(x_1-x_3)L^{\frac{3}{2}}\eta_{\phi^+}(x_1-x_3)\\
  &\quad\quad\quad\quad\quad\quad\times
   e^{-tB}a^*(g_{x_1,\sigma})a^*(g_{x_3,\nu})e^{-sB}a(\mathfrak{h}_{x_3,\nu}^{\gamma^-})dx_1dx_3\\
  \mathcal{B}_{q_3,\sigma,\varpi}(t,s)&=
   \int_{\Lambda_L}e^{iq_3x_2}L^{\frac{3}{2}}
   b_{x_2,\varpi}(\eta_{\phi^+})a(h_{x_2,\sigma})e^{(s-t)B}dx_2
  \end{aligned}
\end{equation}
Similar to the estimate of (\ref{typical example E_V_3,l,12}), we deduce
\begin{equation}\label{E_V_3,l 213 bound}
  \pm\mathcal{E}_{\mathcal{V}_{3,l},213}\leq C\tilde{\rho}_0^{\frac{4}{3}+\alpha_2-\alpha_4-3\delta_3}\mathcal{N}_{re}
  +C\big(\tilde{\rho}_0^{\frac{7}{3}+\alpha_4}
  +\tilde{\rho}_0^{3+\alpha_3+\alpha_2-\alpha_4-3\delta_3}\big)L^3.
\end{equation}
\par Combining all the above estimates we close the proof of Lemma \ref{lemma qua conj V_3}.
\end{proof}

\par The next two lemmas compute the energy contribution of the quadratic terms.
\begin{lemma}\label{lemma qua V_21' Omega}
For any $N\in\mathbb{N}_{\geq0}$,
\begin{equation}\label{qua V_21' Omega}
  \begin{aligned}
  &\int_{0}^{1}\int_{0}^{t}e^{-sB}[\mathcal{V}_{21}^\prime+\Omega,B]e^{sB}dsdt
  \\
  &=\frac{1}{L^3}\sum_{k,p,q,\sigma,\nu}W_k\eta_k\phi^+(k)\zeta^-(k)
  \chi_{p-k,q+k\notin B_F}\chi_{p,q\in B_F}\\
  &-\frac{1}{L^3}\sum_{k,p,q,\sigma}W_k\eta_{k+q-p}\zeta^-(k)
  \phi^+(k+q-p)\chi_{p-k,q+k\notin B_F}\chi_{p,q\in B_F}\chi_{p\neq q}\\
  &-\sum_{k,p,q,\sigma}\eta_k\eta_{k+q-p}\phi^+(k)\zeta^-(k)
  \phi^+(k+q-p)k(q-p)
  \chi_{p-k,q+k\notin B_F}\chi_{p,q\in B_F}\\
  &+\mathcal{E}_{(\mathcal{V}_{21}^\prime+\Omega)},
  \end{aligned}
\end{equation}
where
\begin{equation}\label{qua E V_21'Omega}
  \begin{aligned}
  \pm\mathcal{E}_{(\mathcal{V}_{21}^\prime+\Omega)}\leq&
  C\big(\tilde{\rho}_0^{\frac{4}{3}-\alpha_3}
  +\tilde{\rho}_0^{\frac{3}{2}+\frac{\alpha_2}{2}+4\alpha_3-\frac{3}{2}\alpha_5}\big)
  \mathcal{N}_{re}+C\tilde{\rho}_0^{\frac{4}{3}-\alpha_3}\tilde{\mathcal{N}}_{re}\\
  &+\tilde{\rho}_0^{\frac{5}{6}+\frac{\alpha_2}{2}-\frac{3}{2}\alpha_5}
  \mathcal{N}_h[-\alpha_2]+C\tilde{\rho}_0^{\frac{5}{2}+\frac{\alpha_2}{2}+\alpha_3
  -\frac{3}{2}\alpha_5}L^3.
  \end{aligned}
\end{equation}

for $\frac{1}{24}>\frac{1}{4}\alpha_3>\alpha_2>0$, $\beta_1=\frac{1}{3}+\alpha_5$ and $2\alpha_3>\alpha_5>2\alpha_4$.
\end{lemma}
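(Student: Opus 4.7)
The plan is to compute $[\mathcal{V}_{21}^\prime+\Omega,B]$ explicitly via the canonical anticommutation relations, extract the purely scalar (fully contracted) contributions which yield the three main terms in (\ref{qua V_21' Omega}), and then show that all remaining operator-valued contributions lie in $\mathcal{E}_{(\mathcal{V}_{21}^\prime+\Omega)}$ after the Duhamel double integral and conjugation by $e^{sB}$. Since $B=\frac{1}{2}(A-A^*)$ and both $\mathcal{V}_{21}^\prime$ and $\Omega$ are self-adjoint quartic expressions of the type $a^*a^*aa+h.c.$, each commutator naturally decomposes into (i) sextic operator terms arising when no fermionic contraction occurs, (ii) quartic operator terms arising from a single $\{a,a^*\}=\delta$ contraction, and (iii) the constant terms arising from the two double-contraction channels.

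For the double-contraction channels, I would first match the annihilation pair $a_{q,\nu}a_{p,\sigma}$ in (the non-h.c.\ part of) $\mathcal{V}_{21}^\prime$ with the creation pair $a^*_{p-k,\sigma}a^*_{q+k,\nu}$ sitting inside $A^*$: the direct pairing $(q,\nu)\leftrightarrow(q+k,\nu)$, $(p,\sigma)\leftrightarrow(p-k,\sigma)$ produces $W_k\eta_k\phi^+(k)\zeta^-(k)\chi_{p-k,q+k\notin B_F}\chi_{p,q\in B_F}/L^3$, while the exchange pairing (allowed only when $\sigma=\nu$ and $p\neq q$, enforced by Pauli) produces the momentum-shifted term $-W_k\eta_{k+q-p}\zeta^-(k)\phi^+(k+q-p)/L^3$; repeating this with the h.c.\ half just doubles the result into the form stated. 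The same two pairings applied to $[\Omega,B]$ generate the third main term $-\eta_k\eta_{k+q-p}\phi^+(k)\zeta^-(k)\phi^+(k+q-p)k(q-p)$, where the kinetic factor $k(q-p)$ simply rides along from $\Omega$. Because these scalars commute with $B$, the Duhamel integral $\int_0^1\int_0^t ds\,dt=\tfrac12$ is absorbed by the combinatorial factors coming from the symmetry of the four double contractions (direct/exchange $\times$ h.c./non-h.c.), producing exactly the coefficients written in (\ref{qua V_21' Omega}).

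The quartic and sextic remainder operators then need to be controlled after conjugation by $e^{\pm sB}$. For the sextic pieces (no contraction), I would express them in position space as integrals of products $a^*(h)a^*(h)a^*(h)\,b(\cdot)\,a(g)a(g)a(g)$ against either $W_{\zeta^-}*\eta_{\phi^+}$ or $\nabla\eta_{\phi^+}^{\zeta^-}*\eta_{\phi^+}$, bound each $a^*(h)a(g)$ pair by an $\mathcal{N}_{re}^{1/2}\tilde{\mathcal{N}}_{re}^{1/2}$-type factor as in Lemma \ref{b^* bound by b general}, and use the $L^2$ estimates (\ref{special W L2 norm}), (\ref{special eta L2 norm}) and Lemma \ref{eta sum lemma} to absorb the large-momentum decay; the cutoff $\zeta^-$ in $\mathcal{V}_{21}^\prime$ and the frequency restriction $\phi^+$ in $B$ are what generate the exponents $\tfrac{3}{2}+\tfrac{\alpha_2}{2}+4\alpha_3-\tfrac{3}{2}\alpha_5$ and $\tfrac{5}{2}+\tfrac{\alpha_2}{2}+\alpha_3-\tfrac{3}{2}\alpha_5$ in (\ref{qua E V_21'Omega}). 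For the quartic pieces (one contraction), one of the four operators becomes a momentum-space multiplier; the worst subterm is the one where the contraction transfers a factor of $\eta_k W_k$, which I estimate using Lemma \ref{est xi/ p^2 lemma}-style bounds, giving the $\tilde{\rho}_0^{\frac{4}{3}-\alpha_3}(\mathcal{N}_{re}+\tilde{\mathcal{N}}_{re})$ contribution; the $\Omega$ counterpart yields the $\mathcal{N}_h[-\alpha_2]$ piece via the same split $\mathbf{h}=L+H$ used in (\ref{Omega_m,l 2}). Finally, conjugation by $e^{sB}$ for $|s|\leq 1$ is absorbed using Lemma \ref{lemma qua control N_re} and (\ref{qua control K}), which at worst multiply the bounds by $O(1)$ while adding $O(\tilde{\rho}_0^{5/3+\alpha_3}L^3)$ terms already dominated by the stated error.

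The main obstacle is the bookkeeping of the exchange contraction: the cutoff $\phi^+$ in $B$ acts on $k$ while the resulting constant depends on $\eta_{k+q-p}\phi^+(k+q-p)$, so on the exchange channel the infrared suppression is not automatic and one must enforce $p\neq q$ and exploit that $\zeta^-(k)$ forces $|k|\leq 2\tilde{\mu}^{1/2}\tilde{\rho}_0^{-\beta_1}$ before $|k+q-p|$ can be large; keeping the cutoff configuration in (\ref{qua V_21' Omega}) exact (rather than symmetrized) while still obtaining the subtle $\tfrac{5}{2}+\tfrac{\alpha_2}{2}+\alpha_3-\tfrac{3}{2}\alpha_5$ and $\tfrac{4}{3}-\alpha_3$ error exponents is the delicate part, since these exponents saturate the available room from the choices of $\alpha_2,\alpha_3,\alpha_5,\beta_1$ fixed in Proposition \ref{qua prop}.
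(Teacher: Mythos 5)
Your structural accounting of the commutator is off by one level, and this hides the mechanism that actually produces the main terms. For two quartic monomials of the type appearing here, one CAR contraction leaves a sextic operator and two contractions leave a quartic operator; no scalar is produced directly, because in the only nonvanishing channels (the non-h.c.\ part of $\mathcal{V}_{21}^\prime$ or $\Omega$ against $A^*$, and the h.c.\ part against $A$) the contractible legs are exactly the two momenta outside $B_F$, while the inside-$B_F$ annihilators of $\mathcal{V}_{21}^\prime$ and the inside-$B_F$ creators of $A^*$ sit in the wrong order for any further contraction. (Note also that the creators $a^*_{p-k,\sigma}a^*_{q+k,\nu}$ you propose to pair with live in $A$, not $A^*$; pairing them with $a_{q,\nu}a_{p,\sigma}$ forces $k=0$, which $\phi^+$ kills.) After contracting the two outside legs one is left with a quartic operator all of whose momenta lie in $B_F$ (the paper's $\Xi_1$ and its $\Omega$-analogue), and the constants in (\ref{qua V_21' Omega}) appear only after normal-ordering this quartic against the filled Fermi ball, i.e.\ using $a^*_{p,\sigma}a_{p,\sigma}=1-a_{p,\sigma}a^*_{p,\sigma}$ and summing over $B_F$. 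That very step is the source of the leading error $C\tilde{\rho}_0^{\frac{4}{3}-\alpha_3}\tilde{\mathcal{N}}_{re}$ in (\ref{qua E V_21'Omega}); your proposal instead attributes this error to ``one-contraction quartic pieces carrying a factor $\eta_kW_k$'', which do not exist at that operator order, so the step that produces both the constants and their accompanying error is missing from your argument.

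A second gap concerns the direct channel of $[\Omega,B]$. Your pairing scheme would also generate a term proportional to $\sum\eta_k^2\,k(q-p)\,\phi^+(k)\zeta^-(k)\chi_{p-k,q+k\notin B_F}\chi_{p,q\in B_F}$, which is absent from (\ref{qua V_21' Omega}); it vanishes identically, because $\phi^+$ makes the constraint $\chi_{p-k,q+k\notin B_F}$ redundant and the remaining sum is odd in $k(q-p)$ (this is exactly (\ref{cutoff effect})). Without observing this cancellation your list of main terms is incorrect. The rest of your plan is broadly consistent with the paper: the sextic pieces are indeed treated in position space with the split $\mathbf{h}_{x,\sigma}=L_{x,\sigma}[-\alpha_2]+(\mathbf{h}_{x,\sigma}-L_{x,\sigma}[-\alpha_2])$, giving the $\tilde{\rho}_0^{\frac{5}{6}+\frac{\alpha_2}{2}-\frac{3}{2}\alpha_5}\mathcal{N}_h[-\alpha_2]$ contribution, and the conjugation by $e^{sB}$ is absorbed through Lemma \ref{lemma qua control N_re} and (\ref{qua control N_h}), which is precisely how the exponents $\frac{3}{2}+\frac{\alpha_2}{2}+4\alpha_3-\frac{3}{2}\alpha_5$ and $\frac{5}{2}+\frac{\alpha_2}{2}+\alpha_3-\frac{3}{2}\alpha_5$ arise; but the two points above are genuine gaps that need to be repaired before the proof is complete.
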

\begin{proof}
  \par For $[\mathcal{V}_{21}^\prime,B]$, we have
   \begin{equation*}
    [\mathcal{V}_{21}^\prime,B]=\Xi_{1}+\Xi_2+\Xi_3+\Xi_4
  \end{equation*}
  with
  \begin{align}
   &\Xi_1=\sum_{\substack{\sigma,\nu\\k,p,q}}\sum_{l,r,s}\frac{W_k}{L^3}\eta_l
   \zeta^-(k)\phi^+(l)
    \delta_{p-k,r-l}\delta_{q+k,s+l}a^*_{p,\sigma}a^*_{q,\nu}a_{s,\nu}a_{r,\sigma}
    \nonumber\\
    &\quad\quad\quad\quad\times
    \chi_{p-k,q+k,r-l,s+l\notin B_F}\chi_{p,q,r,s\in B_F}+h.c.\nonumber\\
    &\Xi_2=\sum_{\substack{\sigma,\nu\\k,p,q}}\sum_{\tau,l,r,s}2\frac{W_k}{L^3}\eta_l
    \zeta^-(k)\phi^+(l)
    \delta_{p-k,s+l}a^*_{p,\sigma}a^*_{q,\nu}a^*_{r-l,\tau}
    a_{q+k,\nu}a_{s,\sigma}a_{r,\tau}\nonumber\\
    &\quad\quad\quad\quad\times
    \chi_{p-k,q+k,r-l,s+l\notin B_F}\chi_{p,q,r,s\in B_F}+h.c.\nonumber\\
    &\Xi_3=-\sum_{\substack{\sigma,\nu\\k,p,q}}\sum_{\varpi,l,r,s}2\frac{W_k}{L^3}\eta_l
    \zeta^-(k)\phi^+(l)
    \delta_{p,r}a^*_{r-l,\sigma}a^*_{s+l,\varpi}a_{s,\varpi}a^*_{q,\nu}
    a_{q+k,\nu}a_{p-k,\sigma}\nonumber\\
    &\quad\quad\quad\quad\times
    \chi_{p-k,q+k,r-l,s+l\notin B_F}\chi_{p,q,r,s\in B_F}+h.c.\nonumber\\
    &\Xi_4=\sum_{\substack{\sigma,\nu\\k,p,q}}\sum_{l,r,s}\frac{W_k}{L^3}\eta_l
    \zeta^-(k)\phi^+(l)
    \delta_{p,r}\delta_{q,s}a^*_{r-l,\sigma}a^*_{s+l,\nu}
    a_{q+k,\nu}a_{p-k,\sigma}\nonumber\\
    &\quad\quad\quad\quad\times
    \chi_{p-k,q+k,r-l,s+l\notin B_F}\chi_{p,q,r,s\in B_F}+h.c.
    \label{[V_21' B] XI}
  \end{align}

\par For $\Xi_1$, we can calculate it in analog to the analysis to the corresponding $\Xi_1$ in \cite[Lemma 7.8]{WJH}. In the thermodynamic limit, we need to use the bound of $W$ and $\eta$ given in Section \ref{scattering eqn sec}, and additionally make use of Lemma \ref{cutoff lemma}, in place of the estimates given in \cite[Section 4]{WJH}. We also point out that the cut-off $\phi^+$ used in the definition of $A$ in (\ref{define A}) ensures, for example, for $r,r-l\in B_F$, hence $\phi^+(l)=0$, Such localizaitons helps us to avoid the $\mathcal{N}_{re}^2$ type estimate, which in the thermodynamic limit has the order $L^6\gg L^3$. We the present the result directly and omit further calculations.
\begin{equation}\label{Xi_1}
  \begin{aligned}
  \Xi_1=&\frac{2}{L^3}\sum_{k,p,q,\sigma,\nu}W_k\eta_k\phi^+(k)\zeta^-(k)
  \chi_{p-k,q+k\notin B_F}\chi_{p,q\in B_F}\\
  &-\frac{2}{L^3}\sum_{k,p,q,\sigma}W_k\eta_{k+q-p}\zeta^-(k)
  \phi^+(k+q-p)\chi_{p-k,q+k\notin B_F}\chi_{p,q\in B_F}\chi_{p\neq q}\\
  &+\mathcal{E}_{\Xi_1},
  \end{aligned}
\end{equation}
with
\begin{equation}\label{E_Xi_1}
  \pm\mathcal{E}_{\Xi_1}\leq C\tilde{\rho}_0^{\frac{4}{3}-\alpha_3}\tilde{\mathcal{N}}_{re}.
\end{equation}

  \par For $\Xi_2$, since $p-k=s+l\in B_F$, then $\phi^+(l)=0$, therefore,
  \begin{equation}\label{Xi_2}
    \begin{aligned}
    \Xi_2=2\sum_{\sigma,\nu,\tau}\int_{\Lambda_L^3}W^{\zeta^-}(x_1-x_3)
    &L^{\frac{3}{2}}\eta_{\phi^+}(x_2-x_1)
    a^*(h_{x_2,\tau})a^*(g_{x_1,\sigma})a^*(g_{x_3,\nu})\\
    &\times a(g_{x_1,\sigma})a(g_{x_2,\tau})
    a(h_{x_3,\nu})dx_1dx_2dx_3+h.c.
    \end{aligned}
  \end{equation}
  It is straight-forward to bound
  \begin{equation}\label{XI_2 bound}
    \pm\Xi_2\leq C\tilde{\rho}_0^{\frac{4}{3}+2\alpha_3}\mathcal{N}_{re}.
  \end{equation}

  \par For $\Xi_3$, notice that $\phi^+(l)>0$ implies $r-l,s+l\in P_{F,-\alpha_2}$, while $\zeta^-(k)>0$ implies $\vert p-k\vert,\vert q+k\vert\leq 3\tilde{\mu}^{\frac{1}{2}}\tilde{\rho}_0^{-\beta_1}$. So, we can write $\Xi_3$ as
  \begin{equation}\label{Xi_3}
    \Xi_3=-\frac{2}{L^3}\sum_{\sigma,\nu,\varpi}\sum_{q_1\in B_F}\mathcal{A}_{q_1,\sigma,\varpi}
    \mathcal{B}_{q_1,\sigma,\nu}+h.c.,
  \end{equation}
  where
  \begin{equation*}
    \begin{aligned}
    \mathcal{A}_{q_1,\sigma,\varpi}=&\int_{\Lambda_L}e^{-iq_1x_2}L^{\frac{3}{2}}
    a^*(H_{x_2,\sigma}[-\alpha_2])b^*_{x_2,\varpi}(\eta_{\phi^+})dx_2\\
    \mathcal{B}_{q_1,\sigma,\nu}=&\int_{\Lambda_L^2}e^{iq_1x_1}W^{\zeta^-}(x_1-x_3)
    a^*(g_{x_3,\nu})a(\mathbf{h}_{x_3,\nu})a(\mathbf{h}_{x_1,\sigma})dx_1dx_3
    \end{aligned}
  \end{equation*}
  and we have temporarily defined
  \begin{equation*}
  \mathbf{h}_{x,\sigma}(z)=\sum_{k_F<\vert k\vert\leq 3\tilde{\mu}^{\frac{1}{2}}\tilde{\rho}
  _0^{-\beta_1}}\frac{e^{ikx}}{L^{\frac{3}{2}}}f_{k,\sigma}(z).
\end{equation*}
Using $\mathbf{h}_{x_1,\sigma}=L_{x_1,\sigma}[-\alpha_2]
+(\mathbf{h}_{x_1,\sigma}-L_{x_1,\sigma}[-\alpha_2])$, we can split $\Xi_3=\Xi_{31}+\Xi_{32}$. Similar to the bound of (\ref{typical example E_V_3,l,12}), we have
\begin{equation}\label{Xi_31 and 32 bound}
  \begin{aligned}
  \pm\Xi_{31}\leq& C\tilde{\rho}_0^{\frac{11}{6}+\frac{3}{2}\alpha_5-\frac{5}{2}
    \alpha_2}\mathcal{N}_{re}+C\tilde{\rho}_0^{\frac{5}{6}+\frac{\alpha_2}{2}-\frac{3}{2}
    \alpha_5}\mathcal{N}_h[-\alpha_2]\\
    \pm\Xi_{32}\leq& C\tilde{\rho}_0^{\frac{5}{6}+\frac{\alpha_2}{2}-\frac{3}{2}
    \alpha_5}\mathcal{N}_h[-\alpha_2]
  \end{aligned}
\end{equation}
and therefore,
  \begin{equation}\label{XI_3 bound}
    \pm\Xi_3\leq C\tilde{\rho}_0^{\frac{11}{6}+\frac{3}{2}\alpha_5-\frac{5}{2}
    \alpha_2}\mathcal{N}_{re}+C\tilde{\rho}_0^{\frac{5}{6}+\frac{\alpha_2}{2}-\frac{3}{2}
    \alpha_5}\mathcal{N}_h[-\alpha_2].
  \end{equation}

  \par For $\Xi_4$, we combine the proof of (\ref{ineq b 2}) with the bound of $\Xi_3$ in (\ref{Xi_3}), i.e. we use
  \begin{equation*}
    1=\frac{1}{\vert s+l\vert^2}\big(\vert s\vert^2+2s\cdot l+\vert l\vert^2\big).
  \end{equation*}
  Referring to the proof of (\ref{ineq b 2}) for details, using the method in bounding $\Xi_3$, we can bound $\Xi_4$ similarly
  \begin{equation}\label{XI_4 bound}
    \pm\Xi_4\leq C\tilde{\rho}_0^{\frac{11}{6}+\frac{3}{2}\alpha_5-\frac{5}{2}
    \alpha_2}\mathcal{N}_{re}+C\tilde{\rho}_0^{\frac{5}{6}+\frac{\alpha_2}{2}-\frac{3}{2}
    \alpha_5}\mathcal{N}_h[-\alpha_2].
  \end{equation}

  \par For $[\Omega,B]$, we also have
  \begin{equation*}
    [\Omega,B]=\sum_{j=1}^{4}\tilde{\Xi}_j,
  \end{equation*}
  with each $\tilde{\Xi}_j$ defined by replacing $W_kL^{-3}\zeta^-(k)$ by
  $\eta_kk(q-p)\phi^+(k)\zeta^(k)$. The bounds to $\tilde{\Xi}_j$ for $j=2,3,4$ is totally analogous to the bounds to $\Xi_j$ above. On the other hand, the idea of calculation to $\tilde{\Xi}_1$ can be again found in \cite[Lemma 7.8]{WJH}, as that also have $\Xi_1$ and $\tilde{\Xi}_1$ with the same relation. Notice due to the cut-off in the thermodynamic limit, we have
  \begin{equation}\label{cutoff effect}
  \begin{aligned}
    &\sum_{k,p,q,\sigma,\nu}\eta^2_kk(q-p)\phi^+(k)\zeta^-(k)\chi_{p-k,q+k\notin B_F}
    \chi_{p,q\in B_F}\\
    &=\sum_{k,p,q,\sigma,\nu}\eta^2_kk(q-p)\phi^+(k)\zeta^-(k)
    \chi_{p,q\in B_F}\\
    &=-\sum_{k,p,q,\sigma,\nu}\eta^2_kk(q-p)\phi^+(k)\zeta^-(k)
    \chi_{p,q\in B_F}=0.
  \end{aligned}
  \end{equation}
  \par Combining above analysis we attain the proof of Lemma \ref{lemma qua V_21' Omega}.
\end{proof}

\begin{lemma}\label{lemma qua V_21}
For any $N\in\mathbb{N}_{\geq0}$,
  \begin{equation}\label{qua V_21}
    \begin{aligned}
    \int_{0}^{1}\int_{t}^{1}e^{-sB}[\mathcal{V}_{21},B]e^{sB}dsdt&=
    \frac{1}{2L^3}\sum_{k\in(2\pi/L)\mathbb{Z}^3}\hat{v}_k\eta_k
    \mathbf{q}(\mathbf{q}-1)\bar{N}_0^2\\
    &-\frac{1}{L^3}\sum_{k\in(2\pi/L)\mathbb{Z}^3}\hat{v}_k\eta_k(\mathbf{q}-1)\bar{N}_0
    \tilde{\mathcal{N}}_{re}+\mathcal{E}_{\mathcal{V}_{21}},
    \end{aligned}
  \end{equation}
  where
  \begin{equation}\label{qua E_V_21}
    \begin{aligned}
    \pm\mathcal{E}_{\mathcal{V}_{21}}\leq&
    C\tilde{\rho}_0^{1+\delta_1}\mathcal{N}_{re}+C\tilde{\rho}_0^{1+\delta_1}
    \tilde{\mathcal{N}}_{re}+C\tilde{\rho}_0\mathcal{N}_h[\delta_1]
    +C\tilde{\rho}_0\mathcal{N}_i[\delta_1]\\
    &+C\tilde{\rho}_0^{\frac{1}{3}+\alpha_4}\mathcal{V}_4+C\tilde{\rho}_0^
    {\frac{7}{3}+\alpha_4}L^3,
    \end{aligned}
  \end{equation}
  for $\frac{1}{24}>\frac{1}{4}\alpha_3>\alpha_2>2\alpha_4>0$ and $0<\delta_1<\frac{1}{3}$.
\end{lemma}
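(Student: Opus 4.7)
The strategy parallels that of Lemma \ref{lemma qua V_21' Omega}. Both $\mathcal{V}_{21}$ and the operator $A$ entering $B=(A-A^*)/2$ are quartic operators creating/annihilating pairs of excitations across the Fermi surface. The key structural observation is that, while $[\mathcal{V}_{21},A]$ produces only normal-ordered $a^*a^*a^*a^*aaaa$-type contributions (handled as errors), the piece $[\mathcal{V}_{21},A^*]$ admits a ``double outside'' contraction in which the two outside-creators of $\mathcal{V}_{21}$ pair with the two outside-annihilators of $A^*$, collapsing the $k$-indices via $\delta_{p-k,r-l}\delta_{q+k,s+l}$. This contraction produces the leading scalar and $\tilde{\mathcal{N}}_{re}$ pieces in (\ref{qua V_21}); the remaining partial contractions and the un-contracted residue furnish the error $\mathcal{E}_{\mathcal{V}_{21}}$.

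The plan is as follows. First, expand $[\mathcal{V}_{21},B]=\tfrac12[\mathcal{V}_{21},A]-\tfrac12[\mathcal{V}_{21},A^*]$ using CAR and decompose it as $\sum_{j=1}^4\Xi_j$ in direct analogy with (\ref{[V_21' B] XI}), replacing the factor $W_k\zeta^-(k)/L^3$ by $\hat v_k$. Second, evaluate $\Xi_1$: after the outside contraction the remaining operator is of the form
\begin{equation*}
\Xi_1 = \frac{2}{L^3}\sum_{k,p,q,\sigma,\nu}\hat v_k \eta_k\phi^+(k)\,a^*_{p,\sigma}a^*_{q,\nu}a_{q,\nu}a_{p,\sigma}\,\chi_{p-k,q+k\notin B_F}\chi_{p,q\in B_F} \;+\;\mathcal{E}_{\Xi_1}^\sharp,
\end{equation*}
where $\mathcal{E}_{\Xi_1}^\sharp$ collects the cross terms produced by the Pauli-forbidden coincidences (treated verbatim as in Lemma \ref{lemma V_0}). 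Using the identity $a^*_{p,\sigma}a^*_{q,\nu}a_{q,\nu}a_{p,\sigma}=(\bar N_0-\tilde{\mathcal{N}}_{re,\sigma})(\bar N_0-\delta_{\sigma,\nu}-\tilde{\mathcal{N}}_{re,\nu})$ on $p,q\in B_F$, summing in spin, and removing the cutoff $\phi^+$ by replacing it with $1-\phi^-(k)$, one extracts the stated scalar $\frac{1}{L^3}\sum_k \hat v_k\eta_k\,\mathbf{q}(\mathbf q-1)\bar N_0^2$ together with the linear piece $-\frac{2}{L^3}\sum_k \hat v_k\eta_k(\mathbf q-1)\bar N_0\,\tilde{\mathcal{N}}_{re}$. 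The $\phi^-$-residue is controlled by Lemma \ref{eta sum lemma} ($\sum_k|\eta_k|\le C$) together with $|\hat v_k|\le C$.

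Third, the subleading terms $\Xi_2,\Xi_3,\Xi_4$ have exactly the same operator structure as the corresponding $\Xi_j$ in the proof of Lemma \ref{lemma qua V_21' Omega} (one $b^*$-factor paired with an $a^*a$-factor, resp.\ crossed contractions). We bound them by the same scheme: rewrite each as an $x$-space integral involving $\eta_{\phi^+}$ and $v$, apply Lemma \ref{b^* bound by b} (or Lemma \ref{b^* bound by b general}) and Cauchy--Schwarz with suitable mass-splittings $h=H[\delta_1]+L[\delta_1]$ and $g=S[\delta_1]+I[\delta_1]$, and use the kernel bounds on $\hat v$ (boundedness of $v$) in place of those on $W$. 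Finally, conjugate by $e^{sB}$ using the transfer inequalities of Lemmas \ref{lemma qua control N_re} and \ref{lemma qua control V_4} to absorb any $e^{-sB}\cdots e^{sB}$-cost into the same families of errors, and integrate in $(s,t)$ over $\int_0^1\int_t^1 ds\,dt=\tfrac12$, which combines with the factor of $2$ from the leading double-contraction of $\Xi_1$ to yield the prefactor $\tfrac{1}{2L^3}$ on the scalar main term and the prefactor $-\tfrac{1}{L^3}$ on the $\tilde{\mathcal{N}}_{re}$ term.

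The main obstacle is the clean extraction of the two main pieces from $\Xi_1$ with the correct combinatorial factor $\mathbf q(\mathbf q-1)$ (rather than $\mathbf q^2$), which arises from the $\delta_{\sigma,\nu}$ subtraction dictated by fermionic antisymmetry, together with the removal of the $\phi^+$-cutoff without incurring a logarithmic or $L$-divergent cost. The latter is only possible because (unlike in Lemma \ref{lemma qua V_21' Omega}) the coefficient here is $\hat v_k$ rather than $W_k\zeta^-(k)$; to compensate for the missing $\zeta^-$ we must rely on the absolute summability $\sum_k|\hat v_k\eta_k|\le C$ provided by Lemma \ref{eta sum lemma} in combination with the smoothness of $v$. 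The contribution of $\mathcal{V}_4$ to the error arises from the six-operator terms generated in the un-contracted part of $\Xi_3$ (as in the proof of Lemma \ref{lemma qua conj V_3}), and it is this sector that dictates the bound $\tilde\rho_0^{1/3+\alpha_4}\mathcal{V}_4$ in (\ref{qua E_V_21}).
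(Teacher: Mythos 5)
Your overall route coincides with the paper's: the same decomposition of $[\mathcal{V}_{21},B]$ into four terms $\hat{\Xi}_1,\dots,\hat{\Xi}_4$ modelled on (\ref{[V_21 B] XI}), the main terms extracted from $\hat{\Xi}_1$, the remaining $\hat{\Xi}_j$ bounded in position space via Lemma \ref{b^* bound by b} (resp.\ the $\hat{\Xi}_4$ trick of writing $1=|s+l|^{-2}(|s|^2+2s\cdot l+|l|^2)$), and the conjugation and time integrals handled with Lemmas \ref{lemma qua control N_re} and \ref{lemma qua control V_4}. The paper delegates the $\hat{\Xi}_1$ computation to \cite[Lemma 7.9]{WJH}; you attempt to sketch it yourself, and this is where there is a genuine gap.

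You claim the factor $\mathbf{q}(\mathbf{q}-1)$ "arises from the $\delta_{\sigma,\nu}$ subtraction" in the identity $\sum_{p,q\in B_F}a^*_{p,\sigma}a^*_{q,\nu}a_{q,\nu}a_{p,\sigma}=(\bar N_0-\tilde{\mathcal{N}}_{re,\sigma})(\bar N_0-\delta_{\sigma,\nu}-\tilde{\mathcal{N}}_{re,\nu})$, with the remaining "cross terms" relegated to the error $\mathcal{E}^\sharp_{\Xi_1}$. This cannot work: summing that $\delta_{\sigma,\nu}$ term over spins only subtracts $\mathbf{q}\bar N_0$, i.e.\ an $O(\tilde\rho_0 L^3)$ quantity, whereas converting $\mathbf{q}^2\bar N_0^2$ into $\mathbf{q}(\mathbf{q}-1)\bar N_0^2$ requires subtracting $\mathbf{q}\bar N_0^2$, which is of the leading size $\tilde\rho_0^2L^3\gg\tilde\rho_0^{\frac73+\alpha_4}L^3$. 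The correct source of this subtraction is the \emph{exchange} (crossed, spin-diagonal) full contraction inside $\hat{\Xi}_1$, i.e.\ the terms with $l=k+q-p$, $\sigma=\nu$, $p\neq q$, whose coefficient is $\hat v_k\eta_{k+q-p}$ rather than $\hat v_k\eta_k$; one must then show that $\eta_{k+q-p}$ can be replaced by $\eta_k$ up to admissible errors (an estimate of the type (\ref{sumeta diff v}), exploiting that $\phi^+$ forces $|k|\gg k_F$ and the summability from Lemma \ref{eta sum lemma}). This is exactly the mechanism visible elsewhere in the paper: in Lemma \ref{lemma V_0} it is the exchange piece $\mathcal{V}_{02}$ (not the $\delta_{\sigma,\nu}$) that produces $\mathbf{q}(\mathbf{q}-1)$, in Lemma \ref{lemma qua V_21' Omega} the analogous exchange term with $W_k\eta_{k+q-p}$ is kept as a \emph{main} term in (\ref{Xi_1}) and survives into $C_{\mathcal{G}_N}$, and in Lemma \ref{lemma cub core} it is the pair $\Upsilon_1,\Upsilon_2$ that combines to give $(\mathbf{q}-1)$. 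Treating these exchange contractions "verbatim as in Lemma \ref{lemma V_0}" but filing them under the error, as you propose, would either lose an order-$\tilde\rho_0^2L^3$ contribution (and the corresponding $\tilde\rho_0\tilde{\mathcal{N}}_{re}$-size correction to the linear piece, which also exceeds the budget $\tilde\rho_0^{1+\delta_1}\tilde{\mathcal{N}}_{re}$) or leave you with the wrong coefficients $\mathbf{q}^2$ and $\mathbf{q}$. A smaller issue: your intermediate prefactors are internally inconsistent (you pass from $\frac{2}{L^3}$ in the displayed $\Xi_1$ to $\frac{1}{L^3}$ for the scalar while keeping $\frac{2}{L^3}$ for the linear piece), so the bookkeeping that is supposed to produce $\frac{1}{2L^3}$ and $-\frac{1}{L^3}$ after the $(s,t)$-integration needs to be redone carefully once the exchange term is included.
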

\begin{proof}
  \par For $[\mathcal{V}_{21},B]$, we have
   \begin{equation*}
    [\mathcal{V}_{21},B]=\hat{\Xi}_{1}+\hat{\Xi}_2+\hat{\Xi}_3+\hat{\Xi}_4
  \end{equation*}
  with
  \begin{align}
   &\hat{\Xi}_1=\sum_{\substack{\sigma,\nu\\k,p,q}}\sum_{l,r,s}\frac{v_k}{2L^3}\eta_l
   \phi^+(l)
    \delta_{p-k,r-l}\delta_{q+k,s+l}a^*_{p,\sigma}a^*_{q,\nu}a_{s,\nu}a_{r,\sigma}
    \nonumber\\
    &\quad\quad\quad\quad\times
    \chi_{p-k,q+k,r-l,s+l\notin B_F}\chi_{p,q,r,s\in B_F}+h.c.\nonumber\\
    &\hat{\Xi}_2=\sum_{\substack{\sigma,\nu\\k,p,q}}\sum_{\tau,l,r,s}\frac{v_k}{L^3}\eta_l
    \phi^+(l)
    \delta_{p-k,s+l}a^*_{p,\sigma}a^*_{q,\nu}a^*_{r-l,\tau}
    a_{q+k,\nu}a_{s,\sigma}a_{r,\tau}\nonumber\\
    &\quad\quad\quad\quad\times
    \chi_{p-k,q+k,r-l,s+l\notin B_F}\chi_{p,q,r,s\in B_F}+h.c.\nonumber\\
    &\hat{\Xi}_3=-\sum_{\substack{\sigma,\nu\\k,p,q}}\sum_{\varpi,l,r,s}\frac{v_k}{L^3}\eta_l
    \phi^+(l)
    \delta_{p,r}a^*_{r-l,\sigma}a^*_{s+l,\varpi}a_{s,\varpi}a^*_{q,\nu}
    a_{q+k,\nu}a_{p-k,\sigma}\nonumber\\
    &\quad\quad\quad\quad\times
    \chi_{p-k,q+k,r-l,s+l\notin B_F}\chi_{p,q,r,s\in B_F}+h.c.\nonumber\\
    &\hat{\Xi}_4=\sum_{\substack{\sigma,\nu\\k,p,q}}\sum_{l,r,s}\frac{v_k}{2L^3}\eta_l
    \phi^+(l)
    \delta_{p,r}\delta_{q,s}a^*_{r-l,\sigma}a^*_{s+l,\nu}
    a_{q+k,\nu}a_{p-k,\sigma}\nonumber\\
    &\quad\quad\quad\quad\times
    \chi_{p-k,q+k,r-l,s+l\notin B_F}\chi_{p,q,r,s\in B_F}+h.c.
    \label{[V_21 B] XI}
  \end{align}

  \par For $\hat{\Xi}_1$, we can follow the calculation of the corresponding $\hat{\Xi}_1$ in \cite[Lemma 7.9]{WJH}, despite here we use estimates in Section \ref{scattering eqn sec}, instead of \cite[Section 4]{WJH}. We thus omit further tedious computation and write out the result
  \begin{equation}\label{Xi_1hat}
    \begin{aligned}
    \hat{\Xi}_1=&\frac{1}{2L^3}\sum_{k\in(2\pi/L)\mathbb{Z}^3}\hat{v}_k\eta_k
    \mathbf{q}(\mathbf{q}-1)\bar{N}_0^2
    -\frac{1}{L^3}\sum_{k\in(2\pi/L)\mathbb{Z}^3}\hat{v}_k\eta_k(\mathbf{q}-1)\bar{N}_0
    \tilde{\mathcal{N}}_{re}+\mathcal{E}_{\hat{\Xi}_1},
    \end{aligned}
  \end{equation}
  with
  \begin{equation}\label{E_hatXi_1}
    \pm\mathcal{E}_{\hat{\Xi}_1}\leq C\tilde{\rho}_0^{1+\delta_1}\tilde{\mathcal{N}}_{re}
    +C\tilde{\rho}_0\mathcal{N}_i[\delta]+C\tilde{\rho}_0^{\frac{7}{3}+2\alpha_3-3\alpha_2}L^3.
  \end{equation}

  \par For $\hat{\Xi}_2$, its bound is similar to (\ref{Xi_2}), due to the existence of cut-off $\phi^+$:
   \begin{equation}\label{hatXi_2}
    \begin{aligned}
    \hat{\Xi}_2=\sum_{\sigma,\nu,\tau}\int_{\Lambda_L^3}v(x_1-x_3)
    &L^{\frac{3}{2}}\eta_{\phi^+}(x_2-x_1)
    a^*(h_{x_2,\tau})a^*(g_{x_1,\sigma})a^*(g_{x_3,\nu})\\
    &\times a(g_{x_1,\sigma})a(g_{x_2,\tau})
    a(h_{x_3,\nu})dx_1dx_2dx_3+h.c.
    \end{aligned}
  \end{equation}
  It is straight-forward to bound
  \begin{equation}\label{XI_2 hat bound}
    \pm\hat{\Xi}_2\leq C\tilde{\rho}_0^{\frac{4}{3}+2\alpha_3}\mathcal{N}_{re}.
  \end{equation}

  \par For $\hat{\Xi}_3$, its bound is similar to (\ref{Xi_3}),
  \begin{equation}\label{hatXi_3}
    \Xi_3=-\frac{1}{L^3}\sum_{\sigma,\nu,\varpi}\sum_{q_1\in B_F}\mathcal{A}_{q_1,\sigma,\varpi}
    \mathcal{B}_{q_1,\sigma,\nu},
  \end{equation}
  where
  \begin{equation*}
    \begin{aligned}
    \mathcal{A}_{q_1,\sigma,\varpi}=&\int_{\Lambda_L}e^{-iq_1x_2}L^{\frac{3}{2}}
    a^*(h_{x_2,\sigma})b^*_{x_2,\varpi}(\eta_{\phi^+})dx_2\\
    \mathcal{B}_{q_1,\sigma,\nu}=&\int_{\Lambda_L^2}e^{iq_1x_1}v(x_1-x_3)
    a^*(g_{x_3,\nu})a({h}_{x_3,\nu})a({h}_{x_1,\sigma})dx_1dx_3
    \end{aligned}
  \end{equation*}
  Similar to (\ref{typical example E_V_3,l,12}), we can use Cauchy-Schwartz inequality to obtain
  \begin{equation}\label{XI_3 hat bound}
    \pm\hat{\Xi}_3\leq C\tilde{\rho}_0^{\frac{4}{3}+\alpha_2-\alpha_4}
    \mathcal{N}_{re}+C\tilde{\rho}_0^{\frac{1}{3}+\alpha_4}\mathcal{V}_4.
  \end{equation}

  \par For $\hat{\Xi}_4$, the idea is similar to (\ref{XI_4 bound}), which means we combine the proof of (\ref{ineq b 2}) with the bound of $\hat{\Xi}_3$. Therefore,
  \begin{equation}\label{XI_4 hat bound}
    \pm\hat{\Xi}_4\leq C\tilde{\rho}_0^{\frac{4}{3}+\alpha_2-\alpha_4}
    \mathcal{N}_{re}+C\tilde{\rho}_0^{\frac{1}{3}+\alpha_4}\mathcal{V}_4.
  \end{equation}
\par Combining the above analysis we finish the proof of Lemma \ref{lemma qua V_21}.
\end{proof}

\par Lemma \ref{lemma qua Gamma} estimates the operator $\Gamma$ defined in (\ref{define Gamma}).
\begin{lemma}\label{lemma qua Gamma}
For any $N\in\mathbb{N}_{\geq0}$,
\begin{equation}\label{qua Gamma}
  \begin{aligned}
  \pm\int_{0}^{1}e^{-tB}\Gamma e^{tB}dt\leq& C\tilde{\rho}_0^{1+\delta_1}\mathcal{N}_{re}
  +C\tilde{\rho}_0\mathcal{N}_h[\delta_1]+C
  \tilde{\rho}_0^{1+\alpha_5-4\alpha_3-\alpha_4}\mathcal{N}_h[-\beta_1]\\
  &+C\big(\tilde{\rho}_0^{\frac{1}{3}+2\alpha_3}+
  \tilde{\rho}_0^{\frac{1}{3}+\alpha_5-\alpha_4}\big)\mathcal{K}_s
  +C\tilde{\rho}_0^{\frac{1}{3}+\alpha_4}\mathcal{V}_4\\
  &+C\big(\tilde{\rho}_0^{\frac{7}{3}+\alpha_4}
  +\tilde{\rho}_0^{2+4\alpha_3-7\alpha_2}
  +\tilde{\rho}_0^{\frac{8}{3}+\alpha_5-3\alpha_3-\alpha_4}\big)L^3,
  \end{aligned}
\end{equation}
for $\frac{1}{24}>\frac{1}{4}\alpha_3>\alpha_2>2\alpha_4>0$, $0<\delta_1<\frac{1}{3}$, $\beta_1=\frac{1}{3}+\alpha_5$ and $2\alpha_3>\alpha_5>2\alpha_4$.
\end{lemma}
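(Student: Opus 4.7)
The plan is to combine the formulas already established for $[\mathcal{K},B]$, $[\mathcal{V}_4,B]$ and the various pieces of $\mathcal{V}_{21}^\prime$ with the $3$D scattering equation \eqref{Wdiscrete asymptotic energy pde on the torus} so that the largest contributions cancel, leaving only terms that have been bounded in earlier lemmas of this section, and then to integrate in $t$ using the conjugation-control lemmas. Concretely, substituting the decomposition $[\mathcal{K},B]=\Omega_m-\Omega_{m,l}+\Omega+\Omega_h$ from Lemma \ref{lemma qua cal [K,B]} and the decomposition $[\mathcal{V}_4,B]=\Theta_m+\Theta_{d,1}+\Theta_{d,2}+\Theta_r$ from the proof of Lemma \ref{lemma qua control V_4} into the definition \eqref{define Gamma} of $\Gamma$, the $\Omega$ cancels and one arrives at
\begin{equation*}
\Gamma=\big(\Omega_m+\Theta_m+\mathcal{V}_{21}-\mathcal{V}_{21}^\prime\big)
-\Omega_{m,l}+\Omega_h+\Theta_{d,1}+\Theta_{d,2}+\Theta_r.
\end{equation*}

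The crux is the first parenthesis. By \eqref{Omega_m} and the formulas for $\Theta_m$, $\mathcal{V}_{21}$ and $\mathcal{V}_{21}^\prime$, all four operators share the restriction $\chi_{p-k,q+k\notin B_F}\chi_{p,q\in B_F}$ on the summation, and the respective coefficients are $\eta_k|k|^2$, $(2L^3)^{-1}\sum_l\hat v_{k-l}\eta_l$, $\hat v_k/(2L^3)$ and $-W_k\zeta^-(k)/L^3$. The scattering identity \eqref{Wdiscrete asymptotic energy pde on the torus} adds the first three to $W_k/L^3$, so using $\zeta^-+\zeta^+=1$ one obtains exactly $\mathcal{V}_{21,h}^\prime$ as defined in \eqref{define V_21'_h}. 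Thus
\begin{equation*}
\Gamma=\mathcal{V}_{21,h}^\prime-\Omega_{m,l}+\Omega_h+\Theta_{d,1}+\Theta_r,
\end{equation*}
since $\Theta_{d,2}=0$ was already shown in the proof of Lemma \ref{lemma qua control V_4}.

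Next I would bound each summand in $\Gamma$ directly. The pieces $\mathcal{V}_{21,h}^\prime$, $\Omega_{m,l}$ and $\Omega_h$ are controlled by \eqref{qua V_21,h'} and \eqref{Omega m,l and Omega h bound} respectively. For $\Theta_{d,1}$ and $\Theta_r$ I would use the parameter-dependent bounds \eqref{Theta d,1 bound} and \eqref{Theta_r bound} and optimize by choosing $\theta=\tilde\rho_0^{1/3+\alpha_4}$, which converts the $\theta\mathcal{V}_4$ term to the stated $\tilde\rho_0^{1/3+\alpha_4}\mathcal{V}_4$ and folds the dual term into either $\tilde\rho_0^{1+\delta_1}\mathcal{N}_{re}$ or $\tilde\rho_0^{2+4\alpha_3-7\alpha_2}L^3$. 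Summing the pointwise bounds produces an expression of the form allowed by \eqref{qua Gamma}, acting on the un-conjugated state.

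The last step is to pass from $\Gamma$ to $\int_0^1 e^{-tB}\Gamma e^{tB}dt$. I would apply $e^{-tB}(\cdot)e^{tB}$ termwise, invoke Lemma \ref{lemma qua control N_re} to control $\mathcal{N}_{re},\mathcal{N}_h[\delta_1],\mathcal{N}_h[-\beta_1]$ (this last with the remark that \eqref{qua control N_h} extends to $\delta\le 0$), Lemma \ref{lemma qua cal [K,B]} to control $\mathcal{K}_s$, and Lemma \ref{lemma qua control V_4} to control $\mathcal{V}_4$, and then integrate in $t$. The new terms generated by these conjugations are consistent with the stated bound: the only one requiring attention is the $\tilde\rho_0^{5/3+\alpha_3}L^3$ produced when $\mathcal{N}_h[-\beta_1]$ is conjugated; multiplied by the coefficient $\tilde\rho_0^{1+\alpha_5-4\alpha_3-\alpha_4}$ it gives exactly the residual $\tilde\rho_0^{8/3+\alpha_5-3\alpha_3-\alpha_4}L^3$ appearing on the right of \eqref{qua Gamma}. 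The remaining side-products from conjugation (for instance $\mathcal{N}_h[-\alpha_2]$ terms coming out of $\mathcal{K}_s$) are absorbed into $\mathcal{N}_h[-\beta_1]$ using the monotonicity $\mathcal{N}_h[-\alpha_2]\le\mathcal{N}_h[-\beta_1]$ valid since $\beta_1>\alpha_2$, and the exponent arithmetic follows from the standing relations $\frac{1}{24}>\frac{1}{4}\alpha_3>\alpha_2>2\alpha_4>0$ and $2\alpha_3>\alpha_5>2\alpha_4$. The main obstacle is precisely this exponent bookkeeping: the identification $\Omega_m+\Theta_m+\mathcal{V}_{21}-\mathcal{V}_{21}^\prime=\mathcal{V}_{21,h}^\prime$ is the only substantive algebraic step, and everything afterwards is a careful matching of the numerous error exponents produced by the conjugation lemmas against those listed in \eqref{qua Gamma}.
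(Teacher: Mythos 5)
Your argument is structurally the paper's own proof: you substitute (\ref{[K,B] detailed}) and the decomposition of $[\mathcal{V}_4,B]$ from the proof of Lemma \ref{lemma qua control V_4} into (\ref{define Gamma}), cancel $\Omega$, and use the scattering relation (\ref{Wdiscrete asymptotic energy pde on the torus}) to identify $\Omega_m+\Theta_m+\mathcal{V}_{21}-\mathcal{V}_{21}^\prime=\mathcal{V}_{21,h}^\prime$, which is precisely the paper's identity (\ref{Gamma cal}) (with $\Theta_{d,2}=0$). The subsequent use of (\ref{qua V_21,h'}), (\ref{Omega m,l and Omega h bound}), (\ref{Theta d,1 bound}) and (\ref{Theta_r bound}) with $\theta\sim\tilde{\rho}_0^{1/3+\alpha_4}$, followed by conjugation via Lemma \ref{lemma qua control N_re}, (\ref{qua control K}) and (\ref{qua control V_4}), is also exactly the paper's route, and your tracking of the term $\tilde{\rho}_0^{1+\alpha_5-4\alpha_3-\alpha_4}\cdot\tilde{\rho}_0^{5/3+\alpha_3}L^3=\tilde{\rho}_0^{8/3+\alpha_5-3\alpha_3-\alpha_4}L^3$ is correct.

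There is, however, one step that fails as written: the monotonicity $\mathcal{N}_h[-\alpha_2]\le\mathcal{N}_h[-\beta_1]$ is reversed. By (\ref{partition 3D space}), for $d<0$ the set $P_{F,d}=\{\vert k\vert>k_F+\tilde{\mu}^{1/2}\tilde{\rho}_0^{d}\}$ shrinks as $d$ decreases, and since $-\beta_1<-\alpha_2$ one has $P_{F,-\beta_1}\subset P_{F,-\alpha_2}$, hence $\mathcal{N}_h[-\beta_1]\le\mathcal{N}_h[-\alpha_2]$, not the other way around. Consequently the side product $C\big(\tilde{\rho}_0^{\frac{1}{3}+2\alpha_3}+\tilde{\rho}_0^{\frac{1}{3}+\alpha_5-\alpha_4}\big)\tilde{\rho}_0^{\frac{2}{3}}\mathcal{N}_h[-\alpha_2]$ produced when (\ref{qua control K}) is applied to the $\mathcal{K}_s$ piece cannot be hidden in the $\mathcal{N}_h[-\beta_1]$ term, nor in $\tilde{\rho}_0^{1+\delta_1}\mathcal{N}_{re}$, since the hypotheses of the lemma do not force $\alpha_5-\alpha_4\ge\delta_1$ (and in the eventual parameter choice $\alpha_5-\alpha_4\ll\delta_1$). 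The correct absorption uses the kinetic gap on $P_{F,-\alpha_2}$: for $k\in P_{F,-\alpha_2}$ one has $\vert k\vert^2-k_F^2\ge c\,\tilde{\mu}\,\tilde{\rho}_0^{-2\alpha_2}$, so $\mathcal{N}_h[-\alpha_2]\le C\tilde{\rho}_0^{-\frac{2}{3}+2\alpha_2}\mathcal{K}_s$, and the offending term becomes $C\big(\tilde{\rho}_0^{\frac{1}{3}+2\alpha_3+2\alpha_2}+\tilde{\rho}_0^{\frac{1}{3}+\alpha_5-\alpha_4+2\alpha_2}\big)\mathcal{K}_s$, which sits inside the stated $\mathcal{K}_s$ coefficient of (\ref{qua Gamma}). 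With that replacement your exponent bookkeeping closes and the proof coincides with the paper's.
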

\begin{proof}
  \par By (\ref{Wdiscrete asymptotic energy pde on the torus}), (\ref{define V_21' and Omega}), (\ref{define Gamma}), Lemma \ref{lemma qua cal [K,B]} and Lemma \ref{lemma qua control V_4} we have
  \begin{equation}\label{Gamma cal}
    \Gamma=\Omega_h-\Omega_{m,l}+\Theta_{d,1}+\Theta_{d,2}+\Theta_r+\mathcal{V}_{21,h}^\prime,
  \end{equation}
  where $\Omega_h$ and $\Omega_{m,l}$ are bounded in (\ref{Omega m,l and Omega h bound}), $\Theta_{d,1}$ is bounded in (\ref{Theta d,1 bound}), $\Theta_{d,2}=0$, and $\mathcal{V}_{21,h}^\prime$ is bounded in (\ref{qua V_21,h'}). Then using Lemma \ref{lemma qua control N_re}, estimates (\ref{qua control K}) and (\ref{qua control V_4}) we obtain (\ref{qua Gamma}).
\end{proof}

\vspace{1em}

\begin{proof}[Proof of Proposition \ref{qua prop}]
  (\ref{define G_N}) together with Lemmas \ref{lemma qua conj V_0 1 22 23}-\ref{lemma qua Gamma} yield Proposition \ref{qua prop}.
\end{proof}

\section{Cubic Renormalization}\label{cub}
\
\par In this section, we analyze the excitation Hamiltonian $\mathcal{J}_N$ defined in (\ref{J_N}), and prove Proposition \ref{cub prop}. We define
\begin{equation}\label{define B'}
  B^\prime=A^\prime-{A^\prime}^*
\end{equation}
where
\begin{equation}\label{define A'}
  A^\prime=\sum_{k,p,q,\sigma,\nu}
  \eta_k\phi^+(k)\zeta^-(k)a^*_{p-k,\sigma}a^*_{q+k,\nu}a_{q,\nu}a_{p,\sigma}\chi_{p-k,q+k\notin B_F}\chi_{q\in A_{F,\delta_4}}
  \chi_{p\in B_F},
\end{equation}
and as defined in (\ref{partition 3D space}),
\begin{equation}\label{A^nu_F,kappa}
  A_{F,\delta_4}=\{k\in(2\pi/L)\mathbb{Z}^3,\,k_F<\vert k\vert\leq k_F+\tilde{\mu}^{\frac{1}{2}}\tilde{\rho}_0^{\delta_4}\}
\end{equation}
for $\frac{1}{3}>\delta_4\geq\delta_1$ and $\delta_4>2\alpha_3+2\alpha_4$.

\par To deal with the length scale $L\to\infty$, we apply the frequency localization. We let
\begin{equation}\label{divide V_4}
  \mathcal{V}_4=\mathcal{V}_{4,4h}+\mathcal{V}_{4,2h}^{(1)}+\mathcal{V}_{4,r},
\end{equation}
with
\begin{equation}\label{divide V_4 detailed}
  \begin{aligned}
  \mathcal{V}_{4,4h}&=\frac{1}{2L^3}\sum_{k,p,q,\sigma,\nu}
  \hat{v}_ka^*_{p-k,\sigma}a^*_{q+k,\nu}a_{q,\nu}a_{p,\sigma}\chi_{p,p-k\in P_{F,\delta_4}}\chi_{q,q+k\in P_{F,\delta_4}}\\
  \mathcal{V}_{4,2h}^{(1)}&=\frac{1}{2L^3}\sum_{k,p,q,\sigma,\nu}
  \hat{v}_ka^*_{p-k,\sigma}a^*_{q+k,\nu}a_{q,\nu}a_{p,\sigma}\chi_{p,p-k\in P_{F,\delta_4}}\chi_{q,q+k\in A_{F,\delta_4}}+h.c.
  \end{aligned}
\end{equation}
where $P_{F,\delta_4}$ is also defined in (\ref{partition 3D space}). We also let
\begin{equation}\label{divide V_3,l}
  \mathcal{V}_{3,l}=\mathcal{V}_{3,L}+\mathcal{V}_{3,R},
\end{equation}
with
\begin{equation}\label{V_3,L}
  \mathcal{V}_{3,L}=\frac{1}{L^3}\sum_{k,p,q,\sigma,\nu}\hat{v}_k
(a_{p-k,\sigma}^*a_{q+k,\nu}^*a_{q,\nu}a_{p,\sigma}+h.c.)
\chi_{p-k,q+k\in P_{F,\delta_4}}\chi_{q\in A_{F,\delta_4}}\chi_{p\in B_F}
\end{equation}
Using (\ref{G_N qua prop}) and Newton-Leibniz formula, we rewrite $\mathcal{J}_N$ by
\begin{equation}\label{define J_N}
  \begin{aligned}
  &\mathcal{J}_N\coloneqq e^{-B^\prime}\mathcal{G}_Ne^{B^\prime}\\
  =&C_{\mathcal{G}_N}+\mathcal{K}
  +\mathcal{V}_{4,4h}+\mathcal{V}_{4,2h}^{(1)}
  +e^{-B^\prime}\big(Q_{\mathcal{G}_N}+\mathcal{V}_{21}^\prime
  +\Omega+\mathcal{V}_{3,R}+\mathcal{V}_{4,r}+
  \mathcal{E}_{\mathcal{G}_N}\big)e^{B^\prime}\\
  &+\int_{0}^{1}e^{-tB\prime}\Gamma^\prime e^{tB^\prime}dt
  +\int_{0}^{1}\int_{t}^{1}e^{-sB^\prime}[\mathcal{V}_{3,L},B^\prime]e^{sB^\prime}dsdt.
  \end{aligned}
\end{equation}
with
\begin{equation}\label{define Gamma'}
  \Gamma^\prime=[\mathcal{K}+\mathcal{V}_{4,4h}
  +\mathcal{V}_{4,2h}^{(1)},B^\prime]+\mathcal{V}_{3,L}.
\end{equation}

\par To prove Proposition \ref{cub prop}, we are going to analyze each term on the right-hand side of (\ref{define J_N}), in Lemmas \ref{lemma cub misc}-\ref{lemma cub Gamma'}. But before we can rigorously calculate each term of $\mathcal{J}_N$, we first need to bound the actions of the cubic renormalization on some special operators, such as $\mathcal{N}_{re}, \mathcal{K}_s$ and $\mathcal{V}_4$. These results are collected respectively in Lemmas \ref{lemma cub control N_re}-\ref{lemma cub control V_4}.

\par Lemma \ref{lemma cub control N_re} controls the action of $e^{B^\prime}$ on several particle number operators.
\begin{lemma}\label{lemma cub control N_re}
For any $N\in\mathbb{N}_{\geq0}$ and $\vert t\vert\leq 1$,
\begin{align}
 e^{-tB^\prime}\mathcal{N}_{re}e^{tB^\prime}&\leq C\mathcal{N}_{re},
 \label{cub control N_re1}\\
 e^{-tB^\prime}\tilde{\mathcal{N}}_{re}e^{tB^\prime}&\leq C\tilde{\mathcal{N}}_{re}+ C\mathcal{N}_{re},
 \label{cub control N_retilde1}\\
 \pm\big(e^{-tB^\prime}\mathcal{N}_{re}e^{tB^\prime}-\mathcal{N}_{re}\big)&\leq
 C\tilde{\rho}_0^{\frac{1}{3}+\frac{\alpha_3}{2}}\mathcal{N}_{re},
 \label{cub control diff N_re}\\
  \pm\big(e^{-tB^\prime}\tilde{\mathcal{N}}_{re}e^{tB^\prime}-\tilde{\mathcal{N}}_{re}\big)&\leq
 C\tilde{\rho}_0^{\frac{1}{3}+\frac{\alpha_3}{2}}\mathcal{N}_{re}.
 \label{cub control diff N_retilde}
\end{align}
Moreover, for $0<\delta\leq\delta_4$,
\begin{equation}\label{cub control N_i[delta]}
  e^{-tB^\prime}\mathcal{N}_i[\delta]e^{tB^\prime}\leq C\mathcal{N}_i[\delta]
  +C\mathcal{N}_h[-\alpha_2]+C\tilde{\rho}_0^{\frac{5}{3}+\delta_4+\alpha_3}L^3,
\end{equation}
and for any $\delta\leq\delta_4$,
\begin{equation}\label{cub control N_h[delta]1}
  e^{-tB^\prime}\mathcal{N}_h[\delta]e^{tB^\prime}\leq C\mathcal{N}_h[\delta]+
  C\tilde{\rho}_0^{\frac{2}{3}+\delta_4+4\alpha_3}
  \mathcal{N}_h[-\alpha_2]+C\tilde{\rho}_0^{\frac{5}{3}+\delta_4+\alpha_3}L^3,
\end{equation}
\end{lemma}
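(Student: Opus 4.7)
\smallskip

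\noindent\textbf{Proof plan for Lemma \ref{lemma cub control N_re}.} The strategy mirrors Lemma \ref{lemma qua control N_re}: for each particle number operator $\mathcal{N}_{*}$, I would compute the commutator $[\mathcal{N}_{*},B']$, bound it by a sum of $\mathcal{N}_{*}$ and auxiliary operators times universal constants plus $\tilde{\rho}_{0}^{\text{power}}L^{3}$ errors, and then close the argument by Gronwall's inequality. The crucial new ingredient compared to the quadratic case is the extra constraint $q\in A_{F,\delta_{4}}$ in the definition of $A'$, which shrinks the effective sum by $\#A_{F,\delta_{4}}/L^{3}\sim\tilde{\rho}_{0}^{1+\delta_{4}}$; this factor is what drives the improved gain $\tilde{\rho}_{0}^{\frac{1}{3}+\frac{\alpha_{3}}{2}}$ in (\ref{cub control diff N_re})--(\ref{cub control diff N_retilde}) compared with $\tilde{\rho}_{0}^{\frac{1}{3}}$ in (\ref{qua control diff N_re})--(\ref{qua control diff N_retilde}).

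\smallskip

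\noindent\textbf{Step 1 (Commutators with $\mathcal{N}_{re}$ and $\tilde{\mathcal{N}}_{re}$).} Counting creations/annihilations of excitations in the monomial defining $A'$: two creations outside $B_{F}$, one annihilation at $q\in A_{F,\delta_{4}}\subset B_{F}^{c}$, and one annihilation at $p\in B_{F}$. Hence $[\mathcal{N}_{re},A']=A'$ and $[\tilde{\mathcal{N}}_{re},A']=A'$ (each increases by one excitation), so
\begin{equation*}
[\mathcal{N}_{re},B']=A'+{A'}^{*},\qquad [\tilde{\mathcal{N}}_{re},B']=A'+{A'}^{*}.
\end{equation*}

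\smallskip

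\noindent\textbf{Step 2 (Bound on $A'$).} I would rewrite
\begin{equation*}
A'=L^{3/2}\sum_{\sigma,\nu}\int_{\Lambda_{L}^{2}}\eta_{\phi^{+}}^{\zeta^{-}}(x-y)\,a^{*}(h_{x,\sigma})a^{*}(h_{y,\nu})a(L_{y,\nu}[\delta_{4}])a(g_{x,\sigma})\,dx\,dy,
\end{equation*}
and introduce a variant $\tilde{b}^{*}_{y,\nu}(\eta_{\phi^{+}}^{\zeta^{-}})$ of the operator in (\ref{define btilde}) with $\mathbf{g}$ replaced by $L[\delta_{4}]$, so $\#M_{g}/L^{3}\sim\tilde{\rho}_{0}^{1+\delta_{4}}$. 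Combining Lemma \ref{b^* bound by b general}, Lemma \ref{cutoff lemma} (which gives $\|\eta_{\phi^{+}}^{\zeta^{-}}\|_{1},\|\eta_{\phi^{+}}^{\zeta^{-}}\|_{2}\lesssim$ the bounds for $\eta$) and Cauchy--Schwarz in the $y$-variable will yield, for all $\theta>0$,
\begin{equation*}
\pm A'\le C\theta\,\mathcal{N}_{re}+C\theta^{-1}\tilde{\rho}_{0}^{\frac{2}{3}+\alpha_{3}}\mathcal{N}_{re}.
\end{equation*}
Choosing $\theta=1$ gives (\ref{cub control N_re1}) and (\ref{cub control N_retilde1}) after Gronwall, while $\theta=\tilde{\rho}_{0}^{\frac{1}{3}+\frac{\alpha_{3}}{2}}$ yields (\ref{cub control diff N_re}) and (\ref{cub control diff N_retilde}).

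\smallskip

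\noindent\textbf{Step 3 (Commutators with $\mathcal{N}_{i}[\delta]$ and $\mathcal{N}_{h}[\delta]$).} Using $[\mathcal{N}_{i}[\delta],a_{p,\sigma}]=-\chi_{p\in\underline{B}_{F,\delta}}a_{p,\sigma}$ and the analogous identity for $\mathcal{N}_{h}[\delta]$, I would express
\begin{equation*}
[\mathcal{N}_{h}[\delta],A']=\mathrm{I}_{p-k}+\mathrm{I}_{q+k},\qquad [\mathcal{N}_{i}[\delta],A']=\mathrm{II}_{p},
\end{equation*}
where $\mathrm{I}_{p-k}$, $\mathrm{I}_{q+k}$ come from restricting the outgoing momenta $p-k$ or $q+k$ to $P_{F,\delta}$, and $\mathrm{II}_{p}$ from restricting $p$ to $\underline{B}_{F,\delta}$. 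Each piece is then recast as an integral of $b^{*}$-type operators with the corresponding sharp projection ($H_{x,\sigma}[\delta]$, or $L[\delta_{4}]$, or $I_{x,\sigma}[\delta]$ replacing $g$) and estimated via Lemma \ref{b^* bound by b general} exactly as for $\mathrm{II}$ in the proof of (\ref{qua control N_i}). Gronwall's inequality then closes (\ref{cub control N_i[delta]}) and (\ref{cub control N_h[delta]1}), using Step~2 to absorb the lower-order $\mathcal{N}_{re}$ contributions that appear from commuting $\mathcal{N}_{h}[\delta]$ or $\mathcal{N}_{i}[\delta]$ along the flow.

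\smallskip

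\noindent\textbf{Main obstacle.} The delicate point is the appearance of $\mathcal{N}_{h}[-\alpha_{2}]$ on the right-hand side of both (\ref{cub control N_h[delta]1}) and (\ref{cub control N_i[delta]}), while the operator on the left is much more restrictive. This reflects the fact that the cut-off $\phi^{+}(k)\zeta^{-}(k)$ in $B'$ forces the created momenta $p-k$ and $q+k$ into $P_{F,-\alpha_{2}}$ but not necessarily into $P_{F,\delta}$; consequently, some commutator terms cannot be closed on $\mathcal{N}_{h}[\delta]$ alone and must be bounded by the weaker $\mathcal{N}_{h}[-\alpha_{2}]$. Keeping track of the correct prefactor $\tilde{\rho}_{0}^{\frac{2}{3}+\delta_{4}+4\alpha_{3}}$ in front of $\mathcal{N}_{h}[-\alpha_{2}]$---which comes from the combination of the small volume $\tilde{\rho}_{0}^{1+\delta_{4}}$ of $A_{F,\delta_{4}}$, the $\|\eta_{\phi^{+}}^{\zeta^{-}}\|_{1}$ bound, and the sharp projection at scale $\tilde{\rho}_{0}^{-\alpha_{2}}$---requires careful bookkeeping of the various Cauchy--Schwarz splittings, and is the principal technical load of the proof.
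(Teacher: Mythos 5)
Your plan is correct and essentially reproduces the paper's argument: the same commutator identities $[\mathcal{N}_{re},B^\prime]=[\tilde{\mathcal{N}}_{re},B^\prime]=A^\prime+A^{\prime*}$, the same position-space rewriting of $A^\prime$ with $b^{*}$-type operators and the projection $L_{y,\nu}[\delta_4]$ estimated via Lemmas \ref{b^* bound by b} and \ref{b^* bound by b general}, giving $\pm A^\prime\leq C\tilde{\rho}_0^{\frac{1}{3}+\frac{\alpha_3}{2}}\mathcal{N}_{re}$ with no $L^3$ error, the same treatment of $[\mathcal{N}_h[\delta],A^\prime]$ and $[\mathcal{N}_i[\delta],A^\prime]$ with the key observation that $\phi^+(k)>0$ only forces $p-k,q+k\in P_{F,-\alpha_2}$ (hence the $\mathcal{N}_h[-\alpha_2]$ terms), and closure by Gronwall/Newton--Leibniz. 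The only blemishes are cosmetic: the sign in $[\mathcal{N}_i[\delta],a_{p,\sigma}]$ should be $+\chi_{p\in\underline{B}_{F,\delta}}a_{p,\sigma}$ (irrelevant since only $\pm$ bounds are used), and the absence of an $L^3$ term in (\ref{cub control diff N_re}) really stems from $q\in A_{F,\delta_4}\subset B_F^{c}$ making the second Cauchy--Schwarz factor $\mathcal{N}_{l}[\delta_4]\leq\mathcal{N}_{re}$, rather than from the small volume of $A_{F,\delta_4}$ itself.
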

\begin{proof}
  \par We have $[\mathcal{N}_{re},A^\prime]=A^\prime$ and therefore
  $[\mathcal{N}_{re},B^\prime]=A^\prime+A^{\prime*}$. We can write $A^\prime$ by
  \begin{equation}\label{A'}
    \begin{aligned}
    A^\prime=&\sum_{\sigma,\nu}L^{\frac{3}{2}}\int_{\Lambda_L^2}
    \eta_{\phi^+}^{\zeta^-}(x-y)a^*(h_{x,\sigma})a^*(h_{y,\nu})a(L_{y,\nu}[\delta_4])
    a(g_{x,\sigma})dxdy\\
    =&\sum_{\sigma,\nu}L^{\frac{3}{2}}\int_{\Lambda_L}
    a^*(h_{y,\nu})b^*_{y,\sigma}(\eta_{\phi^+}^{\zeta^-})a(L_{y,\nu}[\delta_4])dy
    \end{aligned}
  \end{equation}
  By Lemma \ref{b^* bound by b} and Cauchy-Schwartz inequality, for $\psi\in\mathcal{H}^{\wedge}$
  \begin{equation}\label{bound A' 1st}
    \begin{aligned}
    \vert\langle A^\prime\psi,\psi\rangle\vert&\leq CL^{\frac{3}{2}}\sum_{\sigma,\nu}
    \int_{\Lambda_L}\Vert a(h_{y,\nu})\psi\Vert\big(
    \Vert a(L_{y,\nu}[\delta_4]) b_{y,\sigma}(\eta_{\phi^+}^{\zeta^-})
    \psi\Vert\\
    &\quad\quad\quad\quad\quad\quad\quad\quad+\tilde{\rho}_0^{\frac{1}{2}}
    \Vert\eta_{\phi^+}^{\zeta^-}\Vert_2\Vert a(L_{y,\nu}[\delta_4])\psi\Vert \big)dy
    \end{aligned}
  \end{equation}
  By Cauchy-Schwartz inequality, we have
  \begin{equation}\label{A' bound}
    \pm A^\prime\leq
    C\tilde{\rho}_0^{\frac{1}{3}+\frac{\alpha_3}{2}}\mathcal{N}_{re}.
  \end{equation}
 (\ref{cub control N_re1}) follows by Gronwall's inequality.

  \par On the other hand, we have $[\tilde{\mathcal{N}}_{re},A^\prime]=\mathcal{A}^\prime$, as well, thus (\ref{cub control N_retilde1}) follows easily from (\ref{A' bound}) and (\ref{cub control N_re1}). Moreover, (\ref{cub control diff N_re}) and (\ref{cub control diff N_retilde}) are also direct consequences of (\ref{A' bound}) and (\ref{cub control N_re1}).

  \par For inequality (\ref{cub control N_h[delta]1}), for $\delta\leq\delta_4$, we calculate directly
  \begin{equation}\label{com N_hdelta A'}
  \begin{aligned}
    &[{\mathcal{N}}_{h}[\delta],A^\prime]\\
    =&
    \sum_{k,p,q,\sigma,\nu}
  \eta_k\phi^+(k)\zeta^-(k)a^*_{p-k,\sigma}a^*_{q+k,\nu}a_{q,\nu}a_{p,\sigma}
  \chi_{p-k\in P_{F,\delta}}\chi_{q+k\notin B_F}\chi_{q\in A_{F,\delta_4}}\chi_{p\in B_F}\\
   &+\sum_{k,p,q,\sigma,\nu}
  \eta_k\phi^+(k)\zeta^-(k)a^*_{p-k,\sigma}a^*_{q+k,\nu}a_{q,\nu}a_{p,\sigma}
  \chi_{p-k\notin B_F}\chi_{q+k\in P_{F,\delta}}\chi_{q\in A_{F,\delta_4}}\chi_{p\in B_F}
  \end{aligned}
  \end{equation}
  Notice that $\phi^+(k)>0$ makes $p-k,q+k\in P_{F,-\alpha_2}$. For $-\alpha_2\leq\delta\leq\delta_4$, since $\mathcal{N}_{h}[-\alpha_2]\leq\mathcal{N}_{h}[\delta]$, we can bound $[{\mathcal{N}}_{h}[\delta],A^\prime]$ using Lemma \ref{b^* bound by b general}:
  \begin{equation}\label{N_hA' -alpha_2}
    \pm[{\mathcal{N}}_{h}[\delta],A^\prime]\leq  C\mathcal{N}_h[\delta]+
    C\tilde{\rho}_0^{\frac{5}{3}+\delta_4+\alpha_3}L^3.
  \end{equation}
  While for $\delta<-\alpha_2$, we can similarly bound
  \begin{equation}\label{N_hA' }
    \pm[{\mathcal{N}}_{h}[\delta],A^\prime]\leq  C\mathcal{N}_h[\delta]+
  C\tilde{\rho}_0^{\frac{2}{3}+\delta_4+4\alpha_3}
  \mathcal{N}_h[-\alpha_2]+C\tilde{\rho}_0^{\frac{5}{3}+\delta_4+\alpha_3}L^3.
  \end{equation}
 (\ref{cub control N_h[delta]1}) follows by Gronwall's inequality.

  \par For inequality (\ref{cub control N_i[delta]}) for $0<\delta\leq\delta_4$, we have
  \begin{equation}\label{com N_idelta A'}
  \begin{aligned}
    &[{\mathcal{N}}_{i}[\delta],A^\prime]\\
    =&
    \sum_{k,p,q,\sigma,\nu}
  \eta_k\phi^+(k)\zeta^-(k)a^*_{p-k,\sigma}a^*_{q+k,\nu}a_{q,\nu}a_{p,\sigma}
  \chi_{p-k,q+k\notin B_F}\chi_{q\in A_{F,\delta_4}}\chi_{p\in \underline{B}_{F,\delta}}
  \end{aligned}
  \end{equation}
  Also notice that $\phi^+(k)>0$ implies $p-k,q+k\in P_{F,-\alpha_2}$, so we have
   \begin{equation}\label{[N_i B']}
   \begin{aligned}
     [{\mathcal{N}}_{i}[\delta],A^\prime]=\sum_{\sigma,\nu}L^{\frac{3}{2}}\int_{\Lambda_L^2}
    \eta_{\phi^+}^{\zeta^-}&(x-y)a^*(H_{x,\sigma}[-\alpha_2])a^*(H_{y,\nu}[-\alpha_2])\\
    &\times a(L_{y,\nu}[\delta_4])
    a(I_{x,\sigma}[\delta])dxdy.
   \end{aligned}
   \end{equation}
   For $0<\delta\leq\delta_4$, since $\mathcal{N}_{h}[-\alpha_2]\leq\mathcal{N}_{h}[\delta]$, similar to (\ref{N_hA' -alpha_2}), we have
  \begin{equation}\label{N_idelta A'}
    \pm[{\mathcal{N}}_{i}[\delta],A^\prime]\leq
  C\mathcal{N}_h[-\alpha_2]+C\tilde{\rho}_0^{\frac{5}{3}+\delta_4+\alpha_3}L^3.
  \end{equation}
  Hence (\ref{cub control N_i[delta]}) follows from (\ref{cub control N_h[delta]1}) and Newton-Leibniz formula.
\end{proof}

\begin{lemma}\label{lem cub conrtol K}
  For any $N\in\mathbb{N}_{\geq0}$ and $\vert t\vert\leq 1$, and for $\frac{1}{24}>\frac{1}{2}\delta_3>\frac{1}{4}\alpha_3>\alpha_2>2\alpha_4>0$, $\frac{1}{3}>\delta_4\geq\delta_1\geq\delta_2>\frac{1}{12}$, $\delta_4>2\alpha_3+2\alpha_4$, $\beta_1=\frac{1}{3}+\alpha_5$ and $2\alpha_3>\alpha_5>2\alpha_4$, we have
  \begin{align}
  e^{-tB^\prime}\mathcal{K}_se^{tB^\prime}&\leq
  \mathcal{K}_s+\tilde{\rho}_0^{\frac{2}{3}}
  \mathcal{N}_h[-\alpha_2]+\tilde{\rho}_0^{2+\delta_4}L^3,\label{cub control K_s}\\
  e^{-tB^\prime}\mathcal{K}_h[-\alpha_2]e^{tB^\prime}&\leq
  \mathcal{K}_h[-\alpha_2]+\tilde{\rho}_0^{\frac{4}{3}
  +\delta_4+2\alpha_3}
  \mathcal{N}_h[-\alpha_2]+\tilde{\rho}_0^{2+\delta_4}L^3.\label{cub control K_h[-alpha2]}
  \end{align}
  Here $\mathcal{K}_h[-\alpha_2]$ is defined in (\ref{K_h[delta]}). We also have
\begin{equation}\label{cub com [K,B']}
  [\mathcal{K},B^\prime]=\Omega_m^\prime+\Omega_r^\prime,
\end{equation}
where
\begin{equation}\label{cub Omega_m'}
  \Omega_m^\prime=\sum_{k,p,q,\sigma,\nu}2\vert k\vert^2\eta_k
  (a_{p-k,\sigma}^*a_{q+k,\nu}^*a_{q,\nu}a_{p,\sigma}+h.c.)
  \chi_{p-k,q+k\notin B_F}\chi_{q\in A_{F,\delta_4}}\chi_{p\in B_F},
\end{equation}
and
\begin{equation}\label{cub Omega_r'}
\begin{aligned}
\pm\Omega_r^\prime\leq& C\tilde{\rho}_0^{1+\delta_1}\tilde{\mathcal{N}}_{re}
+C\tilde{\rho}_0\mathcal{N}_h[\delta_1]+C\tilde{\rho}_0^{\frac{2}{3}
+\delta_4+\alpha_3-\alpha_4}\mathcal{N}_h[-\alpha_2]\\
&+C\big(\tilde{\rho}_0^{1+\delta_4-2\alpha_3+\alpha_5-\alpha_4}
+\tilde{\rho}_0^{1+\frac{\delta_4}{2}+\alpha_3}\big)\mathcal{N}_h[-\beta_1]\\
&+C\big(\tilde{\rho}_0^{\delta_4+\alpha_3-\alpha_4}+\tilde{\rho}_0^
{\frac{1}{3}+\delta_4-2\alpha_3+\alpha_5-\alpha_4}\big)\mathcal{K}_h[-\alpha_2]\\
&+C\big(\tilde{\rho}_0^{\frac{7}{3}+\alpha_4}+\tilde{\rho}_0^
{2+\delta_4+4\alpha_3-7\alpha_2}\big)L^3.
\end{aligned}
\end{equation}
\end{lemma}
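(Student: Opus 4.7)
The plan is to mirror the structure of Lemma \ref{lemma qua cal [K,B]} but with the cubic operator $B^\prime$ in place of $B$, exploiting the extra constraint $q\in A_{F,\delta_4}$ which localizes one of the three ``outside'' fermions to a thin shell just above the Fermi surface. First, I would compute $[\mathcal{K},B^\prime]$ directly. Since each monomial in $A^\prime$ is particle-number preserving, one obtains the collisional-energy factor
\begin{equation*}
  |p-k|^2+|q+k|^2-|q|^2-|p|^2=2|k|^2+2k(q-p),
\end{equation*}
which splits $[\mathcal{K},B^\prime]=\Omega_m^\prime+\Omega_a^\prime$, where $\Omega_m^\prime$ is exactly the $2|k|^2$ piece displayed in (\ref{cub Omega_m'}) and $\Omega_a^\prime$ is the kinetic residue $2\eta_k k(q-p)\phi^+(k)\zeta^-(k)$ integrated against the usual 4-operator monomial. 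I would then set $\Omega_r^\prime:=\Omega_a^\prime$ and verify (\ref{cub com [K,B']}).

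Next, to bound $\Omega_r^\prime$, I would follow the playbook of the $\Omega_{m,l}$, $\Omega_h$, and $\Omega$ splits in the proof of Lemma \ref{lemma qua cal [K,B]}, but adapted to the cubic structure. Write $\zeta^-=1-\zeta^+$ and split $\Omega_r^\prime$ into a \emph{low-$k$} part (retaining $\zeta^-$, and further decomposed via $\phi^+=1-\phi^-$ into a piece localized to $|k|\lesssim\tilde\rho_0^{1/3-\alpha_2}$ which is controlled by Lemma \ref{lemma A_k B_k} using $\mathcal{N}_h[\delta_1]$, $\tilde{\mathcal{N}}_{re}$, and $\mathcal{N}_h[-\alpha_2]$) and a \emph{high-$k$} part (localized by $\zeta^+$, which forces $p-k,q+k\in P_{F,-\beta_1}$ and thus can be treated by (\ref{special eta L2 norm}) plus integration by parts on $\nabla_x\eta_{\phi^+}^{\zeta^+}$). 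The constraint $q\in A_{F,\delta_4}$ gives the extra prefactor $\tilde\rho_0^{\delta_4/2}$ in the Cauchy--Schwartz estimate that propagates through and produces the exponents $\delta_4+\alpha_3-\alpha_4$, $\delta_4-2\alpha_3+\alpha_5-\alpha_4$, and $\tfrac{\delta_4}{2}+\alpha_3$ appearing in (\ref{cub Omega_r'}). In each piece I would use the generalized bound Lemma \ref{b^* bound by b general} (with $r\sim\tilde\mu^{1/2}\tilde\rho_0^{-\beta_1}$ or $\tilde\mu^{1/2}$ as appropriate) to convert the annihilation factor $a(L_{y,\nu}[\delta_4])$ into a gain of $\tilde\rho_0^{\delta_4/2}$, and the gradient factors from integration by parts into factors of $\mathcal{K}_s^{1/2}$ or $\mathcal{K}_h[-\alpha_2]^{1/2}$.

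For the main term $\Omega_m^\prime$, I would first observe that by the same integration-by-parts trick used on $\Omega_{m,h}$ (split by $\phi^+(k)+\phi^-(k)=1$): the $\phi^+(k)$ portion forces $p-k,q+k\in P_{F,-\alpha_2}$, so that $|k|^2\eta_k$ can be recast via two integrations by parts on $\eta_{\phi^+}$ against $a^*(H_{x,\sigma}[-\alpha_2])a^*(H_{y,\nu}[-\alpha_2])$, yielding a bound $\pm\Omega_m^\prime\leq C\mathcal{K}_h[-\alpha_2]+C\tilde\rho_0^{2/3}\mathcal{N}_h[-\alpha_2]+\cdots$; the $\phi^-(k)$ portion is estimated with Lemma \ref{lemma A_k B_k} exactly as for $\Omega_{m,l}$, producing the $\tilde\rho_0^{1+\delta_1}\tilde{\mathcal{N}}_{re}+\tilde\rho_0\mathcal{N}_h[\delta_1]$ terms. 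Combining with the $\Omega_r^\prime$ estimate gives a self-bound for $\pm[\mathcal{K},B^\prime]$.

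Finally, for (\ref{cub control K_s}) and (\ref{cub control K_h[-alpha_2]}), I would use Duhamel: since $B^\prime$ commutes with $\mathcal{N}$, we have $[\mathcal{K}_s,B^\prime]=[\mathcal{K},B^\prime]$, so
\begin{equation*}
  e^{-tB^\prime}\mathcal{K}_s e^{tB^\prime}-\mathcal{K}_s=\int_0^t e^{-sB^\prime}[\mathcal{K},B^\prime]e^{sB^\prime}ds,
\end{equation*}
and then feed the self-bound into Gronwall, absorbing the conjugated $\mathcal{N}_h[\delta_1]$, $\mathcal{N}_h[-\alpha_2]$, $\tilde{\mathcal{N}}_{re}$, etc.\ using Lemma \ref{lemma cub control N_re}, and the conjugated $\mathcal{K}_h[-\alpha_2]$ using (\ref{cub control K_h[-alpha2]}) bootstrapped from its own commutator estimate (which is proved the same way, isolating only the $k$ such that $p-k$ or $q+k$ lies in $P_{F,-\alpha_2}$ before integrating by parts). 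I expect the main obstacle to be choosing the $\phi^+$/$\phi^-$ and $\zeta^-$/$\zeta^+$ splits so that every emergent exponent is at least as good as those claimed in (\ref{cub Omega_r'}); this requires precisely tracking the interplay between $\alpha_2$, $\alpha_3$, $\alpha_5$, $\delta_4$, and the gain $\tilde\rho_0^{\delta_4/2}$ from the $A_{F,\delta_4}$ cut-off, and it is where the compatibility conditions $\delta_4>2\alpha_3+2\alpha_4$, $\beta_1=\tfrac13+\alpha_5$, $2\alpha_3>\alpha_5>2\alpha_4$ enter in a nontrivial way.
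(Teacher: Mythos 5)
Your overall toolkit matches the paper's (direct computation of the commutator, integration by parts on $\eta$, Lemma \ref{b^* bound by b general} and Lemma \ref{lemma A_k B_k}, then Gronwall with Lemma \ref{lemma cub control N_re}), but there is a genuine gap in your decomposition step. The commutator $[\mathcal{K},B^\prime]$ produces the factor $2\vert k\vert^2+2k(q-p)$ multiplied by the coefficient of $A^\prime$, which is $\eta_k\phi^+(k)\zeta^-(k)$ — so the ``$2\vert k\vert^2$ piece'' of the commutator carries the cut-offs $\phi^+(k)\zeta^-(k)$, whereas the $\Omega_m^\prime$ displayed in (\ref{cub Omega_m'}) is cut-off free. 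Consequently, setting $\Omega_r^\prime:=\Omega_a^\prime$ (only the kinetic residue $2\eta_kk(q-p)\phi^+(k)\zeta^-(k)$ term) makes (\ref{cub com [K,B']}) false as an identity. The correct $\Omega_r^\prime$ must also contain the cut-off-removal terms $-\Omega_{d,1}^\prime$ (coefficient $2\vert k\vert^2\eta_k\phi^-(k)$) and $-\Omega_{d,2}^\prime$ (coefficient $2\vert k\vert^2\eta_k\zeta^+(k)$), exactly as in the paper's (\ref{Omega_r'}). This is not cosmetic: the cut-off-free $\Omega_m^\prime$ is what later combines with $\Theta_m^\prime$ and $\mathcal{V}_{3,L}$ through the scattering equation (\ref{Wdiscrete asymptotic energy pde on the torus}) in the treatment of $\Gamma^\prime$, so the price of removing $\phi^+\zeta^-$ has to be charged to $\Omega_r^\prime$; and indeed several terms in (\ref{cub Omega_r'}) — the $\tilde{\rho}_0^{1+\delta_1}\tilde{\mathcal{N}}_{re}$ and $\tilde{\rho}_0\mathcal{N}_h[\delta_1]$ contributions and the $\tilde{\rho}_0^{2+\delta_4+4\alpha_3-7\alpha_2}L^3$ error come precisely from $\Omega_{d,1}^\prime$ (handled as $\Omega_{m,l}$ via Lemma \ref{lemma A_k B_k}), while the $\mathcal{N}_h[-\beta_1]$ terms and part of the $\mathcal{K}_h[-\alpha_2]$ coefficient come from $\Omega_{d,2}^\prime$ (handled with (\ref{special grad eta L2})). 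Your later plan to ``split $\Omega_r^\prime$ via $\zeta^-=1-\zeta^+$ and $\phi^+=1-\phi^-$'' is inconsistent with your own definition, since your $\Omega_a^\prime$ already carries $\phi^+\zeta^-$ and those splits act trivially on it; it reads as if you are implicitly describing the corrections $\Omega_{d,1}^\prime,\Omega_{d,2}^\prime$ without having put them into $\Omega_r^\prime$ in the first place.

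Once the decomposition is repaired, the rest of your outline is essentially the paper's argument: for (\ref{cub control K_s}) one does not need the individual bounds on $\Omega_{d,1}^\prime,\Omega_{d,2}^\prime$ at all — the paper bounds the natural combination $\Omega_m^\prime-\Omega_{d,1}^\prime-\Omega_{d,2}^\prime$ (the $\vert k\vert^2$ term with its cut-offs intact, cf. (\ref{Omega0'})--(\ref{Omega0' bound})) together with $\Omega^\prime$ via (\ref{Omega'bound1}), getting $\pm[\mathcal{K},B^\prime]\leq C\mathcal{K}_s+C\tilde{\rho}_0^{2/3}\mathcal{N}_h[-\alpha_2]+C\tilde{\rho}_0^{2+\delta_4}L^3$, then uses $[\mathcal{K}_s,B^\prime]=[\mathcal{K},B^\prime]$ and Gronwall; your route of bounding the cut-off-free $\Omega_m^\prime$ by an extra $\phi^\pm$ split imports unnecessary $\tilde{\mathcal{N}}_{re}$ and $\mathcal{N}_h[\delta_1]$ terms into the Gronwall input and would not reproduce the clean form of (\ref{cub control K_s}) without further absorption arguments. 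For (\ref{cub control K_h[-alpha2]}) your sketch is consistent with the paper, which uses the identities $\vert p-k\vert^2=(p-k)p-(p-k)k$ and $\vert q+k\vert^2=(q+k)q+(q+k)k$ together with Lemma \ref{b^* bound by b general} before applying Gronwall.
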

\begin{proof}
  \par By a direct calculation of $[\mathcal{K},B^\prime]$, we reach (\ref{cub com [K,B']}) with
  \begin{equation}\label{Omega_r'}
    \Omega_r^\prime=\Omega^\prime-\Omega_{d,1}^\prime-\Omega_{d,2}^\prime,
  \end{equation}
  where
  \begin{equation}\label{Omega_r'detailed}
    \begin{aligned}
    \Omega^\prime=&\sum_{k,p,q,\sigma,\nu}
  2 k(q-p)\eta_k\phi^+(k)\zeta^-(k)(a^*_{p-k,\sigma}a^*_{q+k,\nu}a_{q,\nu}a_{p,\sigma}+h.c.)\\
  &\quad\quad\quad\quad
  \times\chi_{p-k,q+k\notin B_F}\chi_{q\in A_{F,\delta_4}}
  \chi_{p\in B_F}\\
  \Omega^\prime_{d,1}=&\sum_{k,p,q,\sigma,\nu}
  2\vert k\vert^2\eta_k\phi^-(k)
  (a^*_{p-k,\sigma}a^*_{q+k,\nu}a_{q,\nu}a_{p,\sigma}+h.c.)\\
  &\quad\quad\quad\quad
  \times\chi_{p-k,q+k\notin B_F}\chi_{q\in A_{F,\delta_4}}
  \chi_{p\in B_F}\\
  \Omega^\prime_{d,2}=&\sum_{k,p,q,\sigma,\nu}
  2 \vert k\vert^2\eta_k\zeta^+(k)
  (a^*_{p-k,\sigma}a^*_{q+k,\nu}a_{q,\nu}a_{p,\sigma}+h.c.)\\
  &\quad\quad\quad\quad
  \times\chi_{p-k,q+k\notin B_F}\chi_{q\in A_{F,\delta_4}}
  \chi_{p\in B_F}
    \end{aligned}
  \end{equation}
  First, we have, for some universal complex constant $C$, that
  \begin{equation}\label{Omega0'}
    \begin{aligned}
    \Omega_m^\prime-\Omega_{d,1}^\prime-\Omega_{d,2}^\prime
    =CL^{\frac{3}{2}}\sum_{\sigma,\nu}\int_{\Lambda_L^2}
    &\Delta_x\big(\eta_{\phi^+}^{\zeta^-}\big)(x-y)
    a^*(H_{x,\sigma}[-\alpha_2])a^*(H_{y,\nu}[-\alpha_2])\\
    &\times a(L_{y,\nu}[\delta_4])
    a(g_{x,\sigma})dxdy+h.c.
    \end{aligned}
  \end{equation}
  Here, we use the fact that $\phi^+(k)>0$ implies $p-k,q+k\in P_{F,-\alpha_2}$. Similar to the bound of (\ref{Omegam}), we have
  \begin{equation}\label{Omega0' bound}
    \pm(\Omega_m^\prime-\Omega_{d,1}^\prime-\Omega_{d,2}^\prime)\leq C\mathcal{K}_s+C\tilde{\rho}_0^{\frac{2}{3}}
    \mathcal{N}_h[-\alpha_2]+C\tilde{\rho}_0^{2+\delta_4}L^3.
  \end{equation}
  For $\Omega^\prime$, we have
  \begin{equation}\label{Omega'}
    \begin{aligned}
    \Omega^\prime
    =&CL^{\frac{3}{2}}\sum_{\sigma,\nu}\int_{\Lambda_L^2}
    \nabla\big(\eta_{\phi^+}^{\zeta^-}\big)(x-y)
    a^*(H_{x,\sigma}[-\alpha_2])a^*(H_{y,\nu}[-\alpha_2])\\
    &\quad\quad\times a(\nabla_y L_{y,\nu}[\delta_4])
    a(g_{x,\sigma})dxdy+h.c.\\
    +&CL^{\frac{3}{2}}\sum_{\sigma,\nu}\int_{\Lambda_L^2}
    \nabla\big(\eta_{\phi^+}^{\zeta^-}\big)(x-y)
    a^*(H_{x,\sigma}[-\alpha_2])a^*(H_{y,\nu}[-\alpha_2])\\
    &\quad\quad
    \times a( L_{y,\nu}[\delta_4])
    a(\nabla_xg_{x,\sigma})dxdy+h.c.
    \end{aligned}
  \end{equation}
  Similar to the bound of $\Omega$ in (\ref{Omega1st}) and using Lemma \ref{b^* bound by b general}, we have
  \begin{equation}\label{Omega'bound1}
    \pm\Omega^\prime\leq C\mathcal{K}_s+C\tilde{\rho}_0^{\frac{2}{3}}
    \mathcal{N}_h[-\alpha_2]+C\tilde{\rho}_0^{\frac{7}{3}+\delta_4+\alpha_3}L^3.
  \end{equation}
  With $\mathcal{K}_h[-\alpha_2]$ define in (\ref{K_h[delta]}), we also have
  \begin{equation}\label{Omega'bound2}
    \pm\Omega^\prime \leq C\tilde{\rho}_0^{\delta_4+\alpha_3-\alpha_4}
    \mathcal{K}_h[-\alpha_2]+C\tilde{\rho}_0^{\frac{2}{3}+\delta_4+\alpha_3-\alpha_4}
    \mathcal{N}_h[-\alpha_2]+C\tilde{\rho}_0^{\frac{7}{3}+\alpha_4}L^3,
  \end{equation}
   Combining (\ref{Omega0' bound}) and (\ref{Omega'bound1}), we reach
  \begin{equation}\label{[K,B']bound}
   \pm[\mathcal{K},B^\prime]\leq C\mathcal{K}_s+C\tilde{\rho}_0^{\frac{2}{3}}
    \mathcal{N}_h[-\alpha_2]+C\tilde{\rho}_0^{2+\delta_4}L^3,
  \end{equation}
which together with Lemma \ref{lemma cub control N_re} and Gronwall's inequality yield (\ref{cub control K_s}), since we have $[\mathcal{K},B^\prime]=[\mathcal{K}_s,B^\prime]$, which is from (\ref{K_s and K}).
  \par For $\Omega_{d,1}^\prime$, we can bound it in the way completely analogous to the bound of $\Omega_{m,l}$ in (\ref{Omega_m,l}):
  \begin{equation}\label{Omega d1'bound}
    \pm\Omega_{d,1}^\prime\leq C\tilde{\rho}_0^{1+\delta_1}\tilde{\mathcal{N}}_{re}
+C\tilde{\rho}_0\mathcal{N}_h[\delta_1]+C\tilde{\rho}_0^{2+\delta_4+4\alpha_3-7\alpha_2}L^3.
  \end{equation}
  \par For $\Omega_{d,2}^\prime$, since $\zeta^+(k)>0$ implies $p-k,q+k\in P_{F,-\beta_1}$, we can write it by
  \begin{equation}\label{Omegad2'}
  \begin{aligned}
    \Omega_{d,2}^\prime=CL^{\frac{3}{2}}\sum_{\sigma,\nu}\int_{\Lambda_L^2}
    \Delta_x\big(\eta_{\zeta^+}\big)(x-y)&
    a^*(H_{x,\sigma}[-\beta_1])a^*(H_{y,\nu}[-\beta_1])\\
    &\times
    a(L_{y,\nu}[\delta_4])
    a(g_{x,\sigma})dxdy+h.c.
  \end{aligned}
  \end{equation}
  Using the estimate (\ref{special grad eta L2}) and the fact that $\mathcal{K}_h[-\beta_1]\leq\mathcal{K}_h[-\alpha_2]$, we can bound it similar to (\ref{Omega0'}), that
  \begin{equation}\label{Omega d2'bound}
  \begin{aligned}
     \pm\Omega_{d,2}^\prime\leq &
+C\big(\tilde{\rho}_0^{1+\delta_4-2\alpha_3+\alpha_5-\alpha_4}
+\tilde{\rho}_0^{1+\frac{\delta_4}{2}+\alpha_3}\big)\mathcal{N}_h[-\beta_1]\\
&+  C\big(\tilde{\rho}_0^{\frac{1}{3}+\delta_4-2\alpha_3+\alpha_5-\alpha_4}
+\tilde{\rho}_0^{\frac{1}{3}+\frac{\delta_4}{2}+\alpha_3}\big)\mathcal{K}_h[-\alpha_2]
+C\tilde{\rho}_0^{\frac{7}{3}+\alpha_4}L^3.
\end{aligned}
  \end{equation}
  Therefore, combining (\ref{Omega_r'}), (\ref{Omega'bound2}), (\ref{Omega d1'bound}) and (\ref{Omega d2'bound}) we reach (\ref{cub Omega_r'}).

  \par For inequality (\ref{cub control K_h[-alpha2]}), we have
  \begin{equation}\label{[K_h,B']}
  \begin{aligned}
    [\mathcal{K}_h[-\alpha_2],B^\prime]=
    &\sum_{\substack{k,p,q\\\sigma,\nu}}
  2 \big(\vert p-k\vert^2\chi_{p-k\in P_{F,-\alpha_2}}
  +\vert q+k\vert^2\chi_{q+k\in P_{F,-\alpha_2}}\big)
  \eta_k\phi^+(k)\zeta^-(k)\\
  &\quad\quad
  \times(a^*_{p-k,\sigma}a^*_{q+k,\nu}a_{q,\nu}a_{p,\sigma}+h.c.)
  \chi_{p-k,q+k\notin B_F}\chi_{q\in A_{F,\delta_4}}
  \chi_{p\in B_F}
  \end{aligned}
  \end{equation}
  Again we notice that $\phi^+(k)>0$ implies $p-k,q+k\in P_{F,-\alpha_2}$, and we can use
  \begin{align*}
    \vert p-k\vert^2=(p-k)p-(p-k)k,\quad
    \vert q+k\vert^2=(q+k)q+(q+k)k.
  \end{align*}
  to rewrite $[\mathcal{K}_h[-\alpha_2],B^\prime]$. Then switching to the position space, we can use Lemma \ref{b^* bound by b general} to bound it by
  \begin{equation}\label{[K_h,B'] bound}
    \pm[\mathcal{K}_h[-\alpha_2],B^\prime]\leq C\mathcal{K}_h[-\alpha_2]
    +C\tilde{\rho}_0^{\frac{4}{3}+2\alpha+3+\delta_4}\mathcal{N}_h[-\alpha_2]+
    C\tilde{\rho}_0^{2+\delta_4}L^3.
  \end{equation}
  Combining (\ref{[K_h,B'] bound}) with Lemma \ref{lemma cub control N_re} and Gronwall's inequality, we reach (\ref{cub control K_h[-alpha2]}).
\end{proof}

\begin{lemma}\label{lemma cub control V_4}
  For any $N\in\mathbb{N}_{\geq0}$ and $\vert t\vert\leq 1$, and for $\frac{1}{24}>\frac{1}{2}\delta_3>\frac{1}{4}\alpha_3>\alpha_2>2\alpha_4>0$, $\frac{1}{3}>\delta_4\geq\delta_1\geq\delta_2>\frac{1}{12}$, $\delta_4>2\alpha_3+2\alpha_4$, $\beta_1=\frac{1}{3}+\alpha_5$ and $2\alpha_3>\alpha_5>2\alpha_4$, we have
  \begin{equation}\label{cub control V_4,4h}
     e^{-tB^\prime}\mathcal{V}_{4,4h}e^{tB^\prime}\leq
     \mathcal{V}_{4,4h}+\tilde{\rho}_0\mathcal{N}_{re}.
  \end{equation}
  We also have
  \begin{align}
    \pm\mathcal{V}_{4,2h}^{(1)}&\leq C\tilde{\rho}_0^{\alpha_4}
    \mathcal{V}_{4,4h}+\tilde{\rho}_0^{2+2\delta_4-\alpha_4}L^3,\label{cub control V_4,2h}\\
    \pm\mathcal{V}_{4,r}&\leq C\tilde{\rho}_0\mathcal{N}_h[{\delta_4}]
    +C\tilde{\rho}_0^{2+2\delta_4}L^3.\label{cub control V_4r}
  \end{align}
  Moreover,
  \begin{equation}\label{cub com [V_4,4h,B']}
    [\mathcal{V}_{4,4h},B^\prime]=\Theta_m^\prime+\Theta_r^\prime,
  \end{equation}
  where
  \begin{equation}\label{cub Theta_m'}
  \begin{aligned}
    \Theta_m^\prime=\sum_{k,p,q,\sigma,\nu}\Big(\frac{1}{L^3}\sum_{l}
    \hat{v}_{k-l}\eta_l\Big)&(a_{p-k,\sigma}^*a_{q+k,\nu}^*a_{q,\nu}a_{p,\sigma}+h.c.)\\
  &\times\chi_{p-k,q+k\notin B_F}\chi_{q\in A_{F,\delta_4}}\chi_{p\in B_F},
  \end{aligned}
  \end{equation}
  and
  \begin{equation}\label{cub Theta_r'}
    \begin{aligned}
    \pm\Theta_r^\prime\leq&C\big(\tilde{\rho}_0^{\frac{5}{3}+3\alpha_3-6\alpha_2-\delta_4}
    +\tilde{\rho}_0^{\frac{7}{3}-4\alpha_3+2\alpha_5-\delta_4}+\tilde{\rho}_0^
    {\frac{5}{3}+\alpha_2}\big)\mathcal{N}_{re}\\
    &+C\tilde{\rho}_0^{1+\delta_4}\tilde{\mathcal{N}}_{re}+C\tilde{\rho}_0\mathcal{N}_h[\delta_4]
    +C\tilde{\rho}_0^{\delta_4}\mathcal{V}_{4,4h}+C\tilde{\rho}_0^{2+2\delta_4}L^3.
    \end{aligned}
  \end{equation}
  Furthermore,
  \begin{equation}\label{cub com V_4 2h 1}
    \begin{aligned}
    \pm[\mathcal{V}_{4,2h}^{(1)},B^\prime]\leq&C\tilde{\rho}_0^{1+\delta_4}\tilde{\mathcal{N}}_
    {re}+C\tilde{\rho}_0^{\frac{5}{3}+\alpha_2+\delta_4}\mathcal{N}_h[\delta_4]
    +C\tilde{\rho}_0^{\frac{5}{3}+\alpha_2}\mathcal{N}_h[-\alpha_2]\\
    &+C\tilde{\rho}_0^{\delta_4}\mathcal{V}_{4,4h}+C\tilde{\rho}_0^{2+2\delta_4}L^3.
    \end{aligned}
  \end{equation}
  Therefore,
  \begin{equation}\label{cub control V_4}
    \begin{aligned}
    e^{-tB^\prime}\mathcal{V}_{4}e^{tB^\prime}\leq&C\mathcal{V}_{4,4h}
    +C\tilde{\rho}_0\mathcal{N}_{re}+C\tilde{\rho}_0^{1+\delta_4}\tilde{\mathcal{N}}_{re}
    \\
    &+C\tilde{\rho}_0^{\frac{5}{3}+\alpha_2}\mathcal{N}_h[-\alpha_2]
    +C\tilde{\rho}_0^{2+2\delta_4-\alpha_4}L^3.
    \end{aligned}
  \end{equation}
\end{lemma}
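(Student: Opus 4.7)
The plan is to handle the six assertions (\ref{cub control V_4,4h})--(\ref{cub control V_4}) in order, treating first the two direct bounds on $\mathcal{V}_{4,2h}^{(1)}$ and $\mathcal{V}_{4,r}$, then the two commutator identities/estimates for $[\mathcal{V}_{4,4h},B^\prime]$ and $[\mathcal{V}_{4,2h}^{(1)},B^\prime]$, and finally assembling (\ref{cub control V_4,4h}) and (\ref{cub control V_4}) via Gronwall combined with Lemma \ref{lemma cub control N_re}. The structural similarity with Lemma \ref{lemma qua control V_4} is strong; the new element is the frequency split $P_{F,\delta_4}$ vs.\ $A_{F,\delta_4}$ forced by the thermodynamic limit.

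For (\ref{cub control V_4,2h}) I would rewrite $\mathcal{V}_{4,2h}^{(1)}$ in position-space form
\[
  \mathcal{V}_{4,2h}^{(1)}=\frac{1}{2}\sum_{\sigma,\nu}\int_{\Lambda_L^2}v(x-y)\,
  a^*(H_{x,\sigma}[\delta_4])a^*(L_{y,\nu}[\delta_4])a(L_{y,\nu}[\delta_4])a(H_{x,\sigma}[\delta_4])\,dxdy+h.c.
\]
and apply Cauchy--Schwartz with splitting parameter $\theta=\tilde{\rho}_0^{\alpha_4}$: the $H$-factors contribute $\mathcal{V}_{4,4h}$, while the $L$-factors are controlled by $\#A_{F,\delta_4}/L^3\leq C\tilde{\rho}_0^{1+\delta_4}$, yielding the residue $\tilde{\rho}_0^{2+2\delta_4-\alpha_4}L^3$. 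Bound (\ref{cub control V_4r}) follows by the same scheme where both particle pairs sit in $A_{F,\delta_4}$, giving $\mathcal{N}_h[\delta_4]\cdot\tilde{\rho}_0^{1+\delta_4}$ plus $\tilde{\rho}_0^{2+2\delta_4}L^3$.

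The substantive work is the commutator (\ref{cub com [V_4,4h,B']})--(\ref{cub Theta_r'}). Computing $[\mathcal{V}_{4,4h},B^\prime]$ by the anticommutator identity produces four families: a diagonal contribution in which the two creation operators of $B^\prime$ fully contract with the two annihilation operators of $\mathcal{V}_{4,4h}$, giving $\Theta_m^\prime$ (the coefficient $L^{-3}\sum_l \hat{v}_{k-l}\eta_l$ is essentially the Fourier transform of $vf_\ell$ in the appropriate frequency window); two terms with a single contraction; and one six-body term with no contraction. The single-contraction terms I would write via $b^*_{y,\sigma}(\eta_{\phi^+}^{\zeta^-})$ and bound by Lemma \ref{b^* bound by b general} with $r=O(\tilde{\rho}_0^{-\alpha_2})$ and $\mathfrak{c}=O(\tilde{\rho}_0^{1+\delta_4})$; the six-body term is analogous to the $\Theta_r$ estimate in Lemma \ref{lemma qua control V_4} but with one annihilation forced into $L_{y,\nu}[\delta_4]$, contributing the $\tilde{\rho}_0^{1+\delta_4}\tilde{\mathcal{N}}_{re}$ and $\tilde{\rho}_0^{\delta_4}\mathcal{V}_{4,4h}$ entries of (\ref{cub Theta_r'}). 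The cutoff $\phi^+$, together with $q\in A_{F,\delta_4}$, kills the would-be divergent small-momentum pieces, exactly as for $\Theta_{d,2}=0$ in the quadratic case. The bound (\ref{cub com V_4 2h 1}) on $[\mathcal{V}_{4,2h}^{(1)},B^\prime]$ follows the same blueprint with one $H$-factor replaced by $L$, yielding an extra saving of $\tilde{\rho}_0^{(1+\delta_4)/2}$ in those entries where both $L$-factors coincide.

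Finally, (\ref{cub control V_4,4h}) and (\ref{cub control V_4}) are Gronwall arguments applied to $f(t)=\langle e^{-tB^\prime}\mathcal{V}_{4,\sharp}e^{tB^\prime}\psi,\psi\rangle$. Differentiating in $t$, plugging in (\ref{cub Theta_r'}) and (\ref{cub com V_4 2h 1}), and using Lemma \ref{lemma cub control N_re} to convert the transported particle-number operators back to untransported ones gives an inequality of the form $f^\prime(t)\leq Cf(t)+(\text{lower order})$. The main obstacle I anticipate is that $\Theta_m^\prime$ is \emph{not} itself small: it has the same size as the surviving part of $\mathcal{V}_{3,L}$ and must be absorbed via Cauchy--Schwartz at the cost of $\tilde{\rho}_0^{\delta_4}\mathcal{V}_{4,4h}$ plus $\tilde{\rho}_0^{1+\delta_4}\tilde{\mathcal{N}}_{re}$, which is only useful because $\delta_4>0$. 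Tracking the constraints $\delta_4>2\alpha_3+2\alpha_4$, $\alpha_2>2\alpha_4$, and $\alpha_5>2\alpha_4$ through this absorption is what forces the precise exponents of the final bound; with the parameter choices of Section \ref{main} this goes through cleanly and the Gronwall integration produces (\ref{cub control V_4,4h}) and (\ref{cub control V_4}).
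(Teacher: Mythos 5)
Your overall strategy (frequency decomposition, position-space rewriting with Cauchy--Schwartz, the $b^*$ lemmas for the sextic terms, and Gronwall combined with Lemma \ref{lemma cub control N_re}) matches the paper's, and your treatment of (\ref{cub control V_4,2h}), (\ref{cub control V_4r}) and the final Gronwall step is essentially what the paper does. However, your description of the commutator $[\mathcal{V}_{4,4h},B^\prime]$ contains a structural error: since $B^\prime$ is quartic, a commutator of two quartics produces only a fully contracted quartic piece and \emph{singly} contracted sextic pieces --- there is no ``six-body term with no contraction'' (an uncontracted product cancels in the commutator), and the ``two terms with a single contraction'' are precisely the six-body terms $\Theta_{r,1}^\prime,\Theta_{r,2}^\prime$, not separate quartic corrections. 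As written, your accounting of terms does not close.

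More substantively, you omit the corrections needed to pass from the raw commutator coefficient $L^{-3}\sum_l\hat v_{k-l}\eta_l\,\phi^+(l)\zeta^-(l)$ (with the indicator $\chi_{p-k,q+k\in P_{F,\delta_4}}$) to the clean $\Theta_m^\prime$ stated in the lemma, which carries $L^{-3}\sum_l\hat v_{k-l}\eta_l$ with no cutoffs and the indicator $\chi_{p-k,q+k\notin B_F}$. Stripping $\phi^-(l)$ requires the $L^\infty$ bound (\ref{bound eta-etatilde Linfty}) on $\eta^{\phi^-}$ and stripping $\zeta^+(l)$ requires (\ref{special eta Linf}) on $\eta_{\zeta^+}$; these two corrections are exactly the source of the first two exponents $\tilde{\rho}_0^{\frac{5}{3}+3\alpha_3-6\alpha_2-\delta_4}$ and $\tilde{\rho}_0^{\frac{7}{3}-4\alpha_3+2\alpha_5-\delta_4}$ in (\ref{cub Theta_r'}), and the indicator correction is the source of the $\tilde{\rho}_0^{1+\delta_4}\tilde{\mathcal{N}}_{re}+\tilde{\rho}_0\mathcal{N}_h[\delta_4]+\tilde{\rho}_0^{2+2\delta_4}L^3$ entries. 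Your proposal gives no mechanism for producing these terms, so the stated bound (\ref{cub Theta_r'}) cannot be recovered from the argument as outlined. (One further minor point: $\Theta_m^\prime$ is retained as a main term --- it later combines with $\Omega_m^\prime$ and $\mathcal{V}_{3,L}$ via the scattering equation in $\Gamma^\prime$ --- and for the Gronwall bound it is absorbed at cost $C\mathcal{V}_{4,4h}+C\tilde{\rho}_0\mathcal{N}_{re}$, not $\tilde{\rho}_0^{\delta_4}\mathcal{V}_{4,4h}$.)
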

\begin{proof}
  \par From (\ref{divide V_4}), we rewrite $\mathcal{V}_4$ by
  \begin{equation}\label{V_4split}
    \mathcal{V}_{4}=\mathcal{V}_{4,4h}+\mathcal{V}_{4,3h}+\mathcal{V}_{4,2h}^{(1)}
    +\mathcal{V}_{4,2h}^{(2)}+\mathcal{V}_{4,1h}+\mathcal{V}_{4,0h},
  \end{equation}
  where $\mathcal{V}_{4,4h}$ and $\mathcal{V}_{4,2h}^{(1)}$ have been defined in (\ref{divide V_4 detailed}), and $\mathcal{V}_{4,r}=\mathcal{V}_{4,3h}
    +\mathcal{V}_{4,2h}^{(2)}+\mathcal{V}_{4,1h}$. Here $\mathcal{V}_{4,3h}$ means the $3$ indices of its $4$ creation and annihilation operators are inside of $P_{F,\delta_4}$, and similarly, $\mathcal{V}_{4,1h}$ means $1$ of its $4$ is inside of $P_{F,\delta_4}$, $\mathcal{V}_{4,0h}$ means all of its $4$ are inside of $A_{F,\delta_4}$, while $\mathcal{V}_{4,2h}^{(2)}$ means $2$ of its $4$ are inside of $P_{F,\delta_4}$, except for $\mathcal{V}_{4,2h}^{(1)}$, which is special. It is then trivial to reach the bounds (\ref{cub control V_4,2h}) and (\ref{cub control V_4r}).

  \par We the calculate $[\mathcal{V}_{4,4h},B^\prime]$. We have
  \begin{equation}\label{[V_4,4h,B']}
    [\mathcal{V}_{4,4h},B^\prime]=\Theta_m^\prime+\Theta_{d,0}^\prime-\Theta_{d,1}^\prime
    -\Theta_{d,2}^\prime+\Theta_{d,3}^\prime+\Theta_{r,1}^\prime
    +\Theta_{r,2}^\prime,
  \end{equation}
  with
  \begin{align}
  \Theta_m^\prime+\Theta_{d,0}^\prime=&\sum_{k,p,q,\sigma,\nu}
 \Big(\frac{1}{L^3}\sum_{l}\hat{v}_{k-l}\eta_l\Big)
 (a^*_{p-k,\sigma}a^*_{q+k,\nu}a_{q,\nu}a_{p,\sigma}+h.c.)\nonumber\\
  &\quad\quad\quad\quad
  \times\chi_{p-k,q+k\in P_{F,\delta_4}}\chi_{q\in A_{F,\delta_4}}
  \chi_{p\in B_F}\nonumber\\
  \Theta_{d,1}^\prime&=\sum_{k,p,q,\sigma,\nu}
 \Big(\frac{1}{L^3}\sum_{l}\hat{v}_{k-l}\eta_l\phi^-(l)\Big)
 (a^*_{p-k,\sigma}a^*_{q+k,\nu}a_{q,\nu}a_{p,\sigma}+h.c.)\nonumber\\
  &\quad\quad\quad\quad
  \times\chi_{p-k,q+k\in P_{F,\delta_4}}\chi_{q\in A_{F,\delta_4}}
  \chi_{p\in B_F}\nonumber\\
  \Theta_{d,2}^\prime&=\sum_{k,p,q,\sigma,\nu}
 \Big(\frac{1}{L^3}\sum_{l}\hat{v}_{k-l}\eta_l\zeta^+(l)\Big)
 (a^*_{p-k,\sigma}a^*_{q+k,\nu}a_{q,\nu}a_{p,\sigma}+h.c.)\nonumber\\
  &\quad\quad\quad\quad
  \times\chi_{p-k,q+k\in P_{F,\delta_4}}\chi_{q\in A_{F,\delta_4}}
  \chi_{p\in B_F}\nonumber\\
   \Theta_{d,3}^\prime&=\frac{1}{L^3}\sum_{l,k,p,q,\sigma,\nu}
 \hat{v}_{k-l}\eta_l\phi^+(l)\zeta^-(l)
 (a^*_{p-k,\sigma}a^*_{q+k,\nu}a_{q,\nu}a_{p,\sigma}+h.c.)\nonumber\\
  &\quad\quad\quad\quad
  \times\chi_{p-k,q+k\in P_{F,\delta_4}}\chi_{q\in A_{F,\delta_4}}
  \chi_{p\in B_F}(\chi_{p-l\notin B_F}\chi_{q+l\notin B_F}-1)\nonumber\\
  \Theta_{r,1}^\prime&=\frac{1}{L^3}\sum_{\substack{k,p,q\\\sigma,\nu}}\sum_{\varpi,l,r,s}
  -\hat{v}_k\eta_l\delta_{p,r-l}(a^*_{p-k,\sigma}a^*_{q+k,\nu}a^*_{s+l,\varpi}
  a_{q,\nu}a_{s,\varpi}a_{r,\sigma}+h.c.)\nonumber\\
  &\quad\quad\quad\quad\times\phi^+(l)\zeta^-(l)
  \chi_{p-k,q+k,p,q\in P_{F,\delta_4}}
  \chi_{r-l,s+l\notin B_F}\chi_{s\in A_{F,\delta_4}}\chi_{r\in B_F}\nonumber\\
  \Theta_{r,2}^\prime&=\frac{1}{L^3}\sum_{\substack{k,p,q\\\sigma,\nu}}\sum_{\tau,l,r,s}
  \hat{v}_k\eta_l\delta_{p,s+l}(a^*_{p-k,\sigma}a^*_{q+k,\nu}a^*_{r-l,\tau}
  a_{q,\nu}a_{s,\sigma}a_{r,\tau}+h.c.)\nonumber\\
  &\quad\quad\quad\quad\times\phi^+(l)\zeta^-(l)
  \chi_{p-k,q+k,p,q\in P_{F,\delta_4}}
  \chi_{r-l,s+l\notin B_F}\chi_{s\in A_{F,\delta_4}}\chi_{r\in B_F}
  \label{cal [V_4,B'] detailed}
  \end{align}
  \par For $\Theta_m^\prime+\Theta_{d,0}^\prime$, we can write it by
  \begin{equation}\label{Theta_m'+Theta_d,0'}
  \begin{aligned}
    \Theta_m^\prime+\Theta_{d,0}^\prime=\sum_{\sigma,\nu}\int_{\Lambda_L^2}
    v(x-y)L^{\frac{3}{2}}&\eta(x-y)a^*(H_{x,\sigma}[\delta_4])a^*(H_{y,\nu}[\delta_4])\\
    &\times a(L_{y,\nu}[\delta_4])
    a(g_{x,\sigma})dxdy+h.c.
  \end{aligned}
  \end{equation}
  Using Cauchy-Schwartz inequality, we have
  \begin{equation}\label{Theta_m' bound}
    \pm(\Theta_m^\prime+\Theta_{d,0}^\prime)\leq C\mathcal{V}_{4,4h}+C\tilde{\rho}_0\mathcal{N}_{re}.
  \end{equation}
  \par For $\Theta_{d,0}^\prime$ alone, we can use (\ref{cub Theta_m'}) and (\ref{Theta_m'+Theta_d,0'}) to bound it by
  \begin{equation}\label{Theta_d,0'bound}
    \pm\Theta_{d,0}^\prime\leq C\tilde{\rho}_0^{1+\delta_4}\tilde{\mathcal{N}}_{re}
    +C\tilde{\rho}_0\mathcal{N}_h[\delta_4]
    +C\tilde{\rho}_0^{2+2\delta_4}L^3.
  \end{equation}
  \par For $\Theta_{d,1}^\prime$, we bound it similar to (\ref{Theta_m'+Theta_d,0'}), but using estimate (\ref{bound eta-etatilde Linfty}):
  \begin{equation}\label{Theta_d,1'bound}
    \pm\Theta_{d,1}^\prime\leq C\tilde{\rho}_0^{\delta_4}\mathcal{V}_{4,4h}
    +C\tilde{\rho}_0^{\frac{5}{3}+4\alpha_3-6\alpha_2-\delta_4}\mathcal{N}_{re}.
  \end{equation}
 \par For $\Theta_{d,2}^\prime$, we bound it similar to (\ref{Theta_m'+Theta_d,0'}), but using estimate (\ref{special eta Linf}):
  \begin{equation}\label{Theta_d,2'bound}
    \pm\Theta_{d,2}^\prime\leq C\tilde{\rho}_0^{\delta_4}\mathcal{V}_{4,4h}
    +C\tilde{\rho}_0^{\frac{7}{3}-4\alpha_3+2\alpha_5-\delta_4}\mathcal{N}_{re}.
  \end{equation}
  \par For $\Theta_{d,3}^\prime$, $(\chi_{p-l\notin B_F}\chi_{q+l\notin B_F}-1)\neq0$ implies
  $\phi^+(l)=0$, therefore $\Theta_{d,3}^\prime=0$.
  \par For $\Theta_{r,1}^\prime$, if $r-l=p\in B_F\cup A_{F,\delta_4}$, then $\phi^+(l)=0$. Therefore,
  \begin{equation}\label{Theta_r,1'}
    \begin{aligned}
    \Theta_{r,1}^\prime=-\sum_{\sigma,\nu,\varpi}\int_{\Lambda_3}
    v(x_1-x_3)&L^{\frac{3}{2}}\eta_{\phi^+}^{\zeta^-}(x_1-x_4)
    a^*(H_{x_1,\sigma}[\delta_4])a^*(H_{x_3,\nu}[\delta_4])a^*(h_{x_4,\varpi})\\
    &\times a(L_{x_4,\varpi}[\delta_4])a(g_{x_1,\sigma})a(H_{x_3,\nu}[\delta_4])
    dx_1dx_3dx_4+h.c.\\
    =-\sum_{\sigma,\nu,\varpi}\int_{\Lambda_2}
    v(x_1-x_3)&L^{\frac{3}{2}}a^*(H_{x_1,\sigma}[\delta_4])a^*(H_{x_3,\nu}[\delta_4])\\
    &\times b^*_{x_1,\varpi}(\eta_{\phi^+}^{\zeta^-})a(g_{x_1,\sigma})a(H_{x_3,\nu}[\delta_4])
    dx_1dx_3+h.c.
    \end{aligned}
  \end{equation}
  Using (\ref{ineq b 2 general}), we can bound it by
  \begin{equation}\label{Theta_r,1'bound}
    \pm\Theta_{r,1}^\prime\leq C\tilde{\rho}_0^{\delta_4}\mathcal{V}_{4,4h}
    +C\tilde{\rho}_0^{\frac{5}{3}+\alpha_2}\mathcal{N}_{re}.
  \end{equation}
  $\Theta_{r,2}^\prime$ can be bounded similarly:
  \begin{equation}\label{Theta_r,2'bound}
    \pm\Theta_{r,2}^\prime\leq C\tilde{\rho}_0^{\delta_4}\mathcal{V}_{4,4h}
    +C\tilde{\rho}_0^{\frac{5}{3}+\alpha_2}\mathcal{N}_{re}.
  \end{equation}
Therefore (\ref{cub control V_4,4h}), (\ref{cub com [V_4,4h,B']}) and (\ref{cub Theta_r'}) follow from (\ref{[V_4,4h,B']})-(\ref{Theta_r,2'bound}), Lemma \ref{lemma cub control N_re} and Gronwall's inequality.

  \par For $[\mathcal{V}_{4,2h}^{(1)},B^\prime]$, we have $[\mathcal{V}_{4,2h}^{(1)},B^\prime]=\sum_{j=1}^{4}\hat{\Theta}_{r,j}^\prime$
  \begin{align}
  \hat{\Theta}_{r,1}^\prime&=\frac{1}{L^3}\sum_{\substack{k,p,q\\\sigma,\nu}}\sum_{l,r,s}
  -\hat{v}_k\eta_l\phi^+(l)\zeta^-(l)\delta_{p,r-l}\delta_{q,s+l}(a^*_{p-k,\sigma}a^*_{q+k,\nu}
  a_{s,\nu}a_{r,\sigma}+h.c.)\nonumber\\
  &\quad\quad\quad\quad\times\chi_{p-k,q+k\in A_{F,\delta_4}}
  \chi_{p-k,q+k,p,q\in P_{F,\delta_4}}
  \chi_{r-l,s+l\notin B_F}\chi_{s\in A_{F,\delta_4}}\chi_{r\in B_F}\nonumber\\
  \hat{\Theta}_{r,2}^\prime&=\frac{1}{L^3}\sum_{\substack{k,p,q\\\sigma,\nu}}\sum_{\varpi,l,r,s}
  -\hat{v}_k\eta_l\phi^+(l)\zeta^-(l)\delta_{p,r-l}(a^*_{p-k,\sigma}a^*_{q+k,\nu}a^*_{s+l,\varpi}
  a_{q,\nu}a_{s,\varpi}a_{r,\sigma}+h.c.)\nonumber\\
  &\quad\quad\quad\quad\times\chi_{p-k,q+k\in A_{F,\delta_4}}
  \chi_{p-k,q+k,p,q\in P_{F,\delta_4}}
  \chi_{r-l,s+l\notin B_F}\chi_{s\in A_{F,\delta_4}}\chi_{r\in B_F}\nonumber\\
  \hat{\Theta}_{r,3}^\prime&=\frac{1}{L^3}\sum_{\substack{k,p,q\\\sigma,\nu}}\sum_{\tau,l,r,s}
  \hat{v}_k\eta_l\phi^+(l)\zeta^-(l)\delta_{p,s+l}(a^*_{p-k,\sigma}a^*_{q+k,\nu}a^*_{r-l,\tau}
  a_{q,\nu}a_{s,\sigma}a_{r,\tau}+h.c.)\nonumber\\
  &\quad\quad\quad\quad\times\chi_{p-k,q+k\in A_{F,\delta_4}}
  \chi_{p,q\in P_{F,\delta_4}}
  \chi_{r-l,s+l\notin B_F}\chi_{s\in A_{F,\delta_4}}\chi_{r\in B_F}\nonumber\\
  \hat{\Theta}_{r,4}^\prime&=\frac{1}{L^3}\sum_{\substack{k,p,q\\\sigma,\nu}}\sum_{\tau,l,r,s}
  -\hat{v}_k\eta_l\phi^+(l)\zeta^-(l)\delta_{p-k,s}(a^*_{r-l,\tau}a^*_{s+l,\sigma}a^*_{q+k,\nu}
  a_{q,\nu}a_{p,\sigma}a_{r,\tau}+h.c.)\nonumber\\
  &\quad\quad\quad\quad\times\chi_{p-k,q+k\in A_{F,\delta_4}}
  \chi_{p,q\in P_{F,\delta_4}}
  \chi_{r-l,s+l\notin B_F}\chi_{s\in A_{F,\delta_4}}\chi_{r\in B_F}
  \label{cal [V_4,2h,B'] detailed}
  \end{align}
  \par For $\hat{\Theta}_{r,1}^\prime$, if $r-l=p\in B_F\cup A_{F,\delta_4}$, then $\phi^+(l)=0$, and if we allow $s+l=q\in B_F\cup A_{F,\delta_4}$, then also $\phi^+(l)=0$. Therefore
  \begin{equation}\label{hat Theta_r1'}
  \begin{aligned}
    \hat{\Theta}_{r,1}^\prime=\sum_{\sigma,\nu}\int_{\Lambda_L^2}
    v(x-y)L^{\frac{3}{2}}&\eta_{\phi^+}^{\zeta^-}(x-y)
    a^*(L_{x,\sigma}[\delta_4])a^*(L_{y,\nu}[\delta_4])\\
    &\times a(L_{y,\nu}[\delta_4])
    a(g_{x,\sigma})dxdy+h.c.
  \end{aligned}
  \end{equation}
  \begin{equation}\label{hat Theta_r,1'bound}
    \pm\hat{\Theta}_{r,1}^\prime\leq C\tilde{\rho}_0^{1+\delta_4}\tilde{\mathcal{N}}_{re}
    +C\tilde{\rho}_0^{2+2\delta_4}L^3.
  \end{equation}
  \par For $\hat{\Theta}_{r,2}^\prime$ and $\hat{\Theta}_{r,3}^\prime$, with the above observation, similar to the bound of $\Theta_{r,1}^\prime$ in (\ref{Theta_r,1'}), we have
  \begin{equation}\label{hat Theta_r,2'bound}
    \pm\hat{\Theta}_{r,2}^\prime,\pm\hat{\Theta}_{r,3}^\prime\leq
    C\tilde{\rho}_0^{\frac{5}{3}+\alpha_2+\delta_4}{\mathcal{N}}_{h}[\delta_4]
    +C\tilde{\rho}_0^{2+2\delta_4}L^3.
  \end{equation}
  \par For $\hat{\Theta}_{r,4}^\prime$, we first notice that $\phi^+(l)>0$ implies $r-l,s+l\in P_{F,-\alpha_2}$, then we can write $\hat{\Theta}_{r,4}^\prime$ by
  \begin{equation}\label{hatTheta_r4'}
    \hat{\Theta}_{r,4}^\prime=-\frac{1}{L^3}\sum_{\sigma,\nu,\tau}\sum_{p_1\in A_{F,\delta_4}}
    \mathcal{A}_{p_1,\sigma,\tau}\mathcal{B}_{p_1,\sigma,\nu}+h.c.,
  \end{equation}
  where
  \begin{align*}
  \mathcal{A}_{p_1,\sigma,\tau}=&\int_{\Lambda_L}e^{ip_1x_4}L^{\frac{3}{2}}
  a^*(H_{x_4,\sigma}[-\alpha_2])
  b_{x_4,\tau}^*(\eta_{\phi^+}^{\zeta^-})dx_4\\
  \mathcal{B}_{p_1,\sigma,\nu}=&\int_{\Lambda_L^2}e^{-ip_1x_1}v(x_1-x_3)
  a^*(L_{x_3,\nu}[\delta_4])a(H_{x_3,\nu}[\delta_4])a(H_{x_1,\sigma}[\delta_4])dx_1dx_3
  \end{align*}
  Using (\ref{ineq b 2 general}), we can then bound it similar to (\ref{typical example E_V_3,l,12}):
  \begin{equation}\label{hat Theta_r,4'bound}
    \pm\hat{\Theta}_{r,4}^\prime\leq
    C\tilde{\rho}_0^{\frac{5}{3}+\alpha_2}{\mathcal{N}}_{h}[-\alpha_2]
    +C\tilde{\rho}_0^{\delta_4}\mathcal{V}_{4,4h}.
  \end{equation}
  Then we reach (\ref{cub com V_4 2h 1}) by combining (\ref{cal [V_4,2h,B'] detailed})-(\ref{hat Theta_r,4'bound}).

  \par (\ref{cub control V_4}) follows from (\ref{cub control V_4,4h})-(\ref{cub com V_4 2h 1}), (\ref{V_4split}) and Lemma \ref{lemma cub control N_re}.
\end{proof}

\begin{lemma}\label{lemma cub misc}
For any $N\in\mathbb{N}_{\geq0}$, and for $\frac{1}{24}>\frac{1}{2}\delta_3>\frac{1}{4}\alpha_3>\alpha_2>2\alpha_4>0$, $\frac{1}{3}>\delta_4\geq\delta_1\geq\delta_2>\frac{1}{12}$, $\delta_4>2\alpha_3+2\alpha_4$, $\beta_1=\frac{1}{3}+\alpha_5$ and $2\alpha_3>\alpha_5>2\alpha_4$, we have
 \begin{equation}\label{cub Q_GN}
   e^{-B^\prime}Q_{\mathcal{G}_N}e^{B^\prime}={Q}_{\mathcal{G}_N}
   +\mathcal{E}_{{Q}_{\mathcal{G}_N}},
 \end{equation}
 where
 \begin{equation}\label{cub bound E QGN}
   \pm\mathcal{E}_{{Q}_{\mathcal{G}_N}}\leq
   C\tilde{\rho}_0^{\frac{4}{3}+\frac{\alpha_3}{2}}\mathcal{N}_{re}.
 \end{equation}
 We also have
 \begin{equation}\label{cub V_3,R}
   \begin{aligned}
   \pm e^{-B^\prime}\mathcal{V}_{3,R}e^{B^\prime}\leq&
   C\tilde{\rho}_0^{1+\delta_2}\mathcal{N}_{re}+C\tilde{\rho}_0^{1+\delta_4}
   \tilde{\mathcal{N}}_{re}+C\tilde{\rho}_0^{1-\delta_2}\mathcal{N}_h[\delta_4]\\
   &+C\tilde{\rho}_0^{\delta_2}\mathcal{V}_{4,4h}
   +C\big(\tilde{\rho}_0^{2+2\delta_4}+\tilde{\rho}_0^{\frac{8}{3}
   +\delta_4-\delta_2+\alpha_3}\big)L^3,
   \end{aligned}
 \end{equation}
 and
 \begin{equation}\label{cub V_4,r}
   \pm e^{-B^\prime}\mathcal{V}_{4,r}e^{B^\prime}\leq
   C\tilde{\rho}_0\mathcal{N}_h[\delta_4]
   +C\tilde{\rho}_0^{2+2\delta_4}L^3.
 \end{equation}
 Moreover
 \begin{equation}\label{cub control E}
  \begin{aligned}
  &\pm e^{-B^\prime}\mathcal{E}_{\mathcal{G}_N}e^{B^\prime}\\
  \leq&
  C\big(\tilde{\rho}_0^{1+\delta_2}+\tilde{\rho}_0^{\frac{4}{3}+\alpha_2-3\delta_3-\alpha_4}
  +\tilde{\rho}_0^{\frac{3}{2}+\frac{\alpha_2}{2}+4\alpha_3-\frac{3}{2}\alpha_5}\big)
  \mathcal{N}_{re}+C\big(\tilde{\rho}_0^{1+\delta_2}
  +\tilde{\rho}_0^{\frac{4}{3}-\alpha_3}\big)\tilde{\mathcal{N}}_{re}\\
  &+C\tilde{\rho}_0\mathcal{N}_i[\delta_1]+C\tilde{\rho}_0^{1-\delta_2}\mathcal{N}_h[\delta_1]
  +C\tilde{\rho}_0^{\frac{2}{3}-\alpha_4}\mathcal{N}_h[-\delta_3]\\
  &+C\tilde{\rho}_0^{\frac{5}{6}+\frac{\alpha_2}{2}-\frac{3}{2}\alpha_5}
  \mathcal{N}_h[-\alpha_2]
  +C\big(\tilde{\rho}_0+\tilde{\rho}_0^{1+\alpha_5
  -4\alpha_3-\alpha_4}\big)\mathcal{N}_h[-\beta_1]\\
  &+C\big(\tilde{\rho}_0^{\frac{1}{3}+2\alpha_3}+\tilde{\rho}_0^{\frac{1}{3}+
  \alpha_5-\alpha_4}\big)\mathcal{K}_s+C\tilde{\rho}_0^{\frac{1}{3}+
  \alpha_4}\mathcal{V}_{4,4h}\\
  &+C\big(\tilde{\rho}_0^{\frac{7}{3}+\alpha_4}+\tilde{\rho}_0^{2+4\alpha_3-7\alpha_2}
  +\tilde{\rho}_0^{\frac{5}{2}+\frac{\alpha_2}{2}+\alpha_3-\frac{3}{2}\alpha_5}
  +\tilde{\rho}_0^{\frac{8}{3}+\alpha_5-3\alpha_3-\alpha_4}
  +\tilde{\rho}_0^{\frac{7}{3}+2\delta_4}\big)L^3.
  \end{aligned}
\end{equation}
\end{lemma}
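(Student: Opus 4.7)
The plan is to treat each of the four estimates separately, using the Newton--Leibniz formula together with the action-control lemmas (Lemmas~\ref{lemma cub control N_re}--\ref{lemma cub control V_4}) already established for the cubic renormalization $e^{B^\prime}$. The unifying idea is that $Q_{\mathcal{G}_N}$, $\mathcal{V}_{3,R}$, $\mathcal{V}_{4,r}$ and the individual pieces of $\mathcal{E}_{\mathcal{G}_N}$ each can be dominated by one of the controlled quantities $\mathcal{N}_{re}$, $\tilde{\mathcal{N}}_{re}$, $\mathcal{N}_h[\delta]$, $\mathcal{N}_i[\delta]$, $\mathcal{K}_s$, $\mathcal{K}_h[-\alpha_2]$, or $\mathcal{V}_{4,4h}$, whose images under $e^{\pm B^\prime}$ have already been bounded.

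For (\ref{cub Q_GN}), I would exploit that $Q_{\mathcal{G}_N}$ is (up to a scalar depending on $N$) just a multiple of $\tilde{\mathcal{N}}_{re}$, namely $Q_{\mathcal{G}_N}=-L^{-3}(\mathbf{q}-1)\bar{N}_0\sum_k \hat{v}_k\eta_k\,\tilde{\mathcal{N}}_{re}$. Writing
\begin{equation*}
e^{-B^\prime}Q_{\mathcal{G}_N}e^{B^\prime}=Q_{\mathcal{G}_N}+\int_{0}^{1}e^{-tB^\prime}[Q_{\mathcal{G}_N},B^\prime]e^{tB^\prime}\,dt,
\end{equation*}
the commutator $[Q_{\mathcal{G}_N},B^\prime]$ is a scalar multiple of $[\tilde{\mathcal{N}}_{re},B^\prime]$. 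The prefactor is $O(\tilde{\rho}_0)$ (from $\bar{N}_0/L^3$ and $\sum_k \hat{v}_k\eta_k=O(1)$), and (\ref{cub control diff N_retilde}) in Lemma~\ref{lemma cub control N_re} yields $\pm[\tilde{\mathcal{N}}_{re},B^\prime]\le C\tilde{\rho}_0^{\frac{1}{3}+\frac{\alpha_3}{2}}\mathcal{N}_{re}$. Conjugating back via (\ref{cub control N_re1}) produces the bound (\ref{cub bound E QGN}).

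For (\ref{cub V_3,R}) and (\ref{cub V_4,r}), the strategy is first to bound the untransformed operators, then to transport the bound through the conjugation. Observe that $\mathcal{V}_{3,R}=\mathcal{V}_{3,l}-\mathcal{V}_{3,L}$ consists of the part of $\mathcal{V}_{3,l}$ in which at least one of the outgoing momenta $p-k$ or $q+k$ lies in $A_{F,\delta_4}$, so splitting $a^*(h_{x,\sigma})=a^*(H_{x,\sigma}[\delta_4])+a^*(L_{x,\sigma}[\delta_4])$ and applying Cauchy--Schwartz as in the proof of Lemma~\ref{lemma V_1} yields
\begin{equation*}
\pm\mathcal{V}_{3,R}\le C\tilde{\rho}_0^{1+\delta_2}\mathcal{N}_{re}+C\tilde{\rho}_0^{1-\delta_2}\mathcal{N}_h[\delta_4]+C\tilde{\rho}_0^{\delta_2}\mathcal{V}_{4,4h}+C\tilde{\rho}_0^{1+\delta_4}\tilde{\mathcal{N}}_{re}+\text{(residual $L^3$-terms)},
\end{equation*}
and then Lemmas~\ref{lemma cub control N_re}--\ref{lemma cub control V_4} convert this into (\ref{cub V_3,R}). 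Similarly, (\ref{cub control V_4r}) already gives $\pm\mathcal{V}_{4,r}\le C\tilde{\rho}_0\mathcal{N}_h[\delta_4]+C\tilde{\rho}_0^{2+2\delta_4}L^3$, and (\ref{cub control N_h[delta]1}) applied with $\delta=\delta_4$ gives (\ref{cub V_4,r}) directly, with the $\mathcal{N}_h[-\alpha_2]$ error absorbed by the ambient terms since $\tilde{\rho}_0\cdot\tilde{\rho}_0^{\frac{2}{3}+\delta_4+4\alpha_3}\ll\tilde{\rho}_0^{2+2\delta_4}$ in the regime $\delta_4>2\alpha_3+2\alpha_4$.

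Finally, (\ref{cub control E}) is obtained termwise from the structure of $\mathcal{E}_{\mathcal{G}_N}$ in (\ref{qua prop E}): each summand is of the form $C\tilde{\rho}_0^{\alpha}\mathcal{O}$ with $\mathcal{O}\in\{\mathcal{N}_{re},\tilde{\mathcal{N}}_{re},\mathcal{N}_i[\delta_1],\mathcal{N}_h[\delta_1],\mathcal{N}_h[-\delta_3],\mathcal{N}_h[-\alpha_2],\mathcal{N}_h[-\beta_1],\mathcal{K}_s,\mathcal{V}_4\}$ or a scalar multiple of $L^3$. For each such $\mathcal{O}$ we invoke the corresponding lemma: (\ref{cub control N_re1}), (\ref{cub control N_retilde1}), (\ref{cub control N_i[delta]}), (\ref{cub control N_h[delta]1}), (\ref{cub control K_s}), (\ref{cub control V_4}). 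The $\mathcal{V}_4$ term in $\mathcal{E}_{\mathcal{G}_N}$ becomes a $\mathcal{V}_{4,4h}$ contribution (plus negligible $\mathcal{N}_{re}$, $\tilde{\mathcal{N}}_{re}$ and $\mathcal{N}_h[-\alpha_2]$ corrections with large powers of $\tilde{\rho}_0$), and the $\mathcal{N}_h[\delta]$ terms acquire extra $\mathcal{N}_h[-\alpha_2]$ contributions of order $\tilde{\rho}_0^{\frac{2}{3}+\delta_4+4\alpha_3}\mathcal{N}_h[-\alpha_2]$ that are dominated by the explicit $\tilde{\rho}_0^{\frac{5}{6}+\frac{\alpha_2}{2}-\frac{3}{2}\alpha_5}\mathcal{N}_h[-\alpha_2]$ already present in (\ref{cub control E}). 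The only delicate step is verifying that in the chosen parameter regime $\frac{1}{24}>\frac{1}{2}\delta_3>\frac{1}{4}\alpha_3>\alpha_2>2\alpha_4>0$, $\delta_4>2\alpha_3+2\alpha_4$, the extra $\tilde{\rho}_0^{\frac{5}{3}+\delta_4+\alpha_3}L^3$ residuals from Lemma~\ref{lemma cub control N_re} and the $\tilde{\rho}_0^{2+\delta_4}L^3$ residuals from Lemma~\ref{lem cub conrtol K} are all absorbed into the listed $L^3$-errors; this is a bookkeeping exercise in exponent comparison rather than an analytic difficulty.

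The main obstacle, and the only place where genuine care is needed, is the $\mathcal{V}_{3,R}$ estimate (\ref{cub V_3,R}): naively conjugating by $e^{B^\prime}$ produces cross-terms that couple low-momentum particles inside $A_{F,\delta_4}$ with the Bogoliubov-type four-body interaction generated by $B^\prime$, and one must argue (as in Lemma~\ref{lemma qua conj V_3}) that the appropriate $b^*$/$b$ splitting remains compatible with the cut-offs $\phi^+$, $\zeta^-$ on $B^\prime$ so that no infrared logarithm is picked up. All other pieces reduce by direct application of the control lemmas.
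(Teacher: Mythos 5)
Your overall strategy---first bound the untransformed operators $Q_{\mathcal{G}_N}$, $\mathcal{V}_{3,R}$, $\mathcal{V}_{4,r}$, $\mathcal{E}_{\mathcal{G}_N}$ by the controlled quantities and then transport the bounds through $e^{B^\prime}$ using Lemmas \ref{lemma cub control N_re}--\ref{lemma cub control V_4}---is exactly the paper's, and your treatments of (\ref{cub Q_GN}), (\ref{cub V_4,r}) and (\ref{cub control E}) coincide with it. The genuine gap is in (\ref{cub V_3,R}). Your description of $\mathcal{V}_{3,R}=\mathcal{V}_{3,l}-\mathcal{V}_{3,L}$ as ``the part of $\mathcal{V}_{3,l}$ in which at least one of $p-k$ or $q+k$ lies in $A_{F,\delta_4}$'' is incomplete: since $\mathcal{V}_{3,L}$ carries the restriction $\chi_{q\in A_{F,\delta_4}}$, the difference also contains the \emph{entire} $q\in P_{F,\delta_4}$ portion of $\mathcal{V}_{3,l}$, i.e.\ terms with the hole momentum $q$ far outside the Fermi surface (only tempered by the cutoff $\gamma^-(q)$ from (\ref{cutoffgamma})) and with $p-k,q+k$ possibly both deep in $P_{F,\delta_4}$; this is the paper's $\mathcal{V}_{3,R2}$. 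That omitted piece is precisely where the $\tilde{\rho}_0^{\delta_2}\mathcal{V}_{4,4h}$ and $\tilde{\rho}_0^{1-\delta_2}\mathcal{N}_h[\delta_4]$ terms in (\ref{cub V_3,R}) come from, via a weighted Cauchy--Schwarz that splits the quartic expression between $\mathcal{V}_{4,4h}$ and the particle-number operators; the splitting $a^*(h_{x,\sigma})=a^*(H_{x,\sigma}[\delta_4])+a^*(L_{x,\sigma}[\delta_4])$ ``as in Lemma \ref{lemma V_1}'' only handles the $q\in A_{F,\delta_4}$ piece ($\mathcal{V}_{3,R1}$), whose bound contains no $\mathcal{V}_{4,4h}$ at all. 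Your displayed intermediate inequality lists the correct terms, but the argument you sketch does not produce them for the whole of $\mathcal{V}_{3,R}$, so this step must be redone with the full two-piece decomposition.

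Two smaller points. The ``main obstacle'' you flag at the end---cross-terms from conjugating $\mathcal{V}_{3,R}$ by $e^{B^\prime}$---does not arise on the transport route you (and the paper) adopt: once $\mathcal{V}_{3,R}$ is dominated by $\mathcal{N}_{re},\tilde{\mathcal{N}}_{re},\mathcal{N}_h[\delta_4],\mathcal{V}_{4,4h}$ and $L^3$, the conjugation is handled entirely by (\ref{cub control N_re1}), (\ref{cub control N_h[delta]1}) and (\ref{cub control V_4,4h}); no commutator of $\mathcal{V}_{3,R}$ with $B^\prime$ is needed. Also, in (\ref{cub V_4,r}) the residual $\tilde{\rho}_0^{\frac{5}{3}+\delta_4+4\alpha_3}\mathcal{N}_h[-\alpha_2]$ from (\ref{cub control N_h[delta]1}) should be absorbed via $\mathcal{N}_h[-\alpha_2]\leq\mathcal{N}_h[\delta_4]$ (its coefficient is $\ll\tilde{\rho}_0$), not by comparing its coefficient with the scalar term $\tilde{\rho}_0^{2+2\delta_4}L^3$: that comparison is not a legitimate operator bound on $\mathcal{H}^{\wedge N}$ for arbitrary $N$, and the exponent inequality you invoke fails in part of the allowed parameter range anyway.
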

\begin{proof}
\par (\ref{cub Q_GN}) and (\ref{cub bound E QGN}) follow directly from (\ref{qua prop C and Q}) and (\ref{cub control diff N_retilde}). (\ref{cub V_4,r}) follows directly from (\ref{cub control N_h[delta]1}) and (\ref{cub control V_4r}). (\ref{cub control E}) follows from (\ref{qua prop E}) and Lemmas \ref{lemma cub control N_re}-\ref{lemma cub control V_4}.
  \par For (\ref{cub V_3,R}), by the definitions (\ref{cutoffgamma}), (\ref{qua V_3,l}), (\ref{divide V_3,l}) and (\ref{V_3,L}), we can write $\mathcal{V}_{3,R}=\mathcal{V}_{3,R1}
  +\mathcal{V}_{3,R2}$, where
    \begin{equation}\label{V_3,R 1 2}
    \begin{aligned}
    \mathcal{V}_{3,R1}=&\frac{1}{L^3}\sum_{k,p,q,\sigma,\nu}\hat{v}_k
(a_{p-k,\sigma}^*a_{q+k,\nu}^*a_{q,\nu}a_{p,\sigma}+h.c.)\\
&\quad\quad\quad\quad\times
(\chi_{p-k,q+k\in B_F}-\chi_{p-k,q+k\in P_{F,\delta_4}})
\chi_{q\in A_{F,\delta_4}}\chi_{p\in B_F}\\
\mathcal{V}_{3,R2}=&\frac{1}{L^3}\sum_{k,p,q,\sigma,\nu}\hat{v}_k\gamma^-(q)
(a_{p-k,\sigma}^*a_{q+k,\nu}^*a_{q,\nu}a_{p,\sigma}+h.c.)\\
&\quad\quad\quad\quad\times
(\chi_{p-k,q+k\in B_F}-\chi_{p-k,q+k\in P_{F,\delta_4}})
\chi_{q\in P_{F,\delta_4}}\chi_{p\in B_F}
    \end{aligned}
  \end{equation}
  Simply using Cauchy-Schwartz inequality, we have
  \begin{equation}\label{V_3,R1bound}
    \pm\mathcal{V}_{3,R1}\leq
    C\tilde{\rho}_0^{1+\delta_4}
   \tilde{\mathcal{N}}_{re}+C\tilde{\rho}_0\mathcal{N}_h[\delta_4]
   +C\tilde{\rho}_0^{2+2\delta_4}L^3,
  \end{equation}
  and
  \begin{equation}\label{V_3,R2bound}
  \begin{aligned}
    \pm\mathcal{V}_{3,R1}\leq&
    C\tilde{\rho}_0^{1+\delta_4}\mathcal{N}_{re}+C\tilde{\rho}_0^{1+\delta_4}
   \tilde{\mathcal{N}}_{re}+C\tilde{\rho}_0^{1-\delta_2}\mathcal{N}_h[\delta_4]\\
   &+C\tilde{\rho}_0^{\delta_2}\mathcal{V}_{4,4h}
   +C\tilde{\rho}_0^{2+2\delta_4}L^3,
   \end{aligned}
  \end{equation}
  which together yield (\ref{cub V_3,R}).
\end{proof}

\begin{lemma}\label{lemma cub V_21' Omega}
 For any $N\in\mathbb{N}_{\geq0}$, and for $\frac{1}{24}>\frac{1}{2}\delta_3>\frac{1}{4}\alpha_3>\alpha_2>2\alpha_4>0$, $\frac{1}{3}>\delta_4\geq\delta_1\geq\delta_2>\frac{1}{12}$, $\delta_4>2\alpha_3+2\alpha_4$, $\beta_1=\frac{1}{3}+\alpha_5$ and $2\alpha_3>\alpha_5>2\alpha_4$, we have
  \begin{equation}\label{cub V_21' Omega}
    e^{-B^\prime}(V^\prime_{21}+\Omega)e^{B^\prime}
    =V^\prime_{21}+\Omega+\mathcal{E}_{V^\prime_{21}}+\mathcal{E}_{\Omega},
  \end{equation}
  where
  \begin{equation}\label{cub bound E V_21' Omega}
    \begin{aligned}
    \pm(\mathcal{E}_{V^\prime_{21}}+\mathcal{E}_{\Omega})
    \leq&C\tilde{\rho}_0^{\frac{4}{3}+3\alpha_3+\alpha_4}\mathcal{N}_{re}
    +C\big(\tilde{\rho}_0^{\frac{4}{3}+\alpha_2-\alpha_4-3\alpha_3}+
    \tilde{\rho}_0^{\frac{5}{6}+\frac{\alpha_2}{2}
    +\frac{\delta_4}{2}-\frac{3}{2}\alpha_5}\big)\mathcal{N}_h[-\alpha_2]\\
    &+C\tilde{\rho}_0^{\frac{7}{3}+\alpha_4}L^3.
    \end{aligned}
  \end{equation}
\end{lemma}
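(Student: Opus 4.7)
The plan is to follow the same general structure as the proof of Lemma \ref{lemma qua V_21' Omega}, using the Duhamel/Newton--Leibniz identity
\begin{equation*}
  e^{-B'}(\mathcal{V}_{21}' + \Omega) e^{B'} = \mathcal{V}_{21}' + \Omega + \int_0^1 e^{-tB'}\big[\mathcal{V}_{21}' + \Omega, B'\big] e^{tB'}\, dt,
\end{equation*}
and then bound the double commutators term by term. The crucial new feature, compared with the bosonic quadratic renormalization, is that $B'$ carries the extra indicator $\chi_{q\in A_{F,\delta_4}}$, so one of the four momenta is restricted to the thin shell $A_{F,\delta_4}$ of volume $\sim \tilde\rho_0^{1+\delta_4}$. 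This is what turns what would otherwise be $O(\tilde\rho_0^2)L^3$ scalar contributions into genuinely negligible error.

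First, I would compute $[\mathcal{V}_{21}',B']$ and $[\Omega,B']$ by expanding the products of four creation/annihilation operators via the anticommutator relation (\ref{anticommutator}). Exactly as in (\ref{[V_21' B] XI}) and its $\Omega$-analogue, each commutator decomposes into four pieces $\Xi_1',\dots,\Xi_4'$ according to the number of contractions: a fully contracted scalar/quadratic term ($\Xi_1'$), two partially contracted sextic terms ($\Xi_2',\Xi_3'$), and a quartic ``off-diagonal'' term ($\Xi_4'$). For $\Xi_1'$, the scalar contribution would otherwise produce a constant of order $\tilde\rho_0^{7/3}L^3$ if the sum in $q$ ran over all of $B_F$; because of $\chi_{q\in A_{F,\delta_4}}$, the $q$-sum only contributes $\tilde\rho_0^{1+\delta_4}L^3$ instead of $\tilde\rho_0 L^3$, and combined with the scattering-length estimate $\sum_k |W_k\eta_k\phi^+(k)\zeta^-(k)|\lesssim 1$ from Lemma \ref{eta sum lemma} and the bounds on $W_k$ this is absorbed into the stated error. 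For $\Xi_2',\Xi_3',\Xi_4'$, I would rewrite them in position space using (\ref{define W}) and (\ref{etatilde}), and note that the cut-off $\phi^+$ in $B'$ forces $p-k,q+k\in P_{F,-\alpha_2}$, so that creation operators of $h$-type may be upgraded to $H[-\alpha_2]$-type. Lemma \ref{b^* bound by b general}, together with (\ref{special eta L2 norm})--(\ref{special zeta tilde eta W L2 norm}) and the $A_{F,\delta_4}$-shell factor, then controls each piece.

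The partially contracted term $\Xi_3'$ (the direct analogue of $\Xi_3$ in (\ref{Xi_3})) is expected to be the main obstacle. As in the quadratic case, I would introduce a dummy index $q_1\in B_F$ coming from the contraction and split the expression as
\begin{equation*}
  \Xi_3' = -\frac{2}{L^3}\sum_{\sigma,\nu,\varpi}\sum_{q_1\in B_F}\mathcal{A}_{q_1,\sigma,\varpi}\,\mathcal{B}_{q_1,\sigma,\nu}+h.c.,
\end{equation*}
apply Cauchy--Schwarz in $(q_1,\sigma,\nu,\varpi)$, and bound $\mathcal{A}$ by $b^*$-type operators (using Lemma \ref{b^* bound by b general}) while $\mathcal{B}$ is controlled by $W^{\zeta^-}$ in $L^2$ via (\ref{special W L2 norm}) and a projection onto $H[-\alpha_2]$. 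The two smallness factors that make the bound work are (i) the $A_{F,\delta_4}$-shell restriction giving a $\tilde\rho_0^{\delta_4/2}$ factor from the $\mathcal{B}$-side, and (ii) the high-frequency localization $\zeta^-,\phi^+$ providing the $\tilde\rho_0^{\alpha_5}$ saving on $W$ or $\eta$. For the $[\Omega,B']$ analogue I would additionally use the cancellation (\ref{cutoff effect}) when only the fully diagonal portion survives, so that the scalar piece reduces to a subleading contribution.

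Finally, once the explicit bounds on each $\Xi_j'$ of the form
\begin{equation*}
  \pm \Xi_j' \leq \text{(RHS of \eqref{cub bound E V_21' Omega})}
\end{equation*}
are in hand, I would conjugate by $e^{tB'}$ and apply Lemma \ref{lemma cub control N_re} (for $\mathcal{N}_{re}, \tilde{\mathcal{N}}_{re}, \mathcal{N}_i[\delta_4], \mathcal{N}_h[-\alpha_2]$), Lemma \ref{lem cub conrtol K} (for $\mathcal{K}_s,\mathcal{K}_h[-\alpha_2]$), and Lemma \ref{lemma cub control V_4} (for $\mathcal{V}_{4,4h}$) to propagate the estimates through the exponential. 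Integration in $t\in[0,1]$ preserves the bounds, yielding the claimed estimate on $\mathcal{E}_{\mathcal{V}_{21}'} + \mathcal{E}_\Omega$. The main technical difficulty throughout is the careful bookkeeping of which cut-offs ($\phi^+$, $\zeta^-$, and the shell indicator) supply which powers of $\tilde\rho_0$, so that the final exponents match those stated in \eqref{cub bound E V_21' Omega} without loss.
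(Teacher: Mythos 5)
Your overall strategy is the paper's: Newton--Leibniz, expand $[\mathcal{V}_{21}'+\Omega,B']$ into contraction classes, pass to position space using the fact that $\phi^+$ (resp.\ $\zeta^-$) upgrades the outside legs to $P_{F,-\alpha_2}$ (resp.\ confines them to $|k|\lesssim\tilde\mu^{1/2}\tilde\rho_0^{-\beta_1}$), bound the partially contracted pieces with Lemma \ref{b^* bound by b general} and the $\mathcal{A}$--$\mathcal{B}$ Cauchy--Schwarz splitting, and propagate through $e^{tB'}$ (for which only Lemma \ref{lemma cub control N_re} is actually needed, since the errors involve only $\mathcal{N}_{re}$, $\mathcal{N}_h[-\alpha_2]$ and $L^3$). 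However, your description of the algebra is off in a way that matters. Because $B'$ has three legs outside $B_F$ (at $p-k$, $q+k$ and $q\in A_{F,\delta_4}$) and only one inside, the commutator cannot be fully contracted to a scalar: the paper obtains five pieces $\Pi_1,\dots,\Pi_5$ (see \eqref{[V_21',B']}), and the most contracted one, $\Pi_1$, is still a quartic operator with $p,q,r\in B_F$, $s\in A_{F,\delta_4}$, bounded in \eqref{Pi_1 bound} by its norm, $C\tilde\rho_0^{7/3+\delta_4/2-\alpha_3}L^3$. There is no ``scalar contribution of order $\tilde\rho_0^{7/3}L^3$'' to cancel or shrink; your argument about it is aimed at a term that does not exist, although your numerics for it happen to land in the right regime.

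The concrete gap is in the $\Omega$-part of that same piece. You propose to dispose of the ``fully diagonal portion'' via the cancellation \eqref{cutoff effect}, but that identity is a statement about a pure number sum (odd in $k$ over a symmetric set) and cannot cancel the operator-valued quartic $\tilde\Pi_1$ arising from $[\Omega,B']$, whose coefficient carries the extra factor $\eta_k\,k(q-p)$. A direct position-space bound on $\tilde\Pi_1$ costs $\Vert\nabla\eta_{\phi^+}^{\zeta^-}\Vert_2\Vert\eta_{\phi^+}^{\zeta^-}\Vert_2\sim(\ell L)^{1/2}L^{-5/2}$ and, under the stated hypotheses (where $\delta_4$ may be only slightly above $2\alpha_3+2\alpha_4$ and $\delta_2>\tfrac1{12}$), falls short of the target $\tilde\rho_0^{7/3+\alpha_4}L^3$. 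The paper's fix is the integration by parts in \eqref{tildePi_1}--\eqref{Pi_1 tilde bound}: moving the derivative off $\eta_{\phi^+}^{\zeta^-}$ and onto $g_{x,\sigma}$ replaces $\Vert\nabla\eta\Vert_2\Vert\eta\Vert_2$ by $\Vert\eta_{\phi^+}^{\zeta^-}\Vert_2^2\sim\ell L^{-2}$ at the price of a harmless extra $\tilde\rho_0^{1/3}$, giving $C\tilde\rho_0^{7/3+\delta_4/2+\alpha_3}L^3$. Your proposal as written is missing this device (or an equivalent one), so the $\Omega$-commutator estimate would not close; the remaining pieces ($\Pi_2,\Pi_3$ by $L^1$ bounds with the shell factor, $\Pi_4,\Pi_5$ by the $\mathcal{A}$--$\mathcal{B}$ splitting with \eqref{special W L2 norm} and Lemma \ref{b^* bound by b general}) are handled essentially as you describe.
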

\begin{proof}
  \par For $\mathcal{V}_{21}^\prime$, we have $[\mathcal{V}_{21}^\prime,B^\prime]=\sum_{j=1}^{5}\Pi_j$, where,
  \begin{align}
  &\Pi_1=\frac{1}{L^3}\sum_{\substack{\sigma,\nu\\k,p,q}}\sum_{l,r,s}2W_k\eta_l
    \delta_{p-k,r-l}\delta_{q+k,s+l}a^*_{p,\sigma}a^*_{q,\nu}a_{s,\nu}a_{r,\sigma}\nonumber\\
    &\quad\quad\times\zeta^-(k)\phi^+(l)\zeta^-(l)
    \chi_{p-k,q+k,r-l,s+l\notin B_F}\chi_{p,q,r\in B_F}
    \chi_{s\in A_{F,\delta_4}}+h.c.\nonumber\\
    &\Pi_2=-\frac{1}{L^3}\sum_{\substack{\sigma,\nu\\k,p,q}}\sum_{\varpi,l,r,s}2W_k\eta_l
    \delta_{p-k,r-l}a^*_{p,\sigma}a^*_{q,\nu}a^*_{s+l,\varpi}
    a_{q+k,\nu}a_{s,\varpi}a_{r,\sigma}\nonumber\\
    &\quad\quad\times\zeta^-(k)\phi^+(l)\zeta^-(l)
    \chi_{p-k,q+k,r-l,s+l\notin B_F}\chi_{p,q,r\in B_F}
    \chi_{s\in A_{F,\delta_4}}+h.c.\nonumber\\
    &\Pi_3=-\frac{1}{L^3}\sum_{\substack{\sigma,\nu\\k,p,q}}\sum_{\tau,l,r,s}2W_k\eta_l
    \delta_{q+k,s+l}a^*_{p,\sigma}a^*_{q,\nu}a^*_{r-l,\tau}
    a_{p-k,\sigma}a_{s,\nu}a_{r,\tau}\nonumber\\
    &\quad\quad\times\zeta^-(k)\phi^+(l)\zeta^-(l)
    \chi_{p-k,q+k,r-l,s+l\notin B_F}\chi_{p,q,r\in B_F}
    \chi_{s\in A_{F,\delta_4}}+h.c.\nonumber\\
    &\Pi_4=\frac{1}{L^3}\sum_{\substack{\sigma,\nu\\k,p,q}}\sum_{\tau,l,r,s}2W_k\eta_l
    \delta_{q+k,s}a^*_{r-l,\tau}a^*_{s+l,\nu}a^*_{p-k,\sigma}a_{r,\tau}
    a_{q,\nu}a_{p,\sigma}\nonumber\\
    &\quad\quad\times\zeta^-(k)\phi^+(l)\zeta^-(l)
    \chi_{p-k,q+k,r-l,s+l\notin B_F}\chi_{p,q,r\in B_F}
    \chi_{s\in A_{F,\delta_4}}
    +h.c.\nonumber\\
    &\Pi_5=\frac{1}{L^3}\sum_{\substack{\sigma,\nu\\k,p,q}}\sum_{\varpi,l,r,s}2W_k\eta_l
    \delta_{p,r}a^*_{r-l,\sigma}a^*_{s+l,\varpi}a^*_{q,\nu}
    a_{s,\varpi}a_{q+k,\nu}a_{p-k,\sigma}\nonumber\\
    &\quad\quad\times\zeta^-(k)\phi^+(l)\zeta^-(l)
    \chi_{p-k,q+k,r-l,s+l\notin B_F}\chi_{p,q,r\in B_F}
    \chi_{s\in A_{F,\delta_4}}+h.c.\label{[V_21',B']}
  \end{align}
  Similarly, $[\Omega,B^\prime]=\sum_{j=1}^{5}\tilde{\Pi}_j$, with the coefficients $W_k\zeta^-(k)$ in each $\Pi_j$ replaced by $\eta_kk(p-q)\phi^+(k)\zeta^-(k)$.
  \par For $\Pi_1$, notice that
  \begin{equation}\label{notice}
    \phi^+(l)\chi_{r-l\notin B_F}\chi_{r\in B_F}=\phi^+(l)\chi_{r\in B_F},\quad
    \phi^+(l)\chi_{s+l\notin B_F}\chi_{s\in A_{F,\delta_4}}=\phi^+(l)
    \chi_{s\in A_{F,\delta_4}}.
  \end{equation}
  Therefore,
  \begin{equation}\label{Pi_1}
    \begin{aligned}
    \Pi_1=2\sum_{\sigma,\nu}\int_{\Lambda_L^2}W^{\zeta^-}(x-y)
    L^{\frac{3}{2}}&\eta_{\phi^+}^{\zeta^-}(x-y)a^*(g_{x,\sigma})a^*(g_{y,\nu})\\
    &\times
    a(L_{y,\nu}[\delta_4])a(g_{x,\sigma})dxdy+h.c.
    \end{aligned}
  \end{equation}
  We can bound it trivially by
  \begin{equation}\label{Pi_1 bound}
    \pm\Pi_1\leq CL^{\frac{3}{2}}\Vert W^{\zeta^-}\Vert_2\Vert\eta_{\phi^+}^{\zeta^-}\Vert_2L^3
    \tilde{\rho}_0^{2+\frac{\delta_4}{2}}  \leq C\tilde{\rho}_0^{\frac{7}{3}+\frac{\delta_4}{2}-\alpha_3}L^3.
  \end{equation}
  Similarly, for $\tilde{\Pi}_1$, we have for some universal constant $C$:
   \begin{equation}\label{tildePi_1}
    \begin{aligned}
    \tilde{\Pi}_1=C\sum_{\sigma,\nu}\int_{\Lambda_L^2}L^{\frac{3}{2}}
    \nabla_x\eta_{\phi^+}^{\zeta^-}(x-y)
    L^{\frac{3}{2}}&\eta_{\phi^+}^{\zeta^-}(x-y)a^*(\nabla_x g_{x,\sigma})a^*(g_{y,\nu})\\
    &\times
    a(L_{y,\nu}[\delta_4])a(g_{x,\sigma})dxdy+h.c.
    \end{aligned}
  \end{equation}
  Integrating by parts yields
  \begin{equation}\label{tildePi_1 2nd}
    \begin{aligned}
    \tilde{\Pi}_1=-\tilde{\Pi}_1-
    C\sum_{\sigma,\nu}\int_{\Lambda_L^2}
    L^{3}&\big\vert\eta_{\phi^+}^{\zeta^-}(x-y)\big\vert^2
    a^*(\Delta_x g_{x,\sigma})a^*(g_{y,\nu})\\
    &\times
    a(L_{y,\nu}[\delta_4])a(g_{x,\sigma})dxdy+h.c.\\
    -C\sum_{\sigma,\nu}\int_{\Lambda_L^2}L^{3}&\big\vert\eta_{\phi^+}^{\zeta^-}(x-y)\big\vert^2
    a^*(\nabla_x g_{x,\sigma})a^*(g_{y,\nu})\\
    &\times
    a(L_{y,\nu}[\delta_4])a(\nabla_xg_{x,\sigma})dxdy+h.c.
    \end{aligned}
  \end{equation}
  Therefore
  \begin{equation}\label{Pi_1 tilde bound}
    \pm\tilde{\Pi}_1\leq CL^{3}\Vert\eta_{\phi^+}^{\zeta^-}\Vert_2^2L^3
    \tilde{\rho}_0^{\frac{8}{3}+\frac{\delta_4}{2}}  \leq C\tilde{\rho}_0^{\frac{7}{3}+\frac{\delta_4}{2}+\alpha_3}L^3.
  \end{equation}
  \par For $\Pi_2$, by observation (\ref{notice}), we have
  \begin{equation}\label{Pi_2}
    \begin{aligned}
    \Pi_2=-2\sum_{\sigma,\nu,\varpi}\int_{\Lambda_L^3}&W^{\zeta^-}(x_1-x_3)
    L^{\frac{3}{2}}\eta_{\phi^+}^{\zeta^-}(x_1-x_4)a^*(g_{x_1,\sigma})a^*(g_{x_3,\nu})
    a^*(h_{x_4,\varpi})\\
    &\times a(h_{x_3,\nu})a(L_{x_4,\varpi}[\delta_4])a(g_{x_2,\sigma})dx_1dx_3dx_4+h.c.
    \end{aligned}
  \end{equation}
  We can bound it by
  \begin{equation}\label{Pi_2 bound}
    \pm\Pi_2\leq C\tilde{\rho}_0^{2+\frac{\delta_4}{2}}L^{\frac{3}{2}}
    \Vert W^{\zeta^-}\Vert_1\Vert\eta_{\phi^+}^{\zeta^-}\Vert_1\mathcal{N}_{re}
    \leq C\tilde{\rho}_0^{\frac{4}{3}+\frac{\delta_4}{2}+2\alpha_3}\mathcal{N}_{re}.
  \end{equation}
   $\Pi_3$ can be bounded in the same way:
  \begin{equation}\label{Pi_3 bound}
    \pm\Pi_3\leq C\tilde{\rho}_0^{\frac{4}{3}+\frac{\delta_4}{2}+2\alpha_3}\mathcal{N}_{re}.
  \end{equation}
  The bounds of $\tilde{\Pi}_2$ and $\tilde{\Pi}_3$ is completely the same, with even smaller upper bounds, we thus omit further details.
  \par For $\Pi_4$, we again use the fact that $\phi^+(l)>0$ implies $r-l,s+l\in P_{F,\delta_4}$, then we have
  \begin{equation}\label{Pi_4}
    \Pi_4=-\frac{2}{L^3}\sum_{\sigma,\nu,\tau}\sum_{p_2\in A_{F,\delta_4}}
    \mathcal{A}_{p_2,\nu,\tau}\mathcal{B}_{p_2,\sigma,\nu}+h.c.
  \end{equation}
  where
  \begin{align*}
    \mathcal{A}_{p_2,\nu,\tau}&=\int_{\Lambda_L}L^{\frac{3}{2}}e^{-ip_2x_4}
    a^*(H_{x_4,\nu}[-\alpha_2])b^*_{x_4,\tau}(\eta_{\phi^+}^{\zeta^-})dx_4\\
    \mathcal{B}_{p_2,\sigma,\nu}&=\int_{\Lambda_L}e^{ip_2x_3}a(g_{x_3,\nu})b
    _{x_3,\sigma}^*(W^{\zeta^-})dx_3
  \end{align*}
  Similar to (\ref{typical example E_V_3,l,12}), using Lemma \ref{b^* bound by b} deduces
  \begin{equation}\label{Pi_4 bound}
    \pm\Pi_4\leq C\tilde{\rho}_0^{\frac{4}{3}+\alpha_4+3\alpha_3}\mathcal{N}_{re}
    +C\tilde{\rho}_0^{\frac{4}{3}+\alpha_2-\alpha_4-3\alpha_3}
    \mathcal{N}_{h}[-\alpha_2]
    +C\tilde{\rho}_0^{\frac{7}{3}+\alpha_4}L^3,
  \end{equation}
  where we have used the fact that, for $\psi\in\mathcal{H}^{\wedge N}$,
  \begin{equation}\label{factb^2}
  \begin{aligned}
    \int_{\Lambda_L}\Vert b_{x_3,\sigma}(W^{\zeta^-})\psi\Vert^2 dx_3&\leq
    C\tilde{\rho}_0\int_{\Lambda_L^3}
    \vert W^{\zeta^-}(y_1-x_3)\vert\vert W^{\zeta^-}(y_2-x_3)\vert\\
    &\quad\quad\times
    \Vert a(h_{y_1,\sigma})\psi\Vert\Vert a(h_{y_2,\sigma})\psi\Vert dy_1dy_2dx_3\\
    &\leq C\tilde{\rho}_0\langle\mathcal{N}_{re}\psi,\psi\rangle.
  \end{aligned}
  \end{equation}
  $\tilde{\Pi}_4$ can be bounded similarly, with an even smaller upper bound, we thus omit further details.

  \par For $\Pi_5$, we make use of the fact that $\phi^+(l)>0$ implies $r-l,s+l\in P_{F,\delta_4}$. Moreover, we notice that $\zeta^-(k)>0$ implies $\vert p-k\vert,\vert q+k\vert\leq
  3\tilde{\mu}^{\frac{1}{2}}\tilde{\rho}_0^{-\beta_1}$. Thus we temporarily define
  \begin{equation*}
  \mathbf{h}_{x,\sigma}(z)=\sum_{k_F<\vert k\vert\leq 3\tilde{\mu}^{\frac{1}{2}}\tilde{\rho}
  _0^{-\beta_1}}\frac{e^{ikx}}{L^{\frac{3}{2}}}f_{k,\sigma}(z),
\end{equation*}
then $\Pi_5$ can be rewritten by
\begin{equation}\label{Pi_5}
  \Pi_5=-\frac{2}{L^3}\sum_{\sigma,\nu,\varpi}\sum_{q_1\in B_F}
  \mathcal{A}_{q_1,\sigma,\varpi}\mathcal{B}_{q_1,\sigma,\nu}+h.c.
\end{equation}
where (here the operator $b^*_{x_2,\varpi}(\eta_{\phi^+}^{\zeta^-})$ is defined in the form of (\ref{define btilde}), different from the one in the estimate of $\Pi_4$),
\begin{align*}
    \mathcal{A}_{q_1,\sigma,\varpi}&=\int_{\Lambda_L}L^{\frac{3}{2}}e^{-iq_1x_2}
    a^*(H_{x_2,\sigma}[-\alpha_2])b^*_{x_2,\varpi}(\eta_{\phi^+}^{\zeta^-})dx_4\\
    \mathcal{B}_{q_1,\sigma,\nu}&=\int_{\Lambda_L^2}e^{iq_1x_1}
    W^{\zeta^-}(x_1-x_3)a^*(g_{x_3,\nu})a(\mathbf{h}_{x_3,\nu})a(\mathbf{h}_{x_1,\sigma})dx_1dx_3
  \end{align*}
  Using $\mathbf{h}_{x_1,\sigma}=L_{x_1,\sigma}[-\alpha_2]+(\mathbf{h}_{x_1,\sigma}-
  L_{x_1,\sigma}[-\alpha_2])$, we further rewrite $\Pi_5=\Pi_{51}+\Pi_{52}$. By Lemma \ref{b^* bound by b general}, we can bound them analogously to (\ref{typical example E_V_3,l,12}):
  \begin{align*}
    \pm\Pi_{51}&\leq C\tilde{\rho}_0^{\frac{11}{6}+
    \frac{\delta_4}{2}+3\alpha_5-\frac{5}{2}\alpha_2}\mathcal{N}_{re}
    +C\tilde{\rho}_0^{\frac{5}{6}+\frac{\delta_4}{2}+\frac{\alpha_2}{2}-\frac{3}{2}\alpha_5}
    \mathcal{N}_{h}[-\alpha_2],\\
    \pm\Pi_{52}&\leq C\tilde{\rho}_0^{\frac{5}{6}+\frac{\delta_4}{2}+\frac{\alpha_2}{2}-\frac{3}{2}\alpha_5}
    \mathcal{N}_{h}[-\alpha_2].
  \end{align*}
  Thus
  \begin{equation}\label{Pi_5 bound}
    \pm\Pi_5\leq C\tilde{\rho}_0^{\frac{11}{6}+
    \frac{\delta_4}{2}+3\alpha_5-\frac{5}{2}\alpha_2}\mathcal{N}_{re}
    +C\tilde{\rho}_0^{\frac{5}{6}+\frac{\delta_4}{2}+\frac{\alpha_2}{2}-\frac{3}{2}\alpha_5}
    \mathcal{N}_{h}[-\alpha_2].
  \end{equation}
  $\tilde{\Pi}_5$ can be bounded similarly, with an even smaller upper bound, we thus omit further details.
  \par With all the estimates above, and Newton-Leibniz formula, and the help of Lemma \ref{lemma cub control N_re}, we reach (\ref{cub V_21' Omega}) and (\ref{cub bound E V_21' Omega}).
\end{proof}

\begin{lemma}\label{lemma cub core}
For any $N\in\mathbb{N}_{\geq0}$, and for $\frac{1}{24}>\frac{1}{2}\delta_3>\frac{1}{4}\alpha_3>\alpha_2>2\alpha_4>0$, $\frac{1}{3}>\delta_4\geq\delta_1\geq\delta_2>\frac{1}{12}$, $\delta_4>2\alpha_3+2\alpha_4$, $\beta_1=\frac{1}{3}+\alpha_5$ and $2\alpha_3>\alpha_5>2\alpha_4$, we have
  \begin{equation}\label{cub core}
    \int_{0}^{1}\int_{t}^{1}e^{-sB^\prime}[\mathcal{V}_{3,L},B^\prime]e^{sB^\prime}dsdt
    =\frac{1}{L^3}\sum_{k}\hat{v}_k\eta_k(\mathbf{q}-1)\bar{N}_0\mathcal{N}_{re}
    +\mathcal{E}_{[\mathcal{V}_{3,L},B^\prime]},
  \end{equation}
  where
  \begin{equation}\label{cub bound Ecore}
    \begin{aligned}
    \pm\mathcal{E}_{[\mathcal{V}_{3,L},B^\prime]}
    \leq& C\tilde{\rho}_0^{1+\delta_4}\mathcal{N}_{re}
    +C\tilde{\rho}_0^{1+\delta_4}\tilde{\mathcal{N}}_{re}
    +C\tilde{\rho}_0\mathcal{N}_h[\delta_4]\\
    &+C\tilde{\rho}_0^{\frac{1}{3}}\mathcal{V}_{4,4h}
    +C\tilde{\rho}_0^{\frac{7}{3}+\delta_4}L^3.
    \end{aligned}
  \end{equation}
\end{lemma}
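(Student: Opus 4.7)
My plan is to compute $[\mathcal{V}_{3,L}, B^\prime]$ explicitly by expanding $B^\prime = A^\prime - A^{\prime *}$ and normal-ordering via repeated application of the canonical anticommutator (\ref{anticommutator}), exactly paralleling the treatment of $[\mathcal{V}_{21}, B]$ in Lemma \ref{lemma qua V_21}. Writing $\mathcal{V}_{3,L} = X + X^*$, where $X$ is the quartic piece with $p\in B_F$, $q\in A_{F,\delta_4}$, $p-k,q+k\in P_{F,\delta_4}$, the commutator splits as $[X, A'] - [X, A^{\prime *}] + [X^*, A'] - [X^*, A^{\prime *}]$. Each piece generates a sum of terms labelled by the number and type of Wick contractions, which I will denote $\Xi_1^{(3)},\dots,\Xi_4^{(3)}$ in analogy with (\ref{[V_21 B] XI}). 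The main term of (\ref{cub core}) comes entirely from the doubly-contracted piece $\Xi_1^{(3)}$.

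More precisely, the leading term arises from the contraction pattern pairing the two ``outside'' momenta of $\mathcal{V}_{3,L}$ with those of $A^\prime$ via $\delta_{p-k,r-l}\delta_{q+k,s+l}$, and then further pairing one of the two ``inside'' momenta via $\delta_{p,r}$ (or $\delta_{q,s}$). With $r\in B_F$ constrained and summing the surviving inner-Fermi-ball index together with the spin $\tau\neq\sigma$, one obtains a factor $(\mathbf{q}-1)\bar{N}_0$; the remaining $a^*a$ bilinear is supported outside $B_F$ and yields $\mathcal{N}_{re}$; and the coefficient collapses to $\frac{1}{L^3}\sum_k \hat{v}_k \eta_k$. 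The positive sign is exactly what is needed to combine with the $-\tilde{\mathcal{N}}_{re}$ term of $Q_{\mathcal{G}_N}$ from (\ref{qua prop C and Q}) to produce $\mathcal{N}_{re}-\tilde{\mathcal{N}}_{re}=N-\mathbf{q}\bar{N}_0$, matching the expected $\tfrac{4}{3\pi}\mathfrak{a}_0\tilde\mu^{3/2}(\mathbf{q}-1)(N-\mathbf{q}\bar{N}_0)$ coefficient in (\ref{cub C_J_N}) after applying (\ref{discrete int vf}) and (\ref{est of int vf_l}) together with (\ref{average particle number}).

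The remaining contraction patterns $\Xi_2^{(3)},\Xi_3^{(3)},\Xi_4^{(3)}$, which are at least sextic in creation/annihilation operators, will be bounded by passing to position space and introducing the $b^*$-type operators of Section \ref{other}, then applying Lemma \ref{b^* bound by b general} and Lemma \ref{lemma A_k B_k} together with the frequency-cutoff estimate Lemma \ref{cutoff lemma}. Decomposing $h_{x,\sigma} = H_{x,\sigma}[\delta_4] + L_{x,\sigma}[\delta_4]$ allows me to separate contributions according to whether a free momentum lies in $P_{F,\delta_4}$ (giving $\mathcal{N}_h[\delta_4]$ or $\mathcal{V}_{4,4h}$ bounds) or in $A_{F,\delta_4}$ (giving factors $\tilde\rho_0^{\delta_4}$ from the thin shell). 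The crucial structural observation, used repeatedly already in the bounds of $\hat\Xi_2$ in (\ref{hatXi_2}) and of $\Theta_{r,1}^\prime$ in (\ref{Theta_r,1'}), is that whenever a ``hole'' momentum of the form $r-l$ or $s+l$ with $r\in B_F$, $s\in A_{F,\delta_4}$ is forced back into $B_F\cup A_{F,\delta_4}$ by a Kronecker delta, the cutoff $\phi^+(l)$ necessarily vanishes. This kills the most dangerous cross contractions and leaves only configurations that are estimable at order $\tilde\rho_0^{7/3+\delta_4}L^3$ or better, matching (\ref{cub bound Ecore}).

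Finally, to pass from the commutator $[\mathcal{V}_{3,L}, B^\prime]$ to the conjugated integral $\int_0^1\int_t^1 e^{-sB^\prime}[\mathcal{V}_{3,L},B^\prime]e^{sB^\prime}\,ds\,dt$, I invoke Lemmas \ref{lemma cub control N_re}--\ref{lemma cub control V_4} to control each of $\mathcal{N}_{re}, \tilde{\mathcal{N}}_{re}, \mathcal{N}_h[\delta_4], \mathcal{V}_{4,4h}$ under conjugation by $e^{sB^\prime}$ uniformly in $|s|\leq 1$. By Newton--Leibniz and (\ref{cub control diff N_re}), the difference $e^{-sB^\prime}\mathcal{N}_{re}e^{sB^\prime}-\mathcal{N}_{re}$ is of size $\tilde\rho_0^{1/3+\alpha_3/2}\mathcal{N}_{re}$, which is absorbed in the error, so the main term $\tfrac{1}{L^3}\sum_k \hat v_k\eta_k(\mathbf{q}-1)\bar N_0 \mathcal{N}_{re}$ survives intact. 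The principal obstacle I anticipate is bookkeeping: the commutator of two quartic fermionic operators with asymmetric sharp momentum constraints proliferates into a long list of sextic residual terms, and verifying case by case that the joint constraints from $\phi^+,\zeta^-$ and the sharp projections $\chi_{\cdot\in B_F},\chi_{\cdot\in A_{F,\delta_4}},\chi_{\cdot\in P_{F,\delta_4}}$ always confine dangerous pairings into regimes accessible by the $b^*$-estimates of Section \ref{other}.
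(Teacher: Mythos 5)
Your overall strategy is the same as the paper's: expand $[\mathcal{V}_{3,L},B^\prime]$ into contraction classes (the paper uses thirteen terms $\Upsilon_1,\dots,\Upsilon_{13}$), extract the main term from the doubly-contracted quartic pieces, kill the dangerous residuals via the observation that $\phi^+(l)$ vanishes whenever a Kronecker delta forces $r-l$ or $s+l$ back into $B_F\cup A_{F,\delta_4}$, bound the rest with the $b^*$-machinery of Section \ref{other}, and finally conjugate using Lemmas \ref{lemma cub control N_re}--\ref{lemma cub control V_4}. That part is sound.

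There is, however, a genuine gap in your identification of the main term. You claim the factor $(\mathbf{q}-1)$ emerges from a single (direct) double-contraction pattern $\delta_{p-k,r-l}\delta_{q+k,s+l}$ by ``summing the spin $\tau\neq\sigma$.'' No such spin restriction exists in $\mathcal{V}_{3,L}$ or $A^\prime$: both sums run over all spin pairs. In the paper's computation, the diagonal ($k=l$) part of the direct pattern, $\Upsilon_{1,1}$, yields $+\frac{2}{L^3}\sum_k\hat{v}_k\eta_k\,\mathbf{q}\bar{N}_0\mathcal{N}_{re}$ (the full spin sum produces $\mathbf{q}$), while the fermionic \emph{exchange} double contraction $\delta_{p-k,s+l}\delta_{q+k,r-l}$ (the paper's $\Upsilon_2$) contributes at the \emph{same leading order} with opposite sign and coefficient $1$, namely $-\frac{2}{L^3}\sum_k\hat{v}_k\eta_k\,\bar{N}_0\mathcal{N}_{re}$, because the exchange deltas force the two spins to coincide. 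The factor $(\mathbf{q}-1)$ is precisely direct minus exchange. Your proposal relegates all patterns other than the direct one to the error terms; but the exchange term is of size $\sim\tilde{\rho}_0\mathcal{N}_{re}$ (since $\sum_k\hat v_k\eta_k=O(1)$ and $\bar N_0/L^3\sim\tilde\rho_0$), which cannot be absorbed into the allowed error $C\tilde{\rho}_0^{1+\delta_4}\mathcal{N}_{re}$ of (\ref{cub bound Ecore}). As written, your computation would produce the wrong coefficient $\mathbf{q}\bar N_0$ in place of $(\mathbf{q}-1)\bar N_0$ and leave an uncontrolled leading-order remainder; you must compute the exchange contraction explicitly (it also requires the estimate $\sum_k|\hat v_{k+p-q}-\hat v_k||\eta_k|\leq C\tilde\rho_0^{1/3}$ of (\ref{sumeta diff v}) to reduce it to its diagonal value). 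A secondary, minor inaccuracy: not all residual contractions are ``at least sextic''; e.g.\ the paper's $\Upsilon_{13}$ is quartic and needs the $1=|s+l|^{-2}(|s|^2+2s\cdot l+|l|^2)$ trick together with the $b^*$-bounds, which is within the toolkit you cite.
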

\begin{proof}
  \par We divide $[\mathcal{V}_{3,L},B^\prime]=\sum_{j=1}^{13}\Upsilon_{j}$, that is, $13$ parts. To avoid occupying too much space, we only write out each part when it is that part's turn. For $\Upsilon_1$:
  \begin{equation}\label{Upsilon 1}
    \begin{aligned}
    \Upsilon_1&=\frac{1}{L^3}\sum_{\substack{\sigma,\nu\\k,p,q}}\sum_{l,r,s}
    \hat{v}_k\eta_l\delta_{p-k,r-l}\delta_{q+k,s+l}
    a^*_{p,\sigma}a^*_{q,\nu}a_{s,\nu}a_{r,\sigma}\\
    &\quad\quad\times
    \phi^+(l)\zeta^-(l)\chi_{p-k,q+k\in P_{F,\delta_4}}\chi_{r-l,s+l\notin B_F}
    \chi_{p,r\in B_F}\chi_{q,s\in A_{F,\delta_4}}+h.c.
    \end{aligned}
  \end{equation}
  The calculation to $\Upsilon_1$ follows exactly from the process in analyzing the corresponding $\Upsilon_1$ in \cite[Lemma 8.7]{WJH}. We just need to use the corresponding estimates of $\eta$ and $v$ in the thermodynamic limit, collecting in Section \ref{scattering eqn sec}. Using $1=\chi_{k=l}+\chi_{k\neq l}$, we can rewrite $\Upsilon_1=\Upsilon_{1,1}+\Upsilon_{1,2}$. For $\Upsilon_{1,1}$, we have
  \begin{align}
    \Upsilon_{1,1}=&\frac{2}{L^3}\sum_{\substack{\sigma,\nu\\k,p,q}}
    \hat{v}_k\eta_k\phi^+(k)\zeta^-(k)a_{p,\sigma}^*a_{p,\sigma}
    a_{q,\nu}^*a_{q,\nu}\chi_{p\in B_F}\chi_{q\in A_{F,\delta_4}}\chi_{p-k,q+k\in P_{F,\delta_4}}\nonumber\\
    =&\frac{2}{L^3}\sum_{\substack{\sigma,\nu\\k,p,q}}
    \hat{v}_k\eta_k\phi^+(k)\zeta^-(k)a_{p,\sigma}^*a_{p,\sigma}
    a_{q,\nu}^*a_{q,\nu}\chi_{p\in B_F}\chi_{q\in A_{F,\delta_4}}
  \end{align}
  By (\ref{est of eta_0}) and (\ref{eta cutoff sum}), we have
  \begin{equation}\label{etav cutoff est}
    \Big\vert\sum_{k}\hat{v}_k\eta_k\phi^-(k)\Big\vert\leq C\tilde{\rho}_0^
    {\frac{1}{3}+2\alpha_3-3\alpha_2},\quad
    \Big\vert\sum_{k}\hat{v}_k\eta_k\zeta^+(k)\Big\vert\leq C\tilde{\rho}_0^
    {\frac{2}{3}-2\alpha_3+\alpha_5}.
  \end{equation}
  Also notice the fact that $L^{-3}\tilde{N}_{re}\mathcal{N}_h[\delta_4]\leq
  C\tilde{\rho}_0^{1+\delta_4}\tilde{\mathcal{N}}_{re}$, we calculate
  \begin{equation}\label{Upsilon 1bound}
    \begin{aligned}
    \Upsilon_{1,1}=\frac{2}{L^3}\sum_{k}\hat{v}_k\eta_k\mathbf{q}\bar{N}_0\mathcal{N}_{re}
    +\mathcal{E}_{\Upsilon_{1,1}},
    \end{aligned}
  \end{equation}
  where
  \begin{equation}\label{E_Upsilon 1}
    \pm\mathcal{E}_{\Upsilon_{1,1}}\leq C\tilde{\rho}_0^{1+\delta_4}\tilde{\mathcal{N}}_{re}+
    C\tilde{\rho}_0\mathcal{N}_h[\delta_4]+
    C\big(\tilde{\rho}_0^{\frac{7}{3}+2\alpha_3+\delta_4-3\alpha_2}
    +\tilde{\rho}_0^{\frac{8}{3}+\alpha_5+\delta_4-2\alpha_3}\big)L^3.
  \end{equation}
For $\Upsilon_{1,2}$, we have
\begin{align}
    \Upsilon_{1,2}&=-\frac{1}{L^3}\sum_{\substack{\sigma,\nu\\k,p,q}}\sum_{l,r,s}
    \hat{v}_k\eta_l\delta_{p-k,r-l}\delta_{q+k,s+l}
    a_{r,\sigma}a^*_{q,\nu}a_{s,\nu}a^*_{p,\sigma}\chi_{k\neq l}\nonumber\\
    &\quad\quad\times
    \phi^+(l)\zeta^-(l)\chi_{p-k,q+k\in P_{F,\delta_4}}\chi_{r-l,s+l\notin B_F}
    \chi_{p,r\in B_F}\chi_{q,s\in A_{F,\delta_4}}+h.c.\nonumber\\
    &=-\frac{1}{L^3}\sum_{\substack{\sigma,\nu\\k,p,q}}\sum_{l,r,s}
    \hat{v}_k\eta_l\delta_{p-k,r-l}\delta_{q+k,s+l}
    a_{r,\sigma}a^*_{q,\nu}a_{s,\nu}a^*_{p,\sigma}(1-\chi_{k=l})\nonumber\\
    &\quad\quad\times
    \phi^+(l)\zeta^-(l)
    \chi_{p,r\in B_F}\chi_{q,s\in A_{F,\delta_4}}+h.c\label{Upsilon 12}
    \end{align}
    and the following trivial bound
    \begin{equation}\label{Upsilon 12 bound}
      \pm\Upsilon_{1,2}\leq C\tilde{\rho}_0^{1+\delta_4}\tilde{\mathcal{N}}_{re}.
    \end{equation}
  \par For $\Upsilon_2$:
  \begin{equation}\label{Upsilon 2}
    \begin{aligned}
    \Upsilon_2&=-\frac{1}{L^3}\sum_{\substack{\sigma,\nu\\k,p,q}}\sum_{l,r,s}
  \hat{v}_k\eta_l\delta_{p-k,s+l}\delta_{q+k,r-l}
    a^*_{p,\sigma}a^*_{q,\nu}a_{s,\sigma}a_{r,\nu}\\
    &\quad\quad\times
    \phi^+(l)\zeta^-(l)\chi_{p-k,q+k\in P_{F,\delta_4}}\chi_{r-l,s+l\notin B_F}
    \chi_{p,r\in B_F}\chi_{q,s\in A_{F,\delta_4}}+h.c.
    \end{aligned}
  \end{equation}
  Our computation here follows exactly from the corresponding $\Upsilon_2$ in \cite[Lemma 8.7]{WJH}, and is analogous to the estimate of $\Upsilon_1$ above. Notice in the thermodynamic limit, we also have an estimate corresponding to \cite[(8.85)]{WJH}, by (\ref{eta sum}):
  \begin{equation}\label{sumeta diff v}
    \sum_{k}\vert \hat{v}_{k+p-q}-\hat{v}_k\vert\vert\eta_k\vert\leq C\tilde{\rho}_0^{\frac{1}{3}},
  \end{equation}
  for $p\in B_F$ and $q\in A_{F,\delta_4}$. Thus we write out the result directly:
  \begin{equation}\label{Upsilon 2bound}
    \begin{aligned}
    \Upsilon_2=-\frac{2}{L^3}\sum_{k}\hat{v}_k\eta_k\bar{N}_0\mathcal{N}_{re}
    +\mathcal{E}_{\Upsilon_2},
    \end{aligned}
  \end{equation}
  where
  \begin{equation}\label{E_Upsilon 2}
    \pm\mathcal{E}_{\Upsilon_2}\leq C\tilde{\rho}_0^{1+\delta_4}\tilde{\mathcal{N}}_{re}+
    C\tilde{\rho}_0\mathcal{N}_h[\delta_4]+
    C\tilde{\rho}_0^{\frac{7}{3}+\delta_4}L^3.
  \end{equation}

  \par For $\Upsilon_3$ and $\Upsilon_4$:
  \begin{equation}\label{Upsilon 3 4}
    \begin{aligned}
    &\Upsilon_3=\frac{1}{L^3}\sum_{\substack{\sigma,\nu\\k,p,q}}\sum_{\varpi,l,r,s}
  \hat{v}_k\eta_l\delta_{q,r-l}
  a^*_{p-k,\sigma}a^*_{q+k,\nu}a^*_{s+l,\varpi}a_{p,\sigma}a_{s,\varpi}a_{r,\nu}\\
  &\quad\quad\times
  \phi^+(l)\zeta^-(l)\chi_{p-k,q+k\in P_{F,\delta_4}}\chi_{r-l,s+l\notin B_F}
    \chi_{p,r\in B_F}\chi_{q,s\in A_{F,\delta_4}}+h.c.\\
  &\Upsilon_4=-\frac{1}{L^3}\sum_{\substack{\sigma,\nu\\k,p,q}}\sum_{\tau,l,r,s}
  \hat{v}_k\eta_l\delta_{q,s+l}
  a^*_{p-k,\sigma}a^*_{q+k,\nu}a^*_{r-l,\tau}a_{p,\sigma}a_{s,\nu}a_{r,\tau}\\
  &\quad\quad\times
 \phi^+(l)\zeta^-(l)\chi_{p-k,q+k\in P_{F,\delta_4}}\chi_{r-l,s+l\notin B_F}
    \chi_{p,r\in B_F}\chi_{q,s\in A_{F,\delta_4}}+h.c.
    \end{aligned}
  \end{equation}
  Since $r-l\in A_{F,\delta_4}$ or $s+l\in A_{F,\delta_4}$ implies $\phi^+(l)=0$, we have $\Upsilon_3=\Upsilon_4=0$.

  \par For $\Upsilon_5$ and $\Upsilon_6$:
  \begin{equation}\label{Upsilon 5 6}
    \begin{aligned}
    &\Upsilon_5=\frac{1}{L^3}\sum_{\substack{\sigma,\nu\\k,p,q}}\sum_{\tau,l,r,s}
  \hat{v}_k\eta_l\delta_{q+k,s}
  a^*_{r-l,\tau}a^*_{s+l,\nu}a^*_{p-k,\sigma}a_{r,\tau}a_{q,\nu}a_{p,\sigma}\\
  &\quad\quad\times
  \phi^+(l)\zeta^-(l)\chi_{p-k,q+k\in P_{F,\delta_4}}\chi_{r-l,s+l\notin B_F}
    \chi_{p,r\in B_F}\chi_{q,s\in A_{F,\delta_4}}+h.c.\\
  &\Upsilon_6=-\frac{1}{L^3}\sum_{\substack{\sigma,\nu\\k,p,q}}\sum_{\tau,l,r,s}
  \hat{v}_k\eta_l\delta_{p-k,s}
  a^*_{r-l,\tau}a^*_{s+l,\sigma}a^*_{q+k,\nu}a_{r,\tau}a_{q,\nu}a_{p,\sigma}\\
  &\quad\quad\times
  \phi^+(l)\zeta^-(l)\chi_{p-k,q+k\in P_{F,\delta_4}}\chi_{r-l,s+l\notin B_F}
    \chi_{p,r\in B_F}\chi_{q,s\in A_{F,\delta_4}}+h.c.
    \end{aligned}
  \end{equation}
  Since ${p-k,q+k\in P_{F,\delta_4}}$ while ${s\in A_{F,\delta_4}}$, we have
  $\Upsilon_5=\Upsilon_6=0$.

  \par For $\Upsilon_7\text{\textendash}\Upsilon_{10}$, we have
  \begin{align}
  &\Upsilon_7=\frac{1}{L^3}\sum_{\substack{\sigma,\nu\\k,p,q}}\sum_{\varpi,l,r,s}
  \hat{v}_k\eta_l\delta_{q+k,r-l}
  a^*_{p,\sigma}a^*_{q,\nu}a^*_{s+l,\varpi}a_{p-k,\sigma}a_{s,\varpi}a_{r,\nu}\nonumber\\
  &\quad\quad\times
  \phi^+(l)\zeta^-(l)\chi_{p-k,q+k\in P_{F,\delta_4}}\chi_{r-l,s+l\notin B_F}
    \chi_{p,r\in B_F}\chi_{q,s\in A_{F,\delta_4}}+h.c.\nonumber\\
  &\Upsilon_8=-\frac{1}{L^3}\sum_{\substack{\sigma,\nu\\k,p,q}}\sum_{\varpi,l,r,s}
  \hat{v}_k\eta_l\delta_{p-k,r-l}
  a^*_{p,\sigma}a^*_{q,\nu}a^*_{s+l,\varpi}a_{q+k,\nu}a_{s,\varpi}a_{r,\sigma}\nonumber\\
  &\quad\quad\times
  \phi^+(l)\zeta^-(l)\chi_{p-k,q+k\in P_{F,\delta_4}}\chi_{r-l,s+l\notin B_F}
    \chi_{p,r\in B_F}\chi_{q,s\in A_{F,\delta_4}}+h.c.\nonumber\\
  &\Upsilon_9=\frac{1}{L^3}\sum_{\substack{\sigma,\nu\\k,p,q}}\sum_{\tau,l,r,s}
  \hat{v}_k\eta_l\delta_{p-k,s+l}
  a^*_{p,\sigma}a^*_{q,\nu}a^*_{r-l,\tau}a_{q+k,\nu}a_{s,\sigma}a_{r,\tau}\nonumber\\
  &\quad\quad\times
  \phi^+(l)\zeta^-(l)\chi_{p-k,q+k\in P_{F,\delta_4}}\chi_{r-l,s+l\notin B_F}
    \chi_{p,r\in B_F}\chi_{q,s\in A_{F,\delta_4}}+h.c.\nonumber\\
  &\Upsilon_{10}=-\frac{1}{L^3}\sum_{\substack{\sigma,\nu\\k,p,q}}\sum_{\tau,l,r,s}
  \hat{v}_k\eta_l\delta_{q+k,s+l}
  a^*_{p,\sigma}a^*_{q,\nu}a^*_{r-l,\tau}a_{p-k,\sigma}a_{s,\nu}a_{r,\tau}\nonumber\\
  &\quad\quad\times
  \phi^+(l)\zeta^-(l)\chi_{p-k,q+k\in P_{F,\delta_4}}\chi_{r-l,s+l\notin B_F}
    \chi_{p,r\in B_F}\chi_{q,s\in A_{F,\delta_4}}+h.c.\label{Upsilon 7-10}
  \end{align}
  Notice that
   \begin{equation}\label{notice2}
    \phi^+(l)\chi_{r-l\notin A_{F,\delta_4}}\chi_{r\in B_F}=\phi^+(l)\chi_{r\in B_F},\quad
    \phi^+(l)\chi_{s+l\notin A_{F,\delta_4}}\chi_{s\in A_{F,\delta_4}}=\phi^+(l)
    \chi_{s\in A_{F,\delta_4}},
  \end{equation}
  we can write $\Upsilon_7$ by
  \begin{equation}\label{Upsilon7}
    \begin{aligned}
    \Upsilon_7=\sum_{\sigma,\nu,\varpi}\int_{\Lambda_L^3}v(x_1-x_3)&L^{\frac{3}{2}}
    \eta_{\phi^+}^{\zeta^-}(x_3-x_4)a^*(g_{x_1,\sigma})a^*(L_{x_3,\nu}[\delta_4])
    a^*(h_{x_4,\varpi})\\
    &\times a(H_{x_1,\sigma}[\delta_4])a(L_{x_4,\varpi}[\delta_4])a(g_{x_3,\nu})
    dx_1dx_3dx_4+h.c.
    \end{aligned}
  \end{equation}
  and we can bound it by
   \begin{equation}\label{Upsilon 7 bound}
    \pm\Upsilon_7\leq C\tilde{\rho}_0^{\frac{4}{3}+\delta_4+2\alpha_3}\mathcal{N}_{re}.
  \end{equation}
  For $j=7,8,9,10$, their bounds are all similar:
  \begin{equation}\label{Upsilon 7-10 bound}
    \pm\Upsilon_j\leq C\tilde{\rho}_0^{\frac{4}{3}+\delta_4+2\alpha_3}\mathcal{N}_{re}.
  \end{equation}
  \par For $\Upsilon_{11}$:
  \begin{equation}\label{Upsilon11}
    \begin{aligned}
    \Upsilon_{11}&=-\frac{1}{L^3}\sum_{\substack{\sigma,\nu\\k,p,q}}\sum_{\varpi,l,r,s}
  \hat{v}_k\eta_l\delta_{p,r}
  a^*_{r-l,\sigma}a^*_{s+l,\varpi}a_{s,\varpi}a^*_{q,\nu}a_{q+k,\nu}a_{p-k,\sigma}\\
  &\quad\quad\times
  \phi^+(l)\zeta^-(l)\chi_{p-k,q+k\in P_{F,\delta_4}}\chi_{r-l,s+l\notin B_F}
    \chi_{p,r\in B_F}\chi_{q,s\in A_{F,\delta_4}}+h.c.
    \end{aligned}
  \end{equation}
   Recall that $\phi^+(l)>0$ implies $r-l,s+l\in P_{F,-\alpha_2}$, we have
   \begin{equation}\label{Upsilon 11}
     \Upsilon_{11}=-\frac{1}{L^3}\sum_{\sigma,\nu,\varpi}\sum_{q_1\in B_F}
     \mathcal{A}_{q_1,\sigma,\varpi}\mathcal{B}_{q_1,\sigma,\nu}+h.c.
   \end{equation}
   where
   \begin{align*}
     \mathcal{A}_{q_1,\sigma,\varpi}=&\int_{\Lambda_L}L^{\frac{3}{2}}
     e^{-iq_1x_2}a^*(H_{x_2,\sigma}[-\alpha_2])b^*_{x_2,\varpi}(\eta_{\phi^+}^
     {\zeta^-})dx_2\\
     \mathcal{B}_{q_1,\sigma,\nu}=&\int_{\Lambda_L^2}e^{iq_1x_1}v(x_1-x_3)
     a^*(L_{x_3,\nu}[\delta_4])a(H_{x_3,\nu}[\delta_4])a(H_{x_1,\sigma}[\delta_4])
   \end{align*}
   Similar to (\ref{typical example E_V_3,l,12}), we have
  \begin{equation}\label{Upsilon 11bound}
    \pm\Upsilon_{11}\leq C\tilde{\rho}_0^{\frac{1}{3}}\mathcal{V}_{4,4h}+
    C\tilde{\rho}_0^{\frac{4}{3}+\alpha_2+2\delta_4}\mathcal{N}_h[-\alpha_2].
  \end{equation}
  \par For $\Upsilon_{12}$:
  \begin{equation}\label{Upsilon12}
    \begin{aligned}
    \Upsilon_{12}&=-\frac{1}{L^3}\sum_{\substack{\sigma,\nu\\k,p,q}}\sum_{\tau,l,r,s}
  \hat{v}_k\eta_l\delta_{q,s}
  a^*_{r-l,\tau}a^*_{s+l,\nu}a_{r,\tau}a^*_{p,\sigma}a_{q+k,\nu}a_{p-k,\sigma}\\
  &\quad\quad\times
  \phi^+(l)\zeta^-(l)\chi_{p-k,q+k\in P_{F,\delta_4}}\chi_{r-l,s+l\notin B_F}
    \chi_{p,r\in B_F}\chi_{q,s\in A_{F,\delta_4}}+h.c.
    \end{aligned}
  \end{equation}
  The idea is analogous to $\Upsilon_{11}$, we thus have
  \begin{equation}\label{Upsilon 12bound}
    \pm\Upsilon_{12}\leq C\tilde{\rho}_0^{\frac{1}{3}}\mathcal{V}_{4,4h}+
    C\tilde{\rho}_0^{\frac{4}{3}+\alpha_2}\mathcal{N}_h[-\alpha_2].
  \end{equation}
  \par For $\Upsilon_{13}$:
  \begin{equation}\label{Upsilon13}
    \begin{aligned}
    \Upsilon_{13}&=\frac{1}{L^3}\sum_{\substack{\sigma,\nu\\k,p,q}}\sum_{l,r,s}
  \hat{v}_k\eta_l\delta_{p,r}\delta_{q,s}
  a^*_{r-l,\sigma}a^*_{s+l,\nu}a_{q+k,\nu}a_{p-k,\sigma}\\
  &\quad\quad\times
  \phi^+(l)\zeta^-(l)\chi_{p-k,q+k\in P_{F,\delta_4}}\chi_{r-l,s+l\notin B_F}
    \chi_{p,r\in B_F}\chi_{q,s\in A_{F,\delta_4}}+h.c.
    \end{aligned}
  \end{equation}
  We combine the proof of (\ref{ineq b 2}) with the bound of $\Upsilon_{11}$ in (\ref{Upsilon 11}), i.e. we use
  \begin{equation*}
     1=\frac{1}{\vert s+l\vert^2}\big(\vert s\vert^2+2s\cdot l+\vert l\vert^2\big),
  \end{equation*}
  and therefore we have
  \begin{equation}\label{Upsilon 13bound}
    \pm\Upsilon_{13}\leq C\tilde{\rho}_0^{\frac{1}{3}}\mathcal{V}_{4,4h}+
    C\tilde{\rho}_0^{\frac{4}{3}+\alpha_2+2\delta_4}\mathcal{N}_h[-\alpha_2].
  \end{equation}
  \par Combining all the estimates above, together with Lemma \ref{lemma cub control N_re} and \ref{lemma cub control V_4}, we reach the final result.
\end{proof}

\begin{lemma}\label{lemma cub Gamma'}
 For any $N\in\mathbb{N}_{\geq0}$, and for $\frac{1}{24}>\frac{1}{2}\delta_3>\frac{1}{4}\alpha_3>\alpha_2>2\alpha_4>0$, $\frac{1}{3}>\delta_4\geq\delta_1\geq\delta_2>\frac{1}{12}$, $\delta_4>2\alpha_3+2\alpha_4$, $\beta_1=\frac{1}{3}+\alpha_5$ and $2\alpha_3>\alpha_5>2\alpha_4$, we have
  \begin{equation}\label{cub Gamma'}
    \begin{aligned}
    &\pm\int_{0}^{1}e^{-tB\prime}\Gamma^\prime e^{tB^\prime}dt\\
    \leq& C\tilde{\rho}_0^{1+\delta_2}\mathcal{N}_{re}
    +C\tilde{\rho}_0^{1+\delta_1}\tilde{\mathcal{N}}_{re}
    +C\tilde{\rho}_0^{1-\delta_2-3\alpha_2}\mathcal{N}_h[\delta_4]
    +C\tilde{\rho}_0^{1-3\alpha_2}\mathcal{N}_i[\delta_4]\\
    &+C\big(\tilde{\rho}_0^{\frac{5}{6}+\frac{\alpha_2}{2}-\frac{3}{2}\alpha_5
    }+\tilde{\rho}_0^{\frac{2}{3}+\delta_4-\alpha_3-\alpha_4}\big)\mathcal{N}_h[-\alpha_2]
    +C\tilde{\rho}_0^{1-4\alpha_3+\delta_4+\alpha_5-\alpha_4}\mathcal{N}_h[-\beta_1]\\
    &+C\tilde{\rho}_0^{\delta_4-\alpha_3-\alpha_4}\mathcal{K}_h[-\alpha_2]
    +C\tilde{\rho}_0^{\delta_4}\mathcal{V}_{4,4h}+C\big(\tilde{\rho}_0^{\frac{7}{3}+\alpha_4}
    +\tilde{\rho}_0^{2+2\delta_4}
    +\tilde{\rho}_0^{2+\delta_4+4\alpha_3-7\alpha_2}\big)L^3.
    \end{aligned}
  \end{equation}
\end{lemma}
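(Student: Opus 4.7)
The plan is to exploit the scattering equation (\ref{Wdiscrete asymptotic energy pde on the torus}) to produce a cancellation between the "main" parts of $[\mathcal{K},B^\prime]$, $[\mathcal{V}_{4,4h},B^\prime]$, and $\mathcal{V}_{3,L}$, and then to bound all remaining residues using the machinery assembled in Lemmas \ref{lemma cub control N_re}--\ref{lemma cub control V_4}. First I would decompose $[\mathcal{K},B^\prime]=\Omega_m^\prime+\Omega_r^\prime$ via (\ref{cub com [K,B']}) and $[\mathcal{V}_{4,4h},B^\prime]=\Theta_m^\prime+\Theta_r^\prime$ via (\ref{cub com [V_4,4h,B']}), and inspect the coefficient structure of $\Omega_m^\prime+\Theta_m^\prime$: both are four-operator terms localized on $\chi_{p-k,q+k\notin B_F}\chi_{q\in A_{F,\delta_4}}\chi_{p\in B_F}$, with coefficients $2|k|^2\eta_k$ and $L^{-3}\sum_l\hat{v}_{k-l}\eta_l$ respectively. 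By (\ref{Wdiscrete asymptotic energy pde on the torus}) these combine into $-L^{-3}\hat{v}_k+2L^{-3}W_k$, so $\Omega_m^\prime+\Theta_m^\prime$ equals a "$\mathcal{V}_{3}$-type" operator supported on $\chi_{p-k,q+k\notin B_F}$ plus a "$W$-type" operator of the same form as $\mathcal{V}_{21}^\prime$ but restricted to $\chi_{q\in A_{F,\delta_4}}\chi_{p\in B_F}$.

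Next I would match this $\mathcal{V}_3$-type part against $\mathcal{V}_{3,L}$. The two differ only by the replacement $\chi_{p-k,q+k\in P_{F,\delta_4}}\leftrightarrow\chi_{p-k,q+k\notin B_F}$, so their sum equals $-\mathcal{V}_{3,L}$ plus remainder operators supported either on $\chi_{p-k\text{ or }q+k\in A_{F,\delta_4}}$. These remainders can be rewritten in position space with the $A_{F,\delta_4}$-projection played by $L_{\cdot,\cdot}[\delta_4]$, and their trace-form bounds mimic those used for $\hat{\Theta}_{r,j}^\prime$ in Lemma \ref{lemma cub control V_4}: applying Cauchy--Schwartz together with $\|L_{\cdot,\cdot}[\delta_4]\|^2\lesssim\tilde{\rho}_0^{1+\delta_4}$ and $\|g\|^2\lesssim\tilde{\rho}_0$ yields contributions of size $\tilde{\rho}_0^{1+\delta_4}\tilde{\mathcal{N}}_{re}$, $\tilde{\rho}_0^{1-3\alpha_2}\mathcal{N}_i[\delta_4]$ and $\tilde{\rho}_0^{1-\delta_2-3\alpha_2}\mathcal{N}_h[\delta_4]$. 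For the surviving "$W$-type" operator I would apply the same splitting strategy as in the proof of Lemma \ref{lemma qua V_21'} and of $\Upsilon_{11}$, $\Upsilon_{13}$ in Lemma \ref{lemma cub core}: decompose $W$ via (\ref{Wtheta}) and either integrate by parts against $\nabla g$ (exploiting the $\mathcal{K}_s$ budget) or freeze the $q_1\in B_F$ label and apply Lemma \ref{lemma A_k B_k} to match it against a $b^\ast$ operator whose norm is controlled by (\ref{ineq b 2 general}).

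Having so written $\Gamma^\prime=\Omega_r^\prime+\Theta_r^\prime+(\text{cutoff remainders and }W\text{-term})$, the pointwise (form) bounds of Lemma \ref{lem cub conrtol K} and Lemma \ref{lemma cub control V_4} furnish the contributions $\tilde{\rho}_0^{\delta_4-\alpha_3-\alpha_4}\mathcal{K}_h[-\alpha_2]$, $\tilde{\rho}_0^{1-4\alpha_3+\delta_4+\alpha_5-\alpha_4}\mathcal{N}_h[-\beta_1]$, $\tilde{\rho}_0^{\delta_4}\mathcal{V}_{4,4h}$ etc.\ appearing in (\ref{cub Gamma'}), together with $\tilde{\rho}_0^{\delta_4}\mathcal{V}_{4,4h}+\tilde{\rho}_0^{\frac{5}{3}+\alpha_2}\mathcal{N}_h[-\alpha_2]$ and related pieces from $[\mathcal{V}_{4,2h}^{(1)},B^\prime]$ via (\ref{cub com V_4 2h 1}). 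The final step is to pass from the pointwise bound on $\Gamma^\prime$ to the conjugated bound on $\int_0^1 e^{-tB^\prime}\Gamma^\prime e^{tB^\prime}\,dt$: each of the operators on the right-hand side of (\ref{cub Gamma'}) is controlled under $e^{-tB^\prime}\cdot e^{tB^\prime}$ by Lemma \ref{lemma cub control N_re} for $\mathcal{N}_{re},\tilde{\mathcal{N}}_{re},\mathcal{N}_h[\delta],\mathcal{N}_i[\delta]$, by (\ref{cub control K_s})--(\ref{cub control K_h[-alpha2]}) for $\mathcal{K}_h[-\alpha_2]$, and by (\ref{cub control V_4,4h}) for $\mathcal{V}_{4,4h}$, so these absorb to the same form up to lower-order terms already accounted for.

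The main obstacle will be the cancellation step: one must carry the scattering equation identity through all index constraints (in particular, $\chi_{p\in B_F}\chi_{q\in A_{F,\delta_4}}\chi_{p-k,q+k\notin B_F}$) without losing the delicate matching of cutoffs, and then control the $\chi_{p-k\in A_{F,\delta_4}}$ and $\chi_{q+k\in A_{F,\delta_4}}$ leftovers whose size is governed by the thin shell $A_{F,\delta_4}$; any slack there would destroy the $\tilde{\rho}_0^{1+\delta_4}$ gain required on the right-hand side of (\ref{cub Gamma'}). Passing from $|k|^2$ and $\sum_l\hat{v}_{k-l}\eta_l$ to $W_k$ is algebraically clean, but the resulting $W$-operator must still be estimated in the thermodynamic limit using the summation bounds (\ref{est of W_p p^2})--(\ref{L1 L2 est W}) and Lemma \ref{cutoff lemma}, paralleling the delicate analysis already performed in Lemmas \ref{lemma qua V_21'} and \ref{lemma cub core}; this is where the key interplay of the parameters $\alpha_2,\alpha_3,\alpha_4,\alpha_5,\delta_4$ dictates the final exponents.
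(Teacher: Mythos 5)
Your reduction of $\Gamma^\prime$ is exactly the paper's: split $[\mathcal{K},B^\prime]=\Omega_m^\prime+\Omega_r^\prime$ and $[\mathcal{V}_{4,4h},B^\prime]=\Theta_m^\prime+\Theta_r^\prime$, use the scattering relation (\ref{Wdiscrete asymptotic energy pde on the torus}) to convert $\Omega_m^\prime+\Theta_m^\prime+\mathcal{V}_{3,L}$ into a $W$-type cubic operator plus the cutoff-mismatch remainder (the paper's $\mathcal{V}_{3,R1}$), and then estimate residues; this is precisely the identity (\ref{Gamma'}). The genuine gap is in your last step, ``form-bound $\Gamma^\prime$ and then conjugate every operator on the right-hand side''. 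The mid-frequency part of the $W$-operator (the paper's $\mathcal{V}^\prime_{3,m}$, carrying $\phi^+\zeta^-$) can only be form-bounded at the price of a kinetic term $C\tilde{\rho}_0^{\delta_4-\alpha_3-\alpha_4}\mathcal{K}_h[-\alpha_2]$, because in the $U$/$\vartheta$-decomposition (\ref{split W cub}) the momentum factor must partly land on the hole leg $p-k\in P_{F,-\alpha_2}$ (this is (\ref{V_3,m2'})). If you now conjugate that bound via (\ref{cub control K_h[-alpha2]}), you pick up $C\tilde{\rho}_0^{\delta_4-\alpha_3-\alpha_4}\cdot\tilde{\rho}_0^{2+\delta_4}L^3=C\tilde{\rho}_0^{2+2\delta_4-\alpha_3-\alpha_4}L^3$, which is not among the $L^3$ terms of (\ref{cub Gamma'}) and, with the final choices (\ref{choose delta})--(\ref{choose alpha small}), has exponent roughly $\tfrac{7}{3}-\tfrac{1}{12}<\tfrac{7}{3}$, hence would destroy the $\tilde{\rho}_0^{\frac{7}{3}+\frac{d}{400}}L^3$ error budget of the main theorem. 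The paper avoids exactly this by \emph{not} conjugating $\mathcal{V}^\prime_{3,m}$: it writes $e^{-tB^\prime}\mathcal{V}^\prime_{3,m}e^{tB^\prime}=\mathcal{V}^\prime_{3,m}+\int_0^t e^{-sB^\prime}[\mathcal{V}^\prime_{3,m},B^\prime]e^{sB^\prime}ds$ (see (\ref{V_3,m' 1st})), keeps the $\mathcal{K}_h[-\alpha_2]$-bound unconjugated, and bounds the thirteen commutator terms $\hat{\Upsilon}_j$ by particle-number operators and $L^3$ only, which conjugate harmlessly. Your proposal needs this extra Duhamel step (or an equivalent device) to close.

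Two related corrections. First, your parenthetical ``exploiting the $\mathcal{K}_s$ budget'' cannot work at this coefficient: the overall scheme (Propositions \ref{qua prop}--\ref{bog prop}) only tolerates $\mathcal{K}_s$ with prefactor of order $\tilde{\rho}_0^{\frac{1}{3}+}$, while here the prefactor is $\tilde{\rho}_0^{\delta_4-\alpha_3-\alpha_4}\approx\tilde{\rho}_0^{\frac{1}{12}}$; via (\ref{est trK_sGamma}) such a term contributes $\tilde{\rho}_0^{\delta_4-\alpha_3-\alpha_4+\frac{5}{3}+2\alpha_1}L^3$, which stays below $\tilde{\rho}_0^{\frac{7}{3}}L^3$ when $\alpha_1$ is near the threshold $\tfrac{1}{6}$. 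This is why the statement carries $\mathcal{K}_h[-\alpha_2]$, whose Gibbs expectation (\ref{est trK_h[delta]Gamma}) is exponentially small. Second, a bookkeeping point: the coefficient $\tilde{\rho}_0^{\delta_4-\alpha_3-\alpha_4}$ on $\mathcal{K}_h[-\alpha_2]$ and the $3\alpha_2$-losses on $\mathcal{N}_i[\delta_4],\mathcal{N}_h[\delta_4]$ come from the $W$-operator itself (its low- and mid-frequency parts $\mathcal{V}^\prime_{3,l},\mathcal{V}^\prime_{3,m}$, cf.\ (\ref{V_3,l'bound})), not from $\Omega_r^\prime$, $\Theta_r^\prime$ or the thin-shell index remainders to which you attribute them; apart from this attribution and the missing Duhamel step, your accounting matches the paper's.
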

\begin{proof}
  \par By (\ref{Wdiscrete asymptotic energy pde on the torus}), (\ref{V_3,L}), (\ref{define Gamma'}), (\ref{cub com [K,B']}), (\ref{cub com [V_4,4h,B']}) and (\ref{V_3,R 1 2}), we have
  \begin{equation}\label{Gamma'}
    \Gamma^\prime=[\mathcal{V}_{4,2h}^{(1)},B^\prime]+\Omega_r^\prime
    +\Theta^\prime_r+\mathcal{V}_{3,R1}+\mathcal{V}_{3,l}^\prime
    +\mathcal{V}_{3,h}^\prime+\mathcal{V}_{3,m}^\prime,
  \end{equation}
  where
  \begin{align}
    \mathcal{V}_{3,l}^\prime=&\frac{2}{L^3}\sum_{k,p,q,\sigma,\nu}{W}_k\phi^-(k)
(a_{p-k,\sigma}^*a_{q+k,\nu}^*a_{q,\nu}a_{p,\sigma}+h.c.)\nonumber\\
&\quad\quad\times
\chi_{p-k,q+k\notin B_{F}}\chi_{q\in A_{F,\delta_4}}\chi_{p\in B_F}\nonumber\\
\mathcal{V}_{3,m}^\prime=&\frac{2}{L^3}\sum_{k,p,q,\sigma,\nu}{W}_k\phi^+(k)\zeta^-(k)
(a_{p-k,\sigma}^*a_{q+k,\nu}^*a_{q,\nu}a_{p,\sigma}+h.c.)\nonumber\\
&\quad\quad\times
\chi_{p-k,q+k\notin B_{F}}\chi_{q\in A_{F,\delta_4}}\chi_{p\in B_F}\nonumber\\
\mathcal{V}_{3,h}^\prime=&\frac{2}{L^3}\sum_{k,p,q,\sigma,\nu}{W}_k\zeta^+(k)
(a_{p-k,\sigma}^*a_{q+k,\nu}^*a_{q,\nu}a_{p,\sigma}+h.c.)\nonumber\\
&\quad\quad\times
\chi_{p-k,q+k\notin B_{F}}\chi_{q\in A_{F,\delta_4}}\chi_{p\in B_F}\label{V_3'}
  \end{align}
  Notice that the bounds to $[\mathcal{V}_{4,2h}^{(1)},B^\prime]$ can be found in (\ref{cub com V_4 2h 1}), to $\Omega_r^\prime$ can be found in (\ref{cub Omega_r'}), to $\Theta_r^\prime$ can be found in (\ref{cub Theta_r'}), to $\mathcal{V}_{3,R1}$ can be found in (\ref{V_3,R1bound}).
  \par For $\mathcal{V}_{3,l}^\prime$, notice that $\phi^-(k)>0$ implies $\vert p-k\vert,\vert q+k\vert\leq 3\tilde{\mu}^{\frac{1}{2}}\tilde{\rho}_0^{-\alpha_2}$. If we temporarily define
  \begin{equation*}
  \mathbf{h}_{x,\sigma}(z)=\sum_{k_F<\vert k\vert\leq 3\tilde{\mu}^{\frac{1}{2}}\tilde{\rho}
  _0^{-\alpha_2}}\frac{e^{ikx}}{L^{\frac{3}{2}}}f_{k,\sigma}(z),
\end{equation*}
then
\begin{equation}\label{V_3 1'}
\begin{aligned}
  \mathcal{V}_{3,l}^\prime=2\sum_{\sigma,\nu}\int_{\Lambda_L^2}&W^{\phi^-}(x-y)
  a^*(\mathbf{h}_{x,\sigma})a^*(\mathbf{h}_{y,\nu})\\
  &\times a(L_{y,\nu}[\delta_4])a(g_{x,\sigma})dxdy+h.c.
\end{aligned}
\end{equation}
Using
\begin{align*}
  \mathbf{h}_{x,\sigma}g_{x,\sigma}=&
  H_{x,\sigma}[\delta_4]g_{x,\sigma}+(\mathbf{h}_{x,\sigma}-
  H_{x,\sigma}[\delta_4])S_{x,\sigma}[\delta_4]\\
  &+(\mathbf{h}_{x,\sigma}-
  H_{x,\sigma}[\delta_4])I_{x,\sigma}[\delta_4],
\end{align*}
we rewrite $\mathcal{V}_{3,l}^\prime=\sum_{j=1}^{3}\mathcal{V}_{3,lj}^\prime$. By Cauchy-Schwartz inequality, we can bound them easily by
\begin{align*}
  \pm\mathcal{V}_{3,l1}^\prime&\leq C\tilde{\rho}_0^{1+\delta_2}\mathcal{N}_{re}
    +C\tilde{\rho}_0^{1-\delta_2-3\alpha_2}
    \mathcal{N}_h[\delta_4]\\
    \pm\mathcal{V}_{3,l2}^\prime&\leq C\tilde{\rho}_0^{1+\delta_4}\mathcal{N}_{re}
    +C\tilde{\rho}_0^{2+2\delta_4}L^3\\
    \pm\mathcal{V}_{3,l3}^\prime&\leq
    C\tilde{\rho}_0^{1-3\alpha_2}\mathcal{N}_i[\delta_4]+C\tilde{\rho}_0^{2+2\delta_4}L^3
\end{align*}
Thus
  \begin{equation}\label{V_3,l'bound}
    \pm\mathcal{V}_{3,l}^\prime\leq C\tilde{\rho}_0^{1+\delta_2}\mathcal{N}_{re}
    +C\tilde{\rho}_0^{1-3\alpha_2}\mathcal{N}_i[\delta_4]+C\tilde{\rho}_0^{1-\delta_2-3\alpha_2}
    \mathcal{N}_h[\delta_4]+C\tilde{\rho}_0^{2+2\delta_4}L^3.
  \end{equation}
   \par For $\mathcal{V}_{3,h}^\prime$, notice that $\zeta^+(k)>0$ implies $p-k,q+k\in P_{F,-\beta_1}$, hence
   \begin{equation}\label{V_3 h'}
\begin{aligned}
  \mathcal{V}_{3,h}^\prime=2\sum_{\sigma,\nu}\int_{\Lambda_L^2}&W_{\zeta^+}(x-y)
  a^*(H_{x,\sigma}[-\beta_1])a^*(H_{y,\nu}[-\beta_1])\\
  &\times a(L_{y,\nu}[\delta_4])a(g_{x,\sigma})dxdy+h.c.
\end{aligned}
\end{equation}
Using (\ref{special W L2 norm}), we have
  \begin{equation}\label{V_3,h'bound}
    \pm\mathcal{V}_{3,h}^\prime\leq C\big(\tilde{\rho}_0^{1+\frac{\delta_4}{2}}
    +C\tilde{\rho}_0^{1-4\alpha_3+\delta_4+\alpha_5-\alpha_4}\big)
    \mathcal{N}_h[-\beta_1]+C\tilde{\rho}_0^{\frac{7}{3}+\alpha_4}L^3.
  \end{equation}
  \par For $e^{-tB^\prime}\mathcal{V}_{3,m}^\prime e^{tB^\prime}$, we first put
  \begin{equation}\label{V_3,m' 1st}
    e^{-tB^\prime}\mathcal{V}_{3,m}^\prime e^{tB^\prime}=\mathcal{V}_{3,m}^\prime
    +\int_{0}^{t}e^{-sB^\prime}[\mathcal{V}_{3,m}^\prime,B^\prime] e^{sB^\prime}ds.
  \end{equation}
  For $\mathcal{V}_{3,m}^\prime$, we use \begin{equation}\label{split W cub}
  \begin{aligned}
    W_k=\Big(\frac{1}{L^{\frac{3}{2}}}
    \sum_{m\neq k}\frac{W_{k-m}(k-m)}{\vert k-m\vert^2}\vartheta_m\Big)
    \big(p-(p-k)-m\big)+\frac{1}{L^{\frac{3}{2}}}W_0\vartheta_k,
  \end{aligned}
  \end{equation}
  to rewrite it by $\mathcal{V}_{3,m}^\prime=\sum_{j=1}^{4}\mathcal{V}_{3,m,j}^\prime$.
   For $\mathcal{V}_{3,m,1}^\prime$, Notice that $\phi^+(k)>0$ implies $p-k,q+k\in P_{F,-\alpha_2}$, and therefore
   \begin{equation}\label{V_3 m 1'}
\begin{aligned}
  \mathcal{V}_{3,m,1}^\prime=C\sum_{\sigma,\nu}\int_{\Lambda_L^2}&(U\vartheta)_{\phi^+}^
  {\zeta^-}(x-y)
  a^*(H_{x,\sigma}[-\alpha_2])a^*(H_{y,\nu}[-\alpha_2])\\
  &\times a(L_{y,\nu}[\delta_4])a(\nabla_xg_{x,\sigma})dxdy+h.c.
\end{aligned}
\end{equation}
Using (\ref{theta norm}), (\ref{U L2}) and Lemma \ref{b^* bound by b general}, we can bound
  \begin{equation}\label{V_3,m1'}
  \pm\mathcal{V}_{3,m,1}^\prime\leq C\tilde{\rho}_0^{\frac{2}{3}+\delta_4-\alpha_3-\alpha_4}
  \mathcal{N}_h[-\alpha_2]+C\tilde{\rho}_0^{\frac{7}{3}+\alpha_4}L^3.
  \end{equation}
   Using (\ref{theta norm}), (\ref{U L2}) and Lemma \ref{b^* bound by b general}, we can also bound $\mathcal{V}_{3,m,2}^\prime$ similarly by
  \begin{equation}\label{V_3,m2'}
  \pm\mathcal{V}_{3,m,2}^\prime\leq C\tilde{\rho}_0^{1+\frac{\delta_4}{2}-\frac{\alpha_3}{2}}
  \mathcal{N}_h[-\alpha_2]+
  C\tilde{\rho}_0^{\delta_4-\alpha_3-\alpha_4}\mathcal{K}_h[-\alpha_2]
  +C\tilde{\rho}_0^{\frac{7}{3}+\alpha_4}L^3.
  \end{equation}
   Similarly by (\ref{theta norm}), (\ref{U L2}) and Lemma \ref{b^* bound by b general}, we can bound $\mathcal{V}_{3,m,3}^\prime$ by
  \begin{equation}\label{V_3,m3'}
 \pm\mathcal{V}_{3,m,3}^\prime\leq C\tilde{\rho}_0^{\frac{2}{3}+\delta_4-\alpha_3-\alpha_4}
  \mathcal{N}_h[-\alpha_2]+C\tilde{\rho}_0^{\frac{7}{3}+\alpha_4}L^3.
  \end{equation}
   and we can bound $\mathcal{V}_{3,m,4}^\prime$ by
  \begin{equation}\label{V_3,m4'}
 \pm\mathcal{V}_{3,m,4}^\prime\leq C(\tilde{\rho}_0)L^{-3}
 \big(\mathcal{N}_h[-\alpha_2]+L^3\big).
  \end{equation}
  \par For $[\mathcal{V}_{3,m}^\prime,B^\prime]$, we have $[\mathcal{V}_{3,m}^\prime,B^\prime]=\sum_{j=1}^{13}\hat{\Upsilon}_{j}$. Here each $\hat{\Upsilon}_j$ corresponds to $\Upsilon_j$ defined in the proof of Lemma \ref{lemma cub core}, with $\hat{v}_k$ in each $\Upsilon_j$ replaced by $2W_k\phi^+(k)\zeta^-(k)$. So thet can be handled similarly. For $\hat{\Upsilon}_1$, using $1=\chi_{k=l}+\chi_{k\neq l}$, we can split it by $\hat{\Upsilon}_1=\hat{\Upsilon}_{1,1}+\hat{\Upsilon}_{1,2}$. Via
  \begin{align*}
  \sum_{k}\vert W_k\vert\vert \eta_k\vert\leq L^{\frac{3}{2}}\Vert W\Vert_2\Vert\eta\Vert_2
  \leq C\tilde{\rho}_0^{\frac{1}{3}-\alpha_3},
  \end{align*}
  we can bound $\hat{\Upsilon}_{1,1}$ by
  \begin{equation}\label{hatUpsilon11bound}
    \pm\hat{\Upsilon}_{1,1}\leq C\tilde{\rho}_0^{\frac{7}{3}+\delta_4-\alpha_3}L^3.
  \end{equation}
  Similar to the bound of $\Upsilon_{12}$ in (\ref{Upsilon 12 bound}), we can bound  $\hat{\Upsilon}_{1,2}$ by
  \begin{equation}\label{hatUpsilon1,2 bound}
    \pm\hat{\Upsilon}_{1,2}\leq C\tilde{\rho}_0^{1+\delta_4}\tilde{\mathcal{N}}_{re}.
  \end{equation}
  For $\hat{\Upsilon}_{2}$, the proof is analogous to $\hat{\Upsilon}_{1}$, therefore,
   \begin{equation}\label{hatUpsilon2 bound}
    \pm\hat{\Upsilon}_{2}\leq C\tilde{\rho}_0^{1+\delta_4}\tilde{\mathcal{N}}_{re}
    +C\tilde{\rho}_0^{\frac{7}{3}+\delta_4-\alpha_3}L^3.
  \end{equation}
  \par We find $\hat{\Upsilon}_j=0$ for $j=3,4,5,6$. Since for $j=3,4$, $r-l\in A_{F,\delta_4}$ or $s+l\in A_{F,\delta_4}$ implies $\phi^+(l)=0$, and for $j=5,6$,  $p-k\in A_{F,\delta_4}$ or $q+k\in A_{F,\delta_4}$ implies $\phi^+(k)=0$.
  \par For $j=7,8,9,10$, the proof is totally analogous to $\Upsilon_j$ in (\ref{Upsilon 7-10 bound}):
  \begin{equation}\label{hatUpsilon7-10 bound}
    \pm\hat{\Upsilon}_j\leq C\tilde{\rho}_0^{\frac{4}{3}+\delta_4+2\alpha_3}\mathcal{N}_{re}.
  \end{equation}
  \par For $j=11,12,13$, the proof is analogous to $\Upsilon_j$ in (\ref{Upsilon 11})-(\ref{Upsilon 13bound}). We need to additionally use the fact that $\zeta^-(k)>0$ implies $\vert p-k\vert,\vert q+k\vert\leq 3\tilde{\mu}^{\frac{1}{2}}\tilde{\rho}_0^{-\beta_1}$, and thus need the following notation
  \begin{equation*}
  \mathbf{h}_{x,\sigma}(z)=\sum_{k_F<\vert k\vert\leq 3\tilde{\mu}^{\frac{1}{2}}\tilde{\rho}
  _0^{-\beta_1}}\frac{e^{ikx}}{L^{\frac{3}{2}}}f_{k,\sigma}(z).
\end{equation*}
The same process gives out the bounds directly:
  \begin{equation}\label{hatUpsilon11-13 bound}
    \pm\hat{\Upsilon}_j\leq C\tilde{\rho}_0^{\frac{5}{6}+\frac{\alpha_2}{2}
    -\frac{3}{2}\alpha_5}\mathcal{N}_{h}[-\alpha_2].
  \end{equation}
  \par Collect all the results above, together with Lemmas \ref{lemma cub control N_re}-\ref{lemma cub control V_4}, we reach (\ref{cub Gamma'}).
\end{proof}

\vspace{1em}

\begin{proof}[Proof of Proposition \ref{cub prop}]
\par (\ref{define J_N}) together with Lemmas \ref{lemma cub misc}-\ref{lemma cub Gamma'} yield Proposition \ref{cub prop}.
\end{proof}

\section{Bogoliubov Transformation}\label{bog}
\
\par In this section, we analyze the excitation Hamiltonian $\mathcal{Z}_N$ defined in (\ref{Z_N}), and prove Proposition \ref{bog prop}. The Bogoliubov transformation is defined through
\begin{equation}\label{define B tilde}
  \tilde{B}=\frac{1}{2}(\tilde{A}-\tilde{A}^*)
\end{equation}
with
\begin{equation}\label{define A tilde}
  \tilde{A}=\sum_{k,p,q,\sigma,\nu}
  \xi_{k,q,p}^{\nu,\sigma}a^*_{p-k,\sigma}a^*_{q+k,\nu}a_{q,\nu}a_{p,\sigma}\chi_{p-k,q+k\notin B_F}\chi_{p,q\in B_F}.
\end{equation}
Recall in Section \ref{Bog coeff sec}, we choose
\begin{equation}\label{define xi_k,q,p,nu,sigma}
  \xi_{k,q,p}^{\nu,\sigma}=\frac{-\big(L^{-3}W_k\tilde{\zeta}^-(k)
  +\eta_kk(q-p)\phi^+(k)\tilde{\zeta}^-(k)\big)}
  {\frac{1}{2}\big(\vert q+k\vert^2+\vert p-k\vert^2-\vert q\vert^2-\vert p\vert^2\big)
  +\epsilon_0}
  \chi_{p-k,q+k\notin B_F}\chi_{p,q\in B_F}.
\end{equation}
and for simplicity, we let $\xi_{k,q,p}^{\nu,\sigma}=\xi_{k,q,p}^{\nu,\sigma,1}+\xi_{k,q,p}^{\nu,\sigma,2}$, where
\begin{equation}\label{xi1and 2}
  \begin{aligned}
  &\xi_{k,q,p}^{\nu,\sigma,1}=\frac{-L^{-3}W_k\tilde{\zeta}^-(k)}
  {\frac{1}{2}\big(\vert q+k\vert^2+\vert p-k\vert^2-\vert q\vert^2-\vert p\vert^2\big)
  +\epsilon_0}
  \chi_{p-k,q+k\notin B_F}\chi_{p,q\in B_F}\\
  &\xi_{k,q,p}^{\nu,\sigma,2}=\frac{-\eta_kk(q-p)\phi^+(k)\tilde{\zeta}^-(k)}
  {\frac{1}{2}\big(\vert q+k\vert^2+\vert p-k\vert^2-\vert q\vert^2-\vert p\vert^2\big)
  +\epsilon_0}
  \chi_{p-k,q+k\notin B_F}\chi_{p,q\in B_F}.
  \end{aligned}
\end{equation}
Here $\epsilon_0$ is a small but positive (when $L$ tends to infinity) gap to avoid logarithmic growth in $L$. We let $\epsilon_0=\tilde{\rho}_0^2$.

\par To deal with the length scale $L$, which tends to infinity in the thermodynamic limit, we continue applying frequency localization:
\begin{equation}\label{define V_21' and Omega tilde}
  \begin{aligned}
  \tilde{\mathcal{V}}_{21}^\prime&=\frac{1}{L^3}\sum_{k,p,q,\sigma,\nu}
  W_k\tilde{\zeta}^{-}(k)(a^*_{p-k,\sigma}a^*_{q+k,\nu}a_{q,\nu}a_{p,\sigma}+h.c.)
  \chi_{p-k,q+k\notin
  B_F}\chi_{p,q\in B_F}\\
  \tilde{\Omega}&=\sum_{k,p,q,\sigma,\nu}
  \eta_k\phi^+(k)\tilde{\zeta}^{-}(k)k(q-p)
  \\
  &\quad\quad\quad\quad
  \times(a^*_{p-k,\sigma}a^*_{q+k,\nu}a_{q,\nu}a_{p,\sigma}+h.c.)\chi_{p-k,q+k\notin B_F}
  \chi_{p,q\in B_F}\\
  R&=\mathcal{V}_{21}^\prime+\Omega-\tilde{\mathcal{V}}_{21}^\prime-\tilde{\Omega}
  \end{aligned}
\end{equation}
Using (\ref{J_N cub prop}) and Newton-Leibniz formula,  we rewrite $\mathcal{Z}_N$ by
\begin{equation}\label{define Z_N}
  \begin{aligned}
  \mathcal{Z}_N\coloneqq e^{-\tilde{B}}\mathcal{J}_Ne^{\tilde{B}}=
  &C_{\mathcal{G}_N}
  +\mathcal{K}+e^{-\tilde{B}}\mathcal{V}_{4,4h}e^{\tilde{B}}
  +e^{-\tilde{B}}(\mathcal{E}_{\mathcal{J}_N}+R)e^{\tilde{B}}
  +\int_{0}^{1}e^{-t\tilde{B}}\tilde{\Gamma}e^{t\tilde{B}}dt\\
  &+\int_{0}^{1}\int_{t}^{1}e^{-s\tilde{B}}[\tilde{\mathcal{V}}_{21}^\prime
  +\tilde{\Omega},\tilde{B}]e^{s\tilde{B}}dsdt
  \end{aligned}
\end{equation}
with
\begin{equation}\label{define Gamme tilde}
  \tilde{\Gamma}=[\mathcal{K},\tilde{B}]+\tilde{\mathcal{V}}_{21}^\prime
  +\tilde{\Omega}.
\end{equation}

\par To prove Proposition \ref{bog prop}, we are going to analyze each term on the right-hand side of (\ref{define Z_N}), in Lemmas \ref{lemma bog control K}-\ref{cal com bog lemma}. Before we rigorously calculate each term of $\mathcal{Z}_N$, we first need to bound the action of the Bogoliubov transformation on some special operators, such as $\mathcal{N}_{re}, \mathcal{K}_s$. These results are collected respectively in Lemma \ref{lemma bog control N_re}-\ref{lemma bog control K}. Also, we need to analyze some properties of the $t-$version quadratic operators defined in Section \ref{other} in Lemmas \ref{property htgt}-\ref{lemma bog htgt}.

\begin{lemma}\label{lemma bog control N_re}
For any $N\in\mathbb{N}_{\geq0}$ and $\vert t\vert\leq 1$,
\begin{align}
 e^{-t\tilde{B}}\mathcal{N}_{re}e^{t\tilde{B}}&\leq C\mathcal{N}_{re}+C\tilde{\rho}_0^{\frac{5}{3}-\alpha_3-\varepsilon}L^3,
 \label{bog control N_re}\\
 e^{-t\tilde{B}}\tilde{\mathcal{N}}_{re}e^{t\tilde{B}}&\leq C\tilde{\mathcal{N}}_{re}+ C\mathcal{N}_{re}+C\tilde{\rho}_0^{\frac{5}{3}-\alpha_3-\varepsilon}L^3,
 \label{bog control N_retilde}
\end{align}
For $0<\alpha_3<\frac{1}{6}$ and some universal but arbitrarily small constant $\varepsilon>0$. Moreover, for $\delta>0$,
\begin{align}
 e^{-t\tilde{B}}\mathcal{N}_h[\delta]e^{t\tilde{B}}&\leq
  C\mathcal{N}_h[\delta]
  +C\tilde{\rho}_0^{\frac{2}{3}-2\varepsilon}\mathcal{N}_{re}
  +C\tilde{\rho}_0^{\frac{5}{3}-\alpha_3-\varepsilon}L^3,\label{bog control N_h[delta]}\\
  e^{-t\tilde{B}}\mathcal{N}_i[\delta]e^{t\tilde{B}}&\leq C\mathcal{N}_i[\delta]
  +C\mathcal{N}_h[\delta]+C\tilde{\rho}_0^{\frac{2}{3}-2\varepsilon}\mathcal{N}_{re}
  +C\tilde{\rho}_0^{\frac{5}{3}-\alpha_3-\varepsilon}L^3.\label{bog control N_i[delta]}
\end{align}
for $\delta<0$,
\begin{equation}\label{Bog control N_h[delta]delta<0}
  e^{-t\tilde{B}}\mathcal{N}_h[\delta]e^{t\tilde{B}}\leq C\mathcal{N}_h[\delta]+C\tilde{\rho}_0^{\frac{2}{3}-2\varepsilon}\mathcal{N}_{re}
  +C\tilde{\rho}_0^{\frac{5}{3}-\alpha_3-2\delta}L^3.
\end{equation}
\end{lemma}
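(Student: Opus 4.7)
The plan is to mirror the strategy of Lemmas \ref{lemma qua control N_re} and \ref{lemma cub control N_re}: I bound $\pm\tilde A$ by a suitable combination of number operators plus an $L^3$-error, use $[\mathcal{N}_{re},\tilde A]=2\tilde A$ and $[\tilde{\mathcal{N}}_{re},\tilde A]=2\tilde A$, and close via Gronwall's inequality. The essential new difficulty is that the Bogoliubov coefficient (\ref{define xi_k,q,p,nu,sigma}) carries an energy denominator, so $\tilde A$ admits no clean position-space representation. To handle this I invoke the identity (\ref{int e}), which converts the denominator into an integral over $t\in(0,\infty)$ of $e^{-(|q+k|^2+|p-k|^2-|q|^2-|p|^2+2\epsilon_0)t}$; this is precisely the mechanism that motivated the definitions (\ref{htgt0}) of $h^t_{x,\sigma}$ and $g^t_{x,\sigma}$. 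Splitting $\tilde A=\tilde A_1+\tilde A_2$ along (\ref{xi1and 2}), I rewrite for instance
\[
\tilde A_1=-2L^{3/2}\int_{0}^{\infty}e^{-2\epsilon_0 t}\sum_{\sigma,\nu}\int_{\Lambda_L^2}W^{\tilde\zeta^-}(x-y)a^*(h^t_{x,\sigma})a^*(h^t_{y,\nu})a(g^t_{y,\nu})a(g^t_{x,\sigma})\,dxdy\,dt,
\]
and similarly for $\tilde A_2$ with a differentiated version of $\eta_{\phi^+}^{\tilde\zeta^-}$ replacing $W^{\tilde\zeta^-}$.

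Once this reduction is in place, each slice $\tilde A(t)$ can be treated like the $A$ of Section \ref{qua}, but with the weighted modes $h^t,g^t$ in place of $h,g$. Applying Lemma \ref{lemma bog htgt} in the spirit of Lemma \ref{b^* bound by b general}, together with the operator-norm inequalities of Lemma \ref{property htgt} and the Riemann-sum estimates of Lemma \ref{est sum htgt} (whose validity for $\epsilon_0=\tilde\rho_0^2$ is precisely the point of our choice of gap), I expect $\pm\tilde A_1\le C\mathcal{N}_{re}+C\tilde\rho_0^{5/3-\alpha_3-\varepsilon}L^3$, together with an analogous bound for $\tilde A_2$ using the $\ell^1$ control on $\eta$ from Lemma \ref{eta sum lemma} and (\ref{est of Deta D^2eta}). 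The $\tilde\rho_0^{-\varepsilon}$ loss is inherited from the logarithmic factor $|\ln\tilde\rho_0|$ in (\ref{est sum}). Gronwall then produces (\ref{bog control N_re}), and (\ref{bog control N_retilde}) follows identically.

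For the projected operators, I compute $[\mathcal{N}_h[\delta],\tilde B]$ and $[\mathcal{N}_i[\delta],\tilde B]$ directly in the style of (\ref{com N_h B})--(\ref{com N_i B}), yielding a variant of $\tilde A$ in which one $h^t$-mode has been replaced by its high-frequency projection $H^t[\delta]$, or one $g^t$-mode by the inner projection $I^t[\delta]$. The refined inequalities (\ref{property htgt N_h and i}) and (\ref{property L norm}) then upgrade the $\mathcal{N}_{re}$ prefactor to $\tilde\rho_0^{2/3-2\varepsilon}$ in (\ref{bog control N_h[delta]})--(\ref{bog control N_i[delta]}) when $\delta>0$, and to $\tilde\rho_0^{-2\delta}$ (via (\ref{est sum d}) applied with $d\le0$) in (\ref{Bog control N_h[delta]delta<0}) when $\delta<0$. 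The Gronwall argument for each projected operator must be run simultaneously with that for $\mathcal{N}_{re}$, so that the $\mathcal{N}_{re}$ appearing on the right-hand side of the commutator bound is already controlled at each stage.

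The hard part will be choreographing the different regimes of $t$. A naive use of the full $t$-integral fails since $\int_0^\infty e^{-2\epsilon_0 t}dt=(2\epsilon_0)^{-1}=\tfrac12\tilde\rho_0^{-2}$ is enormous; salvation lies in the cancellation between $e^{-2\epsilon_0 t}$ and the Riemann sums of the Gaussian weights coming from $h^t,g^t$. Splitting the $t$-integral into three regimes (small $t$, intermediate $t$ below the scale $\tilde\rho_0^{-2/3}$, and large $t\gtrsim\tilde\rho_0^{-2}$), in analogy with the three-part decompositions used in Lemma \ref{est xi/ p^2 lemma}, should deliver the $\tilde\rho_0^{1/3-\varepsilon}$ gain per denominator. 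Balancing these three regimes against $\|W^{\tilde\zeta^-}\|_2^2\lesssim(\ell L)^{-3}\lesssim\tilde\rho_0^{1-3\alpha_3}$ and the analogous quantities for $\tilde A_2$ — in particular extracting the exact exponent $5/3-\alpha_3-\varepsilon$ rather than anything worse — is where essentially all the technical work of the lemma resides.
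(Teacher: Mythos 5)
Your proposal follows essentially the same route as the paper's proof: rewrite $\tilde A_1,\tilde A_2$ via the integral representation (\ref{int e}) in terms of the weighted modes $h^t,g^t$, bound each by Cauchy--Schwartz using Lemma \ref{lemma bog htgt} together with Lemmas \ref{property htgt}, \ref{est sum htgt} and \ref{est xi/ p^2 lemma} to get $\pm\tilde A\leq C\mathcal{N}_{re}+C\tilde\rho_0^{5/3-\alpha_3-\varepsilon}L^3$, then close with Gronwall, treating $\mathcal{N}_h[\delta]$ and $\mathcal{N}_i[\delta]$ through the projected commutators exactly as you describe. The "choreography of $t$-regimes" you flag as the hard part is in fact already packaged in the cited auxiliary lemmas (the $t$-integration against the Gaussian weights reproduces the energy denominators, which those lemmas sum), so no additional work beyond your outline is needed.
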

\begin{proof}
  \par Calculating directly, we have $[\mathcal{N}_{re},\tilde{B}]=\tilde{A}+\tilde{A}^*$. For $\tilde{A}=\tilde{A}_1+\tilde{A}_2$, where we rewrite $\tilde{A}$ using (\ref{xi1and 2}), we have
  \begin{equation}\label{tildeA}
  \begin{aligned}
    \tilde{A}_1=&2\sum_{\sigma,\nu}\int_{0}^{\infty}\int_{\Lambda_L^2}e^{-2\epsilon_0 t}
    W^{\tilde{\zeta}^-}(x-y)a^*(h_{x,\sigma}^t)a^*(h^t_{y,\nu})
    a(g_{y,\nu}^t)a(g_{x,\sigma}^t)dxdydt\\
    &=2\sum_{\sigma,\nu}\int_{0}^{\infty}\int_{\Lambda_L}e^{-2\epsilon_0 t}
     a^*(h_{x,\sigma}^t)a(g_{x,\sigma}^t)c^{t*}_{x,\nu}(W^{\tilde{\zeta}^-})dxdt.
    \end{aligned}
  \end{equation}
  Therefore, for any $\psi\in\mathcal{H}^{\wedge N}$, using Cauchy-Schwartz inequality and Lemma \ref{lemma bog htgt}, we have
  \begin{align*}
   \vert\langle\tilde{A}_1\psi,\psi\rangle\vert&\leq C
   \sum_{\sigma,\nu}\int_{0}^{\infty}\int_{\Lambda_L}e^{-2\epsilon_0 t}\Vert
   a^*(g_{x,\sigma}^t)a(h_{x,\sigma}^t)\psi\Vert\\
   &\times \Big(\Vert c^{t}_{x,\nu}(W^{\tilde{\zeta}^-})\psi\Vert+\sum_{p\notin B_F,q\in B_F}\frac{\vert W_{p-q}\vert^2}{L^6}e^{-2(p^2-q^2)}\Vert\psi\Vert^2
    \Big).
  \end{align*}
  By  Lemma \ref{property htgt} and Lemma \ref{est sum htgt},
  \begin{align*}
    \sum_{\sigma,\nu}\int_{0}^{\infty}\int_{\Lambda_L^2}e^{-2\epsilon_0 t}
    \vert W^{\tilde{\zeta}^-}(x-y)\vert \Vert
   a^*(g_{x,\sigma}^t)a(h_{x,\sigma}^t)\psi\Vert^2dxdydt\leq C\tilde{\rho}_0^{\frac{1}{3}}
   \vert\ln\tilde{\rho}_0\vert\mathcal{N}_{re},
  \end{align*}
  hence we have, by Cauchy-Schwartz inequality and Lemma \ref{est xi/ p^2 lemma}
  \begin{equation}\label{Atilde1bound}
    \pm\tilde{A}_1\leq C\mathcal{N}_{re}+C\tilde{\rho}_0^{\frac{5}{3}-\alpha_3}\vert\ln\tilde{\rho}_0\vert L^3.
  \end{equation}
  Notice that $\vert\ln\tilde{\rho}_0\vert\leq C\tilde{\rho}_0^{-\varepsilon}$ for some arbitrarily small constant $\varepsilon>0$. $\tilde{A}_2$ can be bounded similarly, with an even smaller upper bound, one can notice this fact by comparing the difference between (\ref{est Wk/p^2}) and (\ref{est etakk/p^2}). Thus (\ref{bog control N_re}) follows by Gronwall's inequality.

  \par For $\tilde{\mathcal{N}}_{re}$, we have $[\tilde{\mathcal{N}}_{re},\tilde{B}]
  =[\mathcal{N}_{re},\tilde{B}]$, thus (\ref{bog control N_retilde}) follows directly from Newton-Leibniz formula, (\ref{bog control N_re}) and (\ref{Atilde1bound}).

  \par For $\mathcal{N}_h[\delta]$, we have
  \begin{equation}\label{com N_h bog}
    [\mathcal{N}_h[\delta],\tilde{B}]=\sum_{k,p,q,\sigma,\nu}
  \xi_{k,q,p}^{\nu,\sigma}a^*_{p-k,\sigma}a^*_{q+k,\nu}a_{q,\nu}a_{p,\sigma}
  \chi_{p-k\in P_{F,\delta}}\chi_{q+k\notin B_F}\chi_{p,q\in B_F}+h.c.
  \end{equation}
  The bound of $[\mathcal{N}_h[\delta],\tilde{B}]$ is analogous to the bound of $[\mathcal{N}_{re},\tilde{B}]$, only need to notice that here we use (\ref{property htgt N_h and i}) in Lemma \ref{property htgt}, and (\ref{est sum d}) in Lemma \ref{est sum htgt}. We have, for $\delta\geq0$,
  \begin{equation}\label{N_h bog bound delta>=0}
    \pm[\mathcal{N}_h[\delta],\tilde{B}]\leq
    C\mathcal{N}_{h}[\delta]+
    C\tilde{\rho}_0^{\frac{2}{3}-2\varepsilon}\mathcal{N}_{re}
    +C\tilde{\rho}_0^{\frac{5}{3}-\alpha_3-\varepsilon} L^3,
  \end{equation}
  and for $\delta<0$,
  \begin{equation}\label{N_h bog bound delta<0}
    \pm[\mathcal{N}_h[\delta],\tilde{B}]\leq
    C\mathcal{N}_{h}[\delta]+
    C\tilde{\rho}_0^{\frac{2}{3}-\varepsilon-2\delta}\mathcal{N}_{re}
    +C\tilde{\rho}_0^{\frac{5}{3}-\alpha_3-2\delta} L^3.
  \end{equation}
  Then (\ref{bog control N_h[delta]}) and (\ref{Bog control N_h[delta]delta<0}) follow from Gronwall's inequality and (\ref{bog control N_re}).
  \par For $\mathcal{N}_i[\delta]$, we have
  \begin{equation}\label{com N_i bog}
    [\mathcal{N}_i[\delta],\tilde{B}]=\sum_{k,p,q,\sigma,\nu}
  \xi_{k,q,p}^{\nu,\sigma}a^*_{p-k,\sigma}a^*_{q+k,\nu}a_{q,\nu}a_{p,\sigma}
  \chi_{p-k,q+k\notin B_F}\chi_{p\in \underline{B}_{F,\delta}}\chi_{q\in B_F}+h.c.
  \end{equation}
  We use (\ref{xi1and 2}) to rewrite $[\mathcal{N}_i[\delta],\tilde{B}]=\mathrm{I}+\mathrm{II}$. As discussed above, we only need to bound $\mathrm{I}$. Using $\chi_{p-k\notin B_F}=\chi_{p-k\in P_{F,\delta}}+\chi_{p-k\in A_{F,\delta}}$, we can furthermore divide $\mathrm{I}=\mathrm{I}_1+\mathrm{I}_2$. For $\mathrm{I}_1$, we can bound it in the way of (\ref{N_h bog bound delta>=0})
  \begin{equation}\label{N_i bog I_1}
    \pm\mathrm{I}_1\leq C\mathcal{N}_{h}[\delta]+
    C\tilde{\rho}_0^{\frac{2}{3}-2\varepsilon}\mathcal{N}_{re}
    +C\tilde{\rho}_0^{\frac{5}{3}-\alpha_3-\varepsilon} L^3.
  \end{equation}
  For $\mathrm{I}_2$, we use Lemma \ref{property htgt} and Lemma \ref{est sum htgt} to attain
  \begin{align*}
    \sum_{\sigma,\nu}\int_{0}^{\infty}\int_{\Lambda_L^2}e^{-2\epsilon_0 t}
    \vert W^{\tilde{\zeta}^-}(x-y)\vert \Vert
   a^*(I_{x,\sigma}^t[\delta])a(L_{x,\sigma}^t[\delta])\psi\Vert^2dxdydt\leq C\tilde{\rho}_0^{\frac{1}{3}}
   \vert\ln\tilde{\rho}_0\vert\mathcal{N}_{i}[\delta],
  \end{align*}
  then similar to the bound of $\tilde{A}_1$ in (\ref{tildeA}), we have
  \begin{equation}\label{N_i bog I_2}
    \pm\mathrm{I}_1\leq C\mathcal{N}_{i}[\delta]+
    C\tilde{\rho}_0^{\frac{2}{3}-2\varepsilon}\mathcal{N}_{re}
    +C\tilde{\rho}_0^{\frac{5}{3}-\alpha_3-\varepsilon} L^3.
  \end{equation}
 Therefore, we have
  \begin{equation}\label{N_i bog bound delta>=0}
    \pm[\mathcal{N}_i[\delta],\tilde{B}]\leq C\mathcal{N}_i[\delta]+
    C\mathcal{N}_{h}[\delta]+
    C\tilde{\rho}_0^{\frac{2}{3}-2\varepsilon}\mathcal{N}_{re}
    +C\tilde{\rho}_0^{\frac{5}{3}-\alpha_3-\varepsilon} L^3.
  \end{equation}
  Then (\ref{bog control N_i[delta]}) follows from Gronwall's inequality, (\ref{bog control N_re}) and (\ref{bog control N_h[delta]}).
\end{proof}

\begin{lemma}\label{lemma bog control K}
  For $0<8\varepsilon<4\alpha_2<\alpha_3<\frac{1}{6}$, and any $N\in\mathbb{N}_{\geq0}$ and $\vert t\vert\leq 1$, we have
  \begin{align}
  e^{-t\tilde{B}}\mathcal{K}_se^{t\tilde{B}}\leq&C\mathcal{K}_s
  +C\tilde{\rho}_0^{1-\alpha_3}\mathcal{N}_{re}+C\tilde{\rho}_0^2L^3,\label{bog con K_s}\\
  e^{-t\tilde{B}}\mathcal{K}_h[-\alpha_2]e^{t\tilde{B}}\leq&
  C\mathcal{K}_h[-\alpha_2]+C\tilde{\rho}_0^{\frac{2}{3}}\mathcal{N}_h[-\alpha_2]
  +C\tilde{\rho}_0^{\frac{4}{3}+2\alpha_2-\varepsilon}\mathcal{N}_{re}
  +C\tilde{\rho}_0^{\frac{7}{3}-\alpha_3-
    \varepsilon}L^3.\label{bog con K_h}
  \end{align}
  Moreover,
  \begin{equation}\label{bog con Gammatilde}
    \pm\tilde{\Gamma}\leq C\epsilon_0\mathcal{N}_{re}+C\epsilon_0\tilde{\rho}_0^
    {\frac{5}{3}-\alpha_3-\varepsilon}L^3.
  \end{equation}
  Since we have set $\epsilon_0=\tilde{\rho}_0^2$, we have
  \begin{equation}\label{bog con Gammatilde 2}
    \pm\int_{0}^{1}e^{-t\tilde{B}}\tilde{\Gamma}e^{t\tilde{B}}dt\leq
    C\tilde{\rho}_0^2\mathcal{N}_{re}+C\tilde{\rho}_0^
    {\frac{11}{3}-\alpha_3-\varepsilon}L^3.
  \end{equation}
\end{lemma}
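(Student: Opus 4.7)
My approach rests on a structural algebraic cancellation that motivates the very choice of $\xi_{k,q,p}^{\nu,\sigma}$. By (\ref{define xi_k,q,p,nu,sigma}), on the support of the indicators,
\begin{equation*}
\xi_{k,q,p}^{\nu,\sigma}\Bigl(\tfrac12(|q+k|^2+|p-k|^2-|q|^2-|p|^2)+\epsilon_0\Bigr)=-\Bigl(\tfrac{W_k}{L^3}\tilde\zeta^-(k)+\eta_k k(q-p)\phi^+(k)\tilde\zeta^-(k)\Bigr).
\end{equation*}
Computing $[\mathcal{K},\tilde A]$ brings down the kinetic factor $|q+k|^2+|p-k|^2-|q|^2-|p|^2$ against $\xi$, and after antisymmetrizing via $\tilde B=\tfrac12(\tilde A-\tilde A^*)$ I expect the clean identity
\begin{equation*}
[\mathcal{K},\tilde B]=-\tilde{\mathcal V}_{21}^\prime-\tilde\Omega-\epsilon_0(\tilde A+\tilde A^*),\qquad\text{so that}\qquad\tilde\Gamma=-\epsilon_0(\tilde A+\tilde A^*).
\end{equation*}
This immediately yields (\ref{bog con Gammatilde}): multiplying by $\epsilon_0$ the bound $\pm\tilde A\leq C\mathcal{N}_{re}+C\tilde\rho_0^{5/3-\alpha_3-\varepsilon}L^3$ established in (\ref{Atilde1bound}) gives the claim. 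Then (\ref{bog con Gammatilde 2}) follows by conjugating with $e^{t\tilde B}$ and applying Lemma \ref{lemma bog control N_re}; since $\epsilon_0=\tilde\rho_0^2$ the growth of $\mathcal{N}_{re}$ under the Bogoliubov flow is absorbed into the claimed bound.

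For (\ref{bog con K_s}), the identity $[\mathcal{K}_s,\tilde B]=[\mathcal{K},\tilde B]$ from (\ref{K_s and K}) combined with the decomposition above reduces matters to bounding $\tilde{\mathcal V}_{21}^\prime+\tilde\Omega+\epsilon_0\,|\tilde A|$. The bound for $\tilde{\mathcal V}_{21}^\prime$ is exactly the content of Lemma \ref{lemma qua V_21'} with $\zeta^-$ replaced by $\tilde\zeta^-$, since the scattering-equation estimates in Section \ref{scattering eqn sec} depend only on the scale of the frequency cap. The bound for $\tilde\Omega$ is (\ref{est qua Omega}) verbatim. Gronwall combined with Lemma \ref{lemma bog control N_re} delivers (\ref{bog con K_s}).

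The genuinely delicate piece is (\ref{bog con K_h}). The commutator $[\mathcal{K}_h[-\alpha_2],\tilde B]$ produces an extra weight $|p-k|^2\chi_{p-k\in P_{F,-\alpha_2}}+|q+k|^2\chi_{q+k\in P_{F,-\alpha_2}}$ inside the quadratic form, which is not absorbed by the denominator of $\xi$. My plan is to use the integral representation (\ref{int e}) for that denominator and pass to the $t$-version operators $h^t,g^t$ through Lemma \ref{property htgt}; the extra kinetic weight is then handled by the heat-kernel identity $|p-k|^2 e^{-2|p-k|^2 t}=-\tfrac12\partial_t e^{-2|p-k|^2 t}$, which after integration by parts in $t$ produces a boundary term absorbed by $\mathcal{K}_h[-\alpha_2]$ itself plus a residual bulk term of the same type already controlled in the proof of (\ref{bog control N_re}). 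Combining this with Lemmas \ref{est xi/ p^2 lemma}, \ref{lemma bog htgt}, and \ref{est sum htgt}, I expect
\begin{equation*}
\pm[\mathcal{K}_h[-\alpha_2],\tilde B]\leq C\mathcal{K}_h[-\alpha_2]+C\tilde\rho_0^{2/3}\mathcal{N}_h[-\alpha_2]+C\tilde\rho_0^{4/3+2\alpha_2-\varepsilon}\mathcal{N}_{re}+C\tilde\rho_0^{7/3-\alpha_3-\varepsilon}L^3,
\end{equation*}
after which Gronwall and Lemma \ref{lemma bog control N_re} close the proof. The main obstacle is the careful bookkeeping of the powers of $\tilde\rho_0$: the $\tilde\rho_0^{2\alpha_2}$ weight comes from the Riemann-sum volume of $P_{F,-\alpha_2}$, while the $\tilde\rho_0^{-\varepsilon}$ factor absorbs the logarithmic losses in Lemma \ref{est sum htgt}. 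Getting this balance sharp—so that the $\mathcal{K}_h[-\alpha_2]$ term on the right really has coefficient $O(1)$ rather than a slowly growing logarithm in $L$—is where the gap $\epsilon_0=\tilde\rho_0^2$ is essential.
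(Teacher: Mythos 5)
Your treatment of (\ref{bog con Gammatilde}), (\ref{bog con Gammatilde 2}) and (\ref{bog con K_s}) coincides with the paper's: the identity $[\mathcal{K},\tilde B]=-\tilde{\mathcal V}_{21}^\prime-\tilde\Omega-\epsilon_0(\tilde A+\tilde A^*)$, hence $\tilde\Gamma=-\epsilon_0(\tilde A+\tilde A^*)$, followed by (\ref{Atilde1bound}), the $\tilde\zeta^-$-versions of (\ref{qua V_21'}) and (\ref{est qua Omega}), Gronwall and Lemma \ref{lemma bog control N_re}, is exactly what the paper does.

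For (\ref{bog con K_h}) there is a genuine gap. The commutator $[\mathcal{K}_h[-\alpha_2],\tilde B]$ carries the weight $\vert p-k\vert^2\chi_{p-k\in P_{F,-\alpha_2}}+\vert q+k\vert^2\chi_{q+k\in P_{F,-\alpha_2}}$, and your heat-kernel integration by parts in $t$ only reproduces the full collisional denominator (and hence the cancellation built into $\xi$) when \emph{both} excited momenta lie in $P_{F,-\alpha_2}$. This is automatic for the $\eta_k k(q-p)\phi^+(k)$ part of $\xi$, but the $L^{-3}W_k\tilde\zeta^-(k)$ part has no infrared cutoff: for small $k$ one of $p-k,\,q+k$ may sit in the thin shell between $B_F^c$ and $P_{F,-\alpha_2}$, and then your telescoping leaves a bulk term weighted by $\vert q+k\vert^2\lesssim\tilde\mu\tilde\rho_0^{-2\alpha_2}$, which is of the same size as the original term and is \emph{not} "of the same type already controlled in the proof of (\ref{bog control N_re})"; a crude bound there produces $\tilde\rho_0^{2/3-2\alpha_2}\mathcal{N}_{re}$, far worse than the claimed $\tilde\rho_0^{4/3+2\alpha_2-\varepsilon}\mathcal{N}_{re}$. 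The paper resolves precisely this by splitting $1=\phi^-(k)+\phi^+(k)$ in (\ref{com bog K_h-alpha_2 split}): on $\phi^+$ the cancellation is exact and yields the terms $\mathrm{II}$ (weight $\vert p\vert^2+\vert q\vert^2-2\epsilon_0\lesssim\tilde\rho_0^{2/3}$, bounded like $\tilde A$) and $\mathrm{III}$ (the numerator alone, a $\tilde{\mathcal V}_{21}^\prime/\tilde\Omega$-type quartic operator bounded by the methods of (\ref{qua V_21'}) and (\ref{est qua Omega}) — note this is what your "$t=0$ boundary term" really is, not something trivially absorbed by $\mathcal{K}_h[-\alpha_2]$), while the $\phi^-$ part (term $\mathrm{I}$) is estimated separately in position space, exploiting that there all momenta are $\lesssim\tilde\mu^{1/2}\tilde\rho_0^{-\alpha_2}$ so the Laplacian weight can be reabsorbed into $\mathcal{K}_h[-\alpha_2]$ with an $O(1)$ constant at the price of the $\tilde\rho_0^{7/3-3\alpha_2-\varepsilon}L^3$ error. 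Your sketch never addresses this low-$k$/asymmetric-indicator region, and the final $\tilde\rho_0^{4/3+2\alpha_2-\varepsilon}\mathcal{N}_{re}$ coefficient comes not from a "Riemann-sum volume of $P_{F,-\alpha_2}$" but from the conjugation bound (\ref{Bog control N_h[delta]delta<0}) with $\delta=-\alpha_2$, i.e.\ from the energy gap $\vert k\vert^2-k_F^2\gtrsim\tilde\mu\tilde\rho_0^{-2\alpha_2}$ on $P_{F,-\alpha_2}$, applied under Gronwall to the $\tilde\rho_0^{2/3}\mathcal{N}_h[-\alpha_2]$ term.
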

\begin{proof}
  \par For (\ref{bog con K_s}), we have
  \begin{equation}\label{com bog K_s}
    [\mathcal{K}_s,\tilde{B}]=[\mathcal{K},\tilde{B}]=
    -\tilde{\mathcal{V}}_{21}^\prime-\tilde{\Omega}-\epsilon_0
    (\tilde{A}+\tilde{A}^*).
  \end{equation}
  $\tilde{V}_{21}^\prime$ can be bounded in the same way as (\ref{qua V_21'}), $\tilde{\Omega}$ can be bounded in the same way as (\ref{est qua Omega}), and $\tilde{A}$ has been bounded in (\ref{Atilde1bound}). Therefore, for $\epsilon_0=\tilde{\rho}_0^2$
  \begin{equation}\label{bound com bog K_s}
    \pm[\mathcal{K}_s,\tilde{B}]\leq C\mathcal{K}_s
    +C\tilde{\rho}_0^{1-\alpha_3}\mathcal{N}_{re}+C\tilde{\rho}_0^{2}L^3.
  \end{equation}
  Hence (\ref{bog con K_s}) follows from Gronwall's inequality and (\ref{bog control N_re}).

\par For $\tilde{\Gamma}$, from (\ref{define Gamme tilde}) and (\ref{com bog K_s}), we have
\begin{equation}\label{Gamma  tilde}
  \tilde{\Gamma}=-\epsilon_0(\tilde{A}+\tilde{A}^*).
\end{equation}
Therefore, (\ref{bog con Gammatilde}) and (\ref{bog con Gammatilde 2}) follow from (\ref{Atilde1bound}) and Lemma \ref{lemma bog control N_re}.

   \par For $\mathcal{K}_h[-\alpha_2]$, we have
  \begin{equation}\label{com bog K_h-alpha_2}
  \begin{aligned}
    &[\mathcal{K}_h[-\alpha_2],\tilde{A}]\\
    =&\sum_{k,p,q,\sigma,\nu}
  \xi_{k,q,p}^{\nu,\sigma}\vert p-k\vert^2
  a^*_{p-k,\sigma}a^*_{q+k,\nu}a_{q,\nu}a_{p,\sigma}
  \chi_{p-k\in P_{F,-\alpha_2}}\chi_{q+k\notin B_F}\chi_{p,q\in B_F}\\
  &+\sum_{k,p,q,\sigma,\nu}
  \xi_{k,q,p}^{\nu,\sigma}\vert q+k\vert^2
  a^*_{p-k,\sigma}a^*_{q+k,\nu}a_{q,\nu}a_{p,\sigma}
  \chi_{q+k\in P_{F,-\alpha_2}}\chi_{p-k\notin B_F}\chi_{p,q\in B_F}.
  \end{aligned}
  \end{equation}
  From definitions (\ref{define xi_k,q,p,nu,sigma}) and (\ref{xi1and 2}), we use $1=\phi^-(k)+\phi^+(k)$, and notice the fact that $\phi^+(k)\chi_{p,q\in B_F}\neq0$ implies $p-k, q+k\in P_{F,-\alpha_2}$, we rewrite
  \begin{equation}\label{com bog K_h-alpha_2 split}
    [\mathcal{K}_h[-\alpha_2],\tilde{A}]=\mathrm{I}+\mathrm{II}+\mathrm{III},
  \end{equation}
  with
  \begin{align}
  \mathrm{I}&=2\sum_{k,p,q,\sigma,\nu}
  \xi_{k,q,p}^{\nu,\sigma}\vert p-k\vert^2\phi^-(k)
  a^*_{p-k,\sigma}a^*_{q+k,\nu}a_{q,\nu}a_{p,\sigma}
  \chi_{p-k\in P_{F,-\alpha_2}}\chi_{q+k\notin B_F}\chi_{p,q\in B_F}\nonumber\\
  \mathrm{II}&=\sum_{k,p,q,\sigma,\nu}
  \xi_{k,q,p}^{\nu,\sigma}\big(\vert p\vert^2+\vert q\vert^2-2\epsilon_0\big)\phi^+(k)
  \nonumber\\
  &\quad\quad\times a^*_{p-k,\sigma}a^*_{q+k,\nu}a_{q,\nu}a_{p,\sigma}
  \chi_{p-k,q+k\in P_{F,-\alpha_2}}\chi_{p,q\in B_F}\nonumber\\
  \mathrm{III}&=-2\sum_{k,p,q,\sigma,\nu}
  \big(L^{-3}W_k+\eta_kk(q-p)\big)\phi^+(k)\tilde{\zeta}^-(k)\nonumber\\
  &\quad\quad\times
  a^*_{p-k,\sigma}a^*_{q+k,\nu}a_{q,\nu}a_{p,\sigma}
  \chi_{p-k,q+k\in P_{F,-\alpha_2}}\chi_{p,q\in B_F}\label{com bog K_h-alpha_2 I II III}
  \end{align}
  \par For $\mathrm{I}$, by definitions (\ref{define xi_k,q,p,nu,sigma}) and (\ref{xi1and 2}), and notice that $\phi^-(k)\neq0$ implies $\vert p-k\vert, \vert q+k\vert\leq 3\tilde{\mu}^{\frac{1}{2}}\tilde{\rho}_0^{-\alpha_2}$, we temporarily make the following notations:
  \begin{align*}
    &\mathbf{h}_{x,\sigma}^t=\sum_{\vert k\vert\leq 3\tilde{\mu}^{\frac{1}{2}}\tilde{\rho}_0^{-\alpha_2}
    }\frac{e^{ikx}}{L^{\frac{3}{2}}}e^{-k^2t}f_{k,\sigma}(z)\chi_{k\in P_{F,-\alpha_2}},\\
    &\tilde{\mathbf{h}}_{x,\sigma}^t=
    \sum_{\vert k\vert\leq 3\tilde{\mu}^{\frac{1}{2}}\tilde{\rho}_0^{-\alpha_2}
    }\frac{e^{ikx}}{L^{\frac{3}{2}}}e^{-k^2t}f_{k,\sigma}(z)\chi_{k\notin B_F},
  \end{align*}
  then
  \begin{equation}\label{I K_h -alpha_2}
    \begin{aligned}
    \mathrm{I}=C\sum_{\sigma,\nu}\int_{0}^{\infty}\int_{\Lambda_L^2}W^{\phi^-}(x-y)
    &e^{-2\epsilon_0 t}a^*(\Delta_x\mathbf{h}_{x,\sigma}^t)a^*(\tilde{\mathbf{h}}_{y,\nu}^t)\\
    &\times a(g_{y,\nu}^t)a(g_{x,\sigma}^t)dxdydt
    \end{aligned}
  \end{equation}
  By Cauchy-Schwartz inequality, for any $\psi\in\mathcal{H}^{\wedge N}$,
  \begin{align*}
    &\vert\langle\mathrm{I}\psi,\psi\rangle\vert\\
    &\leq
    \Big(\sum_{\sigma,\nu}\int_{0}^{\infty}\int_{\Lambda_L^2}\vert W^{\phi^-}(x-y)\vert
    e^{-2\epsilon_0 t}e^{2k_F^2t}\Vert a^*(g_{y,\nu}^t)\Vert^2
    \Vert a(\tilde{\mathbf{h}}_{y,\nu}^t)\Vert^2\Vert a^*(\Delta_x\mathbf{h}_{x,\sigma}^t)
    \psi\Vert^2 \Big)^{\frac{1}{2}}\\
    &\times (\sum_{\sigma,\nu}\int_{0}^{\infty}\int_{\Lambda_L^2}\vert W^{\phi^-}(x-y)\vert
    e^{-2\epsilon_0 t}e^{-2k_F^2t}\Vert a(g_{x,\sigma}^t)\Vert^2
    \Vert\psi\Vert^2 \Big)^{\frac{1}{2}}
  \end{align*}
  Since
  \begin{equation*}
    \Vert a^*(g_{y,\nu}^t)\Vert^2
    \Vert a(\tilde{\mathbf{h}}_{y,\nu}^t)\Vert^2\leq C\tilde{\rho}_0e^{2k_F^2t}
    \sum_{\vert k\vert\leq 3\tilde{\mu}^{\frac{1}{2}}\tilde{\rho}_0^{-\alpha_2}}\frac{e^{-2p^2t}}{L^3}
  \end{equation*}
  and
  \begin{align*}
   \int_{\Lambda_L}\Vert a^*(\Delta_x\mathbf{h}_{x,\sigma}^t)\psi\Vert^2dx
   \leq C\tilde{\rho}_0^{\frac{2}{3}-2\alpha_2}e^{-2k_F^2t}\langle\mathcal{K}_h[-\alpha_2]
   \psi,\psi\rangle.
  \end{align*}
  Therefore, estimating similarly to (\ref{est sum A_F d <=0}), We have
  \begin{equation*}
  \begin{aligned}
   \sum_{\sigma,\nu}\int_{0}^{\infty}\int_{\Lambda_L^2}&\vert W^{\phi^-}(x-y)\vert
    e^{-2\epsilon_0 t}e^{2k_F^2t}\Vert a^*(g_{y,\nu}^t)\Vert^2
    \Vert a(\tilde{\mathbf{h}}_{y,\nu}^t)\Vert^2\Vert a^*(\Delta_x\mathbf{h}_{x,\sigma}^t)
    \psi\Vert^2\\
    &\leq C\tilde{\rho}_0^{2-3\alpha_2}\mathcal{K}_h[-\alpha_2].
    \end{aligned}
  \end{equation*}
  On the other hand, using (\ref{property gt norm}) and (\ref{est sum}), it is easy to bound
  \begin{align*}
    \sum_{\sigma,\nu}\int_{0}^{\infty}\int_{\Lambda_L^2}\vert W^{\phi^-}(x-y)\vert
    e^{-2\epsilon_0 t}e^{-2k_F^2t}\Vert a(g_{x,\sigma}^t)\Vert^2
    \Vert\psi\Vert^2\leq C\tilde{\rho}_0^{\frac{1}{3}-\varepsilon}L^3.
  \end{align*}
  Therefore,
  \begin{equation}\label{I bound bog}
    \pm\mathrm{I}\leq C\mathcal{K}_{h}[-\alpha_2]+C\tilde{\rho}_0^{\frac{7}{3}-3\alpha_2-
    \varepsilon}L^3.
  \end{equation}
  \par For $\mathrm{II}$, it can be bounded analogously to $\tilde{A}$:
  \begin{equation}\label{II bound bog}
    \pm\mathrm{II}\leq C\tilde{\rho}_0^{\frac{2}{3}}
    \mathcal{N}_{h}[-\alpha_2]+C\tilde{\rho}_0^{\frac{7}{3}-\alpha_3-
    \varepsilon}L^3.
  \end{equation}
  \par For $\mathrm{III}$, the analysis is analogous to the process in proving (\ref{est qua Omega}) and (\ref{qua V_21'}), we omit further tedious details and write out the result directly:
  \begin{equation}\label{III bound bog}
    \pm\mathrm{III}\leq C\mathcal{K}_h[-\alpha_2]+C\tilde{\rho}_0^{\frac{2}{3}}
    \mathcal{N}_{h}[-\alpha_2]+C\tilde{\rho}_0^{\frac{7}{3}-\alpha_3}L^3.
  \end{equation}
Therefore, (\ref{bog con K_h}) follows by Gronwall's inequality and Lemma \ref{lemma bog control N_re}.

\end{proof}

\begin{lemma}\label{lemma bog control V_4,4h}
For $\frac{1}{24}>\frac{1}{2}\delta_3>\frac{1}{4}\alpha_3>\alpha_2>\alpha_5>2\alpha_4>2\varepsilon >0$, $\frac{1}{3}>\delta_4\geq\delta_1\geq\delta_2>\frac{1}{12}$, $\delta_4>\alpha_6+\varepsilon>2\alpha_3+2\alpha_4+3\varepsilon$, $\beta_1=\frac{1}{3}+\alpha_5$, assume further that $\alpha_1>\delta_4$, we have
  \begin{equation}\label{bog control V_4,4h}
    \tr e^{-\tilde{B}}\mathcal{V}_{4,4h}e^{\tilde{B}}\tilde{G}_0\leq C
    \tilde{\rho}_0^{\frac{8}{3}-\alpha_6-\alpha_3}L^3.
  \end{equation}
\end{lemma}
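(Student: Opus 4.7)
The plan is to reduce the estimate to number-operator bounds on the transformed state $e^{\tilde B}\tilde G_0 e^{-\tilde B}$. Using cyclicity of the trace,
\begin{equation*}
\tr e^{-\tilde B}\mathcal V_{4,4h}e^{\tilde B}\tilde G_0 = \tr \mathcal V_{4,4h}\bigl(e^{\tilde B}\tilde G_0 e^{-\tilde B}\bigr).
\end{equation*}
I would first aim for an operator bound of the schematic form $\mathcal V_{4,4h}\leq C\tilde\rho_0^{1-\alpha_6+\varepsilon}\,\mathcal N_h[\delta_4]$, modulo strictly lower-order terms, by exploiting that all four momenta of $\mathcal V_{4,4h}$ lie in $P_{F,\delta_4}$ and that the $\tilde\zeta^-$ cutoff built into $\tilde B$ restricts the momenta it can populate in $P_{F,\delta_4}$ to an annulus of width $\tilde\mu^{1/2}\tilde\rho_0^{-\alpha_6}$ around $k_F$. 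The $\alpha_6$ gain relative to the naive $L^{-3}\mathcal N_h[\delta_4]^2$ bound comes from this annular localization combined with the compact support of $v$. Transferring back via cyclicity and applying (\ref{bog control N_h[delta]}),
\begin{equation*}
\tr e^{-\tilde B}\mathcal N_h[\delta_4]e^{\tilde B}\tilde G_0\lesssim \tr\mathcal N_h[\delta_4]\tilde G_0 + \tilde\rho_0^{\frac{2}{3}-2\varepsilon}\tr\mathcal N_{re}\tilde G_0 + \tilde\rho_0^{\frac{5}{3}-\alpha_3-\varepsilon}L^3 \lesssim \tilde\rho_0^{\frac{5}{3}-\alpha_3-\varepsilon}L^3,
\end{equation*}
since $\tr\mathcal N_h[\delta_4]\tilde G_0$ is exponentially small (using $\alpha_1>\delta_4$ to ensure $\beta(|k|^2-\tilde\mu)\gtrsim\tilde\rho_0^{-\alpha_1+\delta_4}\to\infty$ for $k\in P_{F,\delta_4}$) and $\tr\mathcal N_{re}\tilde G_0\lesssim \tilde\rho_0 L^3$ from the Fermi--Dirac formula. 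Multiplying yields the claimed $C\tilde\rho_0^{8/3-\alpha_6-\alpha_3}L^3$ after the $\tilde\rho_0^\varepsilon$ factors are absorbed into the slack of the parameter constraints.

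To prove the operator bound rigorously I would use a Duhamel expansion
\begin{equation*}
\tr \mathcal V_{4,4h}\bigl(e^{\tilde B}\tilde G_0 e^{-\tilde B}\bigr) = \tr\mathcal V_{4,4h}\tilde G_0 + \int_0^1\tr[\mathcal V_{4,4h},\tilde B]\bigl(e^{t\tilde B}\tilde G_0 e^{-t\tilde B}\bigr)\,dt.
\end{equation*}
The first term is handled directly by Wick's theorem (quasi-freeness of $\tilde G_0$) and the exponential decay of $\hat{\tilde\gamma}_0(k,\sigma)$ for $k\in P_{F,\delta_4}$, giving $\lesssim L^{-3}(\tr\mathcal N_h[\delta_4]\tilde G_0)^2$ which is doubly-exponentially small. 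For the commutator integral I would compute $[\mathcal V_{4,4h},\tilde B]$ via the fermionic commutator rule for two quartic strings: it splits into six-operator terms from single anti-commutator contractions and four-operator residues from double contractions. Every nonvanishing contraction identifies a $P_{F,\delta_4}$-momentum with either $p-k$ or $q+k$ from $\tilde B$ (since the $B_F$-operators of $\tilde B$ cannot contract against those of $\mathcal V_{4,4h}$), forcing $|k|\lesssim\tilde\mu^{1/2}\tilde\rho_0^{-\alpha_6}$ via $\tilde\zeta^-$ and producing the $\alpha_6$ exponent. Each resulting term is then bounded by a position-space Cauchy--Schwarz in the spirit of Lemmas \ref{b^* bound by b general} and \ref{lemma A_k B_k}, using the kernel bounds of Lemma \ref{est xi/ p^2 lemma} for $\xi$ and the bounds of Section \ref{scattering eqn sec} for $W_k$ and $\eta_k$.

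The main obstacle is the case-by-case enumeration of contractions in $[\mathcal V_{4,4h},\tilde B]$: there are many distinct six- and four-operator terms depending on which creation pairs with which annihilation, and each requires its own estimate tracking whether the bound is powered by the $W_k$-part of $\xi$ (yielding $\tilde\rho_0^{1/3-\alpha_3}$) or the $\eta_kk(q-p)$-part (yielding $\tilde\rho_0^{1/3}|\ln\tilde\rho_0|$), and whether localization comes from $\tilde\zeta^-(k)$ or $\tilde\zeta^-(k+q-p)$. The four-operator terms from double contractions produce $L^{-6}\sum_{p\in B_F}$ factors that would diverge logarithmically in $L$ without the gap $\epsilon_0=\tilde\rho_0^2$ in the denominator of $\xi$; showing these are genuinely subleading, and absorbing all $|\ln\tilde\rho_0|$ factors into the $\tilde\rho_0^\varepsilon$ slack afforded by the parameter constraints, is where the detailed bookkeeping lies.
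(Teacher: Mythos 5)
Your second paragraph follows essentially the paper's route (Duhamel/commutator expansion of the conjugation, Wick's theorem on the quasi-free state $\tilde{G}_0$, the a-priori number bounds of Lemma \ref{lemma bog control N_re}, and $\alpha_1>\delta_4$ to make $\tr\mathcal{N}_h[\delta_4]\tilde{G}_0$ exponentially small). But two points are genuine problems. First, the target of your first paragraph, an operator inequality $\mathcal{V}_{4,4h}\leq C\tilde{\rho}_0^{1-\alpha_6+\varepsilon}\mathcal{N}_h[\delta_4]$ "modulo lower order", cannot hold: $\mathcal{V}_{4,4h}$ is defined without reference to $\tilde{B}$, so its spectrum cannot see the cutoff $\tilde{\zeta}^-$; the generic bound is only of the form $\mathcal{V}_{4,4h}\lesssim L^{-3}\mathcal{N}_h[\delta_4]^2$, and turning the quadratic into a linear number bound with a small prefactor is a property of the state, not of the operator. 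That paragraph is therefore only heuristic, and the whole burden falls on the commutator expansion.

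Second, running the commutator argument directly on $\mathcal{V}_{4,4h}$ does not close. The paper first splits off $\mathcal{V}_{4,4H}$, the piece with all four momenta of modulus $>4k_F$ (see (\ref{V_4,4H})), bounds the remainder crudely by Cauchy--Schwarz as in (\ref{444}), and applies the Gronwall/commutator argument only to $\mathcal{V}_{4,4H}$. This $4k_F$ separation is what makes the six-operator term $\tilde{\Theta}_2$ in (\ref{Thetatilde}) estimable: the contraction $\delta_{p,r-l}$ with $\vert p\vert>4k_F$, $r\in B_F$ and $\tilde{\zeta}^-(l)\neq0$ forces $\vert l\vert>3k_F$ and hence $\vert s+l\vert\geq 2k_F$ (the facts collected in (\ref{facts Thetatilde2})), which supplies the spectral gap $p^2-2k_F^2\gtrsim k_F^2$ needed for the Laplace-transform ($t$-integral) representation to converge against the growing factors $e^{2q^2t}$ from $g^t$, and yields the clean $\tilde{\rho}_0^{\frac{1}{3}-\alpha_6}$ gain. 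If the quartic momenta are allowed down to $k_F+\tilde{\mu}^{\frac{1}{2}}\tilde{\rho}_0^{\delta_4}$, as in your plan, the relevant denominators degenerate to $p^2-k_F^2+\epsilon_0$ near the Fermi surface and the annular-localization gain you invoke is contaminated by extra negative powers of $\tilde{\rho}_0$ beyond what the $\varepsilon$-slack absorbs. The missing idea is precisely this additional frequency splitting at $4k_F$ before commuting with $\tilde{B}$.
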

\begin{proof}
  We will show that for any $N\in\mathbb{N}_{\geq0}$, we have
  \begin{equation}\label{bog bound V_4,4h}
    e^{-\tilde{B}}\mathcal{V}_{4,4h}e^{\tilde{B}}\leq C\mathcal{V}_{4,4H}
    +C\tilde{\rho}_0\mathcal{N}_h[\delta_4]+C\tilde{\rho}_0^
    {\frac{5}{3}-2\varepsilon}\mathcal{N}_{re}+C\tilde{\rho}_0^{\frac{8}{3}
    -\alpha_6-\alpha_3}L^3,
  \end{equation}
  where
  \begin{equation}\label{V_4,4H}
    \mathcal{V}_{4,4H}=\frac{1}{2L^3}\sum_{k,p,q,\sigma,\nu}
    \hat{v}_ka^*_{p-k,\sigma}a^*_{q+k,\nu}a_{q,\nu}a_{p,\sigma}
    \chi_{\vert p-k\vert,\vert q+k\vert,\vert p\vert,\vert q\vert>4k_F}.
  \end{equation}
  Now recall that $\tilde{G}_0=(\tr e^{-\beta(\mathcal{K}-\tilde{\mu}\mathcal{N})})^{-1}
  e^{-\beta(\mathcal{K}-\tilde{\mu}\mathcal{N})}$, since $\tilde{G}_0$ is quasi-free, translation-invariant, and preserves particle number, by Wick's theorem and the exact form of $\tilde{G}_0$, we have
  \begin{equation}\label{111}
    \frac{1}{L^3}\tr\mathcal{V}_{4,4H}\tilde{G}_0
    \leq\Big(\frac{C}{L^3}\tr\mathcal{N}_{>4k_F}\tilde{G}_0\Big)^2
    \leq \big(C\tilde{\rho}_0^{1+2\alpha_1}e^{-\tilde{\rho}_0^{-\alpha_1}}+C(\tilde{\rho}_0)L^{-1}
    \big)^2,
  \end{equation}
  where we let
  \begin{equation}\label{N>4kF}
    \mathcal{N}_{>4k_F}=\sum_{\vert k\vert>4k_F,\sigma}a_{k,\sigma}^*a_{k,\sigma},
  \end{equation}
  and similar to (\ref{aproribound number of particle outside}), we have
  \begin{equation}\label{aproribound N>4k_F}
    \frac{1}{L^3}\tr\mathcal{N}_{>4k_F}\tilde{G}_0\leq C\tilde{\rho}_0^{1+2\alpha_1}e^{-\tilde{\rho}_0^{-\alpha_1}}+C(\tilde{\rho}_0)L^{-1}.
  \end{equation}
  Also,
  \begin{equation}\label{333}
    \mathcal{N}_{re}=\mathcal{N}_l[\delta_4]+\mathcal{N}_h[\delta_4]
    \leq \mathcal{N}_h[\delta_4]+C\tilde{\rho}_0^{1+\delta_4}L^3.
  \end{equation}
  and by (\ref{aproribound number of particle outside}), we have
  \begin{equation}\label{222}
    \frac{1}{L^3}\tr\mathcal{N}_h[\delta_4]\tilde{G}_0
    \leq C\tilde{\rho}_0^{1+2\alpha_1-\delta_4}
    e^{-\tilde{\rho}_0^{-\alpha_1+\delta_4}}+C(\tilde{\rho}_0)L^{-1}.
  \end{equation}
  Combining (\ref{bog bound V_4,4h}), (\ref{111}), (\ref{333}) and (\ref{222}), we reach (\ref{bog control V_4,4h}).

  \par For the proof of (\ref{bog bound V_4,4h}), we first let in this proof
  \begin{align*}
    \mathbf{h}_{x,\sigma}(z)=\sum_{\vert k\vert>4k_F}\frac{e^{ikx}}{L^{\frac{3}{2}}}
    f_{k,\sigma}(z),\quad
    \mathbf{g}_{x,\sigma}(z)=\sum_{\vert k\vert\leq4k_F}\frac{e^{ikx}}{L^{\frac{3}{2}}}
    f_{k,\sigma}(z)\chi_{k\in P_{F,\delta_4}}.
  \end{align*}
  Similar to the proof of estimates (\ref{cub control V_4r}), we can divide according to how many indices are in far high frequency,
  \begin{align}\label{diff V_4 4H}
   \mathcal{V}_{4,4h}-\mathcal{V}_{4,4H}=\mathcal{V}_{4,3H}+\mathcal{V}_{4,2H}+\mathcal{V}_{4,1H}+
   \mathcal{V}_{4,0H}.
  \end{align}
  Their estimates are in this case trivial, for example
  \begin{equation}\label{V_4 1H}
    \mathcal{V}_{4,1H}=\sum_{\sigma,\nu}\int_{\Lambda_L^2}v(x-y)
    a^*(\mathbf{h}_{x,\sigma})a^*(\mathbf{g}_{y,\nu})
    a(\mathbf{g}_{y,\nu})a(\mathbf{g}_{x,\sigma})dxdy+h.c.
  \end{equation}
  By Cauchy-Schwartz inequality, and notice that $\mathcal{N}_{>4k_F}\leq \mathcal{N}_h[\delta_4]$,
  \begin{equation}\label{V_4 1H bound}
    \pm\mathcal{V}_{4,1H}\leq C\tilde{\rho}_0
    \mathcal{N}_h[\delta_4].
  \end{equation}
  Thus we present the bound to (\ref{diff V_4 4H}) directly:
  \begin{equation}\label{444}
    \pm\big(\mathcal{V}_{4,4h}-\mathcal{V}_{4,4H}\big)\leq \mathcal{V}_{4,4H}+C\tilde{\rho}_0
    \mathcal{N}_h[\delta_4].
  \end{equation}
  On the other hand, we have
  \begin{equation}\label{com V_4 eBtilde}
    [\mathcal{V}_{4,4H},\tilde{B}]=\tilde{\Theta}_1+\tilde{\Theta}_2
  \end{equation}
  with
  \begin{equation}\label{Thetatilde}
   \begin{aligned}
    \tilde{\Theta}_1&=\frac{1}{2L^3}\sum_{k,p,q,\sigma,\nu}\sum_{l,r,s}
    \hat{v}_{k}\xi_{l,s,r}^{\nu,\sigma}
    (a^*_{p-k,\sigma}a^*_{q+k,\nu}a_{s,\nu}a_{r,\sigma}+h.c.)\\
    &\quad\quad\times
    \chi_{\vert p-k\vert,\vert q+k\vert,\vert p\vert,\vert q\vert>4k_F}
    \chi_{r-l,s+l\notin B_F}
    \chi_{r,s\in B_F}
    \delta_{p,r-l}\delta_{q,s+l}
    \\
    \tilde{\Theta}_2&=-\frac{1}{L^3}\sum_{k,p,q,l,r,s,\sigma,\nu,\varpi}
    \hat{v}_{k}\xi_{l,s,r}^{\varpi,\sigma}
    (a^*_{p-k,\sigma}a^*_{q+k,\nu}a^*_{s+l,\varpi}a_{q,\nu}
    a_{s,\varpi}a_{r,\sigma}+h.c.)\\
    &\quad\quad\times
    \chi_{\vert p-k\vert,\vert q+k\vert,\vert p\vert,\vert q\vert>4k_F}
    \chi_{r-l,s+l\notin B_F}
    \chi_{r,s\in B_F}\delta_{p,r-l}
    \end{aligned}
  \end{equation}
  \par For $\tilde{\Theta}_1$, we use (\ref{xi1and 2}) to divide it by $\tilde{\Theta}_1=
  \tilde{\Theta}_{11}+\tilde{\Theta}_{12}$. We only need to present the proof of the bound to $\tilde{\Theta}_{11}$, as the bound to $\tilde{\Theta}_{12}$ can be proved similarly, with an even smaller upper bound. First notice that by definition (\ref{xi1and 2}), in $\tilde{\Theta}_{11}$, $\tilde{\zeta}^-(l)\neq 0$ implies $\vert r-l\vert,\vert s+l\vert\leq 3\tilde{\mu}^{\frac{1}{2}}\tilde{\rho}_0^{-\alpha_6}$, moreover, we let
  \begin{align*}
    \gamma^t_{x_1,x_3,\sigma}(z)=\sum_{q_3\in B_F}\Big(\frac{1}{L^3}\sum_
    {4k_F<\vert p_3\vert\leq 3\tilde{\mu}^{\frac{1}{2}}\tilde{\rho}_0^{-\alpha_6}}
    &W_{q_3-p_3}\tilde{\zeta}^-(q_3-p_3)e^{-ip_3(x_3-x_1)}e^{-p_3^2t}\Big)\\
    &\times
    \frac{e^{iq_3x_3}}{L^{\frac{3}{2}}}e^{q_3^2t}f_{q_3,\sigma}(z),
  \end{align*}
  and
  \begin{align*}
    G(x,t)=\frac{1}{L^3}\sum_{4k_F<\vert p_4\vert\leq 3\tilde{\mu}^{\frac{1}{2}}\tilde{\rho}_0^{-\alpha_6}}e^{ip_4x}e^{-p_4^2t},
  \end{align*}
  we then write
  \begin{equation}\label{Thetatilde_11}
    \begin{aligned}
    \tilde{\Theta}_{11}=C\sum_{\sigma,\nu}\int_{0}^{\infty}\int_{\Lambda_L^3}e^{-2\epsilon_0 t}
    v(x_1-x_2)&G(x_3-x_2,t)a^*(\mathbf{h}_{x_1,\sigma})a^*(\mathbf{h}_{x_2,\nu})\\
    &\times a(g_{x_3,\nu}^t)
    a(\gamma^t_{x_1,x_3,\sigma})dx_1dx_2dx_3+h.c.
    \end{aligned}
  \end{equation}
  By Cauchy-Schwartz inequality, for any $\psi\in\mathcal{H}^{\wedge N}$, we attain
  \begin{equation*}
    \vert\langle\tilde{\Theta}_{11}\psi,\psi\rangle\vert\leq CS_1^{\frac{1}{2}}S_2^{\frac{1}{2}},
  \end{equation*}
  where by (\ref{property gt norm}) and (\ref{est sum A_F d <=0}), we have
  \begin{align*}
    S_1&=\sum_{\sigma,\nu}\int_{0}^{\infty}\int_{\Lambda_L^3}e^{-2\epsilon_0 t}
    v(x_1-x_2)\vert G(x_3-x_2,t)\vert^2\Vert a^*(g_{x_3,\nu}^t)
     a(\mathbf{h}_{x_2,\nu})a(\mathbf{h}_{x_1,\sigma})\psi\Vert^2\\
     &\leq \int_{0}^{\infty}\frac{\tilde{\rho}_0}{L^3}\sum_{4k_F<\vert p_4\vert\leq 3\tilde{\mu}^{\frac{1}{2}}\tilde{\rho}_0^{-\alpha_6}}e^{-2(p_4^2-k_F^2+\epsilon_0)t}
     dt\langle\mathcal{V}_{4,4H}\psi,\psi\rangle\\
     &\leq \tilde{\rho}_0^{\frac{4}{3}-\alpha_6}\langle\mathcal{V}_{4,4H}\psi,\psi\rangle,
  \end{align*}
  and by (\ref{est Wk/p^2}), we deduce
  \begin{align*}
    S_2&=\sum_{\sigma,\nu}\int_{0}^{\infty}\int_{\Lambda_L^3}e^{-2\epsilon_0 t}
    v(x_1-x_2)\Vert a(\gamma^t_{x_1,x_3,\sigma})\psi\Vert^2\\
    &\leq L^3\int_{0}^{\infty}\sum_{\substack{q\in B_F\\ p\notin B_F}}\frac{\vert W_{q-p}\vert}{L^6}e^{-2(p^2-k_F^2+\epsilon_0)t}dt\leq C\tilde{\rho}_0^{\frac{4}{3}-\alpha_3}L^3.
  \end{align*}
  Therefore,
  \begin{equation}\label{tildeTheta_1bound}
    \pm\tilde{\Theta}_1\leq C\mathcal{V}_{4,4H}
    +C\tilde{\rho}_0^{\frac{8}{3}-\alpha_6-\alpha_3}L^3.
  \end{equation}
  \par For $\tilde{\Theta}_2$, we also rewrite it using (\ref{xi1and 2}) by $\tilde{\Theta}_2=\tilde{\Theta}_{21}+\tilde{\Theta}_{22}$. We are going to only bound $\tilde{\Theta}_{21}$, since $\tilde{\Theta}_{22}$ is even smaller. We define the smooth radially symmetric cut-off function $\zeta^-_0\in C^{\infty}(\mathbb{R}^3;[0,1])$:
  \begin{equation}\label{cutoff zeta0}
    \zeta_0^-(p)=\left\{
  \begin{aligned}
  &1,\quad \vert p\vert<2k_F\\
  &0,\quad \vert p\vert>3k_F
  \end{aligned}
  \right.
  \end{equation}
  and $\zeta_0^+=1-\zeta_0^-$. We notice the following facts
  \begin{equation}\label{facts Thetatilde2}
    \begin{aligned}
    &\tilde{\zeta}^-(l)\chi_{r\in B_F}\neq0\Rightarrow \vert r-l\vert< 3\tilde{\mu}^{\frac{1}{2}}\tilde{\rho}_0^{-\alpha_6}\Rightarrow
    \tilde{\zeta}^-(\frac{r-l}{2})\equiv 1\\
    &\delta_{p,r-l}\chi_{\vert p\vert>4k_F}\neq0\Rightarrow \zeta_0^+(r-l)\equiv1\\
    &\delta_{p,r-l}\chi_{\vert p\vert>4k_F}\chi_{r,s\in B_F}\neq0\Rightarrow
    \vert s+l\vert \geq2k_F\\
    &\tilde{\zeta}^-(l)\chi_{s\in B_F}\neq0\Rightarrow \vert s+l\vert\leq 3\tilde{\mu}^
    {\frac{1}{2}}\tilde{\rho}_0^{-\alpha_6}
    \end{aligned}
  \end{equation}
  Therefore, we let
  \begin{align*}
    E^t(x)=\frac{1}{L^3}\sum_{k\in (2\pi/L)\mathbb{Z}^3}e^{-ikx}e^{-k^2t}
    \zeta_0^+(k)\tilde{\zeta}^-(k),
  \end{align*}
  and
  \begin{align*}
    \mathbf{j}_{x,\sigma}^t(z)=\sum_{2k_F\leq \vert k\vert\leq 3\tilde{\mu}^
    {\frac{1}{2}}\tilde{\rho}_0^{-\alpha_6}}\frac{e^{ikx}}{L^{\frac{3}{2}}}
    e^{-k^2t}f_{k,\sigma}(z).
  \end{align*}
  From Lemma \ref{cutoff lemma}, we know that $\Vert E^t\Vert_1\leq C$. We rewrite $\tilde{\Theta}_{21}$ by
  \begin{equation}\label{tildeTheta_21}
    \begin{aligned}
    \tilde{\Theta}_{21}=&C\sum_{\sigma,\nu,\varpi}\int_{0}^{\infty}\int_{\Lambda_L^4}
    e^{-2\epsilon_0 t}
    v(x_1-x_3)W^{\tilde{\zeta}^-}(x_2-x_4)E^t(x_1-x_2)
    a^*(\mathbf{h}_{x_1,\sigma})a^*(\mathbf{h}_{x_3,\nu})
    \\
    &\times a^*(\mathbf{j}_{x_4,\varpi}^t)
    a(\mathbf{h}_{x_3,\nu})a(g_{x_4,\varpi}^t)a(g_{x_2,\sigma}^t)+h.c.
    \end{aligned}
  \end{equation}
  By Cauchy-Schwartz inequality, we have for $\psi\in\mathcal{H}^{\wedge N}$,
  \begin{equation*}
    \vert\langle\tilde{\Theta}_{21}\psi,\psi\rangle\vert\leq C\int_{0}^{\infty}
    e^{-2\epsilon_0 t}(S^t_1)^{\frac{1}{2}}(S^t_2)^{\frac{1}{2}}dt,
  \end{equation*}
  where
  \begin{align*}
     S^t_1&=\sum_{\sigma,\nu,\varpi}\int_{\Lambda_L^4}
      v(x_1-x_3)\vert W^{\tilde{\zeta}^-}(x_2-x_4)\vert \vert E^t(x_1-x_2)\vert
      \Vert a(\mathbf{h}_{x_3,\nu})a(\mathbf{h}_{x_1,\sigma})\psi\Vert^2\\
      &\leq C\langle\mathcal{V}_{4,4H}\psi,\psi\rangle,
  \end{align*}
  and
  \begin{align*}
   S^t_2&=\sum_{\sigma,\nu,\varpi}\int_{\Lambda_L^4}
      v(x_1-x_3)\vert W^{\tilde{\zeta}^-}(x_2-x_4)\vert \vert E^t(x_1-x_2)\vert
      \Vert a^*(\mathbf{j}_{x_4,\varpi}^t)\Vert^2\\
      &\quad\quad\quad\quad\quad\quad\times
    \Vert a(g_{x_4,\varpi}^t)\Vert^2\Vert a(g_{x_2,\sigma}^t)\Vert^2
    \Vert a(\mathbf{h}_{x_3,\nu})\psi\Vert^2\\
    &\leq C\tilde{\rho}_0e^{2k_F^2t}\Big(\sum_{2k_F\leq \vert p\vert\leq 3\tilde{\mu}^
    {\frac{1}{2}}\tilde{\rho}_0^{-\alpha_6}}\frac{e^{-2p^2t}}{L^3}\Big)
    \Big(\sum_{q\in B_F}\frac{e^{2q^2t}}{L^3}\Big)
    \langle\mathcal{N}_{>4k_F}\psi,\psi\rangle.
  \end{align*}
  On the other hand, similar to (\ref{est sum A_F d <=0}), we have
  \begin{align*}
    \sum_{2k_F\leq \vert p\vert\leq 3\tilde{\mu}^
    {\frac{1}{2}}\tilde{\rho}_0^{-\alpha_6}}\int_{0}^{\infty}
    e^{-2(p^2-2k_F^2+\epsilon_0)t}dt\leq \sum_{2k_F\leq \vert p\vert\leq 3\tilde{\mu}^
    {\frac{1}{2}}\tilde{\rho}_0^{-\alpha_6}}
    \frac{C}{p^2-2k_F^2+\epsilon_0}\leq C\tilde{\rho}_0^{\frac{1}{3}-\alpha_6}
  \end{align*}
  Therefore using Cauchy-Schwartz inequality and (\ref{est sum}), we have
  \begin{equation}\label{tildeTheta_2bound}
    \pm\tilde{\Theta}_2\leq C\mathcal{V}_{4,4H}
    +C\tilde{\rho}_0^{\frac{5}{3}-\alpha_6-\varepsilon}\mathcal{N}_{\geq 4k_F}.
  \end{equation}
  Combining (\ref{tildeTheta_1bound}), (\ref{tildeTheta_2bound}) and the fact that $\mathcal{N}_{>4k_F}\leq \mathcal{N}_h[\delta_4]$, together with Lemma \ref{lemma bog control N_re} and Gronwall's inequality, we obtain
  \begin{equation}\label{bog con V_4 4H}
    e^{-\tilde{B}}\mathcal{V}_{4,4H}e^{\tilde{B}}\leq C\mathcal{V}_{4,4H}
    +C\tilde{\rho}_0^{\frac{5}{3}-\alpha_6-\varepsilon}
    \mathcal{N}_h[\delta_4]+C\tilde{\rho}_0^
    {\frac{7}{3}-\alpha_6-3\varepsilon}\mathcal{N}_{re}+C\tilde{\rho}_0^{\frac{8}{3}
    -\alpha_6-\alpha_3}L^3.
  \end{equation}
  Combining (\ref{bog con V_4 4H}) with (\ref{444}) and Lemma \ref{lemma bog control N_re}, we reach (\ref{bog bound V_4,4h}).
\end{proof}

\begin{lemma}\label{lemma bog control error}
For any $N\in\mathbb{N}_{\geq0}$, and $\frac{1}{24}>\frac{1}{2}\delta_3>\frac{1}{4}\alpha_3>\alpha_2>\alpha_5>2\alpha_4>2\varepsilon >0$, $\frac{1}{3}>\delta_4\geq\delta_1\geq\delta_2>\frac{1}{12}$, $\delta_4>\alpha_6+\varepsilon>2\alpha_3+2\alpha_4+3\varepsilon$, $\beta_1=\frac{1}{3}+\alpha_5$, we have

   \begin{align}
   \pm& e^{-\tilde{B}}(\mathcal{E}_{\mathcal{J}_N}+R)e^{\tilde{B}}\nonumber\\
   &\leq
   C\big(\tilde{\rho}_0^{1+\delta_2}+\tilde{\rho}_0^{\frac{4}{3}+\alpha_2-3\delta_3-\alpha_4}
   +\tilde{\rho}_0^{\frac{5}{3}+\alpha_5-4\alpha_3-\alpha_4-2\varepsilon}
   +\tilde{\rho}_0^{\frac{4}{3}-4\alpha_3+\alpha_6-\alpha_4-2\varepsilon}\big)
   \mathcal{N}_{re}\nonumber\\
  & +C\big(\tilde{\rho}_0^{1+\delta_2}
  +\tilde{\rho}_0^{\frac{4}{3}-\alpha_3}\big)\tilde{\mathcal{N}}_{re}
  +C\tilde{\rho}_0^{1-3\alpha_2}\mathcal{N}_i[\delta_4]
  +C\tilde{\rho}_0^{1-\delta_2-3\alpha_2}\mathcal{N}_h[\delta_4]\nonumber\\
  & +C\tilde{\rho}_0^{\frac{2}{3}-\alpha_4}\mathcal{N}_h[-\delta_3]
  +C\big(\tilde{\rho}_0^{\frac{2}{3}+\delta_4-\alpha_2-\alpha_3}
  +\tilde{\rho}_0^{\frac{4}{3}+\alpha_2-\alpha_4-3\alpha_3}+
  \tilde{\rho}_0^{\frac{5}{6}+\frac{\alpha_2}{2}-\frac{3}{2}\alpha_5}\big)
  \mathcal{N}_h[-\alpha_2]\nonumber\\
  &+C\tilde{\rho}_0^{\frac{2}{3}-4\alpha_3+\alpha_6-\alpha_4}\mathcal{N}_h
    [-\alpha_6]+C\tilde{\rho}_0^{1+\alpha_5
  -4\alpha_3-\alpha_4}\mathcal{N}_h[-\beta_1]\nonumber\\
  &+C\big(\tilde{\rho}_0^{\frac{1}{3}+2\alpha_3}+\tilde{\rho}_0^{\frac{1}{3}+
  \alpha_5-\alpha_4}\big)\mathcal{K}_s
  +C\big(\tilde{\rho}_0^{\delta_4-\alpha_3-\alpha_4}+\tilde{\rho}_0^
  {\alpha_6-\alpha_4}\big)\mathcal{K}_h[-\alpha_2]
  \nonumber\\
  &+C\big(\tilde{\rho}_0^{2+4\alpha_3-7\alpha_2}
  +\tilde{\rho}_0^{\frac{5}{2}+\frac{\alpha_2}{2}+\alpha_3-\frac{3}{2}\alpha_5}
  +\tilde{\rho}_0^{\frac{8}{3}+\alpha_5-3\alpha_3-\alpha_4}\big)L^3\nonumber\\
  &+C\big(\tilde{\rho}_0^{\frac{7}{3}+\alpha_4}+\tilde{\rho}_0^{2+2\delta_4-\alpha_4}+
  \tilde{\rho}_0^{\frac{8}{3}-\alpha_3-\delta_2-3\alpha_2-\varepsilon}
  +\tilde{\rho}_0^{\frac{7}{3}+2\delta_3-\alpha_3-\alpha_4}\big)L^3\nonumber\\
  &+Ce^{-\tilde{B}}\tilde{\rho}_0^{\alpha_4}\mathcal{V}_{4,4h}e^{\tilde{B}}.\label{bog control error}
   \end{align}

\end{lemma}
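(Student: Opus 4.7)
The plan is to handle the two pieces $\mathcal{E}_{\mathcal{J}_N}$ and $R$ separately, and for each one reduce to already established estimates after conjugating by $e^{\tilde{B}}$. For $\mathcal{E}_{\mathcal{J}_N}$, I start from the bound (\ref{cub Error}) in Proposition \ref{cub prop}, which dominates $\mathcal{E}_{\mathcal{J}_N}$ by a sum of terms of the form $c_j X_j$ where each $X_j$ is one of the standard control operators $\mathcal{N}_{re}$, $\tilde{\mathcal{N}}_{re}$, $\mathcal{N}_i[\delta_4]$, $\mathcal{N}_h[\delta]$ for various $\delta$, $\mathcal{K}_s$, $\mathcal{K}_h[-\alpha_2]$, $\mathcal{V}_{4,4h}$, or a constant multiple of $L^3$. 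Conjugating by $e^{\tilde{B}}$ preserves inequalities, so it suffices to bound $e^{-\tilde{B}} X_j e^{\tilde{B}}$ for each such $X_j$, which is exactly what Lemmas \ref{lemma bog control N_re} and \ref{lemma bog control K} provide. The $\mathcal{V}_{4,4h}$ term does not need to be reduced further since it appears explicitly on the right-hand side of (\ref{bog control error}), so that last line of (\ref{cub Error}) simply propagates. Every auxiliary contribution produced by the Bogoliubov conjugation (terms of the form $\tilde{\rho}_0^{\frac{2}{3}-2\varepsilon}\mathcal{N}_{re}$, $\tilde{\rho}_0^{\frac{5}{3}-\alpha_3-\varepsilon}L^3$, etc.) needs to be compared termwise against the catalog on the right-hand side of (\ref{bog control error}); by the standing hypothesis $\alpha_6+\varepsilon > 2\alpha_3+2\alpha_4+3\varepsilon$ these extras are absorbed by larger entries already present.

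For the residue $R = \mathcal{V}_{21}^\prime + \Omega - \tilde{\mathcal{V}}_{21}^\prime - \tilde{\Omega}$, the key observation is that the difference between the $\zeta^-$ and $\tilde{\zeta}^-$ cutoffs is supported where $\zeta^-(k) - \tilde{\zeta}^-(k)$ lives in the band $\tilde{\mu}^{1/2}\tilde{\rho}_0^{-\alpha_6}\lesssim |k|\lesssim \tilde{\mu}^{1/2}\tilde{\rho}_0^{-\beta_1}$, which is a subset of $\{\tilde{\zeta}^+(k)\neq 0\}$. Hence $R$ is structurally identical to the high-frequency pieces $\mathcal{V}_{21,h}^\prime$ and $\Omega_h$ analyzed in Lemma \ref{lemma qua V_21'} and Lemma \ref{lemma qua cal [K,B]}, except with $\zeta^+$ replaced by $\tilde{\zeta}^+$. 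Using exactly the same position-space representations and the sharpened $L^2$-bounds (\ref{special zeta tilde eta W L2 norm}), together with Lemma \ref{b^* bound by b general}, gives pointwise operator bounds
\[
\pm R \leq C\tilde{\rho}_0^{\alpha_6-\alpha_4}\mathcal{K}_h[-\alpha_2] + C\tilde{\rho}_0^{\frac{2}{3}-4\alpha_3+\alpha_6-\alpha_4}\mathcal{N}_h[-\alpha_6] + C\tilde{\rho}_0^{\frac{7}{3}+\alpha_4}L^3
\]
(together with lower-order $\mathcal{N}_{re}$ and $\mathcal{K}_s$ contributions analogous to those appearing in $\Omega_h$). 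Conjugating by $e^{\tilde{B}}$ and invoking Lemmas \ref{lemma bog control N_re} and \ref{lemma bog control K} again propagates these bounds with the same kind of negligible overhead as above.

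The main obstacle I anticipate is bookkeeping-heavy rather than conceptual: after conjugation each of the control operators produces three or four subdominant terms, and in several places the resulting exponents are close to one another (for instance, the competition between $\tilde{\rho}_0^{\frac{4}{3}-4\alpha_3+\alpha_6-\alpha_4-2\varepsilon}\mathcal{N}_{re}$ arising from conjugating $\mathcal{N}_h[-\alpha_6]$ and the similar term from the $R$-estimate). Verifying that every such subdominant term fits under one of the entries already listed in (\ref{bog control error}) requires the full set of relations $\alpha_6+\varepsilon > 2\alpha_3+2\alpha_4+3\varepsilon$, $\alpha_5>2\alpha_4>2\varepsilon$, $\delta_4>\alpha_6+\varepsilon$ imposed in the statement; this is where the bulk of the work will go. No new estimate beyond those already established in Sections \ref{coeff}--\ref{cub} and earlier in Section \ref{bog} should be needed.
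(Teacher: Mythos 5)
Your proposal matches the paper's own proof: the paper likewise reduces $\mathcal{E}_{\mathcal{J}_N}$ to the bound (\ref{cub Error}) and conjugates each control operator via Lemmas \ref{lemma bog control N_re}--\ref{lemma bog control K}, and it bounds $R=R_1+R_2$ as a $\tilde{\zeta}^+$-supported high-frequency residue using (\ref{special zeta tilde eta W L2 norm}) and Lemma \ref{b^* bound by b general} (with an integration by parts for the $\Omega$-part yielding exactly your $\tilde{\rho}_0^{\alpha_6-\alpha_4}\mathcal{K}_h[-\alpha_2]$ and $\tilde{\rho}_0^{\frac{2}{3}-4\alpha_3+\alpha_6-\alpha_4}\mathcal{N}_h[-\alpha_6]$ terms), before conjugating. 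The argument and the bookkeeping you outline are the same as in the paper, so the proposal is correct.
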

\begin{proof}
  \par We let $R=R_1+R_2$, where $R_1=\mathcal{V}_{21}^\prime-\tilde{\mathcal{V}}_{21}^\prime$. Notice that $\tilde{\zeta}^{+}\neq0$ implies $p-k,q+k\in P_{F,-\alpha_6}$. We have,
  \begin{equation}\label{R1}
    \begin{aligned}
    R_1=\sum_{\sigma,\nu}\int_{\Lambda_L^2}&W_{\tilde{\zeta^+}}^{\zeta^-}(x-y)
    a^*(H_{x,\sigma}[-\alpha_6])a^*(H_{y,\nu}[-\alpha_6])\\
    &\times a(g_{y,\nu})a(g_{x,\sigma})dxdy+h.c.
    \end{aligned}
  \end{equation}
  Using (\ref{special zeta tilde eta W L2 norm}) and Lemma \ref{b^* bound by b general}, we can easily bound
  \begin{equation}\label{R1bound}
    \pm R_1\leq C\tilde{\rho}_0^{\frac{2}{3}-4\alpha_3+\alpha_6-\alpha_4}\mathcal{N}_h
    [-\alpha_6]+C\tilde{\rho}_0^{\frac{7}{3}+\alpha_4}L^3.
  \end{equation}
  For $R_2$, we can bound it similar to $R_1$, with an additional step of integration by parts and using the fact $\alpha_6>2\alpha_3>\alpha_2$, which implies $\mathcal{K}_h[-\alpha_6]
  \leq\mathcal{K}_h[-\alpha_2]$. That is,
  \begin{equation}\label{R2bound}
    \pm R_2\leq C\tilde{\rho}_0^{\frac{2}{3}+\alpha_6-\alpha_4}\mathcal{N}_h
    [-\alpha_6]+C\tilde{\rho}_0^{\alpha_6-\alpha_4}\mathcal{K}_h
    [-\alpha_2]
    +C\tilde{\rho}_0^{\frac{7}{3}+\alpha_4}L^3.
  \end{equation}
  Then (\ref{bog control   error}) follows from (\ref{cub Error}), (\ref{R1bound}), (\ref{R2bound}) and Lemmas \ref{lemma bog control N_re}-\ref{lemma bog control K}.
\end{proof}

\begin{lemma}\label{cal com bog lemma}
For any $N\in\mathbb{N}_{\geq0}$, and $\frac{1}{24}>\frac{1}{2}\delta_3>\frac{1}{4}\alpha_3>\alpha_2>\alpha_5>2\alpha_4>2\varepsilon >0$, $\frac{1}{3}>\delta_4\geq\delta_1\geq\delta_2>\frac{1}{12}$, $\delta_4>\alpha_6+\varepsilon>2\alpha_3+2\alpha_4+3\varepsilon$, $\beta_1=\frac{1}{3}+\alpha_5$, we have
  \begin{equation}\label{cal com bog}
  \begin{aligned}
    &\int_{0}^{1}\int_{t}^{1}e^{-s\tilde{B}}[\tilde{\mathcal{V}}_{21}^\prime
  +\tilde{\Omega},\tilde{B}]e^{s\tilde{B}}dsdt\\
  &=\sum_{k,p,q,\sigma,\nu}\big(\frac{W_k}{L^3}\tilde{\zeta}^-(k)
  +\eta_kk(q-p)\phi^+(k)\tilde{\zeta}^-(k)\big)\xi_{k,q,p}^{\nu,\sigma}
  \chi_{p-k,q+k\notin B_F}
  \chi_{p,q\in B_F}\\
  &-\sum_{k,p,q,\sigma}\big(\frac{W_k}{L^3}\tilde{\zeta}^-(k)
  +\eta_kk(q-p)\phi^+(k)\tilde{\zeta}^-(k)\big)\xi_{(k+q-p),p,q}^{\sigma,\sigma}
  \chi_{p-k,q+k\notin B_F}\chi_{p,q\in B_F}\chi_{p\neq q}\\
  &+\mathcal{E}_{[\mathcal{V}_{21}^\prime
  +\Omega,\tilde{B}]}
  \end{aligned}
  \end{equation}
  where
  \begin{equation}\label{bound E com bog}
    \begin{aligned}
    \pm\mathcal{E}_{[\mathcal{V}_{21}^\prime
  +\Omega,\tilde{B}]}\leq& C\big(\tilde{\rho}_0^{1+\delta_2}+
  \tilde{\rho}_0^{\frac{4}{3}-\alpha_3}
  +\tilde{\rho}_0^{\frac{7}{3}-4\alpha_6-\delta_2-3\varepsilon}\big)\mathcal{N}_{re}
  +C\tilde{\rho}_0^{\frac{4}{3}-\alpha_3}\tilde{\mathcal{N}}_{re}\\
  &+C\tilde{\rho}_0^{\frac{5}{3}-4\alpha_6-\delta_2-\varepsilon}
  \mathcal{N}_h[\delta_4]\\
  &+C\big(\tilde{\rho}_0^{\frac{8}{3}-\alpha_3+\delta_2-\varepsilon}
  +\tilde{\rho}_0^{\frac{9}{3}-2\alpha_3-\varepsilon}
  +\tilde{\rho}_0^{\frac{10}{3}-4\alpha_6-\alpha_3-\delta_2-\varepsilon}\big)L^3.
   \end{aligned}
  \end{equation}
  for $\delta_4>\alpha_6>2\alpha_3$
\end{lemma}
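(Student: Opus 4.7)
The plan is to follow the scheme of Lemma \ref{lemma qua V_21' Omega}, but with $\tilde{B}$ in place of $B$ and $\xi_{k,q,p}^{\nu,\sigma}$ replacing the coefficient $\eta_k\phi^+(k)\zeta^-(k)$. First I will expand the commutator
\begin{equation*}
[\tilde{\mathcal{V}}_{21}^\prime+\tilde{\Omega},\tilde{B}]
\end{equation*}
by moving one annihilation operator through using the canonical fermionic anticommutator. This produces five types of terms, which I will label $\Xi_1^\sharp,\ldots,\Xi_5^\sharp$ in analogy with (\ref{[V_21' B] XI}), where $\sharp\in\{W,\eta\}$ tracks whether the contribution originates from $\tilde{\mathcal{V}}_{21}^\prime$ (so the coefficient in $\tilde{B}$ is paired with the kernel $L^{-3}W_k\tilde{\zeta}^-(k)$) or from $\tilde{\Omega}$ (so it is paired with $\eta_k k(q-p)\phi^+(k)\tilde{\zeta}^-(k)$).

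The $\Xi_1^\sharp$ terms carry the double-contraction $\delta_{p-k,r-l}\delta_{q+k,s+l}$, so summing over $l,r,s$ collapses to a quadratic operator. Because the Bogoliubov denominator in $\xi$ exactly matches the factor multiplying it, no integral representation is needed here: the direct pairing yields the first sum in (\ref{cal com bog}), while the crossed pairing (forced by Pauli to impose $\sigma=\nu$ and $p\neq q$, and implemented by the relabelling $k\to k+q-p$) yields the second sum. This exactly mirrors (\ref{Xi_1}), with $\xi$ taking over the role played by $\eta\phi^+\zeta^-$ there.

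For the residual operators $\Xi_2^\sharp,\Xi_3^\sharp,\Xi_4^\sharp,\Xi_5^\sharp$, I would invoke the integral representation
\begin{equation*}
\xi_{k,q,p}^{\nu,\sigma}=-\bigl(L^{-3}W_k\tilde{\zeta}^-(k)+\eta_kk(q-p)\phi^+(k)\tilde{\zeta}^-(k)\bigr)\int_0^\infty e^{-2\epsilon_0 t}e^{-(|q+k|^2+|p-k|^2-|q|^2-|p|^2)t}\,dt
\end{equation*}
together with the $h^t,g^t$ decomposition (\ref{htgt0}) and the $c^{t*}$ operators of (\ref{ct0}), so that each $\Xi_j^\sharp$ is rewritten as a time integral of products of $a^*(h_{x,\sigma}^t),a(g_{x,\sigma}^t),a(L_{y,\nu}^t)$ and $c^{t*}(W^{\tilde{\zeta}^-})$ or $b^*(\eta_{\phi^+}^{\tilde{\zeta}^-})$. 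Cauchy--Schwarz in the Fock norm, combined with Lemma \ref{property htgt}, Lemma \ref{lemma bog htgt}, and the Riemann-sum estimates (\ref{est sum})--(\ref{est sum A_F d <=0}), then yields the $\mathcal{N}_{re}$, $\tilde{\mathcal{N}}_{re}$, $\mathcal{N}_h[\delta_4]$ and $L^3$ pieces of (\ref{bound E com bog}). The cut-off $\tilde{\zeta}^-$ localizes $|k|\lesssim\tilde{\mu}^{1/2}\tilde{\rho}_0^{-\alpha_6}$ and is essential for keeping the logarithmic time integrals finite; the constraint $\delta_4>\alpha_6$ is needed when the $\Xi_3^\sharp$-type terms are estimated by Lemma \ref{est xi/ p^2 lemma} after splitting $\mathbf{h}=L[-\alpha_2]+(\mathbf{h}-L[-\alpha_2])$. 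Finally, Newton--Leibniz combined with the conjugation bound (\ref{bog control N_re}) produces the claimed estimate for the $s$-integrated operator.

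The hard part will be the $\Xi_3^\sharp$ and $\Xi_5^\sharp$ analysis, because these are the analogues of (\ref{Xi_3}) and (\ref{XI_4 bound}) in which a summation index is pinned inside $B_F$. Here the scaling of $W_k\xi_{k,q,p}^{\nu,\sigma}$ is worse than $W_k\eta_k$ by a factor of $\int_0^\infty e^{-\cdots}dt$, which by Lemma \ref{est sum htgt} contributes $\tilde{\rho}_0^{-\alpha_6-\varepsilon}$ (rather than the benign $\tilde{\rho}_0^{-\alpha_3}$ from $\eta$). Keeping this loss tolerable is precisely the reason $\delta_4>\alpha_6+\varepsilon>2\alpha_3+2\alpha_4+3\varepsilon$ is assumed, and matching the $\tilde{\rho}_0^{7/3-4\alpha_6-\delta_2-3\varepsilon}\mathcal{N}_{re}$ and $\tilde{\rho}_0^{10/3-4\alpha_6-\alpha_3-\delta_2-\varepsilon}L^3$ terms in (\ref{bound E com bog}) will be the delicate bookkeeping step.
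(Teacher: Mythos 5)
Your plan follows essentially the same route as the paper's proof: expand $[\tilde{\mathcal{V}}_{21}^\prime+\tilde{\Omega},\tilde{B}]$ into contraction classes (the paper's $\Phi_1,\dots,\Phi_4$ — four, not five), extract the two explicit sums from the direct and exchange double contractions exactly as in (\ref{Xi_1}), treat all residual terms via the $t$-integral representation of $\xi$ with the $h^t,g^t,c^{t*}$ machinery, Cauchy--Schwarz, Lemmas \ref{property htgt}, \ref{lemma bog htgt}, \ref{est xi/ p^2 lemma}, \ref{est sum htgt}, and then conjugate by $e^{s\tilde{B}}$ using Lemma \ref{lemma bog control N_re}. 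The only adjustment you would discover in the bookkeeping you defer is that the pinned-index terms ($\Phi_3,\Phi_4$ in the paper) must be split at the level $L_{x,\sigma}[\delta_4]$ rather than $L_{x,\sigma}[-\alpha_2]$ in order to produce the $\tilde{\rho}_0^{\frac{5}{3}-4\alpha_6-\delta_2-\varepsilon}\mathcal{N}_h[\delta_4]$ contribution, whose conjugation via (\ref{bog control N_h[delta]}) then yields the $\tilde{\rho}_0^{\frac{7}{3}-4\alpha_6-\delta_2-3\varepsilon}\mathcal{N}_{re}$ and $\tilde{\rho}_0^{\frac{10}{3}-4\alpha_6-\alpha_3-\delta_2-\varepsilon}L^3$ terms you correctly identified as the delicate ones.
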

\begin{proof}
  \par We calculate $[\tilde{\mathcal{V}}_{21}^\prime,\tilde{B}]$, and $[\tilde{\Omega},\tilde{B}]$ follows similarly. For $\tilde{\mathcal{V}}_{21}^\prime$, we have
   \begin{equation}\label{Phi}
   [\tilde{\mathcal{V}}_{21}^\prime,\tilde{B}]=\sum_{j=1}^{4}\Phi_j
  \end{equation}
  with
  \begin{align}
  &\Phi_1=\frac{1}{L^3}\sum_{\substack{\sigma,\nu\\k,p,q}}\sum_{l,r,s}W_k\xi_{l,s,r}^{\nu,\tau}
    \delta_{p-k,r-l}\delta_{q+k,s+l}a^*_{p,\sigma}a^*_{q,\nu}a_{s,\nu}a_{r,\sigma}\nonumber\\
    &\quad\quad\quad\quad\times\tilde{\zeta}^-(k)
    \chi_{p-k,q+k,r-l,s+l\notin B_F}\chi_{p,q,r,s\in B_F}+h.c.\nonumber\\
    &\Phi_2=\frac{1}{L^3}\sum_{\substack{\sigma,\nu\\k,p,q}}
    \sum_{\tau,l,r,s}2W_k\xi_{l,s,r}^{\sigma,\tau}
    \delta_{p-k,s+l}a^*_{p,\sigma}a^*_{q,\nu}a^*_{r-l,\tau}
    a_{q+k,\nu}a_{s,\sigma}a_{r,\tau}\nonumber\\
    &\quad\quad\quad\quad\times\tilde{\zeta}^-(k)
    \chi_{p-k,q+k,r-l,s+l\notin B_F}\chi_{p,q,r,s\in B_F}+h.c.\nonumber\\
    &\Phi_3=\frac{1}{L^3}\sum_{\substack{\sigma,\nu\\k,p,q}}\sum_{l,r,s}W_k\xi_{l,s,r}^{\nu,\sigma}
    \delta_{p,r}\delta_{q,s}a^*_{r-l,\sigma}a^*_{s+l,\nu}
    a_{q+k,\nu}a_{p-k,\sigma}\nonumber\\
    &\quad\quad\quad\quad\times\tilde{\zeta}^-(k)
    \chi_{p-k,q+k,r-l,s+l\notin B_F}\chi_{p,q,r,s\in B_F}+h.c.\nonumber\\
    &\Phi_4=\frac{1}{L^3}\sum_{\substack{\sigma,\nu\\k,p,q}}\sum_{\varpi,l,r,s}
    -2W_k\xi_{l,s,r}^{\varpi,\sigma}
    \delta_{p,r}a^*_{r-l,\sigma}a^*_{s+l,\varpi}a_{s,\varpi}a^*_{q,\nu}
    a_{q+k,\nu}a_{p-k,\sigma}\nonumber\\
    &\quad\quad\quad\quad\times\tilde{\zeta}^-(k)
    \chi_{p-k,q+k,r-l,s+l\notin B_F}\chi_{p,q,r,s\in B_F}+h.c.\label{Phi detailed}
  \end{align}
  Similarly, we have $[\Omega,\tilde{B}]=\sum_{j=1}^{4}\tilde{\Phi}_j$, with $L^{-3}W_k\tilde{\zeta}^-(k)$ in $\Phi_j$ being replaced by $\eta_kk(q-p)\phi^+(k)\tilde{\zeta}^-(k)$. The analysis to $\tilde{\Phi}_j$ is analogous to $\Phi_j$, we thus omit this part.

  \par For $\Phi_1$, the process is analogous to the $\Phi_1$ in \cite[Lemma 9.5]{WJH}. We sketch the main estimate. We use $1=\chi_{k=l}+\chi_{k\neq l}$ to decompose $\Phi_1=\Phi_{11}+\Phi_{12}$. For $\Phi_{11}$,
  \begin{equation}\label{Phi_11 result}
    \begin{aligned}
    \Phi_{11}=2\sum_{k,p,q,\sigma,\nu}\frac{W_k}{L^3}\tilde{\zeta}^-(k)\xi_{k,q,p}^{\nu,\sigma}
  \chi_{p-k,q+k\notin B_F}\chi_{p,q\in B_F}+\mathcal{E}_{\Phi_{11}},
    \end{aligned}
  \end{equation}
  with
  \begin{equation}\label{E_Phi_11 bound}
    \pm\mathcal{E}_{\Phi_{11}}\leq C\tilde{\rho}_0^{\frac{4}{3}-\alpha_3}\tilde{\mathcal{N}}
    _{re}+C(\tilde{\rho}_0).
  \end{equation}
  Here we use the following estimate deduced from (\ref{est Wk/p^2}), (\ref{est etakk/p^2}) and the fact that $\vert W_k\vert\leq C$, and for $p,q\in B_F$, $L^3\vert\eta_kk(q-p)\vert\leq C\tilde{\rho}_0^{\alpha_3}$:
  \begin{equation}\label{Wxi}
    \sum_{k,q}\Big(\frac{\vert W_k\vert}{L^3}
    +\vert\eta_kk(q-p\vert)\Big)\vert\xi_{k,q,p}^{\nu,\sigma}\vert\leq  C\tilde{\rho}_0^{\frac{4}{3}-\alpha_3}.
  \end{equation}
   \par For $\Phi_{12}$, we follow the steps in \cite[Lemma 9.5]{WJH} to reach
   \begin{equation}\label{Phi_12 result}
     \begin{aligned}
     \Phi_{12}=-2\sum_{k,p,q,\sigma}\frac{W_k}{L^3}\tilde{\zeta}^-(k)
     \xi_{(k+q-p),p,q}^{\sigma,\sigma}
  \chi_{p-k,q+k\notin B_F}\chi_{p,q\in B_F}\chi_{p\neq q}+\mathcal{E}_{\Phi_{12}},
     \end{aligned}
   \end{equation}
   with
  \begin{equation}\label{E_Phi_12 bound}
    \pm\mathcal{E}_{\Phi_{12}}\leq C\tilde{\rho}_0^{\frac{4}{3}-\alpha_3}\tilde{\mathcal{N}}
    _{re}.
  \end{equation}
  The only difference is the bound to $\Phi_{1231}$ (see the correspondence in \cite[(9.45)]{WJH}):
  \begin{equation}\label{Phi1231}
  \begin{aligned}
    \Phi_{1231}=\sum_{\substack{k,p,q\\\sigma,\nu}}\sum_{l,r,s}&\frac{W_k\tilde{\zeta}^-(k)}{L^3}
    \xi_{l,s,r}^{\nu,\sigma}a_{r,\sigma}a_{s,\nu}a_{q,\nu}^*a_{p,\sigma}^*\\
    &\times \chi_{p-k,q+k,r-l,s+l\notin B_F}\chi_{p,q,r,s\in B_F}\delta_{p-k,r-l}
    \delta_{q+k,s+l}+h.c.
  \end{aligned}
  \end{equation}
   We again use (\ref{xi1and 2}) to split $\Phi_{1231}=\Phi_{1231}^{(1)}+
  \Phi_{1231}^{(2)}$, and present the estimate to $\Phi_{1231}^{(1)}$. $\Phi_{1231}^{(2)}$ can be bounded similarly, with a smaller upper bound. We write
  \begin{equation}\label{Phi1231 1}
    \begin{aligned}
    \Phi_{1231}^{(1)}&=C\sum_{\sigma,\nu}\int_{0}^{\infty}\int_{\Lambda_L^2}
    e^{-2\epsilon_0 t}a(g_{y,\sigma}^t)a(\gamma_{x,y,\nu}^t)a^*(g_{x,\nu}^t)
    a^*(\tilde{\gamma}_{x,y,\sigma}^t)dxdydt+h.c.
    \end{aligned}
  \end{equation}
  where we let, in this proof
  \begin{align*}
    \gamma_{x,y,\nu}^t(z)=\sum_{q\in B_F}
    \Big(\frac{1}{L^3}\sum_{p\notin B_F}W_{p-q}\tilde{\zeta}^-(p-q)e^{-ip(y-x)}e^{-p^2t}\Big)
    \frac{e^{iqy}}{L^{\frac{3}{2}}}e^{q^2t}f_{q,\nu}(z),
  \end{align*}
  and
  \begin{align*}
   \tilde{\gamma}_{x,y,\sigma}^t(z)=\sum_{q\in B_F}
    \Big(\frac{1}{L^3}\sum_{p\notin B_F}W_{p-q}\tilde{\zeta}^-(p-q)e^{ip(y-x)}e^{-p^2t}\Big)
    \frac{e^{iqy}}{L^{\frac{3}{2}}}f_{q,\sigma}(z).
  \end{align*}
  Then for any $\psi\in\mathcal{H}^{\wedge N}$,
  \begin{align*}
    \vert\langle\Phi_{1231}^{(1)}\psi,\psi,\rangle\vert&\leq C
    \Big(\sum_{\sigma,\nu}\int_{0}^{\infty}\int_{\Lambda_L^2}
    e^{-2\epsilon_0 t}\Vert a^*(\gamma_{x,y,\nu}^t)\psi\Vert^2\Big)^{\frac{1}{2}}\\
    &\times\Big(\sum_{\sigma,\nu}\int_{0}^{\infty}\int_{\Lambda_L^2}
    e^{-2\epsilon_0 t}\Vert a(g_{y,\sigma}^t)a^*(g_{x,\nu}^t)
    a^*(\tilde{\gamma}_{x,y,\sigma}^t)\psi\Vert^2\Big)^{\frac{1}{2}}
  \end{align*}
  Similar to the bound of $\tilde{\Theta}_{11}$ in (\ref{Thetatilde_11}), we can bound
  \begin{equation}\label{Phi1231 1 bound}
    \pm\Phi_{1231}^{(1)}\leq C\tilde{\rho}_0^{\frac{4}{3}-\alpha_3}\tilde{\mathcal{N}}_{re}.
  \end{equation}

  \par For $\Phi_2$, we use
  \begin{equation*}
    \begin{aligned}
    &a^*_{p,\sigma}a^*_{q,\nu}a^*_{r-l,\tau}
    a_{q+k,\nu}a_{s,\sigma}a_{r,\tau}\\
    =&a^*_{r-l,\tau}\big(
    a_{s,\sigma}a_{r,\tau}a^*_{p,\sigma}a^*_{q,\nu}
    +\delta_{q,s}\delta_{\sigma,\nu}a_{p,\sigma}^*a_{r,\tau}
    -\delta_{p,r}\delta_{\sigma,\tau}a_{s,\sigma}a^*_{q,\nu}\\
    &+\delta_{p,s}a_{r,\tau}a^*_{q,\nu}+\delta_{q,r}\delta_{\nu,\tau}a_{s,\sigma}a^*_{p,\sigma}
    -\delta_{p,s}\delta_{q,r}\delta_{\nu,\tau}\big)a_{q+k,\nu}
    \end{aligned}
  \end{equation*}
  to rewrite $\Phi_{2}=\sum_{j=1}^{6}\Phi_{2j}$. For each $\Phi_{2j}$, we also use (\ref{xi1and 2}) to rewrite them as $\Phi_{2j}=\Phi_{2j}^{(1)}+\Phi_{2j}^{(2)}$. We present only the estimate to $\Phi_{2j}^{(1)}$, since $\Phi_{2j}^{(2)}$ are even smaller. For $\Phi_{21}^{(1)}$, we have
  \begin{equation}\label{Phi 21}
    \begin{aligned}
    \Phi_{21}^{(1)}=\frac{C}{L^3}\sum_{\sigma,\nu\,\tau}\sum_{p_1\notin B_F}
    \int_{0}^{\infty}e^{-2\epsilon_0 t}e^{-p_1^2t}
    \mathcal{A}_{p_1,\sigma,\nu}(t)\mathcal{B}_{p_1,\sigma,\tau}dt+h.c.
    \end{aligned}
  \end{equation}
  where
  \begin{align*}
    \mathcal{A}_{p_1,\sigma,\nu}(t)=&\int_{\Lambda_L^2}e^{ip_1x_4}
    W^{\tilde{\zeta}^-}(x_2-x_4)
    a^*(h_{x_2,\tau}^t)
    a(g_{x_2,\tau}^t)a(g_{x_4,\sigma}^t)dx_2dx_4\\
    \mathcal{B}_{p_1,\sigma,\tau}=&\int_{\Lambda_L^2}e^{-ip_1x_1}
    W^{\tilde{\zeta}^-}(x_1-x_3)
    a^*(g_{x_1,\sigma})a^*(g_{x_3,\nu})a(h_{x_3,\nu})dx_1dx_3
  \end{align*}
  Then for any $\psi\in\mathcal{H}^{\wedge N}$, similar to (\ref{typical example E_V_3,l,12}), with additional help of Lemma \ref{property htgt} and Lemma \ref{est sum htgt}, we have
  \begin{align*}
    \vert\langle\Phi_{21}^{(1)}\psi,\psi\rangle\vert&\leq \int_{0}^{\infty}e^{-2\epsilon_0 t}e^{-k_F^2t}\Big(\frac{1}{L^3}\sum_{\substack{\sigma,\nu,\tau\\ p_1}}
    \Vert\mathcal{A}^*_{p_1,\sigma,\nu}(t)\psi\Vert^2\Big)^{\frac{1}{2}}
    \Big(\frac{1}{L^3}\sum_{\substack{\sigma,\nu,\tau\\ p_1}}
    \Vert\mathcal{B}_{p_1,\sigma,\tau}\psi\Vert^2\Big)^{\frac{1}{2}}\\
    &\leq C\tilde{\rho}_0^{\frac{4}{3}-\varepsilon}\langle{\mathcal{N}}_{re}\psi,
    \psi\rangle.
  \end{align*}
  Thus,
  \begin{equation}\label{Phi_21 bound}
    \pm\Phi_{21}\leq C\tilde{\rho}_0^{\frac{4}{3}-\varepsilon}{\mathcal{N}}_{re}.
  \end{equation}
  $\Phi_{24}$ can be bounded in a way  similar to $\Phi_{22}$, and also satisfies the estimate (\ref{Phi_21 bound}).
  \par For $\Phi_{22}$, we have
  \begin{equation}\label{Phi 22}
    \begin{aligned}
    \Phi_{22}^{(1)}=C\sum_{\sigma,\tau}\int_{0}^{\infty}\int_{\Lambda_L^3}
    &e^{-2\epsilon_0 t}W^{\tilde{\zeta}^-}(x_2-x_4)M^t(x_4-x_3)
    a^*(h^t_{x_2,\tau})a^*(\gamma^t_{x_3,x_4,\sigma})\\
    &\times a(g^t_{x_2,\tau})a(h^t_{x_3,\sigma})dx_2dx_3dx_4dt+h.c.
    \end{aligned}
  \end{equation}
  where we let, in this proof
  \begin{equation*}
    M^t(x)=\frac{1}{L^3}\sum_{q\in B_F}e^{-iqx}e^{q^2t}.
  \end{equation*}
  Then for any $\psi\in\mathcal{H}^{\wedge N}$, by Lemma \ref{est xi/ p^2 lemma} and Lemma \ref{est sum htgt},
  \begin{align*}
    \vert\langle\Phi_{22}^{(1)}\psi,\psi\rangle\vert&\leq C
     \Big(\sum_{\sigma,\tau}\int_{0}^{\infty}\int_{\Lambda_L^3}
    e^{-2\epsilon_0 t}\vert W^{\tilde{\zeta}^-}(x_2-x_4)\vert
    \vert M^t(x_4-x_3)\vert^2\Vert a(h^t_{x_2,\tau})\psi\Vert^2\Big)^{\frac{1}{2}}\\
    &\times \Big(\sum_{\sigma,\tau}\int_{0}^{\infty}\int_{\Lambda_L^3}
    e^{-2\epsilon_0 t}\vert W^{\tilde{\zeta}^-}(x_2-x_4)\Vert a^*(\gamma^t_{x_3,x_4,\sigma})
    a(g^t_{x_2,\tau})a(h^t_{x_3,\sigma})\psi\Vert^2\Big)^{\frac{1}{2}}\\
    &\leq C\tilde{\rho}_0^{\frac{4}{3}-\frac{\varepsilon}{2}
    -\frac{\alpha_3}{2}}\langle{\mathcal{N}}_{re}\psi,\psi\rangle.
  \end{align*}
  Therefore
  \begin{equation}\label{Phi_22 bound}
    \pm\Phi_{22}\leq C\tilde{\rho}_0^{\frac{4}{3}-\frac{\varepsilon}{2}
    -\frac{\alpha_3}{2}}{\mathcal{N}}_{re}.
  \end{equation}
  $\Phi_{23}$ and $\Phi_{25}$ can be bounded in a similar way like $\Phi_{22}$, and also satisfy the estimate (\ref{Phi_22 bound}). For $\Phi_{26}$, we can bound it easily by Lemma \ref{est sum htgt}, and the fact that $\vert W_k\vert\leq C$, and for $p,q\in B_F$, $L^3\vert\eta_kk(q-p)\vert\leq C\tilde{\rho}_0^{\alpha_3}$:
  \begin{equation}\label{Phi_26 bound}
    \pm\Phi_{26}\leq C\tilde{\rho}_0^{\frac{4}{3}-\varepsilon}{\mathcal{N}}_{re}.
  \end{equation}
  Therefore
  \begin{equation}\label{Phi_2 bound}
    \pm\Phi_{2}\leq C\tilde{\rho}_0^{\frac{4}{3}-\frac{\varepsilon}{2}
    -\frac{\alpha_3}{2}}{\mathcal{N}}_{re}.
  \end{equation}

  \par For $\Phi_3$, we again use (\ref{xi1and 2}) to rewrite it by $\Phi_{3}=\Phi_{3}^{(1)}+\Phi_{3}^{(2)}$. As usual, we present the estimate to $\Phi_{3}^{(1)}$ only. Notice that $\tilde{\zeta}^-(l)\chi_{r,s\in B_F}\neq0$ implies $\vert r-l\vert,\vert s+l\vert \leq 3\tilde{\mu}\tilde{\rho}_0^{-\alpha_6}$, similarly, we have $\vert p-k\vert,\vert q+k\vert \leq 3\tilde{\mu}^{\frac{1}{2}}\tilde{\rho}_0^{-\alpha_6}$
  Then similar to $\Phi_{1231}^{(1)}$ as aforementioned, we write $\Phi_{3}^{(1)}$ by
  \begin{equation}\label{Phi 3 1}
    \Phi_{3}^{(1)}=C\sum_{\sigma,\nu}\int_{0}^{\infty}\int_{\Lambda_L^2}e^{-2\epsilon_0 t}
    a^*(\mathbf{h}_{y,\sigma}^t)a^*(\omega^t_{x,y,\nu})a(\mathbf{h}_{x,\nu})
    a(\tilde{\omega}^t_{x,y,\sigma})dxdydt+h.c.
  \end{equation}
  where we let
  \begin{align*}
    \mathbf{h}_{y,\sigma}^t(z)=\sum_{k_F<\vert k\vert<3\tilde{\mu}^{\frac{1}{2}}\tilde{\rho}_0^{-\alpha_6}}\frac{e^{iky}}{L^{\frac{3}{2}}}
    f_{k,\sigma}(z)e^{-k^2t},\quad
    \mathbf{h}_{x,\nu}(z)=\sum_{k_F<\vert k\vert<3\tilde{\mu}^{\frac{1}{2}}\tilde{\rho}_0^{-\alpha_6}}\frac{e^{ikx}}{L^{\frac{3}{2}}}
    f_{k,\nu}(z)
  \end{align*}
  and
  \begin{align*}
   \omega^t_{x,y,\nu}(z)=\sum_{k_F<\vert p\vert<3\tilde{\mu}^{\frac{1}{2}}\tilde{\rho}_0^{-\alpha_6}}\Big(
   \frac{1}{L^3}\sum_{q\in B_F}W_{p-q}\tilde{\zeta}^-(p-q)e^{-iq(y-x)}e^{q^2t}\Big)
   \frac{e^{ipy}}{L^{\frac{3}{2}}}e^{-p^2t}f_{p,\nu},
  \end{align*}
  and
  \begin{align*}
   \tilde{\omega}^t_{x,y,\nu}(z)=\sum_{k_F<\vert p\vert<3\tilde{\mu}^{\frac{1}{2}}\tilde{\rho}_0^{-\alpha_6}}\Big(
   \frac{1}{L^3}\sum_{q\in B_F}W_{p-q}\tilde{\zeta}^-(p-q)e^{iq(y-x)}e^{q^2t}\Big)
   \frac{e^{ipy}}{L^{\frac{3}{2}}}f_{p,\nu}.
  \end{align*}
  Using $\mathbf{h}_{x,\nu}=a(L_{x,\nu}[\delta_4])+(\mathbf{h}_{x,\nu}-a(L_{x,\nu}[\delta_4]))$, we furthermore divide it by $\Phi_{3}^{(1)}=\Phi_{31}^{(1)}+\Phi_{32}^{(1)}$. We can bound them in a way similar to $\Phi_{1231}^{(1)}$ in (\ref{Phi1231 1}), but we will additionally use the estimate (\ref{est sum A_F d <=0}). That is, for $\Phi_{32}^{(1)}$,
  \begin{equation}\label{Phi 32 bound}
    \pm\Phi_{32}^{(1)}\leq C\tilde{\rho}_0^{1+\delta_2}{\mathcal{N}}_{re}
    +C\tilde{\rho}_0^{\frac{5}{3}-4\alpha_6-\delta_2-\varepsilon}\mathcal{N}_{h}[\delta_4].
  \end{equation}
  For $\Phi_{31}^{(1)}$,
   \begin{equation}\label{Phi 31 bound}
    \pm\Phi_{31}^{(1)}\leq C\tilde{\rho}_0^{\frac{4}{3}+\frac{\delta_4}{2}-\frac{\alpha_6}{2}-\frac{\varepsilon}{2}}
    {\mathcal{N}}_{re}.
  \end{equation}
  Therefore,
   \begin{equation}\label{Phi_3 bound}
    \pm\Phi_{3}\leq C\tilde{\rho}_0^{1+\delta_2}{\mathcal{N}}_{re}
    +C\tilde{\rho}_0^{\frac{5}{3}-4\alpha_6-\delta_2-\varepsilon}\mathcal{N}_{h}[\delta_4].
  \end{equation}
  \par For $\Phi_4$, we again use (\ref{xi1and 2}) to rewrite it by $\Phi_{4}=\Phi_{4}^{(1)}+\Phi_{4}^{(2)}$. As usual, we present the estimate to $\Phi_{4}^{(1)}$. Using the observations and notations above, we can write
  \begin{equation}\label{Phi_4 1}
    \Phi_4^{(1)}=\frac{C}{L^3}\sum_{\sigma,\nu,\varpi}\sum_{q_1\in B_F}\int_{0}^{\infty}
    e^{-2\epsilon_0 t}e^{q_1^2t}\mathcal{A}_{q_1,\sigma,\varpi}(t)
    \mathcal{B}_{q_1,\sigma,\nu}+h.c.
  \end{equation}
  where
  \begin{align*}
    \mathcal{A}_{q_1,\sigma,\varpi}(t)&=\int_{\Lambda_L^2}e^{-iq_1x_2}
    W^{\tilde{\zeta}^-}(x_2-x_4)
    a^*(\mathbf{h}^t_{x_2,\sigma})a^*(\mathbf{h}^t_{x_4,\varpi})
    a(\mathbf{g}_{x_4,\varpi}^t)dx_2dx_4\\
    \mathcal{B}_{q_1,\sigma,\nu}&=\int_{\Lambda_L^2}e^{iq_1x_1}
    W^{\tilde{\zeta}^-}(x_1-x_3)a^*(g_{x_3,\nu})a(h_{x_3,\nu})a(h_{x_1,\sigma})dx_1dx_3
  \end{align*}
  Using $\mathbf{h}_{x_1,\sigma}=a(L_{x_1,\sigma}[\delta_4])+
  (\mathbf{h}_{x_1,\sigma}-a(L_{x_1,\sigma}[\delta_4]))$, we furthermore divide it by $\Phi_{4}^{(1)}=\Phi_{41}^{(1)}+\Phi_{42}^{(1)}$. Similar to the bound of $\Phi_{3}^{(1)}$, and combining the idea in proving $\Phi_{21}^{(1)}$, we have
  \begin{equation}\label{Phi_4 bound}
    \pm\Phi_{4}\leq C\tilde{\rho}_0^{1+\delta_2}{\mathcal{N}}_{re}
    +C\tilde{\rho}_0^{\frac{5}{3}-4\alpha_6-\delta_2-\varepsilon}\mathcal{N}_{h}[\delta_4].
  \end{equation}

  \par Collecting all the results above, using Newton-Leibniz formula and Lemma \ref{lemma bog control N_re}, we prove Lemma \ref{cal com bog lemma}.
\end{proof}

\vspace{1em}

\begin{proof}[Proof of Proposition \ref{bog prop}]
\par (\ref{define Z_N}) together with Lemmas \ref{lemma bog control K}-\ref{cal com bog lemma} yield Proposition \ref{bog prop}.
\end{proof}

\section*{Acknowledgement}
X.Chen is partially supported by DMS-2406620, J.Wu and Z.Zhang are partially supported by the National Key R\&D Program of China under Grant 2023YFA1008801 and NSF of China under Grant 12288101. We sincerely appreciate the helpful discussion with B.Schlein, who gave us many kind advices and comments.

\bibliographystyle{abbrv}
\bibliography{refPressure}
\end{document}